\pgfplotsset{compat=1.18}
\newcommand{\R}{\mathbb{R}}
\newcommand{\E}{\mathbb{E}}
\newcommand{\Var}{\mathrm{Var}}
\title{Overlap Analysis of the Shortest Path Problem: \\ Local Search, Landscapes, and Franz--Parisi Potential}
\author{Frederic Koehler\thanks{University of Chicago, \texttt{fkoehler@uchicago.edu}.} \and Joonhyung Shin\thanks{University of Chicago, \texttt{joons@uchicago.edu}.}}
\newcommand{\Binom}{\operatorname{Binom}}
\newcommand{\Multinom}{\operatorname{Multinomial}}
\newcommand{\Pois}{\operatorname{Poisson}}
\newcommand{\Unif}{\operatorname{Uniform}}
\newcommand{\pto}{\overset{p}\to}
\newcommand{\dto}{\overset{d}\to}
\newcommand{\deq}{\overset{d}=}
\newcommand{\KL}{\operatorname{KL}}
\newcommand{\plim}{\operatorname*{plim}}
\newcommand{\Bernoulli}{\operatorname*{Bernoulli}}
\newcommand{\ZTB}{\operatorname{ZTB}}
\newcommand{\ZTP}{\operatorname{ZTP}}
\numberwithin{equation}{section}
\newtheorem{theorem}{Theorem}[section]
\newtheorem{lemma}[theorem]{Lemma}
\newtheorem{proposition}[theorem]{Proposition}
\newtheorem{corollary}[theorem]{Corollary}
\theoremstyle{definition}
\newtheorem{definition}{Definition}[section]
\theoremstyle{remark}
\newtheorem*{remark}{Remark}
\theoremstyle{remark}
\newtheorem{example}{Example}
\newtheorem{claim}{Claim}
\begin{document}

\maketitle
\begin{abstract}
Two fundamental directions in algorithms and complexity involve: (1) classifying which optimization problems can be (approximately) solved in polynomial time, and (2) understanding which computational problems are hard to solve \emph{on average} in addition to the worst case. 
For many average-case problems for which we lack efficient algorithms, there does not currently exist strong evidence from reduction-based hardness that these problems are computationally intractable. However, we can still attempt to predict their polynomial time tractability \emph{heuristically}, by (rigorously) proving unconditional lower bounds against restricted classes of algorithms. 

Geometric approaches to predicting tractability and hardness of nonconvex optimization problems typically study the \emph{optimization landscape} of the problem. For optimization problems with random objectives or constraints, ideas originating in statistical physics suggest that a key property to study is the \emph{overlap} (equivalently, distance) between approximately-optimal solutions. Formally, properties of \emph{Gibbs measures} and the corresponding \emph{Franz--Parisi potential} can be used to prove rigorous lower bounds against natural local search algorithms, such as Langevin dynamics, based on free energy barriers. A related theory, called the \emph{Overlap Gap Property (OGP)}, proves rigorous lower bounds against classes of algorithms which are sufficiently stable/Lipschitz functions of their input.

A remarkable recent work of Li and Schramm showed that the shortest path problem in random graphs admits lower bounds against a class of stable algorithms by establishing the OGP. On the other hand, this problem is polynomial time tractable. In order to better understand why this happens, we revisit the shortest path problem and analyze its optimization landscape further. Our key finding is that both the OGP and the Franz--Parisi potential predict that: (1) local search will fail in the optimization landscape of shortest paths, but (2) local search should succeed in the optimization landscape for shortest path \emph{trees}, which is true. Using the Franz--Parisi potential, we explain an analogy with results from combinatorial optimization --- submodular minimization is algorithmically tractable via local search on the Lov\'asz extension, even though ``naive'' local search over sets or the multilinear extension provably fails.
\end{abstract}
\thispagestyle{empty}
\newpage

\tableofcontents
\thispagestyle{empty}
\newpage

% optionally reset page counter
\setcounter{page}{1}

\section{Introduction}
A fundamental problem in the theory of computation is to understand the complexity and tractability of natural optimization problems. In particular, from the standpoint of complexity theory we would like to understand which problems admit polynomial-time algorithms and which do not. The theory of reduction-based hardness has been very successful at this task for categorizing the complexity of different \emph{worst-case optimization} problems; for example, many problems are known to be equivalent to 3-SAT (3-satisfiability) under polynomial time (Karp) reductions, which is the celebrated theory of NP-completeness. Since Karp reductions allow for arbitrary polynomial time computation, the reductions in this theory are allowed to be relatively intricate and complex, which has allowed for many deep and nontrivial hardness results to be proven, such as in hardness of approximation/PCP theory. 

However, the situation with the complexity of natural random/average-case problems, as arise when we consider problems over random graphs or random data generating processes (as often considered in, e.g., ML and statistics) is quite different. In this setting, reduction-based hardness has seen some successes, but not as comprehensively as in the worst-case setting. A key difficulty is that reductions which work to reduce the worst-case hardness of Problem A from Problem B rarely preserve/map the natural distribution of random problem instances. See, e.g., \cite{brennan2020reducibility,bresler2025computational} and references within for discussion and some positive results.
%Cite brennan et al, etc for discussion. 

For this reason, there is great interest in alternative approaches to study the complexity of random optimization problems. In general, one of the alternatives to studying reduction-based hardness is to \emph{unconditionally rule out natural subclasses} of polynomial time algorithms. This is also a familiar strategy in the worst-case setting, for example in the context of lower bounds against decision trees, concrete families of circuits like $AC_0$, families of convex programs, etc. In recent years, there have been remarkable successes in proving unconditional lower bounds for random problems: for example, against the powerful Sum-Of-Squares/Laserre hierarchy of semidefinite programs, against low-degree polynomials, against natural families of gradient-based optimization methods, against spectral algorithms, against classes of message passing algorithms, etc. (see, e.g., \cite{barak2019nearly,wein2022optimal,gamarnik2024hardness} and references within).

A major difference between lower bounds against a concrete class like Sum-of-Squares programs versus classical reduction-based hardness is sensitivity to \emph{polynomial-time equivalent reformulations} of the same problem. For example, although there are reductions between problems like Planted Clique and Sparse PCA \cite{brennan2020reducibility}, an SoS lower bound against Planted Clique does not logically imply an SoS lower bound against natural SDP formulations of Sparse PCA.  

\paragraph{Landscape analysis.} Another popular type of analysis for the tractability of random optimization problems is via various types of studies of \emph{optimization geometry} and \emph{landscape analysis}. 

Landscape analysis, in both worst-case and average-case problems, can potentially be sensitive to small changes in the formulation of the problem. For example: Auer, Herbster, and Warmuth showed that for the task of learning a single sigmoidal neuron the squared loss can have exponentially many bad local minima \cite{auer1995exponentially}, even though if we use the logistic loss\footnote{With an $\ell_2$ constraint or regularization to prevent divergence to infinity.} for the same problem gradient descent will be globally convergent, because it is convex. That being said, being sensitive is not necessarily a bad thing, since it also means we can detect differences between polynomial-time equivalent formulations of the problem --- for example, for algorithm design it is indeed helpful to know that gradient descent on one loss function will work better than another one, even if, e.g., the global minimizer of both problems is the same (so finding the global minima of either function is polynomial time equivalent). For another example, there are many natural optimization formulations we can propose for finding the minimum of a submodular function, and it is quite valuable to know that optimization over the Lov\'asz extension more promising than optimizing over, e.g., the multilinear extension (which can be highly nonconvex). 
We will return to this example later.
%for algorithm design, we do care about which formulation of the problem is more amenable to work with. For example, if we are going to solve an optimization problem via gradient descent, we are indeed interested in knowing that one loss function will work better than another computationally, even if the global minimizer of both losses is the same.
%knowing that some optimization problems are intractable can help us narrow design the design space and hopefully find the right formulation/algorithm to solve the problem of interest. (EX: LP vs SDP lower bounds) 

In this paper, we compare two combinatorial optimization problems --- one is over paths, and one is over trees. We can easily write both of them down as follows.

% Requires amsmath
% \begin{center}
%   \begin{minipage}{0.46\linewidth}\centering
%     \begin{equation}\tag{P1}
%       \max_{p \in \mathcal{P}_{s,t}} \; |p|
%     \end{equation}
%   \end{minipage}\hspace{1em}%
%   \begin{minipage}{0.46\linewidth}\centering
%     \begin{equation}\tag{P2}
%       \max_{T \in \mathcal{T}} \; \sum_{v \in V} d_T(s,v)
%     \end{equation}
%   \end{minipage}
% \end{center}

% \noindent\makebox[\linewidth][c]{%
%   \begin{minipage}{0.46\linewidth}\centering
%     \begin{equation}\tag{P1}
%       \min_{P \in \mathcal{P}_{s,t}} \; |P|
%     \end{equation}
%   \end{minipage}\hspace{1em}%
%   \begin{minipage}{0.46\linewidth}\centering
%     \begin{equation}\tag{P2}
%       \min_{T \in \mathcal{T}} \; \sum_{v \in V} d_T(s,v)
%     \end{equation}
%   \end{minipage}%
% }

% \begin{center}
%   \begin{minipage}{0.46\linewidth}\centering
%     \[
%       \textbf{(P1)} \quad \max_{p \in \mathcal{P}_{s,t}} \; |p|
%     \]
%   \end{minipage}\hspace{2em}%
%   \begin{minipage}{0.46\linewidth}\centering
%     \[
%       \textbf{(P2)} \quad \max_{T \in \mathcal{T}} \; \sum_{v \in V} d_T(s,v)
%     \]
%   \end{minipage}
% \end{center}

\begin{center}
  \begin{minipage}{0.46\linewidth}\centering
    \[
      \textbf{(P1)} \quad \min_{P \in \mathcal{P}_{s,t}} \; |P|
    \]
  \end{minipage}\hspace{2em}%
  \begin{minipage}{0.46\linewidth}\centering
    \[
      \textbf{(P2)} \quad \min_{T \in \mathcal{T}} \; \sum_{v \in V} \mathsf{d}_T(s,v)
    \]
  \end{minipage}
\end{center}

\medskip

Here, $\mathcal{P}_{s,t}$ denotes the set of paths from $s$ to $t$, and $|P|$ is the length of a path $P$. 
Meanwhile, $\mathcal{T}$ denotes the set of spanning trees of the graph on vertex set $V$, and $\mathsf{d}_T(s,v)$ is the distance from $s$ to a vertex $v$ in the tree $T$.
They are both natural formulations of the shortest path problem, and there are very simple polynomial time reductions between the two problems. While (P1) is the most obvious formulation, (P2) probably appears more in the literature (see, e.g. 
Appendix~\ref{sec:dca}, for how it appears in \cite{murota2014dijkstra}, Appendix~\ref{apdx:dijkstra-continuous} for standard optimization references and physical justification, and the approximation algorithms literature \cite{khuller1995balancing,awerbuch1990cost,bharath2003routing}).
%A priori, these both seem like reasonable formulations of the shortest path problem. 
As we will see, there is a lot which can be said about the landscape of both problems.

\paragraph{Surprises in shortest path optimization (P1).} Despite the great success so far of the OGP framework,  in a remarkable recent work it was shown that the shortest path problem, which is definitely computationally tractable, satisfies the OGP for typical random graphs \cite{LS2024}. 
To be precise, problem (P1) above %is computationally easy to solve, but 
is proven to exhibit the ensemble overlap-gap property asymptotically almost surely, when $G \sim G(n, c\log(n)/n)$ is a sparse Erd\"os--R\'enyi graph.
% \begin{equation}\label{eqn:shortest-path}
%     \min_{P \in \mathcal P_{G,1,2}} \ell(P)
% \end{equation}
% where the optimization ranges over the space $\mathcal{P}_{G,1,2}$ of paths from vertex $1$ and $2$ in graph $G$ and $\ell(P)$ is the length of the path.
%; explicitly,
%\[ \mathcal P_{G,u,w} = \{(\ell(P),P) : P = (u,v_1,\ldots,v_{\ell - 1},w) s.t. v_{i} \sim_G v_{i + 1} \forall 0 \le i < \ell\}. \]
%To be more precise, they showed (in Theorem~2.2 of their paper) that for fixed vertices $1$ and $2$, any shortest paths from $1$ to $2$ in 
As they showed, this arises from the fact (Theorem~2.2 of \cite{LS2024}) that shortest paths in two correlated Erd\"os--R\'enyi random graphs are either identical or have little overlap with high probability. %, which verifies the ensemble-OGP holds (in a fairly strong sense). 
%This was used to prove (in Corollary 2.3 of their paper) that there is no ``stable'' 
By standard arguments in the OGP literature (see Corollary~2.3 of \cite{LS2024}), this implies that there is no stable\footnote{In this particular case, any algorithm whose output changes by at most $o(n)$ edges when we resample the edges of a graph one-by-one in random order. See also Section~\ref{sec:ensemble-ogp} below.} algorithm that computes a shortest path from $1$ to $2$. This is worrisome because it is similar to the type of argument which is typically used in the OGP literature to give evidence that a problem should be computationally hard, but this problem is not actually hard.  

As we discuss in more detail in Appendix~\ref{apdx:path-barrier}, the core structural result proved in \cite{LS2024} can also naturally be interpreted as a \emph{free energy barrier} in a sense which is made precise by the \emph{Franz--Parisi potential} (FPP) \cite{franz1995recipes}. We will explain more about this later, but in essence the FPP is part of the general framework of studying \emph{finite-temperature Gibbs measures} associated to the optimization problem. In the case of (P1), this would come down to studying the geometry of probability measures of the form
\[ \nu_{G,\beta}(P) \propto \exp\left(-\overline{\beta} |P| \right)\]
where $P$ ranges over the set of $s$ to $t$ paths as before, and $\overline{\beta} > 0$ is an ``inverse temperature'' parameter that controls the level of randomness/stochasticity in related local search algorithms. 
Again, as we formally show in the appendix, we can easily prove the existence of a free energy barrier from the ideas in \cite{LS2024} and this implies rigorous lower bounds against natural local MCMC algorithms, which could be interpreted as evidence that the problem is hard.
% A priori, these both seem like reasonable formulations of the shortest path problem. However, it turns out that natural local search algorithms can find the global optimum of (P2) but not for (P1).
% Key question: why is (P2) compatible with local search and (P1) is not? How is this reflected in the optimization landscape, especially in the random graph setting? 

\paragraph{New insights from studying (P2).} The goal of this paper is to give a complete analysis of the optimization landscape of (P2) from both the perspective of the overlap gap (OGP) theory as well as Gibbs measures \& the Franz--Parisi potential (FPP). Informally, our key findings are that:
\begin{enumerate}
    \item Formulation (P2) \emph{does not} have the ensemble overlap gap property\footnote{One of several reasons this is interesting is that the model does satisfy various types of \emph{disorder chaos}, which is a shared feature with (P1) that sometimes comes together with the OGP. We discuss this more later.}. In fact, we can analytically compute an explicit formula for the limiting overlap which smoothly varies between $1$ and $0$. 
    \item Formulation (P2) \emph{does not} have free energy barriers in the sense of the Franz--Parisi potential (FPP). In fact, we obtain precise asymptotic descriptions of the finite temperature Gibbs measures and derive explicit \emph{quasiconvex} expressions for the limiting FPP in all regimes.
\end{enumerate}
Using the Franz--Parisi potential, we can make a clear analogy between these findings and the aforementioned results for submodular minimization with different extensions. See Appendix~\ref{sec:example-lovasz}.

\paragraph{Summary: shortest path as a model organism.} 
The ultimate goal of our work is \emph{not} to prove that the shortest path problem in random graphs is polynomial tractable, which is obvious from our existing knowledge. Instead, our goal is to deepen our understanding of fundamental frameworks such as the OGP and FPP by working out one example, the shortest path problem, in great detail. This ends up to be quite mathematically interesting, but we also mention a few other reasons why we feel this is important:
\begin{itemize}
    \item In the OGP and FPP literature, a major focus has been on proving evidence for \emph{intractability} for problems which we currently have no algorithms for. Only in relatively few cases (e.g., \cite{abbe2022binary}) has the stability/overlap of an algorithm which \emph{succeeds} been studied from the same perspective. This raises a worry that we could have misinterpreted some OGP lower bounds --- we needs examples to see in what kinds of special cases a problem which ``appears'' OGP-hard might actually be easy. 
    \item There remain uncertainties about the interpretation of these heuristics in sparse models. The shortest path problem would not be as unstable in path space (P1) if the random graph was much denser.
    To quote \cite{LS2024}, ``sparse average-case models are known to sometimes exhibit anomalous behavior.''
    \item Despite the clear historical and conceptual connections between the OGP and FPP approaches, there are few examples where both analyses have been worked out in complete detail. One reason is that for many problems, it is difficult.\footnote{As a reminder, even in the well-studied Sherrington-Kirkpatrick model, it is not rigorously known that the Gibbs measure is replica-symmetric up to the ``Almeida-Thouless line'' (as is expected, c.f. \cite{thouless1977solution,brennecke2022replica}). So even more basic properties of the overlap are not fully understood.}
    \item All commonly studied heuristics have apparent downsides and disadvantages. For example, the low-degree heuristic works reasonably well in the shortest path example from \cite{LS2024}, but its predictions are ``wrong'' (and/or, difficult to interpret) in other settings \cite{buhai2025quasi,huang2025optimal,huang2024low,koehler2022reconstruction}. Having several heuristics with well-understood pros and cons can help us make more reliable predictions in the future.
    \item Although we \emph{do} know that the shortest path problem is tractable, this does actually does not tell us much\footnote{This is true even though we know for independent reasons (see Appendix~\ref{sec:dca}) that a natural ``local search'' procedure \emph{does} converge for (P2). Perhaps unintuitively, even unimodal optimization problems can admit free energy barriers which trap finite temperature MCMC chains --- see \cite{bandeira2023free} for a detailed example and explanation.} about overlap concentration, the structure of the Gibbs measures, etc. We had to work it out ourselves to see what these heuristics actually predict about (P2).
\end{itemize}

To summarize, we consider the shortest path problem and its two landscapes (P1) and (P2) to be ``model organisms'' for understanding how overlap analysis works. With a better understanding of what happens in this ``simple'' example, we believe we will be able to better interpret these heuristics in harder problems where the accuracy of our predictions is most crucial and useful.

\subsection{Background on the OGP and analogy with Gibbs measures}\label{sec:ensemble-ogp}
To begin to make precise some of the concepts mentioned in the introduction, 
We first begin by recalling the definition of the ensemble-OGP, often abbreviated as e-OGP. See the survey \cite{gamarnik2021overlap} for a more detailed introduction.
\begin{definition}[\cite{gamarnik2021overlap,chen2019suboptimality}]
Let $(\Sigma_n,\mathsf{d}_n)$ be a metric space and
consider a collection of minimization problems, defined over $\Sigma_n$ and parameterized by elements of a set $\Xi_n$, specified by a function $L : \Sigma_n  \times \Xi_n \to \mathbb R$. We say the \emph{ensemble}-OGP is satisfied with parameters $\mu > 0, 0 \le \nu_1 < \nu_2$ if for any $\xi,\psi \in \Xi_n$, and for any
\[ \sigma \in \{ \sigma \in \Sigma_n : L(\sigma, \xi) \le \min_{s} L(s,\xi) + \mu \},\qquad \tau \in \{ \tau \in \Sigma_n : L(\tau,\psi) \le \min_t L(t, \psi) + \mu \}, \]
it holds that 
\[ \mathsf{d}_n(\sigma,\tau) \in [0,\nu_1] \cup [\nu_2, \infty). \]
\end{definition}
In the special case where $\Xi_n$ is a singleton set, the e-OGP corresponds to a restriction on the geometric properties of the near-global maxima of a single optimization problem: no pair of near-maxima exist at ``intermediate'' distance $(\nu_1,\nu_2)$. In general, the e-OGP requires the same property hold for the union of all near-maxima among several different optimization problems.
The e-OGP and further generalizations of the OGP have been useful for proving lower bounds against stable algorithms in a wide variety of models.

The index $n$ plays no formal role in the definition, but it conventionally corresponds to a natural notion of dimension of the state space $\Sigma_n$, and OGP is generally applied to understand high-dimensional limits where $n \to \infty$.
 Generally speaking, that the distance metric $\mathsf{d}_n$ and loss $L$ should be appropriately scaled for the parameters $\mu,\nu_1,\nu_2$ to be consistent as $n \to \infty$; in particular, in combinatorial optimization $\mathsf{d}_n$ is typically a normalized Hamming distance so that its maximum value is $1$. 
 
\paragraph{Ensemble-OGP for correlated random graphs.}
One of the main applications of the e-OGP has been to optimization problems on random Erd\"os--R\'enyi graphs. In this setting, e-OGP can be applied by starting with a single random graph and then step-by-step resampling single random vertex pairs in the graph: this generates a sequence of correlated Erd\"os--R\'enyi graphs on vertex set $V$ as elements of $\Sigma_n = \binom{V}{2}$  \cite{gamarnik2021overlap}. This could equivalently be thought of as simulating the trajectory of a random graph under the single-scan Glauber dynamics (see, e.g., \cite{wilmer2009markov}). With these definitions, satisfying the OGP (or not) becomes an event dependent on the particular realization of the stochastic process --- %in this setting and most other settings where the OGP is considered,
typically the ensemble $\Xi_n$ (here, the collection of correlated random graphs) is \emph{random} and we are interested in understanding whether the OGP occurs for a typical realization of the ensemble. 

\paragraph{Aside: OGP analysis and Gibbs measures from a single perspective.} We will shortly proceed to discussion of our formal results and findings, which will involve an extensive discussion of both the OGP and Gibbs measure approaches. We believe the following simple observation helps clarify the similarities and differences between the two approaches. Starting with Gibbs measures, we know via the ``Gumbel trick'' (see Appendix~\ref{apdx:gumbel}) that the analysis of Gibbs measures is equivalent to that of a \emph{randomly perturbed optimization problem}. In particular, for any objective function $f(x)$, $\beta > 0$, and finite set $\mathcal X$,
\[ \log \sum_{x \in \mathcal X} e^{-\beta f(x)} = \gamma + \mathbb{E}_G \min_{x \in \mathcal X}[\beta f(x) - G(x)] \]
where the expectation is over independent standard Gumbels $G(x)$ at each site $x \in \mathcal X$. On the other hand, the ensemble-OGP studies how the optimizer of the objective $L$ changes under \emph{partial resampling of the underlying randomness} (random graph in our setting). So in fact, these are both natural approaches to studying \emph{the stability of the solution of an optimization problem under random perturbation} --- a kind of ``sensitivity analysis'' for optimization. Perhaps the key \emph{difference} between the two is that the Gibbs measure approach is agnostic to the internal structure of $f$.  
%TODO: later on, explain how our result clearly shows there is no e-OGP.

%\paragraph{Model and asymptotics.}
\subsection{Our results}
\subsubsection{Basic models and notations}\label{sec:models-notations}

We denote by $\mathcal{G}(n,q)$ the Erd\"os--R\'enyi model of random graphs, with $n$ vertices and the edge probability $q$. 
%Throughout this note, we only consider sparse graphs, where
Throughout this paper, as in \cite{LS2024}, we consider the asymptotic regime where $q = \Theta(\log(n)/n)$. This means that the graph is relatively sparse and the giant component of the graph contains at least a $1 - o(1)$ fraction of the vertices (see, e.g., \cite{alon2016probabilistic} for background on random graphs). As in the literature on the diameter of random graphs \cite[e.g.]{chung2001diameter,riordan2010diameter}, the vertices outside of the giant component play no role in the analysis and can be ignored. % and the reader can generally feel free to ignore their existence. 

\paragraph{Single graph setup.} We will always consider a fixed sequence of $\alpha_n \in (0,\infty)$ and 
% \begin{equation} 
% \alpha_n \to \alpha 
% \end{equation}
%as $n \to \infty$ and 
define in terms of them the edge probability
\begin{equation} 
q_n = \frac{\alpha_n \log n}{n}.
\end{equation}
To study the asymptotic behavior of sparse random graphs, we consider a sequence $G_1,G_2,\cdots$ of independent Erd\"os--R\'enyi graphs sampled from $G_n\sim\mathcal{G}(n,q_n)$, where the vertices are set to be integers
\[
    V_n:=V(G_n)=\{1,\cdots,n\}\,.
\]
Most of our single graph asymptotics results concern certain functions of a graph, which is naturally a sequence of random variables whose law is dictated by our fixed sequence $\{\alpha_n\}$. To guarantee $q_n=\Theta(\log (n)/n)$, we assume throughout the paper that
\begin{equation}\label{eqn:alpha-bound}
    0<\liminf_{n\to\infty}\alpha_n\leq\limsup_{n\to\infty}\alpha_n<\infty\,.
\end{equation}
We do not require that $\{\alpha_n\}$ converges or is above the connectivity threshold $\liminf\alpha_n>1$; our results hold without those assumptions.

\paragraph{Graphs with fixed source.} Throughout this paper, we mostly consider single source paths, i.e., paths starting from a fixed vertex, and $1$ will always be our source vertex. Thus, the connected component of $G_n$ containing $1$ will be our primary focus, which we denote by $\overline{G}_n$. As noted above, we only consider Erd\"os--R\'enyi graphs with $\liminf\alpha_n>0$ in which there is one giant component of size $n-o(n)$ with high probability. Hence, $1$ will asymptotically almost surely be contained in this giant component, so $\overline{G}_n$ can be thought of as the giant component of $G_n$. Whenever we talk about shortest path trees or spanning trees, we are actually referring to those of $\overline{G}_n$. Moreover, these spanning trees are naturally rooted at the source vertex $1$ so every pair of vertices adjacent in the trees have an induced parent-child relationship; we always take this into account when comparing two trees.

\paragraph{Correlated graphs setup.} In order to study the overlap of two correlated instances, we need an additional sequence of $\rho_n\in(0,1)$ specifying the correlation strength. We define two graphs with correlation $\rho_n$ in the usual way. We begin with a sample $G_n^{(1)}\sim\mathcal{G}(n,q_n)$ for each $n$, same with the single graph case. Then for each pair of distinct vertices of $G_n^{(1)}$, we resample it with probability $1-\rho_n$. This gives a graph $G_n^{(2)}$ which is also marginally $\mathcal{G}(n,q_n)$. Our results concern the (independent) sequence of correlated pairs $\{(G_n^{(1)},G_n^{(2)})\}$ obtained this way.

%Note that while $q$ clearly depends on $n$, in the interest of readability we will generally suppress this dependence.
\paragraph{Proxies for describing asymptotic behaviors.} The sequences $\{\alpha_n\}$ (which determines the edge probabilities $\{q_n\}$) and $\{\rho_n\}$ (in the correlated setup) are the only prescribed objects that completely dictate the law of the random graphs. However, many asymptotic behaviors are often more conveniently and concisely described by certain quantities that depend on $\{\alpha_n\}$. Among those, we introduce a few important quantities that play a significant role throughout this paper.

First and foremost, we define
\begin{equation} \label{eqn:lstar}
\ell^*_n=\frac{\log n}{\log(nq_n)} = \frac{\log n}{\log \alpha_n + \log \log n}
\end{equation} 
or equivalently, $(nq_n)^{\ell_n^*}=n$. This can be understood as an approximate distance of a typical pair of vertices in sparse Erd\"os--R\'enyi graphs and is the same quantity used in \cite{LS2024}. Also, we define
\begin{equation} \label{eqn:dstar}
    \begin{split}
        d_n^*&:=\min\left\{d\in\mathbb{Z}:(nq_n)^d\geq\frac{n}{(\log\log n)^2}\right\}\\
        &=\left\lceil\ell^*-\frac{2\log\log\log n}{\log\alpha_n+\log\log n}\right\rceil
    \end{split}
\end{equation}
to be the smallest integer satisfying $(nq_n)^{d_n^*}\geq n/(\log\log n)^2$. Under the assumption \eqref{eqn:alpha-bound}, $d_n^*$ is one of the two closest integers to $\ell^*_n$ for every large $n$. The denominator $(\log\log n)^2$ is somewhat arbitrary and is especially relevant to our analysis of Gibbs measures; see Section~\ref{sec:to-gibbs} and the accompanying footnotes for a brief explanation of our choice of denominator.

The gap between $d_n^*$ and $\ell_n^*$ is what turns out to be the key to understanding the dynamics of the graphs. Perhaps the simplest form of such a gap is their difference, which we denote by
\begin{equation}\label{eqn:Delta}
    \Delta_n:=d_n^*-\ell_n^*\,.
\end{equation}
Note that $\Delta_n$ might be negative, but \eqref{eqn:alpha-bound} implies that
\[
    0\leq\liminf_{n\to\infty}\Delta_n\leq\limsup_{n\to\infty}\Delta_n\leq 1\,.
\]
Another form, which we denote by $\lambda_n$, is nontrivial and less intuitive but is equally important. This is defined by the equation
\begin{equation}\label{eqn:lambda}
    \log(1/\lambda_n)=(nq_n)^{\Delta_n}=\frac{(nq_n)^{d_n^*}}{n}\,.
\end{equation}
%\footnote{Note here that the second equality is always true by the definition of $\ell^*$.}
    % \begin{equation}\label{eqn:lambda}
    %     d^*_n=\ell^*_n+\frac{\log\log(1/\lambda_n)}{\log(nq_n)} = \frac{\log n + \log\log(1/\lambda_n)}{\log \alpha_n + \log \log n}
    % \end{equation}
% or equivalently, $(nq_n)^{\Delta_n}=\log(1/\lambda_n)$.
By definition, we have $0<\lambda_n<1$ and thus
\[
    0\leq\liminf_{n\to\infty}\lambda_n\leq\limsup_{n\to\infty}\lambda_n\leq1\,.
\]
% Intuitively, $\lambda_n$ may be used to describe the gap between $d_n^*$ and $\ell_n^*$ more sharply when they are close to each other, i.e., $\Delta_n\to0$. This is better shown in the following relationship which is an easy consequence of \eqref{eqn:lambda}:
% \begin{equation}\label{eqn:delta-lambda}
%     (\liminf\Delta_n)(\limsup\lambda_n)=0\,,\qquad(\limsup\Delta_n)(\liminf\lambda_n)=0\,.
% \end{equation}
We remind the reader that the sequences $\{\Delta_n\}$ and $\{\lambda_n\}$ are functions of $n$ and $\{\alpha_n\}$. It turns out that the limiting behavior of $\Delta_n$ and $\lambda_n$ as $n \to \infty$ is what is critically important for the asymptotics, rather than $\alpha_n$.
%Note that $d^*_n$ admits the equivalent formulas
%\[ d^*_n = \frac{\log n + \log \log(1/\lambda_n)}{\log(nq_n)} =  \frac{\log n + \log \log(1/\lambda_n)}{\log \alpha_n + \log \log n}. \]

%We defer a few additional results to the end of Section 1 since they require additional definitions.

\paragraph{Convergence assumptions on the proxies.} As a final remark, we note that many of our results are stated under the assumption that proxy sequences such as $\lambda_n$ and $\Delta_n$ converge (or diverge to either $+\infty$ or $-\infty$), which might seem to put restrictions on $\{\alpha_n\}$. However, this is enough to describe the asymptotic behavior for \emph{any} $\{\alpha_n\}$ (satisfying \eqref{eqn:alpha-bound}). For instance, assume we have proved that some property $\mathcal{P}_{n}(\alpha_{n})$ holds with high probability for any $\{\alpha_{n}\}$ such that $\{\lambda_{n}\}$ and $\{\Delta_{n}\}$ converge. Then in fact $\mathcal{P}_n(\alpha_n)$ holds with high probability for any $\{\alpha_n\}$; otherwise, there is a subsequence of indices $k_1,k_2,\cdots,$ where $\mathcal{P}_{k_n}(\alpha_{k_n})$ does not hold with probability bounded away from zero, and from that we can find a further subsequence $\{k_{i_n}\}$ where the proxies converge, which is a contradiction. An example of this can be found in Section~\ref{sec:to-overlap} where we prove the nonexistence of OGP. In case the property depends on the limiting values and the proxy sequences have multiple limit points, several asymptotic behaviors may coexist.

 % In case $\{\lambda_n\}$ and/or $\{\Delta_n\}$ oscillate and have multiple limit points, several asymptotical behaviors may coexist.

\subsubsection{Asymptotics for a single random graph}

We first establish the distance asymptotics for $\{G_n\}$. The proof and related discussions are detailed in Section~\ref{sec:single}.

\begin{theorem}[Distance asymptotics]\label{thm:single-intro}
Suppose that $\lambda_n\to\lambda\in[0,1]$. Let $N_{d_n^*}$ and $N_{d_n^*+1}$ denote the number of vertices of $G_n$ with distance $d_n^*$ and $d_n^*+1$ from the source vertex $1$, respectively. Then as $n \to \infty$, we have
    \[ N_{d_n^*}/n\pto 1-\lambda, \qquad \text{and} \qquad N_{d_n^*+1}/n\pto \lambda. \]
    %as $n \to \infty$.
    So with probability $1 - o(1)$ all vertices except for at most $o(n)$ of them have distance either $d_n^*$ or $d_n^*+1$ from vertex $1$, and the proportion of vertices with distance $d_n^*$ converges to $1-\lambda$ in probability.
\end{theorem}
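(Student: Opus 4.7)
The plan is to analyze the breadth-first search (BFS) exploration from vertex $1$ using the principle of deferred decisions on the edges of $G_n$. Let $B_d$ denote the set of vertices at distance at most $d$ from $1$ and $V_d := B_d \setminus B_{d-1}$ those at exactly distance $d$. Conditionally on the BFS filtration $\mathcal{F}_{d-1}$ through step $d-1$, the edges from $V_{d-1}$ to the unexplored set $V_n \setminus B_{d-1}$ are still i.i.d.\ $\Bernoulli(q_n)$, so
\[
|V_d| \,\big|\, \mathcal{F}_{d-1} \,\sim\, \Binom\bigl(n - |B_{d-1}|,\; 1 - (1-q_n)^{|V_{d-1}|}\bigr).
\]
I would split the argument into a \emph{geometric-growth phase} ($d \leq d_n^* - 1$), in which $|B_d|$ is $o(n)$ and grows like a branching process with mean offspring $nq_n$, and a \emph{saturation phase} (steps $d_n^*$ and $d_n^*+1$) in which nearly all of the remaining vertices are absorbed.

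For the growth phase, I would prove by induction on $d$ that $|V_d| = (1+o(1))(nq_n)^d$ with high probability, via a union bound of multiplicative Chernoff bounds over the $O(\log n/\log\log n)$ BFS steps. The key quantitative input is that for $d \leq d_n^* - 1$ one has
\[
q_n\, |V_{d-1}| \;\leq\; 2\,q_n (nq_n)^{d-1} \;=\; 2\,(nq_n)^{d}/n \;\leq\; 2\,(nq_n)^{d_n^* - 1}/n \;<\; 2/(\log\log n)^2 \;=\; o(1)
\]
by the defining inequality of $d_n^*$, which validates the linearization $1 - (1-q_n)^{|V_{d-1}|} = (1+o(1))\,q_n |V_{d-1}|$. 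Telescoping the resulting one-step multiplicative bound yields $|V_{d_n^* - 1}| = (1+o(1))(nq_n)^{d_n^* - 1}$ and $|B_{d_n^* - 1}| = (1+o(1))(nq_n)^{d_n^* - 1} = o(n)$.

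I would then compute the two saturation steps directly from the conditional Binomial. At step $d_n^*$ the success probability equals
\[
1 - (1-q_n)^{|V_{d_n^* - 1}|} \;=\; 1 - \exp\bigl(-(1+o(1))\,q_n |V_{d_n^* - 1}|\bigr) \;=\; 1 - \exp\bigl(-(1+o(1))\,(nq_n)^{d_n^*}/n\bigr),
\]
which converges to $1 - \lambda$ by the definition $\log(1/\lambda_n) = (nq_n)^{d_n^*}/n$ and the hypothesis $\lambda_n \to \lambda$. Combined with $n - |B_{d_n^* - 1}| = (1 - o(1))n$ and a further Chernoff bound, this gives $N_{d_n^*} = (1+o(1))(1 - \lambda_n)\, n$. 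At step $d_n^* + 1$, the success probability is $1 - (1-q_n)^{|V_{d_n^*}|}$, and I would verify uniformly over $\lambda \in [0,1]$ that $q_n |V_{d_n^*}| \to \infty$: for $\lambda < 1$ this is immediate since $|V_{d_n^*}| = \Theta(n)$ and $nq_n \to \infty$, while for $\lambda_n \to 1$ one uses $|V_{d_n^*}| \approx (1-\lambda_n)n \approx (nq_n)^{d_n^*}$ together with the lower bound $q_n (nq_n)^{d_n^*} = (nq_n)^{d_n^* + 1}/n \geq nq_n / (\log\log n)^2 \to \infty$ from the definition of $d_n^*$. This gives $N_{d_n^* + 1} = (1 + o(1))\lambda_n\, n$ and leaves only $o(n)$ unreached vertices.

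The main technical obstacle will be uniform multiplicative concentration throughout the growth phase, since $|V_d|$ may be as small as $\Theta(\log n)$ for $d = 1$ before becoming polynomial in $n$; I would handle this by standard Chernoff bounds for the Binomial (the first step is just $\Binom(n-1, q_n)$ with mean $\alpha_n \log n$), together with a union bound over the logarithmically many BFS levels. The other subtlety is treating the boundary regimes $\lambda \in \{0,1\}$ in a unified way, which I would address by keeping the factor $1 - \lambda_n$ explicit at every step and only passing to the limit at the end, so that the estimates degrade gracefully as $\lambda_n \to 0$ or $\lambda_n \to 1$.
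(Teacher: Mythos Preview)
Your proposal is correct and follows essentially the same route as the paper: the paper's Proposition~3.3 is exactly your growth-phase induction (Chernoff bounds at each BFS level, with careful bookkeeping so that the multiplicative errors across the $\Theta(\log n/\log\log n)$ levels do not compound), and its Theorem~3.2 is your two saturation steps, including the same case split on $\lambda_n\to 1$ using $(nq_n)^{d_n^*}\ge n/(\log\log n)^2$. The only place the paper is materially more explicit than your sketch is in the per-level error control---it threads a factor of $(1\pm\delta)^{1/\ell^*}$ through each step so that the telescoped product stays within $(1\pm\delta)$---which is precisely the ``main technical obstacle'' you already flagged.
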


\subsubsection{Overlap asymptotics for correlated graphs}

%\paragraph{Overlap of distance vectors.} 
% The overlap of the distance vectors for $G_n^{(1)}$ and $G_n^{(2)}$ means
% \begin{equation}\label{eqn:overlap1def}
%     D_n:=\frac{\left|\left\{v\in V:\mathsf{d}_{G_n^{(1)}}(v)=\mathsf{d}_{G_n^{(2)}}(v)\right\}\right|}{n}\,.
% \end{equation}
% %Also, we denote by $\Gamma_d^I$ the intersection $\Gamma_d^{G_1}\cap\Gamma_d^{G_2}$, and $N_d^I=|\Gamma_d^I|$ the size of the intersection.
% %\paragraph{Shortest path DAG.} 
% %\paragraph{Overlap of (uniformly random) shortest path trees.} 
% We define the \emph{shortest path DAG} of a graph $G$, denoted by $\mathsf{SPD}_G(v)$, as the union of the shortest paths starting from $v\in V(G)$ where the orientation agrees with the direction of the outgoing paths from $v$. We also study a natural notion of overlap for these DAGs later (see additional results).
%, but we defer the discussion of this until later.

%Similar to the overlap $S_n$ of shortest path DAGs,
%Since the size of a shortest path tree is always the size of the giant component minus one

We have two $\rho_n$-correlated graphs $G_n^{(1)}$ and $G_n^{(2)}$ on the common vertex set $V_n=\{1,\cdots,n\}$. Suppose we have shortest path trees $T_n^{(1)}$ of $\overline{G}_n^{(1)}$ and $T_n^{(2)}$ of $\overline{G}_n^{(2)}$. Their overlap is defined by
\begin{equation}\label{eqn:defn-overlap}
    \mathsf{R}(T_n^{(1)},T_n^{(2)}):=\frac{|T_n^{(1)}\cap T_n^{(2)}|}{\sqrt{|T_n^{(1)}||T_n^{(2)}|}}
\end{equation}
where $|T^{(i)}|$ denotes the number of the edges of $T^{(i)}$ and $|T_n^{(1)}\cap T_n^{(2)}|$ denotes the number of common edges, respecting their parent-child relationships. Since the giant component has size $n-o_p(n)$, we may analyze the overlap through the quantities
\[
    R_n:=\frac{1}{n}|T_n^{(1)}\cap T_n^{(2)}|\,.
\]
Let $\mu_n^{(1)}$ and $\mu_n^{(2)}$ be the uniform measures over the set of shortest path trees of $\overline{G}_n^{(1)}$ and $\overline{G}_n^{(2)}$, respectively, and suppose we draw $T_n^{(1)}\sim\mu_n^{(1)}$ and $T_n^{(2)}\sim\mu_n^{(2)}$ (conditioned on the graphs). We study the overlap under the optimal coupling $\Pi$ of $\mu_n^{(1)}$ and $\mu_n^{(2)}$
\begin{equation}\label{eqn:parconc}
    \tilde{R}_n:=\inf_{\Pi} \mathbb{E}_{\Pi}\left[R_n\mid G_n^{(1)},G_n^{(2)}\right] = n - W_1\left(\mu_n^{(1)},\mu_n^{(2)}\right) 
\end{equation}
where $W_1$ is the Wasserstein distance under the (directed) Hamming metric. This definition naturally guarantees that $\tilde{R}_n = 1$ when $G_n^{(1)} = G_n^{(2)}$. Note that $\tilde{R}_n$ are still random variables, since each measure $\mu_n^{(i)}$ is a function of the random graph $G_n^{(i)}$. Before we state the result, we define an additional proxy sequence $\{\gamma_n\}$ by
\begin{equation}\label{eqn:defgamma}
    \gamma_n := \rho_n^{d^*_n}\in(0,1)\,.
\end{equation}
This plays an important role when $\rho_n$ is close to $1$; if $\rho_n$ is bounded away from $1$, then $\gamma_n \to 0$ as $n \to \infty$ since $d^*_n \to \infty$.
When $\rho_n$ is close to $1$, we have by Taylor expansion that $1-\rho_n \approx -\log(\rho_n) = \log(1/\gamma_n)/d^*_n$, so the expected number of edges resampled along a path of length $d^*_n$ is approximately $\log(1/\gamma_n)$.

%lets us prove the model exhibits Wasserstein disorder chaos below. %and matches up with the concept of Wasserstein\footnote{In their paper, they use the $W_2$ distance instead of $W_1$, but this is equivalent for the purpose of defining Wasserstein disorder chaos. Explicitly, this is because their underlying distance metric (normalized Hamming distance) is bounded, and for bounded random variables we can compare $L_1$ and $L_2$ norms via Jensen and H\"older's inequality.} disorder chaos from \cite{el2022sampling}.
%(TODO: add more explanation)
\begin{theorem}[Correlated graph asymptotics]\label{thm:correlated-intro}
Suppose that $\lambda_n\to\lambda\in[0,1]$ and
\begin{equation}\label{eq:coupled-asymptotics-1}
    \rho_n \to \rho \in [0,1]\,.
\end{equation}
    If $\rho < 1$, set $\gamma = 0$ because $\gamma_n \to 0$ necessarily. Otherwise if $\rho = 1$,
    require additionally that 
    \begin{equation} \gamma_n \to \gamma \in [0,1]. \end{equation} 
    Then: 
    \begin{enumerate}
        \item Let $N_{d_n^*}^I$ denote the number of vertices with distance $d_n^*$ from the source $1$ in both $G_n^{(1)}$ and $G_n^{(2)}$. We have that
    \[\frac{N_{d^*_n}^I}{n} \pto 1-2\lambda+\lambda^{2-\gamma}, \]
    i.e., the proportion of the vertices that have distance $d^*_n$ for both $G_n^{(1)}$ and $G_n^{(2)}$ converges to $1-2\lambda+\lambda^{2-\gamma}$ in probability. Thus, the proportion of the vertices that have the same distance for $G_n^{(1)}$ and $G_n^{(2)}$ converges to $1-2\lambda+2\lambda^{2-\gamma}$ in probability.
    %     \item The overlap of the shortest path DAGs satisfies
    %     \[
    %     S_n\pto\begin{dcases*}
    %         \gamma & \text{if $\lambda=1$ or $\eta=\infty$,}\\
    %         \rho\cdot\frac{(1-(2-\xi)\lambda)\xi+\eta\gamma}{1-\lambda+\eta} & \text{otherwise.}
    %     \end{dcases*}
    % \]
        \item The overlap of uniformly random shortest path trees satisfies
            \[
        \tilde{R}_n\pto\begin{dcases*}
            1 & \text{if $0<\lambda<1$ and $\gamma = 1$,}\\
            \frac{1-2\lambda+\lambda^{2-\gamma}}{1-\lambda}\cdot\lambda^{2-\gamma}+f(\lambda,\gamma) & \text{if $0<\lambda<1$ and $0 < \gamma < 1$,}\\
            \rho (1 - \lambda)\lambda^{2} & \text{if $0<\lambda<1$ and $\gamma = 0$,}\\
            \gamma & \text{if $\lambda \in \{0,1\}$.} %\text{otherwise (i.e., when $\lambda \in \{0,1\}$)}\\
        \end{dcases*}
    \]

    %     \[
    %     \tilde{R}_n\pto\begin{dcases*}
    %         \frac{1-2\lambda+\lambda^{2-\gamma}}{1-\lambda}\cdot\lambda^{2-\gamma}\rho+g((1-\gamma)\log(1/\lambda),\gamma\log(1/\lambda))\cdot(1-\lambda^{\gamma}) & \text{if $0<\lambda<1$,}\\
    %         \gamma & \text{otherwise}\\
    %     \end{dcases*}
    % \]
    where we define
    $f(\lambda,\gamma) = g((1-\gamma)\log(1/\lambda),\gamma\log(1/\lambda))\cdot(1-\lambda^{\gamma})$
    and for $a,b > 0$ we let %$g(a,b)$ is defined for $a,b > 0$ by
    \[
        g(a,b)= \mathbb{E}\left[\frac{Z}{\max(X_1,X_2)+Z}\,\middle|\,Z>0\right]
        %\begin{dcases*}
%            \mathbb{E}\left[\frac{Z}{\max(X_1,X_2)+Z}\,\middle|\,Z>0\right]&\text{if $a,b>0$,}\\
%            1&\text{if $a=0$,}\\
 %           0&\text{if $b=0$}
  %      \end{dcases*}
    \]
    where the expectation is over independent $X_1,X_2\sim\Pois(a)$ and $Z\sim\Pois(b)$.

%     \item 
%     The overlap of the shortest path DAGs satisfies
%     \[
%         S_n\pto\begin{dcases*}
%             \lambda^{1 - \gamma}\cdot\frac{1-2\lambda + \lambda^{2 - \gamma}}{1-\lambda} & \text{if $0 < \lambda < 1$ and $\gamma > 0$} \\
%             \rho\lambda(1 - \lambda) & \text{if $0 < \lambda < 1$ and $\gamma = 0$} \\
%           %  \rho\cdot\frac{\lambda^{1 - \gamma}-(2-\lambda^{1 - \gamma})\lambda^{2 - \gamma}}{1-\lambda} & \text{if $0 < \lambda < 1$} \\
%  %           \gamma & \text{if $\lambda=1$, or if $\lambda = 0$ and $\eta=\infty$,}\\
%             \frac{\xi+\eta\gamma}{1+\eta} & \text{if $\lambda = 0$, $\rho = 1$, and $\eta < \infty$,} \\
%             \gamma & \text{otherwise}
% %            0 & \text{if $\lambda = 0$, $\rho < 1$, and $\eta < \infty$}
%       %      \rho\cdot\frac{(1-(2-\xi)\lambda)\xi+\eta\gamma}{1-\lambda+\eta} & \text{otherwise.}
%         \end{dcases*}
%     \]
%     where in the special case that $\lambda = 0$ and $\rho = 1$, we suppose additionally that
%     \begin{equation} 
%     \frac{\log(1/\lambda_n)}{\alpha_n\lambda_n\log n}\to\eta\in[0,\infty],
%     \qquad \lambda_n^{1-\gamma_n}\to\xi\in[0,1].
%     \end{equation}
        \end{enumerate}
\end{theorem}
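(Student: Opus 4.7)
\textbf{The plan is} to decompose the optimal coupling into independent per-vertex parent-choice couplings, and then compute the expected contribution of each vertex type via a joint Poisson limit for length-$d^*_n$ path counts. A uniformly random shortest-path tree arises by choosing, independently for each non-source vertex $v$, a uniformly random valid parent---a graph neighbor of $v$ at distance $d^{(i)}(v)-1$ in $G_n^{(i)}$. Writing $P^{(i)}(v)$ for this parent set, the agreement-maximizing coupling of the two uniform tree distributions is realized by independently performing the maximum coupling of uniform distributions at each vertex, which agrees on $v$'s parent with probability $|P^{(1)}(v)\cap P^{(2)}(v)|/\max(|P^{(1)}(v)|,|P^{(2)}(v)|)$. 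Hence $\tilde R_n \approx \frac{1}{n}\sum_{v\neq 1} |P^{(1)}(v)\cap P^{(2)}(v)|/\max(|P^{(1)}(v)|,|P^{(2)}(v)|)$, and I would partition the sum by $v$'s distance-pair type in $\{d^*_n,d^*_n+1\}^2$ (Theorem~\ref{thm:single-intro} makes the other pairs negligible).

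\textbf{Joint Poisson limit and contributions by type.} For a fixed $v$, define $Z_v$ as the number of length-$d^*_n$ paths from source $1$ to $v$ whose every edge is common to both graphs, and $X_{i,v}$ as the length-$d^*_n$ paths in $G_n^{(i)}$ containing at least one edge absent from $G_n^{(3-i)}$. A factorial-moment calculation, using per-edge common and symmetric-difference probabilities $\approx q_n\rho_n$ and $q_n(1-\rho_n)$, yields $(Z_v,X_{1,v},X_{2,v}) \overset{d}{\to} (Z,X_1,X_2)$, three independent Poissons with means $(\gamma\log(1/\lambda),(1-\gamma)\log(1/\lambda),(1-\gamma)\log(1/\lambda))$. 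Part~1 follows since $v$ is at distance $d^*_n$ in both graphs iff $Z_v+X_{1,v}\geq 1$ and $Z_v+X_{2,v}\geq 1$ (shorter-path events being negligible), with limit $1-2e^{-\log(1/\lambda)}+e^{-(2-\gamma)\log(1/\lambda)} = 1-2\lambda+\lambda^{2-\gamma}$. For part~2, I would analyze each type: for type $(d^*_n,d^*_n)$, the expected number of length-$(d^*_n-1)$ paths to any vertex is $o(1)$, so each valid parent is in bijection with a unique length-$d^*_n$ path, giving $(|P^{(1)}(v)\cap P^{(2)}(v)|,|P^{(1)}(v)|,|P^{(2)}(v)|) \overset{d}{\to} (Z,Z+X_1,Z+X_2)$ and per-vertex contribution $\mathbb{E}[Z/(Z+\max(X_1,X_2))\,\mathbf{1}[Z\geq 1]] = (1-\lambda^\gamma)\,g((1-\gamma)\log(1/\lambda),\gamma\log(1/\lambda)) = f(\lambda,\gamma)$ (the $Z=0$ summands vanish because the numerator does). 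For type $(d^*_n+1,d^*_n+1)$, valid parents are neighbors at distance $d^*_n$: both $|P^{(i)}(v)|$ are $\Theta(\log n)$ and concentrate by Chernoff around $(1-\lambda)nq_n$, while $|P^{(1)}(v)\cap P^{(2)}(v)|$ concentrates around $\rho_n(1-2\lambda+\lambda^{2-\gamma})nq_n$, so the per-vertex contribution tends to $\rho(1-2\lambda+\lambda^{2-\gamma})/(1-\lambda)$ (with $\rho=1$ whenever $\gamma>0$), weighted by the type proportion $\lambda^{2-\gamma}$. Mixed-type vertices have $|P^{(1)}\cap P^{(2)}|=O(1)$ against a $\Theta(\log n)$ denominator and thus contribute $o(1)$ each. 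Summing yields $\tilde R_n \pto f(\lambda,\gamma) + \lambda^{2-\gamma}(1-2\lambda+\lambda^{2-\gamma})/(1-\lambda)$.

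\textbf{Main obstacle and boundary cases.} The principal difficulty is promoting the per-vertex distributional limits to convergence in probability of the vertex sum. I would handle this with a second-moment variance bound exploiting that, for distinct $u,v$, the source-to-$u$ and source-to-$v$ path neighborhoods are vertex-disjoint to leading order in the sparse regime, so pairwise covariances vanish on most pairs and the small number of ``close'' pairs is absorbed via the trivial $[0,1]$ bound on the ratio. The boundary cases $\lambda,\gamma\in\{0,1\}$ require separate treatment because the Poisson limits degenerate: when $\gamma=0$ (in particular for any fixed $\rho<1$) the common-path mean vanishes so type $(0,0)$ contributes nothing, and the $\rho$ factor surviving in the type $(1,1)$ common-parent count (via the per-edge correlation) yields $\rho(1-\lambda)\lambda^2$; when $\gamma=1$ the graphs nearly coincide and the type proportions and contributions add to $\tilde R_n\to 1$; when $\lambda\in\{0,1\}$ one distance type has proportion $1$ and $Z/(Z+\max(X_1,X_2))$ collapses by a law of large numbers (as all Poisson parameters diverge) or by L'H\^opital at the limit, recovering $\gamma$ in each case. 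Each boundary reduces the general formula to the stated value via a short direct computation.
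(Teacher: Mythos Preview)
Your proposal is correct and follows the same high-level architecture as the paper: decompose $\tilde R_n$ vertex-by-vertex via the per-vertex optimal coupling, partition by distance-pair type, establish a joint Poisson limit for the parent counts at level $d^*$, use Chernoff-type concentration for the large parent sets at level $d^*+1$, and pass to the sum via a second-moment bound. Where the paper differs is in how it obtains the Poisson limit and the concentration. Rather than counting full length-$d^*$ paths and invoking factorial moments, the paper first conditions on the layer sets $\Gamma_1^{(i)},\ldots,\Gamma_{d^*-1}^{(i)}$ (controlling $N_{d^*-1}^I\approx (nq\rho_n)^{d^*-1}$ via a separate layer-by-layer induction, Proposition~\ref{prop:rhoconc}) and then observes that, given these layers, the triple $(|P^{(1)}(v)\cap P^{(2)}(v)|,\,|P^{(1)}(v)|,\,|P^{(2)}(v)|)$ is a deterministic function of the independent edges from $v$ into $\Gamma_{d^*-1}^I$, $\Gamma_{d^*-1}^{(1)}\setminus\Gamma_{d^*-1}^I$, and $\Gamma_{d^*-1}^{(2)}\setminus\Gamma_{d^*-1}^I$ (Lemma~\ref{lem:decomp2}). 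This makes the contributions conditionally i.i.d.\ across $v$, so concentration of the sum is a one-line Chebyshev (Lemma~\ref{lem:decomp3}), and Part~1 falls out of the same layer machinery.

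Your route through global path counts is valid but buys you extra work that the paper's conditioning sidesteps: you must justify the path-to-parent bijection (local tree structure up to depth $d^*-1$, which the paper only proves up to $(d^*-1)/2$ in Proposition~\ref{prop:localtree}, though it does hold to the accuracy you need), control the dependence among overlapping paths in the joint factorial-moment computation, and separately argue that ``common parent'' and ``common full path'' coincide to leading order (a vertex $u\in\Gamma_{d^*-1}^I$ need not a priori be reached by a common path). The paper's edge-count formulation avoids all three issues at once.
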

%TODO: add a result about conventional disorder chaos to the theorem.

The proof is detailed in Section~\ref{sec:correlated}. To help illustrate the results, we observe a few direct consequences. First of all, our results are more than strong enough to show that there is no e-OGP. This is because by tuning the correlation between two correlated graphs, we can achieve any desired overlap with high probability. This is proved in the following corollary when $\lambda_n\to\lambda$ converges, but this assumption can be removed --- see Section~\ref{sec:to-overlap}.
%we formally show there is no e-OGP for shortest path tree:
\begin{corollary}[No e-OGP for shortest path tree]\label{cor:no-eogp}
For any choice of limiting $\lambda$, and any choice of parameters $\mu > 0$ and $0 \le \nu_1 < \nu_2$, the ensemble-OGP property, with $\mathsf{d}_n$ the normalized Hamming metric, is violated with probability $1 - o(1)$.
\end{corollary}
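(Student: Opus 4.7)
My strategy is to use Theorem~\ref{thm:correlated-intro} to construct, for any prescribed $\mu,\nu_1,\nu_2$, a pair of correlated graphs in the Glauber ensemble whose shortest path trees have normalized Hamming distance strictly inside $(\nu_1,\nu_2)$ with probability $1-o(1)$. Since shortest path trees are globally optimal for (P2), the $\mu$-slack condition holds automatically, so $\mu$ plays no role in the analysis.

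\textbf{Subsequence reduction and continuity of the limit overlap.} By the standard subsequence argument of the ``Convergence assumptions on the proxies'' paragraph, it suffices to prove the result under the assumption $\lambda_n\to\lambda\in[0,1]$. Let $r(\lambda,\rho,\gamma)$ denote the limiting value of $\tilde R_n$ given in Theorem~\ref{thm:correlated-intro}. I would verify that for each fixed $\lambda$ the map $(\rho,\gamma)\mapsto r(\lambda,\rho,\gamma)$ is continuous on the admissible region (with $\gamma=0$ forced whenever $\rho<1$) and is surjective onto $[0,1]$. For $\lambda\in\{0,1\}$ this is immediate, since $r=\gamma$. For $\lambda\in(0,1)$, varying $\rho\in[0,1]$ with $\gamma=0$ gives $r=\rho(1-\lambda)\lambda^2\in[0,(1-\lambda)\lambda^2]$; fixing $\rho=1$ and varying $\gamma\in(0,1)$ continuously sweeps $[(1-\lambda)\lambda^2,1]$. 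Matching at $\gamma\to 0^+$ uses $\lambda^{2-\gamma}\to\lambda^2$ together with $(1-\lambda^\gamma)\to 0$ (which kills $f(\lambda,\gamma)$ by boundedness of $g$), and the endpoint $r=1$ at $\gamma=1$ follows from $g(0,b)=\mathbb{E}[Z/Z\mid Z>0]=1$. The intermediate value theorem then delivers full surjectivity onto $[0,1]$.

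\textbf{Ensemble construction and violation.} Using the single-scan Glauber dynamics from the e-OGP setup, I would form $\Xi_n$ by resampling one uniformly chosen edge at a time. After $k_n$ steps the resulting graph $G_n^{(2)}$ is marginally $\mathcal G(n,q_n)$ and $\rho_n$-correlated with $G_n^{(1)}$ for $\rho_n=(1-1/\binom{n}{2})^{k_n}$; choosing $k_n$ appropriately makes $\rho_n\to\rho^*$ for any target $\rho^*\in[0,1]$, and when $\rho^*=1$ one can simultaneously arrange $\gamma_n=\rho_n^{d_n^*}\to\gamma^*$ for any target $\gamma^*\in[0,1]$. Given $\nu_1<\nu_2$, pick any $d^*\in(\nu_1,\nu_2)$ inside the effective range of the normalized Hamming distance on spanning trees (corner cases where the gap is empty or lies entirely outside this range trivially satisfy or violate the OGP condition), set $r^*=1-d^*/2$, and use the surjectivity above to select $(\rho^*,\gamma^*)$ with $r(\lambda,\rho^*,\gamma^*)=r^*$. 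Theorem~\ref{thm:correlated-intro} then yields $\tilde R_n\pto r^*$. A tree pair drawn from the optimal coupling of the uniform shortest-path-tree measures on the two graphs has normalized Hamming distance converging to $d^*\in(\nu_1,\nu_2)$ in probability, is globally (hence $\mu$-) optimal, and therefore certifies an e-OGP violation with probability $1-o(1)$.

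\textbf{Main obstacle.} The principal technical difficulty is the continuity/surjectivity step: stitching the several case formulas of Theorem~\ref{thm:correlated-intro} into a single continuous map onto $[0,1]$, especially the matching at $\gamma=0$ between the ``$\gamma=0$'' formula and the $\gamma\to 0^+$ limit of the ``$0<\gamma<1$'' formula, and the endpoint behavior of $g$ at $\gamma=1$. A secondary subtlety is translating concentration of the \emph{expected} overlap under the optimal coupling into the existence of a concrete tree pair whose Hamming distance lies in the open interval $(\nu_1,\nu_2)$; this should follow from within-coupling concentration of $R_n$ around $\tilde R_n$ (a typical draw from the coupling realizes the mean), but may require additional tightness arguments about the joint structure of two correlated shortest path trees.
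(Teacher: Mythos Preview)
Your proposal is correct and follows essentially the same route as the paper: use the continuity and surjectivity onto $[0,1]$ of the limiting overlap from Theorem~\ref{thm:correlated-intro} to place a correlated graph in the Glauber ensemble whose shortest path tree overlap lands inside the forbidden window, contradicting e-OGP. Your write-up is actually more careful than the paper's brief proof, since you explicitly verify the matching of the case formulas at $\gamma\to 0^+$ and $\gamma\to 1^-$; one small slip is the conversion $r^*=1-d^*/2$, which under the paper's directed parent-map Hamming metric should be $r^*=1-d^*$, but this is a harmless normalization issue that does not affect the argument, and your secondary concern about realizing the expected overlap $\tilde R_n$ by an actual tree pair is resolved by the Hoeffding-type concentration in Proposition~\ref{prop:sptree-conc}.
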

\begin{proof}
Suppose that the ensemble-OGP is satisfied with parameters $\mu > 0, 0 \le \nu_1 < \nu_2$  with probability $\Omega(1)$ as $n \to \infty$. Let $x$ be any number between $\nu_1$ and $\nu_2$; Theorem~\ref{thm:correlated-intro} tells us that with high probability, we can identify a pair of $(\rho,\lambda)$ such that asymptotically almost surely, the correlated graph $G_2$ will have overlap $(1 + o(1))x$, so asymptotically almost surely the overlap is $(\nu_1,\nu_2)$. Since the conditional law of the graph $G_2$ given $G_1$ is identical to the conditional law of one of the elements of the ensemble $\Xi$ defining the e-OGP, we have a contradiction. %(which, as a reminder, is built by resampling edges of $G_1$ one-by-one in a random order), we have a contradiction.
%is just one of the graphs which appears as we randomly resample pairs of edges along the   
% In this section, we state  that the behavior of the quantities $D_n, S_n, \tilde{R}_n$ all concentrate around continuous paths. This is more than strong enough to show there is no e-OGP (elaborate/cite a formal proposition?) TODO
\end{proof}
Our result can also be used to show that the uniform measure over spanning trees satisfies a type of Wasserstein disorder chaos as defined in \cite{el2022sampling}. Intuitively, this means that if we resample a vanishing proportion of edges of the graph, the distribution of spanning trees will already change significantly.
As discussed further in Remark~\ref{rmk:reasonable-stability}, this implies a fairly restrictive lower bound result: 
%$O(1)$-Lipschitz 
$O(1)$-Lipschitz algorithms cannot sample from the measure (c.f. \cite{el2022sampling,LS2024,alaoui2023sampling,ma2025polynomial}).
%but in the present context that is arguably an unreasonable amount of stability to assume. 
%quite a weak lower bound --- natural algorithms, while fairly stable, will still be too unstable for the lower bound to apply. 
\begin{corollary}[Wasserstein Disorder Chaos]\label{cor:wasserstein-disorder}
Suppose $\{\rho_n\}$ is a constant sequence $\rho_n=\rho$. Then
\[ \liminf_{\rho \to 1^-} \operatorname*{plim}_{n \to \infty} \tilde{R}_n \le 4/27 < 1. \]
\end{corollary}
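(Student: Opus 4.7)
The plan is to reduce the claim to the third and fourth cases of Theorem~\ref{thm:correlated-intro} and then maximize a one-variable cubic. The first step is to observe that with $\rho_n = \rho \in (0,1)$ constant and $d_n^* \to \infty$ (the sparse regime gives $d_n^* \sim \log n / \log\log n$ from \eqref{eqn:dstar}), the proxy $\gamma_n = \rho^{d_n^*}$ necessarily tends to $0$. Thus the hypothesis $\gamma_n \to \gamma = 0$ in Theorem~\ref{thm:correlated-intro} is automatic, so we are only ever in the $\gamma = 0$ branch.

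Next, I would fix any $\{\alpha_n\}$ satisfying \eqref{eqn:alpha-bound} and, using the subsequence device flagged right before Section~\ref{sec:ensemble-ogp}, assume without loss of generality that $\lambda_n \to \lambda \in [0,1]$. Applying Theorem~\ref{thm:correlated-intro} case by case: if $\lambda \in (0,1)$, the third case gives $\tilde{R}_n \pto \rho(1-\lambda)\lambda^2$; if $\lambda \in \{0,1\}$, the fourth case gives $\tilde{R}_n \pto \gamma = 0$. Therefore, uniformly in $\{\alpha_n\}$,
\[
\operatorname*{plim}_{n \to \infty} \tilde{R}_n \;\leq\; \rho \cdot \max_{\lambda \in [0,1]} (1-\lambda)\lambda^2.
\]
A one-variable calculus check on $h(\lambda)=\lambda^2-\lambda^3$ yields $h'(\lambda) = \lambda(2-3\lambda)$, so $h$ is maximized on $[0,1]$ at $\lambda = 2/3$ with $h(2/3) = 4/27$. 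Hence $\operatorname*{plim}_{n\to\infty}\tilde{R}_n \leq 4\rho/27$ for every valid $\{\alpha_n\}$.

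Taking $\liminf$ as $\rho \to 1^-$ then gives
\[
\liminf_{\rho \to 1^-} \operatorname*{plim}_{n \to \infty} \tilde{R}_n \;\leq\; \liminf_{\rho \to 1^-} \frac{4\rho}{27} \;=\; \frac{4}{27} \;<\; 1,
\]
as desired. The bound is actually tight, since one can engineer a sequence $\{\alpha_n\}$ for which $\lambda_n \to 2/3$ (by choosing $\alpha_n$ so that $(nq_n)^{\Delta_n} \to \log(3/2)$ via small perturbations of the distance parameter), and such a sequence saturates the $4/27$ value.

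There is no substantive obstacle in this corollary: all of the analytical work is absorbed into Theorem~\ref{thm:correlated-intro}, and what remains is a one-line asymptotic for $\gamma_n$ together with the standard optimization of $(1-\lambda)\lambda^2$. The only conceptual point to verify carefully is that, regardless of which limiting value $\lambda \in [0,1]$ arises, both cases of Theorem~\ref{thm:correlated-intro} that we invoke yield a plim bounded by $\rho \cdot 4/27$, so no pathological choice of $\{\alpha_n\}$ can evade the estimate.
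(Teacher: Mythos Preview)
Your proposal is correct and follows exactly the paper's approach: the paper's proof is the single line ``We use the theorem and the fact that $\max_{\lambda \in [0,1]} (1 - \lambda)\lambda^2 = 4/27$,'' and you have simply unpacked that line carefully (noting $\gamma_n\to 0$ automatically, handling the $\lambda\in\{0,1\}$ boundary via the fourth case, and doing the calculus). Your added remark on tightness at $\lambda=2/3$ is a nice bonus not present in the paper.
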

\begin{proof}
We use the theorem and the fact that $\max_{\lambda \in [0,1]} (1 - \lambda)\lambda^2 = 4/27$.
%We obtain this from the theorem by computing the maximum of $(1 - \lambda)\lambda^{2}$ on the interval $[0,1]$, which is achieved when $\lambda = 2/3$.
\end{proof}

We also show that the uniform measure over shortest path trees exhibits disorder chaos in a non-Wasserstein sense (see e.g. \cite{chatterjee2009disorder,el2022sampling}). A more general result is proved in Theorem~\ref{thm:indepcoupling}.

\begin{theorem}[Disorder chaos]\label{thm:d-chaos-intro}
    If $\rho_n=\rho<1$ is fixed, then the overlap of uniformly random shortest path trees sampled independently from $G_n^{(1)}$ and $G_n^{(2)}$ is $o(1)$ asymptotically almost surely. %asymptotically zero.
\end{theorem}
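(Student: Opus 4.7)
The plan is to bound $\mathbb{E}[R_n^{\text{ind}}]$ directly and conclude via Markov's inequality. Conditional on the two graphs, $T_n^{(1)}$ and $T_n^{(2)}$ are independent and each selects the parent of any non-root vertex $v$ uniformly from its BFS parent set
\[ P_v^{(i)} := \{u : (u,v)\in G_n^{(i)},\ \mathsf{d}_{G_n^{(i)}}(1,u) = \mathsf{d}_{G_n^{(i)}}(1,v) - 1\}, \]
so
\[ \mathbb{E}\!\left[R_n^{\text{ind}} \mid G_n^{(1)},G_n^{(2)}\right] = \frac{1}{n}\sum_v \frac{|P_v^{(1)} \cap P_v^{(2)}|}{|P_v^{(1)}| \cdot |P_v^{(2)}|}. \]
I would then partition the sum by the distance profile $(d^{(1)}(v), d^{(2)}(v))$. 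Theorem~\ref{thm:single-intro}, applied to each graph marginally, implies that all but $o(n)$ vertices have distances in $\{d_n^*, d_n^*+1\}$ in both graphs; vertices outside this region contribute $o(1)$, while vertices with $d^{(1)}(v) \ne d^{(2)}(v)$ contribute zero because their parent sets live at distinct BFS levels. Only the two diagonal cases $(d_n^*, d_n^*)$ and $(d_n^*+1, d_n^*+1)$ remain.

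For the easier diagonal, $v$ at level $d_n^* + 1$ in both graphs, each $|P_v^{(i)}|$ is (essentially) Poisson with mean $\asymp (1-\lambda)nq_n \to \infty$. Combining the crude bound $\frac{|P_v^{(1)} \cap P_v^{(2)}|}{|P_v^{(1)}||P_v^{(2)}|} \le \frac{1}{\min(|P_v^{(1)}|, |P_v^{(2)}|)}$ with the standard estimate $\mathbb{E}[X^{-1}\mathbbm{1}\{X \ge 1\}] = o(1)$ for Poissons of diverging mean shows that this group contributes $o(1)$. The main obstacle is the other diagonal, $v$ at level $d_n^*$ in both, where $|P_v^{(i)}|$ is Poisson with the bounded mean $\log(1/\lambda)$, so the $1/|P|$ trick no longer suffices.

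For the hard case I would use instead $\frac{|P_v^{(1)} \cap P_v^{(2)}|}{|P_v^{(1)}||P_v^{(2)}|} \le |P_v^{(1)} \cap P_v^{(2)}|$ (valid since each $|P_v^{(i)}| \ge 1$ on this event) and prove $\mathbb{E}[|P_v^{(1)} \cap P_v^{(2)}|] = o(1)$. Because shortest paths from $1$ to a vertex $u$ at distance $d_n^*-1$ cannot traverse the farther vertex $v$, the event ``$u$ is at distance $d_n^*-1$ in $G_n^{(i)}$'' is essentially independent of the edge $(u,v)\in G_n^{(i)}$; combined with $P((u,v)\in G_n^{(1)} \cap G_n^{(2)}) = q_n(\rho + (1-\rho)q_n) \asymp \rho q_n$, it suffices to show that the expected number of common vertices at distance $d_n^*-1$ in the two graphs is $o(1/q_n) = o(n/\log n)$. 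This is where the hypothesis $\rho < 1$ fixed (equivalently $\gamma_n = \rho^{d_n^*} \to 0$) is decisive: by a path-counting argument on pairs $(\pi^{(1)},\pi^{(2)})$ of simple paths of length $d_n^*-1$ from $1$ to a common endpoint $u$ in the two graphs, pairs sharing $s$ edges carry a joint-probability factor $\rho^s$, so the fully-shared contribution is $O(\rho^{d_n^*-1} \log(1/\lambda)/q_n) = o(n/\log n)$ (since $\rho^{d_n^*} \to 0$ faster than any polylogarithm) while the fully-disjoint contribution is the asymptotically independent $O(n/\log^2 n)$, with intermediate overlaps dominated by these extremes. Summing across the three cases yields $\mathbb{E}[R_n^{\text{ind}}] = o(1)$, and $R_n^{\text{ind}} \pto 0$ then follows from Markov's inequality.
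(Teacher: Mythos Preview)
Your overall framework---express the conditional expectation as
\[
Q_n=\frac{1}{n}\sum_v \frac{|P_v^{(1)}\cap P_v^{(2)}|}{|P_v^{(1)}|\,|P_v^{(2)}|},
\]
split by distance profile, and finish with Markov---matches the paper exactly (see equation~\eqref{eqn:indepcoupling} and the discussion surrounding Theorem~\ref{thm:indepcoupling}). The treatment of the level-$(d_n^*+1)$ diagonal via a diverging denominator is also the same.

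One claim needs correction: it is \emph{not} true that vertices with $d^{(1)}(v)\ne d^{(2)}(v)$ contribute zero. A vertex $u$ can have $d^{(1)}(u)=d_n^*-1$ and $d^{(2)}(u)=d_n^*$ simultaneously, so $P_v^{(1)}\cap P_v^{(2)}$ may be non-empty on the off-diagonal. The actual reason this case is negligible (and the one the paper gives in the proof of Lemma~\ref{lem:decomp}(b)) is that for $v\in\Gamma_{d_n^*}^{(1)}\cap\Gamma_{d_n^*+1}^{(2)}$ the factor $|P_v^{(2)}|$ diverges, so your $1/|P|$ bound still disposes of it. This is a minor repair, not a structural gap.

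The genuine difference is in the hard diagonal. The paper does not do a direct path-count; instead it invokes the recursive-BFS concentration of Proposition~\ref{prop:rhoconc}, which gives $N_{d_n^*-1}^I/(nq)^{d_n^*-1}\pto 0$ whenever $\gamma=0$, and then reads off from the decomposition in Lemma~\ref{lem:decomp2} that the common-parent count $Z_{d^*}$ has vanishing mean $q_I N_{d^*-1}^I$. Your route---crude bound $\le |P_v^{(1)}\cap P_v^{(2)}|$ plus a first-moment path-count on $\mathbb{E}[N_{d_n^*-1}^I]$---is a legitimate and more self-contained alternative that bypasses all of Section~\ref{sec:odv}. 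Both approaches rest on the same correlation-decay observation ($\rho^{d_n^*}\to 0$); the paper packages it once in Proposition~\ref{prop:rhoconc} for reuse elsewhere, while you re-derive just what is needed here. Do note that your line ``intermediate overlaps dominated by these extremes'' is doing non-trivial work: you must enumerate pairs $(\pi^{(1)},\pi^{(2)})$ by how their shared edges are arranged into sub-segments, not merely by the count $s$, and verify the geometric sum. This is routine for fixed $\rho<1$ but is not a one-liner.
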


\begin{remark}[Two phases of evolution]%[$\gamma$ vs. $\rho$]
    As we have shown, when $0<\lambda<1$, the overlap of the uniformly random shortest path trees drops in two phases: first for $\rho_n = 1 - o(1)$ as we go from $\gamma=1$ to $\gamma=0$, and then in a second phase down to $\rho=0$. When $0<\gamma<1$, the set of vertices having the same distance is nontrivial and its proportion decays as $\gamma$ shrinks (see the first result of the theorem, and also Theorem~\ref{thm:d-chaos-intro}). Once $\gamma=0$, the set of vertices having the same distance stays roughly the same size. Instead the overlap $\rho$ of the edge sets governs the overlap of the shortest path trees, which is why the overlap is proportional to $\rho$.
    See Figure~\ref{fig:both_heatmaps} for visual illustration of both regimes. % the asymptotic shortest path tree overlaps in both regimes.
\end{remark}

\subsubsection{Numerical simulations match our predictions}

\begin{figure}[t]
    \centering
    \begin{subfigure}[b]{0.48\textwidth}
        \centering
        \includegraphics[width=\textwidth]{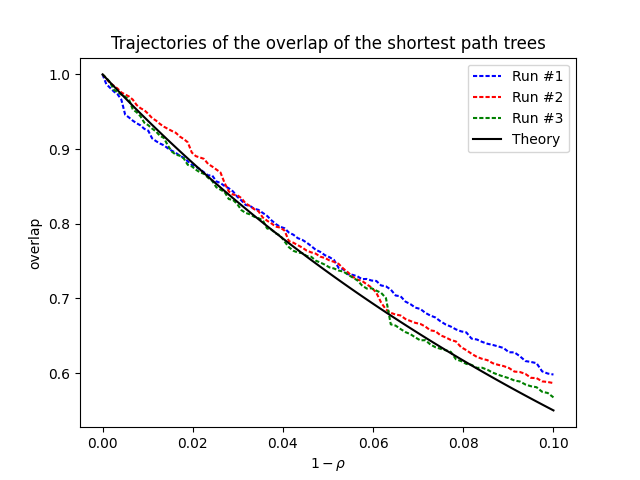}
        \caption{Overlap of the uniformly randomly sampled shortest path trees from the optimal coupling, plotted with the theoretical curve from Theorem~\ref{thm:correlated-intro}.}
        \label{fig:overlap_spt}
    \end{subfigure}
    \hfill
    \begin{subfigure}[b]{0.48\textwidth}
        \centering
        \includegraphics[width=\textwidth]{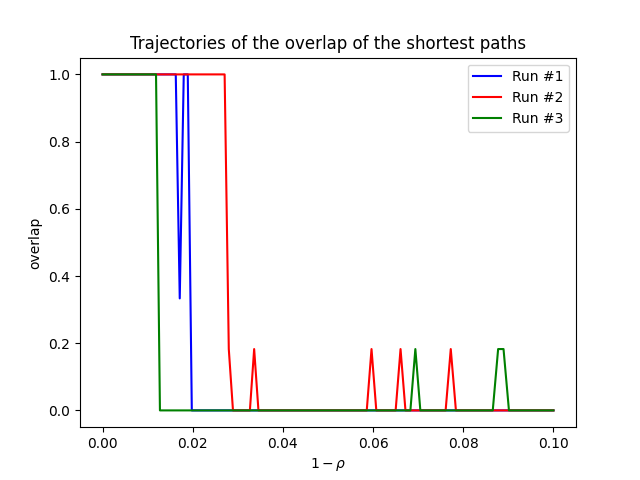}
        \caption{Overlap of the shortest paths from the root to a fixed vertex, where the paths are taken from the sampled shortest path trees.}
        \label{fig:overlap_sp}
    \end{subfigure}
    \caption{Numerical simulation of the overlap of uniformly random shortest path trees vs. shortest paths. It was conducted independently three times with parameters $n=10^5$ and $q=10^{-4}$.}
    %, and three independent trajectories of random graphs were generated.}
    \label{fig:overlap}
\end{figure}

To visualize the distinct differences in the behavior of the overlap, we conducted a numerical simulation  (Figure~\ref{fig:overlap}). The implementation details of the simulation are given in Section~\ref{sec:simulation-details}.

Figure~\ref{fig:overlap_spt} shows the simulation result for shortest path trees. The black curve indicates the theoretical curve which is computed using the formula
%\footnote{We used Poisson samples to compute a Monte Carlo approximation to $f(a,b)$.}
%We are not aware of any closed form formula for $f(a,b)$, so we used Poisson samples to compute a Monte Carlo approximation.}
%n. One way is to use generated Poisson samples to estimate the expectation involved in $f$, and the other is to use a simple upper bound $f(a,b)\leq(1-a)b$. We found that the difference is marginal in this parameter regime.}
\[
    \frac{1-2\lambda_n+\lambda_n^{2-\gamma_n}}{1-\lambda_n}\cdot\lambda_n^{2-\lambda_n}+f(\lambda_n,\gamma_n)
\]
from Theorem~\ref{thm:correlated-intro},
which is the most accurate when $\rho_n$ is close to $1$. As expected, the simulated curve is very close to the theoretical curve %(especially when $1-\rho$ is small) 
and the overlap smoothly decreases as the correlation drops.
On the other hand, Figure~\ref{fig:overlap_sp} shows that the overlap of the shortest paths (contained in the corresponding trees) suddenly drops to zero, which agrees with the result of \cite{LS2024}.

\subsubsection{Gibbs measures, energy landscape, and Franz--Parisi potential}
%% Out of date todo. 
%TODO: add a more explicit statement that the measures are replica symmetric. In the low temperature phase they are $o(n)$ Wasserstein distance from shortest path trees (this is an easy consequence of the fact that their objective value is nearly optimal). In the high temperature phase, if two samples overlapped with nontrivial probability it would imply that two (or more) independent samples do not have maximal entropy, which contradicts the entropy formula. (This might also have a simple Wasserstein based proof via transportation-entropy inequality ?) Asymptotics would probably carry over from zero temperature if we have the Wasserstein bound.
We extend the analysis to compute the asymptotic free energy of the natural finite temperature analogue of shortest path trees of the component $\overline{G}_n$ containing $1$. This is a natural way from the statistical physics perspective to study the structure of the set of approximate optima (in this case, approximate shortest path trees).
To state the results, we first must define the natural finite temperature model at inverse temperature $\beta \ge 0$ which is
\begin{equation}\label{eqn:finite-temp-intro}
\mu_{G_n,\beta}(T) = \frac{1}{Z_{G_n,\beta}} \exp\left(-\beta \log\log(n)  \sum_{v\in\overline{V}_n} \mathsf{d}_T(1,v) \right)
\end{equation}
where $\overline{V}_n$ is the set of vertices reachable from $1$. This definition will be justified further in the preliminaries.
\begin{remark}
    In the case $\beta=0$, the measure $\mu_{G_n,0}$ is the uniform measure over the set of the \emph{spanning trees} of $\overline{G}_n$. In the other extreme $\beta=\infty$ (zero temperature limit), $\mu_{G_n,\infty}$ is the uniform measure over the set of the \emph{shortest path trees} of $\overline{G}_n$.
\end{remark}
We denote the ground state energy (minimum energy) of this system by
\[
    E_{G_n}:=\sum_{v\in\overline{V}_n}\mathsf{d}_G(1,v)
\]
which is simply the sum of distances from $1$ to other vertices in the same component $\overline{V}_n$. We will compute several thermodynamic quantities such as free energy density and the Franz--Parisi potential relative to this ground state.
% $E_{G_n}$ also admits a simple and accurate formula in terms of $\{\lambda_n\}$.

% \begin{theorem}[Ground state energy]
%     We have
%     \[
%         E_{G_n}=(d_n^*+\lambda_n)n+o_p\left(\frac{n}{\log\log n}\right)
%     \]
%     or equivalently,
%     \[
%         \frac{\log\log n}{n}(E_{G_n}-(d_n^*+\lambda_n)n)\pto0\,.
%     \]
% \end{theorem}

\paragraph{The free energy density.} We define the \emph{relative free energy density} by
\[
    \mathcal{F}_{G_n,\beta}:=\frac{1}{n}\left(-\frac{1}{\beta\log\log n}\log Z_{G_n,\beta}-E_{G_n}\right)\,.
\]
The following theorem computes the limiting value of $\mathcal{F}_{G_n,\beta}$, which turns out to exhibit distinct behaviors depending on the proxy sequences $\{\lambda_n\}$ and $\{\Delta_n\}$. See Section~\ref{sec:free-energy-density} for the proof.

\begin{theorem}[Limiting free energy density]\label{thm:intro-gibbs-free-energy}
    Suppose that $\Delta_n\to\Delta\in[0,1]$ and $\lambda_n\to\lambda\in[0,1]$, and let $\beta>0$ be a constant. Then the free energy density converges in probability to
    \[
        \mathcal{F}_{G_n,\beta}\pto\mathcal{F}_{\lambda,\Delta}(\beta)
    \]
    where $\mathcal{F}_{\lambda,\Delta}$ is a continuous function of $\beta$ and is one of the following.
    \begin{enumerate}[label=(\Alph*)]
        \item\label{item:intro-free-regime1} If $\Delta=1$, then $\lambda=0$ and
        \[
            \mathcal{F}_{\lambda,\Delta}(\beta)=-\frac{1}{\beta}\,.
        \]
        \item\label{item:intro-free-regime2} If $\Delta\in[0, 1)$ and $\lambda\in[0, 1)$, then
        \[
            \mathcal{F}_{\lambda,\Delta}(\beta)=\begin{dcases}
                -\frac{\lambda+\Delta}{\beta}&\text{if $\beta\geq1-\Delta$,}\\
                \frac{1-(
                \lambda+\Delta)}{1-\Delta}-\frac{1}{\beta}&\text{if $\beta\leq1-\Delta$.}
            \end{dcases}
        \]
        
        \item\label{item:intro-free-regime3} If $\lambda=1$, then $\Delta=0$ and
        \[
            \mathcal{F}_{\lambda,\Delta}(\beta)=-\frac{1}{\beta}\,.
        \]
        % \[
        %     \Phi_{\lambda,\Delta}=-\frac{1}{\beta}\,.
        % \]

        % \item If $\lambda=1$, then
        % \[
        %     \Phi_{\lambda,\Delta}=1-\frac{1}{\beta}\,.
        % \]
    \end{enumerate}
\end{theorem}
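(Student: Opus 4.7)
The plan is to rewrite $Z_{G_n,\beta}$ as a sum over ``depth labelings'' of vertices, then compute its asymptotic logarithm via a per-vertex factorization whose effective parameters are dictated by the BFS structure of $\overline G_n$ established in Theorem~\ref{thm:single-intro}. Every rooted spanning tree $T$ of $\overline G_n$ is uniquely determined by a labeling $h : \overline V_n \to \mathbb Z_{\ge 0}$ with $h(1)=0$ and $h(v) \ge \mathsf{d}_G(1,v)$, together with a choice of parent for each $v\ne 1$ from the set $\{u \sim_{G_n} v : h(u) = h(v)-1\}$ (required to be nonempty). Writing $k(v) := h(v)-\mathsf{d}_G(1,v) \ge 0$ for the ``excess,'' this gives
\[
  \tilde Z_{G_n,\beta} := e^{\beta(\log\log n) E_{G_n}} Z_{G_n,\beta} = \sum_{k} \Bigl(\prod_{v\ne 1} m_v(k)\Bigr) \exp\Bigl(-\beta(\log\log n)\textstyle\sum_v k(v)\Bigr),
\]
where $m_v(k)\ge 1$ is the number of eligible parents, and the theorem reduces to showing $\log\tilde Z_{G_n,\beta}/(n\log\log n) \pto -\beta\,\mathcal F_{\lambda,\Delta}(\beta)$.

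First I would restrict the sum to ``bulk'' vertices at graph-distance $d^*_n$ or $d^*_n+1$; the rest number $o(n)$ a.a.s.\ by Theorem~\ref{thm:single-intro} and contribute negligibly to the free energy density (using also $\deg(v)=O(\log n)$ a.a.s.). Labelings with any excess $k(v)\ge K$ are suppressed by $(\log n)^{-\beta K}$, which beats any polynomial in $\deg(v)$ for $K$ a large constant, so I may truncate at bounded excess. Exploiting local tree-likeness of $G(n,q_n)$ at radii $d^*_n+O(1)$, I would next factorize
\[
  \log\tilde Z_{G_n,\beta} = \textstyle\sum_v \log Z_v^{\mathrm{loc}} + o_p(n\log\log n),
\]
where $Z_v^{\mathrm{loc}}$ is a local/cavity partition function depending on the numbers $a_v, b_v, c_v$ of neighbors of $v$ at graph-distances $\mathsf{d}_G(1,v)-1$, $\mathsf{d}_G(1,v)$, $\mathsf{d}_G(1,v)+1$. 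These are asymptotically independent Poissons; for $v$ at distance $d^*_n$, their means are $(nq_n)^{\Delta_n}=\log(1/\lambda_n)$, $(1-\lambda_n)\alpha_n\log n$, and $\lambda_n\alpha_n\log n$ respectively.

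The per-vertex calculation yields the phase transition. For $v$ at distance $d^*_n$, the dominant terms in $Z_v^{\mathrm{loc}}$ are (i) excess $0$ of weight $a_v \approx (\log n)^{\Delta}$, and (ii) excess $1$ via a same-level parent (itself constrained to excess $0$), of weight $\approx b_v\,p^{(0)}_{d^*_n}(\log n)^{-\beta} \asymp (\log n)^{1-\beta}$; higher excesses $k\ge 2$ add at most another $(\log n)^{-\beta}$ factor per step and are subdominant. Hence $\log Z_v^{\mathrm{loc}} = \max(\Delta, 1-\beta)\log\log n + o(\log\log n)$, yielding the critical inverse temperature $\beta_c = 1-\Delta$. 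For $v$ at distance $d^*_n+1$, the definition of $d^*_n$ guarantees $(1-\lambda_n)\alpha_n\log n \gtrsim \log n/(\log\log n)^2 \to \infty$, so $\log a_v \sim \log\log n$ and $\log Z_v^{\mathrm{loc}} = \log\log n + o(\log\log n)$. Summing with layer densities $(1-\lambda)n$ and $\lambda n$ from Theorem~\ref{thm:single-intro} gives $\log\tilde Z_{G_n,\beta}/(n\log\log n) \pto (1-\lambda)\max(\Delta, 1-\beta) + \lambda$, which reduces to the piecewise formula via the identity $\lambda\Delta = 0$ in the limit (a consequence of $\log(1/\lambda_n)=(nq_n)^{\Delta_n}$ with $nq_n\to\infty$). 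Regimes~\ref{item:intro-free-regime1} and~\ref{item:intro-free-regime3} correspond to one layer carrying all the mass and reduce to a single-layer computation.

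The main obstacle is justifying the local factorization rigorously. The constraint ``$v$'s parent has tree-depth $h(v)-1$'' couples each vertex's excess to its parent's excess, propagating correlations up the BFS tree, and the global acyclicity requirement for a spanning tree is not directly local. I would address this via a Bethe/cavity fixed-point analysis on the BFS tree of $\overline G_n$, self-consistently solving for the marginals $p^{(k)}_d := \Pr[k(u)=k \mid \mathsf{d}_G(1,u)=d]$ under the Gibbs measure, and using a union bound over short cycles to control non-tree-like corrections (cycles of length $o(d^*_n)$ are rare in $G(n,q_n)$). A secondary difficulty is uniformly handling the sub-leading $o(\log\log n)$ terms across all boundary sub-cases, particularly when $\Delta_n$ or $1-\lambda_n$ decay slowly, where careful Poisson tail estimates are needed to confirm $\log a_v$ and $\log b_v$ scale as claimed.
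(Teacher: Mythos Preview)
Your overall strategy---passing to the Markov random field on distance labelings, restricting to vertices at depth $d^*_n$ and $d^*_n+1$, and analyzing the per-vertex trade-off between parent-selection entropy and the $\beta\log\log n$ energy cost---matches the paper's intuition (Section~\ref{sec:transform-mrf} and the informal discussion in Section~\ref{sec:gibbs-informal}), and your final formula $(1-\lambda)\max(\Delta,1-\beta)+\lambda$ for $-\beta\,\mathcal{F}_{\lambda,\Delta}(\beta)$ is correct.

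The substantive gap is in justifying the factorization $\log\tilde Z = \sum_v \log Z_v^{\mathrm{loc}} + o_p(n\log\log n)$. You propose to ground this in local tree-likeness plus a union bound over short cycles, but at depth $d^*_n$ the relevant graph is \emph{not} locally tree-like: the induced subgraph on $\Gamma_{d^*_n}$ is an Erd\"os--R\'enyi graph on $\Theta(n)$ vertices with typical degree $\Theta(\log n)$, so short cycles are plentiful. The excess-$1$ option for $v\in\Gamma_{d^*_n}$ routes through a same-level parent, and those same-level edges live precisely in this dense subgraph---cavity on the BFS tree would ignore them, while cavity on the full graph cannot be justified by sparsity of cycles. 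What actually makes the per-vertex picture work is that the interaction is \emph{mean-field}: the excess-$1$ weight at $v$ depends only on how many of $v$'s same-level neighbors have excess $0$, which by concentration tracks the global fraction $|A|/N_{d^*_n}$ rather than any local structure. The paper makes this rigorous not via cavity but by conditioning on the \emph{kernel} $A=\varphi(T)=\{v\in\Gamma_{d^*_n}:\mathsf{par}_T(v)\in\Gamma_{d^*_n-1}\}$, i.e., the excess-$0$ set itself. Once $A$ is fixed, $\log(Z_{G,\beta}|_A)$ admits matching upper and lower bounds via the Gibbs variational principle and uniform subgaussian concentration of $\sum_{v\notin A}\log|\mathsf{N}_A(v)|$ over all $A$ (Lemmas~\ref{lem:free-core}--\ref{lem:gibbs-cond-lb}); the lower bound is witnessed by explicit product measures $\mu_{G,\infty}|_A$ supported on genuine spanning trees, which also dispatches the acyclicity constraint you flag. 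The free energy then reduces to a one-dimensional optimization over $m=|A|$ that is linear in $m$ at leading order (Lemmas~\ref{lem:logz-ub-lglg}--\ref{lem:logz-lb-lglg}) and immediately gives your formula. Your self-consistent $p^{(0)}_{d^*_n}$ would emerge as $m^*/N_{d^*_n}$ from this optimization---but getting there rigorously requires the kernel/mean-field machinery, not a tree-like cavity argument.
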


\begin{remark}[Approximating the log partition function]
    Computing $\mathcal{F}_{G,\beta}$ is equivalent to computing $\log Z_{G,\beta}$ up to an error of $o_p(n\log\log n)$. Through a more careful and delicate analysis, we established a better formula for the log-partition function accurate up to an $o(n)$ error, which is needed to prove the following results. Since the formula is rather sophisticated and requires a few definitions, we refer the readers directly to Section~\ref{sec:partition-function} for the precise statement.
\end{remark}

\begin{figure}
    \centering

%     \begin{tikzpicture}
%   \draw[fill=blue!20] (0,0) circle (1cm);

%   % \begin{pgfonlayer}{background}
%     \draw[red, thick] (-1.5,-1.5) rectangle (1.5,1.5);
%   % \end{pgfonlayer}
% \end{tikzpicture}

% \vspace{1cm}

% % Second figure
% \begin{tikzpicture}
%   \draw[fill=green!30] (2,0) rectangle (4,2);

%   \begin{pgfonlayer}{background}
%     \draw[black, thick, dashed] (1.5,-0.5) rectangle (4.5,2.5);
%   \end{pgfonlayer}
% \end{tikzpicture}

    \begin{tikzpicture}

    \begin{axis}[
        % axis y line=middle,
        at={(0,0.6cm)},
        axis x line=middle,
        xlabel=$1-\Delta$,
        ylabel=$\beta$,
        xlabel style={
            font=\footnotesize,
            at={(axis description cs:0.5,0)},
            anchor=north
        }, % Smaller font for x-axis label
        ylabel style={font=\footnotesize}, % Smaller font for y-axis label
        tick label style={font=\footnotesize}, % Smaller font for tick labels
        xmin=0, xmax=1,
        ymin=0, ymax=1.6,
        domain=0:1,
        samples=100,
        width=4cm,
        height=5cm,
        xtick={0,1},
        ytick={1},
    ]
    
    \addplot[draw=none, name path=zero] {0};
    \addplot[draw=none, domain=0:1, samples=50, name path=top] {1.5 + 0.02*sin(deg(x*20))};
    \addplot[black, thick, domain=0:1, samples=50, name path=f] {x};
    
    \addplot[red, opacity=0.3] fill between[
        of=f and zero,
        soft clip={domain=0:1},
    ];
    
    \addplot[blue, opacity=0.3] fill between[
        of=f and top,
        soft clip={domain=0:1},
    ];
    
    % % Define the piecewise function f(x) and name the paths
    % \addplot[black, thick, domain=0.5:1.5, samples=50, name path=f1] {x-0.5};
    % \addplot[black, dashed, thick] coordinates {(2.5,1) (3,1)};
    % % \addplot[black, thick, domain=1:3, samples=50, name path=f2] {1};
    % \addplot[black, thick] coordinates {(1.5,1) (2.5,1)};
    % \addplot[black, dashed, thick, domain=3:4, samples=50, name path=f3] {4-x};
    % \addplot[draw=none, domain=1.5:3, samples=50, name path=f2] {1};
    
    % \addplot[purple, ultra thick] coordinates {(1,0) (1,2)};
    
    \addplot[black, dotted, thin] coordinates {(0,1) (1,1)};
    
    \coordinate (Ab) at (1,0);
    \coordinate (Am) at (1,1);
    \coordinate (At) at (1,1.5);
    
    \end{axis}
    
    \begin{axis}[
        at={(3cm,0.2cm)},
        % axis y line=middle,
        axis x line=middle,
        xlabel=$\lambda$,
        % ylabel=$\beta$,
        xlabel style={
            font=\footnotesize,
            at={(axis description cs:0.5,0)},
            anchor=north
        }, % Smaller font for x-axis label
        ylabel style={font=\footnotesize}, % Smaller font for y-axis label
        tick label style={font=\footnotesize}, % Smaller font for tick labels
        xmin=0, xmax=1,
        ymin=0, ymax=1.7,
        domain=0:1,
        samples=100,
        width=4cm,
        height=5.3cm,
        xtick={0,1},
        ytick=\empty,
    ]
    
    \addplot[draw=none, name path=zero] {0};
    \addplot[draw=none, domain=0:1, samples=50, name path=top] {1.6 + 0.02*sin(deg(x*20))};
    \addplot[black, thick, domain=0:1, samples=50, name path=f] {1};
    
    \addplot[red, opacity=0.3] fill between[
        of=f and zero,
        soft clip={domain=0:1},
    ];
    
    \addplot[blue, opacity=0.3] fill between[
        of=f and top,
        soft clip={domain=0:1},
    ];
    
    % % Define the piecewise function f(x) and name the paths
    % \addplot[black, thick, domain=0.5:1.5, samples=50, name path=f1] {x-0.5};
    % \addplot[black, dashed, thick] coordinates {(2.5,1) (3,1)};
    % % \addplot[black, thick, domain=1:3, samples=50, name path=f2] {1};
    % \addplot[black, thick] coordinates {(1.5,1) (2.5,1)};
    % \addplot[black, dashed, thick, domain=3:4, samples=50, name path=f3] {4-x};
    % \addplot[draw=none, domain=1.5:3, samples=50, name path=f2] {1};
    
    % \addplot[brown, ultra thick] coordinates {(1,0) (1,2)};
    
    \coordinate (Bb) at (0,0);
    \coordinate (Bm) at (0,1);
    \coordinate (Bt) at (0,1.5);
    
    \coordinate (Cb) at (1,0);
    \coordinate (Cbf) at (0,0.14);
    \coordinate (Cm) at (1,1);
    \coordinate (Ct) at (1,1.5);
    \coordinate (Ctf) at (0,1.6);
    
    \end{axis}
    
    \begin{axis}[
        at={(6.5cm,-0.5cm)},
        % axis y line=middle,
        axis x line=middle,
        xlabel=$\kappa^{-1}$,
        % ylabel=$\beta$,
        xlabel style={
            font=\footnotesize,
            at={(axis description cs:0.5,0)},
            anchor=north
        }, % Smaller font for x-axis label
        ylabel style={font=\footnotesize}, % Smaller font for y-axis label
        tick label style={font=\footnotesize}, % Smaller font for tick labels
        xmin=0, xmax=2.5,
        ymin=0, ymax=1.8,
        domain=0:2.5,
        samples=100,
        width=8cm,
        height=5.7cm,
        xtick={0,1},
        ytick=\empty,
    ]
    
    \addplot[draw=none, name path=zero] {0};
    \addplot[draw=none, domain=0:2.5, samples=100, name path=top] {1.7 + 0.02*sin(deg(x*20))};
    \addplot[black, dashed, thick, domain=0:1, samples=50, name path=f] {1-x};
    
    \addplot[red, opacity=0.3] fill between[
        of=f and zero,
        soft clip={domain=0:1},
    ];
    
    \addplot[blue, opacity=0.3] fill between[
        of=f and top,
        soft clip={domain=0:1},
    ];
    
    \addplot[blue, opacity=0.3] fill between[
        of=zero and top,
        soft clip={domain=1:2.5},
    ];
    
    % % Define the piecewise function f(x) and name the paths
    % \addplot[black, thick, domain=0.5:1.5, samples=50, name path=f1] {x-0.5};
    % \addplot[black, dashed, thick] coordinates {(2.5,1) (3,1)};
    % % \addplot[black, thick, domain=1:3, samples=50, name path=f2] {1};
    % \addplot[black, thick] coordinates {(1.5,1) (2.5,1)};
    % \addplot[black, dashed, thick, domain=3:4, samples=50, name path=f3] {4-x};
    % \addplot[draw=none, domain=1.5:3, samples=50, name path=f2] {1};
    
    \coordinate (Db) at (0,0);
    \coordinate (Dm) at (0,1);
    \coordinate (Dt) at (0,1.5);
    
    \coordinate (Eb) at (0,0.25);
    \coordinate (Em) at (2.5,1);
    \coordinate (Et) at (0,1.7);
    
    \end{axis}

    \draw[black, dotted, thin] (At) -- (Bt);
    \draw[black, dotted, thin] (Am) -- (Bm);
    \draw[black, dotted, thin] (At) -- (Ctf);
    \draw[black, dotted, thin] (Ab) -- (Bb);
    \draw[black, dotted, thin] (Ab) -- (Cbf);
    
    \draw[black, dotted, thin] (Ct) -- (Dt);
    \draw[black, dotted, thin] (Cm) -- (Dm);
    \draw[black, dotted, thin] (Ct) -- (Et);
    \draw[black, dotted, thin] (Cb) -- (Db);
    \draw[black, dotted, thin] (Cb) -- (Eb);

    \draw[|-|, thick] (0,-1.5) -- (2.4,-1.5) node[midway, below=4pt]{(A)};
    \draw[|-|, thick] (3,-1.5) -- (5.4,-1.5) node[midway, below=4pt]{(B)};
    \draw[|-|, thick] (6.45,-1.5) -- node[midway, below=4pt]{(C.i)} (6.5,-1.5);
    \draw[|-|, thick] (6.5,-1.5) -- node[midway, below=4pt]{(C.ii)} (9.1,-1.5);
    \draw[|-|, thick] (9.1,-1.5) -- node[midway, below=4pt]{(C.iii)} (12.9,-1.5);

\end{tikzpicture}

    \caption{Phase diagram in the limit $n\to\infty$. The second diagram is obtained by ``zooming in'' the first diagram at $1-\Delta=1$, and the third diagram is from zooming in the second diagram at $\lambda=1$. The labels for the regions correspond to the regimes \ref{item:intro-regime1} through \ref{item:intro-regime3-3}. The \textcolor{blue}{blue region} is the low-temperature phase where the Gibbs measure is close to the uniform measure in Wasserstein distance, and the \textcolor{red}{red region} is the high-temperature phase where they are far apart. Solid black lines indicate a discontinuous (``first-order'') phase transition, and dashed lines indicate a continuous (``second-order'') phase transition. The point at which the line changes from solid to dashed is precisely where the parameter regime transitions from $\lambda<1$ to $\lambda=1$.}
    \label{fig:phase-diagram}
\end{figure}
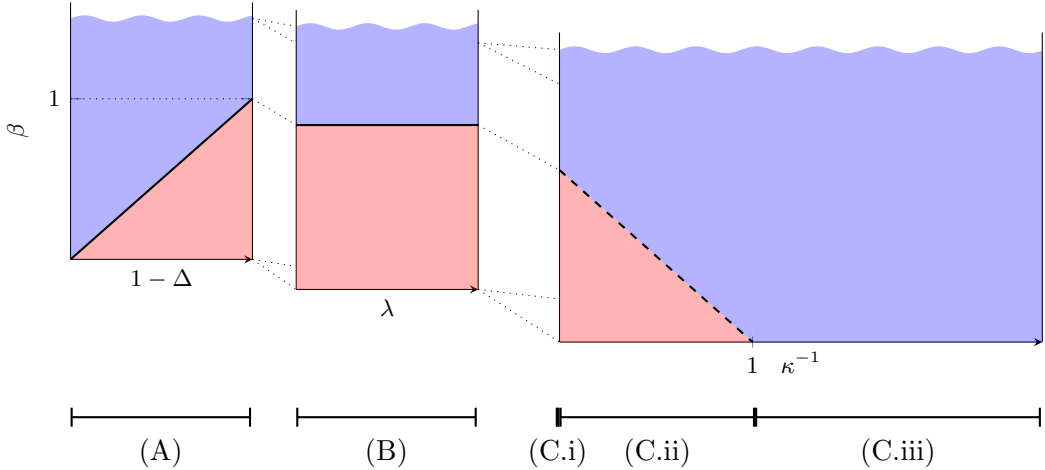

\paragraph{Phase transitions.} A particularly interesting regime in Theorem~\ref{thm:intro-gibbs-free-energy} is \ref{item:intro-free-regime2}, where the free energy density has a non-differential point at $\beta=1-\Delta$. At this point, the model exhibits what is called a \emph{first-order} or \emph{discontinuous phase transition} in the inverse temperature $\beta$ (see, e.g., \cite{chaikin1995principles,friedli2017statistical} for more background on this terminology). Here, we describe the phase transitions in our system from a different perspective; namely, we analyze how the Gibbs measure $\mu_{G_n,\beta}$ changes with $\beta$ as an element of the space of probability measures equipped with the $1$-Wasserstein metric. Interestingly, this viewpoint reveals phase transition phenomena observable in the seemingly bland regime \ref{item:intro-free-regime3}, depending on the additional proxy
\[
    \kappa_n:=(1-\lambda_n)\log\log n\,.
\]
This is reflected in the following formal result. 
% Suppose that $\Delta_n\to\Delta\in[0,1]$, $\lambda_n\to\lambda\in[0,1]$, and $\kappa_n\to\kappa\in[0,\infty]$. In addition to those enumerated in Theorem~\ref{thm:intro-gibbs-free-energy}, we further split the regime \ref{item:intro-free-regime3} into the following:.
% \begin{enumerate}[label=(\Alph*)]
%     \item\label{item:intro-regime1} $\Delta=1$, which then implies $\lambda=0$ and $\kappa=\infty$.
%     \item\label{item:intro-regime2} $\Delta\in[0,1)$ and $\lambda\in[0,1)$, and thus $\kappa=\infty$.
%     \item $\lambda=1$, which implies $\Delta=0$. This is further split into the following:
% \par\addvspace{\topsep}\noindent
% \begin{enumerate*}[label=(C.\roman*), ref=(C.\roman*), align=left, labelsep=0.2em]
%     \item\label{item:intro-regime3-1} $\kappa=\infty$,
%     \item\label{item:intro-regime3-2} $\kappa\in(1,\infty)$, and
%     \item\label{item:intro-regime3-3} $\kappa\in[0,1]$.
% \end{enumerate*}
% \par\addvspace{\topsep}
% \end{enumerate}
% \begin{enumerate}[label=(C.\roman*), ref=(C.\roman*), align=left]
%     \item\label{item:intro-regime3-1} $\kappa=\infty$,
%     \item\label{item:intro-regime3-2} $\kappa\in(1,\infty)$, and
%     \item\label{item:intro-regime3-3} $\kappa\in[0,1]$.
% \end{enumerate}

\begin{theorem}[Phase transitions]\label{thm:intro-phase-transition}
    Suppose that $\Delta_n\to\Delta\in[0,1]$, $\lambda_n\to\lambda\in[0,1]$, and $\kappa_n\to\kappa\in[0,\infty]$. Let $\beta>0$ be a constant and define
    \begin{equation}\label{eqn:intro-critical-temp}
        \beta_c=1-\Delta-\kappa^{-1}\in[-\infty,1]\,.
    \end{equation}
    Then we have the following.
    \begin{itemize}
        \item (Low temperature phase) If $\beta>\beta_c$, then
        \[
            \frac{1}{n}W_1(\mu_{G_n,\infty},\mu_{G_n,\beta})\pto0\,.
        \]
        In other words, the Gibbs measure $\mu_{G_n,\infty}$ is asymptotically almost surely close to the uniform measure over shortest path trees in $1$-Wasserstein metric under the Hamming metric\footnote{As a result, we have an approximate sampler for the entire low temperature phase in Wasserstein by sampling a uniformly random shortest path tree. This is the same notion of Wasserstein sampling as studied in, e.g., \cite{el2022sampling,alaoui2023sampling}. The same type of approximation also appears in \cite{eldan2018decomposition,austin2019structure}.}
        \item (High temperature phase) If $\beta<\beta_c$, then there is a constant $C_{\delta}>0$ depending only on $\delta=\beta_c-\beta$ such that
        \[
            \frac{1}{n}W_1(\mu_{G_n,\infty},\mu_{G_n,\beta})>C_\delta
        \]
        with probability at least $1-o(1)$.
    \end{itemize}
\end{theorem}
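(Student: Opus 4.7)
The plan is to reduce the Wasserstein bounds to concentration of a Hamming-Lipschitz statistic on the Gibbs measure. Define the bad-parent count $X(T):=|\{v\in\overline V_n:\mathsf{d}_G(1,\operatorname{parent}_T(v))\neq\mathsf{d}_G(1,v)-1\}|$; since spanning trees are rooted at $1$, the directed Hamming distance $|T_1\triangle T_2|$ equals twice the number of vertices on which the trees disagree, so $X$ is $\tfrac12$-Lipschitz under Hamming and vanishes precisely on $\operatorname{supp}(\mu_{G_n,\infty})$. The first task is to compute $q_n(\beta):=\mathbb{E}_{\mu_{G_n,\beta}}[X(T)]/[(1-\lambda_n)n]$ (the fraction of layer-$d_n^*$ vertices that are bad) using the $o(n)$-accurate formula for $\log Z$ promised in Section~\ref{sec:partition-function}, and to show concentration via a saddle-point / second-moment argument on this formula.

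For the computation of $q_n(\beta)$, one sums the Gibbs weight over candidate ``bad sets'' $S\subseteq\{v:\mathsf{d}_G(1,v)=d_n^*\}$. Each $v\in S$ contributes an entropy $\log\Theta((nq_n)^{1+\Delta_n})$ from its same-level parent choices and an energy $\beta(\log\log n)\cdot|S_v|_{\mathrm{eff}}$, where crucially $|S_v|_{\mathrm{eff}}=\Theta(1)$ (not the naive $\Theta(K/\kappa)$ with $K:=\log\log n$), because in the Gibbs measure the $\Theta(K/\kappa)$ layer-$(d_n^*+1)$ would-be children of $v$ repel from a bad parent and concentrate on ``good'' layer-$d_n^*$ neighbors. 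Simultaneously, each of the $\approx\lambda_n n$ layer-$(d_n^*+1)$ vertices loses entropy $\log(1-q)$ from its shrunken SP-parent pool. Extremizing the resulting expression in $q$ identifies the threshold $\beta_c=1-\Delta-\kappa^{-1}$ precisely: $q=0$ for $\beta>\beta_c$, and $q=1-[\kappa(1-\beta-\Delta)]^{-1}>0$ for $\beta<\beta_c$. In regime~(B) this jump is already visible in the first-order discontinuity of $\mathcal F_{\lambda,\Delta}$ at $\beta=1-\Delta$ from Theorem~\ref{thm:intro-gibbs-free-energy}, but in regime~(C) where $\mathcal F\equiv-1/\beta$ the transition lives inside an $o(nK)$ correction, so the refined formula is essential.

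With $q_n(\beta)$ in hand, the \textbf{low-temperature} bound ($\beta>\beta_c$) follows by coupling: given $T\sim\mu_{G_n,\beta}$ with $X(T)=o(n)$, independently replace each bad parent by a uniformly chosen SP-parent to produce $T^\star$; acyclicity is automatic (SP-parents strictly decrease graph distance), so $T^\star$ is a valid SPT, and $|T\triangle T^\star|\le 2X(T)=o(n)$. A local symmetry argument (using that the Gibbs measure conditioned on any vertex using an SP-parent is uniform over the SP-parent set) shows $\operatorname{Law}(T^\star)$ is within $o(1)$ total variation of $\mu_{G_n,\infty}$, so $n^{-1}W_1\to 0$. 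For the \textbf{high-temperature} bound ($\beta<\beta_c$), Kantorovich--Rubinstein duality applied directly to $X$ yields $n^{-1}W_1\ge C_\delta(1-\lambda)$, which suffices in regimes~(A) and~(B) where $1-\lambda>0$. In regime~(C) however $(1-\lambda_n)n=\Theta(n/K)$ and $X/n\to 0$ even for $\beta<\beta_c$, so I instead use the $O(K)$-Lipschitz indegree-squared statistic $f(T):=\sum_{v:\mathsf{d}_G(1,v)=d_n^*}(\operatorname{indeg}_T(v))^2$, whose expectation under $\mu_{G_n,\beta}$ exceeds that under $\mu_{G_n,\infty}$ by $\Omega(nK\cdot q/(1-q))$ across $\beta_c$ (bad vertices have indegree $0$, while good ones absorb the orphaned children); thus $f/K$ is $O(1)$-Lipschitz and certifies $n^{-1}W_1\ge C_\delta$.

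The primary obstacle is the refined partition-function analysis in regime~(C): because $\mathcal F$ is flat, $\beta_c=1-\Delta-\kappa^{-1}$ is an $o(1)$ effect inside the free-energy density and only emerges from the $o(n)$ correction to $\log Z$, whose derivation requires the self-consistent identification $|S_v|_{\mathrm{eff}}=\Theta(1)$. Secondary challenges are (i) establishing concentration of $X/n$ and $f/(nK)$ around their means under the correlated Gibbs measure---a second-moment estimate on $\log Z$ exploiting the near-product structure across layer-$d_n^*$ vertices should suffice---and (ii) constructing the indegree-squared witness in regime~(C), necessitated by the fact that the single-vertex parent marginals of $\mu_{G_n,\beta}$ and $\mu_{G_n,\infty}$ already agree, so the Wasserstein gap must be extracted from joint correlations.
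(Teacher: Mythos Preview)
Your overall architecture—compute the typical kernel size from the refined $\log Z$, then certify the Wasserstein gap via Kantorovich--Rubinstein duality on a Hamming-Lipschitz statistic—matches the paper's, but two of your key steps have concrete gaps.

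\textbf{Low-temperature coupling.} The claim that $\operatorname{Law}(T^\star)$ is within $o(1)$ total variation of $\mu_{G_n,\infty}$ via ``local symmetry'' does not go through as stated. Swapping $\mathsf{par}_T(v)$ between two elements $u,u'\in\mathsf{par}_G(v)$ preserves the Gibbs weight only when $\mathsf{d}_T(1,u)=\mathsf{d}_T(1,u')$; this fails whenever a bad vertex lies on the $T$-path from $u$ or $u'$ to the root, so conditional on the bad set $B$ the good-vertex parents are \emph{not} jointly uniform. Your argument would need to quantify this symmetry-breaking, which is essentially as hard as the original problem. The paper sidesteps this entirely: it shows (via the kernel analysis and the Gibbs variational principle) that after rounding to the SPT support the measure has entropy within $o(n)$ of the maximum $\sum_v\log|\mathsf{par}_G(v)|$, and then invokes the entropy--transport inequality (Corollary~\ref{cor:high-entropy-wasserstein}) to conclude $W_1=o(n)$ directly. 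No TV or exact-symmetry claim is needed.

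\textbf{High-temperature witness in regime (C).} Your indegree-squared statistic $f(T)=\sum_{v\in\Gamma_{d^*}}(\operatorname{indeg}_T(v))^2$ is \emph{not} $O(\log\log n)$-Lipschitz: flipping one edge changes one indegree by $\pm1$, hence changes $f$ by $2d\pm1$ where $d$ can be as large as the maximum degree $\Theta(\log n)$. So $f$ is only $O(\log n)$-Lipschitz, and since the expectation gap you compute is $\Theta(n\log\log n)$, duality yields only $W_1\ge\Omega(n\log\log n/\log n)=o(n)$—not enough. The paper instead uses
\[
f(T)=\sum_{v\in\Gamma_{d^*}}\Bigl|\,|\mathsf{ch}_T(v)|-\tfrac{N_{d^*+1}}{N_{d^*}}\Bigr|,
\]
which is exactly $1$-Lipschitz (each edge flip perturbs one $|\mathsf{ch}_T(v)|$ by $1$, and $|x\mapsto|x-c||$ is $1$-Lipschitz). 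Under $\mu_{G,\infty}$ this sum is $O(\sqrt{N_{d^*}(n-N_{d^*})})=o(n)$ by multinomial concentration, while under $\mu_{G,\beta}$ the small kernel forces an $\Omega(n)$ total deviation (kernel vertices are overloaded, non-kernel vertices are childless). This gives $W_1\ge C_\delta n$ uniformly across all regimes, including (C). Your indegree-squared idea is morally right—the gap \emph{does} live in the second moment of the children profile—but the $\ell_1$-deviation linearization is what makes the Lipschitz constant survive.
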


The proof can be found in Sections~\ref{sec:gibbs-low-phase} and \ref{sec:gibbs-high-phase}. Theorem~\ref{thm:intro-phase-transition} gives the precise value of the (inverse) \emph{critical temperature} $\beta_c$, above which (lower temperature) the system behaves like the uniform measure in Wasserstein sense. We apply Theorem~\ref{thm:intro-phase-transition} to each of the following regimes, which extends the list in Theorem~\ref{thm:intro-gibbs-free-energy} by splitting \ref{item:intro-free-regime3} depending on $\kappa$.
\begin{enumerate}[label=(\Alph*)]
    \item\label{item:intro-regime1} $\Delta=1$. This implies $\lambda=0$ and $\kappa=\infty$, so $\beta_c=0$. Thus, $\mu_{G_n,\beta}$ always behaves like the uniform measure.
    \item\label{item:intro-regime2} $\Delta\in[0,1)$ and $\lambda\in[0,1)$. Here, phase transition occurs at $\beta_c=1-\Delta$.
    \item\label{item:intro-regime3} $\lambda=1$. This gives $\Delta=0$, and we further split this into three sub-regimes depending on $\kappa$.
    \begin{enumerate}[label=(C.\roman*), ref=(C.\roman*), align=left]
        \item\label{item:intro-regime3-1} $\kappa=\infty$. Phase transition occurs at $\beta_c=1$.
        \item\label{item:intro-regime3-2} $\kappa\in(1,\infty)$. Phase transition occurs at $\beta_c=1-\kappa$.
        \item\label{item:intro-regime3-3} $\kappa\in[0,1]$. We always have $\beta_c\leq0$, so $\mu_{G_n,\beta}$ is always close to the uniform measure, similar to the regime \ref{item:intro-regime1}.
    \end{enumerate}
\end{enumerate}
We see that phase transition in Wasserstein space occurs in the regimes \ref{item:intro-regime2}, \ref{item:intro-regime3-1}, and \ref{item:intro-regime3-2}. It turns out that, the phase transition in \ref{item:intro-regime3-2} is a \emph{continuous phase transition}, as opposed to \ref{item:intro-regime2} and \ref{item:intro-regime3-1} which exhibit a \emph{discontinuous phase transition}. This is formalized in the following theorem.
\begin{theorem}
    Suppose that $\lambda_n\to\lambda\in[0,1]$, $\Delta_n\to\Delta\in[0,1]$, and $\kappa_n\to\kappa\in[0,\infty]$.
    \begin{itemize}
        \item (Discontinuous phase transition) In regimes \ref{item:intro-regime2} and \ref{item:intro-regime3-1}, there exists a universal constant $C>0$ such that for all constant $\delta>0$, we can find a constant $\beta>0$ with $|\beta-\beta_c|<\delta$ satisfying
        \[
            \frac{1}{n}W_1(\mu_{G_n,\beta},\mu_{G_n,\beta_c})\geq C
        \]
        with probability at least $1-o(1)$.

        \item (Continuous phase transition) In regime \ref{item:intro-regime3-2}, for any constant $\delta>0$, any $\beta>0$ with $|\beta-\beta_c|<\delta$ satisfies
        \[
            \frac{1}{n}W_1(\mu_{G_n,\beta},\mu_{G_n,\beta_c})\leq\kappa\sqrt{12\delta}
        \]
        with probability at least $1-o(1)$.
    \end{itemize}
\end{theorem}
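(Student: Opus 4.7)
The plan is to handle the two assertions separately, leveraging the structural understanding of the Gibbs measure $\mu_{G_n,\beta}$ established in Theorems~\ref{thm:intro-gibbs-free-energy} and~\ref{thm:intro-phase-transition}. For the discontinuous regimes \ref{item:intro-regime2} and \ref{item:intro-regime3-1}, I combine the two sides of Theorem~\ref{thm:intro-phase-transition} with a triangle-inequality argument; for the continuous regime \ref{item:intro-regime3-2}, I estimate $W_1$ directly through a marginal coupling.

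\textbf{Discontinuous case.} Fix $\delta>0$ and set $\beta_\pm = \beta_c \pm \delta/2$. The low-temperature half of Theorem~\ref{thm:intro-phase-transition} gives $\tfrac{1}{n}W_1(\mu_{G_n,\beta_+},\mu_{G_n,\infty}) = o_p(1)$, while the high-temperature half gives $\tfrac{1}{n}W_1(\mu_{G_n,\beta_-},\mu_{G_n,\infty}) > C_{\delta/2}$ with high probability. The crux is to show $\liminf_{\delta\to 0^+} C_{\delta/2} \geq C$ for a universal $C>0$, which I would deduce from the jump of the free-energy derivative at $\beta_c$: in regime \ref{item:intro-regime2}, differentiating the two pieces of Theorem~\ref{thm:intro-gibbs-free-energy}\ref{item:intro-free-regime2} shows that $\partial_\beta(\beta\mathcal{F}) = (\mathbb{E}_\beta[E(T)] - E_G)/n$ jumps from $0$ to $(1-\lambda-\Delta)/(1-\Delta) > 0$ across $\beta_c$, and an analogous $\Theta(1)$ jump holds for regime \ref{item:intro-regime3-1} via the refined partition-function asymptotics. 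Since each edge swap in a spanning tree changes $E(T)$ by $O(1)$, a $\Theta(1)$ excess energy per vertex translates into a $\Theta(1)$ lower bound on the normalized Hamming distance from $\mu_{G_n,\beta_-}$ to any shortest-path tree, hence to $\mu_{G_n,\infty}$. With this uniform lower bound in hand, $W_1(\mu_{\beta_+},\mu_{\beta_-}) \geq W_1(\mu_{\beta_-},\mu_\infty) - W_1(\mu_{\beta_+},\mu_\infty)$ gives $\tfrac{1}{n}W_1(\mu_{\beta_+},\mu_{\beta_-}) \geq C - o_p(1)$, so by the triangle inequality at least one of $\tfrac{1}{n}W_1(\mu_{\beta_\pm},\mu_{G_n,\beta_c})$ is at least $C/2$, and that side provides the $\beta$ we want.

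\textbf{Continuous case.} Here the leading-order free energy (regime \ref{item:intro-free-regime3}, $\mathcal{F}(\beta) = -1/\beta$) is smooth at $\beta_c = 1-\kappa^{-1}$, so the strategy is instead to prove a quantitative H\"older-$1/2$ continuity of $\beta \mapsto \mu_{G_n,\beta}$ in $W_1$. I would use the refined structural description of $\mu_{G_n,\beta}$ in regime \ref{item:intro-regime3-2}, in which (to leading order) each non-skeleton vertex $v$ independently chooses to sit at distance $d^*_n$ or $d^*_n+1$ with a probability $p_\beta$ depending on $\beta$ and on its local in-degree count; such a product-over-vertices description is precisely what produces a continuous transition. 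Then I would (i) express $p_\beta$ as an explicit smooth function of $\beta$ from the refined partition-function formula, (ii) bound $|p_\beta - p_{\beta_c}|$ by $\kappa\sqrt{12\delta}$ via a Taylor expansion about $\beta_c$ (the $\sqrt{\delta}$ dependence and the factor $\kappa$ arising from the effective inverse-temperature rescaling $\beta\log\log n$ divided by $(1-\lambda_n)\log\log n = \kappa_n$), and (iii) couple $\mu_{G_n,\beta}$ and $\mu_{G_n,\beta_c}$ vertex-by-vertex so that $\tfrac{1}{n}W_1(\mu_{G_n,\beta},\mu_{G_n,\beta_c}) \leq |p_\beta - p_{\beta_c}| + o_p(1)$.

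\textbf{Main obstacle.} In the discontinuous case, the delicate point is producing the universal lower bound $C$ (equivalently, that the Wasserstein drop is bounded below uniformly as $\beta\to\beta_c^-$): in regime \ref{item:intro-regime2} this follows cleanly from the piecewise-linear free energy, but in regime \ref{item:intro-regime3-1} the jump is invisible at leading order of $\mathcal{F}$ and must be extracted from the finer partition-function asymptotics. In the continuous case, the harder step is upgrading the vertex-marginal Hölder estimate into a genuine coupling of spanning trees: while each non-skeleton vertex's distance label is essentially independently distributed in the limit, coupling two \emph{trees} with matched parent-child edges so that the $W_1$-cost decomposes additively over vertices requires careful use of the local branching structure of $\overline{G}_n$ near each vertex, together with verifying that the constant $\sqrt{12}$ is correctly tracked through the Taylor expansion of the logistic-type response function $\beta\mapsto p_\beta$.
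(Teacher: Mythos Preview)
Your triangle-inequality shell with $\beta_\pm=\beta_c\pm\delta/2$ is exactly what the paper does, but your mechanism for producing the universal $C$ is broken. The claim ``each edge swap in a spanning tree changes $E(T)$ by $O(1)$'' is false: changing $\mathsf{par}_T(v)$ from $u$ to $u'$ shifts the distance of every descendant of $v$ by $d_T(1,u')-d_T(1,u)$, so a single swap can move $E(T)$ by $\Theta(n)$. Hence the energy functional is not Lipschitz in Hamming distance and a $\Theta(1)$ excess energy per vertex does \emph{not} force a $\Theta(n)$ Hamming gap to $\mu_{G,\infty}$. The paper avoids energy entirely and instead revisits the explicit $1$-Lipschitz witness used to prove the high-temperature half of Theorem~\ref{thm:intro-phase-transition}, namely the children-profile $f(T)=\sum_{v\in\Gamma_{d^*}}\bigl||\mathsf{ch}_T(v)|-N_{d^*+1}/N_{d^*}\bigr|$. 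The key observation you are missing is that in regimes \ref{item:intro-regime2} and \ref{item:intro-regime3-1} one has $\kappa_n\to\infty$, so for any fixed $\delta$ eventually $(\delta/2)\kappa_n\geq 7$, and in that range the children-profile computation gives the $\delta$-independent lower bound $\tfrac{1}{n}W_1(\mu_{G,\beta_-},\mu_{G,\infty})\geq \tfrac{3}{4}-\tfrac{1}{\sqrt{2}}-o_p(1)$. No free-energy derivative is needed.

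\textbf{Continuous case.} Your vertex-marginal coupling picture is not how the paper proceeds, and two of its steps are shaky: the Gibbs measure is a mixture over kernels rather than a product over vertices, and a Taylor expansion of any smooth $p_\beta$ gives $O(\delta)$, not the $O(\sqrt{\delta})$ you need to hit the stated constant. The paper's route is much more direct and explains the $\sqrt{\delta}$ cleanly. Since $\kappa<\infty$ in regime \ref{item:intro-regime3-2}, $\beta_c$ itself lies in the low-temperature phase \eqref{eqn:low-temp-phase}, so $\mu_{G,\beta_c}$ is already $o_p(n)$-close to $\mu_{G,\infty}$ in $W_1$. It therefore suffices to bound $W_1(\mu_{G,\beta},\mu_{G,\infty})$. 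For $\beta>\beta_c$ this is automatic; for $\beta_c-\delta<\beta<\beta_c$ one checks (via the same concave-upper-bound argument used in Lemma~\ref{lem:gibbs-small-kernel}) that the kernel is still almost full, $|\varphi(T)|\geq (1-\tfrac{\epsilon^2}{3}\kappa_n^{-1})N_{d^*}$ with high probability where $\epsilon=\kappa\sqrt{12\delta}$, and then reruns the entropy--transport argument from the proof of Theorem~\ref{thm:gibbs-low-phase}. The $\sqrt{\,\cdot\,}$ comes from the entropy--transport inequality (Corollary~\ref{cor:high-entropy-wasserstein}), not from any Taylor expansion.
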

The proof is presented in Section~\ref{sec:phase-transition}. See Figure~\ref{fig:phase-diagram} for the phase diagrams and Figure~\ref{fig:potential-well} for an overall picture of the potential well, which summarize our results. Figure~\ref{fig:potential-well} is in a similar spirit to Figure 1 of \cite{zdeborova2010generalization} used to illustrate ``state following'', which as they explain is closely related to the Franz--Parisi potential.  

\begin{figure}
    \centering

    \begin{tikzpicture}[scale=0.7]
    
%%%%%%%%%%%%% WIP %%%%%%%%%%%%%

% --- No PT ---
\begin{scope}[xshift=-7cm]

\node[anchor=south] at (2.8,4.1) {Regime (A)};
\draw[thick]
  (0,4) -- (0,0) -- (5.6,0) -- (5.6,4);

% LT (ground) state line
\draw[blue,dotted,thick] (1.1,0) -- (4.5,0) node[midway,below,blue]{LT (Ground) state};
% \draw[red,dash dot,thick] (0.32,1.6) -- (5.28,1.6);

% \node at (2,-0.5) {2nd order (WIP)};
\end{scope}

% --- First order (left, with curved edges) ---

\node[anchor=south] at (2.8,4.1) {Regimes (B) and (C.i)};
\draw[thick]
  (0,4) -- (0,2.4) .. controls (0,2) and (0.16,1.8) .. (0.32,1.6) -- (1.5,0.2) --
  (4.1,0.2) -- (5.28,1.6) .. controls (5.44,1.8) and (5.6,2) .. (5.6,2.4) -- (5.6,4);

% LT (ground) state line
\draw[blue,dotted,thick] (1.1,0.2) -- (4.5,0.2) node[midway,below,blue]{LT (Ground) state};

% HT state line
\draw[red,dash dot,thick] (0,2.4) -- (5.6,2.4) node[midway,below,red]{HT state};
\draw[red,dash dot,thick] (0.32,1.6) -- (5.28,1.6);

% Fill HT region
\fill[red,opacity=0.1]
  (5.6,2.4) .. controls (5.6,2) and (5.44,1.8) .. (5.28,1.6) --
  (0.32,1.6) .. controls (0.16,1.8) and (0,2) .. (0,2.4) -- cycle;

\fill[gray,opacity=0.2]
  (0.32,1.6) -- (1.5,0.2) -- (4.1,0.2) -- (5.28,1.6) -- cycle;

% Label
% \node at (2.8,-1) {1st order};

% --- Second order (right, flat bottom well) ---
\begin{scope}[xshift=7cm]
\node[anchor=south] at (2.8,4.1) {Regime (C.ii)};
\draw[thick]
  (0,4) -- (0,2.4) .. controls (0,2) and (0.16,1.8) .. (0.32,1.6) --
  (5.28,1.6) .. controls (5.44,1.8) and (5.6,2) .. (5.6,2.4) -- (5.6,4);

% LT (ground) state line
\draw[blue,dotted,thick] (0,1.6) -- (5.6,1.6) node[midway,below,blue]{LT (Ground) state};

% HT state line
\draw[red,dash dot,thick] (0,2.4) -- (5.6,2.4) node[midway,below,red]{HT state};
% \draw[red,dash dot,thick] (0.32,1.6) -- (5.28,1.6);

% Fill HT region
\fill[red,opacity=0.1]
  (5.6,2.4) .. controls (5.6,2) and (5.44,1.8) .. (5.28,1.6) --
  (0.32,1.6) .. controls (0.16,1.8) and (0,2) .. (0,2.4) -- cycle;

% \node at (2,-0.5) {2nd order (WIP)};
\end{scope}

\end{tikzpicture}

    \caption{Visualization of the potential well for different regimes. In the middle well, the high temperature state and the low temperature state are separated by a gray region with edges of constant slope $\beta_c^{-1}$, resulting in a discontinuous phase transition. For the rightmost one, the two states touch each other, leading to a continuous phase transition.}
    \label{fig:potential-well}
\end{figure}
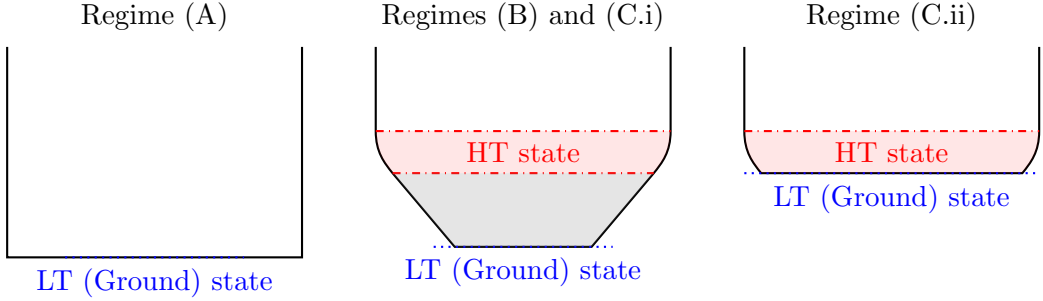

%% Maybe not crucial to mention?
% \begin{remark}[Consistency in zero-temperature limit]
% The ``zero-temperature limit'' refers to the setting where the inverse temperature $\beta$ is taken to $\infty$. Note that $\mu_{G,\beta} \to \mu_{G,\infty}$ as $\beta \to \infty$ for fixed $n$, but this does not immediately tell us what happens if we take $n \to \infty$ first and $\beta \to \infty$ second. 
% In our case, the entropy is consistent with the entropy of a uniformly random shortest path tree, and Theorem~\ref{thm:logz-phase-intro} indicates the measures are similar with respect to the $1$-Wasserstein metric throughout the low temperature regime.
% %Furthermore, we have shown that the KL divergence between the Gibbs measure and uniform measure on shortest path trees is $o(n)$ whenever we are in the low temperature phase $\beta > 1 - \Delta$, by \eqref{eqn:kl-to-zero}, which is another important way in which the measures begin to resemble the uniform measure on shortest path trees. %This indicates that even for finite $\beta > 1 - \Delta$, the finite temperature measure behaves somewhat similarly to the zero-temperature limit. 
% %(TODO: elaborate. Deficit in variational principle is KL in some direction. Put it in the theorem statement.)
% \end{remark}

\paragraph{The Franz--Parisi potential.} For a fixed inverse temperature $\beta>0$, suppose that the random spanning trees $T_n$ are (independently) drawn from the Gibbs measures $\mu_{G_n,\beta}$ conditioned on $\{G_n\}$. Recall the definition \eqref{eqn:defn-overlap} of the overlap function for trees denoted by $\mathsf{R}$. We define the \emph{relative Franz--Parisi potential} by
\[
    \mathcal{F}_{\beta}^{\text{FP}}(r):=\lim_{\epsilon\to0^+}\plim_{n\to\infty}\frac{1}{n}\left(-\frac{1}{\beta\log\log n}\log Z_{n,r,\epsilon}-E_{G_n}\right)
\]
where
\[
    Z_{n,r,\epsilon}:=\sum_{T:\mathsf{R}(T_,T_n)\in[r-\epsilon,r+\epsilon]}\exp\left(-\beta\log\log (n)\sum_{v\in\overline{V}_n}\mathsf{d}_T(1,v)\right)
\]
is the partition function for trees overlapping with $T_n$ by approximately $r$. We precisely computed the Franz--Parisi potential except for all regimes where it exists; Figure~\ref{fig:fpp} depicts a rough landscape of $\mathcal{F}_{\beta}^{\text{FP}}$ for different regimes. More precise statements computing $\mathcal{F}_{\beta}^{\text{FP}}$ and justifying the landscape can be found Section~\ref{sec:franz-parisi}. 

\paragraph{Replica symmetry.} One simple consequence of our analysis is that away from the critical temperature, the model is always \emph{replica symmetric} in the following sense: to leading order, the normalized overlap between two samples from the Gibbs measure concentrates around its expectation. A formal statement is given in the following theorem.
\begin{theorem}[Replica symmetry]\label{thm:intro-replica-symmetry}
    Let $\beta>0$ be fixed and assume $\lambda_n\to\lambda\in[0,1]$. Suppose that for given $\{G_n\}$ we independently sample $T_n,T_n'\sim\mu_{G_n,\beta}$. Then
    \[
        \mathsf{R}(T_n,T_n')\pto\begin{dcases}
            0&\text{if $\lambda\in\{0,1\}$ or $\beta<1$,}\\
            f(\lambda)&\text{if $\lambda\in(0,1)$ and $\beta>1$}
        \end{dcases}
    \]
    where
    \[
        f(\lambda)=(1-\lambda)\E[1/X\mid X>0]\,,\quad X\sim\Pois(\log(1/\lambda))\,.
    \]
\end{theorem}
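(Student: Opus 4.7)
The plan is to reduce to the zero-temperature limit and carry out an explicit computation using the product structure of the uniform measure on shortest-path trees, together with the Poisson BFS geometry that drives Theorem~\ref{thm:single-intro}. \emph{Reduction to $\beta=\infty$}: whenever $\beta>1$, we are in the low-temperature phase $\beta>\beta_c$ in every regime, since $\beta_c=1-\Delta\leq 1$ in regime \ref{item:intro-regime2} and $\beta_c\leq 1$ in \ref{item:intro-regime1} and all subregimes of \ref{item:intro-regime3}. Theorem~\ref{thm:intro-phase-transition} thus gives $\tfrac{1}{n}W_1(\mu_{G_n,\beta},\mu_{G_n,\infty})\pto 0$, and applying the optimal coupling to each sample independently produces $T_n^*,T_n'^{*}\sim\mu_{G_n,\infty}$ with $|T_n\triangle T_n^*|,\,|T_n'\triangle T_n'^{*}|=o_p(n)$. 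Since $\bigl||T_n\cap T_n'|-|T_n^*\cap T_n'^{*}|\bigr|\leq |T_n\triangle T_n^*|+|T_n'\triangle T_n'^{*}|$, the two overlaps agree up to $o_p(n)$, so it suffices to analyze two independent samples from $\mu_{G_n,\infty}$.

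Under $\mu_{G_n,\infty}$, a tree is obtained by letting each non-source vertex $v$ choose, independently and uniformly, a parent among its $D(v)$ graph-neighbors at distance $\mathsf{d}_G(1,v)-1$. Consequently, conditional on $G_n$, $\mathbb{E}[|T_n^*\cap T_n'^{*}|\mid G_n]=\sum_{v\neq 1}1/D(v)$, and because the per-vertex agreement events are mutually independent Bernoullis given $G_n$, Hoeffding's inequality provides concentration around this conditional mean. By Theorem~\ref{thm:single-intro}, all but $o(n)$ vertices are at distance $d_n^*$ or $d_n^*+1$. For $v$ at distance $d_n^*+1$, $D(v)$ is the number of graph-neighbors at distance $d_n^*$, which by a Poisson approximation for $\mathcal{G}(n,q_n)$ has mean $(1-\lambda)nq_n\to\infty$ (conditioned on $\{D(v)\geq 1\}$), so these terms contribute $o_p(n)$. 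For $v$ at distance $d_n^*$, the depth-$(d_n^*-1)$ shell has size $(nq_n)^{d_n^*-1}=\log(1/\lambda)/q_n$, so $D(v)$ is approximately $\Pois(\log(1/\lambda))$ conditioned on being positive (the conditioning reflecting that $v$ lies at distance \emph{exactly} $d_n^*$). An asymptotic-independence argument across $v$ — standard in the sparse regime since distinct $v$ have essentially disjoint depth-$(d_n^*-1)$ neighborhoods — together with the law of large numbers yields
\[
\tfrac{1}{n}\sum_{v\,:\,\mathsf{d}_G(1,v)=d_n^*}\tfrac{1}{D(v)}\;\pto\;(1-\lambda)\,\mathbb{E}[1/X\mid X>0]\;=\;f(\lambda).
\]

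The vanishing regimes then follow from the same computation: when $\lambda=0$, $\mathbb{E}[1/X\mid X>0]\to 0$ as $\log(1/\lambda)\to\infty$; when $\lambda=1$, the prefactor $1-\lambda$ vanishes while the distance-$(d_n^*+1)$ contribution is already $o_p(1)$. The main technical obstacle is the remaining case $\beta<1$, where $\mu_{G_n,\beta}$ is \emph{not} Wasserstein-close to $\mu_{G_n,\infty}$ and, moreover, does not factor as a product over vertex parent choices. I would handle it via the refined partition-function estimate from Section~\ref{sec:partition-function}: showing that a typical $T\sim\mu_{G_n,\beta}$ still has $\omega(1)$ candidate parents per vertex in each occupied ``finite-temperature shell'', which forces the per-vertex marginal probability of parent agreement between independent samples to be $o(1)$. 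The delicate point is upgrading this expectation bound to convergence in probability despite the absence of exact independence, which I expect to handle by a local-neighborhood decoupling / second-moment argument on the edge-marginals of $\mu_{G_n,\beta}$.
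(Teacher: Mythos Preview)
Your treatment of the low-temperature case $\beta>1$ is essentially the paper's argument: reduce to $\mu_{G_n,\infty}$ via the Wasserstein bound in Theorem~\ref{thm:intro-phase-transition}, exploit the product structure of the uniform measure on shortest-path trees, and compute the contribution shell by shell using the Poisson BFS asymptotics. This part is fine.

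The gap is in the remaining case (``$\lambda\in\{0,1\}$ or $\beta<1$''). Your plan --- bound the marginal probability of parent agreement and upgrade via a second-moment / local-decoupling argument on edge-marginals of $\mu_{G_n,\beta}$ --- is not what the paper does, and it is unclear how you would get the needed two-point correlation bounds for $\mu_{G_n,\beta}$ directly. The paper instead uses a structural result that holds for \emph{all} $\beta>0$, not just the low-temperature phase: Corollary~\ref{cor:gibbs-w1} shows $W_1(\tilde\mu_{G,\beta},\mu_{G,\beta})=o_p(n)$, where $\tilde\mu_{G,\beta}$ is an explicit mixture $\sum_A \mu_{G,\beta}(\varphi(T)=A)\cdot \mu_{G,\infty}|_A$ of \emph{product measures} indexed by kernels $A$. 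So one can replace both samples by draws from $\tilde\mu_{G,\beta}$, condition on the kernel $A$ (restricted to $|A|\ge m_\ell$ by Lemma~\ref{lem:large-kernel}), and recover exact conditional independence of the parent choices. The overlap then reduces to bounding
\[
p_{G,A}=\frac{1}{|\overline V|-1}\sum_{v}\frac{1}{|\mathsf{par}_A(v)|}
\]
uniformly over all admissible kernels $A$, which is done by splitting into $v\in A$ (handled by $|A|=o(n)$ in the high-temperature phase via Lemma~\ref{lem:gibbs-small-kernel}, or by the Poisson parent count when $\lambda\in\{0,1\}$) and $v\notin A$ (where $|\mathsf{N}_A(v)|\sim\Binom(|A|,q)$ with diverging mean). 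The missing ingredient in your proposal is this kernel-conditioning device: it is what converts the intractable high-temperature Gibbs measure into something with a product structure, and without it the independence you need is not available.
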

This is proved in Section~\ref{sec:replica-symmetry}. We remark that the expected overlap is nonzero if $\lambda\in(0,1)$ and $\beta>1$, which corresponds to the low-temperature Regime (B) in Figure~\ref{fig:fpp}.

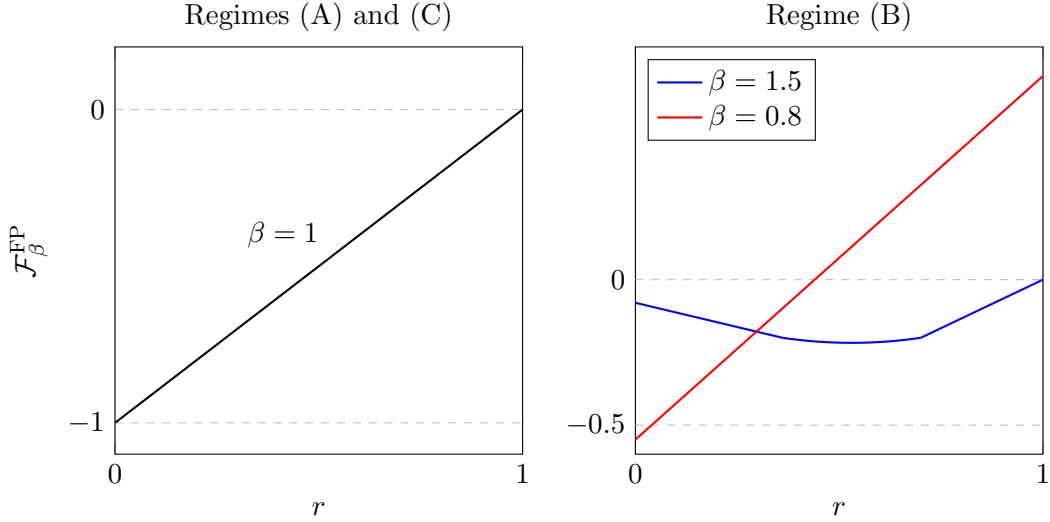
\begin{figure}
    \centering

    \begin{tikzpicture}
\begin{groupplot}[
  group style = {group size=2 by 1, horizontal sep=1.5cm},
  width=7cm, height=7cm,
  xmin=0, xmax=1,
  axis lines=box,
  xtick={0,1},
  xlabel={$r$},
  ytick=\empty,
  grid=both,
  % bold midline y=0
  title style={yshift=-4pt}
]

% --- Axis 1 ---
\nextgroupplot[
  title={Regimes (A) and (C)},  
  ymin=-1.1, ymax=0.2,
  ylabel={$\mathcal{F}_{\beta}^{\text{FP}}$},
  extra y ticks={-1,0},
  extra y tick style={
    major tick length=0pt,
    grid style={dashed}
  },
  legend pos=north west
]
\pgfmathsetmacro{\b}{1}
\addplot[thick, domain=0:1, samples=50] {(x-1)/\b};
\node[anchor=west] at (axis cs:0.3,-0.4) {$\beta=1$};
% \addplot[domain=0:1, samples=200] {0.7*sin(deg(6*(x-0.5)))};

% --- Axis 2 ---
\pgfmathsetmacro{\lamb}{0.3}
\pgfmathsetmacro{\bl}{1.5}
\pgfmathsetmacro{\bh}{0.8}
\nextgroupplot[
  title={Regime (B)},
  ymin=-0.6, ymax=0.8,
  % xtick pos=bottom,
  % extra x ticks={\lamb*ln(1/\lamb),1-\lamb},
  % extra x tick style={
  %   grid style={draw=none},  % no vertical grid line for this tick
  %   tick style={draw=none}  % keep the tick mark (default, but explicit)
  % },
  extra y ticks={-0.5,0},
  extra y tick style={
    major tick length=0pt,
    grid style={dashed}
  },
  legend pos=north west
]
\pgfmathsetmacro{\lamb}{0.3}
\pgfmathsetmacro{\bl}{1.5}
\pgfmathsetmacro{\bh}{0.8}
\addplot[blue, thick, domain=1-\lamb:1, samples=2, forget plot] {-1/\bl+x/\bl};
\addplot[blue, thick, domain=\lamb*ln(1/\lamb):1-\lamb, samples=50, forget plot] {-\lamb/\bl+0.6*(x-1+\lamb)*(x-\lamb*ln(1/\lamb))};
\addplot[blue, thick, domain=0:\lamb*ln(1/\lamb), samples=2] {(-\lamb+(\bl-1)*\lamb*ln(1/\lamb))/\bl-(\bl-1)*x/\bl};
\addlegendentry{$\beta=\bl$}
\addplot[red, thick, domain=0:1, samples=2] {1-\lamb-1/\bh+x/\bh};
\addlegendentry{$\beta=\bh$}
% \addplot[domain=0:1, samples=200] {0.6*cos(deg(4*x))};

% --- Axis 3 ---
% \nextgroupplot[title={Axis 3}]
% \addplot[domain=0:1, samples=200] {x^3 - 0.5*x};

\end{groupplot}
\end{tikzpicture}

    \caption{A plot of the Franz--Parisi potential against the correlation $r\in[0,1]$ for different regimes. The $y$-axes (for $\mathcal{F}_\beta^{\text{FP}}$) are scaled differently to better visualize the landscape. For the right plot we choose $\lambda=0.3$ (and $\Delta=0$). A low temperature phase $\beta=1.5$ in the right plot has its unique \emph{quasiconvex} landscape which is quite different from the other cases which simply increase in $r$.}
    \label{fig:fpp}
\end{figure}

\begin{figure}[p]

    \centering
    % First full-width figure
    \begin{subfigure}{\textwidth}
 %       \centering
 \hspace*{1.75cm}
        \includegraphics[width=0.9\textwidth]{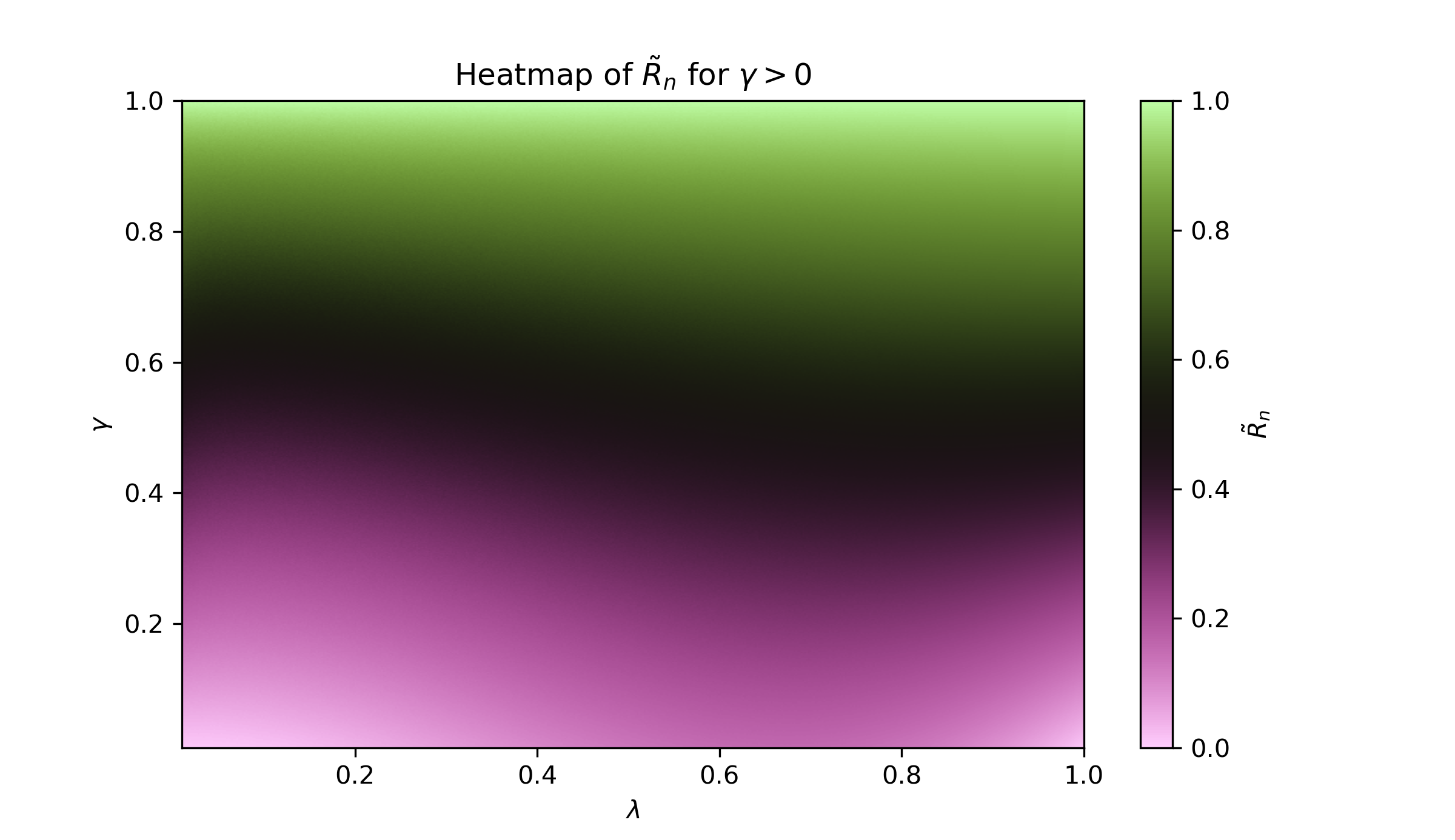}
       % \caption{$\tilde{R}_n$ for $\gamma>0$.}
        \label{fig:heatmap_gamma}
    \end{subfigure}

    %\vspace{-2em} % cut some extra white space
    %\vspace{1em} % vertical space between the two figures
    
    % Second full-width figure
    \begin{subfigure}{\textwidth}
%        \centering
 \hspace*{1.75cm}
        \includegraphics[width=0.9\textwidth]{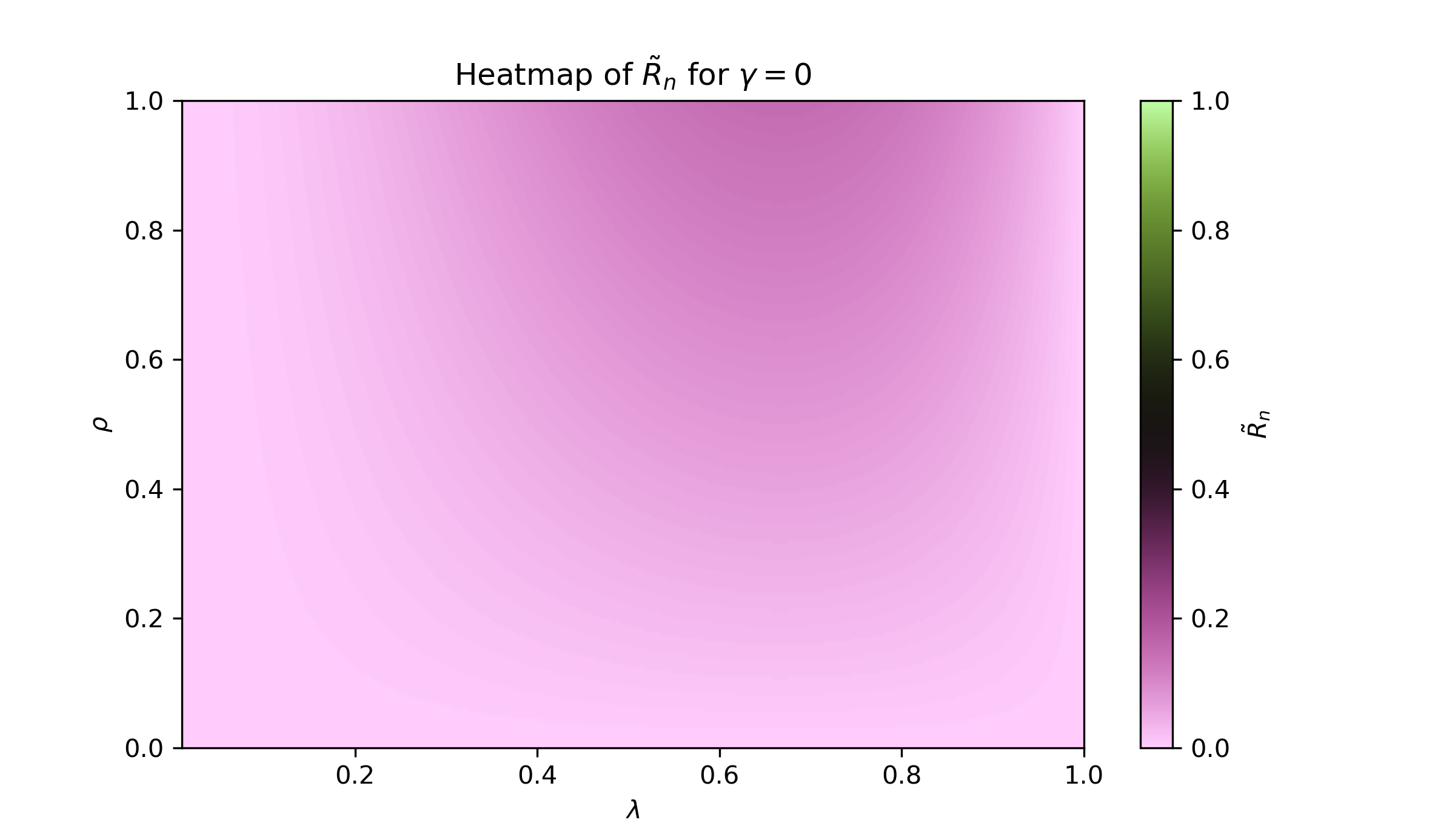}
      %  \caption{$\tilde{R}_n$ for $\gamma=0$.}
        \label{fig:heatmap_rho}
    \end{subfigure}
    \vspace{-1em}
    \caption{Uniformly random spanning tree overlap $\tilde{R}_n$ in its two regimes. Top: $\gamma > 0$, so $\rho = 1$. Bottom: $\gamma = 0$, so $\rho \in [0,1]$. The bottom of the top figure and top of the bottom figure coincide.} %(TODO: remake plot with $\lambda$ as the label and even higher precision?)}
    \label{fig:both_heatmaps}
\end{figure}

% Ensure subsequent text starts on a new page
%\clearpage
\subsection{Technical overview}\label{sec:tech-overview}

%TODO: re-add the figures after revising

In this section, we give a high level overview of our analysis along with some informal arguments. Throughout this section, in the interest of simplicity, we ignore any subtleties that might result from the graph $G_n$ being disconnected. 

% TODO: update

\begin{figure}
    \centering
    \includegraphics[width=0.5\linewidth]{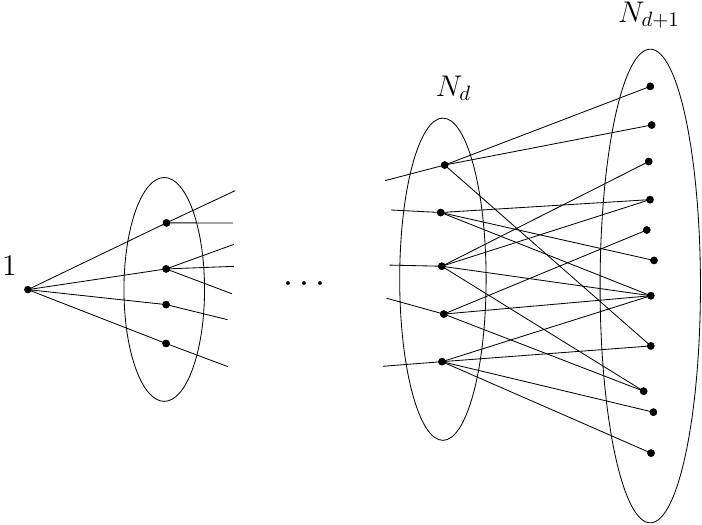}
    \caption{An illustration of the BFS process.}
    \label{fig:bfs}
\end{figure}

% \begin{wrapfigure}{r}{0.25\textwidth} %this figure will be at the right
%     \centering
%     \includegraphics[width=0.25\textwidth]{diagrams/erg}
% \end{wrapfigure}

\subsubsection{Asymptotics of correlated random graphs}
\paragraph{The distribution of distances and BFS.} 
Before analyzing the correlated graph setting relevant to the OGP, it helps to carefully understand what happens with a single graph.
%We start with an explanation of how the single graph analysis works, which is conceptually related to the analysis of the diameter of random graphs 
As illustrated in Figure~\ref{fig:bfs}, we sample/reveal the Erd\"os--R\'enyi graph $G=G_n$ progressively via Breadth-First Search (BFS), as in the literature regarding the diameter of random graphs \cite{chung2001diameter,riordan2010diameter}. 
%, which directly mirrors the process by which Dijkstra's algorithm explores the graph. 
Conditioned on the set of vertices of distance up to $d$ from a fixed source vertex $1$, we next reveal which vertices are distance $d+1$. The edges between the remaining vertices and the vertices of distance $d$ are sampled independently of all the observed edges/non-edges so far. Therefore, the number $N_d$ of the vertices of distance $d$ follows the stochastic process
\begin{equation}\label{eqn:nd_process}
    \begin{split}
        N_0 &= 1\\
        N_1 &\sim\Binom(n-N_0, q_n)\\
        &\vdots\\
        N_{d} &\sim\Binom\left(n-\sum_{i=0}^{d-1}N_i, 1-(1-q_n)^{N_{d-1}}\right)\\
        &\vdots
    \end{split}
\end{equation}
which terminates once the BFS has exhausted its connected component.
%Binomial distributions are well understood, and the concentration of measure easily follows from the Bernstein inequalities or the Chernoff bound (Lemma~\ref{lem:chernoff}).

Based on well-known properties of sparse random graphs, we know that for small depths $d$ the graph is locally tree-like, so $N_d$ will concentrate around $(nq_n)^d$. Let's review why this occurs, so we can understand what happens at large depths. Suppose that $N_k$ are given for all $k<d$ and it is known that $N_k\approx(nq)^k$. Then from \eqref{eqn:nd_process}, $N_d$ would concentrate around
\begin{equation}\label{eqn:nd-conc}
    N_d\approx\left(n-\sum_{i=0}^{d-1}N_i\right)\left(1-(1-q_n)^{N_{d-1}}\right)\,.
\end{equation}
If $d$ is small enough, then we should have %$n - \sum_{i = 0}^{d - 1} N_i \approx n$
\begin{equation}\label{eqn:n-apx}
    n-\sum_{i=0}^{d-1}N_i\approx n-\sum_{i=0}^{d-1}(nq_n)^{i}\approx n
\end{equation}
and
\begin{equation}\label{eqn:q-apx}
    1-(1-q_n)^{N_{d-1}}\approx 1-e^{-q_n(nq_n)^{d-1}}\approx q_n(nq_n)^{d-1}
\end{equation}
so $N_d\approx(nq_n)^d$. 

However, this starts to fail when we run low on unexplored vertices, i.e. when $(nq_n)^d$ is close to $n$, equivalently when $d$ is close to $\ell_n^*=\frac{\log n}{\log(nq_n)}$. Suppose now that we have reached $d=d_n^*$. By the definition \eqref{eqn:dstar} of $d_n^*$, we still have $N_{d_n^*-1}\approx(nq_n)^{d_n^*-1}<\frac{n}{(\log\log n)^2}=o(n)$, so \eqref{eqn:n-apx} still holds true. 
On the other hand, the last approximation in \eqref{eqn:q-apx} fails if $q_n(nq_n)^{d_n^*-1}=\Omega(1)$. 
%Instead, the magnitude of the LHS of \eqref{eqn:q-apx} in this case is captured by $\lambda_n$ defined in \eqref{eqn:lambda}, which is precisely
This is the case where, using the notation of \eqref{eqn:lambda}, we have that 
$\lambda_n=e^{-q_n(nq_n)^{d_n^*-1}} = \Omega(1)$,
and $N_{d_n^*}$ occupies a nontrivial portion of the vertices. This causes \eqref{eqn:n-apx} to now fail, and instead we
%Finally, we consider what happens at the next layer and 
observe that if $N_{d_n^*}$ does not exhaust most of the remaining vertices, then $N_{d_n^*+1}$ will.
The formal version of this analysis is detailed in Proposition~\ref{prop:conc}; it requires some care since the number of layers $d_n^*$ is diverging with $n$, and we need fairly precise bounds to enable the analysis in what follows.
%; what happens at these last layers will be very important in what follows.%; these two random variables will be our primary interest.

%The main technicality in formalizing this argument is to obtain tight concentration bounds for \eqref{eqn:nd-conc} all the way up to $d_n^*$. There are standard methods for tail bounds of a single binomial random variable, but since $d_n^*$ is unbounded, we should take more care of the tail bounds and probabilities so that they hold simultaneously. The argument is detailed in Proposition~\ref{prop:conc}.

% \begin{wrapfigure}{r}{0.25\textwidth} %this figure will be at the right
%     \centering
%     \includegraphics[width=0.2\textwidth]{diagrams/bad}
% \end{wrapfigure}

\begin{figure}
    \centering
    \includegraphics[width=0.5\linewidth]{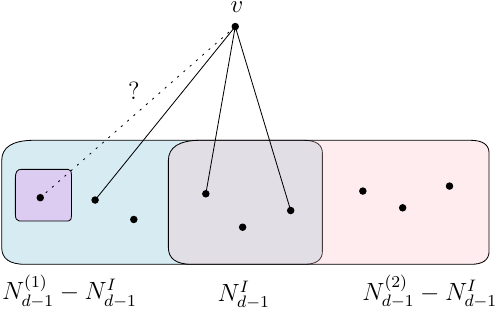}
    \caption{The ``adjacency probabilities'' problem in correlated graphs. The blue region in the left and the pink region in the right are those explored at step $d-1$ for $G_n^{(1)}$ and $G_n^{(2)}$, respectively. If the vertices in the purple box in the blue region is previously explored in $G_n^{(2)}$, then it is already known that they are not connected to the new vertex $v$.}
    \label{fig:bad}
\end{figure}

\paragraph{Correlated graphs analysis.} We now build upon the ideas above to analyze the more difficult setting of two $\rho_n$-correlated graphs $G_n^{(1)}$ and $G_n^{(2)}$.
%The analysis is more complicated with two graphs, even though the idea is broadly similar. 
We are now interested in not only $N_d^{(1)}$ and $N_d^{(2)}$, which denote $N_d$ for $G_n^{(1)}$ and $G_n^{(2)}$ respectively, but also $N_d^I$ --- the set of vertices that have distance $d$ in both. 
%Similar to the single graph case, s
Suppose that we are running a BFS on both graphs simultaneously up to distance $d-1$, and we are currently about to explore the vertices at distance $d\leq d_n^*$. By \eqref{eqn:n-apx}, most of the vertices are still unexplored (in both graphs) and among those we are searching for vertices at distance $d$ in both $G_n^{(1)}$ and $G_n^{(2)}$. We can categorize each vertex $u$ explored in the previous step $d-1$ into the three categories:
\begin{enumerate}[label=(\alph*)]
    \item\label{item:only-1} $u$ has depth $d-1$ in $G_n^{(1)}$ but not in $G_n^{(2)}$, counted in $N_{d-1}^{(1)}-N_{d-1}^I$;
    \item\label{item:only-2} $u$ has depth $d-1$ in $G_n^{(2)}$ but not in $G_n^{(1)}$, counted in $N_{d-1}^{(2)}-N_{d-1}^I$;
    \item\label{item:both} $u$ has depth $d-1$ in both $G_n^{(1)}$ and $G_n^{(2)}$, counted in $N_{d-1}^I$.
\end{enumerate}
For each unexplored vertex $v$, we would like to express the probability of $v$ being counted towards $N_d^I$ in terms of $(N_{d-1}^{(1)}, N_{d-1}^{(2)}, N_{d-1}^I)$. Marginally, any two vertices are adjacent only in $G_n^{(1)}$ with probability $q_n(1-\rho_n)$, only in $G_n^{(2)}$ with the same probability, and both in $G_n^{(1)}$ and $G_n^{(2)}$ with probability $q_n\rho_n$. It is tempting to apply these ``adjacency probabilities'' to $u$ and $v$, but 
%here is a catch: 
these adjacency probabilities might be \emph{incorrect} if $u$ is in \ref{item:only-1} or \ref{item:only-2}! For instance, if $u$ in \ref{item:only-1} has depth $d-1$ in $G_n^{(1)}$ but $d-2$ in $G_n^{(2)}$, then the fact that $v$ is unexplored already implies that $u$ and $v$ are not adjacent in $G_n^{(2)}$. This in turn implies that, due to correlation, $u$ and $v$ are unlikely to be adjacent in $G_n^{(1)}$ either. See Figure~\ref{fig:bad} for a visualization of this issue.

To handle this issue, we observe that those satisfying these adjacency probabilities are in the following two subsets of \ref{item:only-1} and \ref{item:only-2}, along with \ref{item:both}:
\begin{enumerate}[label=(\Alph*)]
    \item\label{item:only-1u} $u$ has depth $d-1$ in $G_n^{(1)}$ and is \emph{unexplored} in $G_n^{(2)}$;
    \item\label{item:only-2u} $u$ has depth $d-1$ in $G_n^{(2)}$ and is \emph{unexplored} in $G_n^{(1)}$.
\end{enumerate}
A key claim is that most of the vertices in \ref{item:only-1} and \ref{item:only-2} are in fact \ref{item:only-1u} and \ref{item:only-2u}, i.e., it is very unlikely that a vertex explored in $G_n^{(1)}$ at step $d-1$ happens to have already been explored in $G_n^{(2)}$ earlier. Since those ``bad'' vertices not conforming to the adjacency probabilities always negatively impact $N_d^I$, this claim allows us to stochastically squeeze $N_d^I$ between two random variables which are easier to establish concentration bounds. This argument is rigorously given in Proposition~\ref{prop:rhoconc}.

There are two contributions to $N_d^I$ coming from the previous step. One is \emph{contribution by correlation}, stemming from common edges of $G_n^{(1)}$ and $G_n^{(2)}$ connected to $N_{d-1}^I$, where each common edge occurs with probability $\approx q_n\rho_n$. The other is \emph{contribution by randomness}, which appears when $v$ happens to be adjacent to depth $d-1$ vertices in $G_n^{(1)}$ and in $G_n^{(2)}$ through separate edges; such pair of edges occurs with probability $\approx q_n^2$. When $\rho_n$ is large enough and $N_{d-1}^I$ constitute a non-negligible portion of $N_{d-1}^{(1)}$ and $N_{d-1}^{(2)}$, the contribution by correlation is non-negligible and yields $N_d^I\approx(nq_n\rho_n)^d$ as long as $d<d_n^*$ and $\rho_n^d=\Theta(1)$. However, if $\rho_n^d=o(1)$, then the contribution by correlation diminishes, resulting in \emph{correlation decay}. Hence, the asymptotics of $\gamma_n=\rho_n^{d_n^*}$ defined in \eqref{eqn:defgamma} plays an important role in describing what happens at step $d=d_n^*$. Indeed, Theorem~\ref{thm:correlated-intro} states that
\[
    \frac{N_{d_n^*}^I}{n}\pto1-2\lambda+\lambda^{2-\gamma}\,.
\]
When $\gamma=1$, the RHS is $1-\lambda$ which equals the single graph asymptotics in Theorem~\ref{thm:single-intro}, demonstrating the dominance of the contribution by correlation. On the other hand, when $\gamma=0$, the RHS is precisely $(1-\lambda)^2$, showing that the correlation decays completely and the contribution by randomness dominates.

\subsubsection{Overlap of shortest path trees for correlated graphs}\label{sec:to-overlap}

%TODO: why does it mean no overlap gap/overlap concentrates.

% Now we turn to our main objects of interest, shortest path trees. For simplicity, in this section, we assume that the graph $G=G_n$ is connected.

\paragraph{From distance asymptotics to shortest path trees.} Once we know the distance $\mathsf{d}_G(1,v)$ for all $v\in V$, then every shortest path tree can be constructed by picking for each $v\in V$ its \emph{parent} $u$ in the neighborhood of $v$ satisfying $\mathsf{d}_G(1,u)+1=\mathsf{d}_G(1,v)$. We denote the set of parent candidates of $v$ by
\[
    \mathsf{par}_G(v):=\{u\in V:\text{$u$ and $v$ are adjacent in $G$ and $\mathsf{d}_G(1,u)+1=\mathsf{d}_G(1,v)$}\}\,.
\]
Then the set of shortest path trees can be identified with the Cartesian product
\begin{equation}\label{eqn:uniform-support}
    \prod_{v\in V\setminus\{1\}}\mathsf{par}_G(v)\,.
\end{equation}
As a consequence, the uniform measure over the shortest path trees is simply a vertex-wise product measure. When we have two graphs $G^{(1)}$ and $G^{(2)}$ sharing the same vertex set, this means that the optimal coupling (under the normalized Hamming metric) of the uniform measures is the product of vertex-wise couplings, so the overlap is given by
\[
    \tilde{R}_n=\frac{1}{n}\sum_{v\in V\setminus\{1\}}\frac{|\mathsf{par}_{G^{(1)}}(v)\cap\mathsf{par}_{G^{(2)}}(v)|}{\max(|\mathsf{par}_{G^{(1)}}(v)|,|\mathsf{par}_{G^{(2)}}(v)|)}\,.
\]
By our distance asymptotics (Theorem~\ref{thm:single-intro} and the first part of Theorem~\ref{thm:correlated-intro}), we only need to include in the sum the vertices at distance $d_n^*$ or $d_n^*+1$ in both graphs. These vertices fall under one of the four subsets depending on the precise pair of distances (e.g., distance $d_n^*$ in both graphs, $d_n^*$ in $G^{(1)}$ and $d_n^*+1$ in $G^{(2)}$, and so on), and we know the asymptotic proportion of these subsets from our distance asymptotics. Then for each subset we apply concentration inequalities to control the sum. The details can be found in Section~\ref{sec:overlap-opt}.

\paragraph{From overlap concentration to the absence of OGP.} A result of this analysis, the second part of Theorem~\ref{thm:correlated-intro}, implies that the overlap concentrates for any limit values of the proxies $\lambda_n\to\lambda$, $\rho_n\to\rho$, and $\gamma_n\to\gamma$. Moreover, the probability limit of the overlap is a continuous function of $\rho$ or $\gamma$ onto $[0,1]$ for any fixed $\lambda$. This means that for any sequence of $\alpha_n$ and any open interval $\mathcal{I}\subseteq[0,1]$, by choosing a suitable sequence of $\rho_n$, $\mathcal{I}$ contains the overlap infinitely often with probability 1. As noted in Corollary~\ref{cor:no-eogp}, this violates the ensemble OGP condition typically needed to prove the nonexistence of stable algorithms.

%To make this more precise, we consider the following usual setting. 
\begin{remark}[Extension to non-convergent case]
Using a compactness argument, we can also formally rule out the possibility of OGP without actually requiring the $\lambda_n$ to converge. 
To elaborate,
we have a triangular array\footnote{We use the term ``triangular array'' to remind the reader that it is a sequence for each fixed $n$ and we will eventually send $n\to\infty$.} of $\mathcal{G}(n,q_n)$ graphs $G=G_{n}^{(0)},G_{n}^{(1)},\cdots,G_{n}^{(T)}$, where each pair of vertices is labeled from $1$ to $T=\binom{n}{2}$, and $G_{n}^{(t)}$ is obtained from $G_{n}^{(t-1)}$ by resampling a pair labeled $t$. The ensemble OGP, if it held, would tell us that there is an interval $\mathcal{I}\subseteq(0,1)$ such that the overlap $R_n^{(t)}$ of uniformly random shortest path trees of $G$ and $G_n^{(t)}$ satisfies $R_n^{(t)}\notin \mathcal{I}$ for all $1\leq t\leq T$ with high probability. Our result implies that this never happens. Indeed, suppose for contradiction that we do have the ensemble OGP. Then there exists a subsequence of indices where the OGP holds almost surely, and from that we can find a further subsequence such that $\lambda_{k_n}\to\lambda$ converges as $n\to\infty$. Assume for instance that $\lambda\in\{0,1\}$. Then for an arbitrary $r\in\mathcal{I}$, by setting
\[
    t_n=\left\lfloor \binom{n}{2}\cdot\frac{\log(1/r)}{d_n^*}\right\rfloor
\]
we have $R_{k_n}^{(t_{k_n})}\pto r$, so an overlap contained in $\mathcal{I}$ exists infinitely often almost surely, a contradiction.
\end{remark}
\subsubsection{Gibbs measures over spanning trees}\label{sec:to-gibbs}

% TODO: REORGANIZE AS (1) Determining the main contributions to energy and entropy, (2) Resulting precise analysis of the Gibbs measures, Franz-Parisi potential, etc. I.e. (1) say WHAT is important, include an explicit version of the 1-dimensional optimization problem, and (2) explain the CONSEQUENCES.

\paragraph{Computation of the log partition function.} To analyze the Gibbs measures, we need to obtain a very precise understanding of the log partition function / free energy of the model. 
A basic principle of statistical physics is that the Gibbs measure is always obtained by maximizing entropy while keeping the average energy fixed. This can be made precise via the Gibbs variational principle (cf. Bogolyubov inequality) \cite{ellis2007entropy}, which tells us that
\begin{equation}\label{eqn:gibbs-var-intro}
    \log Z_{G,\beta} = \sup_{\nu} \left[ H_{\nu}(T)  - \beta \log\log(n) \E_{\nu}\left[ \sum_v \mathsf{d}_T(1,v) \right]\right]
\end{equation}
where the supremum ranges over all probability measures; in this formulation, we see that the inverse temperature $\beta \ge 0$ exactly functions as a Lagrange multiplier. We simplify our system to a mean-field model that is thermodynamically equivalent in the sense that its log-partition function approximates \eqref{eqn:gibbs-var-intro}, effectively converting it to a one-dimensional optimization. We will shortly demonstrate this by sketching a procedure of solving \eqref{eqn:gibbs-var-intro} up to $o_p(n\log\log n)$ error, which is leading-order in $n$ and accurate enough for Theorem~\ref{thm:intro-gibbs-free-energy}. However, it turns out to really understand the system and prove all the other results, we actually need to build a conceptually similar but more delicate analysis in Section~\ref{sec:gibbs}, where we derive a more precise formula accurate up to $o_p(n)$ error.

% We argue that the supremum is nearly attained by a small mixture of product measures (in other words, we establish the accuracy of the ``na\"ive mean-field approximation''), and also determine approximate optimizers in this set. 
%A key technical challenge is to achieve an accurate mean-field approximation for this problem.
%We show that one of the uniform measure over the shortest path trees and a second (higher entropy) measure will be nearly optimal for any value of $\beta$. The critical temperature $\beta$ is exactly when the tradeoff between the two constructions occurs, and below this temperature the uniform measure is always close to optimal.

\paragraph{A variational upper bound.} It is discussed in the previous section that the set of shortest path trees is the Cartesian product of $\mathsf{par}_G(v)$. In general, any spanning tree $T$ can be identified with a map $\mathsf{par}_T:V\setminus\{1\}\to V$ where $u=\mathsf{par}_T(v)$ is in the neighborhood $\mathsf{N}_G(v)$ of $v$ satisfying $\mathsf{d}_T(1,u)+1=\mathsf{d}_T(1,v)$. In other words, a spanning tree can be thought of as an element of the Cartesian product
\[
    \mathsf{par}_T\in\prod_{v\in V\setminus\{1\}}\mathsf{N}_G(v)
\]
and the Gibbs measure $\mu_{G,\beta}$ can also be defined on the corresponding product $\sigma$-algebra. It should be noted that, unlike shortest path trees, not every element of the Cartesian product is a valid spanning tree --- the parent map $\mathsf{par}_T$ must be \emph{acyclic}, in the sense that no sequence of vertices $v_1,\cdots,v_k$ satisfies $v_{i+1}=\mathsf{par}_T(v_i)$ and $v_1=\mathsf{par}_T(v_k)$. 
%This makes it difficult to directly 
This means we cannot apply a vanilla mean-field approximation since arbitrary product measures might place nontrivial mass outside the support of $\mu_{G,\beta}$. Nonetheless, this perspective gives an important upper bound to \eqref{eqn:gibbs-var-intro} which is useful in understanding the Gibbs measure:
\[
    \begin{split}
        \log Z_{G,\beta}& = \sup_{\nu} \left[ H_{\nu}(T)  - \beta \log\log(n) \E_{\nu}\left[ \sum_v \mathsf{d}_T(1,v) \right]\right]\\
        &\leq \sup_{\nu} \left[ \sum_{v}H_{\nu}(\mathsf{par}_T(v))  - \beta \log\log(n) \E_{\nu}\left[ \sum_v \mathsf{d}_T(1,v) \right]\right]\\
        &=\sup_{\nu}\left[\sum_{v}\left(H_{\nu}(\mathsf{par}_T(v))-\beta\log\log(n)\E_{\nu}[\mathsf{d}_T(1,v)]\right)\right]\,.
    \end{split}
\]
%The keen reader might question the tightness of this upper bound, but let us for now just 
Making the ansatz that this upper bound is nearly tight, we can start to understand the behavior of the model by optimizing each summand in the RHS. For each $v\in V\setminus\{1\}$, it wants to maximize the \emph{entropy} $H_\nu(\mathsf{par}_T(v))$ but at the same time minimize the \emph{energy} $\mathsf{d}_T(1,v)$, and this tradeoff is balanced by the inverse temperature $\beta$. If $v$ tries to be at the minimum energy state, namely $\mathsf{d}_T(1,v)=\mathsf{d}_G(1,v)$, then the parent choice $\mathsf{par}_T(v)$ is confined to the set $\mathsf{par}_G(v)$ (of course, this alone does not guarantee that $v$ will have minimum energy), which gives the entropy of at most $\log|\mathsf{par}_G(v)|$. If this is smaller by at least $\beta\log\log n$ than the maximum possible entropy $\log|\mathsf{N}_G(v)|$, then $v$ might jump to a higher energy state $\mathsf{d}_T(1,v)\geq\mathsf{d}_G(1,v)+1$ with a non-negligible probability seeking for a larger entropy.

\begin{remark}[Transformed system]
In our rigorous analysis, we show there is a simple way to rewrite the partition function $Z_{G,\beta}$ as the partition function of a \emph{Markov random field} or ``spin system'' related to the original system. This is explained in Section~\ref{sec:transform-mrf} and helpful for formalizing many of the ideas we discuss here.
\end{remark}

% \begin{wrapfigure}{r}{0.4\textwidth} %this figure will be at the right
%     \centering
%     \includegraphics[width=0.35\textwidth]{diagrams/kernel}
% \end{wrapfigure}

\begin{figure}
    \centering
    \includegraphics[width=0.5\linewidth]{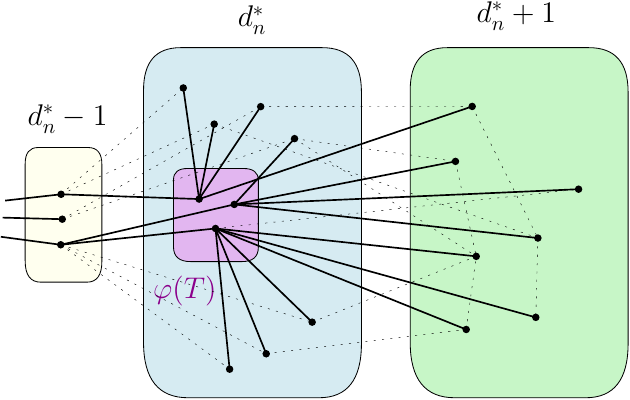}
    \caption{An illustration of the kernel $\varphi(T)$ of a spanning tree $T$. Each line (either dotted or normal) is an edge of the underlying graph $G_n$, and the normal lines are the edges of $T$. The labels $d_n^*-1$, $d_n^*$, and $d_n^*+1$ denote the distances from the root in $G_n$, not $T$.}
    \label{fig:kernel}
\end{figure}

\paragraph{Reduction to a one-dimensional system.} Note that since we are working with sparse random graphs, a typical vertex can contribute at most $H_{\nu}(\mathsf{par}_T(v))\leq\log|\mathsf{N}_G(v)|=(1+o(1))\log\log n$ to the sum, which allows us to ignore $o(n)$ summands. Here, our results on distance asymptotics come into play: since most of the vertices will be at distance either $d_n^*$ or $d_n^*+1$ with high probability, we only need to consider those vertices in $V^*=\{v:\mathsf{d}_G(1,v)\in\{d_n^*,d_n^*+1\}\}$\footnote{If we want an accuracy of at most $o(n)$ error, then $O(\frac{n}{\log\log n})$ number of vertices start to matter. This is the reason why in \eqref{eqn:dstar} we defined $d_n^*$ so that $N_{d_n^*-1}\approx(nq_n)^{d_n^*-1}<\frac{n}{(\log\log n)^2}$: this assures that it is still safe to ignore vertices at depth at most $d_n^*-1$. Those at depth at least $d_n^*+2$ can also be ignored but needs more care to handle; see Lemma~\ref{lem:gamma-a}.}. This gives
\begin{equation}\label{eqn:logz-vstar}
    \log Z_{G,\beta}\leq\sup_{\nu}\left[\sum_{v\in V^*}\left(H_{\nu}(\mathsf{par}_T(v))-\beta\log\log(n)\E_{\nu}[\mathsf{d}_T(1,v)]\right)\right]+o_p(n\log\log n)\,.
\end{equation}
Recall from the preceding discussion that each vertex decides whether to attempt to be at the minimum energy state by picking its parent from $\mathsf{par}_G(v)$, or jump to higher energy state and enjoy a possibly larger entropy. In the spirit of mean-field approximation, we make the following mean-field assumptions and observe how the system behaves.
\begin{enumerate}
    \item Every vertex $v$ with depth less than $d_n^*$ always stays at the minimum energy state $\mathsf{d}_T(1,v)=\mathsf{d}_G(1,v)$, in order to ``support'' the vertices in $V^*$.
    \item Each of the vertices at depth $d_n^*$, (approximately) independently with a common probability $p$, decides to pick its parent from $\mathsf{par}_G(v)$ and gains an energy of $d_n^*$. Let $A$ be the set of vertices that made this decision, i.e., $\mathsf{par}_T(v)\in\mathsf{par}_G(v)$.
    \item\label{item:aspt3} Every other vertex $v\in V^*\setminus A$, hoping that $A$ supplies enough entropy, always picks its parent from $A$ and gains an energy of $d_n^*+1$.
\end{enumerate}
On average, a typical vertex at depth $d_n^*$ would have $N_{d_n^*-1}q_n$ neighbors in $\mathsf{par}_G(v)$; by our results on distance asymptotics, this is approximately $(nq_n)^{d_n^*-1}q_n=(nq_n)^{\Delta_n}$. Hence, each vertex in $A$ gains a maximum entropy of $(\Delta_n+o(1))\log\log n$ on average. Similarly, a typical vertex would have $q_n|A|$ neighbors in $A$, so each vertex in $V^*\setminus A$ gains a maximum entropy of $\approx\log(|A|q_n)$. By our assumptions, the size of $A$ concentrates around $N_{d_n^*}p$, so this is  $\approx\log(N_{d_n^*}pq_n)$. Thus, the contribution of each vertex at depth $d_n^*$ to the sum in \eqref{eqn:logz-vstar} is approximately
\[
    H(p)+p\Delta_n\log\log n+(1-p)\log(N_{d_n^*}pq_n)-\beta\log\log(n)(pd_n^*+(1-p)(d_n^*+1))
\]
and for each vertex at depth $d_n^*+1$
\[
    \log(N_{d_n^*}pq_n)-\beta\log\log(n)(d_n^*+1)\,.
\]
Collecting terms depending on $p$, \eqref{eqn:logz-vstar} reduces to optimizing
\[
    \sup_{0<p\leq 1}\left[N_{d_n^*}(H(p)+p(\Delta_n+\beta)\log\log n)+((1-p)N_{d_n^*}+N_{d_n^*+1})\log(N_{d_n^*}pq_n)\right]\,.
\]
Assuming the convergence of the proxy $\Delta_n\to\Delta\in[0,1]$ and observing that $N_{d_n^*}H(p)=O(n)$ is negligible and $N_{d_n^*}+N_{d_n^*+1}=|V^*|=(1+o_p(1))n$, this can also be stated as optimizing $m=N_{d_n^*}p$:
\begin{equation}\label{eqn:1dim-opt}
    \sup_{0<m\leq N_{d^*}}\left[m(\Delta+\beta)\log\log n+(n-m)\log(mq_n)\right]\,.
\end{equation}
Note that $m$ represents an approximate size of $A$. Thus, we have reduced the problem \eqref{eqn:gibbs-var-intro} of optimizing the measure to the problem \eqref{eqn:1dim-opt} of optimizing the number of vertices at depth $d_n^*$ satisfying $\mathsf{par}_T(v)\in\mathsf{par}_G(v)$. In Definition~\ref{def:kernel}, we call this set $A$ a \emph{kernel}, depicted in Figure~\ref{fig:kernel}, and is a central object in our rigorous derivation throughout Section~\ref{sec:gibbs}. Once we have completely analyzed the variational upper bound as above, we prove its tightness by constructing measures on trees which approximately satisfy the ``mean-field assumptions'' above.

\paragraph{Phase transitions and the Franz--Parisi potential.} We now present an overview of interesting consequences of our analysis of partition function, whose formal proofs are detailed in Section~\ref{sec:gibbs2}. By solving \eqref{eqn:1dim-opt}, one can see that if $\beta>1-\Delta$ (low temperature phase) then $m=N_{d_n^*}$ is an approximate optimizer. This can be interpreted as the following: in the low temperature phase, almost all vertices at depth $d_n^*$ decides to be in the minimum energy state $\mathsf{d}_T(1,v)=\mathsf{d}_G(1,v)=d_n^*$. This indicates that a typical tree $T$ under the Gibbs measure might actually be similar to a shortest path tree, which turns out to be true in Wasserstein sense (cf. Theorem~\ref{thm:intro-phase-transition}). A crucial mathematical tool in proving this is an entropy-transport inequality: we round $\mu_{G,\beta}$ to the support \eqref{eqn:uniform-support} of the uniform measure by a negligible change in Wasserstein distance, then argue that this rounded measure has near maximum entropy; an entropy-transport inequality then guarantees that this should be close to the uniform measure in Wasserstein distance\footnote{As is evident from Corollary~\ref{cor:high-entropy-wasserstein}, this requires that we know the entropy of $\mu_{G,\beta}$ accurate up to an $o(n)$ error. See Section~\ref{sec:gibbs-low-phase} for a rigorous argument.}. In addition, in the low temperature phase, the landscape of the Franz--Parisi potential largely depends on the size of $\mathsf{par}_G(v)$ for each $v$ at depth $d_n^*$. If $\Delta>0$, then a typical size of $\mathsf{par}_G(v)$ is unbounded; this means that two independent samples from $\mu_{G,\beta}$ will have negligible overlap, leading to a strictly increasing landscape of the Franz--Parisi potential (e.g., the left plot of Figure~\ref{fig:fpp}). Otherwise, if a typical size of $\mathsf{par}_G(v)$ is bounded, then two independent samples might have nontrivial overlap with high probability (e.g., the blue curve in Figure~\ref{fig:fpp}).

Now we consider the case $\beta<1-\Delta$ (high temperature phase) where $m=\frac{n}{(1-\Delta-\beta)\log\log n}$ is an approximate optimizer, given that this is smaller than $N_{d_n^*}$. In this case, some non-negligible portion of the vertices at depth $d_n^*$ decide to jump to a higher energy state $\mathsf{d}_T(1,v)=\mathsf{d}_G(1,v)+1=d_n^*+1$. As a result, those staying in the minimum energy state tend to have more children in a typical tree under the Gibbs measure $\mu_{G,\beta}$, compared to a uniformly random shortest path tree. Thus, the ``children profile'' of the vertices at depth $d_n^*$ can be used to distinguish $\mu_{G,\beta}$ from the uniform measure; if we look at the empirical distribution of the number of children for vertices at depth $d_n^*$, trees sampled from the former would look like a mixture of two distributions, whereas those sampled from the latter would just concentrate around $N_{d_n^*+1}/N_{d_n^*}$. This children profile is in fact Lipschitz under the Hamming metric and thus plays a role as a \emph{witness} to show that $\mu_{G,\beta}$ is not close in Wasserstein metric to the uniform measure. Moreover, the fact that the optimal $m$ is only $o(n)$ implies that all but $o(n)$ vertices enjoy high entropy, i.e., have an unbounded number of parent choices. Thus, two independent samples from $\mu_{G,\beta}$ are unlikely to have a non-negligible overlap, leading to a strictly increasing landscape of the Franz--Parisi potential in any regime.

\subsection{Related work}

\paragraph{Related work on tensor PCA.} One average case problem which has been deeply investigated is the problem of \emph{tensor principle component analysis (PCA)} \cite{montanari2014statistical} for tensors of order $p \ge 3$. Notably, in this algorithm the best guarantees for natural local search algorithms with random initialization like tensor power iteration, approximate messsage passing (AMP), and gradient ascent or low-temperature Langevin dynamics on the log-likelihood are statistically suboptimal among polynomial time algorithms \cite{montanari2014statistical,arous2020algorithmic}. Instead, what are believed to be the best possible statistical guarantees for polynomial time algorithms are obtained by algorithms such as higher levels of the Sum-of-Squares (SoS) semidefinite programming hierarchy \cite{hopkins2015tensor} and more sophisticated spectral methods related to SoS \cite{hopkins2015tensor,hopkins2017power}. This example also raised questions about how the predictions from statistical physics style landscape analysis should be interpreted, since unlike in many other problems AMP does not predict the computational threshold. More generally, ``local'' algorithms with random initialization seem to be suboptimal based on existing analyses \cite{wein2019kikuchi}.

In a notable work, Wein, El Alaoui and Moore showed that ideas related to statistical physics and landscape analysis \emph{can} predict the same threshold as SoS by analyzing the Hessian of the \emph{Kikuchi free energy}. This is a higher-order analogue of the \emph{Bethe free energy} which is related to belief propagation and approximate message passing (see, e.g., \cite{yedidia2003understanding}); essentially the Bethe free energy is based on optimization over pairwise marginals, whereas the level $\ell$ of the Kikuchi free energy is an optimization problem over $\ell$-wise marginals. (The levels of the SoS hierarchy have a similar interpretation.) They showed that a linearization analysis of the Kikuchi free energy leads to a spectral algorithm on $n^{O(p)} \times n^{O(p)}$ matrices which succeeds up to the conjectured optimal threshold. 

There are several parallels between these findings for tensor PCA and our results for shortest path on random graphs. In both cases the \emph{choice of optimization landscape/geometry} has an important effect on the performance of local search algorithms like power iteration. In tensor PCA, naively ``local'' algorithms related to, e.g., optimization the Bethe free energy have suboptimal guarantees; instead, the best algorithms use other landscapes (e.g., corresponding to optimization of the Kikuchi free energy or higher levels of SoS, or the ``ironed out'' landscape of \cite{biroli2020iron}). This mirrors how direct local search on the landscape of the shortest path problem fails whereas lifting local search to the tree landscape succeeds. Relatedly, in both cases ideas related to statistical physics and spin glass theory (e.g., various variational approximations to the free energy) predict the ``correct'' computational-statistical tradeoff for polynomial time algorithms when we allow for a better choice of optimization landscape. 

\paragraph{Planted vs null problems.} 
One major difference between our problem and tensor PCA is that tensor PCA is a ``planted'' problem --- the goal for the algorithm is to estimate a planted signal buried in noise --- whereas shortest path in random graphs is a ``null'' or ``unplanted'' problem because there is no underlying signal or concept of signal-to-noise ratio. This is why the OGP is relevant to our setting (it is generally used for ``null''/pure random optimization problems) whereas AMP and related methods for making predictions in planted problems are used in tensor PCA. 

Recently there have also been studies of planted analogues of the OGP in order to study problems like the \emph{planted clique} problem. This had led to findings which appear to have a parallel moral to the case of tensor PCA --- naive local search can fail to achieve the optimal computational-statistical tradeoff \cite{gamarnik2024landscape,chen2024low,gheissari2025finding} and natural changes to the landscape like over-parameterization lead to closing the gap. See also \cite{deshpande2015finding} for a perspective on planted clique related to AMP.

\paragraph{Some pointers regarding shattering, OGP, etc.} 
Some examples of very well-studied problems where there has been a lot of deep work on understanding the solution geometry (through Gibbs measures, OGP, and other means) the $p$-spin model from spin glass theory, independent sets in random graphs, and different versions of random $k$-SAT, see \cite[e.g.]{krzakala2007gibbs,alaoui2024near,kizildaug2023sharp,gamarnik2017performance,bresler2022algorithmic,ding2015proof,coja2014asymptotic,coja2012catching,gamarnik2021overlap}. For instance, the works \cite{achlioptas2008algorithmic,mezard2005clustering,achlioptas2006solution,coja2015independent} rigorously established shattering results for the solution space of several optimization problems such as k-SAT and observed that this implies lower bounds for MCMC algorithms. Here shattering means that the solution space is largely made up of many small clusters which are pairwise distant from each other; this is one example of the type of free energy barrier which the Franz--Parisi potential can pick up on (see, e.g., \cite{el2025shattering} for more discussion). These shattering properties also motivated the introduction of the OGP \cite{gamarnik2014limits}.

As previously mentioned, the OGP framework and other analyses of the solution landscape have been applied to a variety of different random optimization problems. We will not attempt to survey the OGP literature, the related literature on spin glass theory etc (see e.g. \cite{mezard1987spin}), but just mention a couple of relevant applications where the stability of algorithms have been explicitly studied. 
One such example is the extensive literature on the symmetric and asymmetric binary perceptron \cite[e.g.]{abbe2022proof,kim1998covering,perkins2021frozen,baldassi2020clustering,abbe2022binary,gamarnik2022algorithms}. Here it was found that  although stable algorithms exist in certain regimes, they succeed by finding solutions only in a relatively small region of the solution space; around typical solutions free energy barriers actually exist. This illustrates one potential subtlety involved in making predictions based on landscape geometry. See also \cite{chou2021limitations} for another work proving the stability of a natural (quantum) algorithm, shown in order to conclude that it cannot overcome the OGP.

\paragraph{Some related work using the Franz--Parisi potential.} There is a massive physics literature around the Franz--Parisi potential and its application to predicting metastability and hardness which we will again not attempt to survey. See, e.g., \cite{sun2012following,franz1995recipes,zdeborova2010generalization} for relevant work including the ``state following'' perspective. Some recent applications in mathematics include its use in the solution of the spiked Wigner model \cite{el2018estimation} and in studying shattering in spherical spin glasses \cite{el2025shattering}. Another recent work \cite{bandeira2022franz} has shown that in planted models with Gaussian noise, a version of the \emph{annealed} approximation to the Franz--Parisi potential is (1) related/equivalent to predictions based on low-degree polynomials, and (2) yields rigorous lower bounds against local Markov chains. The latter point is nonobvious and its proof relies heavily on the noise being Gaussian, unlike the quenched Franz--Parisi potential which always implies hardness for MCMC via bottleneck theory (see, e.g., Appendix~\ref{apdx:bottleneck}). The shortest path problem example of \cite{LS2024} gives an example where the low-degree heuristic and direct application of the Franz--Parisi potential make essentially different conclusions. 

\paragraph{Overlap gap and low-degree polynomials.} Another interesting aspect of the OGP for shortest path \cite{LS2024} is the separation it leads to between the predictions of overlap gap and the low-degree polynomial method for predicting average-case hardness. One reason this is interesting is that some recent works have also established connections between these two in the positive direction, in the sense that OGP-based techniques can sometimes rule out the success of low-degree polynomials \cite{wein2022optimal, gamarnik2024hardness,huang2025strong} --- see \cite{LS2024} for a much more extensive discussion of this topic, and \cite{huang2025strong} for connections between the low degree heuristic and the probability of the OGP occurring. Intuitively, this is because low-degree polynomials of independent variables are in some sense average-case Lipschitz due to concentration of measure. 
See, e.g., \cite{brennan2020statistical, holmgren2021counterexamples,koehler2022reconstruction,huang2024low,huang2025optimal,buhai2025quasi} for some other works discussing connections between low-degree polynomials and the Statistical Query model as well as limitations of the low-degree polynomial method for predicting computational tractability. 
%, as well as some examples where the low-degree polynomial method for estimation could be interpreted to predict computational intractability for a ``computationally easy'' problem.

\paragraph{The structure of low-complexity Gibbs measures.} The rigorous analysis of our model fits into a general paradigm known as \emph{mean-field approximation} in statistical physics and probability \cite{parisi1988statistical}. In particular, rigorous work on the na\"ive mean-field approximation (e.g. \cite{austin2019structure,eldan2018decomposition,eldan2018gaussian,eldan2020taming,jain2019mean,augeri2021transportation}) shows that for some Gibbs measures of the form $p(x) \propto e^{f(x)}$ for $x \in \{\pm 1\}^n$, the log partition function $\log Z$ can be well-approximated by an optimization over the space of product measures, and in Wasserstein space the Gibbs measure can be approximated by a ``small'' mixture of product measures (pure states). Our model does not satisfy the assumptions needed to apply these general results (enforcing the tree structure requires hard constraints, and it is also unclear obvious whether our model satisfies something like the ``low gradient complexity'' conditions used in \cite{eldan2018gaussian}).
The results we obtain do informally correspond to an approximation by (mixtures of) ``almost'' product measures, and we do build on related information-theoretic techniques like entropy-transportation inequalities. Related to the literature on mean-field approximation, it is also interesting that we need very precise asymptotics (beyond leading-order) for $\log Z$ in order to understand the structure of the Gibbs measure --- a major focus in the mean-field approximation literature is determining the actual size of the error term for na\"ive mean-field bounds (see \cite{eldan2020taming,augeri2021transportation} for the state of the art in the context of Ising models).
%TODO: discuss Eldan et al papers. 

\subsection{Organization of the paper}
In Section~\ref{sec:preliminaries}, we go through some preliminaries such as graph-theoretic notation which is used throughout the paper. 
Section~\ref{sec:single} is the beginning of our analysis where we establish some key facts about the distribution of distances in a random graph. Here as in the rest of the paper, we reference mathematical tools such as Chernoff bounds, Poisson approximation, large deviations theory, etc. which are discussed in Appendix~\ref{apdx:tools}.

Next is Section~\ref{sec:correlated}, where we analyze the overlap between shortest path trees (and other related structures) under partial rerandomization of the graph. These are the results which directly imply that the OGP does \emph{not} hold in the landscape over trees. Note that Section~\ref{sec:correlated} is the \emph{only} section where two correlated random graphs are considered --- in the Gibbs measure analysis, a single random graph is sampled once (``quenched'') and the randomness comes only from sampling the distribution.

In Section~\ref{sec:gibbs} we compute asymptotics for the log partition function of the finite temperatures Gibbs measure over random graphs. This is at the heart of the results in the following Section~\ref{sec:gibbs2}, where we prove our main results about phase transitions, the Franz--Parisi potential, and the approximate structure of the Gibbs measures/energy landscape. 
The parts of the analysis which are directly used in both the OGP and Franz-Parisi analysis are all contained in Section~\ref{sec:single}, so Section~\ref{sec:correlated} can be read before or after Sections~\ref{sec:gibbs} and \ref{sec:gibbs2}. The main text of the paper ends with a brief conclusion (Section~\ref{sec:conc}). 

The paper concludes with a few accompanying appendices. Appendix~\ref{apdx:gibbs-critical} discusses the behavior of the Gibbs measures near the critical temperature, complementing Section~\ref{sec:gibbs} and Section~\ref{sec:gibbs2}. Appendix~\ref{apdx:discussion} contains some discussion and mathematical background which is helpful for understanding and interpreting the main results of the paper. Appendix~\ref{sec:dca} is a self-contained appendix discussing the shortest path problem from the perspective of matroid theory and discrete convexity. In particular, in Appendix~\ref{sec:simplex} we elaborate on the connection between the landscape over trees and the \emph{network simplex} algorithm. This discussion explains why as a basic fact we know that local search in the space of trees must always be globally convergent, as well as generalizations of this observation which exist in the literature. 
In Appendix~\ref{apdx:path-barrier} we elaborate on the interpretation of the main structural result of \cite{LS2024} in terms of the Franz--Parisi potential on the space of paths. 
In Appendix~\ref{sec:example-lovasz}, we elaborate on the analogy between our result and local search on different extensions of a submodular function. This includes discussion of the corresponding Franz--Parisi potentials.  

\section{Preliminaries}\label{sec:preliminaries}

\subsection{Additional notations}

In Section~\ref{sec:models-notations} we introduced our models and notations which are sufficient to present our results. Here, we complement that with additional graph theoretic notations that will be handy throughout the proofs and intermediate results in the rest of the paper. Besides these, throughout the rest of the paper, especially for the proofs, we will usually omit the subscripts for $q=q_n$, $G=G_n$, $d^*=d_n^*$, and $\ell^*=\ell_n^*$. On the other hand, we never drop the subscripts for proxy sequences such as $\lambda_n$, $\Delta_n$, and $\rho_n$, since they can be confused with their limiting values.

\paragraph{Notations for general graphs.} Here we introduce several notations for common objects in graph theory. For $u,v\in V(G)$, we write $u\sim v$ to mean that $u$ and $v$ are adjacent in $G$. We denote the neighborhood of $v$ by
\[
    \mathsf{N}_G(v):=\{u\in V(G):u\sim v\}\,.
\]
More generally, for any $A,B\subseteq V(G)$, we denote the neighborhood of $A$ in $B$ by
\[
    \mathsf{N}_B(A):=\bigcup_{v\in A}\mathsf{N}_G(v)\cap B\setminus A\,.
\]
We define the \emph{distance function} $\mathsf{d}_G(u,v)$ for $u,v\in V(G)$ as the length of a shortest path from $u$ to $v$ in $G$. When there is no such path, the distance is set to be $\infty$ by convention. More generally, for a set $A\subseteq V(G)$, $\mathsf{d}_G(u,A)$ is the length of a shortest path from $u$ to any vertex in $A$. The function $\mathsf{d}_G(u,-)$ is the \emph{distance vector} of $G$ starting from $u$. Define the distance $d$ neighborhood of $v$ by
\[
    \Gamma_d^{G}(v):=\{u\in V(G):\mathsf{d}_G(u,v)=d\}
\]
and its size
\[
    N_d^G(v):=|\Gamma_d^G(v)|\,.
\]
In addition, we sometimes use
\[
    \Gamma_{\leq d}^{G}(v):=\bigcup_{k=1}^d\Gamma_{k}^G(v)
\]
to denote the set of vertices with distance at most $d$. We may sometimes drop the subscript/superscript $G$ when it is obvious from the context.

\paragraph{Notations for graphs with fixed source.} As a quick reminder, we mostly consider paths from a fixed starting vertex $1$. Thus, we may simply write $\mathsf{d}_G=\mathsf{d}_G(1,-)$, $\Gamma_d^G=\Gamma_d^G(1)$, and $N_d^G=N_d^G(1)$. As we study paths from the fixed source $1$, we naturally focus on the connected component of $G$ containing $1$. We denote the set of vertices in this component by
\[
    \overline{V}(G):=\{v\in V(G):\mathsf{d}_G(1,v)<\infty\}
\]
and
\[
    \overline{G}=G[\overline{V}]
\]
the induced subgraph of $G$ by $\overline{V}$. On top of the distance structure induced by $1$, we define the \emph{parent set} of $v\in\overline{V}\setminus\{1\}$ by
\[
    \mathsf{par}_G(v):=\{u\in\mathsf{N}_G(v):\mathsf{d}_G(1,v)=\mathsf{d}_G(1,u)+1\}\,.
\]
For any spanning tree $T$ of $\overline{G}$, the parent set of $v$ in $T$ is always a singleton and by $\mathsf{par}_T(v)$ we refer to this unique parent. In particular, $\mathsf{par}_T(v)\in\mathsf{par}_G(v)$. As an inverse function of $\mathsf{par}_G$, we define the \emph{children set} of $v\in\overline{V}$ by
\[
    \mathsf{ch}_G(v):=\{u\in \mathsf{N}_G(v):v\in\mathsf{par}_G(v)\}\,.
\]

\subsection{Wasserstein distance}
\begin{definition}[see, e.g., \cite{villani2021topics}]
The 1-Wasserstein distance between two probability measures $\mu,\nu$ over a metric space $(d,\mathcal X)$ is given by
\[ W_1(\mu,\nu) = \inf_{\Pi} \mathbb E_{(X,Y) \sim \Pi}[d(X,Y)]\]
where $\Pi$ ranges over all coupling of $\mu$ and $\nu$.
\end{definition}
\begin{proposition}\label{prop:prod-coupling}
Suppose that $\mu = \otimes_{i = 1}^m \mu_i$ and $\nu = \otimes_{i = 1}^m \nu_i$ are two product measures on $\mathcal X^n$, let $d$ be a metric on $\mathcal X$, and let $c(x,y) = \sum_{i = 1}^n d(x_i,y_i)$ be a metric on $\mathcal X^n$. Then if we consider the 1-Wasserstein distance induced by these metrics, we have that
\[ W_1(\mu,\nu) = \sum_{i = 1}^n W_1(\mu_i,\nu_i), \]
and also the infimum in the definition of $W_1(\mu,\nu)$ can be obtained by a product measure.
\end{proposition}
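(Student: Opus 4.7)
The plan is to prove matching upper and lower bounds. For the upper bound, I would construct an explicit product coupling and compute its transport cost; for the lower bound, I would show that any coupling's marginals on each coordinate yield a valid coupling of $\mu_i$ and $\nu_i$, so we cannot do better than summing the coordinate-wise optima.

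First, for the upper bound, let $\Pi_i$ be an optimal coupling of $\mu_i$ and $\nu_i$ (achieving $W_1(\mu_i,\nu_i)$, which exists under mild conditions on $\mathcal X$, and otherwise one takes $\epsilon$-optimal couplings and sends $\epsilon\to 0$). Define $\Pi := \otimes_{i=1}^n \Pi_i$; this is a coupling of $\mu$ and $\nu$ because the marginals of a product measure are the products of marginals. By linearity of expectation,
\[
    \mathbb{E}_{(X,Y)\sim\Pi}[c(X,Y)] = \sum_{i=1}^n \mathbb{E}_{(X_i,Y_i)\sim\Pi_i}[d(X_i,Y_i)] = \sum_{i=1}^n W_1(\mu_i,\nu_i),
\]
so $W_1(\mu,\nu)\leq \sum_{i=1}^n W_1(\mu_i,\nu_i)$ and moreover this infimum is attained by a product measure.

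For the lower bound, fix any coupling $\Pi$ of $\mu$ and $\nu$. Let $\Pi_i$ be the pushforward of $\Pi$ under the projection $(x,y)\mapsto(x_i,y_i)$; by definition its $X_i$-marginal is the $i$-th marginal of $\mu$, which is $\mu_i$ (and similarly for $\nu_i$), so $\Pi_i$ is a coupling of $\mu_i$ and $\nu_i$. Hence $\mathbb{E}_{\Pi_i}[d(X_i,Y_i)]\geq W_1(\mu_i,\nu_i)$, and
\[
    \mathbb{E}_{\Pi}[c(X,Y)] = \sum_{i=1}^n \mathbb{E}_{\Pi_i}[d(X_i,Y_i)] \geq \sum_{i=1}^n W_1(\mu_i,\nu_i).
\]
Taking the infimum over $\Pi$ gives $W_1(\mu,\nu)\geq\sum_{i=1}^n W_1(\mu_i,\nu_i)$, matching the upper bound.

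There is no real obstacle; the only mild subtlety is the existence of optimal couplings $\Pi_i$ in the first step, which holds in any reasonable setting (e.g., Polish $\mathcal X$) but can otherwise be bypassed by the standard $\epsilon$-optimal approximation argument, since both sides of the claimed equality are finite whenever the Wasserstein distances on the right are.
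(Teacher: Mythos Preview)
Your proof is correct and follows essentially the same approach as the paper: an upper bound via the product of coordinate-wise optimal couplings, and a lower bound by projecting an arbitrary coupling onto each coordinate to get a valid coupling of $\mu_i$ and $\nu_i$. If anything, your write-up is more careful than the paper's, which compresses the two directions into a single chain of equalities.
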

\begin{proof}
We have that
\[ W_1(\mu,\nu) = \inf_{\Pi} \mathbb E_{(X,Y) \sim \Pi}[c(X,Y)] = \inf_{\Pi} \sum_{i = 1}^n\mathbb E_{(X,Y) \sim \Pi}[d(X_i,Y_i)] = \sum_{i = 1}^n \inf_{\Pi_i} \mathbb E_{(X_i,Y_i) \sim \Pi_i}[d(X_i,Y_i)],  \]
where we used that $\sum_{i = 1}^n\mathbb E_{(X,Y) \sim \Pi}[d(X_i,Y_i)] \le \sum_{i = 1}^n \inf_{\Pi_i} \mathbb E_{(X_i,Y_i) \sim \Pi_i}[d(X_i,Y_i)]$ holds for all $\Pi$. Also, equality is achieved if we consider a product coupling of the optimal $\Pi_i$.
\end{proof}

\subsection{Shortest path structures}\label{sec:dagtree}
\paragraph{The shortest path DAG.} The \emph{shortest path DAG} from $v$ is the union of all shortest paths starting from $v$. This is also called the \emph{shortest path subgraph} (see, e.g., \cite{khuller1995balancing,bauer2009batch}); because our original graph is undirected, we use the term ``shortest path DAG'' to emphasize that the shortest path structure is naturally directed and acyclic.

\paragraph{Uniformly random shortest path tree.} A shortest path tree from vertex $v$ is any tree which is formed as the union of shortest paths from $v$ to each other vertex. Interestingly, given the shortest path DAG, we can easily count and sample a uniformly random element from the set of \emph{shortest path trees} of the graph. All we need to do is, independently for each vertex $u$ in the shortest path DAG, pick a uniformly incoming edge to $u$ and drop all of the other incoming edges. In other words, the uniform distribution over shortest path trees naturally factorizes as a \emph{product measure} and the shortest path DAG encodes the information about this product measure. 
%From a variational inference/statistical physics perspective, this means that the \emph{mean-field approximation} is exact for this distribution, and Dijkstra's algorithm can naturally be viewed as a way to compute this exact mean-field approximation. 
This is closely connected with what happens in the finite-temperature setting, which we study later in Section~\ref{sec:gibbs}. 

%(TODO: add an exploicit proposition stating the product measure fact. Add a related statement about worst-case stability. Add a picture.)

\begin{proposition}
Fix a graph $G$, let $\mu_{G,v}$ be the uniform distribution over shortest path trees from vertex $v$ in $G$, and let $F$ be the shortest path DAG of $G$ also from $v$. Then $\mu_v$ factorizes as a product measure where each vertex $u \ne v$ picks its parent uniformly at random from its parents in $F$.
\end{proposition}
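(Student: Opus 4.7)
The plan is to exhibit an explicit bijection between the set of shortest path trees from $v$ and the product set $\prod_{u \neq v} \mathsf{par}_F(u)$, where $\mathsf{par}_F(u)$ denotes the set of in-neighbors of $u$ in the shortest path DAG $F$ (equivalently, neighbors $w$ of $u$ in $G$ with $\mathsf{d}_G(v,w) = \mathsf{d}_G(v,u)-1$). Once the bijection is in place, the claim is immediate since the uniform distribution on a product set is the product of the uniform distributions on each factor.

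For the forward map, given a shortest path tree $T$ rooted at $v$, assign to each $u \neq v$ its unique parent $\mathsf{par}_T(u)$ in $T$. Since the unique tree-path from $v$ to $u$ in $T$ has length $\mathsf{d}_G(v,u)$ and passes through $\mathsf{par}_T(u)$, we get $\mathsf{d}_G(v,\mathsf{par}_T(u)) = \mathsf{d}_G(v,u)-1$, so $\mathsf{par}_T(u) \in \mathsf{par}_F(u)$. This map is clearly injective: the set of directed edges $\{(\mathsf{par}_T(u),u) : u \neq v\}$ determines $T$.

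For the inverse direction, take any tuple $(\pi(u))_{u \neq v} \in \prod_{u \neq v} \mathsf{par}_F(u)$ and form the edge set $E_\pi = \{\{\pi(u),u\} : u \neq v\}$. I need to check $E_\pi$ is the edge set of a shortest path tree. The key observation is that iterating $\pi$ from any $u$ gives a sequence $u, \pi(u), \pi(\pi(u)), \ldots$ along which the distance from $v$ drops by exactly one at each step; hence this sequence terminates at $v$ after exactly $\mathsf{d}_G(v,u)$ steps, yielding a path of length $\mathsf{d}_G(v,u)$ from $u$ to $v$ in the subgraph $(\overline{V},E_\pi)$. In particular, every vertex is connected to $v$ and the number of edges is $|\overline{V}|-1$, so $(\overline{V},E_\pi)$ is a spanning tree of $\overline{G}$; it is then automatically a shortest path tree since the tree-distance from $v$ to each $u$ equals $\mathsf{d}_G(v,u)$. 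The two maps are inverse to each other by construction.

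The main (really only) subtlety is the acyclicity/connectedness argument in the inverse direction, which I expect to be handled cleanly by the strict monotonicity of $\mathsf{d}_G(v,\cdot)$ along the parent chain. Everything else reduces to unfolding definitions, and the proposition follows because the uniform measure on $\prod_{u \neq v} \mathsf{par}_F(u)$ is the product of uniform measures on each $\mathsf{par}_F(u)$, which under the bijection becomes the uniform measure on the set of shortest path trees.
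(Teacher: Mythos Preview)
Your proof is correct and follows the same approach the paper indicates: establishing the bijection between shortest path trees and the product $\prod_{u\neq v}\mathsf{par}_F(u)$, from which the factorization of the uniform measure is immediate. The paper in fact states this proposition without a detailed proof, treating the bijection as essentially self-evident (see also the informal discussion and equation~\eqref{eqn:uniform-support}); your write-up supplies the careful verification, including the monotonicity argument ensuring that the inverse map yields a spanning tree.
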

\begin{remark}[Stability of sampling algorithms]\label{rmk:reasonable-stability}
As mentioned before, the above proposition implies that we can sample from the uniform distribution once we compute the shortest path DAG (e.g., using Dijkstra's algorithm). It is worth comparing this to the lower bound against stable algorithms implied by Wasserstein disorder chaos (Corollary~\ref{cor:wasserstein-disorder}) --- essentially, while Dijkstra's algorithm is fairly stable, it is still $o(1)$-Lipschitz while the lower bound applies to $\Omega(1)$-Lipschitz algorithms (see \cite{el2022sampling}), %(see \cite{el2022sampling} for the details of the lower bound argument), 
so there is no contradiction. Informally, while message passing algorithms run for $O(1)$ rounds are stable, this consideration does not apply here because Dijkstra's algorithm corresponds to running belief propagation/message passing for $\Omega(\log(n)/\log\log(n))$ rounds (discussed later in Preliminaries). This distinction is also discussed and relevant to the low-degree polynomial lower bounds established in \cite{koehler2022reconstruction,huang2024low,huang2025optimal}. %where the computationally optimal algorithm is again $\omega(1)$ rounds of belief propagation .

We also note that a recent independent work of Ma and Schramm \cite{ma2025polynomial} has, in a similar spirit, highlighted a situation where sampling is algorithmically easy (in their case, via Glauber dynamics) even though Wasserstein disorder chaos occurs. 
\end{remark}

The following propositions shows that the process of sampling a spanning tree, given as input the shortest path DAG, is \emph{stable in the worst-case} (i.e. Lipschitz). Here, $\mathsf{par}_T(v)$ and $\mathsf{par}_F(v)$ denote the set of parents of $v$ in the tree $T$ (in which case is a singleton for non-root vertices) and the DAG $F$, respectively.
\begin{proposition}\label{prop:sptree-conc}
Given two graphs $G_1$ and $G_2$ on the same vertex set $V$ of size $n$ and a choice of vertex $v \in V$, let $\mu_1,\mu_2$ be the corresponding uniform measures over shortest path trees from $v$. Let $\Pi$ be the product measure coupling of $\mu_1$ and $\mu_2$ achieving $W_1(\mu_1,\mu_2)$ under the directed\footnote{The edges of the tree inherent directions from the DAG, and are only considered equal if they point in the same direction. Ignoring direction would make negligible difference in our results since opposite-direction overlaps are not very common in our setting.} Hamming metric (from Proposition~\ref{prop:prod-coupling}) and let $(T_1,T_2) \sim \Pi$ be random shortest path trees. Then with probability at least $1 - \delta$,
\[ \left|\#\{ e: e \in T_1 \cap T_2 \} - \sum_{u \ne v} \Pr(\mathsf{par}_{T_1}(u) = \mathsf{par}_{T_2}(u))\right| = O(\sqrt{n \log(2/\delta)}). \]
\end{proposition}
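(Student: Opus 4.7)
The plan is to reduce the claim to a standard concentration inequality for sums of independent Bernoulli variables, exploiting the product structure of the uniform measure on shortest path trees.

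First I would express $|T_1 \cap T_2|$ as a sum of indicators. Since every non-root vertex $u$ has exactly one incoming (parent) edge in each spanning tree, and since we are counting directed overlap, the number of common edges is exactly
\[
    \#\{e : e \in T_1 \cap T_2\} = \sum_{u \ne v} \mathbf{1}[\mathsf{par}_{T_1}(u) = \mathsf{par}_{T_2}(u)].
\]
Let $X_u := \mathbf{1}[\mathsf{par}_{T_1}(u) = \mathsf{par}_{T_2}(u)]$, so the quantity of interest is $S := \sum_{u \ne v} X_u$ and the quantity on the right-hand side of the proposition is $\mathbb{E}_\Pi[S] = \sum_{u\ne v}\Pr(\mathsf{par}_{T_1}(u)=\mathsf{par}_{T_2}(u))$.

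Next I would invoke the product structure established just above. By the previous proposition, each $\mu_i$ factorizes as the product over $u \ne v$ of the uniform distribution on $\mathsf{par}_{F_i}(u)$, where $F_i$ is the shortest path DAG of $G_i$. By Proposition~\ref{prop:prod-coupling}, the coupling $\Pi$ achieving the 1-Wasserstein distance under the directed Hamming metric can be taken as a product of couplings $\Pi_u$ between the uniform distributions on $\mathsf{par}_{F_1}(u)$ and $\mathsf{par}_{F_2}(u)$. Consequently, under $\Pi$ the random variables $(\mathsf{par}_{T_1}(u),\mathsf{par}_{T_2}(u))$ for $u \ne v$ are mutually independent, and hence the indicators $X_u$ are mutually independent Bernoulli random variables.

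Finally, since $S$ is a sum of at most $n-1$ independent $[0,1]$-valued random variables, Hoeffding's inequality yields
\[
    \Pr\bigl(|S - \mathbb{E}_\Pi[S]| \ge t\bigr) \le 2\exp\!\left(-\frac{2t^2}{n-1}\right),
\]
and choosing $t = \sqrt{\tfrac{(n-1)\log(2/\delta)}{2}} = O(\sqrt{n\log(2/\delta)})$ gives the conclusion with probability at least $1-\delta$. There is no serious obstacle: the only thing to be careful about is asserting that the optimizing coupling in Proposition~\ref{prop:prod-coupling} is literally a product coupling (not merely that the optimal transport cost decomposes), which is explicitly stated there, and that the directed Hamming metric decomposes as $c(T,T') = \sum_{u\ne v}\mathbf{1}[\mathsf{par}_T(u)\ne\mathsf{par}_{T'}(u)]$, which is immediate.
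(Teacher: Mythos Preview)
Your proposal is correct and takes exactly the same approach as the paper: the paper's proof is the single line ``This follows from Hoeffding's inequality,'' and you have simply spelled out the details (writing the overlap as a sum of parent-match indicators, using the product-coupling structure from Proposition~\ref{prop:prod-coupling} to get independence, then applying Hoeffding).
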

\begin{proof}
This follows from Hoeffding's inequality. %(Remark: it is possible to obtain a tighter bound using Bernstein's inequality.)
\end{proof}
\begin{proposition}
With the notation of the previous proposition, and letting $F_1$ and $F_2$ be the corresponding shortest path DAGs starting from vertex $v$, we have
\[ \Pr(\mathsf{par}_{T_1}(u) = \mathsf{par}_{T_2}(u)) = \frac{\#\{e : \mathsf{par}_{F_1}(u) \cap \mathsf{par}_{F_2}(u)\}}{\max(\# \mathsf{par}_{F_1}(u), \# \mathsf{par}_{F_2}(u))}. \]
\end{proposition}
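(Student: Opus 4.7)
The plan is to reduce the claim, by the product factorization of the uniform measure over shortest path trees together with Proposition~\ref{prop:prod-coupling}, to a statement about the maximal coupling between two uniform distributions on finite sets under the discrete metric.

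First, I would invoke the preceding proposition (the one before Proposition~\ref{prop:sptree-conc}) which states that $\mu_i$ is the product measure that, independently for each $u \ne v$, picks $\mathsf{par}_{T_i}(u)$ uniformly from $\mathsf{par}_{F_i}(u)$. Since the (directed) Hamming metric on trees decomposes as a sum of per-vertex indicators $\mathbf{1}[\mathsf{par}_{T_1}(u) \ne \mathsf{par}_{T_2}(u)]$, Proposition~\ref{prop:prod-coupling} applies and an optimal coupling $\Pi$ can be taken as a product of per-vertex couplings $\Pi_u$, each of which maximizes $\Pr_{\Pi_u}(\mathsf{par}_{T_1}(u) = \mathsf{par}_{T_2}(u))$.

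Next, fix $u \ne v$ and set $A = \mathsf{par}_{F_1}(u)$, $B = \mathsf{par}_{F_2}(u)$. The problem reduces to computing the maximum probability of equality between a $\Unif(A)$ and $\Unif(B)$ random variable under an arbitrary coupling. By the standard maximal coupling identity (i.e., $\max_\Pi \Pr(X = Y) = \sum_{x} \min(\mu(x), \nu(x))$, equivalently $1 - \mathrm{TV}(\mu, \nu)$), this maximum equals
\[
\sum_{x} \min\!\left(\frac{\mathbf{1}[x \in A]}{|A|}, \frac{\mathbf{1}[x \in B]}{|B|}\right) = \sum_{x \in A \cap B} \frac{1}{\max(|A|, |B|)} = \frac{|A \cap B|}{\max(|A|, |B|)}.
\]
This maximum is indeed attained: explicitly, assuming $|A| \le |B|$, one can build a coupling by sampling $Y \sim \Unif(B)$ and then setting $X = Y$ if $Y \in A \cap B$, and otherwise sampling $X$ uniformly from $A \setminus B$ with an appropriate conditional distribution; this realizes the marginals and achieves $\Pr(X=Y) = |A \cap B|/|B|$.

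I would then combine these pieces: the per-vertex optimal couplings achieve $\Pr(\mathsf{par}_{T_1}(u) = \mathsf{par}_{T_2}(u)) = |A \cap B|/\max(|A|,|B|)$, which is precisely the asserted formula, and the product over $u \ne v$ of these optimal couplings is an optimal coupling for the whole tree by Proposition~\ref{prop:prod-coupling}. The only step that requires any care at all is verifying that an optimal product coupling achieving the per-coordinate optima simultaneously exists (rather than just the sum of $W_1$ distances being the right value), but this is exactly what Proposition~\ref{prop:prod-coupling} provides, so there is no real obstacle.
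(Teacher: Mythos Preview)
Your proposal is correct and takes essentially the same approach as the paper: the paper's proof is a one-line citation of the maximal coupling identity (Proposition~4.7 of \cite{wilmer2009markov}), and you have simply spelled out the standard computation that $\max_\Pi \Pr(X=Y) = \sum_x \min(\mu(x),\nu(x)) = |A\cap B|/\max(|A|,|B|)$ for uniform distributions on $A$ and $B$. The product-coupling reduction via Proposition~\ref{prop:prod-coupling} is already baked into the setup of the previous proposition, so both arguments are the same in substance.
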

\begin{proof}
This is the probability they are equal in the optimal coupling \cite[Proposition 4.7]{wilmer2009markov}. %\cite{villani2021topics}.
%constructed from the Neyman-Pearson lemma \cite{neyman1933ix}.
\end{proof}

\subsection{Gibbs measures}%\label{sec:dijkstra-is-bp}    
\paragraph{Finite-temperature Gibbs measure.}
% Motivated by the minimum-cost flow formulation \eqref{eqn:lp-flow}, 
For the shortest path tree problem, there is a natural way to define a class of finite-temperature \emph{Gibbs measures}. Given a graph $G = (V,E)$, its connected component $\overline{G}$ containing the root vertex $1$, and for inverse temperature $\beta > 0$, we define a class of Gibbs measures over spanning trees $T$ of $\overline{G}$\footnote{In our study of sparse Erd\"os--R\'enyi graphs, $\overline{G}$ is typically the giant component of $G$.}
%\footnote{Alternatively, we could look at spanning forests; the difference is inconsequential because almost all of the vertices will be in the giant component regardless.} 
\begin{equation}\label{eqn:finite-temp}
\mu_{G,\beta}(T) = \frac{1}{Z_{G,\beta}} \exp\left(-\beta \log\log(n)  \sum_{v\in\overline{V}} \mathsf{d}_T(1,v) \right)
\end{equation}
where $Z_{G,\beta}$ is the \emph{partition function} (normalizing constant), and $\mathsf{d}_T$ is the graph metric in the tree $T$. The connection to the minimum-cost flow formulation \eqref{eqn:lp-flow} is that for a shortest path tree $T$, $\sum_v \mathsf{d}_T(1,v)$ is exactly the objective value for \eqref{eqn:lp-flow} of the unique unidirectional flow on $T$ sending $1$ unit of flow from the root $1$ to each of the other vertices $v$. (The entries of the corresponding flow vector $f$ on an edge $u \to v$ correspond to the number of shortest paths which contain this edge.) We also show in Appendix~\ref{apdx:dijkstra-is-bp} that with this definition of the Gibbs measure, one can naturally recover Dijkstra's algorithm as a special case of belief propagation.

As is customary in statistical physics, as the inverse temperature $\beta$ increases, the Gibbs measure starts to concentrate more and more on near-minimizers of
\[ H(T) = \log\log(n) \sum_v \mathsf{d}_T(1,v). \]
In other words, as $\beta$ increases the Gibbs measure reweighs itself to prefer trees where the sum of the distances of vertices to the root becomes smaller.
The scaling factor $\log \log(n)$ is included so that the regime where $\beta = \Theta(1)$ is where the model exhibits interesting changes to the Gibbs measure, as will soon become clear from the results and analysis. Note that if we take $\beta = 0$, this measure is simply the uniform measure over spanning trees, which itself has been extensively studied over general graphs (see, e.g., \cite{kirchhoff1958solution,anari2021log}). %(TODO: story about vertices at distance $d^*$ choosing to further away, which changes the value of the energy by $\Theta(n)$, so we need the extra factor)

\section{Asymptotics for a single graph}\label{sec:single}
In this section, we study the asymptotics of $G=G_n$, centered on the set of fixed distance neighborhoods $\Gamma_d=\Gamma_d^{G_n}(1)$. These results, especially those developed in Section~\ref{sec:dv}, are fundamental and will be frequently referred back throughout the paper.
% \paragraph{Additional notation.} While $q_n = \alpha_n\log(n)/n$ clearly depends on $n$, in the interest of readability we will generally suppress this dependence from now on. This also allows us to introduce a separate and useful notation $q_d$ later in the proofs without creating ambiguity. Also, we will write $\Gamma_{\leq d}^G:=\bigcup_{i=1}^d\Gamma_i^G$ for the collection of vertices at distance at most $d$, and similarly $\Gamma_{\geq d}:=V\setminus\Gamma_{\leq d-1}$.

%We denote by $\Gamma_d^I$ the intersection $\Gamma_d^{G_1}\cap\Gamma_d^{G_2}$, and $N_d^I=|\Gamma_d^I|$ the size of the intersection.

\subsection{The distribution of distances in a single graph}\label{sec:dv}

% Our goal is to establish concentration bounds for the $N_d$. Based on well-known properties of sparse random graphs, we know that for small depths $d$ the graph is locally tree-like, so $N_d$ will concentrate around $(nq)^d$. However, this will fail to be true when we start to run low on unexplored vertices, i.e. in the situation where $(nq)^d$ is close to $n$, equivalently $d\approx\frac{\log n}{\log(nq)}$. Hence, the quantity 
% \begin{equation} \ell^*=\frac{\log n}{\log(nq)}
% \end{equation} 
% will play a key role in our analysis.

% Since the depth $d$ is constrained to be an integer, it is also important to see how close $\ell^*$ is to an integer. Thus, we also define $d^* \in \mathbb Z$ to be the unique integer such that 
% \begin{equation} 
% \ell^*-\frac{1}{2}\leq d^*<\ell^*+\frac{1}{2}, 
% \end{equation}
% so that $d^*$ is $\ell^*$ rounded to the nearest integer. 

%But since $d^*-\ell^*$ oscillates as $n\to\infty$, there is in fact no limiting distribution of shortest distances (which will be evident once we prove Theorem~\ref{thm:main1}). Thus, we need to consider an infinite subsequence of $n$ for which the limiting behavior of $d^*-\ell^*$ can be described.

We start with the asymptotic behavior of the distance vector. As mentioned in the introduction, the distance between $d^*=d_n^*$ and $\ell^*=\ell_n^*$ will be critically important in classifying the behavior of the random graph. Because of this, even if we consider an arbitrary convergent sequence of relative densities $\alpha_n \to \alpha$, this is not enough to determine the behavior of the graph for large values of $n$. (This will be clear from the conclusion of the following theorem.) The key information which is needed to determine the behavior is actually the relative location of $d^*$ and $\ell^*$, which is encoded by the parameter $\lambda_n$ in the definition below.

\begin{theorem}\label{thm:main1}
    Suppose that $\lambda_n\to\lambda\in[0,1]$. Then we have 
    \[ N_{d_n^*}/n\pto 1-\lambda, \qquad \text{and} \qquad N_{d_n^*+1}/n\pto \lambda \]
    as $n \to \infty$.
    In other words, with probability $1 - o(1)$ all vertices except for at most $o(n)$ of them have distance either $d_n^*$ or $d_n^*+1$, and the proportion of vertices with distance $d^*$ converges to $1-\lambda$ in probability.
\end{theorem}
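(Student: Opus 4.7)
The plan is to sample the graph via Breadth-First Search from vertex $1$ and analyze the layer sizes $N_d = |\Gamma_d^{G_n}(1)|$ as a Markov chain. Conditioned on $N_0, \ldots, N_{d-1}$,
\[
N_d \sim \Binom\Bigl(n - \textstyle\sum_{i=0}^{d-1} N_i,\; 1-(1-q_n)^{N_{d-1}}\Bigr),
\]
since edges between the current frontier and the unexplored vertex set are sampled independently of everything revealed so far. I will handle the layers in three phases: $d < d_n^*$, $d = d_n^*$, and $d = d_n^*+1$.

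\textbf{Phase 1 (early concentration).} For $d < d_n^*$, the definition of $d_n^*$ forces $(nq_n)^d < n/(\log\log n)^2$, so $\sum_{i<d}N_i/n$ and $q_n N_{d-1}$ are both small. I will prove by induction on $d$ that $N_d = (nq_n)^d(1+o(1))$ with probability $1-o(1)$, uniformly over $d < d_n^*$. At each step, Taylor expanding $1-(1-q_n)^{N_{d-1}} = q_n N_{d-1}(1+O(q_n N_{d-1}))$ and accounting for vertex depletion introduces a per-step multiplicative error of size $O((nq_n)^d/n)$; these errors are geometric and dominated by the final term $O(1/(\log\log n)^2) \to 0$. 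A multiplicative Chernoff bound gives a per-step failure probability of at most $n^{-c\alpha_n}$ (since the conditional mean is at least $nq_n = \alpha_n\log n$), which survives a union bound over the $O(\log n/\log\log n)$ layers.

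\textbf{Phase 2 (layer $d_n^*$).} Combining the Phase~1 estimate, the expansion $\log(1-q_n) = -q_n + O(q_n^2)$, and the defining identity $\log(1/\lambda_n) = (nq_n)^{d_n^*}/n$,
\[
(1-q_n)^{N_{d_n^*-1}} = \exp\!\bigl(-q_n N_{d_n^*-1}(1+o(1))\bigr) = \lambda_n^{1+o(1)}.
\]
Since $n - \sum_{i<d_n^*}N_i = n(1-o(1))$, the conditional mean of $N_{d_n^*}$ is $n(1-\lambda_n)(1+o(1))$. A Chernoff bound when $\lambda < 1$ (or Markov's inequality when $\lambda = 1$, in which case the mean is already $o(n)$) then yields $N_{d_n^*}/n \pto 1-\lambda$.

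\textbf{Phase 3 (layer $d_n^*+1$ and tail).} After Phase~2 there remain $n\lambda_n(1+o_p(1))$ unexplored vertices, each adjacent to $\Gamma_{d_n^*}$ with probability $1-(1-q_n)^{N_{d_n^*}}$. The crucial estimate is
\[
q_n N_{d_n^*} = \alpha_n\log(n)\,(1-\lambda_n)(1+o_p(1)) \;\gtrsim\; \alpha_n \log(n)/(\log\log n)^2 \;\to\; \infty,
\]
using $1-\lambda_n \gtrsim 1/(\log\log n)^2$, itself a consequence of the lower bound $\lambda_n \leq e^{-1/(\log\log n)^2}$ baked into the definition of $d_n^*$ via $(nq_n)^{d_n^*} \geq n/(\log\log n)^2$. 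Hence $(1-q_n)^{N_{d_n^*}} \to 0$, so essentially every remaining vertex lands in $\Gamma_{d_n^*+1}$ and $N_{d_n^*+1}/n \pto \lambda$. The two layers together account for $n(1-o_p(1))$ vertices, so higher layers and the non-giant component contribute only $o(n)$. The main obstacle will be Phase~1: the per-step multiplicative approximations must survive $\Theta(\log n/\log\log n)$ induction steps without the errors compounding, which is precisely what the cutoff $(\log\log n)^2$ in the definition of $d_n^*$ is engineered to allow, while simultaneously ensuring the saturation estimate $q_n N_{d_n^*} \to \infty$ used in Phase~3.
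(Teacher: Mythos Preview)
Your proposal is correct and follows essentially the same approach as the paper. The paper formalizes Phase~1 as a standalone proposition with a more explicit three-step induction tracking multiplicative constants $L=(1-\delta)^{1/3}$ and $R=(1+\delta)^{1/3}$ through the layers, and combines Phases~2--3 into a theorem that actually proves the quantitatively sharper statement $\bigl|N_{d_n^*}/((1-\lambda_n)n)-1\bigr|<C/\log\log n$; but the underlying BFS-layer analysis, the geometric control of compounded approximation errors, and the saturation argument for layer $d_n^*+1$ are all the same as yours.
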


Recall that $d^*$ is defined to be the smallest integer such that $(nq)^{d^*}\geq n/(\log\log n)^2$. This implies
\begin{equation}\label{eqn:lambda-sandwich}
    \frac{1}{(\log\log n)^2}\leq (nq)^{d^*-\ell^*}=\log(1/\lambda_n)<\frac{nq}{(\log\log n)^2}=\frac{\alpha_n\log n}{(\log\log n)^2}
\end{equation}
which gives
\begin{equation}\label{eqn:delta-sandwich}
    -\frac{2\log\log\log n}{\log\alpha_n+\log\log n}\leq d^*-\ell^*=\Delta_n<1-\frac{2\log\log\log n}{\log\alpha_n + \log\log n}< 1
\end{equation}
for large enough $n$.

The idea of our proof of Theorem~\ref{thm:main1} is to obtain tight concentration bounds for $N_d$. Since the distribution of $N_d$ depends on $N_0,\cdots,N_{d-1}$, we need an inductive argument. The following result is much stronger than we actually need here, but this form turns out to be useful in proving Theorem~\ref{thm:main3} in Section \ref{sec:spdag}.

\begin{proposition}\label{prop:conc}
    Let $\frac{C}{\log\log n}<\delta<\frac{1}{2}$ for some constant $C>0$. Then with probability $1-o(1)$, we have
    \[
        (1-\delta)(nq)^d\leq N_d\leq (1+\delta)(nq)^d
    \]
    for all $0\leq d<d^*$. In particular, we have
    \[
        \frac{N_d}{(nq)^d}\pto1
    \]
    for all $0\leq d<d^*$.
\end{proposition}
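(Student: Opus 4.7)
The plan is to induct on $d$ using the BFS reveal process \eqref{eqn:nd_process}. Let $\mathcal{F}_{d-1}$ denote the $\sigma$-algebra generated by the vertex layers at depths $0,\ldots,d-1$; conditionally on $\mathcal{F}_{d-1}$, $N_d \sim \Binom(M_d, p_d)$ with $M_d = n - \sum_{i<d} N_i$ and $p_d = 1 - (1-q)^{N_{d-1}}$. I will show by induction that on the inductive good event $E_{d-1} = \{|N_{d-1}/(nq)^{d-1} - 1| \leq \delta_{d-1}\}$, the conditional mean $\mu_d := M_d p_d$ tracks $(nq)^d$ tightly, and then use the multiplicative Chernoff bound to control the fluctuation of $N_d$ around $\mu_d$ at each step.

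The central mean estimate comes from two Taylor expansions. The defining inequality $(nq)^{d^*-1} < n/(\log\log n)^2$ together with $d < d^*$ gives $N_{d-1}q \leq (1+\delta)(nq)^d/n = O(1/(\log\log n)^2)$ on $E_{d-1}$, so expanding $(1-q)^{N_{d-1}}$ yields $p_d = N_{d-1}q\bigl(1 + O(N_{d-1}q)\bigr)$. A parallel geometric-sum estimate gives $M_d = n\bigl(1 - O((nq)^{d-1}/n)\bigr)$. Combining,
\[
    \mu_d = N_{d-1} \cdot nq \cdot \bigl(1 + O((nq)^d/n)\bigr),
\]
so normalizing by $(nq)^d$ introduces a per-step multiplicative distortion of $O((nq)^d/n)$. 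Summing this over $d < d^*$ is a geometric series dominated by its final term and produces cumulative distortion $O(1/(\log\log n)^2)$, which is $o(\delta)$ whenever $\delta \geq C/\log\log n$.

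To handle the stochastic fluctuations I will pick step-dependent Chernoff deviations $\epsilon_d := \sqrt{6(A_n + \log d^*)/(nq)^d}$ with $A_n \to \infty$ slowly. The multiplicative Chernoff bound then gives $\Pr(|N_d - \mu_d| > \epsilon_d \mu_d \mid \mathcal{F}_{d-1}) \leq 2 e^{-2 A_n}/(d^*)^2$ on $E_{d-1}$, which is summable over the $\Theta(\log n /\log\log n)$ layers. The total fluctuation $\sum_{d < d^*} \epsilon_d$ is a geometric sum dominated by its first term $O(\sqrt{A_n/nq})$, which is $o(1/\log\log n)$ since $nq = \Theta(\log n)$ and $\log n \gg (\log\log n)^3$. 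Writing $\eta_d := N_d/(nq)^d - 1$ and telescoping $\log(1+\eta_d) = \sum_{i \leq d} \bigl[O((nq)^i/n) + \log(1+\xi_i)\bigr]$ with $|\xi_i| \leq \epsilon_i$ then yields $|\eta_d| = o(\delta)$ uniformly in $d < d^*$ off a $o(1)$-probability event, and the claimed convergence $N_d/(nq)^d \pto 1$ for each fixed $d$ follows as a corollary.

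The main obstacle is the tension in choosing $\epsilon_d$: it must be small enough that $\sum_d \epsilon_d = o(\delta)$, but large enough that the Chernoff exponent $\epsilon_d^2 \mu_d /3$ dominates $\log d^*$ at \emph{every} layer, including the smallest, where $\mu_d \approx nq$ is only polylogarithmic in $n$. The geometric choice $\epsilon_d \propto (nq)^{-d/2}$ resolves this because the sum is dominated by its first term while the Chernoff exponents are constant across layers. A secondary bookkeeping issue is that both the mean estimate and the Chernoff step are only valid on the inductive good event, so the argument must be framed via the first index at which $E_d$ fails, and the union bound taken over this first-failure event rather than unconditionally at each layer.
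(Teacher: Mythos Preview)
Your proposal is correct and follows the same overall architecture as the paper—induct along the BFS process \eqref{eqn:nd_process}, controlling the conditional mean $\mu_d$ on the inductive good event and then applying a multiplicative Chernoff bound—but the error allocation is genuinely different.

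The paper fixes a \emph{uniform} per-layer multiplicative deviation of order $(1\pm\delta)^{1/\ell^*}\approx 1\pm\Theta(\delta/\ell^*)=1\pm\Theta(1/\log n)$. This makes the Chernoff exponent at layer $d$ roughly $(nq)^d/(\log n)^2$, which is only $\Omega(\log n)$ once $d\ge 3$; the paper therefore treats $d=1,2$ and $d=d^*-1$ as separate cases with looser per-step deviations. Your scheme instead takes layer-dependent deviations $\epsilon_d\propto(nq)^{-d/2}$, which equalizes the Chernoff exponents across all layers and concentrates essentially all of the fluctuation budget $\sum_d\epsilon_d$ at $d=1$. This removes the paper's three-step case analysis and is somewhat cleaner, at the cost of verifying that both geometric sums (mean distortions $\sum_d(nq)^d/n=O(1/(\log\log n)^2)$ and fluctuations $\sum_d\epsilon_d=O(\sqrt{(\log\log n)/\log n})$) are $o(\delta)$; both checks are immediate from $\delta\ge C/\log\log n$ and $(\log\log n)^3=o(\log n)$.

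One minor note: the padding $A_n\to\infty$ is not strictly needed—already $\epsilon_d^2\mu_d/3\ge 2\log d^*$ gives per-layer failure probability $O((d^*)^{-2})$, which sums over $d^*$ layers to $O(1/d^*)=o(1)$—but including it does no harm.
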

\begin{proof}
    Let $L=(1-\delta)^{1/3}$ and $R=(1+\delta)^{1/3}$. Note that for large enough $n$,
    \begin{equation}\label{eqn:lbound}
        L^{1/k}=(1-\delta)^{1/(3k)}\leq\left(1-\frac{C}{\log\log n}\right)^{1/(3k)}\leq1-\frac{C}{3k\log\log n}
    \end{equation}
    and
    \begin{equation}\label{eqn:rbound}
        R^{1/k}=(1+\delta)^{1/(3k)}\geq\left(1+\frac{C}{\log\log n}\right)^{1/(3k)}\geq1+\frac{C}{6k\log\log n}
    \end{equation}
    where \eqref{eqn:lbound} is from Lemma~\ref{lem:alg1} and \eqref{eqn:rbound} is from the fact that $1+x\leq e^x\leq 1+2x$ for small enough $x>0$. Based on \eqref{eqn:nd_process}, our proof strategy is to apply concentration of measure and
    %Lemma~\ref{lem:unionbound} to the events
    the union bound to establish that the collection of events $E_d$ (parameterized by depth $d$) hold simultaneously with high probability, where 
    \[
        E_d=\begin{cases}
            L^{1/2}nq\leq N_1\leq R^{1/2}nq & \text{if } d=1,\\
            L(nq)^2\leq N_2\leq R(nq)^2 & \text{if } d=2,\\
            L^{1+(d-2)/\ell^*}(nq)^d\leq N_d\leq R^{1+(d-2)/\ell^*}(nq)^d& \text{if }3\leq d\leq d^*-2,\\
            L^3(nq)^d\leq N_d\leq R^3(nq)^d&\text{if } d=d^*-1.
        \end{cases}
    \]
    The proof consists of three steps; from $d=0$ to $2$, from $d=3$ to $d^*-2$, and $d=d^*-1$. 
    For convenience, we define
    \[
        n_d:= n-\sum_{i=0}^{d-1}N_i
    \]
    and
    \[
        q_d:=1-(1-q)^{N_{d-1}}\,.
    \]
    
    \paragraph{Step 1.} We prove $\Pr(E_1)\geq1-o(1)$ and $\Pr(E_2\mid E_1)\geq 1-o(1)$. Since $N_1\sim\Binom(n-1,q)$, Lemma~\ref{lem:chernoff} together with \eqref{eqn:lbound} and \eqref{eqn:rbound} gives that
    \[
        L^{1/2}nq\leq N_1\leq R^{1/2}nq
    \]
    with probability $1-o(1)$.

    Under this event, we have
    \[
        n_2=n-N_0-N_1\geq n-1-R^{1/2}nq\geq n-2\alpha_n\log n\geq L^{1/6}n
    \]
    and
    \[
        qN_1\leq 2R^{1/2}nq^2\leq2(1-L^{1/6})
    \]
    for large $n$, which gives by Lemma~\ref{lem:alg1} and Lemma~\ref{lem:alg2}
    \[
        L^{2/3}nq^2\leq L^{1/6}qN_1\leq q_2=1-(1-q)^{N_1}\leq qN_1\leq R^{1/2}nq^2\,.
    \]
    By Lemma~\ref{lem:chernoff}, we obtain
    \[
        \Pr(N_2\leq R^{1/2}n_2q_2\leq R(nq)^2\mid E_1)\geq 1-o(1)
    \]
    and
    \[
        \Pr(N_2\geq L^{1/6}n_2q_2\geq L(nq)^2\mid E_1)\geq1-o(1)\,.
    \]
    
    \paragraph{Step 2.} We prove $\Pr(E_d\mid E_1,\cdots, E_{d-1})\geq 1-o(\frac{1}{\log n})$ for $3\leq d\leq d^*-2$ by generalizing Step 1. Suppose that under $E_1,\cdots,E_{d-1}$, we condition on $N_0,\cdots,N_{d-1}$. We first obtain a lower bound on $n_d$ and $q_d$.
    \begin{claim}
        Under $E_{d-1}$, we have $n_d\geq L^{1/(3\ell^*)}n$ and $q_d\geq L^{1/(3\ell^*)}qN_{d-1}$.
    \end{claim}
    To prove the claim, we need an upper bound of $L^{1/(3\ell^*)}$ to make it easier to work with. By \eqref{eqn:lbound} we get
    \begin{equation}\label{eqn:ubl}
        L^{1/(3\ell^*)}\leq1-\frac{C}{9\ell^*\log\log n}=1-\frac{C\log(nq)}{9\log n\log\log n}\leq1-\frac{C}{18\log n}
    \end{equation}
    for large enough $n$, since $nq=\Omega(\log n)$.
    
    Now we bound $n_d$. Under $E_{d-1}$, we have that
    \[
        \sum_{k=0}^{d-1}N_k\leq 1+Rnq+\sum_{k=2}^{d-1}R(R^{1/d^*}nq)^k\leq R^2\left(\sum_{k=0}^{d-1}(nq)^k\right)\leq 2R^2(nq)^{d-1}\,.
    \]
    From \eqref{eqn:delta-sandwich} we have $d\leq d^*-2<\ell^*-1$, so we can further bound this as
    \[
        2R^2(nq)^{d-1}\leq2R^2(nq)^{\ell^*-2}=\frac{2R^2}{(nq)^{2}}\cdot n\,.
    \]
    
    We want to show $\frac{2R^2}{(nq)^{2}}\leq 1-L^{1/(3\ell^*)}$, and by \eqref{eqn:ubl} it is enough to show
    \[
        \frac{2R^2}{(nq)^{2}}\leq\frac{C}{18\log n}
    \]
    for sufficiently large $n$, which easily follows from $nq=\Omega(\log n)$.
    
    To lower bound $q_d$, we use \eqref{eqn:lambda-sandwich} to observe that
    \[
        qN_{d-1}\leq qR^2(nq)^{d-1}\leq qR^2(nq)^{d^*-3}=R^2(nq)^{d^*-\ell^*-2}\leq\frac{R^2}{nq(\log\log n)^2}\,.
    \]
    Similar to the preceding paragraph, we have for sufficiently large $n$ that
    \[
        \frac{R^2}{nq(\log\log n)^2}\leq\frac{C}{9\log n}\leq2(1-L^{1/(3\ell^*)})\,.
    \]
    Hence, by Lemma~\ref{lem:alg1} and Lemma~\ref{lem:alg2}, we obtain
    \[
        q_d=1-(1-q)^{N_{d-1}}\geq 1-e^{-qN_{d-1}}\geq L^{1/(3\ell^*)}qN_{d-1}
    \]
    which proves the claim.

    Combining the two lower bounds, we have under $E_{d-1}$ that
    \begin{equation}\label{eqn:nqbound}
        \E[N_d\mid N_0,\cdots,N_{d-1}]=n_dq_d\geq L^{2/(3\ell^*)}nqN_{d-1}\geq L^{2/(3\ell^*)}\cdot L^{1+(d-3)/\ell^*}(nq)^d\,.
    \end{equation}
    Now we are ready to achieve bounds for the probability $\Pr(E_d\mid E_1,\cdots, E_{d-1})$. Since $n_d\leq n$ and $q_d\leq qN_{d-1}$ by Lemma~\ref{lem:alg1}, Lemma~\ref{lem:chernoff} gives
    \[
        \Pr(N_d\leq R^{1/\ell^*}nqN_{d-1}\leq R^{1+(d-2)/\ell^*}(nq)^d\mid E_1,\cdots, E_{d-1})\geq 1-\exp\left(-\frac{(R^{1/\ell^*}-1)^2n_dq_d}{3}\right)\,.
    \]
    Note that by \eqref{eqn:rbound}
    \[
        R^{1/\ell^*}-1\geq\frac{C}{6\ell^*\log\log n}\geq\frac{C}{12\log n}
    \]
    and by \eqref{eqn:nqbound}
    \[
        n_qd_q\geq L^{2/(3\ell^*)}\cdot L^{1+(d-3)/\ell^*}(nq)^d\geq L^2(nq)^3=\Omega\left((\log n)^3\right)\,.
    \]
    Thus, we arrive at
    \[
        \Pr(N_d\leq R^{1+(d-2)/\ell^*}(nq)^d\mid E_1,\cdots,E_{d-1})\geq 1-e^{-\Omega(\log n)}\geq1-o\left(\frac{1}{\log n}\right)\,.
    \]
    Similarly, from \eqref{eqn:nqbound},
    \[
        \Pr(N_d\geq L^{1/(3\ell^*)}n_dq_d\geq L^{1+(d-2)/\ell^*}(nq)^d\mid E_1,\cdots,E_{d-1})\geq1-\exp\left(-\frac{(1-L^{1/(3\ell^*)})^2n_dq_d}{2}\right)\,.
    \]
    Using \eqref{eqn:ubl}, we get a similar bound
    \[
        \Pr(N_d\geq L^{1+(d-2)/\ell^*}(nq)^d\mid E_1,\cdots, E_{d-1})\geq1-e^{\Omega(\log n)}\geq1-o\left(\frac{1}{\log n}\right)\,.
    \]

    \paragraph{Step 3.} For $d=d^*-1$ we prove $\Pr(E_d\mid E_1,\cdots, E_{d-1})\geq 1-o(1)$, and to this end we need $n_d>L^{1/3}n$ and $q_d>L^{1/3}qN_{d-1}$. The proof is almost identical, but since it is important to see what could possibly break down if $d=d^*$, we include a proof.

    Bounding $n_d$ is nearly identical; $\sum_{k=0}^{d-1}N_k\leq 2R^2(nq)^{d-1}$ still holds. By \eqref{eqn:delta-sandwich} we have
    \[
        2R^2(nq)^{d-1}\leq2R^2(nq)^{\ell^*-1}=\frac{2R^2}{nq}\cdot n\,.
    \]
    By \eqref{eqn:lbound} it follows that $\frac{2R^2}{nq}\leq 1-L^{1/3}$ for large $n$. This gives $n_d\geq L^{1/3}n$.

    For $q_d$, note that by \eqref{eqn:lambda-sandwich}
    \[
        qN_{d-1}\leq qR^2(nq)^{d^*-2}=R^2(nq)^{d^*-\ell^*-1}\leq\frac{R^2}{(\log\log n)^2}\,.
    \]
    We have $\frac{R^2}{(\log\log n)^2}\leq 2(1-L^{1/3})$, which implies
    \[
        q_d\geq L^{1/3}qN_{d-1}\,.
    \]
    Now by the same argument with Step 2 it is easy to see that
    \[
        \Pr(E_{d}\mid E_1,\cdots, E_{d-1})\geq 1-o(1)
    \]
    for $d=d^*-1$.
    
    Combining the three steps gives the desired result.
\end{proof} 

We are now ready to prove Theorem~\ref{thm:main1}. In fact, we prove a slightly stronger result which turns out to be useful in Section~\ref{sec:gibbs}.

\begin{theorem}\label{thm:main1-tight}
    For any constant $C>0$, we have
    \[
        \left|\frac{N_{d^*}}{(1-\lambda_n)n}-1\right|<\frac{C}{\log\log n}
    \]
    and
    \[
        \left|\frac{N_{d^*+1}}{n}-\lambda_n\right|<\frac{C}{\log\log n}
    \]
    with probability at least $1-o(1)$.
\end{theorem}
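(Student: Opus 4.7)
The plan is to sharpen the BFS concentration argument of Proposition~\ref{prop:conc} at the two critical layers $d^*$ and $d^*+1$, tracking the binomial parameters with relative precision $1/\log\log n$. Given $C>0$, fix a small constant $C_0 \ll C$ and apply Proposition~\ref{prop:conc} with $\delta = C_0/\log\log n$: with probability $1-o(1)$, $N_d \in [(1-\delta)(nq)^d, (1+\delta)(nq)^d]$ for all $d<d^*$. Condition on this event and on the filtration $(N_d)_{d<d^*}$ throughout.

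Under this conditioning, $N_{d^*} \sim \Binom(n_{d^*}, q_{d^*})$. The defining property $(nq)^{d^*-1} < n/(\log\log n)^2$ gives $\sum_{d<d^*} N_d \leq 2(1+\delta)(nq)^{d^*-1} = O(n/(\log\log n)^2)$, so $n_{d^*} = n(1 + O((\log\log n)^{-2}))$. For $q_{d^*} = 1 - (1-q)^{N_{d^*-1}}$, expanding $\log(1-q) = -q(1+O(q))$ and using the key identity $q(nq)^{d^*-1} = (nq)^{d^*}/n = \log(1/\lambda_n)$ from \eqref{eqn:lambda} yields
\[
q_{d^*} = 1 - \lambda_n^{1+\tilde\delta}, \qquad |\tilde\delta| \leq \delta + O(q) = O(1/\log\log n).
\]

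The main technical obstacle is showing that $|q_{d^*} - (1-\lambda_n)| \leq C'(1-\lambda_n)/\log\log n$ holds uniformly over the entire allowed range $\lambda_n \in (e^{-O(\log n/(\log\log n)^2)}, e^{-(\log\log n)^{-2}}]$ from \eqref{eqn:lambda-sandwich}. This is delicate because the required \emph{relative} precision is most stringent precisely when $1-\lambda_n$ is small. I would split into three cases: (i) if $\lambda_n$ is close to $1$, Taylor-expand $\lambda_n^{1+\tilde\delta} = (1-\mu)^{1+\tilde\delta} = 1 - \mu - \tilde\delta\mu + O(\mu^2)$ for $\mu := 1-\lambda_n$, giving $|q_{d^*}-(1-\lambda_n)| \leq \tilde\delta\mu + O(\mu^2) = O(\mu/\log\log n)$; (ii) if $\lambda_n$ is bounded away from $0$ and $1$, use $|q_{d^*}-(1-\lambda_n)| = \lambda_n|1-\lambda_n^{\tilde\delta}| \leq 2\tilde\delta\lambda_n\log(1/\lambda_n) \leq 2\tilde\delta/e$, which is $O(1/\log\log n)$ and matches the target since $1-\lambda_n = \Theta(1)$; (iii) if $\tilde\delta\log(1/\lambda_n) \geq 1$ (so $\lambda_n$ is small), bound $|q_{d^*}-(1-\lambda_n)| \leq 2\lambda_n^{1-\tilde\delta}$ and note $(1-\tilde\delta)\log(1/\lambda_n) = \Omega(\log\log n)$ in this regime, hence $\lambda_n^{1-\tilde\delta} \leq (\log n)^{-\Omega(1)} = o(1/\log\log n)$ while $1-\lambda_n = \Theta(1)$. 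Combining, $n_{d^*}q_{d^*} = (1-\lambda_n)n(1 + O(1/\log\log n))$, and the multiplicative Chernoff bound (Lemma~\ref{lem:chernoff}) with relative deviation $C/(3\log\log n)$ yields the first claim, using that $n_{d^*}q_{d^*} \geq cn(1-\lambda_n) \geq cn/(2(\log\log n)^2)$ makes the failure probability $\exp(-\Omega(n/(\log\log n)^4)) = o(1)$.

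For $N_{d^*+1}$, further condition on the just-established bound on $N_{d^*}$. Then $n_{d^*+1} = n - \sum_{d \leq d^*}N_d = \lambda_n n + O(n/\log\log n)$. For $q_{d^*+1}$, note $qN_{d^*} \geq cnq(1-\lambda_n) \geq c\alpha_n\log n/(\log\log n)^2$, which grows faster than $\log\log n$, so $q_{d^*+1} = 1 - e^{-\omega(\log\log n)} = 1 - o(1/\log\log n)$. Thus $n_{d^*+1}q_{d^*+1} = \lambda_n n + O(n/\log\log n)$, and an additive Chernoff/Hoeffding bound on $N_{d^*+1}$ with deviation $Cn/(2\log\log n)$ (failure probability $\exp(-\Omega(n/(\log\log n)^2))$) completes the proof.
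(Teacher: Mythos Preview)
Your proof is correct and follows essentially the same structure as the paper's: invoke Proposition~\ref{prop:conc} with $\delta=O(1/\log\log n)$, control $n_{d^*}$ and $q_{d^*}$ to relative precision $O(1/\log\log n)$, apply Chernoff, and then deduce the $N_{d^*+1}$ statement from the first part plus the observation that almost nothing lies beyond depth $d^*+1$.

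The one place where the paper is cleaner is your three-case analysis for $|q_{d^*}-(1-\lambda_n)|\le C'(1-\lambda_n)/\log\log n$. The paper sidesteps this entirely: after obtaining $q_{d^*}\in[1-\lambda_n^{1-\delta_2},\,1-\lambda_n^{1+\delta_2}+nq^2]$, it applies Bernoulli's inequality (Lemma~\ref{lem:alg1}) to $1-(1-(1-\lambda_n))^{1\pm\delta_2}$ directly, yielding $q_{d^*}\in[(1-\delta_2)(1-\lambda_n),\,(1+\delta_2)(1-\lambda_n)+nq^2]$ uniformly over all $\lambda_n\in(0,1)$, with no case split. Your case~(i) also needs a small repair: the Taylor remainder you wrote as $O(\mu^2)$ is really $O(|\tilde\delta|\mu^2)$ (the second-order coefficient is $\tfrac{(1+\tilde\delta)\tilde\delta}{2}$), and without that extra $|\tilde\delta|$ factor the bound does not beat $\mu/\log\log n$ when $\mu=1-\lambda_n$ is bounded away from~$0$.
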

\begin{proof}
    For $0<\delta_1<1$ and $0<\delta_2<1$ that may depend on $n$, consider an event
    \[
        E=\{|\Gamma_{\leq d^*-1}|\leq\delta_1n\}\cap\left\{ (1-\delta_2)(nq)^{d^*-1}\leq N_{d^*-1}\leq(1+\delta_2)(nq)^{d^*-1} \right\}\,.
    \]
    Now we condition on $N_{d^*-1}$, $|\Gamma_{\leq d^*-1}|$, and the event $E$. Recall that
    \[
        N_{d^*}\sim\Binom(n_{d^*}, q_{d^*})
    \]
    where $n_{d^*}=n-|\Gamma_{\leq d^*-1}|$ and $q_{d^*}=1-(1-q)^{N_{d^*-1}}$. First, note that by Lemma~\ref{lem:alg1}
    \[
        q_{d^*}\geq1-e^{-qN_{d^*-1}}\geq 1-e^{-(1-\delta_2)q(nq)^{d^*-1}}=1-e^{-(1-\delta_2)(nq)^{d^*-\ell^*}}=1-\lambda_n^{1-\delta_2}\,.
    \]
    By Lemma~\ref{lem:alg1} we have
    \begin{equation}
        q_{d^*}\geq1-\lambda_n^{1-\delta_2}=1-(1-(1-\lambda_n))^{1-\delta_2}\geq1-(1-(1-\delta_2)(1-\lambda_n))=(1-\lambda_n)(1-\delta_2)\,.
    \end{equation}
    For an upper bound on $q_{d^*}$, we use the fact $1-q\geq e^{-q}(1-q^2)$ and again apply Lemma~\ref{lem:alg1} to get
    \[
        q_{d^*}\leq 1-e^{-qN_{d^*-1}}(1-q^2)^{N_{d^*-1}}\leq 1-e^{-qN_{d^*-1}}(1-q^2N_{d^*-1})\leq1-\lambda_n^{1+\delta_2}+nq^2\,.
    \]
    Similar to the lower bound, Lemma~\ref{lem:alg1} gives
    \[
        q_{d^*}\leq1-\lambda_n^{1+\delta_2}+nq^2=1-(1-(1-\lambda_n))^{1+\delta_2}+nq^2\leq(1+\delta_2)(1-\lambda_n)+nq^2
    \]
    Thus, we have
    \[
        n_{d^*}q_{d^*}\geq n(1-\lambda_n)(1-\delta_1)(1-\delta_2)\geq n(1-\lambda_n)(1-(\delta_1+\delta_2))
    \]
    and
    \[
        n_{d^*}q_{d^*}\leq n(1-\lambda_n)(1+\delta_2)+n^2q^2=n(1-\lambda_n)\left(1+\delta_2+\frac{nq^2}{1-\lambda_n}\right)\,.
    \]
    Note that for large enough $n$,
    \[
        1-\lambda_n\geq1-e^{-1/(\log\log n)^2}\geq\frac{1}{2(\log\log n)^2}
    \]
    by, e.g., Lemma~\ref{lem:alg2}. Thus,
    \[
        \frac{nq^2}{1-\lambda_n}\leq\frac{\alpha_n^2(\log n)^2(\log\log n)^2}{n}\,.
    \]
    We may now invoke Proposition~\ref{prop:conc} with $\delta_1=C_1/\log\log n$ and $\delta_2=C_2/\log\log n$ which gives that for any constant $C_3>0$
    \[
        \left|n_{d^*}q_{d^*}-(1-\lambda_n)n\right|<\frac{C_3(1-\lambda_n)n}{\log\log n}
    \]
    with probability $1-o(1)$. Using Lemma~\ref{lem:chernoff} we can argue that the concentration of $\Binom(n_{d^*},q_{d^*})$ is much stronger than the RHS, leading to
    \[
        \left|\frac{N_{d^*}}{(1-\lambda_n)n}-1\right|<\frac{C}{\log\log n}
    \]
    with probability $1-o(1)$.

    To prove the second part of the theorem, we note that with high probability
    \[
        N_{d^*}\geq\frac{1-\lambda_n}{2}n\geq\frac{1}{2}(1-e^{-1/(\log\log n)^2})n\geq\frac{n}{4(\log\log n)^2}\,.
    \]
    Conditioned on $\Gamma_{\leq d^*}$, the probability that a vertex in $V\setminus\Gamma_{\leq d^*}$ is not connected to any of $\Gamma_{d^*}$ is at most
    \begin{equation}\label{eqn:dp2small}
        (1-q)^{N_{d^*}}\leq e^{-qN_{d^*}}\leq e^{-\frac{nq}{4(\log\log n)^2}}
    \end{equation}
    which is far less than $o(1/\log\log n)$. Thus, the second part follows from the first part.
    
\end{proof}

We note that the asymptotics for $N_{d^*}$ is somewhat different from $N_d$ for $d\leq d^*-1$. Nevertheless, if we only consider a subsequence where $\lambda_n\to\lambda=1$, then $N_{d^*}$ follows a similar asymptotics, though slightly weaker than Proposition \ref{prop:conc}. We state this as the following lemma, which turns out to be useful in a number of proofs.

\begin{lemma}\label{lem:dstar}
    Consider a subsequence of $n\in\mathbb{Z}^+$ such that $\lambda_n\to\lambda=1$. Then we have
    \[
        \frac{N_{d^*}}{(nq)^{d^*}}\pto1\,.
    \]
\end{lemma}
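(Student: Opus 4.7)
The plan is to deduce the result directly from Theorem~\ref{thm:main1-tight} by converting its statement, which expresses $N_{d^*}$ in terms of the proxy $(1-\lambda_n)n$, into a statement about $(nq)^{d^*}$ using the definition of $\lambda_n$. Since we are along a subsequence where $\lambda_n \to 1$, by definition \eqref{eqn:lambda} we have $\log(1/\lambda_n) = (nq)^{\Delta_n} \to 0$, so the Taylor expansion of $1 - e^{-x}$ around $x = 0$ yields
\[ 1 - \lambda_n = 1 - e^{-(nq)^{\Delta_n}} = (nq)^{\Delta_n}(1 + o(1)). \]

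Next, the identity $(nq)^{\ell^*_n} = n$ from \eqref{eqn:lstar} together with $\Delta_n = d^* - \ell^*_n$ gives
\[ (1-\lambda_n)n = (nq)^{\Delta_n}(1+o(1)) \cdot (nq)^{\ell^*_n} = (nq)^{d^*}(1 + o(1)). \]
Combining this with the conclusion of Theorem~\ref{thm:main1-tight}, which states that $N_{d^*}/((1-\lambda_n)n) \to 1$ in probability with a quantitative error of $O(1/\log\log n)$, immediately gives $N_{d^*}/(nq)^{d^*} \pto 1$ along the subsequence.

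There is no real obstacle here beyond verifying that the $o(1)$ correction from the Taylor expansion is small enough to be absorbed into the $O(1/\log\log n)$ error from Theorem~\ref{thm:main1-tight}; this is fine because $(nq)^{\Delta_n} \to 0$ on the subsequence, so the correction is itself $o(1)$. The reason the conclusion is slightly weaker than the sharp concentration of Proposition~\ref{prop:conc} (which applies only for $d < d^*$) is that $N_{d^*}$ inherits its fluctuations from the last BFS layer where a constant fraction of the remaining vertices can still be absorbed, but in the regime $\lambda_n \to 1$ this absorbed fraction is vanishing, so the looser form of concentration in Theorem~\ref{thm:main1-tight} suffices to recover the natural rate $(nq)^{d^*}$.
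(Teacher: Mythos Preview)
Your proof is correct and follows essentially the same route as the paper: both derive the result from Theorem~\ref{thm:main1-tight} together with the elementary fact that $(1-\lambda_n)/\log(1/\lambda_n)\to 1$ when $\lambda_n\to 1$, which is exactly your Taylor expansion step. The paper is slightly more direct, writing $(nq)^{d^*}=n\log(1/\lambda_n)$ straight from \eqref{eqn:lambda} rather than routing through $\Delta_n$ and $(nq)^{\ell^*_n}=n$, but the argument is the same.
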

\begin{proof}
    We have
    \[
        \frac{N_{d^*}}{(nq)^{d^*}}=\frac{N_{d^*}}{n\log(1/\lambda_n)}=\frac{N_{d^*}}{n(1-\lambda_n)}\cdot\frac{1-\lambda_n}{\log(1/\lambda_n)}\,.
    \]
    Since $\frac{1-\lambda_n}{\log(1/\lambda_n)}\to1$ if $\lambda_n\to1$, the conclusion follows from Theorem~\ref{thm:main1-tight}.
\end{proof}

\subsection{Shortest paths between two fixed vertices}

We develop some basic properties of sparse random graphs, which are more or less standard. These are useful in Section~\ref{sec:unstable} where we prove that the overlap of uniformly random shortest paths does not concentrate.

\begin{proposition}\label{prop:localtree}
    Let $d\leq\frac{d^*-1}{2}$. Then with probability $1-o(1)$, the shortest path DAG is locally a tree up to distance $d$. In other words, with probability $1-o(1)$, there is a unique shortest path from $1$ to any vertex with distance at most $d$.
\end{proposition}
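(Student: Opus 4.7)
The plan is to show that with probability $1 - o(1)$ every vertex $v$ with $\mathsf{d}_G(1,v) \leq d$ satisfies $|\mathsf{par}_G(v)| = 1$; by induction on depth, this forces the shortest path from $1$ to $v$ to be unique, which is precisely the statement that the shortest path DAG is locally a tree up to distance $d$. So the goal reduces to bounding the expected number of ``bad'' vertices $v$ with $\mathsf{d}_G(1,v) = \ell$ and $|\mathsf{par}_G(v)| \geq 2$, summed over $1 \leq \ell \leq d$.

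To carry this out I would run the BFS exploration of $G$ from the source, exactly as in the proof of Proposition~\ref{prop:conc}. Conditional on the level sets $\Gamma_{\leq \ell - 1}$, the edges from $\Gamma_{\ell - 1}$ to $V \setminus \Gamma_{\leq \ell - 1}$ are independent $\mathrm{Bernoulli}(q)$, so for any fixed $v \notin \Gamma_{\leq \ell - 1}$ the probability that $v$ has at least two neighbors in $\Gamma_{\ell-1}$ is at most $\binom{N_{\ell-1}}{2} q^2 \leq (N_{\ell-1} q)^2/2$. Working on the high-probability event supplied by Proposition~\ref{prop:conc} that $N_{\ell-1} \leq 2(nq)^{\ell - 1}$ for every $\ell - 1 < d^*$, the expected number of bad vertices appearing at level $\ell$ is at most $n \cdot (N_{\ell-1} q)^2 / 2 \leq 2(nq)^{2\ell}/n$.

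Summing over $\ell = 2, \ldots, d$: consecutive terms grow by a factor of $(nq)^2$, which tends to infinity by \eqref{eqn:alpha-bound}, so the sum is dominated up to an absolute constant by its final term. The key input, provided by the hypothesis $d \leq (d^*-1)/2$, is the bound $2d \leq d^* - 1$; combined with the defining inequality $(nq)^{d^* - 1} < n/(\log\log n)^2$ from \eqref{eqn:dstar}, this yields
\[
    \sum_{\ell = 2}^{d} \frac{2(nq)^{2\ell}}{n} \;=\; O\!\left(\frac{(nq)^{2d}}{n}\right) \;\leq\; O\!\left(\frac{(nq)^{d^* - 1}}{n}\right) \;=\; O\!\left(\frac{1}{(\log\log n)^2}\right) \;=\; o(1).
\]
Markov's inequality then gives the absence of any bad vertex with probability $1 - o(1)$, and absorbing the $o(1)$ failure probability of the good event from Proposition~\ref{prop:conc} completes the argument.

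The only mild subtlety is that the per-level bound is conditional on $\Gamma_{\leq \ell - 1}$, so the Markov step should be executed inside the good event where $N_{\ell-1}$ is controlled uniformly for all $\ell - 1 < d^*$; outside this event we use the trivial bound $n$ on the bad-vertex count, which contributes negligibly since the event itself has vanishing probability. This is routine once the quantitative concentration bound from Proposition~\ref{prop:conc} is in hand, so I do not expect a genuinely hard step — the proposition is essentially an immediate consequence of the locally tree-like nature of sparse Erd\"os--R\'enyi graphs at depth well below $\ell^*$.
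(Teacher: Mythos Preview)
Your proposal is correct and follows essentially the same argument as the paper: both condition on $\Gamma_{\leq \ell-1}$, bound the probability that a vertex has two or more parents by $(N_{\ell-1}q)^2$, invoke Proposition~\ref{prop:conc} to control $N_{\ell-1}$, sum the resulting geometric series, and use $2d\le d^*-1$ together with \eqref{eqn:dstar} to get an $o(1)$ bound. The only cosmetic difference is that the paper phrases the final step as a union bound on probabilities whereas you use Markov's inequality on the expected number of bad vertices.
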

\begin{proof}
    We prove that for all $1\leq k\leq d$ every vertex in $\Gamma_{k}$ is connected to at most one vertex in $\Gamma_{k-1}$, simultaneously with high probability. Conditioned on $\Gamma_1,\cdots,\Gamma_{k-1}$, the number of edges between $v\in\Gamma_{\geq k}$ and $\Gamma_{k-1}$ follows $\Binom(N_{k-1},q)$, so the probability that this number is greater than $1$ is
    \[
        1-(1-q)^{N_{k-1}}-qN_{k-1}(1-q)^{N_{k-1}-1}\,.
    \]
    Using Lemma~\ref{lem:alg2}, this can be bounded by
    \[
        \begin{split}
            1-(1-q)^{N_{k-1}}-qN_{k-1}(1-q)^{N_{k-1}-1}&\leq qN_{k-1}-qN_{k-1}(1-q)^{N_{k-1}-1}\\
            &\leq q^2N_{k-1}^2\,.
        \end{split}
    \]
    Hence, the probability that none of the vertices in $\Gamma_{\geq k}$ have more than one connection to $\Gamma_{k-1}$ is at least
    \[
        (1-q^2N_{k-1}^2)^{|\Gamma_{\geq k}|}\geq (1-q^2N_{k-1}^2)^n\geq1-nq^2N_{k-1}^2\,.
    \]
    Thus, the probability that the shortest path DAG is locally a tree up to distance $d$ is at least
    \[
        \prod_{k=1}^d(1-nq^2N_{k-1}^2)\geq1-nq^2\sum_{k=1}^dN_{k-1}^2\,.
    \]
    By Proposition~\ref{prop:conc}, with overwhelming probability this is further bounded by
    \[
        1-nq^2\sum_{k=1}^dN_{k-1}^2\geq1-nq^2\sum_{k=1}^d2(nq)^{2(k-1)}\geq1-4nq^2(nq)^{2(d-1)}=1-4q(nq)^{2d-1}\,.
    \]
    By \eqref{eqn:lambda-sandwich} we have
    \[
        q(nq)^{2d-1}\leq q(nq)^{d^*-2}=(nq)^{d^*-\ell^*-1}\leq\frac{1}{(\log\log n)^2}=o(1)
    \]
    which proves the result.
\end{proof}

\begin{corollary}\label{cor:no-short-cycle}
    With probability $1-o(1)$, there is no cycle of length smaller than $d^*$ that contains a fixed vertex $1$.
\end{corollary}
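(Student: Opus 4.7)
The plan is to establish the corollary by a direct first-moment calculation, which in fact parallels the one used at the end of the proof of Proposition~\ref{prop:localtree}. The key observation is that cycles of length $k$ through the vertex $1$ are very rare in $G_n\sim\mathcal{G}(n,q_n)$ whenever $k<d_n^*$, so the union over all such $k$ is still $o(1)$ in expectation.

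First, I would count the expected number of simple cycles of length $k$ through vertex $1$: there are at most $\frac{(n-1)!}{2(n-k)!}\leq\frac{n^{k-1}}{2}$ such potential cycles in the complete graph, and each is present in $G_n$ independently with probability $q_n^k$. Thus
\[
    \E[\#\{\text{cycles of length }k\text{ through }1\}]\leq\frac{(nq_n)^k}{2n}.
\]
Summing over $3\leq k\leq d_n^*-1$ and using $nq_n\to\infty$ to bound the geometric series by its last term,
\[
    \E[\#\{\text{cycles of length }<d_n^*\text{ through }1\}]=O\!\left(\frac{(nq_n)^{d_n^*-1}}{n}\right)=O\!\left((nq_n)^{d_n^*-\ell_n^*-1}\right),
\]
where the final step uses $(nq_n)^{\ell_n^*}=n$.

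Next, I would invoke the refined sandwich inequality \eqref{eqn:delta-sandwich}, which gives $d_n^*-\ell_n^*-1<-\frac{2\log\log\log n}{\log\alpha_n+\log\log n}$. Combining with $\log(nq_n)=\log\alpha_n+\log\log n$, this yields
\[
    (nq_n)^{d_n^*-\ell_n^*-1}<\exp(-2\log\log\log n)=(\log\log n)^{-2}=o(1).
\]
Markov's inequality then concludes: with probability $1-o(1)$, no cycle of length less than $d_n^*$ through $1$ exists.

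The main obstacle, while minor, is ensuring the error bound genuinely vanishes. A naive calculation would yield only $O((nq_n)^{\Delta_n-1})$, which could be $O(1)$ if $\Delta_n$ were allowed to approach $1$ arbitrarily closely; the key role of \eqref{eqn:delta-sandwich} is that it quantitatively separates $\Delta_n$ from $1$ by a margin of order $\log\log\log n/\log\log n$, which is precisely what produces the polylogarithmic decay needed in the final step.
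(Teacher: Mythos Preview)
Your first-moment argument is correct and gives a cleaner route to the corollary than the paper's. The paper instead leverages the local tree structure established in Proposition~\ref{prop:localtree}: it sets $d\approx d^*/2$ (splitting on the parity of $d^*$), uses that the shortest-path DAG is a tree up to depth $d$, and then separately shows the full induced graph $G[\Gamma_{\leq d-1}]$ is a tree by bounding $\sum_{k=1}^{d-1}\binom{N_k}{2}q\pto 0$; combining these two facts rules out any cycle through $1$ of length $<2d+1=d^*$. Your direct count bypasses this structural detour entirely, and the endgame is essentially the same estimate $(nq_n)^{d_n^*-\ell_n^*-1}=o(1)$---which the paper also reaches, though it cites \eqref{eqn:lambda-sandwich} rather than \eqref{eqn:delta-sandwich} (either works; in fact \eqref{eqn:lambda-sandwich} gives the $(\log\log n)^{-2}$ bound in one step). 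The paper's approach has the minor advantage of reusing already-established machinery, but yours is more self-contained and arguably the natural proof of this particular statement.
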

\begin{proof}
    First, suppose that $d^*$ is odd. Then by Proposition~\ref{prop:localtree} the shortest path DAG is locally a tree up to distance $d=\frac{d^*-1}{2}$. In addition to this, we prove that the whole graph $G$ is locally a tree up to distance $d-1$ with high probability, which would follow by showing that
    \[
        \sum_{k=1}^{d-1}\binom{N_k}{2}q\pto0\,.
    \]
    Indeed, by \eqref{eqn:lambda-sandwich} and Proposition~\ref{prop:conc}, we get
    \[
        \sum_{k=1}^{d-1}\binom{N_k}{2}q\leq\binom{2N_{d-1}}{2}q\leq 2qN_{d-1}^2\leq 8q(nq)^{2d-2}\leq8(nq)^{d^*-\ell^*-2}\leq\frac{8}{nq(\log\log n)^2}
    \]
    with probability $1-o(1)$, as desired. Under these events, the shortest possible cycle containing $1$ has length at least $2d+1=d^*$.

    A similar argument works when $d^*$ is even. In this case, we set $d=\frac{d^*}{2}-1$ and deduce that $G$ is locally a tree up to distance $d$, so under the good event the shortest possible cycle has length at least $2(d+1)=d^*$. The details are omitted.
\end{proof}

Proposition~\ref{prop:localtree} gives a nearly complete picture of the union of the shortest paths from fixed vertices $1$ to $2$. Note that by Proposition~\ref{prop:conc} the distance between $1$ and $2$ is either $d^*$ or $d^*+1$ with overwhelming probability. When $d^*$ is odd, $d=\frac{d^*-1}{2}$ is an integer, so if we run Dijkstra's algorithm (or simply BFS) then the shortest path DAGs from $1$ and from $2$ will be disjoint trees up to distance $d$ with high probability, so what happens between $\Gamma_{d}(1)$ and $\Gamma_{d}(2)$ will determine the distance between $1$ and $2$ and the set of shortest paths between them. If there exists an edge between them, the distance will be $d^*$ and otherwise $d^*+1$.

Now the most interesting case is when $0<\lambda<1$, i.e. the distance can be either $d^*$ or $d^*+1$, both with non-negligible probability. Then the number of edges between $\Gamma_d(1)$ and $\Gamma_d(2)$ will follow the distribution
\begin{equation}\label{eqn:nd12}
    \Binom(N_{d}(1)\cdot N_d(2),q)\,.
\end{equation}
By Proposition~\ref{prop:conc}, note that
\[
    \frac{N_d(1)\cdot N_d(2)\cdot q}{(nq)^{2d}q}\pto1\,.
\]
Since $(nq)^{2d}q=(nq)^{d^*-1}q=\log(1/\lambda_n)\to\log(1/\lambda)$, it follows that the limiting distribution of (\ref{eqn:nd12}) is in fact $\Pois(\log(1/\lambda))$. Hence, the probability that there exists a unique edge between $\Gamma_d(1)$ and $\Gamma_d(2)$ converges in probability to $\lambda\log(1/\lambda)$. Since the shortest path DAGs up to distance $d$ were trees, this actually implies there is a unique path between $1$ and $2$.

We summarize our discussion to the following theorem.
\begin{theorem}\label{thm:uniquepath}
    Suppose $\{\alpha_n\}_{n\in\mathbb{Z}^+}$ is chosen so that $d^*$ is odd and $\lambda_n\to\lambda\in(0,1)$. Then the probability that there is a unique shortest path of length $d^*$ from $1$ to $2$ converges to $\lambda\log(1/\lambda)$.
\end{theorem}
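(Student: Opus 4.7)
The plan is to reduce the unique-shortest-path event to counting edges between two BFS ``spheres'' and invoke a Binomial-to-Poisson approximation. Let $d := (d^*-1)/2$, which is an integer by the assumption that $d^*$ is odd. By Proposition~\ref{prop:localtree} applied to both source vertices $1$ and $2$ (the proof is symmetric in the choice of source), with probability $1-o(1)$ every vertex of $\Gamma_d(1)$ has a unique shortest path of length $d$ from $1$, and analogously for $\Gamma_d(2)$ from $2$. Moreover, $\Gamma_{\le d}(1) \cap \Gamma_{\le d}(2) = \emptyset$ with probability $1-o(1)$: any vertex in the intersection would yield a walk, hence a path, of length at most $2d = d^*-1$ from $1$ to $2$, but Theorem~\ref{thm:main1-tight} applied with the fixed target $2$ gives $\Pr(\mathsf{d}_G(1,2) < d^*) = o(1)$.

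Under this good event I would argue that the number of shortest paths from $1$ to $2$ of length $d^*$ equals the number of edges between $\Gamma_d(1)$ and $\Gamma_d(2)$. Indeed, any such shortest path has a unique ``middle'' edge $(u,v)$ with $u \in \Gamma_d(1)$ and $v \in \Gamma_d(2)$, and is determined by extending this edge via the unique length-$d$ paths $1 \to u$ and $v \to 2$ inside the respective BFS trees; conversely every crossing edge produces exactly one such shortest path. Consequently, on the good event the event ``there is a unique shortest path of length $d^*$ from $1$ to $2$'' coincides with the event ``there is exactly one edge between $\Gamma_d(1)$ and $\Gamma_d(2)$.''

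The final step is to condition on the combined BFS exploration from $1$ and $2$ up to depth $d$; this reveals precisely the edges incident to $\Gamma_{\le d-1}(1) \cup \Gamma_{\le d-1}(2)$, and under the disjointness event none of those edges crosses between the two balls. Hence the candidate crossing edges between $\Gamma_d(1)$ and $\Gamma_d(2)$ are untouched and, conditionally on the BFS data, remain independent Bernoulli$(q)$, so their total is distributed as $\Binom(N_d(1) N_d(2), q)$. By Proposition~\ref{prop:conc}, $N_d(1) N_d(2) \cdot q$ concentrates around $(nq)^{2d} q = (nq)^{d^*}/n = \log(1/\lambda_n) \to \log(1/\lambda) \in (0,\infty)$. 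A direct Binomial evaluation (or Le Cam's Poisson approximation, since $q \to 0$ and the mean stays bounded) then gives
\[
\Pr\bigl(\Binom(N_d(1) N_d(2), q) = 1 \bigm| \text{BFS data}\bigr) \pto \log(1/\lambda) \cdot e^{-\log(1/\lambda)} = \lambda \log(1/\lambda),
\]
and integrating over the BFS data against the probability-$(1-o(1))$ good event yields the claim.

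The main obstacle is to verify cleanly that conditioning on the good events (which are themselves statements about the edge set) does not bias the candidate crossing edges. The reassuring observation is that the local-tree conditions from Proposition~\ref{prop:localtree} and the disjointness condition only constrain edges incident to $\Gamma_{\le d-1}(1) \cup \Gamma_{\le d-1}(2)$ or edges among pairs within a single depth-$d$ ball; none of them constrains a $\Gamma_d(1)$-to-$\Gamma_d(2)$ crossing edge. Once this independence is made rigorous (for instance, by sampling the BFS explorations first and only then revealing the crossing edges), the Binomial-to-Poisson step is routine.
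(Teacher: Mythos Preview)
Your proposal is correct and follows essentially the same approach as the paper: set $d=(d^*-1)/2$, use Proposition~\ref{prop:localtree} to get locally tree-like BFS balls from both endpoints, identify shortest paths of length $d^*$ with crossing edges between $\Gamma_d(1)$ and $\Gamma_d(2)$, and apply Poisson approximation to the resulting $\Binom(N_d(1)N_d(2),q)$ count whose mean concentrates at $\log(1/\lambda)$. Your treatment of the conditioning/independence issue is in fact more explicit than the paper's, which leaves that step implicit; the only minor quibble is that $\Pr(\mathsf{d}_G(1,2)<d^*)=o(1)$ follows more directly from Proposition~\ref{prop:conc} (via $|\Gamma_{\le d^*-1}(1)|=o_p(n)$ and exchangeability of vertex $2$) than from Theorem~\ref{thm:main1-tight}.
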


\section{Asymptotics for correlated Erd\"os--R\'enyi graphs}\label{sec:correlated}

In this section, we analyze the setting of \emph{correlated} random graphs relevant to OGP theory. The main result in this section is the computation of the overlap of random shortest path trees in Section~\ref{sec:overlap-trees}. Before we get there, it is helpful to first analyze the behavior of the correlated \emph{distance vectors} and \emph{shortest path DAGs} which we do first.  

\paragraph{Additional notation.} As usual, we drop the subscripts $G^{(1)}=G_n^{(1)}$, $G^{(2)}=G_n^{(2)}$. We write $\Gamma_d^{(1)}=\Gamma_d^{G^{(1)}}$ and $\Gamma_d^{(2)}=\Gamma_d^{G^{(2)}}$ to denote the set of vertices with distance $d$ within the graphs $G^{(1)}$ and $G^{(2)}$, respectively. Similarly, we define $N_d^{(1)}$ and $N_d^{(2)}$. Also, we denote by $\Gamma_d^I$ the intersection $\Gamma_d^{(1)}\cap\Gamma_d^{(2)}$, and $N_d^I=|\Gamma_d^I|$ the cardinality of the intersection.

\subsection{The overlap of the distance vectors}\label{sec:odv}

Recall that as in Theorem~\ref{thm:main1}, the shortest distance distribution heavily depends on the limit of the difference between $d_n^*$ and $\ell_n^*$. For cases 1 and 3, i.e. $\lambda_n=o(1)$ or $\lambda_n=1-o(1)$, then the overlap is quite trivial, because there is one fixed distance $d_n^*$ or $d_n^*+1$ which dominates almost the entire graph. Hence, the most interesting case to study the overlap is case 2, where $\lambda_n\to\lambda\in(0,1)$.

As a reminder, we restate the definition \eqref{eqn:defgamma} of $\gamma_n$:
\[
    \gamma_n:=\rho_n^{d_n^*}\in(0,1)\,.
\]

\begin{theorem}\label{thm:main2}
    Suppose that $\lambda_n\to\lambda$ and $\gamma_n\to\gamma$ with $\lambda,\gamma\in[0,1]$. Then as $n \to \infty$, we have that 
    \[\frac{N_{d_n^*}^I}{n} \pto 1-2\lambda+\lambda^{2-\gamma}, \]
    i.e. the proportion of the vertices that have distance $d^*$ for both $G_n^{(1)}$ and $G_n^{(2)}$ converges to $1-2\lambda+\lambda^{2-\gamma}$ in probability.
\end{theorem}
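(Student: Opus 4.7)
The plan is to extend the single-graph BFS analysis of Section~\ref{sec:dv} to two correlated graphs run in lockstep, inductively tracking the triple $(N_d^{(1)}, N_d^{(2)}, N_d^I)$ down to the critical depth $d_n^*$. The marginal bounds $N_d^{(1)}, N_d^{(2)} \approx (nq_n)^d$ for $d \leq d_n^* - 1$ follow directly from Proposition~\ref{prop:conc} applied to each graph; the work is therefore to establish a matching concentration bound for $N_d^I$ and then evaluate a one-step inclusion--exclusion at $d = d_n^*$.

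For the inductive step, I would condition on the $\sigma$-algebra generated by the joint BFS up to depth $d-1$ and compute, for each vertex $v$ still unexplored in both graphs, the conditional probability that $v \in \Gamma_d^I$. Writing $X$ for the number of $G_n^{(1)}$-edges from $v$ to $\Gamma_{d-1}^{(1)}$ and $Y$ for the number of $G_n^{(2)}$-edges from $v$ to $\Gamma_{d-1}^{(2)}$, inclusion--exclusion gives
\[
\Pr(X \geq 1, Y \geq 1) = 1 - \Pr(X = 0) - \Pr(Y = 0) + \Pr(X = 0, Y = 0).
\]
In the correlated model, for any common parent candidate $u \in \Gamma_{d-1}^I$ the probability of ``no edge to $u$ in either graph'' is $(1-q_n)(1 - q_n(1-\rho_n))$, while candidates in $\Gamma_{d-1}^{(1)} \setminus \Gamma_{d-1}^I$ (resp.\ $\Gamma_{d-1}^{(2)} \setminus \Gamma_{d-1}^I$) contribute an independent factor of $1 - q_n$ each to $\Pr(X = 0, Y = 0)$. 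The delicate point highlighted in Section~\ref{sec:tech-overview} is that this factorization is valid only for parent candidates unexplored in the \emph{other} graph; ``bad'' parents $u \in \Gamma_{d-1}^{(1)} \cap \Gamma_{\leq d-2}^{(2)}$ skew the conditional edge probabilities. I would control these via a first-moment/union-bound argument showing that they form an $o(N_{d-1}^{(1)})$ fraction, then stochastically sandwich $N_d^I$ between two binomial random variables for which standard Chernoff bounds yield concentration.

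Running this induction down to depth $d_n^* - 1$, the inductive estimates give $q_n N_{d^*-1}^{(i)} = (nq_n)^{d^*-\ell^*}(1+o_p(1)) \to \log(1/\lambda)$ and $q_n\rho_n N_{d^*-1}^I \to \gamma\log(1/\lambda)$ (using $\gamma_n = \rho_n^{d^*}$). Substituting into the inclusion--exclusion at $d = d_n^*$ and using $1 - q_n \approx e^{-q_n}$ on the exponents yields the limiting per-vertex probability
\[
\Pr(v \in \Gamma_{d^*}^I \mid \text{BFS at } d^*-1) \to 1 - 2\lambda + \lambda^2 e^{\gamma\log(1/\lambda)} = 1 - 2\lambda + \lambda^{2-\gamma}.
\]
Combining with the fact that the unexplored set has size $(1 - o_p(1))n$ and applying a final Chernoff bound gives $N_{d^*}^I / n \pto 1 - 2\lambda + \lambda^{2-\gamma}$. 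The main obstacle I anticipate is the bad-vertex bookkeeping across the diverging number $d_n^*$ of inductive steps: the sandwich errors must be controlled uniformly and fed consistently into the joint edge probabilities at each subsequent depth. This is morally analogous to the delicate single-graph bound of Proposition~\ref{prop:conc}, but more intricate because errors in $N_{d'}^I$ at earlier depths $d' < d$ enter nonlinearly through the exponent $q_n\rho_n N_{d-1}^I$ in the joint no-edge probability.
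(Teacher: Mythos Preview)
Your proposal is correct and follows essentially the same route as the paper. The paper packages your inductive sandwich for $N_d^I$ into a separate Proposition~\ref{prop:rhoconc} (the correlated analogue of Proposition~\ref{prop:conc}), using exactly the ``truly fresh'' sets $\tilde\Gamma_{d-1}^{(i)}=\Gamma_{d-1}^{(i)}\setminus\Gamma_{\le d-2}^{(3-i)}$ to handle the bad parents and obtain the matching upper/lower bounds \eqref{eqn:qrbound}--\eqref{eqn:qlbound}; the final inclusion--exclusion at $d=d^*$ is then evaluated just as you describe. One small point worth flagging: the inductive estimate $N_d^I\approx(nq\rho_n)^d$ only goes through cleanly when $\gamma>0$ (so $\rho_n\to 1$), and the paper treats $\gamma=0$ with a separate short argument showing $N_{d^*-1}^I/(nq)^{d^*-1}\pto 0$; your plan should do the same.
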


It is helpful to consider the two extreme cases. If $\gamma=1$, then $N_{d^*}^I/n\to1-\lambda$ which is the same with the single graph case; $G^{(1)}$ and $G^{(2)}$ are nearly identical. On the other hand, $\gamma=0$, we get $N_{d^*}^I/n\to(1-\lambda)^2$ which can be interpreted as $(N_{d^*}^{(1)}/n)\cdot(N_{d^*}^{(2)}/n)$; due to correlation decay, the events of a vertex contained in $\Gamma_{d^*}^{(1)}$ and in $\Gamma_{d^*}^{(2)}$ are almost independent.

\begin{corollary}\label{cor:overlap1}
    Under the same setting with Theorem~\ref{thm:main2}, the proportion of the vertices that have the same distance for $G_n^{(1)}$ and $G_n^{(2)}$ converges to $1-2\lambda+2\lambda^{2-\gamma}$ in probability.
\end{corollary}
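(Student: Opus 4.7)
\textbf{Proof proposal for Corollary~\ref{cor:overlap1}.}

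The plan is to reduce the claim to Theorem~\ref{thm:main2} via inclusion-exclusion, using the fact that in each individual graph essentially all vertices live at one of two depths. First I would note that, since $G_n^{(1)}$ and $G_n^{(2)}$ are marginally $\mathcal{G}(n,q_n)$, Theorem~\ref{thm:main1} applies to each. Let
\[
    C^{(i)} := \Gamma_{d_n^*}^{(i)}, \qquad D^{(i)} := \Gamma_{d_n^*+1}^{(i)}, \qquad i=1,2,
\]
and write $N_{d_n^*+1}^I = |D^{(1)} \cap D^{(2)}|$ (defined analogously to $N_{d_n^*}^I$). Theorem~\ref{thm:main1} gives $|C^{(i)}|/n \pto 1-\lambda$, $|D^{(i)}|/n \pto \lambda$, and moreover $|V_n \setminus (C^{(i)} \cup D^{(i)})|/n \pto 0$.

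The proportion of vertices having the same distance from $1$ in both graphs is then, up to an additive $o_p(1)$ coming from vertices with distance outside $\{d_n^*, d_n^*+1\}$ in at least one graph,
\[
    \frac{N_{d_n^*}^I + N_{d_n^*+1}^I}{n}.
\]
So it suffices to show that $N_{d_n^*+1}^I/n \pto \lambda^{2-\gamma}$, since then adding to the limit $1-2\lambda+\lambda^{2-\gamma}$ from Theorem~\ref{thm:main2} gives the claimed value $1 - 2\lambda + 2\lambda^{2-\gamma}$.

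To obtain the limit of $N_{d_n^*+1}^I/n$, I would apply inclusion-exclusion within $D^{(1)}$ by splitting it according to the distance of each vertex in $G_n^{(2)}$:
\[
    |D^{(1)}| = |D^{(1)} \cap C^{(2)}| + |D^{(1)} \cap D^{(2)}| + |D^{(1)} \setminus (C^{(2)} \cup D^{(2)})|.
\]
The last term is $o_p(n)$ by the single-graph result applied to $G_n^{(2)}$. The first term can be evaluated by a symmetric decomposition of $C^{(2)}$: since $|C^{(2)}| = |C^{(1)} \cap C^{(2)}| + |D^{(1)} \cap C^{(2)}| + o_p(n)$, combining with $|C^{(2)}|/n \pto 1-\lambda$ and Theorem~\ref{thm:main2} yields
\[
    \frac{|D^{(1)} \cap C^{(2)}|}{n} \pto (1-\lambda) - (1-2\lambda+\lambda^{2-\gamma}) = \lambda - \lambda^{2-\gamma}.
\]
Plugging back in and using $|D^{(1)}|/n \pto \lambda$ gives $N_{d_n^*+1}^I/n \pto \lambda - (\lambda - \lambda^{2-\gamma}) = \lambda^{2-\gamma}$, which completes the proof.

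There is no substantive obstacle here; the only thing to be careful about is bookkeeping the $o_p(n)$ error terms coming from vertices at distance not in $\{d_n^*, d_n^*+1\}$ in either graph, but these are all controlled uniformly by two applications of Theorem~\ref{thm:main1} (one per graph). Everything else is a short inclusion-exclusion on top of Theorem~\ref{thm:main2}.
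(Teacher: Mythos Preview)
Your proposal is correct and takes essentially the same approach as the paper: both arguments reduce to showing $N_{d_n^*+1}^I/n \pto \lambda^{2-\gamma}$ via inclusion--exclusion on top of Theorems~\ref{thm:main1} and~\ref{thm:main2}. The paper's version is marginally more direct --- it identifies $\Gamma_{d^*+1}^I$ (up to $o_p(n)$) with $V_n \setminus \Gamma_{\leq d^*}^{(1)} \setminus \Gamma_{\leq d^*}^{(2)}$ and reads off its size in one step --- whereas you first compute $|D^{(1)}\cap C^{(2)}|$ and then subtract, but the content is the same.
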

\begin{proof}
    The remaining vertices not contained in $\bigcup_{d=0}^{d^*}\Gamma_d^{(1)}$ and $\bigcup_{d=0}^{d^*}\Gamma_d^{(2)}$ have proportion $\lambda^{2-\gamma}$ in probability limit and both $\Gamma_{d^*+1}^{(1)}$ and $\Gamma_{d^*+1}^{(2)}$ occupy them by Theorem~\ref{thm:main1}.
\end{proof}

% This gives the desired concentration result for the overlap of the distance vectors. Recall the definition of $D_n$ given in (\ref{eqn:overlap1def}).

% \begin{corollary}\label{cor:main1}
%     There exist smooth functions $\mu_n(q, \rho)$ such that for any $\epsilon>0$, we get $\Pr(|D_n-\mu_n|>\epsilon)\to0$ as $n\to\infty$.
% \end{corollary}
% \begin{proof}
%     We let $\mu_n=1-2\lambda_n+2\lambda_n^{2-\gamma_n}$. For $\epsilon>0$, suppose that there is a subsequence of $n\in\mathbb{Z}^+$ such that $\Pr(|D_n-\mu_n|>\epsilon)$ is bounded away from zero. Then by the Bolzano--Weierstrass theorem we can find a further subsequence such that $\lambda_n\to\lambda$ and $\gamma_n\to\gamma$ with $\lambda,\gamma\in[0,1]$, so Corollary \ref{cor:overlap1} leads to a contradiction.
% \end{proof}

The proof of Theorem~\ref{thm:main2} is similar to the proof of Theorem~\ref{thm:main1}, except that we need to obtain concentration bounds for $N_d^I$ as well. Thus, we will prove a result similar to Proposition~\ref{prop:conc}. Recall the $\gamma_n$ defined in \eqref{eqn:defgamma}.

\begin{proposition}\label{prop:rhoconc}
    Let $\frac{C}{\log\log n}<\delta<\frac{1}{2}$ for some constant $C$. Suppose that $\gamma_n\to\gamma>0$. Then with probability $1-o(1)$, we have
    \[
        (1-\delta)(nq\rho_n)^d\leq N_d^I\leq (1+\delta)(nq\rho_n)^d
    \]
    for all $0\leq d<d^*$. Moreover, if $\gamma=0$, then
    \[
        \frac{N_{d^*-1}^I}{(nq)^{d^*-1}}\pto0\,.
    \]
\end{proposition}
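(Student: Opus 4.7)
The proof will parallel the inductive Chernoff scheme of Proposition~\ref{prop:conc} but be carried out jointly over the BFS explorations of $G^{(1)}$ and $G^{(2)}$, while tracking the intersection $\Gamma_d^I$. For each $0 \le d < d^*$, I would define a good event $E_d^I$ that $N_d^I$ lies in $(1 \pm \delta_d)(nq\rho_n)^d$ with $\delta_d$ scaled in the compound $L^{1+(d-2)/\ell^*}, R^{1+(d-2)/\ell^*}$ style used there. The plan is to establish $\Pr(\bigcap_d E_d^I) \ge 1 - o(1)$ by conditioning step by step, simultaneously maintaining the single-graph events of Proposition~\ref{prop:conc} applied to each $G^{(i)}$ so that $N_d^{(i)} \approx (nq)^d$ is in force throughout the induction.

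The main new ingredient, and what I expect to be the principal obstacle, is the ``adjacency probability'' subtlety highlighted in Figure~\ref{fig:bad}: the marginal edge-type probabilities $(q\rho_n, q(1-\rho_n), q(1-\rho_n))$ governing whether an edge from an unexplored $v$ to a previously visited $u$ is common, only in $G^{(1)}$, or only in $G^{(2)}$ are conditionally valid only when $u$ has \emph{not} already been visited by the BFS in the other graph. To finesse this I would split the depth-$(d-1)$ frontier into three groups: (A) $u \in \Gamma_{d-1}^{(1)}$ still unexplored in $G^{(2)}$; (B) $u \in \Gamma_{d-1}^{(2)}$ still unexplored in $G^{(1)}$; and (C) $u \in \Gamma_{d-1}^I$. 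By Proposition~\ref{prop:conc} applied to $G^{(2)}$, $|\Gamma_{\le d-2}^{(2)}| = O((nq)^{d-2}) = o(N_{d-1}^{(1)})$, so only an $o(1)$ fraction of $\Gamma_{d-1}^{(1)}$ lies outside (A)$\cup$(C), and symmetrically for (B). Since the ``bad'' vertices outside this partition can only \emph{decrease} the inclusion probability into $\Gamma_d^I$, one obtains a stochastic sandwich of $N_d^I$ between two binomials with parameters differing by this $o(1)$ fraction.

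The recursion itself balances two contributions for each unexplored $v$: a common edge to group (C), with inclusion probability $1 - (1-q\rho_n)^{N_{d-1}^I} \approx q\rho_n N_{d-1}^I$, and separate edges to (A)$\cup$(C) in $G^{(1)}$ together with (B)$\cup$(C) in $G^{(2)}$, contributing $O(q^2 N_{d-1}^{(1)} N_{d-1}^{(2)})$. The ratio of the second to the first is $O((nq)^{d-1}/(n\rho_n^d))$, which is $o(1)$ precisely when $\rho_n^d$ is bounded away from zero; this holds when $\gamma>0$, since $\rho_n^d \ge \rho_n^{d^*-1} = \gamma_n/\rho_n$ for $d \le d^*-1$. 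The correlation term therefore dominates, $\mathbb{E}[N_d^I \mid \mathcal{F}_{d-1}] = (1+o(1))(nq\rho_n)^d$, and Lemma~\ref{lem:chernoff} delivers $E_d^I$ with failure probability $o(1/\log n)$. Union-bounding across the $d^* = O(\log n/\log\log n)$ steps preserves the $1-o(1)$ total, with the boundary layers $d \in \{1, 2, d^*-1\}$ requiring the same minor adjustments as in Proposition~\ref{prop:conc}.

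For the $\gamma = 0$ claim, the induction may degenerate once $(nq\rho_n)^d$ drops below the Chernoff threshold, so I would instead fall back on a direct first-moment bound. A union bound over pairs of paths $(P_1, P_2)$ in $G^{(1)} \times G^{(2)}$ from $1$ to $v$, split into the identical-path case (which has probability $(q\rho_n + q^2(1-\rho_n))^{|P|}$ per path $P$) and the edge-disjoint case (whose events factorize and contribute $q^{|P_1|+|P_2|}$), yields
\[
    \mathbb{E}[N_{d^*-1}^I] = O\bigl((nq\rho_n)^{d^*-1}\bigr) + o\bigl((nq)^{d^*-1}\bigr),
\]
so $\mathbb{E}[N_{d^*-1}^I/(nq)^{d^*-1}] = O(\rho_n^{d^*-1}) + o(1)$. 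When $\gamma_n = \rho_n^{d^*} \to 0$ we have $\rho_n^{d^*-1} \to 0$ in every subcase ($\rho_n^{d^*-1} = \gamma_n/\rho_n$ when $\rho_n$ is bounded below, and $\rho_n^{d^*-1} \le \rho_n \to 0$ when $\rho_n \to 0$ and $d^* \ge 2$), so Markov's inequality concludes. The main difficulty beyond the adjacency-probability issue is the bookkeeping of error tolerances: the $o(1)$ bad-vertex fraction from the partition, the single-graph $\delta$-errors on $N_{d-1}^{(i)}$, and the Chernoff deviations must all compose without degrading the inductive envelope uniformly across depth.
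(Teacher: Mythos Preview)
Your plan for the $\gamma>0$ case is correct and is essentially the paper's proof: the same layered Chernoff scheme, the same partition of the depth-$(d-1)$ frontier into ``truly fresh'' parts (the paper's $\tilde\Gamma_{d-1}^{(i)}$ are your groups (A), (B), (C)), the same stochastic sandwich coming from the observation that bad vertices only lower the inclusion probability, and the same balance between the correlation term $q\rho_n N_{d-1}^I$ and the coincidence term $O(q^2 N_{d-1}^{(1)}N_{d-1}^{(2)})$. (Your stated ratio has a minor slip in the exponent --- it is $(nq)^d/(n\rho_n^d)$ rather than $(nq)^{d-1}/(n\rho_n^d)$ --- but since $(nq)^{d}/n\le (nq)^{d^*-1}/n<1/(\log\log n)^2$ for $d\le d^*-1$, the conclusion is unchanged.)

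The $\gamma=0$ argument, however, has a genuine gap. Your union bound over pairs $(P_1,P_2)$ is split into the identical-path case and the edge-disjoint case, but these two do not exhaust all pairs: paths from $1$ to $v$ of length $d^*-1$ can share any number $0<k<d^*-1$ of edges. For such a pair the event has probability $q_I^{\,k}\,q^{2(d^*-1)-2k}$ with $q_I/q^2\approx \rho_n/q$ very large, so the intermediate-$k$ terms are not a priori negligible, and counting pairs with a prescribed edge-overlap is a nontrivial combinatorial exercise you have not addressed. The bound $\mathbb{E}[N_{d^*-1}^I]=O((nq\rho_n)^{d^*-1})+o((nq)^{d^*-1})$ therefore does not follow from what you wrote.

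The paper avoids this entirely by \emph{not} switching methods: it simply continues the inductive upper bound. The key point is that the recursion $q_d\le q\rho_n N_{d-1}^I+O(\epsilon)\,q(nq)^{d-1}$ (which you also derive) holds for all $d<d^*$ regardless of the size of $(nq\rho_n)^d$, so one tracks the ratio $N_d^I/(nq)^d$ rather than $N_d^I$ itself. Run the full two-sided induction while $\rho_n^{d}$ is bounded away from zero (which the $\gamma>0$ proof already handles), and once $\rho_n^{d}$ drops below any fixed $\delta'$, the upper bound alone gives $N_d^I/(nq)^d\le C\delta'+o(1)$, which then propagates to $d^*-1$ via the same linear recursion. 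No path-counting is needed.
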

\begin{proof}
    The cardinality of the intersection $N_d^I=|\Gamma_d^I|=|\Gamma_d^{(1)}\cap\Gamma_d^{(2)}|$ unfortunately does not have a nice exact hierarchical structure like \eqref{eqn:nd_process}. Nevertheless, we will obtain bounds using a similar strategy.
    
    Suppose that we are already given (conditioned on) $\Gamma_i^{(1)}$ and $\Gamma_i^{(2)}$ for all $0\leq i\leq d-1$, and we are at the step of constructing $\Gamma_{d}^{(i)}$ and $\Gamma_{d}^I$. We define the set of ``fresh'' vertices $V_{d}:=V\setminus\Gamma_{\leq d-1}^{(1)}\setminus\Gamma_{\leq d-1}^{(2)}$ and its cardinality $n_d:=|V_d|$. Each vertex $v\in V_{d}$ is either:
    \begin{enumerate}[label=(\alph*)]
        \item\label{item:casea} connected to both $\Gamma_{d-1}^{(1)}$ and $\Gamma_{d-1}^{(2)}$,
        \item\label{item:caseb} connected to $\Gamma_{d-1}^{(1)}$ only or $\Gamma_{d-1}^{(2)}$ only, or
        \item\label{item:casec} connected to neither $\Gamma_{d-1}^{(1)}$ nor $\Gamma_{d-1}^{(2)}$.
    \end{enumerate}
    Then for \ref{item:casea} $v$ will be included in $\Gamma_{d}^I$ and for \ref{item:casec} $v$ will be included in $V_{d+1}$. Let $\sim_i$ and $\nsim_i$ denote the connectivity and disconnectivity in graph $G^{(i)}$, and when there is no subscript we mean for both. We are interested in the probability of \ref{item:casea}, which is calculated as
    \[
        \begin{split}
            q_{d}&:=\Pr(v\sim_1\Gamma_{d-1}^{(1)}\wedge v\sim_2\Gamma_{d-1}^{(2)})\\
            &=1-\Pr(\text{$v\nsim_1\Gamma_{d-1}^{(1)}$})-\Pr(\text{$v\nsim_2\Gamma_{d-1}^{(2)}$})+\Pr(\text{$v\nsim_1\Gamma_{d-1}^{(1)}\wedge v\nsim_2\Gamma_{d-1}^{(2)}$})\,.
        \end{split}
    \]
    Here, $\Pr(\text{$v\nsim_1\Gamma_{d-1}^{(1)}\wedge v\nsim_2\Gamma_{d-1}^{(2)}$})$ equals
    \begin{equation}\label{eqn:hard}
        \Pr(\text{$v\nsim_1\Gamma_{d-1}^{(1)}\wedge v\nsim_2\Gamma_{d-1}^{(2)}$})=\Pr(\text{$v\nsim_1\Gamma_{d-1}^{(1)}\setminus\Gamma_{d-1}^I$})\cdot\Pr(\text{$v\nsim_2\Gamma_{d-1}^{(2)}\setminus\Gamma_{d-1}^I$})\cdot\Pr(\text{$v\nsim\Gamma_{d-1}^I$})\,.
    \end{equation}
    
    The probability $\Pr(v\nsim\Gamma_{d-1}^I)$ is easy to compute. The probability that a pair of vertices is not connected by an edge in both $G^{(1)}$ and $G^{(2)}$ is $(1-q)(\rho_n+(1-\rho_n)(1-q))=(1-q)(1-q(1-\rho_n))$, so
    \[
        \Pr(v\nsim\Gamma_{d-1}^I)=(1-q)^{N_{d-1}^I}(1-q(1-\rho_n))^{N_{d-1}^I}\,.
    \]
    
    However, the calculation of $\Pr(v\nsim_1\Gamma_{d-1}^{(1)}\setminus\Gamma_{d-1}^{I})$ is more complicated: the edge probability in $G^{(1)}$ between $v$ and a vertex in $\Gamma_{d-1}^{(1)}\setminus\Gamma_{d-1}^I$ is not exactly $q$, because some of the vertices in $\Gamma_{d-1}^{(1)}\setminus\Gamma_{d-1}^I$ might be contained in $\Gamma_{\leq d-2}^{(2)}$. For these vertices, it is already conditioned that they are not connected to $v$ in $G^{(2)}$. Fortunately, this means that for these vertices the edge probability in $G^{(1)}$ is \emph{less} than $q$, which allows us to obtain bounds using ``truely fresh'' pair of vertices.

    Put it formally, let $\tilde{\Gamma}_d^{(1)}:=\Gamma_d^{(1)}\setminus\Gamma_{\leq d-1}^{(2)}$ and $\tilde{\Gamma}_d^{(2)}:=\Gamma_d^{(2)}\setminus\Gamma_{\leq d-1}^{(1)}$ (also set $\tilde{N}_d^{(1)}:=|\tilde{\Gamma}_d^{(1)}|$ and $\tilde{N}_d^{(2)}:=|\tilde{\Gamma}_d^{(2)}|$). Then when we set all edge probabilities to $q$, $\Pr(v\sim_1\Gamma_{d-1}^{(1)}\wedge v\sim_2\Gamma_{d-1}^{(2)})$ only gets bigger, and when we replace $\Gamma_{d-1}^{(1)}$ and $\Gamma_{d-1}^{(2)}$ with $\tilde{\Gamma}_{d-1}^{(1)}$ and $\tilde{\Gamma}_{d-1}^{(2)}$, respectively, the probability becomes smaller. This gives bounds
    \begin{equation}\label{eqn:qrbound}
        q_d \leq 1-(1-q)^{N_{d-1}^{(1)}}-(1-q)^{N_{d-1}^{(2)}}+(1-q)^{N_{d-1}^{(1)}+N_{d-1}^{(2)}-N_{d-1}^I}(1-q(1-\rho_n))^{N_{d-1}^I}
    \end{equation}
    and
    \begin{equation}\label{eqn:qlbound}
        q_d \geq 1-(1-q)^{\tilde{N}_{d-1}^{(1)}}-(1-q)^{\tilde{N}_{d-1}^{(2)}}+(1-q)^{\tilde{N}_{d-1}^{(1)}+\tilde{N}_{d-1}^{(2)}-N_{d-1}^I}(1-q(1-\rho_n))^{N_{d-1}^I}\,.
    \end{equation}

    The main part of the proof is similar to Proposition~\ref{prop:conc}. First, we consider the case $\gamma>0$. Let $L=(1-\delta)^{1/3}$ and $R=(1+\delta)^{1/3}$. We define the collection of events
    \[
        E_d=\begin{cases}
            L^{1/2}nq\rho_n\leq N_1^I\leq R^{1/2}nq\rho_n & \text{if } d=1,\\
            L(nq\rho_n)^2\leq N_2^I\leq R(nq\rho_n)^2 & \text{if } d=2,\\
            L^{1+(d-2)/\ell^*}(nq\rho_n)^d\leq N_d^I\leq R^{1+(d-2)/\ell^*}(nq\rho_n)^d& \text{if }3\leq d\leq d^*-2,\\
            L^3(nq\rho_n)^d\leq N_d^I\leq R^3(nq\rho_n)^d&\text{if } d=d^*-1.
        \end{cases}
    \]
    Again, the proof consists of three steps, namely
    \[
        \Pr(E_d\mid E_1,\cdots, E_{d-1})\geq\begin{cases}
            1-o(1) & \text{if $1\leq d\leq 2$}\,,\\
            1-o\left(\frac{1}{\log n}\right) & \text{if $3\leq d\leq d^*-2$}\,,\\
            1-o(1) & \text{if $d= d^*-1$}\,.
        \end{cases}
    \]
    We only prove for $3\leq d\leq d^*-2$ since the other cases are similar (see the proof of Proposition~\ref{prop:conc}). In order to apply Lemma~\ref{lem:chernoff}, we need to derive bounds for $n_d$ and $q_d$.
    
    We first handle $n_d$. Note that
    \[
        n_d\geq n-\sum_{k=0}^{d-1}N_k^{(1)}-\sum_{k=0}^{d-1}N_d^{(2)}\,.
    \]
    Thus, by applying the same argument as in the proof of Proposition~\ref{prop:conc}, it is easy to see that
    \begin{equation}\label{eqn:ndlb}
        n_d\geq L^{1/(3\ell^*)}n\,.
    \end{equation}

    For $q_d$, we establish a lower bound first. By Proposition~\ref{prop:conc}, we may safely assume $N_{k}^{(i)}\leq 2(nq)^{k}$ for all $k\leq d^*-1$. For some $\epsilon>0$ (not necessarily constant) to be chosen later, assume
    \begin{equation}\label{eqn:epsbound}
        q(nq)^{d-1}\leq \epsilon\,.
    \end{equation}
    Then
    \[
        qN_{d-1}^{(i)}\leq 2q(nq)^{d-1}\leq 2\epsilon\,.
    \]
    Hence, by Lemma~\ref{lem:alg1} and Lemma~\ref{lem:alg2}, we have
    \[
        (1-q)^{\tilde{N}_d^{(i)}}\leq1-(1-\epsilon)q\tilde{N}_{d-1}^{(i)}\,.
    \]
    Also by Lemma~\ref{lem:alg1}, we get
    \[
        (1-q)^{\tilde{N}_{d-1}^{(1)}+\tilde{N}_{d-1}^{(2)}-N_{d-1}^I}\geq 1-q(\tilde{N}_{d-1}^{(1)}+\tilde{N}_{d-1}^{(2)}-N_{d-1}^I)
    \]
    and
    \[
        (1-q(1-\rho_n))^{N_d^I}\geq 1-q(1-\rho_n)N_d^I\,.
    \]
    By plugging in these bounds to \eqref{eqn:qlbound}, we have
    \[
        \begin{split}
            q_d &\geq q\rho N_{d-1}^I-\epsilon q(\tilde{N}_{d-1}^{(1)}+\tilde{N}_{d-1}^{(2)})+q^2(1-\rho)(\tilde{N}_{d-1}^{(1)}+\tilde{N}_{d-1}^{(2)}-N_{d-1}^I)N_{d-1}^I\\
            &\geq q\rho N_{d-1}^I-\epsilon q(N_{d-1}^{(1)}+N_{d-1}^{(2)})\\
            &\geq q\rho N_{d-1}^I-4\epsilon q(nq)^{d-1}\,.
        \end{split}
    \]
    Under $E_{d-1}$, we have $N_{d-1}^I\geq \frac{1}{2}(nq\rho_n)^{d-1}$. Also note $\rho_n^d\geq\rho_n^{d^*}\geq\gamma/2$ for large enough $n$. This gives
    \begin{equation}\label{eqn:qrhobound}
        q\rho_n N_{d-1}^I\geq\frac{1}{2}q\rho_n(nq\rho_n)^{d-1}\geq\frac{1}{4}\gamma q(nq)^{d-1}\,.
    \end{equation}
    Hence,
    \[
        q_d\geq q\rho_n N_{d-1}^I-4\epsilon q(nq)^{d-1}\geq\left(1-\frac{16\epsilon}{\gamma}\right)q\rho_n N_{d-1}^I\,.
    \]
    Recall from \eqref{eqn:ubl} that we have for large enough $n$
    \[
        L^{1/(3\ell^*)}\leq 1-\frac{C}{18\log n}\,.
    \]
    Thus, we set (note that the assumption $\gamma>0$ is used here)
    \[
        \epsilon=\frac{C\gamma}{288\log n}=\Theta\left(\frac{1}{\log n}\right)
    \]
    which satisfies \eqref{eqn:epsbound} since
    \[
        q(nq)^{d-1}\leq(nq)^{d^*-\ell^*-2}\leq\frac{1}{nq(\log\log n)^2}\leq O\left(\frac{1}{(\log n) (\log\log n)^2}\right)\,.
    \]
    This gives a lower bound
    \[
        q_d\geq L^{1/(3\ell^*)}q\rho_n N_{d-1}^I\,.
    \]
    Combined with \eqref{eqn:ndlb} and \eqref{eqn:qrhobound}, we have
    \begin{equation}\label{eqn:nqlb}
        n_dq_d\geq L^{2/(3\ell^*)}nq\rho_n N_{d-1}^I\geq \frac{1}{8}\gamma(nq)^{d}=\Omega((\log n)^3)\,.
    \end{equation}
    Therefore, applying Lemma~\ref{lem:chernoff} and \eqref{eqn:ubl} gives
    \[
        \begin{split}
            \Pr(N_d\geq L^{1/(3\ell^*)}n_dq_d\geq L^{1+(d-2)/\ell^*}(nq)^d\mid E_1,\cdots, E_{d-1})&\geq1-\exp\left(-\frac{(1-L^{1/(3\ell^*)})^2n_dq_d}{2}\right)\\
            &\geq1-e^{-\Omega(\log n)}\\
            &\geq1-o\left(\frac{1}{\log n}\right)\,.
        \end{split}
    \]

    Deriving an upper bound is similar. We abuse notation and reuse $\epsilon$ but possibly with a different value. Since
    \[
        q(N_{d-1}^{(1)}+N_{d-1}^{(2)})\leq 4q(nq)^{d-1}\leq 4\epsilon
    \]
    by Lemma~\ref{lem:alg1} and Lemma~\ref{lem:alg2} we have
    \[
        (1-q)^{N_{d-1}^{(1)}+N_{d-1}^{(2)}-N_{d-1}^I}\leq 1-(1-2\epsilon)q(N_{d-1}^{(1)}+N_{d-1}^{(2)}-N_{d-1}^I)
    \]
    and
    \[
        (1-q(1-\rho))^{N_d^I}\geq 1-(1-2\epsilon)q(1-\rho_n)N_d^I\,.
    \]
    Also, note that $(1-q)^{N_{d-1}^{(i)}}\leq 1-qN_{d-1}^{(i)}$. Applying these inequalities to \eqref{eqn:qrbound}, one can see that
    \[
        \begin{split}
            q_d &\leq 1-(1-q)^{N_{d-1}^{(1)}}-(1-q)^{N_{d-1}^{(2)}}+(1-q)^{N_{d-1}^{(1)}+N_{d-1}^{(2)}-N_{d-1}^I}(1-q(1-\rho))^{N_{d-1}^I}\\
            &\leq q\rho N_{d-1}^I+2\epsilon q(N_{d-1}^{(1)}+N_{d-1}^{(2)}-N_{d-1}^I)+(1-2\epsilon)^2q^2(1-\rho)N_{d-1}^I(N_{d-1}^{(1)}+N_{d-1}^{(2)}-N_{d-1}^I)\\
            &\leq q\rho N_{d-1}^I+2\epsilon q(N_{d-1}^{(1)}+N_{d-1}^{(2)})+q^2(N_{d-1}^{(1)}+N_{d-1}^{(2)})^2\\
            &\leq q\rho N_{d-1}^I+6\epsilon q(N_{d-1}^{(1)}+N_{d-1}^{(2)})\\
            &\leq q\rho N_{d-1}^I+24\epsilon q(nq)^{d-1}\\
            &\leq\left(1+\frac{96\epsilon}{\gamma}\right)q\rho N_{d-1}^I\,.
        \end{split}
    \]
    Note that by \eqref{eqn:rbound} we have
    \begin{equation}\label{eqn:lbr}
       R^{1/(2\ell^*)}\geq1+\frac{C}{12\ell^*\log\log n}=1+\frac{C\log(nq)}{12\log n\log\log n}\geq1+\frac{C}{24\log n}\,.
    \end{equation}
    for large $n$. Thus, this time we set
    \[
        \epsilon=\frac{C\gamma}{2304\log n}=\Theta\left(\frac{1}{\log n}\right)\,.
    \]
    which satisfies \eqref{eqn:epsbound} as well. This gives an upper bound
    \[
        q_d\leq R^{1/(2\ell^*)}q\rho_n N_{d-1}^I
    \]
    and thus by \eqref{eqn:nqlb} and \eqref{eqn:lbr}
    \[
        \begin{split}
            \Pr(N_d\leq R^{1/(2\ell^*)}n_dq_d\leq R^{1+(d-2)/\ell^*}(nq)^d\mid E_1,\cdots, E_{d-1})&\geq1-\exp\left(-\frac{(R^{1/(2\ell^*)}-1)^2n_dq_d}{2}\right)\\
            &\geq1-e^{-\Omega(\log n)}\\
            &\geq1-o\left(\frac{1}{\log n}\right)
        \end{split}
    \]
    which concludes the proof.

    The proof for the case $\gamma=0$ is mostly similar. Note that the above proof step works as long as $\rho^d$ is bounded away from zero. Intuitively, since $\rho^{d^*}\to0$, $\rho^d$ will eventually be $o(1)$ as $d$ increases. After this step, what the inequality
    \[
        q_d\leq q\rho_n N_{d-1}^I+24\epsilon q(nq)^{d-1}
    \]
    tells us is that $\rho_n N_{d-1}^I=q(nq)^{d-1}\cdot\rho_n^d=o(q(nq)^{d-1})$, so $n_dq_d\leq o((nq)^d)$ as desired. More formally, $\gamma=0$ implies that for any $\delta'>0$ we can find $d<d^*$ such that $\rho_n^d<\delta'$. This implies that
    \[
        q_d\leq q\rho_n N_{d-1}^I+24\epsilon q(nq)^{d-1}\leq2\delta' q(nq)^{d-1}+24\epsilon q(nq)^{d-1}=(2\delta'+24\epsilon)q(nq)^{d-1}\,.
    \]
    Thus, it easily follows that $N_d^I/(nq)^d\dto0$. It is now straightforward to prove that this continues to hold for larger $d$.
\end{proof}

The following lemma complements Proposition~\ref{prop:rhoconc}, similar to Lemma~\ref{lem:dstar}.

\begin{lemma}\label{lem:rhodstar}
    Suppose that $\lambda_n\to\lambda=1$ and $\gamma_n\to\gamma$. Then
    \[
        \frac{N_{d^*}^I}{(nq)^{d^*}}\pto\gamma\,.
    \]
\end{lemma}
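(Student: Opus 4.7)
The plan is to mimic the inductive argument in the proof of Proposition~\ref{prop:rhoconc} but extend it by one step, from depth $d^*-1$ to depth $d^*$. The key observation enabling this extension is that under the assumption $\lambda_n \to 1$, we have $q(nq)^{d^*-1} = (nq)^{d^*-\ell^*} = \log(1/\lambda_n) \to 0$, so the BFS has not yet saturated the vertex set at depth $d^*$ and the tight bounds used in that proof continue to be applicable one level further down.

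First I would condition on the joint event that $N_{d^*-1}^{(i)} = (1+o(1))(nq)^{d^*-1}$ for $i=1,2$ (by Proposition~\ref{prop:conc}) together with either $N_{d^*-1}^I = (1+o(1))(nq\rho_n)^{d^*-1}$ when $\gamma > 0$, or $N_{d^*-1}^I = o((nq)^{d^*-1})$ when $\gamma = 0$ (by Proposition~\ref{prop:rhoconc}). In addition, the total explored sets have size $|\Gamma^{(i)}_{\leq d^*-1}| = (1+o(1))(nq)^{d^*-1} = o(n)$ by $\lambda_n \to 1$, so $n_{d^*} = (1-o(1))n$. Conditional on this event, $N_{d^*}^I$ is $\Binom(n_{d^*}, q_{d^*})$ distributed, where $q_{d^*}$ admits the two-sided bounds \eqref{eqn:qlbound}--\eqref{eqn:qrbound} from the proof of Proposition~\ref{prop:rhoconc}.

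Next, I would apply those bounds with $\epsilon := q(nq)^{d^*-1} = \log(1/\lambda_n) = o(1)$. Using Lemma~\ref{lem:alg1} and Lemma~\ref{lem:alg2} to linearize the $(1-q)^{N}$ and $(1-q(1-\rho_n))^{N}$ factors, one obtains
\[
    q_{d^*} = (1+o(1))\, q\rho_n N_{d^*-1}^I \;+\; O\!\left(\epsilon \cdot q(nq)^{d^*-1}\right).
\]
Multiplying by $n_{d^*}$ yields $n_{d^*}q_{d^*} = (1+o(1))(nq\rho_n)^{d^*} = (1+o(1))\gamma_n(nq)^{d^*}$ in the case $\gamma > 0$, while both terms on the right-hand side are $o((nq)^{d^*})$ in the case $\gamma = 0$. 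Finally, Chernoff bounds (Lemma~\ref{lem:chernoff}) applied to the binomial $N_{d^*}^I$ give concentration around $n_{d^*}q_{d^*}$, which is sharp enough because $(nq)^{d^*} \geq n/(\log\log n)^2 \to \infty$.

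The main obstacle is bookkeeping of error terms, exactly as in Proposition~\ref{prop:rhoconc}: the bad set of vertices counted in $\Gamma^{(1)}_{d^*-1} \setminus \tilde\Gamma^{(1)}_{d^*-1}$ (and symmetrically) must be shown to have size $o(N_{d^*-1}^{(i)})$ so that replacing $\tilde N_{d^*-1}^{(i)}$ by $N_{d^*-1}^{(i)}$ in \eqref{eqn:qlbound}--\eqref{eqn:qrbound} is harmless, and the additive $O(\epsilon \cdot q(nq)^{d^*-1})$ error must be absorbed into the leading term $q\rho_n N_{d^*-1}^I$ when $\gamma > 0$. Both are handled as in Proposition~\ref{prop:rhoconc}: the bad set has expected size $O(q \cdot (nq)^{d^*-2} \cdot (nq)^{d^*-1}) = o((nq)^{d^*-1})$, and $\epsilon/\gamma_n$ is arranged to vanish by the assumption $\gamma_n \to \gamma > 0$ (in the $\gamma = 0$ regime no such absorption is needed, since we only seek the one-sided bound $N_{d^*}^I = o((nq)^{d^*})$).
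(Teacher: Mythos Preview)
Your proposal is correct and matches the paper's approach exactly: the paper simply says the proof is ``nearly identical to Proposition~\ref{prop:rhoconc}, thus omitted,'' and you have correctly identified both the key enabling observation (that $\lambda_n\to 1$ forces $q(nq)^{d^*-1}=\log(1/\lambda_n)\to 0$, so the linearizations in \eqref{eqn:qlbound}--\eqref{eqn:qrbound} remain valid at level $d^*$) and the residual bookkeeping. One minor simplification: the bad set $\Gamma^{(1)}_{d^*-1}\setminus\tilde\Gamma^{(1)}_{d^*-1}$ is contained in $\Gamma^{(2)}_{\leq d^*-2}$, so it has size $O((nq)^{d^*-2})=o((nq)^{d^*-1})$ directly, as in \eqref{eqn:tildebound}; no further probabilistic estimate is needed there.
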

\begin{proof}
    The proof is nearly identical to Proposition~\ref{prop:rhoconc}, thus omitted.
\end{proof}

\begin{proof}[Proof of Theorem~\ref{thm:main2}]
    Using Propositions \ref{prop:conc} and \ref{prop:rhoconc}, we prove that each of the bounds from \eqref{eqn:qrbound} and \eqref{eqn:qlbound} converges in probability to the desired bound. The result is trivial if $\lambda=0$ or $\lambda=1$ by Theorem~\ref{thm:main1}, so we assume $0<\lambda<1$. Note that for any $1\leq d\leq d^*$ and a constant $\epsilon>0$ we have
    \begin{equation}\label{eqn:tildebound}
        \tilde{N}_{d}^{(1)}=|\Gamma_d^{(1)}\setminus\Gamma_{\leq d-1}^{(2)}|\geq|\Gamma_d^{(1)}|-|\Gamma_{\leq d-1}^{(2)}|\geq (1-\epsilon)(nq)^d-2(nq)^{d-1}\geq (1-2\epsilon)(nq)^d
    \end{equation}
    with high probability. Hence, as in the proof of Theorem~\ref{thm:main1}, the terms $(1-q)^{N_{d^*-1}^{(i)}}$ and $(1-q)^{\tilde{N}_{d^*-1}^{(i)}}$ all converge in probability to $\lambda$.
    
    First, consider the case $\gamma>0$. Then $\rho_n=1-o(1)$ so $q(1-\rho_n)N_{d^*-1}^I\pto0$, which implies
    \[
        (1-q(1-\rho_n))^{N_{d^*-1}^I}\pto0\,.
    \]
    Also, since $\rho_n^{d^*}\to \gamma$, Proposition~\ref{prop:rhoconc} gives $qN_{d^*-1}^I\pto \gamma\log(1/\lambda)$, so that
    \[
        (1-q)^{N_{d^*-1}^{(1)}+N_{d^*-1}^{(2)}-N_{d-1}^I}\pto \lambda^{2-\gamma}\,.
    \]
    Due to \eqref{eqn:tildebound} the same convergence holds when $\tilde{N}_{d^*-1}^{(i)}$ are used in place of $N_{d^*-1}^{(i)}$. Combining these facts gives the desired result.

    If $\gamma=0$, then by the second statement of Proposition~\ref{prop:rhoconc} we have $qN_{d^*-1}^I\to0$, so the same result holds.
    
\end{proof}

\subsection{The overlap of the shortest path DAGs}\label{sec:spdag}

We define the \emph{shortest path DAG} of a graph $G$, denoted by $\mathsf{SPD}_G(v)$, as the union of the shortest paths starting from $v\in V(G)$ where the orientation agrees with the direction of the outgoing paths from $v$. We also study a natural measure of overlap for shortest path DAGs which we now formally present. This result does not have direct implications for shortest path trees, but shortest path DAGs themselves are natural objects to consider from the perspective of, e.g., linear programming duality --- see the discussion in Preliminaries. When $\lambda = 0$ and $\rho = 1$, the overlap of shortest path DAGs exhibits a subtle behavior not seen for shortest path trees.

Since $1$ is the source, we simply write $\mathsf{SPD}_G=\mathsf{SPD}_G(1)$, and the overlap is defined by
\begin{equation}\label{eqn:overlap2def}
    S_n:=\frac{|\mathsf{SPD}_{G_n^{(1)}}\cap \mathsf{SPD}_{G_n^{(2)}}|}{\sqrt{|\mathsf{SPD}_{G_n^{(1)}}|\cdot|\mathsf{SPD}_{G_n^{(2)}}|}}\,.
\end{equation}
As with trees, we require that the edges in the overlap have the same orientation for $G^{(1)}$ and $G^{(2)}$.

For the distance vectors discussed in Section \ref{sec:odv}, the cases $\lambda=0$ and $\lambda=1$ were trivial and the interesting regime was $\lambda\in(0,1)$. For the shortest path DAGs, the situation is the opposite: the case $\lambda\in(0,1)$ is easy and the case $\lambda=0$ will be the trickiest. This is because when $\lambda\in(0,1)$, it is obvious that most of the edges in the shortest path DAG will connect $\Gamma_{d^*}$ and $\Gamma_{d^*+1}$, but when $\lambda=0$, it is not clear which of $\Gamma_{d^*-1}$ and $\Gamma_{d^*+1}$ is larger\footnote{At first glance, one might expect $\lambda=1$ would suffer from the same issue, but fortunately it turns out that we always have $|\Gamma_{d^*}|\gg|\Gamma_{d^*+2}|$, so it can be handled in the same way as the case $\lambda\in(0,1)$.}.

We first discuss intuitively what is happening here. By Proposition~\ref{prop:conc}, $N_{d^*-1}=|\Gamma_{d^*-1}|$ concentrates around $(nq)^{d^*-1}=\frac{\log(1/\lambda_n)}{q}$. On the other hand, by Theorem~\ref{thm:main1-tight}, $N_{d^*+1}=|\Gamma_{d^*+1}|$ concentrates around $\lambda_nn$. Thus, we can compare the two sets by comparing $\frac{\log(1/\lambda_n)}{\lambda_n}$ with $nq=\alpha_n\log n$. But since $\lambda_n$ does not converge to a particular value as discussed in Section \ref{sec:dv}, the result of the comparison is of course not consistent. Hence, just as we fixed a subsequence so that we could say $\lambda_n\to\lambda$, we will consider a finer subsequence such that
\[
    \eta_n:=\frac{\log(1/\lambda_n)}{\lambda_nnq}=\frac{\log(1/\lambda_n)}{\alpha_n\lambda_n\log n}\to\eta\in[0,\infty]\,.
\]
For instance, $\eta=0$ would mean $N_{d^*-1}\ll N_{d^*+1}$ so most of the edges in $\mathsf{SPD}_G$ connect $\Gamma_{d^*}$ and $\Gamma_{d^*+1}$.

To compute the overlap, we need to compute the proportions of $N_{d^*-1}^I$ and $N_{d^*+1}^I$ relative to $N_{d^*-1}^{(i)}$ and $N_{d^*+1}^{(i)}$, respectively. The former is well understood through Proposition~\ref{prop:rhoconc}. For the latter, we go back to the proof of Proposition~\ref{prop:rhoconc} where we were constructing $\Gamma_d^{(i)}$ and $\Gamma_d^I$. At step $d=d^*$, one can realize those vertices $v\in V_{d^*}$ that are not connected to $\Gamma_{d^*-1}^{(1)}$ or $\Gamma_{d^*-1}^{(2)}$ comprise $\Gamma_{d^*+1}^I$. Thus, our probability of interest will be
\[
    \Pr(v\nsim_1\Gamma_{d^*-1}^{(1)}\wedge v\nsim_2\Gamma_{d^*-1}^{(2)})
\]
which can be calculated through \eqref{eqn:hard}. From the proof of Proposition~\ref{prop:rhoconc}, one may expect this to be around $\lambda^{2-\rho_n^{d^*}}$. Since $|N_{d^*+1}^{(i)}|\approx\lambda n$ the proportion is roughly $\lambda^{1-\gamma}$. If $\gamma=1$ and $\lambda=0$, this is not well defined. Hence, we need to consider another proxy sequence
\[
    \xi_n:=\lambda_n^{1-\gamma_n}\to\xi\in[0,1]
\]
converges. The overlap of the shortest path DAGs will be described in terms of these quantities.

We formalize our discussion to the following theorem.
\begin{theorem}\label{thm:main3}
    Suppose that:
    \begin{itemize}
        \item $\lambda_n\to\lambda\in[0,1]$,
        \item $\gamma_n\to\gamma\in[0,1]$,
        \item $\frac{\log(1/\lambda_n)}{\alpha_n\lambda_n\log n}\to\eta\in[0,\infty]$,
        \item $\lambda_n^{1-\gamma_n}\to\xi\in[0,1]$, and
        \item $\rho_n\to\rho\in[0,1]$.
    \end{itemize}
    Then we have
    \[
        S_n\pto\begin{dcases*}
            \gamma & \text{if $\lambda=1$ or $\eta=\infty$,}\\
            \rho\cdot\frac{(1-(2-\xi)\lambda)\xi+\eta\gamma}{1-\lambda+\eta} & \text{otherwise.}
        \end{dcases*}
    \]
\end{theorem}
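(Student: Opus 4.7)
The plan is to decompose the numerator and denominator of $S_n$ by BFS layer and identify the dominant contributions. Write $L_d^{(i)}$ for the number of oriented edges of $\mathsf{SPD}_{G_n^{(i)}}$ from $\Gamma_{d-1}^{(i)}$ to $\Gamma_d^{(i)}$, equivalently the number of edges of $G_n^{(i)}$ between $\Gamma_{d-1}^{(i)}$ and $V \setminus \Gamma_{\le d-1}^{(i)}$, so $|\mathsf{SPD}_{G_n^{(i)}}| = \sum_{d \ge 1} L_d^{(i)}$. Because the overlap \eqref{eqn:overlap2def} requires matching orientation, an edge $(u,v)$ lies in $\mathsf{SPD}_{G_n^{(1)}} \cap \mathsf{SPD}_{G_n^{(2)}}$ iff $u \in \Gamma_{d-1}^I$ and $v \in \Gamma_d^I$ for a common $d$, giving $|\mathsf{SPD}_{G_n^{(1)}} \cap \mathsf{SPD}_{G_n^{(2)}}| = \sum_{d \ge 1} L_d^I$ with $L_d^I = \#\{(u,v) \in E(G_n^{(1)}) \cap E(G_n^{(2)}) : u \in \Gamma_{d-1}^I,\, v \in \Gamma_d^I\}$.

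The first step is to reduce to the layers $d \in \{d^*, d^*+1\}$. Proposition~\ref{prop:conc} gives $L_d^{(i)} \le qN_{d-1}^{(i)} n = O_p((nq)^d)$ for $d \le d^*-1$, and $(nq)^{d^*-1} = \log(1/\lambda_n)/q$ is smaller than the two candidate dominant scales $n\log(1/\lambda_n)$ and $\lambda_n(1-\lambda_n)n\alpha_n\log n$ by a factor of order $1/(\alpha_n\log n)$; in particular the sum of these lower layers is $o_p(n\alpha_n\lambda_n\log n)$. For $d \ge d^*+2$, the estimate \eqref{eqn:dp2small} shows $|V \setminus \Gamma_{\le d^*+1}^{(i)}|$ is super-polynomially small in $n$, so $\sum_{d \ge d^*+2} L_d^{(i)} = o_p(1)$. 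Since $L_d^I \le L_d^{(i)}$, the same reduction applies to the intersection.

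For the leading layers, condition on the BFS to depth $d-1$ in $G_n^{(i)}$: then $L_d^{(i)} \sim \Binom(N_{d-1}^{(i)}(n - |\Gamma_{\le d-1}^{(i)}|),\, q)$, and Chernoff concentration combined with Proposition~\ref{prop:conc} and Theorem~\ref{thm:main1-tight} yields
\begin{align*}
L_{d^*}^{(i)}/(n\alpha_n\lambda_n\log n) &\pto \eta, \\
L_{d^*+1}^{(i)}/(n\alpha_n\lambda_n\log n) &\pto 1 - \lambda,
\end{align*}
so $|\mathsf{SPD}_{G_n^{(i)}}|/(n\alpha_n\lambda_n\log n) \pto \eta + (1-\lambda)$. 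For the intersection, condition on the BFS to depth $d-1$ in both graphs; the edges between $u \in \Gamma_{d-1}^I$ and $v \in V \setminus (\Gamma_{\le d-1}^{(1)} \cup \Gamma_{\le d-1}^{(2)})$ are unrevealed in both graphs, so each appears in $E(G_n^{(1)}) \cap E(G_n^{(2)})$ independently with probability $q\rho_n(1 + O(q))$, and any such common edge automatically forces $v \in \Gamma_d^I$. Plugging in Proposition~\ref{prop:rhoconc} for $N_{d^*-1}^I$, Theorem~\ref{thm:main2} for $|\Gamma_{d^*}^I|$, and $|V \setminus (\Gamma_{\le d^*}^{(1)} \cup \Gamma_{\le d^*}^{(2)})|/n \pto \lambda^{2-\gamma}$ by inclusion-exclusion, the analogous concentration argument gives
\begin{align*}
L_{d^*}^I/(n\alpha_n\lambda_n\log n) &\pto \gamma\eta, \\
L_{d^*+1}^I/(n\alpha_n\lambda_n\log n) &\pto \rho(1-(2-\xi)\lambda)\xi,
\end{align*}
so $|\mathsf{SPD}_{G_n^{(1)}} \cap \mathsf{SPD}_{G_n^{(2)}}|/(n\alpha_n\lambda_n\log n) \pto \gamma\eta + \rho(1-(2-\xi)\lambda)\xi$. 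Dividing the two limits yields $S_n \pto [\gamma\eta + \rho(1-(2-\xi)\lambda)\xi]/(\eta + 1 - \lambda)$, which matches the stated formula after absorbing $\gamma = \rho\gamma$ --- automatic because $\gamma > 0$ forces $\rho = 1$ (since $\gamma = \lim \rho_n^{d_n^*}$ with $d_n^* \to \infty$). The boundary regimes $\lambda = 1$ and $\eta = \infty$ each collapse onto a single dominant layer --- $d^*+1$ in the former via Lemma~\ref{lem:dstar} and Lemma~\ref{lem:rhodstar}, and $d^*$ in the latter --- and a direct ratio computation gives $S_n \pto \rho\gamma = \gamma$.

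The main obstacle is establishing the quantitative concentration uniformly across all regimes: the four layer counts live on widely different scales depending on the limits of $\lambda_n$, $\gamma_n$, $\eta_n$, and $\rho_n$, so the relative-error Chernoff bounds must be calibrated against the dominant scale of the denominator in each regime rather than invoked with a single fixed constant; in particular, the edge cases $\lambda \in \{0,1\}$ require switching to Lemma~\ref{lem:dstar} and Lemma~\ref{lem:rhodstar} in place of Theorem~\ref{thm:main1-tight} and Theorem~\ref{thm:main2}. A secondary delicacy is the independence structure for the intersection-layer counts: the definition of ``fresh'' vertex as one not lying in $\Gamma_{\le d-1}^{(1)} \cup \Gamma_{\le d-1}^{(2)}$ is essential, since the BFS conditioning to depth $d-1$ leaves all edges from $\Gamma_{d-1}^I$ to such $v$ unrevealed in both graphs, ensuring the independent Bernoulli behavior needed for concentration.
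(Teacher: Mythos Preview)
Your layerwise decomposition and unified normalization by $n\alpha_n\lambda_n\log n$ is exactly the paper's strategy, and your limiting values for the four layer counts are correct. Two issues remain, one minor and one substantive.

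The minor one: your claimed identity that an oriented edge lies in $\mathsf{SPD}_{G_n^{(1)}}\cap\mathsf{SPD}_{G_n^{(2)}}$ \emph{iff} $u\in\Gamma_{d-1}^I$ and $v\in\Gamma_d^I$ for a common $d$ is false as stated. An edge with $u\in\Gamma_{d^*-1}^{(1)}\cap\Gamma_{d^*}^{(2)}$ and $v\in\Gamma_{d^*}^{(1)}\cap\Gamma_{d^*+1}^{(2)}$ is in both DAGs with the same orientation but is missed by your sum. These cross-depth edges are in fact negligible in every regime (bounded by $|\Gamma_{d^*-1}^{(1)}|\cdot|\Gamma_{d^*+1}^{(2)}|\cdot q$, which is $o$ of the dominant scale since $\log(1/\lambda_n)/(nq)\to 0$), but you should say so rather than claim an exact decomposition.

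The substantive gap is in the regime $\lambda=0$, $\eta<\infty$. Your limit $L_{d^*+1}^{(i)}/(n\alpha_n\lambda_n\log n)\pto 1-\lambda$ requires $N_{\ge d^*+1}^{(i)}/(\lambda_n n)\pto 1$, and your limit for $L_{d^*+1}^I$ requires $|V\setminus(\Gamma_{\le d^*}^{(1)}\cup\Gamma_{\le d^*}^{(2)})|/(\lambda_n^{2-\gamma_n}n)\pto 1$. Theorem~\ref{thm:main1-tight} only gives $|N_{d^*+1}/n-\lambda_n|=O(1/\log\log n)$, which is useless here: $\eta<\infty$ forces $\lambda_n=\Omega(\log\log n/\log n)$, so $\lambda_n\log\log n\to 0$ and the relative error blows up. The lemmas you cite as substitutes, Lemma~\ref{lem:dstar} and Lemma~\ref{lem:rhodstar}, are both stated and proved only for $\lambda=1$; they say nothing about $N_{d^*+1}$ when $\lambda=0$. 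The paper flags this as ``the trickiest case'' and devotes its Case~4 to deriving exactly the sharpened estimates \eqref{eqn:ndconc1} and \eqref{eqn:ndconc2} from scratch, by re-running Proposition~\ref{prop:conc} with $\delta=\Theta(1/\log\log n)$ and controlling $(1-q)^{N_{d^*-1}}$ multiplicatively rather than additively. Without that argument (or an equivalent), your concentration claims for the $(d^*+1)$-layer do not go through in this regime.
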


\begin{proof}
    We split the proof into several cases, each of which will require different concentration results. The fully rigorous proofs for these concentrations are a bit tedious, so we only detail the proof for the first case and afterwards only present informally. Let $q_I=q(q(1-\rho_n)+\rho_n)$ be the edge probability of $G^{(1)}\cap G^{(2)}$.

    \paragraph{Case 1: $\lambda\in(0,1)$.} This is the easiest case, which intuitively follows from our earlier results. Combining Lemma~\ref{lem:chernoff} and Theorem~\ref{thm:main1}, it is easy to see that
    \begin{equation}\label{eqn:spdconc}
        \frac{|\mathsf{SPD}_{G^{(i)}}|}{\lambda(1-\lambda)n^2q}\pto1\,.
    \end{equation}
    Now we need to obtain a concentration result for $\mathsf{SPD}_{G^{(1)}}\cap\mathsf{SPD}_{G^{(2)}}$. It is intuitively clear that most of the edges are between $N_{d^*}^{I}$ and $N_{d^*+1}^{I}$. To prove it, we follow the procedure of the proof of Proposition~\ref{prop:rhoconc}, and after each step $d$, obtain a concentration bound on the number of edges between $N_{d-1}^I$ and $N_d^I$. Conditioned on $\Gamma_1^{(i)},\cdots,\Gamma_{d-1}^{(i)}$, the number of edges in $\mathsf{SPD}_{G^{(1)}}\cap\mathsf{SPD}_{G^{(2)}}$ that connects $N_{d-1}^I$ and $N_d^I$ follows
    \[
        \Binom\left(N_{d-1}^I\cdot\left|V\setminus\Gamma_{\leq d-1}^{(1)}\setminus\Gamma_{\leq d-1}^{(2)}\right| ,q_I\right)\,.
    \]
    We may assume the conclusion of Proposition~\ref{prop:rhoconc} holds. Then for $d\leq d^*$, we have that $N_{d-1}^I\leq 2(nq)^{d-1}$ and $|V\setminus\Gamma_{\leq d-1}^{(1)}\setminus\Gamma_{\leq d-1}^{(2)}|\leq n$. Thus, it easily follows from the Chernoff bound that the number of edges connecting $N_{d-1}^I$ and $N_d^I$ is bounded by
    \[
        4(nq)^{d-1}\cdot n\cdot q_I\,.
    \]
    with probability $1-o(1)$, simultaneously for all $d\leq d^*$. Hence, the number of edges connecting $N_{d-1}^I$ and $N_d^I$ for all $d\leq d^*$ is bounded by
    \[
        \sum_{d=1}^{d^*}4(nq)^{d-1}\cdot n\cdot q_I\leq 8(nq)^{d^*-1}\cdot n\cdot q_I\leq\frac{8\log(1/\lambda_n)n^2q_I}{nq}\,.
    \]
    This is negligible compared to $|\mathsf{SPD}_{G^{(i)}}|$ as written in \eqref{eqn:spdconc}. For $d\geq d^*+2$, we observe that the number of edges conditioned on the previous steps is stochastically dominated by
    \[
        \Binom\left(N_{d^*+1}^I\left|V\setminus\Gamma_{\leq d^*+1}^{(1)}\setminus\Gamma_{\leq d^*+1}^{(2)}\right|+\binom{\left|V\setminus\Gamma_{\leq d^*+1}^{(1)}\setminus\Gamma_{\leq d^*+1}^{(2)}\right|}{2}, q_I\right)\,.
    \]
    Since $|V\setminus\Gamma_{\leq d^*+1}^{(1)}\setminus\Gamma_{\leq d^*+1}^{(2)}|=o(n)$, it is clear that this is also negligible compared to $|\mathsf{SPD}_{G^{(i)}}|$. Thus, we only need to consider the edges between $N_{d^*}^{I}$ and $N_{d^*+1}^{I}$, whose cardinality follows
    \[
        \Binom(N_{d^*}^I|V\setminus\Gamma_{\leq d^*}^{(1)}\setminus\Gamma_{\leq d^*}^{(2)}|,q_I)
    \]
    conditioned on $\Gamma_1^{(i)},\cdots,\Gamma_{d^*}^{(i)}$. Since we know that $N_{d^*}^I$ concentrates around
    \[
        (1-2\lambda+\lambda^{2-\gamma})n=(1-(2-\xi)\lambda)n
    \]
    and $|V\setminus\Gamma_{\leq d^*}^{(1)}\setminus\Gamma_{\leq d^*}^{(2)}|$ concentrates around
    \[
        \lambda^{2-\gamma}n=\lambda\xi n
    \]
    we see that
    \[
        \frac{|\mathsf{SPD}_{G^{(1)}}\cap\mathsf{SPD}_{G^{(2)}}|}{\lambda(1-(2-\xi)\lambda)\xi n^2q_I}\pto1\,.
    \]
    This completes the proof when $\lambda\in(0,1)$.

    \paragraph{Case 2: $\lambda=1$.} From Lemma~\ref{lem:dstar}, we know that $N_{d^*}^{(i)}$ concentrates around $(nq)^{d^*}=\log(1/\lambda_n) n$. Conditioned on $N_1^{(i)},\cdots,N_{d^*+1}^{(i)}$, the number of vertices with distance at least $d^*+2$ follows
    \[
        \Binom\left(n-\sum_{d=0}^{d^*}N_d, (1-q)^{N_{d^*}}\right)\,.
    \]
    Note that $(1-q)^{N_{d^*}}\leq e^{-qN_{d^*}}\leq e^{-\log(1/\lambda_n) nq/2}=e^{-\alpha_n\log(1/\lambda_n)\log n/2}$ with high probability. From \eqref{eqn:lambda-sandwich} we have $\log(1/\lambda_n)\geq1/(\log\log n)^2$, so for large enough $n$
    \[
        (1-q)^{N_{d^*}}\leq e^{-\alpha_n\log(1/\lambda_n)\log n/2}\leq e^{-\log n/(\log\log n)^2}\ll \frac{1}{(\log\log n)^2}\leq\log(1/\lambda_n)
    \]
    with high probability. Hence, it follows that $N_{d^*}^{(i)}\gg N_{d^*+2}^{(i)}$ in probability limit, which means we only need to consider the edges connecting $N_{d^*}^{(i)}$ and $N_{d^*+1}^{(i)}$. Applying Lemma~\ref{lem:rhodstar} gives $N_{d^*}^I/N_{d^*}^{(i)}\pto \gamma$, which implies the desired result.

    \paragraph{Case 3: $\lambda=0$ and $\eta=\infty$.} This is similar to Case 2 since we can prove that $N_{d^*-1}^{(i)}\gg N_{d^*+1}^{(i)}$. Indeed, $N_{d^*-1}^{(i)}$ concentrates around $(nq)^{d^*-1}=\log(1/\lambda_n)/(nq)$ and $N_{d^*+1}$ is bounded by
    \[
        n(1-q)^{N_{d^*-1}}\leq ne^{-qN_{d^*-1}}=\lambda_nn
    \]
    with high probability. Thus, $\eta=\infty$ implies $N_{d^*-1}^{(i)}\gg N_{d^*+1}^{(i)}$ and the concentration follows.

    \paragraph{Case 4: $\lambda=0$ and $\eta<\infty$.} This case is the trickiest, since $N_{d^*+1}^{(i)}$ is not negligible anymore. Note that $\eta<\infty$ implies
    \begin{equation}\label{eqn:betabound}
        \lambda_n\geq\frac{1}{\log n}
    \end{equation}
    for large $n$. This case is the trickiest, since $N_{d^*+1}^{(i)}$ is not negligible anymore. Our goal is obtain concentration bounds for the quantities
    \begin{align}
        N_{\geq d^*+1}^{(i)}&:=\left|V\setminus\Gamma_{\leq d^*}^{(i)}\right|\,,\label{align:g1}\\
        N_{\geq d^*+2}^{(i)}&:=\left|V\setminus\Gamma_{\leq d^*+1}^{(i)}\right|\,,\label{align:g2}\\
        N_{d^*+1}^I&:=\left|\Gamma_{d^*+1}^{(1)}\cap\Gamma_{d^*+1}^{(2)}\right|\,.\label{align:g3}
    \end{align}
    We note that (the second part of) Theorem~\ref{thm:main1-tight} is not strong enough to handle our needs here. We must leverage \eqref{eqn:betabound} and extend the proof of Theorem~\ref{thm:main1-tight} to obtain a tighter result.

    Let $\epsilon>0$ be an arbitrary constant. Conditioned on $\Gamma_1^{(i)},\cdots,\Gamma_{d^*-1}^{(i)}$, $N_{\geq d^*+1}^{(i)}$ follows
    \[
        \Binom\left(n-\sum_{d=0}^{d^*-1}N_d^{(i)}, (1-q)^{N_{d^*-1}^{(i)}}\right)\,.
    \]
    We handle $(1-q)^{N_{d^*-1}^{(i)}}$, then the concentration for $N_{\geq d^*+1}^{(i)}$ will readily follow. For the upper bound, we apply Proposition~\ref{prop:conc} with $\delta_1=\frac{\log(1+\epsilon)}{\log\log n}$, which gives
    \[
        \begin{split}
            (1-q)^{N_{d^*-1}^{(i)}}&\leq e^{-qN_{d^*-1}^{(i)}}\\
            &\leq e^{-(1-\delta_1)q(nq)^{d^*-1}}\\
            &=\lambda_n\cdot e^{\delta_1\log(1/\lambda_n)}\\
            &\leq \lambda_n\cdot e^{\delta_1\log\log n}\\
            &\leq (1+\epsilon)\lambda_n
        \end{split}
    \]
    with probability $1-o(1)$. To get a lower bound, we use
    \[
        q=\frac{\alpha_n\log n}{n}\leq \frac{2\log\left(\frac{2}{2-\epsilon}\right)}{\alpha_n\log n+\log\left(\frac{2}{2-\epsilon}\right)}
    \]
    to apply Lemma~\ref{lem:alg1} and Lemma~\ref{lem:alg2}, which implies
    \begin{equation}\label{eqn:qepsbound}
        \begin{split}
            (1-q)^{N_{d^*-1}^{(i)}}&\geq\exp\left(-\left(1+\frac{\log\left(\frac{2}{2-\epsilon}\right)}{\alpha_n\log n}\right)qN_{d^*-1}^{(i)}\right)\\
            &\geq e^{-qN_{d*-1}^{(i)}}\cdot\exp\left(-\left(\frac{\log\left(\frac{2}{2-\epsilon}\right)}{\alpha_n\log n}\right)nq\right)\\
            &\geq\left(1-\frac{\epsilon}{2}\right)e^{-qN_{d^*-1}^{(i)}}
        \end{split}
    \end{equation}
    Now in the same manner as the derivation of the upper bound, we have
    \[
        e^{-qN_{d^*-1}^{(i)}}\geq\left(1-\frac{\epsilon}{2}\right)\lambda_n
    \]
    with probability $1-o(1)$. Thus, we obtain
    \[
        (1-q)^{N_{d^*-1}^{(i)}}\geq(1-\epsilon)\lambda_n\,.
    \]
    Therefore, we have proved that
    \[
        \frac{(1-q)^{N_{d^*-1}^{(i)}}}{\lambda_n}\pto1
    \]
    which in turn implies
    \[
        \frac{N_{\geq d^*+1}^{(i)}}{\lambda_nn}\pto1\,.
    \]

    Next, we bound (\ref{align:g2}). In fact, we only need a crude upper bound, which can be achieved by
    \[
        (1-q)^{N_{d^*}^{(i)}}\leq e^{-qN_{d^*}^{(i)}}\leq e^{-\frac{1}{2}nq}\leq n^{-\frac{\alpha_n}{2}}\,.
    \]
    Since $\lambda\geq\frac{1}{\log n}$, this shows that $N_{\geq d^*+1}^{(i)}\gg N_{\geq d^*+2}^{(i)}$, so we can safely neglect $N_{\geq d^*+2}^{(i)}$. In particular, we have proved that
    \begin{equation}\label{eqn:ndconc1}
        \frac{N_{d^*+1}^{(i)}}{\lambda_nn}\pto1\,.
    \end{equation}

    Finally, we handle (\ref{align:g3}). 
    As noted in the discussion preceding the statement of Theorem~\ref{thm:main3}, we actually want to prove concentration bounds for \eqref{eqn:hard} when $d=d^*$, which we temporarily denote by $\bar{q}$. Following the argument in the proof of Proposition~\ref{prop:rhoconc} that led to the bounds \eqref{eqn:qrbound} and \eqref{eqn:qlbound}, we have
    \begin{equation}\label{eqn:qbu}
        \bar{q}\leq(1-q)^{\tilde{N}_{d^*-1}^{(1)}+\tilde{N}_{d^*-1}^{(2)}-N_{d^*-1}^I}(1-q(1-\rho_n))^{N_{d^*-1}^I}
    \end{equation}
    and
    \begin{equation}\label{eqn:qbl}
        \bar{q}\geq(1-q)^{N_{d^*-1}^{(1)}+N_{d^*-1}^{(2)}-N_{d^*-1}^I}(1-q(1-\rho_n))^{N_{d^*-1}^I}\,.
    \end{equation}
    Starting from \eqref{eqn:qbu},
    \[
        \begin{split}
            \bar{q}&\leq(1-q)^{\tilde{N}_{d^*-1}^{(1)}+\tilde{N}_{d^*-1}^{(2)}-N_{d^*-1}^I}(1-q(1-\rho_n))^{N_{d^*-1}^I}\\
            &\leq e^{-q(\tilde{N}_{d^*-1}^{(1)}+\tilde{N}_{d^*-1}^{(2)}-\rho_n N_{d^*-1}^I)}\,.
        \end{split}
    \]
    Since Proposition~\ref{prop:conc} yields
    \[
        \tilde{N}_{d^*-1}^{(i)}\geq N_{d^*-1}^{(i)}-N_{\leq d^*-2}^{(i)}\geq N_{d^*-1}^{(i)}-2(nq)^{d^*-2}
    \]
    with high probability, we obtain
    \[
        \begin{split}
            \bar{q}&\leq e^{-q(\tilde{N}_{d^*-1}^{(1)}+\tilde{N}_{d^*-1}^{(2)}-\rho_n N_{d^*-1}^I)}\\
            &\leq e^{-q(N_{d^*-1}^{(1)}+N_{d^*-1}^{(2)}-\rho_n N_{d^*-1}^I)}\cdot e^{4q(nq)^{d^*-2}}\\
            &\leq (1+\epsilon)e^{-q(N_{d^*-1}^{(1)}+N_{d^*-1}^{(2)}-\rho_n N_{d^*-1}^I)}
        \end{split}
    \]
    for large $n$. On the other hand, applying \eqref{eqn:qepsbound} to \eqref{eqn:qbl} gives
    \[
        \bar{q}\geq(1-\epsilon)e^{-q(N_{d^*-1}^{(1)}+N_{d^*-1}^{(2)}-\rho_n N_{d^*-1}^I)}\,.
    \]
    Hence, it suffices to derive concentration bounds for $e^{-q(N_{d^*-1}^{(1)}+N_{d^*-1}^{(2)}-\rho_n N_{d^*-1}^I)}$. This can be done similarly as bounding $(1-q)^{N_{d^*-1}^{(i)}}$, by combining together Proposition~\ref{prop:rhoconc}, which we omit the details. Namely, one can show that
    \[
        (1-2\epsilon)\lambda_n^{2-\rho_n^{d^*}}\leq\bar{q}\leq (1+2\epsilon)\lambda_n^{2-\rho_n^{d^*}}\,.
    \]
    Recall that $\rho_n^{d^*}=\gamma_n$. Due to the fact that $N_{\geq d^*+2}^{(i)}$ are negligible, we arrive at
    \begin{equation}\label{eqn:ndconc2}
        \frac{N_{d^*+1}^I}{\lambda_n^{2-\gamma_n}\cdot n}\pto1\,.
    \end{equation}

    With these results, we can prove concentrations for the shortest path DAGs. Combining (\ref{eqn:ndconc1}) and (\ref{eqn:ndconc2}) yields $N_{d^*+1}^{I}/N_{d^*+1}^{(i)}\pto\xi$, the same argument with Case 2 leads to $N_{d^*}^I/N_{d^*}^{(i)}\pto\gamma$, and $N_{d^*}^{(i)}/N_{d^*+1}^{(i)}\to\eta$ by the assumption; altogether completes the proof.
    
\end{proof}

\subsection{The overlap of random shortest path trees}
\label{sec:overlap-trees}
As noted in Section \ref{sec:dagtree}, the shortest path DAG of a graph naturally encodes the information about the distribution of a uniformly random shortest path tree. The goal of this section is to analyze
\[
    R_n:=\frac{1}{n}|T_n^{(1)}\cap T_n^{(1)}|\,.
\]
As briefly mentioned in the introduction, $R_n$ is a still random variable after conditioned on the graphs $G^{(1)}$ and $G^{(2)}$, and the distribution depends on how we sample a pair of shortest path trees.

We will focus on the setting of Proposition~\ref{prop:sptree-conc}, where we sample a pair of shortest path trees $(T_1,T_2)$ from the optimal coupling with respect to the Hamming metric. Another choice is to sample each tree independently --- the analysis is much simpler, and will result in the usual notion of disorder chaos.

\subsubsection{The overlap under optimal coupling}\label{sec:overlap-opt}

By Proposition~\ref{prop:sptree-conc}, it suffices to prove that
\begin{equation*}%\label{eqn:parconc}
    \tilde{R}_n:=\mathbb{E}[R_n\mid G_n^{(1)},G_n^{(2)}]=\frac{1}{n}\sum_{v\in \tilde{V}\setminus\{1\}}\frac{|\mathsf{par}_{G_n^{(1)}}(v)\cap\mathsf{par}_{G_n^{(2)}}(v)|}{\max(|\mathsf{par}_{G_n^{(1)}}(v)|,|\mathsf{par}_{G_n^{(2)}}(v)|)}
\end{equation*}
concentrates, where $\tilde{V}$ is the set of vertices reachable from $1$. An equivalent way of formulating \eqref{eqn:parconc} is to consider a weighted version of shortest path DAG, where to each incoming edge $e$ to $v$ we give weight $1/\operatorname{indeg}(v)$, and compute the $\ell_1$ distance between two weighted shortest path DAGs.

\begin{theorem}\label{thm:main4}
    Suppose that $\lambda_n\to\lambda$ and $\gamma_n\to\gamma$ with $\lambda,\gamma\in[0,1]$, and $\rho_n\to\rho$. Then as $n\to\infty$, we have that
    \[
        \tilde{R}_n\pto\begin{dcases*}
            \frac{1-2\lambda+\lambda^{2-\gamma}}{1-\lambda}\cdot\lambda^{2-\gamma}\rho+g((1-\gamma)\log(1/\lambda),\gamma\log(1/\lambda))(1-\lambda^{\gamma}) & \text{if $0<\lambda<1$,}\\
            \gamma & \text{otherwise}\\
        \end{dcases*}
    \]
    where $g(a,b)$ is defined by
    \[
        g(a,b)=\begin{dcases*}
            \mathbb{E}\left[\frac{Z}{\max(X_1,X_2)+Z}\,\middle|\,Z>0\right]&\text{if $a,b>0$,}\\
            1&\text{if $a=0$,}\\
            0&\text{if $b=0$}
        \end{dcases*}
    \]
    for independent $X_1,X_2\sim\Pois(a)$ and $Z\sim\Pois(b)$.
\end{theorem}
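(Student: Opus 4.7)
The plan is to partition the sum $\tilde R_n=\frac{1}{n}\sum_v\frac{|\mathsf{par}_{G^{(1)}}(v)\cap\mathsf{par}_{G^{(2)}}(v)|}{\max(|\mathsf{par}_{G^{(1)}}(v)|,|\mathsf{par}_{G^{(2)}}(v)|)}$ according to the joint distance pair $(\mathsf{d}_{G^{(1)}}(1,v),\mathsf{d}_{G^{(2)}}(1,v))$ and analyze each piece in turn. By Proposition~\ref{prop:conc}, Theorem~\ref{thm:main1-tight}, and Proposition~\ref{prop:rhoconc}, asymptotically almost surely at most $o(n)$ vertices fail to have both distances in $\{d^*,d^*+1\}$. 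Writing
\[
A:=\Gamma_{d^*}^I,\qquad B:=\Gamma_{d^*+1}^I,\qquad C:=\{v:\mathsf{d}_{G^{(1)}}(1,v)\neq\mathsf{d}_{G^{(2)}}(1,v)\},
\]
one has $|A|/n\pto 1-2\lambda+\lambda^{2-\gamma}$, $|B|/n\pto\lambda^{2-\gamma}$, and $|C|/n\pto 2\lambda-2\lambda^{2-\gamma}$ by Theorem~\ref{thm:main2} and Corollary~\ref{cor:overlap1}. For $v\in C$, the two parent sets lie in disjoint distance layers, so the numerator vanishes and $v$ contributes $0$; only $A$ and $B$ matter. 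I will first work in the generic regime $0<\lambda<1$ and then address the two boundary cases $\lambda\in\{0,1\}$.

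For $v\in B$, the plan is to exploit that the parent sizes diverge. Conditioning on the BFS up to depth $d^*$ in both graphs reveals $(\Gamma_{d^*}^{(1)},\Gamma_{d^*}^{(2)},\Gamma_{d^*}^I)$, after which the edges from each remaining vertex are independent across vertices. The parent-size $|\mathsf{par}_{G^{(i)}}(v)|$ is $\Binom(N_{d^*}^{(i)},q)$ with mean $\sim(1-\lambda)nq\to\infty$, and the common-parent count is $\Binom\bigl(N_{d^*}^I,\,q\rho_n+(1-\rho_n)q^2\bigr)$ with mean $\sim\rho_n(1-2\lambda+\lambda^{2-\gamma})nq\to\infty$. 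Chernoff concentration combined with Theorems~\ref{thm:main1-tight}--\ref{thm:main2} therefore gives, uniformly over $v\in B$ with high probability,
\[
\frac{|\mathsf{par}_{G^{(1)}}(v)\cap\mathsf{par}_{G^{(2)}}(v)|}{\max(|\mathsf{par}_{G^{(1)}}(v)|,|\mathsf{par}_{G^{(2)}}(v)|)}=\rho_n\cdot\frac{N_{d^*}^I}{N_{d^*}^{(i)}}+o_p(1)\pto\rho\cdot\frac{1-2\lambda+\lambda^{2-\gamma}}{1-\lambda},
\]
and summing over $v\in B$ and dividing by $n$ produces the first term $\lambda^{2-\gamma}\rho\cdot\frac{1-2\lambda+\lambda^{2-\gamma}}{1-\lambda}$.

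For $v\in A$, parent sizes have bounded means tending to $\log(1/\lambda)$, so Poisson approximation replaces concentration. Conditioning on the BFS up to depth $d^*-1$ reveals $(\Gamma_{d^*-1}^{(1)},\Gamma_{d^*-1}^{(2)},\Gamma_{d^*-1}^I)$; by Propositions~\ref{prop:conc} and~\ref{prop:rhoconc}, with high probability $N_{d^*-1}^{(i)}q\to\log(1/\lambda)$ and $N_{d^*-1}^I\rho_nq\to\gamma\log(1/\lambda)$. For each remaining $v$, classify each potential parent $u$ by whether it lies in $\Gamma_{d^*-1}^I$ or in $\Gamma_{d^*-1}^{(i)}\setminus\Gamma_{d^*-1}^I$, and, for $u\in\Gamma_{d^*-1}^I$, by the joint outcome of the coupled edge pair $(e_{uv}^{(1)},e_{uv}^{(2)})$; these outcomes are mutually exclusive multinomial trials (the few ``bad'' $u\in\Gamma_{d^*-1}^{(1)}\cap\Gamma_{\leq d^*-2}^{(2)}$ and similar are negligible by the same squeeze argument used in the proof of Proposition~\ref{prop:rhoconc}). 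A standard Poisson limit for multinomial counts then gives that the triple
\[
\bigl(X_1^v,X_2^v,Z^v\bigr):=\Bigl(|\mathsf{par}_{G^{(1)}}(v)|-|\mathsf{par}_{G^{(1)}}(v)\cap\mathsf{par}_{G^{(2)}}(v)|,\ |\mathsf{par}_{G^{(2)}}(v)|-|\mathsf{par}_{G^{(1)}}(v)\cap\mathsf{par}_{G^{(2)}}(v)|,\ |\mathsf{par}_{G^{(1)}}(v)\cap\mathsf{par}_{G^{(2)}}(v)|\Bigr)
\]
converges jointly in distribution to independent Poisson variables with parameters $\bigl((1-\gamma)\log(1/\lambda),(1-\gamma)\log(1/\lambda),\gamma\log(1/\lambda)\bigr)$, and $\{v\in A\}$ becomes $\{X_1^v+Z^v\geq1,\,X_2^v+Z^v\geq1\}$. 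Since conditional on the BFS the triples across different $v$ are independent and the summands $Z^v/(\max(X_1^v,X_2^v)+Z^v)$ lie in $[0,1]$, a Chebyshev/LLN argument yields
\[
\frac{1}{n}\sum_{v\in A}\frac{Z^v}{\max(X_1^v,X_2^v)+Z^v}\pto\E\!\left[\frac{Z}{\max(X_1,X_2)+Z}\mathbbm{1}_{Z\geq1}\right]=(1-\lambda^\gamma)\,g\bigl((1-\gamma)\log(1/\lambda),\gamma\log(1/\lambda)\bigr),
\]
using that $Z=0$ forces the ratio to vanish and that $\Pr(Z\geq1)=1-e^{-\gamma\log(1/\lambda)}=1-\lambda^\gamma$. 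Adding the $A$- and $B$-contributions yields the formula for $0<\lambda<1$.

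The boundary cases collapse as follows. When $\lambda=0$ the Poisson parameters diverge, so $X_1^v,X_2^v,Z^v$ concentrate and each $v\in A$ contributes $\gamma\log(1/\lambda_n)/\log(1/\lambda_n)=\gamma$ while $|B|/n\to0$; when $\lambda=1$ the $A$-contribution vanishes and a Taylor expansion of $1-2\lambda+\lambda^{2-\gamma}$ near $\lambda=1$ together with Lemmas~\ref{lem:dstar} and~\ref{lem:rhodstar} gives $N_{d^*}^I/N_{d^*}^{(i)}\to\gamma$, so the $B$-contribution (now carrying the full mass) tends to $\gamma$ (using $\rho=1$ whenever $\gamma>0$). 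The main technical obstacle is the joint Poisson approximation for the triple $(X_1^v,X_2^v,Z^v)$ together with the LLN over $v\in A$: the Poisson parameters $N_{d^*-1}^{(i)}q$ and $N_{d^*-1}^I\rho_nq$ are themselves random and only converge in probability by Propositions~\ref{prop:conc}--\ref{prop:rhoconc}, so one must run the Chebyshev step on a high-probability event on which these parameters are close to their limits, and within that event verify that the conditional mean of the bounded summand $Z^v/(\max(X_1^v,X_2^v)+Z^v)$ converges uniformly to the limiting Poisson expectation, taking care with the conditioning event $\{X_i^v+Z^v\geq1\}$ that actually defines membership in $A$.
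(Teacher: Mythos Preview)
Your overall strategy matches the paper's: decompose $\tilde R_n$ according to the joint distance pair, show the mismatched-distance block contributes $o(1)$, and handle $\Gamma_{d^*}^I$ via a Poisson limit and $\Gamma_{d^*+1}^I$ via concentration. The paper packages this through three lemmas (Lemmas~\ref{lem:decomp}--\ref{lem:decomp3}) that formalize the conditional i.i.d.\ structure and identify the limit as $\Pi_{d^*}+\lambda^{2-\gamma}\Pi_{d^*+1}$, but the content is the same.

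There is, however, a genuine gap in your treatment of the set $C$. You assert that for $v\in C$ ``the two parent sets lie in disjoint distance layers, so the numerator vanishes.'' This is false. If, say, $\mathsf{d}_{G^{(1)}}(1,v)=d^*$ and $\mathsf{d}_{G^{(2)}}(1,v)=d^*+1$, then $\mathsf{par}_{G^{(1)}}(v)\subset\Gamma_{d^*-1}^{(1)}$ while $\mathsf{par}_{G^{(2)}}(v)\subset\Gamma_{d^*}^{(2)}$, and these are layers of \emph{different} graphs: a vertex $u$ with $\mathsf{d}_{G^{(1)}}(1,u)=d^*-1$ and $\mathsf{d}_{G^{(2)}}(1,u)=d^*$ can perfectly well lie in both parent sets. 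Since $|C|/n\to 2(\lambda-\lambda^{2-\gamma})$ is of constant order when $0<\lambda<1$, you cannot simply discard $C$ on the grounds of a zero numerator.

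The correct argument, carried out in the paper's proof of Lemma~\ref{lem:decomp}(b), is that for such $v$ the \emph{ratio} is small rather than the numerator being zero: one of the two parent sets has size of order $nq$ (here $|\mathsf{par}_{G^{(2)}}(v)|\sim(1-\lambda)nq$) while the other, and hence the numerator, is bounded by a Poisson-scale quantity (here $|\mathsf{par}_{G^{(1)}}(v)|$ has mean $\log(1/\lambda)$). Thus for all but $o(n)$ vertices of $C$ the ratio is $O(K/((1-\lambda)nq))=o(1)$, and summing over $C$ gives $o(1)$. Replacing your one-line dismissal of $C$ with this estimate fixes the gap; the rest of your outline is sound.
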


% We now have a concentration result on the overlap of random shortest path trees.

% \begin{corollary}
%     There exist smooth functions $\psi_n(q,\rho)$ such that for any $\epsilon>0$, we get $\Pr(|R_n-\psi_n|>\epsilon)\to0$ as $n\to\infty$.
% \end{corollary}
% \begin{proof}
%     The proof is similar to Corollaries \ref{cor:main1} and \ref{cor:main2}. We set
%     \[
%         \psi_n=\frac{1-2\lambda_n+\lambda_n^{2-\gamma_n}}{1-\lambda_ n}\lambda_n^{2-\gamma_n}\rho+g((1-\gamma_n)\log(1/\lambda_n),\gamma_n\log(1/\lambda_n))(1-\lambda_n^{\gamma_n})\,.
%     \]
%     Note that
%     \[
%         g(a,b)\leq\mathbb{E}\left[\frac{Z}{X_1+Z}\,\middle|\,Z>0\right]=\frac{1-e^{-(a+b)}}{1-e^{-b}}\cdot\frac{b}{a+b}
%     \]
%     for $a,b>0$, so $g$ is smooth if $a+b>0$, and
%     \[
%         g((1-\gamma_n)\log(1/\lambda_n),\gamma_n\log(1/\lambda_n))(1-\lambda_n^{\gamma_n})\leq\frac{1-\lambda_n}{1-\lambda_n^{\gamma_n}}\cdot\gamma_n\cdot (1-\lambda_n^{\gamma_n})=(1-\lambda_n)\gamma_n\,.
%     \]
%     Hence, $\psi_n\to\gamma$ as $\lambda_n\to1$ (see the proof of Corollary \ref{cor:main2}). The Bolzano--Weierstrass theorem gives $\Pr(|\tilde{R}_n-\psi_n|>\epsilon)\to0$ and applying Proposition~\ref{prop:sptree-conc} proves the result for $R_n$.
% \end{proof}

Now we prove Theorem~\ref{thm:main4}. In contrast to Theorem~\ref{thm:main3} where we had to deal with minority vertices, $\tilde{R}_n$ is an average over all \emph{vertices}, so by Theorem~\ref{thm:main1} we can safely disregard all vertices in $\Gamma_{\leq d^*-1}$ and $\Gamma_{\geq d^*+1}$. The proof of Theorem~\ref{thm:main4} is slightly more sophisticated since we have two graphs, so we first state two crucial observations as a separate lemma.

\begin{lemma}\label{lem:decomp}
    For each $d\geq1$ and $v\in V$, we define random variables
    \[
        Y_{d,v}=\begin{dcases*}
            \frac{|\mathsf{par}_{G^{(1)}}(v)\cap\mathsf{par}_{G^{(2)}}(v)|}{\max(|\mathsf{par}_{G^{(1)}}(v)|, |\mathsf{par}_{G^{(2)}}(v)|)}&\text{if $v\in\Gamma_d^I$ and $v\sim\Gamma_{d-1}^{I}$,}\\
            0&\text{otherwise}
        \end{dcases*}
    \]
    where $\sim$ means connectivity in both graphs. Then the following hold.
    \begin{enumerate}[label=(\alph*)]
        \item Conditioned on $\Gamma_{1}^{(i)},\cdots,\Gamma_{d-1}^{(i)}$, random variables in $\{Y_{d,v}\}_{v\in V\setminus\Gamma_{\leq d-1}^{(1)}\setminus\Gamma_{\leq d-1}^{(2)}}$ are independent and identically distributed.
        \item Let $\bar{Y}=\frac{1}{n}\sum_{v\in V}(Y_{d^*,v}+Y_{d^*+1,v})$. Then for any $\epsilon>0$ we have $\Pr(|\tilde{R}_n-\bar{Y}|>\epsilon)\to0$ as $n\to\infty$.
    \end{enumerate}
\end{lemma}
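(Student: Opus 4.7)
The plan is to apply a standard BFS-revealing argument, in the same spirit as Propositions~\ref{prop:conc} and \ref{prop:rhoconc}. Conditioning on $\Gamma_1^{(i)}, \ldots, \Gamma_{d-1}^{(i)}$ for $i = 1, 2$ fixes the set $W := V \setminus \Gamma_{\leq d-1}^{(1)} \setminus \Gamma_{\leq d-1}^{(2)}$ of vertices not yet assigned to any layer, and the only constraint this conditioning places on each $v \in W$ is the absence of edges from $v$ to $\Gamma_{\leq d-2}^{(i)}$ in $G^{(i)}$ (otherwise $v$ would have distance at most $d-1$). The edges from $v$ to $\Gamma_{d-1}^{(i)}$ and to $W$ remain unrevealed, and retain their original correlated Erd\"os--R\'enyi law (independent correlated Bernoulli$(q, \rho_n)$ pairs across potential edge slots). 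Since $Y_{d,v}$ is a function only of the edges incident to $v$ landing in $\Gamma_{d-1}^{(1)} \cup \Gamma_{d-1}^{(2)}$, and distinct $v, v' \in W$ give rise to disjoint edge slots, the $\{Y_{d,v}\}_{v \in W}$ are conditionally independent. Identical distribution then follows from the symmetry of the conditional law across $W$.

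\paragraph{Part (b).} The plan is to show that $\tilde{R}_n - \bar{Y}$ only collects contributions from vertices whose distance lies outside $\{d^*, d^*+1\}$ in at least one of the two graphs, and that such vertices number $o_p(n)$. Fix $v \in \tilde{V} \setminus \{1\}$ and write $d_i = \mathsf{d}_{G^{(i)}}(1,v)$. If $d_1 \neq d_2$, then $\mathsf{par}_{G^{(1)}}(v) \cap \mathsf{par}_{G^{(2)}}(v) = \emptyset$, so both $\tilde{R}_n$ and $\bar{Y}$ receive contribution $0$ from $v$. If $d_1 = d_2 = d$ with $d \in \{d^*, d^*+1\}$, the containment $\mathsf{par}_{G^{(1)}}(v) \cap \mathsf{par}_{G^{(2)}}(v) \subseteq \Gamma_{d-1}^I$ shows that the summand in $\tilde{R}_n$ coincides with $Y_{d,v}$, with both quantities vanishing precisely when no common parent exists in $\Gamma_{d-1}^I$ (i.e., when the gating condition $v \sim \Gamma_{d-1}^I$ fails). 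Therefore,
\[
\bigl|\tilde{R}_n - \bar{Y}\bigr| \;\leq\; \frac{|V_{\mathrm{bad}}|}{n}, \qquad V_{\mathrm{bad}} := \bigl\{\, v \in \tilde{V} : d_1 \notin \{d^*, d^*+1\} \text{ or } d_2 \notin \{d^*, d^*+1\}\,\bigr\}.
\]
Applying Theorem~\ref{thm:main1-tight} to each of $G^{(1)}$ and $G^{(2)}$ separately and taking a union bound yields $|V_{\mathrm{bad}}| = o_p(n)$, from which the second claim follows.

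\paragraph{Anticipated obstacle.} Neither part is technically deep once the machinery of earlier sections is in hand, so I do not anticipate a major obstacle. The one subtlety in part (a) is carefully tracking the conditional law of the correlated pair of edges after BFS conditioning, which mirrors the setup already handled in Proposition~\ref{prop:rhoconc}. For part (b), the crucial observation is $\mathsf{par}_{G^{(1)}}(v) \cap \mathsf{par}_{G^{(2)}}(v) \subseteq \Gamma_{d-1}^I$, which is what makes the extra gating condition $v \sim \Gamma_{d-1}^I$ in the definition of $Y_{d,v}$ consistent with the ratio appearing in $\tilde{R}_n$ --- without this observation, one might worry that $\bar{Y}$ systematically undercounts certain contributions, and verifying that this is not the case is the main conceptual step in this part.
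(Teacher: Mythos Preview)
Your part (a) is essentially correct and matches the paper's brief argument: $Y_{d,v}$ is a function of the edges from $v$ to $\Gamma_{d-1}^{(1)}\cup\Gamma_{d-1}^{(2)}$, these edge sets are disjoint across $v\in W$, and the conditional law is symmetric in $v$. (One minor imprecision: edges from $v$ to a vertex $u\in\Gamma_{d-1}^{(1)}\cap\Gamma_{\leq d-2}^{(2)}$ do \emph{not} retain their original correlated law, since the $G^{(2)}$-edge is already forced absent by $v\in W$; but this modification is the same for all $v\in W$, so independence and identical distribution are unaffected.)

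Part (b), however, has a genuine gap. Your claim that ``if $d_1\neq d_2$, then $\mathsf{par}_{G^{(1)}}(v)\cap\mathsf{par}_{G^{(2)}}(v)=\emptyset$'' is false. The parent sets satisfy $\mathsf{par}_{G^{(i)}}(v)\subseteq\Gamma_{d_i-1}^{(i)}$, and since these are layers in \emph{different} graphs, a single vertex $u$ can lie in $\Gamma_{d_1-1}^{(1)}\cap\Gamma_{d_2-1}^{(2)}$ even when $d_1\neq d_2$. Concretely, for $v\in\Gamma_{d^*}^{(1)}\cap\Gamma_{d^*+1}^{(2)}$, a vertex $u$ with $\mathsf{d}_{G^{(1)}}(1,u)=d^*-1$ and $\mathsf{d}_{G^{(2)}}(1,u)=d^*$ that is adjacent to $v$ in both graphs is a common parent. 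When $0<\lambda<1$, the set $\Gamma_{d^*}^{(1)}\cap\Gamma_{d^*+1}^{(2)}$ has size $\Theta(n)$, so your bound $|\tilde R_n-\bar Y|\le |V_{\mathrm{bad}}|/n$ does not go through.

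The paper's proof handles this case differently: for $v\in\Gamma_{d^*}^{(1)}\cap\Gamma_{d^*+1}^{(2)}$ it shows that, with high probability and for all but $o(n)$ such $v$, one has $|\mathsf{par}_{G^{(2)}}(v)|\ge\frac{1}{2}(1-\lambda)nq=\Omega(\log n)$ while $|\mathsf{par}_{G^{(1)}}(v)|\le K$ for a constant $K$ (since $|\mathsf{par}_{G^{(1)}}(v)|$ is approximately $\mathrm{Poisson}(\log(1/\lambda))$). The ratio for those vertices is then $O(K/\log n)=o(1)$, and the remaining $o(n)$ exceptional vertices contribute $o(1)$ trivially. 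This estimate on the size of the ratio---rather than on the size of the vertex set---is the missing ingredient in your argument.
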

\begin{proof}
    The first part follows by observing that $Y_{d,v}$ are nonzero if and only if $v\sim\Gamma_{d-1}^I$, so $|\mathsf{par}_{G^{(i)}}(v)|$ are functions of the edge connectivity between $v$ and $\Gamma_{d-1}^{(i)}$.
    
    To prove the second part, we first note that $\bar{Y}\leq\tilde{R}_n$; intuitively, $(n-1)\bar{Y}$ is the summation as in \eqref{eqn:parconc}, but only over vertices contained in $\Gamma_{d^*}^{I}$ and $\Gamma_{d^*+1}^I$. It remains to prove that the other vertices are either contained in a small set, or the overlap of its parents is negligible. Indeed, by Proposition~\ref{prop:conc}, the sets $\Gamma_{d}^{(i)}$ where $d<d^*$ or $d>d^*+1$ have cardinality $o(n)$, which does not contribute to the sum. Thus, we only need to deal with the case $\Gamma_{d^*}^{(i)}\setminus\Gamma_{d^*}^I$, and more specifically $0<\lambda<1$.
    
    Without loss of generality, we consider $\Gamma_{d^*}^{(1)}\cap\Gamma_{d^*+1}^{(2)}$, which occupies most of $\Gamma_{d^*}^{(1)}\setminus\Gamma_{d^*}^{(2)}$ by Theorem~\ref{thm:main1}. We show that most vertices here are connected to many vertices in $\Gamma_{d^*}^{(2)}$ but few vertices in $\Gamma_{d^*-1}^{(1)}$. Conditioned on $\Gamma_{\leq d^*}^{(2)}$, each vertex $v\in V\setminus\Gamma_{\leq d^*}^{(2)}$ satisfies
    \[
        |\{w\in\Gamma_{d^*}^{(2)}:v\sim_2 w\}|\sim\Binom(N_{d^*}^{(2)},q)\,.
    \]
    Let $E_v$ be the event such that $v$ is connected to at most $\frac{1}{2}(1-\lambda)nq$ vertices in $\Gamma_{\leq d^*-1}^{(2)}$. Since $N_{d^*}$ concentrates around $(1-\lambda)n$ by Proposition~\ref{prop:conc}, Lemma~\ref{lem:chernoff} gives
    \[
        \Pr(E_v\mid\Gamma_{\leq d^*-1}^{(2)})\leq o(1)\,.
    \]
    Again, by Lemma~\ref{lem:chernoff}, this implies that at most $o(n)$ vertices in $V\setminus\Gamma_{\leq d^*-1}^{(2)}$ are connected to at most $\frac{1}{2}(1-\lambda)nq$ vertices in $\Gamma_{\leq d^*}^{(2)}$ with high probability.

    Next, conditioned on $\Gamma_{\leq d^*-1}^{(1)}$, each vertex $v\in V\setminus\Gamma_{\leq d^*-1}^{(1)}$ satisfies
    \[
        |\{w\in\Gamma_{d^*-1}^{(1)}:v\sim_2 w\}|\sim\Binom(N_{d^*-1}^{(1)},q)\,.
    \]
    Since $qN_{d^*-1}^{(1)}$ concentrates around $\log(1/\lambda)$ by Proposition~\ref{prop:conc}, we have
    \[
        \Binom(N_{d^*-1}^{(1)},q)\dto\Pois(\log(1/\lambda))
    \]
    so for any $\epsilon>0$ there exists $K$ such that at most $\epsilon n$ vertices in $V\setminus\Gamma_{\leq d^*-1}^{(1)}$ are connected to at least $K$ vertices in $\Gamma_{\leq d^*-1}^{(1)}$ with probability $1-o(1)$.

    Therefore, all but at most $o(n)$ vertices in $\Gamma_{d^*}^{(1)}\cap\Gamma_{d^*+1}^{(2)}$ have at least $\frac{1}{2}(1-\lambda)nq=\Omega(\log n)$ connections to $\Gamma_{d^*}^{(2)}$ and at most $K$ connections to $\Gamma_{d^*}^{(1)}$, so their contributions to \eqref{eqn:parconc} are $o(1)$.    
\end{proof}

\begin{lemma}\label{lem:decomp2}
    Conditioned on $\Gamma_1^{(i)},\cdots,\Gamma_{d-1}^{(i)}$ and $v\in V\setminus\Gamma_{\leq d-1}^{(1)}\setminus\Gamma_{\leq d-1}^{(2)}$, define independent random variables $X_d^{(i)}$, $Z_d$, and $W_d^{(i)}$ such that
    \[
        X_d^{(i)}\sim\Binom(N_{d-1}^{(i)}-N_{d-1}^I,q)
    \]
    and
    \[
        (Z_d, W_d^{(1)}, W_d^{(2)}, -)\sim\Multinom(N_{d-1}^I; q_I, q-q_I, q-q_I, -)
    \]
    where $q_I=q(q(1-\rho_n)+\rho_n)$ is the edge probability of $G^{(1)}\cap G^{(2)}$. Then we have that
    \[
        \begin{bmatrix}
            |\mathsf{par}_{G^{(1)}}(v)|\\
            |\mathsf{par}_{G^{(2)}}(v)|\\
            |\mathsf{par}_{G^{(1)}}(v)\cap \mathsf{par}_{G^{(2)}}(v)|
        \end{bmatrix}\mathbf{1}_{v\sim\Gamma_{d-1}^{I}}\deq\begin{bmatrix}
            X_d^{(1)}+Z_d+W_d^{(1)}\\
            X_d^{(2)}+Z_d+W_d^{(2)}\\
            Z_d
        \end{bmatrix}\mathbf{1}_{Z_d>0}\,.
    \]
    In particular, conditioned on $\Gamma_1^{(i)},\cdots,\Gamma_{d-1}^{(i)}$, we have
    \[
        Y_{d,v}\deq\begin{dcases*}
            \frac{Z_d}{\max(X_d^{(1)}+W_d^{(1)}, X_d^{(2)}+W_d^{(2)})+Z_d}\cdot\mathbf{1}_{Z_d>0}&\text{if $v\in V\setminus\Gamma_{\leq d-1}^{(1)}\setminus\Gamma_{\leq d-1}^{(2)}$,}\\
            0&\text{otherwise.}
        \end{dcases*}
    \]
\end{lemma}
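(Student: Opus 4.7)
The plan is to reveal the BFS exposure $\sigma(\Gamma_{\leq d-1}^{(1)}, \Gamma_{\leq d-1}^{(2)})$, restrict to the event $\{v \in V \setminus \Gamma_{\leq d-1}^{(1)} \setminus \Gamma_{\leq d-1}^{(2)}\}$, and decompose $v$'s frontier edges according to the partition $\Gamma_{d-1}^{(i)} = \Gamma_{d-1}^I \sqcup (\Gamma_{d-1}^{(i)} \setminus \Gamma_{d-1}^I)$. First I would record the structural identity: since $v$ is unexplored in both graphs, $\mathsf{par}_{G^{(i)}}(v) \subseteq \Gamma_{d-1}^{(i)}$, with equality to the full $G^{(i)}$-neighborhood of $v$ in $\Gamma_{d-1}^{(i)}$ precisely when $v \in \Gamma_d^{(i)}$. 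The event $\{v \sim \Gamma_{d-1}^I\}$ forces $v \in \Gamma_d^I$, so on this event both parent sets reduce to these neighborhoods and the intersection $\mathsf{par}_{G^{(1)}}(v) \cap \mathsf{par}_{G^{(2)}}(v)$ lies in $\Gamma_{d-1}^I$.

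Next I would enumerate the joint edge distribution. For each $u \in \Gamma_{d-1}^I$, the pair $(\mathbf{1}_{v \sim_1 u}, \mathbf{1}_{v \sim_2 u})$ takes the four possible values $(1,1), (1,0), (0,1), (0,0)$ with the correlated-edge probabilities $(q_I, q - q_I, q - q_I, 1 - 2q + q_I)$; aggregating over the $N_{d-1}^I$ such vertices produces the multinomial $(Z_d, W_d^{(1)}, W_d^{(2)}, \cdot)$ of the lemma. For each $u \in \Gamma_{d-1}^{(i)} \setminus \Gamma_{d-1}^I$, only $\mathbf{1}_{v \sim_i u}$ enters the $i$-th parent count, and this indicator is marginally $\Bernoulli(q)$, so summing over the $N_{d-1}^{(i)} - N_{d-1}^I$ such vertices gives $X_d^{(i)}$. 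Mutual independence of $X_d^{(1)}, X_d^{(2)}$, and $(Z_d, W_d^{(1)}, W_d^{(2)})$ follows because the three groups depend on disjoint unordered vertex pairs in the correlated Erd\"os--R\'enyi construction. Putting the pieces together yields $|\mathsf{par}_{G^{(i)}}(v)| = X_d^{(i)} + Z_d + W_d^{(i)}$, $|\mathsf{par}_{G^{(1)}}(v) \cap \mathsf{par}_{G^{(2)}}(v)| = Z_d$, and $\mathbf{1}_{v \sim \Gamma_{d-1}^I} = \mathbf{1}_{Z_d > 0}$, which is the claimed joint identity. The second display for $Y_{d,v}$ then follows by substituting these expressions into the definition of $Y_{d,v}$ in Lemma~\ref{lem:decomp} and cancelling the common $Z_d$ in the denominator.

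The main subtle point will be that an edge from $v$ to a vertex $u \in \Gamma_{d-1}^{(1)}$ lying in $\Gamma_{\leq d-2}^{(2)}$ implicitly has $v \not\sim_2 u$ forced by the conditioning, which would technically bias $\Pr(v \sim_1 u)$ towards $(1-\rho_n) q$ through the two-graph correlation (and symmetrically with the roles of $(1)$ and $(2)$ swapped). This is precisely the obstacle already confronted in the proof of Proposition~\ref{prop:rhoconc}, where the authors passed to the truly-fresh subsets $\tilde{\Gamma}_{d-1}^{(i)}$. I would address it either by additionally conditioning on the impure subsets $\Gamma_{d-1}^{(i)} \cap \Gamma_{\leq d-2}^{(3-i)}$ being negligibly small, which holds with probability $1 - o(1)$ by the concentration estimates of Section~\ref{sec:odv}, or by absorbing the resulting small perturbation of $X_d^{(i)}$ into the vanishing error budget in the asymptotic applications of this lemma (namely Lemma~\ref{lem:decomp} and Theorem~\ref{thm:main4}).
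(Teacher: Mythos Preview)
Your approach is essentially the same as the paper's, which simply asserts the interpretation of $X_d^{(i)}$, $Z_d$, and $W_d^{(i)}$ as edge counts to the respective parts of $\Gamma_{d-1}^{(i)}$ and $\Gamma_{d-1}^I$ in one sentence. Your first two paragraphs spell this out more carefully and are correct.

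Your third paragraph is a genuine and valuable addition: the concern about vertices $u \in \Gamma_{d-1}^{(1)} \cap \Gamma_{\leq d-2}^{(2)}$ having biased edge probability $q(1-\rho_n)$ rather than $q$ is real, and the paper's proof simply does not address it. Strictly speaking, this makes the distributional identity for $X_d^{(i)}$ slightly inexact as stated. The paper handles this same issue explicitly in the proof of Proposition~\ref{prop:rhoconc} via the sets $\tilde{\Gamma}_{d-1}^{(i)}$, but here it is silently absorbed into the asymptotic applications in Lemma~\ref{lem:decomp3} and Theorem~\ref{thm:main4}, exactly as you suggest. So your caveat is not a gap in your argument but rather a correct observation that the lemma as written is an approximation whose error is negligible in every downstream use; keep it.
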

\begin{proof}
    This is clear once we realize that $X_d^{(i)}$ represents edges in $G^{(i)}$ connected to $\Gamma_{d-1}^{(i)}\setminus\Gamma_{d-1}^I$, $Z_d$ represents edges in $G^{(1)}\cap G^{(2)}$ connected to $\Gamma_{d-1}^I$, and $W_d^{(i)}$ represents edges only in $G^{(i)}$ connected to $\Gamma_{d-1}^I$. These are precisely the parents of $v$ in the corresponding shortest path DAGs if $v\sim\Gamma_{d-1}^I$, i.e. $Z_d>0$.
\end{proof}

Combining the above lemmata, we obtain the following lemma which is extremely useful in analyzing $\tilde{R}_n$.

\begin{lemma}\label{lem:decomp3}
    Consider a random variable $\Pi_d$ which is a function of $\Gamma_1^{(i)},\cdots,\Gamma_{d-1}^{(i)}$ defined by
    \[
        \Pi_d:=\mathbb{E}\left[\frac{Z_{d}}{\max(X_{d}^{(1)}+W_{d}^{(1)}, X_{d}^{(2)}+W_{d}^{(2)})+Z_d}\cdot\mathbf{1}_{Z_{d}>0}\,\middle|\,\Gamma_1^{(i)},\cdots,\Gamma_{d-1}^{(i)}\right]\,.
    \]
    Then the random variable
    \[
        \bar{\Pi}:=\Pi_{d^*}+\lambda^{2-\gamma}\Pi_{d^*+1}
    \]
    satisfies $|\tilde{R}_n-\bar{\Pi}|\pto0$.
\end{lemma}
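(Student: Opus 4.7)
The plan is to combine parts (a) and (b) of Lemma~\ref{lem:decomp} with the distributional identity of Lemma~\ref{lem:decomp2}, and then apply a conditional weak law of large numbers at each of the two relevant depths. By Lemma~\ref{lem:decomp}(b) it suffices to show $|\bar Y - \bar\Pi|\pto 0$, where $\bar Y = \tfrac{1}{n}\sum_{v\in V}(Y_{d^*,v}+Y_{d^*+1,v})$. Writing $\bar Y = S_{d^*} + S_{d^*+1}$ with $S_d := \tfrac{1}{n}\sum_{v\in V}Y_{d,v}$, I will show
\[
S_{d^*}\pto \Pi_{d^*}, \qquad S_{d^*+1}\pto \lambda^{2-\gamma}\Pi_{d^*+1}.
\]

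Fix $d \in \{d^*,d^*+1\}$ and condition on $\mathcal{G}_{d-1} := \sigma(\Gamma_1^{(i)},\dots,\Gamma_{d-1}^{(i)})$. Let $U_d := V\setminus\Gamma_{\leq d-1}^{(1)}\setminus\Gamma_{\leq d-1}^{(2)}$ and $M_d := |U_d|$. By the definition of $Y_{d,v}$, we have $Y_{d,v}=0$ for $v \notin U_d$, and by Lemma~\ref{lem:decomp}(a) together with Lemma~\ref{lem:decomp2}, the family $\{Y_{d,v}\}_{v\in U_d}$ is conditionally i.i.d.\ (given $\mathcal{G}_{d-1}$), bounded in $[0,1]$, with common mean $\Pi_d$. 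Hence the conditional variance of $S_d = \tfrac{1}{n}\sum_{v\in U_d}Y_{d,v}$ is at most $M_d/(4n^2)\le 1/(4n)$, so Chebyshev's inequality gives, for any $\epsilon>0$,
\[
\Pr\!\left(\bigl|S_d - \tfrac{M_d}{n}\Pi_d\bigr| > \epsilon \,\Big|\, \mathcal{G}_{d-1}\right) \le \frac{1}{4n\epsilon^2}\,,
\]
and integrating out yields $|S_d - \tfrac{M_d}{n}\Pi_d|\pto 0$.

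It remains to identify the limits of $M_d/n$. For $d=d^*$, Proposition~\ref{prop:conc} gives $|\Gamma_{\leq d^*-1}^{(i)}|=O_p((nq)^{d^*-1})=o_p(n)$, so $M_{d^*}/n\pto 1$; since $\Pi_{d^*}\in[0,1]$ is bounded, this gives $S_{d^*}\pto \Pi_{d^*}$. For $d=d^*+1$, by inclusion--exclusion $M_{d^*+1}=n-|\Gamma_{\leq d^*}^{(1)}|-|\Gamma_{\leq d^*}^{(2)}|+|\Gamma_{\leq d^*}^{(1)}\cap\Gamma_{\leq d^*}^{(2)}|$, where (up to $o_p(n)$ contributions from $\Gamma_{\leq d^*-1}^{(i)}$) Theorem~\ref{thm:main1} gives $|\Gamma_{\leq d^*}^{(i)}|/n\pto 1-\lambda$ and Theorem~\ref{thm:main2} gives $|\Gamma_{\leq d^*}^{(1)}\cap\Gamma_{\leq d^*}^{(2)}|/n\pto 1-2\lambda+\lambda^{2-\gamma}$, so $M_{d^*+1}/n\pto \lambda^{2-\gamma}$. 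Combined with the Chebyshev step (and boundedness of $\Pi_{d^*+1}$), this gives $S_{d^*+1}\pto \lambda^{2-\gamma}\Pi_{d^*+1}$, and adding the two limits yields $\bar Y\pto \bar\Pi$, as desired.

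The only non-mechanical point is the conditional-to-unconditional step: since $\Pi_d$ is $\mathcal{G}_{d-1}$-measurable and $M_d/n - c_d\pto 0$ for the appropriate constant $c_d$, one must be slightly careful that $|S_d - c_d \Pi_d| \le |S_d - (M_d/n)\Pi_d| + |M_d/n - c_d|\cdot|\Pi_d|$, so the boundedness $|\Pi_d|\le 1$ immediately turns the two ingredients into convergence in probability. I expect this bookkeeping and the degenerate cases ($\lambda\in\{0,1\}$, where $\lambda^{2-\gamma}$ must be read as $0$ or $1$ consistently) to be the only place where some care is needed; everything else reduces to routine conditional Chebyshev plus the single-graph and two-graph asymptotics already established.
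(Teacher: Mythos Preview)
Your proof is correct and follows essentially the same approach as the paper: reduce to $\bar Y$ via Lemma~\ref{lem:decomp}(b), split into the two depths $d^*$ and $d^*+1$, use the conditional i.i.d.\ structure from Lemmas~\ref{lem:decomp}(a) and \ref{lem:decomp2} together with Chebyshev, and then identify $M_d/n$ via the single-graph and correlated-graph asymptotics. Your normalization (bounding the conditional variance of $S_d$ by $M_d/(4n^2)\le 1/(4n)$) is in fact slightly cleaner than the paper's, which applies Chebyshev to the sample mean $\bar Y_d$ and must separately handle the degenerate case $\lambda=0$ where $|V_{d^*+1}|$ may be small.
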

\begin{proof}
    We build on Lemma~\ref{lem:decomp2}. For convenience, we denote $V_d:=V\setminus\Gamma_{\leq d-1}^{(1)}\setminus\Gamma_{\leq d-1}^{(2)}$ and define
    \[
        \bar{Y}_{d}:=\frac{1}{|V_{d}|}\sum_{v\in V_{d}}Y_{d,v}\,.
    \]
    Note that $\mathbb{E}[\bar{Y}_{d}\mid\Gamma_1^{(i)},\cdots,\Gamma_{d-1}^{(i)}]=\Pi_{d}$. By Lemma~\ref{lem:decomp}, $\bar{Y}_{d}$ conditioned on $\Gamma_1^{(i)},\cdots,\Gamma_{d-1}^{(i)}$ is a sample mean of i.i.d. random variables. Since each $Y_{d,v}$ can have value in $[0,1]$, we have
    \[
        \Var(\bar{Y}_{d}\mid\Gamma_1^{(i)},\cdots,\Gamma_{d-1}^{(i)})=\frac{1}{|V_{d}|}\Var(Y_{d,v}\mid\Gamma_1^{(i)},\cdots,\Gamma_{d-1}^{(i)})\leq\frac{1}{4|V_{d}|}\,.
    \]
    Hence, Chebyshev's inequality gives
    \begin{equation}\label{eqn:cheby}
        \Pr(|\bar{Y}_{d}-\Pi_{d}|\geq\delta\mid\Gamma_1^{(i)},\cdots,\Gamma_{d-1}^{(i)})\leq\frac{1}{4\delta^2|V_{d}|}
    \end{equation}

    Now we write
    \[
        \begin{split}
            \bar{Y}&=\frac{1}{n}\sum_{v\in V}(Y_{d^*,v}+Y_{d^*+1,v})\\
            &=\frac{1}{n}\sum_{v\in V}Y_{d^*,v} + \frac{1}{n}\sum_{v\in V}Y_{d^*+1,v}\\
            &=\frac{|V_{d^*}|}{n}\bar{Y}_{d^*}+\frac{|V_{d^*+1}|}{n}\bar{Y}_{d^*+1}\,.
        \end{split}
    \]
    By Theorems~\ref{thm:main1} and \ref{thm:main2}, we have $|V_{d^*}|/n\pto1$ and $|V_{d^*+1}|/n\pto\lambda^{2-\gamma}$. For $\bar{Y}_{d^*}$, \eqref{eqn:cheby} easily gives $|\bar{Y}_d-\Pi_d|\pto0$. For $\bar{Y}_{d^*+1}$, \eqref{eqn:cheby} gives a similar result if $\lambda>0$; otherwise, $|V_{d^*+1}|/n\pto0$ so the result trivially holds.
    
\end{proof}

\begin{proof}[Proof of Theorem~\ref{thm:main4}]
    By Lemma~\ref{lem:decomp3}, it is enough to prove that $\bar{\Pi}$ has the given limit in probability.

    First, assume $\lambda=1$. Then by Theorem~\ref{thm:main1} (or in the language of Lemma~\ref{lem:decomp3}, $\Pr(Z_{d^*}>0)\to0$) it suffices to prove that $\Pi_{d^*+1}$ concentrates around the given limit. Since $q(nq)^{d^*}=\log(1/\lambda_n)nq=\Omega(\log n/(\log\log n)^2)$, Lemma~\ref{lem:chernoff} and Proposition~\ref{prop:conc} give
    \[
        \frac{X_{d^*+1}^{(i)}+W_{d^*+1}^{(i)}+Z_{d^*+1}}{q(nq)^{d^*}}\pto1\,.
    \]
    If $\gamma=0$, then by Proposition~\ref{prop:rhoconc} we have $N_{d^*}^{(i)}/(nq)^{d^*}\to0$ so $Z_{d^*+1}/X_{d^*+1}^{(i)}\to0$, thus the expectation tends to zero and the result is trivial. Assuming $\gamma>0$, we get
    \[
        \frac{Z_{d^*+1}}{q(nq)^{d^*}\gamma}\pto1\,.
    \]
    Also, $\gamma>0$ implies $\rho_n\to1$, so $W_{d^*+1}^{(i)}/Z_{d^*+1}\to0$. Hence, the expectation converges in probability to $\gamma$ as desired. The proof for $\lambda=0$ is almost the same by replacing $\Pi_{d^*+1}$ with $\Pi_{d^*}$.

    Now assume $0<\lambda<1$. Again, we appeal to Lemma~\ref{lem:decomp3} and establish probability limits of $\Pi_{d^*}$ and $\Pi_{d^*+1}$. We first prove that
    \begin{equation}\label{eqn:pidstar}
        \Pi_{d^*}\pto g((1-\gamma)\log(1/\lambda),\gamma\log(1/\lambda))(1-\lambda^{\gamma})\,.
    \end{equation}
    We may assume $\gamma>0$, otherwise $Z_{d^*}\to0$ by Proposition~\ref{prop:rhoconc} so the result is trivial. Since $q(nq)^{d^*-1}\pto\log(1/\lambda)$, Propositions \ref{prop:conc} and \ref{prop:rhoconc} yield $qN_{d^*-1}\pto\log(1/\lambda)$ and $qN_{d^*-1}\pto\gamma\log(1/\lambda)$ so
    \[
        X_{d^*}^{(i)}\dto\Pois((1-\gamma)\log(1/\lambda))\,.
    \]
    and
    \begin{equation}\label{eqn:zpois}
        Z_{d^*}\dto\Pois(\gamma\log(1/\lambda))\,.
    \end{equation}
    Since $\gamma>0$ implies $\rho_n=1-o(1)$, we have $(q-q_I)/q=o(1)$ so $W_{d^*}^{(i)}\pto0$. Note that we can write $\Pi_{d^*}$ as
    \[
        \mathbb{E}\left[\frac{Z_{d^*}}{\max(X_{d^*}^{(1)}+W_{d^*}^{(1)}, X_{d^*}^{(2)}+W_{d^*}^{(2)})+Z_{d^*}}\,\middle|\,Z_{d^*}>0,\Gamma_1^{(i)},\cdots,\Gamma_{d-1}^{(i)}\right]\cdot\Pr(Z_{d^*}>0\mid \Gamma_1^{(i)},\cdots,\Gamma_{d-1}^{(i)})\,.
    \]
    Since $\Pr(Z_{d^*}>0\mid \Gamma_1^{(i)},\cdots,\Gamma_{d-1}^{(i)})\to 1-\lambda^\gamma$ by \eqref{eqn:zpois}, we have proved \eqref{eqn:pidstar}.

    Next, we show
    \begin{equation}\label{eqn:pidstar1}
        \Pi_{d^*+1}\pto \frac{1-2\lambda+\lambda^{2-\gamma}}{1-\lambda}\cdot\rho\,.
    \end{equation}
    By Theorems \ref{thm:main1} and \ref{thm:main2}, we have $N_{d^*}^{(i)}/n\pto1-\lambda$ and $N_d^I/n\pto1-2\lambda+\lambda^{2-\gamma}$. This gives
    \[
        \frac{X_{d^*+1}^{(i)}+W_{d^*+1}^{(i)}+Z_{d^*+1}}{nq}\pto1-\lambda
    \]
    and
    \[
        \frac{Z_{d^*+1}}{nq_I}\pto1-2\lambda+\lambda^{2-\gamma}
    \]
    where $q_I=q(q(1-\rho_n)+\rho_n)$. This proves (\ref{eqn:pidstar1}). Combining \eqref{eqn:pidstar} and (\ref{eqn:pidstar1}) proves the desired limit in probability.
\end{proof}

\subsubsection{Aside: the overlap under independent coupling}

When shortest path trees are independently sampled, the conditional expectation of $R_n$ is given by
\begin{equation}\label{eqn:indepcoupling}
    Q_n:=\frac{1}{n-1}\sum_{v\in \tilde{V}\setminus\{1\}}\frac{|\mathsf{par}_{G^{(1)}}(v)\cap\mathsf{par}_{G^{(2)}}(v)|}{|\mathsf{par}_{G^{(1)}}(v)|\cdot|\mathsf{par}_{G^{(2)}}(v)|}\,.
\end{equation}
From this, it is obvious to see that when a vertex has an asymptotically unbounded number of parent choices, then regardless of $|\mathsf{par}_{G^{(1)}}(v)\cap\mathsf{par}_{G^{(2)}}(v)|$ the probability of choosing the same parent is $o(1)$.

Therefore, the only nontrivial contribution to \eqref{eqn:indepcoupling} comes from $\Gamma_{d^*}$ and only when $0<\lambda<1$ and $\gamma>0$ (and thus $\rho=1$), since this is the only case where $\Theta(n)$ vertices asymptotically have only a finite number of parents and a nonzero intersection. In this case, the expectation of the probability of choosing the same parent conditioned on $\Gamma_1^{(i)},\cdots,\Gamma_{d^*-1}^{(i)}$ can be described as
\[
    \mathbb{E}\left[\frac{Z_{d^*}}{(X_{d^*}^{(1)}+Z_{d^*})(X_{d^*}^{(1)}+Z_{d^*})}\,\middle|\,Z_{d^*}>0,\Gamma_1^{(i)},\cdots,\Gamma_{d^*-1}^{(i)}\right]\cdot\Pr(Z_{d^*}>0\mid\Gamma_1^{(i)},\cdots,\Gamma_{d-1}^{(i)})\,.
\]
The reader may want to compare this with the proof of \eqref{eqn:pidstar}.

This sketches the proof of the following theorem. We omit the rigorous proof since it is essentially the same with the proof of Theorem~\ref{thm:main4}.
\begin{theorem}\label{thm:indepcoupling}
    Suppose that $\lambda_n\to\lambda$, $\gamma_n\to\gamma$, and $\rho_n\to\rho$ with $\lambda,\gamma,\rho\in[0,1]$. Then as $n\to\infty$, we have that
    \[
        Q_n\pto\begin{cases}
            h((1-\gamma)\log(1/\lambda),\gamma\log(1/\lambda))(1-\lambda^\gamma) & \text{if $0<\lambda<1$ and $\gamma>0$,}\\
            0&\text{otherwise}
        \end{cases}
    \]
    where $h(a,b)$ is defined by
    \[
        h(a,b)=\begin{dcases*}
            \mathbb{E}\left[\frac{Z}{(X_1+Z)(X_2+Z)}\,\middle|\,Z>0\right]&\text{if $a,b>0$,}\\
            \mathbb{E}\left[Z^{-1}\,\middle|\,Z>0\right]&\text{if $a=0$}
        \end{dcases*}
    \]
    for independent $X_1,X_2\sim\Pois(a)$ and $Z\sim\Pois(b)$.
\end{theorem}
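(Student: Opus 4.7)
The plan is to mirror the proof of Theorem~\ref{thm:main4}, replacing the per-vertex summand of Lemma~\ref{lem:decomp3} by
\[
    Y^{\mathrm{ind}}_{d,v}:=\frac{|\mathsf{par}_{G^{(1)}}(v)\cap\mathsf{par}_{G^{(2)}}(v)|}{|\mathsf{par}_{G^{(1)}}(v)|\cdot|\mathsf{par}_{G^{(2)}}(v)|}\,\mathbf{1}_{\{v\in\Gamma_d^I,\,v\sim\Gamma_{d-1}^I\}}
\]
(with the convention $0/0=0$). Lemmas~\ref{lem:decomp} and \ref{lem:decomp2} carry over verbatim, since they are statements about the joint law of the three cardinalities above rather than about how they are combined. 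Applying the Chebyshev sample-mean argument of Lemma~\ref{lem:decomp3} to $Y^{\mathrm{ind}}_{d,v}$ then gives $|Q_n-\bar\Pi^{\mathrm{ind}}|\pto 0$, where $\bar\Pi^{\mathrm{ind}}:=\Pi^{\mathrm{ind}}_{d^*}+\lambda^{2-\gamma}\Pi^{\mathrm{ind}}_{d^*+1}$ and $\Pi^{\mathrm{ind}}_d$ is the conditional expectation of $Y^{\mathrm{ind}}_{d,v}$ given $\Gamma_1^{(i)},\ldots,\Gamma_{d-1}^{(i)}$.

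Next I would show that $\Pi^{\mathrm{ind}}_{d^*+1}\pto 0$ in all regimes, and $\Pi^{\mathrm{ind}}_{d^*}\pto 0$ whenever $\lambda\in\{0,1\}$ or $\gamma=0$. The key observation is the elementary bound
\[
    \frac{Z_d}{(X_d^{(1)}+W_d^{(1)}+Z_d)(X_d^{(2)}+W_d^{(2)}+Z_d)}\mathbf{1}_{Z_d>0}\le\frac{1}{\max(X_d^{(1)}+Z_d,\,X_d^{(2)}+Z_d)},
\]
which is $o_p(1)$ whenever the underlying binomial means diverge. At depth $d^*+1$ the parent count $X_d^{(i)}+W_d^{(i)}+Z_d\sim\Binom(N_{d^*}^{(i)},q)$ has mean $\Theta(nq)\to\infty$ as soon as $\lambda<1$; when $\lambda=1$ the analogous statement at depth $d^*$ follows from Lemma~\ref{lem:dstar}, and when $\lambda=0$ the mean $qN_{d^*-1}^{(i)}\approx\log(1/\lambda_n)$ at depth $d^*$ already diverges. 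The remaining $\gamma=0$, $0<\lambda<1$ case is settled by the second conclusion of Proposition~\ref{prop:rhoconc}, which gives $Z_{d^*}\pto 0$ and so kills the indicator.

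For $0<\lambda<1$ and $\gamma>0$ I would then compute $\Pi^{\mathrm{ind}}_{d^*}$ just as in the derivation of \eqref{eqn:pidstar}. Because $\gamma>0$ forces $\rho_n\to 1$, we have $W^{(i)}_{d^*}\pto 0$, and the Poisson convergences established in the proof of Theorem~\ref{thm:main4} yield
\[
    X_{d^*}^{(i)}\dto\Pois\bigl((1-\gamma)\log(1/\lambda)\bigr),\qquad Z_{d^*}\dto\Pois\bigl(\gamma\log(1/\lambda)\bigr),
\]
jointly independent in the limit. Boundedness of the integrand by $1$ justifies passage to the limit under the expectation, giving
\[
    \Pi^{\mathrm{ind}}_{d^*}\pto\mathbb{E}\!\left[\frac{Z}{(X_1+Z)(X_2+Z)}\mathbf{1}_{Z>0}\right]=h\bigl((1-\gamma)\log(1/\lambda),\gamma\log(1/\lambda)\bigr)\,(1-\lambda^\gamma),
\]
with the factor $1-\lambda^\gamma=\Pr(Z>0)$ extracted from the indicator. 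The boundary $\gamma=1$ (so $a=0$) reduces correctly to the $h(0,b)$ branch because $X_1=X_2=0$ almost surely makes the integrand $1/Z$. Combined with the negligibility in the previous paragraph, this produces the claimed probability limit for $Q_n$.

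The only step that is genuinely not a transcription of the proof of Theorem~\ref{thm:main4} is the handling of the product denominator in the negligible regimes. Since $\frac{Z}{(X+Z)(Y+Z)}\le\frac{1}{X+Z}$, the product denominator actually helps us, so the main work is to establish uniform-in-$v$ divergence of the binomial parent counts with high probability (so that the dominated-convergence step is justified), which is immediate from the Chernoff bounds of Section~\ref{sec:single}. I anticipate no further obstacles; the rest of the proof is a direct adaptation of Section~\ref{sec:overlap-opt}.
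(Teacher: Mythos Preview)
Your proposal is correct and takes essentially the same approach as the paper, which explicitly omits the rigorous proof and only sketches that it ``is essentially the same with the proof of Theorem~\ref{thm:main4}.'' The paper's sketch makes exactly your key observation---that the product denominator forces the summand to be $o(1)$ whenever either parent count diverges, so the only nontrivial contribution comes from $\Gamma_{d^*}$ in the regime $0<\lambda<1$, $\gamma>0$---and then identifies the limiting conditional expectation in the same way you do.
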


In particular, when $\rho_n=\rho<1$ is a fixed constant then the overlap of the shortest path trees will be asymptotically zero, implying the usual notion of disorder chaos (see, e.g., \cite{chatterjee2009disorder}).

\begin{remark}[Different notions of disorder chaos]
    This notion of disorder chaos is incomparable with Wasserstein disorder chaos, see discussion in \cite{el2022sampling}.
    In \cite{el2022sampling}, they define Wasserstein disorder chaos using $W_2$ distance instead of $W_1$, but this is equivalent for the purpose of determining if Wasserstein disorder chaos is present or not. Explicitly, this is because their underlying distance metric (normalized Hamming distance) is bounded, and for bounded random variables we can compare $L_1$ and $L_2$ norms via Jensen and H\"older's inequality.
\end{remark}

\subsection{No overlap concentration for random shortest paths}\label{sec:unstable}
% In this section, we give a self-contained proof of the fact that for fixed vertices $1$ and $2$, there is a parameter regime where the probability that there is a unique path from $1$ to $2$ is nontrivial.

In this section, we show that for fixed vertices $1$ and $2$, there is a parameter regime where the overlap of uniformly random shortest paths provably fails to concentrate. This lies in start contrast to the case of shortest path trees we studied above.
For correlated instances, we sample a pair of shortest paths from the optimal coupling, similar to shortest path trees in the previous section.

This is conceptually related to the Overlap Gap Property established in \cite{LS2024}, which proves that shortest path cannot be computed by a sufficiently stable algorithm, but not a formal consequence\footnote{As a reminder, the OGP implies that the shortest path is not ``stable'' under edge-by-edge resampling of the underlying graph. }.
%, in a sense described in the following theorem. 

\begin{theorem}\label{thm:unstable}%[Restatement of Theorem~\ref{thm:intro-unstable}]\label{thm:unstable}
    There exist sequences $\alpha_n$ and $\rho_n$ such that the overlap of uniformly random shortest paths between fixed vertices $1$ and $2$ has variance bounded away from zero.
\end{theorem}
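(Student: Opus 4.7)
The plan is to exhibit a bimodal limiting distribution for $R_n$ (the overlap of the uniformly random shortest paths under the optimal coupling) by combining Theorem~\ref{thm:uniquepath} with the overlap gap property of~\cite{LS2024}, forcing $\Var(R_n)$ to be bounded below by a positive constant. Concretely I would choose $\alpha_n$ so that $d_n^*$ is odd and $\lambda_n\to\lambda\in(0,1)$ with $\lambda$ close to $1$, and $\rho_n$ so that $\gamma_n:=\rho_n^{d_n^*}\to\gamma\in(0,1)$; this forces $\rho_n\to 1$ with $1-\rho_n\sim\log(1/\gamma)/d_n^*$.

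The first step is to show that $p_n:=\Pr(R_n=1)$ stays asymptotically inside a compact subinterval of $(0,1)$. For the lower bound, let $A_n$ be the event that $G_n^{(1)}$ has a unique shortest path $P$ of length $d_n^*$ from $1$ to $2$, all $d_n^*$ edges of $P$ are preserved in $G_n^{(2)}$, and $P$ remains the unique such path in $G_n^{(2)}$. By Theorem~\ref{thm:uniquepath} the first condition has probability tending to $\lambda\log(1/\lambda)$; the second has conditional probability $(\rho_n+(1-\rho_n)q_n)^{d_n^*}\to\gamma$; and the third holds conditionally with probability $1-o(1)$ because the expected count of extra bridging edges between $\Gamma^{G_n^{(2)}}_d(1)$ and $\Gamma^{G_n^{(2)}}_d(2)$ (for $d=(d_n^*-1)/2$) is $\approx(1-\rho_n)\log(1/\lambda_n)=O(\log(1/\gamma)\log(1/\lambda)/d_n^*)=o(1)$. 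On $A_n$ the optimal coupling picks $P$ in both graphs and $R_n=1$, so $\liminf p_n\geq\lambda\log(1/\lambda)\gamma>0$. For the upper bound, if $\mathsf{d}_{G_n^{(1)}}(1,2)\neq\mathsf{d}_{G_n^{(2)}}(1,2)$ then the sampled paths have different lengths so $R_n<1$; adapting the proof of Theorem~\ref{thm:main2} to the fixed vertex $2$ shows this happens with probability $\to 2\lambda(1-\lambda^{1-\gamma})>0$, whence $\limsup p_n<1$.

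Second, invoke the overlap gap property of~\cite{LS2024}: with high probability any pair of shortest paths drawn from $G_n^{(1)}$ and $G_n^{(2)}$ has normalized overlap in $\{1\}\cup[0,\nu]$ for some constant $\nu<1$. Since the number of shortest paths of length $d_n^*$ between $1$ and $2$ in each graph is asymptotically Poisson with mean $\log(1/\lambda)$, taking $\lambda$ close to $1$ makes the event ``both graphs have at most one shortest path of length $d_n^*$'' have probability $1-O((1-\lambda)^2)$; on this event the optimal coupling is essentially deterministic and thus $R_n\in\{1\}\cup[0,\nu]$. With $p_n=\Pr(R_n=1)$ this yields $\E[R_n^2]\geq p_n-o(1)$ and $\E[R_n]\leq p_n+(1-p_n)\nu+o(1)$, hence
\[\Var(R_n)\geq p_n-(p_n+(1-p_n)\nu)^2-o(1),\]
which is bounded away from zero whenever $\nu<\sqrt{p_n}/(1+\sqrt{p_n})$. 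Tuning $\lambda,\gamma\in(0,1)$ so that $p_n$ sits in a subinterval of $(0,1)$ compatible with the OGP constant $\nu$ completes the proof.

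The main obstacle is the interaction between the OGP (which constrains the overlap of individual path pairs) and the optimal coupling (under which $R_n$ is a weighted average of pairwise overlaps and could a priori land in the forbidden interval $(\nu,1)$ whenever one graph has multiple shortest paths, some but not all of which also lie in the other graph's support). This is resolved by the Poisson asymptotics of the shortest-path count: taking $\lambda$ sufficiently close to $1$ ensures that such ``partially overlapping support'' configurations contribute only $O((1-\lambda)^2)$ in probability, a higher-order correction that does not disturb the bimodal limit of $R_n$.
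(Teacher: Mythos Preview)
Your approach is correct in its broad strategy and shares the paper's opening move: take $\alpha_n$ so that $d_n^*$ is odd with $\lambda_n\to\lambda\in(0,1)$, take $\rho_n\to 1$ at rate $\Theta(1/d_n^*)$ (the paper writes $\rho_n=1-\mu/\ell_n^*$, which amounts to your $\gamma=e^{-\mu}$), and use Theorem~\ref{thm:uniquepath} to produce a positive-probability event on which the unique shortest path in $G^{(1)}$ survives and remains unique in $G^{(2)}$, forcing $R_n=1$.

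Where the paper diverges is in establishing the \emph{other} mode. It never invokes the OGP of \cite{LS2024}. Instead, conditioned on the unique path $P$ in $G^{(1)}$ being destroyed (at least one edge resampled away, probability $\to 1-e^{-\mu}$), it argues that any shortest path $P'$ in $G^{(2)}$ must be edge-disjoint from $P$. The reason: since $|P'|\le d^*+1$ with high probability, a shared edge would force $P\cup P'$ to contain a short cycle in the union graph $G^{(1)}\cup G^{(2)}$; but $G^{(1)}\cup G^{(2)}$ is stochastically dominated by $\mathcal G(n,q(2-\rho_n))$, whose proxy $\ell^{\#}$ satisfies $|\ell^*-\ell^{\#}|=o(1)$, so Corollary~\ref{cor:no-short-cycle} rules out such cycles with high probability. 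Hence $R_n=0$ exactly (not merely $R_n\le\nu$) on this event, with probability at least $(1-e^{-\mu})\lambda\log(1/\lambda)-o(1)$, and the variance bound is immediate.

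This buys three simplifications over your route. First, it is self-contained within the paper: no external OGP constant $\nu$ to tune against $p_n$. Second, it works for \emph{any} $\lambda\in(0,1)$, without needing $\lambda$ close to $1$ to suppress multiple-shortest-path configurations. Third, by conditioning on uniqueness in $G^{(1)}$ from the outset, the ``partially overlapping support'' scenario you flag as the main obstacle never arises. Incidentally, that worry is a non-issue even in your framework: $R_n$ under the optimal coupling is the overlap of a single sampled pair $(P,P')$, not a weighted average, so on the OGP event it lies in $\{1\}\cup[0,\nu]$ pointwise regardless of how many shortest paths each graph supports.
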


The idea to construct such parameters is to find a regime where there is non-negligible probability of having a unique path between $1$ and $2$. Then we argue that the path may either be destroyed or be preserved with non-negligible probability if we choose the right $\rho_n$. Lastly, by the absence of short cycles in a typical sparse random graph, another shortest path can't have large overlap should the original path gets destroyed.

\begin{proof}[Proof of Theorem~\ref{thm:unstable}]
    We choose $\alpha_n$ that meets the assumption of Theorem~\ref{thm:uniquepath} and $\rho_n=1-\mu/\ell_n^*$ for any constant $\mu>0$. Then conditioned on the event of Theorem~\ref{thm:uniquepath}, the number of edges along the shortest path $P$ that gets resampled converges in distribution to $\Pois(\mu)$. Thus, the probability that the path is being preserved converges to $e^{-\mu}$. From the discussion above, it is easy to see that the probability that this path is still unique (i.e. no additional edge generated between $\Gamma_d^{(2)}(1)$ and $\Gamma_d^{(2)}(2)$) in $G^{(2)}$ converges to $\lambda$, in which case the overlap is $1$.

    On the other hand, with probability converging to $1-e^{-\mu}$, the path gets destroyed. In this case, let $P'$ be any shortest path from the resampled graph $G^{(2)}$. By Theorem~\ref{thm:main1}, the length of $P'$ is either $d^*$ or $d^*+1$ with high probability, so $P\cap P'\neq\emptyset$ guarantees the existence of a cycle of length smaller than $d^*$ in the graph $G^{(1)}\cup G^{(2)}$. Since $G^{(1)}\cup G^{(2)}$ is stochastically dominated by a graph $G'\sim\mathcal{G}(n,q(2-\rho_n))$, we define a parameter
    \[
        \ell^{\#}:=\frac{\log n}{\log (nq(2-\rho_n))}=\frac{\log n}{\log(nq)+\log(2-\rho_n)}
    \]
    for $G'$. Note that
    \[
        \begin{split}
            \ell^*-\ell^{\#}&=\frac{\log n}{\log (nq)} - \frac{\log n}{\log(nq)+\log(2-\rho_n)}\\
            &=\frac{(\log n)\log(2-\rho_n)}{\log(nq)(\log (nq)+\log(2-\rho_n))}\\
            &\leq\frac{(1-\rho_n)\ell^*}{\log (nq)+\log(2-\rho_n)}\\
            &\leq\frac{\mu}{\log (nq)}
        \end{split}
    \]
    which implies $|\ell^*-\ell^\#|=o(1)$. Since $0<\lambda<1$, we actually have $|\ell^*-d^*|=o(1)$, so it follows that the closest integer to $\ell^\#$ is also $d^*$. By applying Corollary~\ref{cor:no-short-cycle} to $G'$, we have that there is no cycle of length smaller than $d^*$ with high probability in $G'$ and thus in $G^{(1)}\cup G^{(2)}$ as well. Hence, it follows that $P\cap P'=\emptyset$ for any $P'$ with probability $1-o(1)$.

    In conclusion, we have proved that, for any $\epsilon>0$, the probability that the overlap is $1$ is at least $e^{-\mu}\lambda^2\log(1/\lambda)-\epsilon$ and the probability that the overlap is zero is at least $(1-e^{-\mu})\lambda\log(1/\lambda)-\epsilon$ for sufficiently large $n$. The theorem is proved.
\end{proof}
 % moved back before gibbs now that gibbs is two sections

\section{The partition function of Gibbs measures over trees}\label{sec:gibbs}
% \paragraph{Additional notation.} We make use of a few graph-theoretic notions. For a graph $G=(V,E)$ and a subset $A\subseteq V$, we denote by $\mathsf{N}_A(v)$ (not to be confused with $N_d$) the set of neighbors of $v$ contained in $A$; when we write $\mathsf{N}_G(v)$ that means the set of neighbors in the entire graph $G$. Similarly, $\mathsf{N}_G(A)$ denotes the set of neighbors of $A$, excluding any vertices contained in $A$. Also, $\mathsf{par}_G(v)$ denotes the set of vertices $u\in V$ such that $\mathsf{d}_G(1,v)=\mathsf{d}_G(1,u)+1$ where $1$ is the fixed source vertex. Conversely, we define $\mathsf{ch}_G(v)$ as the set of vertices $u$ such that $\mathsf{d}_G(1,u)=\mathsf{d}_G(1,v)+1$. The connected component of $G$ containing the source vertex $1$ will be denoted by $\overline{G}$ and its vertices $\overline{V}$, sometimes referred to as the \emph{giant component}, since it typically is for Erd\"os--R\'enyi graphs. Lastly, we attach a subscript $p$ to asymptotic notations to mean that it holds asymptotically almost surely; for instance, $f(n)=o_p(g(n))$ means $\Pr(|f(n)|>\epsilon g(n))=1-o(1)$ for any fixed $\epsilon>0$.
In this section, we compute the limiting asymptotics for the log partition function $\log Z$ of the Gibbs measure over trees. This is the technical heart of the analysis. We leave statements and proofs of most structural properties to Section~\ref{sec:gibbs2} afterwards. 

One interesting observation in our analysis is that we can compute the partition function of our model in terms of a ``spin system'' or \emph{Markov random field} (MRF), by relating the distribution over trees to the corresponding distribution over distance vectors. This is explained in Section~\ref{sec:transform-mrf} and is helpful for conceptualizing understanding our analysis, so we make use of it throughout the rest of the section.
\subsection{Preliminaries} 
Recall the definition from \eqref{eqn:finite-temp} of the Gibbs measure over spanning trees $T$ at inverse temperature $\beta$:
\begin{equation}\label{eqn:gibbs}
\mu_{G,\beta}(T) = \frac{1}{Z_{G,\beta}} \exp\left(-\beta \log\log(n)  \sum_{v\in\overline{V}} \mathsf{d}_T(1,v) \right)
\end{equation}
where $\overline{V}$ is the set of vertices in the connected component $\overline{G}$ of $G$ containing the source vertex $1$.

\begin{definition}[Entropy and Relative Entropy]
Given a discrete probability measure $\nu$, we define 
\[ H_{\nu}(X) = -\E_{X \sim \nu}[\log(\nu(X))] \]
to be the Shannon entropy.
Similarly, for any probability measure $P$ which is absolute continuous with respect to $Q$, 
\[ \KL(P,Q) = \E_P\left[\log \frac{dP}{dQ}\right] = \E_Q\left[\frac{dP}{dQ} \log \frac{dP}{dQ}\right] \]
is the relative entropy or Kullback--Leibler divergence (see, e.g., \cite{dembo1991information,van2014probability}). When $\mu$ is the uniform measure on a finite set $S$, then for any probability measure $\nu$ on $S$, we have the relation
\[ \KL(\nu,\mu) = \E_{X \sim \nu}[\log \nu(X) + \log |S|] = \log |S| - H_{\nu}(X). \]
\end{definition}
\begin{proposition}[Gibbs variational principle, see, e.g., \cite{ellis2007entropy,van2014probability,wainwright2008graphical}]\label{prop:gibbs-variational}
\[ \log Z_{G,\beta} = \sup_{\nu} \left[ H_{\nu}(T)  - \beta \log\log(n) \E_{\nu}\left[ \sum_v \mathsf{d}_T(1,v) \right]\right]\]
where the supremum ranges over all probability measures on the set of spanning trees of the giant component $\overline{G}$ of $G$. Furthermore, the supremum is uniquely attained at the Gibbs measure $\mu_{G,\beta}$. 
\end{proposition}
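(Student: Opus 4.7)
The plan is to prove the variational formula by a direct computation of the Kullback--Leibler divergence between an arbitrary candidate measure $\nu$ and the Gibbs measure $\mu_{G,\beta}$, and then invoke Gibbs' inequality (nonnegativity of relative entropy, with equality iff the measures agree). This is the standard textbook argument.

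Concretely, let $\mathcal{T}$ denote the (finite) set of spanning trees of $\overline{G}$ and abbreviate the ``energy'' by $H(T) := \beta\log\log(n) \sum_{v\in\overline{V}} \mathsf{d}_T(1,v)$, so $\mu_{G,\beta}(T) = e^{-H(T)}/Z_{G,\beta}$. For any probability measure $\nu$ on $\mathcal{T}$, I would expand
\[
\KL(\nu,\mu_{G,\beta}) \;=\; \sum_{T\in\mathcal{T}} \nu(T)\log\frac{\nu(T)}{\mu_{G,\beta}(T)} \;=\; -H_\nu(T) + \E_\nu[H(T)] + \log Z_{G,\beta}.
\]
Rearranging yields
\[
\log Z_{G,\beta} \;=\; H_\nu(T) - \E_\nu[H(T)] + \KL(\nu,\mu_{G,\beta}).
\]
The first key step is Gibbs' inequality: $\KL(\nu,\mu_{G,\beta}) \geq 0$, with equality if and only if $\nu = \mu_{G,\beta}$ (which is legitimate since both measures are supported on $\mathcal{T}$, and $\mu_{G,\beta}$ assigns strictly positive mass to every element of $\mathcal{T}$, so the relative entropy is well-defined for any $\nu$ on $\mathcal{T}$). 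Taking the supremum over $\nu$ immediately gives the claimed identity, and the strict-convexity clause of Gibbs' inequality gives the uniqueness of the maximizer.

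There is no substantive obstacle here --- the entire argument is a one-line identity plus nonnegativity of KL. The only minor care needed is to verify that the supremum may be restricted to measures supported on $\mathcal{T}$ (any measure placing mass outside $\mathcal{T}$ contributes $+\infty$ to the energy term and so cannot achieve the supremum), so that the KL divergence appearing in the expansion is always finite. I would include a one-sentence remark to that effect, then conclude.
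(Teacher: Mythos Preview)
Your proposal is correct and matches the paper's proof essentially line for line: the paper simply writes the identity $0 \le \KL(\nu,\mu_{G,\beta}) = \log Z_{G,\beta} - H_\nu(T) + \beta\log\log(n)\,\E_\nu[\sum_v \mathsf{d}_T(1,v)]$ and rearranges, invoking that KL vanishes iff $\nu=\mu_{G,\beta}$. Your extra remark about measures placing mass outside $\mathcal{T}$ is unnecessary since the statement already restricts the supremum to probability measures on $\mathcal{T}$.
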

\begin{proof}
This follows by rearranging
\begin{equation}\label{eqn:kl-variational}
0 \le \KL(\nu, \mu_{G,\beta}) = \E_{\nu}\left[\log \frac{\nu(T)}{\mu_{G,\beta}(T)}\right] = \log Z_{G,\beta} - H_{\nu}(T) + \beta \log\log(n) \E_{\nu}\left[ \sum_v \mathsf{d}_T(1,v) \right]
\end{equation}
where we used the fact that the KL is zero iff $\nu = \mu_{G,\beta}$. 
\end{proof}
\begin{definition}\label{def:naive-mf}
Define the \emph{na\"ive mean-field approximation} to $\log Z$ by the right hand side of the following inequality:
\[ \log Z_{G,\beta} \ge \sup_{\nu \in \mathcal M}\left[ H_{\nu}(T)  - \beta \log\log(n) \E_{\nu}\left[ \sum_v \mathsf{d}_T(1,v) \right]\right] \]
where $\nu$ ranges over the set $\mathcal M$ of vertex-wise product measures over spanning trees in the following sense: under $\nu$, the parent $\mathsf{par}_T(v)$ of every vertex $v$ is chosen independently. The inequality is just a special case of the Gibbs variational principle.  
\end{definition}
Informally, we will say that the na\"ive mean-field approximation is accurate if it gets the value of $\log Z$ correct to leading order. Although we do not apply these results, deterministic conditions for guaranteeing the accuracy of the approximation have been extensively studied \cite[e.g.]{eldan2020taming,chatterjee2016nonlinear,jain2019mean,austin2019structure,eldan2018gaussian,basak2017universality,augeri2021transportation}. We note that accuracy of the mean-field approximation should not always be interpreted as implying that the Gibbs measure is close to a product measure --- this is true in a sense for one direction of the KL divergence, but false for the Wasserstein metric in some simple examples. Instead, it generally means that the measure is well-approximated by a small mixture of product measures (see \cite{eldan2018gaussian,austin2019structure}). We will explain and confirm this part of the picture later in Section~\ref{sec:gibbs2} with more detail.

%~\eqref{C.1}; the specialised bound~\eqref{C.2} follows immediately.
%FRED: from this corollary and assuming uniform shortest path tree nearly solves the variational problem for $\log Z$, I think you can round the true distribution to be supported on shortest path trees (which shouldn't affect entropy much ?) and then use above corollary. There might be a simpler alternative proof. 

\paragraph{Diameter bound.}
We use a fundamental result about the diameter of random graphs, which tells us the diameter of the giant component is not too much bigger than $d^*$. The references prove tighter concentration bounds than what is stated here, but the below statement is sufficient for our purposes. 
\begin{theorem}[\cite{chung2001diameter,riordan2010diameter}]\label{thm:diameter}
Let $p_n$ be a sequence of elements of $(0,1)$.
Asymptotically almost surely, the diameter of the giant component of $G(n, p_n)$ is
\[ (1 + o(1)) \frac{\log n}{\log(n p_n)} \]
provided that $n p_n \to \infty$.
\end{theorem}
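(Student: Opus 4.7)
The strategy is to prove matching lower and upper bounds of the form $(1+o(1))\ell_n^*$ on the diameter of the giant component, where $\ell_n^* := \log n / \log(n p_n)$. The lower bound is a near-direct corollary of Proposition~\ref{prop:conc}; the upper bound is a ``meet-in-the-middle'' BFS sprinkling argument. The sketch below handles the regime $p_n = \alpha_n \log n / n$ with $\alpha_n$ bounded away from $0$ and $\infty$, which is the setting relevant to this paper.

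For the lower bound, fix any vertex $v$ in the giant component and run BFS from $v$. The argument used in the proof of Proposition~\ref{prop:conc} applies verbatim with source $v$, so a.a.s.\ the ball $\{v\} \cup \Gamma_{\le d_n^* - 1}(v)$ contains at most $2(np_n)^{d_n^* - 1} \le 2n/(\log\log n)^2 = o(n)$ vertices. Since the giant component has size $(1 - o(1))n$, some giant-component vertex $u$ must lie outside this ball, and therefore $\mathrm{diam}\,\overline{G} \ge d_n^* = (1 - o(1))\ell_n^*$.

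For the upper bound, set $r_n := \lceil \ell_n^*/2 \rceil + 1$ and fix a pair $(u,v)$ of vertices in the giant component. First run BFS from $u$ to radius $r_n$; using a strengthened version of Proposition~\ref{prop:conc} (see the obstacle discussion below), this produces a frontier layer $\Gamma_{r_n}(u)$ of size at least $(np_n)^{r_n}/2$ with probability $1 - o(n^{-3})$. If $v \in \{u\} \cup \Gamma_{\le r_n}(u)$ we are done, so assume otherwise; then grow BFS from $v$ inside the induced subgraph on $V \setminus (\{u\} \cup \Gamma_{\le r_n}(u))$. The edges within this subgraph are unexposed and independent of the first BFS, and the subgraph has $(1 - o(1))n$ vertices, so a second application of the strengthened concentration yields an outer layer $B$ of size at least $(np_n)^{r_n}/2$ with probability $1 - o(n^{-3})$. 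The edges of $G$ between $\Gamma_{r_n}(u)$ and $B$ have not been revealed and are i.i.d.\ $\mathrm{Bernoulli}(p_n)$; the probability that none of them is present is at most
\begin{equation*}
(1 - p_n)^{|\Gamma_{r_n}(u)|\cdot|B|} \;\le\; \exp\bigl(-p_n (np_n)^{2 r_n}/4\bigr) \;\le\; \exp\bigl(-\alpha_n^3 (\log n)^3 / 4\bigr),
\end{equation*}
using $(np_n)^{2 r_n} \ge (np_n)^{\ell_n^* + 2} = n(np_n)^2$. A union bound over the $\binom{n}{2}$ pairs now yields $d(u,v) \le 2 r_n + 1 = \ell_n^* + O(1) = (1+o(1))\ell_n^*$ for every pair a.a.s.

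The main obstacle is strengthening Proposition~\ref{prop:conc} enough for the union bound over pairs to go through. As stated, that proposition yields only $o(1)$ failure probability, whereas we want $o(n^{-3})$ or better. The bottleneck lies in the first few BFS layers, where the relevant binomial has mean only $\Omega(\log n)$, giving Chernoff tail bounds of the form $n^{-\Omega(1)}$ whose implicit constant depends on $\alpha_n$; this can be too weak for small $\alpha_n$. The sharp results in the literature \cite{chung2001diameter,riordan2010diameter} handle this via a finer branching-process coupling or second-moment argument, which simultaneously gives the tighter sub-leading concentration not needed for the statement above. A secondary technical point, handled cleanly by the edge-independence of $G(n,p_n)$, is the randomness separation between the two sequential BFS explorations in the upper bound.
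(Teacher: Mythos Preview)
The paper does not prove this theorem: it is quoted as a black-box result from \cite{chung2001diameter,riordan2010diameter} and only ever used (e.g.\ at \eqref{eqn:dbound}) in the form stated. So there is no ``paper's proof'' to compare against.

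Your sketch is the standard two-sided BFS / meet-in-the-middle argument and is sound in outline. The lower bound is indeed a direct corollary of Proposition~\ref{prop:conc} (via $d_n^* = (1+o(1))\ell_n^*$), and the upper-bound skeleton --- grow disjoint balls of radius $\approx \ell_n^*/2$, then connect them across unrevealed edges --- is exactly the architecture of the cited proofs. You have also correctly isolated the only nontrivial issue: the early BFS layers have mean $\Theta(\alpha_n \log n)$, so Chernoff gives failure probability only $n^{-c\alpha_n}$, which does not survive a naive union bound over $\binom{n}{2}$ pairs when $\alpha_n$ is small. The references resolve this not by sharpening the per-vertex bound but by a different decomposition (roughly: show that a.a.s.\ every giant-component vertex reaches a ``core'' of well-growing vertices within $o(\ell_n^*)$ steps, then union-bound only over core vertices). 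Your proposal is an honest sketch that defers to the literature at exactly the right place; since the paper does the same, nothing more is needed here.

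One small note: you restrict to $p_n = \Theta(\log n / n)$, which is all the paper ever uses, but the theorem as stated covers any $np_n \to \infty$. Your argument would need the general-regime BFS growth lemma from the references to match the statement literally.
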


\subsection{Transformation into a Markov random field}\label{sec:transform-mrf}
We next describe a transformation of our model to a Markov random field (MRF).
This transformation works on \emph{arbitrary graphs} and is not specific to the random graph setting.
Recall that we consider the Gibbs measure on spanning trees of a weighted undirected graph $G$ given by
\begin{equation*}%\label{eqn:finite-temp-general}
\mu_{G,\overline \beta}(T) = \frac{1}{Z_{G,\overline \beta}} \exp\left(-\overline \beta  \sum_v \mathsf{d}_T(1,v) \right)
\end{equation*}
where\footnote{We use the notation $\overline \beta$, because as mentioned above on random graphs we will ultimately let $\overline \beta = \beta \log \log n$ and view $\beta$ as the actual inverse temperature. This scaling is needed because the degree of vertices in our graph diverges as $n \to \infty$.} $\overline \beta \ge 0$. Note that the solution to the single-source shortest path problem itself is the mode of the distribution for any $\overline \beta > 0$.

If we think of $\mu_{G,\overline \beta}$ as inducing a probability measure over distance vectors $D_v = \mathsf{d}_T(1,v)$ and let $\ell_{uv}$ denote the length of the edge between each pair of neighbors $u$ and $v$,
%Define a random vector $D$ of distances so that for every $v \ne 1$, 
%\[ D_v = \mathsf{d}_T(1,v). \]
then we find $D$ is a Markov random field\footnote{I.e., a random vector which satisfies the Markov property that $D_v$ is conditionally independent of the rest of $D$ given $(D_w)_{w \sim v}$, see e.g., \cite{mezard2009information,lauritzen1996graphical,wainwright2008graphical}. This always holds if the PMF is a product over factors on cliques of the graph $G$, c.f. the Hammersley-Clifford theorem \cite{lauritzen1996graphical}} on the underlying graph $G$. More precisely, we find:
\begin{proposition}
The marginal law of $D$ given $T \sim \mu$ is
\begin{equation}\label{eqn:mrf}
\Pr[D = d] = \frac{\mathsf{1}(d_1 = 0)}{Z_{G,\overline \beta}} \prod_{v \ne 1} \exp\left[H(\mathsf{par}_T(v) \mid D = d) - \overline \beta d_v\right] \mathsf{1}(\exists w \sim v, d_v = d_w + \ell_{vw}).
\end{equation}
Furthermore, the conditional entropy respects the graph structure, more precisely
\[ H(\mathsf{par}_T(v) \mid D = d) = H(\mathsf{par}_T(v) \mid D_v = d_v, \forall w \sim v, D_w = d_w), \]
so $D$ is a Markov random field on $G$.
\end{proposition}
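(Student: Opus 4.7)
The plan is to compute $\Pr[D = d]$ directly by summing $\mu_{G,\overline\beta}(T)$ over spanning trees $T$ whose distance-from-root vector equals $d$, and then to identify the combinatorial multiplicity with an exponentiated entropy. The starting point is the observation that for a spanning tree $T$ rooted at $1$, the distance vector $D$ is completely determined by $T$, while conversely given $d$ the tree $T$ is recovered once we specify the parent $\mathsf{par}_T(v)$ for each $v \ne 1$. A map $v \mapsto \mathsf{par}_T(v)$ is consistent with $d$ iff $d_1 = 0$ and, for each $v \ne 1$, the chosen parent $w$ satisfies $w \sim v$ and $d_v = d_w + \ell_{vw}$. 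Any such ``parent pointer'' assignment automatically forms a spanning tree rooted at $1$, since following parents strictly decreases the (nonnegative, integer-linear-combination-of-$\ell$) quantity $d_v$ and thus terminates at $1$.

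This means the number of trees $T$ with prescribed distance vector $d$ is precisely
\[
N(d) \;=\; \prod_{v \ne 1} |\mathsf{par}(v,d)|, \qquad \mathsf{par}(v,d) := \{w \sim v : d_v = d_w + \ell_{vw}\},
\]
provided $d_1 = 0$ and every $v \ne 1$ has $\mathsf{par}(v,d) \ne \emptyset$; otherwise $N(d) = 0$. Since $\mu_{G,\overline\beta}(T) = Z_{G,\overline\beta}^{-1} \exp(-\overline\beta \sum_v d_v)$ depends on $T$ only through $d$, summing yields
\[
\Pr[D=d] \;=\; \frac{\mathsf{1}(d_1 = 0)}{Z_{G,\overline\beta}} \exp\!\left(-\overline\beta \sum_v d_v\right) \prod_{v \ne 1} |\mathsf{par}(v,d)| \, \mathsf{1}(\mathsf{par}(v,d)\neq\emptyset).
\]
Next, because conditional on $D=d$ the tree $T$ is a uniformly random choice over the $N(d)$ valid parent assignments (which are independent across $v$), the parents $\mathsf{par}_T(v)$ are conditionally independent given $D=d$, each uniform on $\mathsf{par}(v,d)$. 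Hence
\[
H(\mathsf{par}_T(v) \mid D = d) \;=\; \log |\mathsf{par}(v,d)|,
\]
and substituting $|\mathsf{par}(v,d)| = \exp(H(\mathsf{par}_T(v) \mid D=d))$ together with $d_1 = 0$ (so $\exp(-\overline\beta d_1)=1$) folds the energy factors into the product over $v \ne 1$, giving exactly the displayed formula \eqref{eqn:mrf}.

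It remains to check the Markov property and the locality of the entropy. The key point is that $\mathsf{par}(v,d)$ depends only on $d_v$ and $(d_w)_{w \sim v}$. Consequently, conditioning on those coordinates already fixes $|\mathsf{par}(v,d)|$, so the uniform-over-$\mathsf{par}(v,d)$ conditional law of $\mathsf{par}_T(v)$ depends on $d$ only through $(d_v,(d_w)_{w\sim v})$; this proves $H(\mathsf{par}_T(v) \mid D = d) = H(\mathsf{par}_T(v) \mid D_v = d_v,\; D_w = d_w\ \forall w \sim v)$. Applied to \eqref{eqn:mrf}, every factor (the entropy term, the indicator $\mathsf{1}(\exists w \sim v,\, d_v = d_w + \ell_{vw})$, and the energy $\exp(-\overline\beta d_v)$) depends only on $v$ together with its neighbors, so the PMF factors over cliques of size at most two in $G$; the Hammersley--Clifford theorem then identifies $D$ as a Markov random field on $G$. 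The only subtlety worth being careful about is the fact, noted above, that every consistent parent assignment yields a genuine spanning tree — this is what makes the counting $N(d)=\prod_v |\mathsf{par}(v,d)|$ exact with no global acyclicity correction, and is the crucial input that turns the model into a local MRF.
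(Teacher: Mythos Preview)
Your proof is correct and follows essentially the same approach as the paper: both compute $\Pr[D=d]$ by counting trees with a given distance vector via the product of parent-choice counts and then identify $\log|\mathsf{par}(v,d)|$ with the conditional entropy $H(\mathsf{par}_T(v)\mid D=d)$. You are somewhat more explicit than the paper in justifying that every consistent parent assignment is automatically an acyclic spanning tree (via the strict decrease of $d_v$ along parent pointers) and in invoking Hammersley--Clifford for the Markov property, whereas the paper treats these points as self-evident.
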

\begin{proof}
Observe that its marginal law under $\mu = \mu_{G,\overline \beta}$ is
\[ \Pr_{\mu}[D = d] = \Pr_{\mu}[\forall v: d_T(1,v) = d_v] = \frac{1}{Z_{G,\overline \beta}} \sum_T  \exp\left(-\overline \beta  \sum_v d_v \right) \mathsf{1}(\forall v: d_T(1,v) = d_v).  \]
Observe
\[  \exp\left(-\overline \beta  \sum_v d_v \right) \mathsf{1}(\forall v: d_T(1,v) = d_v) = \mathsf{1}(d_1 = 0) \prod_{v \ne 1} \exp\left(- \overline \beta d_v\right) \mathsf{1}(\exists w \sim v, d_v = d_w + \ell_{vw}). \]
Furthermore, given that $\Pr(D = d) > 0$ we know that $T$ is a node-wise product measure in the sense that 
\[ \Pr[\forall v \ne 1, \mathsf{par}(v)_v = p_v \mid D] = \prod_{v \ne 1} \Pr[\mathsf{par}_T(v)_v = p_v \mid D] \]
since, once the distances to the root are known, the choice of parent of each vertex is an independent process which picks for each vertex $v$ uniformly at random among its neighbors which are at the correct distance, so we have that
\[ \log \#\{T : \forall v, d_T(1,v) = d_v\} = H(T \mid D = d) = \sum_v H(\mathsf{par}_T(v) \mid D = d). \]
Combining the above equations gives the stated conclusion. 
\end{proof}

\subsection{Analysis of the partition function}\label{sec:partition-function}

We analyze the partition function of the Gibbs measure defined in \eqref{eqn:gibbs}. For notational convenience, we write
\begin{equation}\label{eqn:gibbs-single}
    \mu_{G,\beta}(T)=\frac{1}{Z_{G,\beta}}\exp\left(-\overline{\beta}\sum_v\mathsf{d}_T(1,v)\right)
\end{equation}
where $\overline{\beta}=\beta\log\log n$. From the Gibbs variational principle (Proposition \ref{prop:gibbs-variational}), the log partition function can be written as
\begin{equation}\label{eqn:free-single}
    \Phi_{G,\beta}:=\log Z_{G,\beta}=\sup_{\nu}\left[H_{\nu}(T)  - \overline{\beta} \E_{\nu}\left[ \sum_v \mathsf{d}_T(1,v) \right]\right]\,.
\end{equation}
Here, $\nu$ is a measure on the set of spanning trees $T$ of the connected component $\overline{G}$ containing $1$ of $G$. Our goal of this section is to find a formula for \eqref{eqn:free-single} that is accurate up to an $o(n)$ (absolute) error asymptotically almost surely. More formally, we find a quantity $\tilde{\Phi}_{G,\beta}^*$ such that
\[
    \Phi_{G,\beta}=\tilde{\Phi}_{G,\beta}^*+o_p(n)\,.
\]

\paragraph{Organization of the analysis.} The intuition behind our analysis is simple, but the mathematical rigor to justify the idea and formalize the intuition might seem rather long and sophisticated. To present the formal results early but also to guide the readers through the proof idea and an intuitive picture, we organize this section as follows. In Section~\ref{sec:gibbs-formal}, we formally state core results including the formula for $\log Z_{G,\beta}$ with brief explanations and remarks, but we do not yet attempt to explain in detail what the terms in the formula mean. Section~\ref{sec:gibbs-regularity} serves as a quick review of the single graph asymptotics developed in Section~\ref{sec:single} which the rest of the sections rely on. Section~\ref{sec:gibbs-informal} then gives a non-rigorous analysis based on the Markov random field representation. Section~\ref{sec:kernels-cond-gibbs} through Section~\ref{sec:gibbs-opt-sol} utilize a handful of mathematical tools to give a formal proof to the main results.

\subsubsection{Main results}\label{sec:gibbs-formal}

\paragraph{Reduction to a one-dimensional optimization.} Our first result shows that we can solve the variational optimization \eqref{eqn:free-single} by instead solving a one-dimensional optimization. Namely, given the graph $G=G_n$, we construct a set of measures
\begin{equation}\label{eqn:pre-hatmu}
    \hat{\mu}_{G,1},\hat{\mu}_{G,2},\cdots,\hat{\mu}_{G,N_{d^*}}
\end{equation}
indexed by the set of integers at most $N_{d^*}$, the number of vertices at distance $d^*$ in $G$. The index $m$ of each measure $\hat{\mu}_{G,m}$ represents the number of vertices at depth $d^*$ satisfying $\mathsf{par}_T(v)\in\mathsf{par}_G(v)$ (i.e., the size of the \emph{kernel} of $T$ which we study in Section~\ref{sec:kernels-cond-gibbs}), and it maps to a value $\tilde{\Phi}_{G,\beta}(m)$ which we will define shortly. A measure $\hat{\mu}_{G,m}$ with the largest value of $\tilde{\Phi}_{G,\beta}(m)$ is supposed be a good approximation of $\mu_{G,\beta}$. An explicit construction of these measures is delayed to \eqref{eqn:hatmu} in Section~\ref{sec:kernels-cond-gibbs}.

To define $\tilde{\Phi}_{G,\beta}(m)$, we let $\Psi(m;n,\lambda)$ be a function of integers $1\leq m\leq n$ and a real number $0<\lambda<1$
\begin{equation}\label{eqn:psi}
    \Psi(m;n,\lambda):=\E\left[\log\left(\sum_{I\in\binom{[n]}{m}}\prod_{i\in I}X_i\right)\right]\,,\qquad X_1,\cdots,X_n\overset{\text{i.i.d.}}\sim\ZTP(\log(1/\lambda))
\end{equation}
where $\ZTP$ is the \emph{zero-truncated Poisson} distribution defined in Definition~\ref{def:ztp}. Now $\tilde{\Phi}_{G,\beta}(m)$ is defined by
\begin{equation}\label{eqn:gibbs-pts}
    \tilde{\Phi}_{G,\beta}(m):=\Psi(m;N_{d^*},\lambda_n)+(n-m)\log(mq)+\overline{\beta}m-\overline{\beta}(d^*+1)n\,.
\end{equation}
Note that \eqref{eqn:gibbs-pts} is formulated only in terms of $n$, $\alpha_n$, and $N_{d^*}$. The following theorem shows that we can approximate $\Phi_{G,\beta}=\log Z_{G,\beta}$ by solving a one-dimensional optimization
\begin{equation}\label{eqn:gibbs-opt}
    \tilde{\Phi}_{G,\beta}^*:=\max_{1\leq m\leq N_{d^*}}\tilde{\Phi}_{G,\beta}(m)\,.
\end{equation}

% Now we use Proposition~\ref{prop:fixed-kernel-size} and \eqref{eqn:gibbs-via-ks} to approximate $\Phi_{G,\beta}$. Currently, the term $\mathcal{H}_G(m)$ depends a lot on the graph $G$ through the edges between $\Gamma_{d^*-1}$ and $\Gamma_{d^*}$, and we would like to reduce this dependency be averaging. For integers $1\leq m\leq n$ and a real number $0<\lambda<1$ we define
% \begin{equation}\label{eqn:psi}
%     \Psi(m;n,\lambda):=\E\left[\log\left(\sum_{I\in\binom{[n]}{m}}\prod_{i\in I}X_i\right)\right]
% \end{equation}
% where the expectation is taken over i.i.d. $X_1,\cdots,X_n\sim\ZTP(\log(1/\lambda))$. Define
% \begin{equation}
%     \tilde{\Phi}_{G,\beta}(m):=\Psi(m;N_{d^*},\lambda_n)+(n-m)\log(mq)+\overline{\beta}m-\overline{\beta}(d^*+1)n\,.
% \end{equation}
% Note that $\tilde{\Phi}_{G,\beta}(m)$ depends only on $n$, $q$ (through $\alpha_n$), and $N_{d^*}$. The justification of replacing $\mathcal{H}_G(m)$ with $\Psi(m;N_{d^*},\lambda_n)$ will be treated in Section~\ref{sec:hg-conc}.

% We now show that we can approximate $\Phi_{G,\beta}=\log Z_{G,\beta}$ by optimizing
% \begin{equation}\label{eqn:gibbs-opt}
%     \tilde{\Phi}_{G,\beta}^*:=\max_{m_0\leq m\leq N_{d^*}}\tilde{\Phi}_{G,\beta}(m)\,.
% \end{equation}
% Recall the definition of $\hat{\mu}_{G,m}$ in \eqref{eqn:hatmu}.

\begin{theorem}\label{thm:main-gibbs}
    There exists a collection of measures \eqref{eqn:pre-hatmu} with the following property. For any fixed constant $\beta>0$, let $m^*$ be any function of $n$, $\alpha_n$, and $N_{d_n^*}$ (and $\beta$) such that $m^*\in\{1,\cdots,N_{d_n^*}\}$ almost surely and it approximately solves \eqref{eqn:gibbs-opt} in a sense that
    \begin{equation}\label{eqn:gibbs-apx-sol}
        \tilde{\Phi}_{G_n,\beta}^*=\tilde{\Phi}_{G_n,\beta}(m^*)+o_p(n)\,.
    \end{equation}
    Then we have the following.
    \begin{itemize}
        \item $\hat{\mu}_{G_n,m^*}$ approximates $\mu_{G_n,\beta}$ in KL divergence:
        \[
            \KL(\hat{\mu}_{G_n,m^*},\mu_{G_n,\beta})=o_p(n)\,.
        \]

        \item The log partition function $\Phi_{G_n,\beta}$ is well approximated by $\tilde{\Phi}_{G_n,\beta}(m^*)$:
        \[
            \Phi_{G_n,\beta}=\tilde{\Phi}_{G_n,\beta}(m^*)+o_p(n)\,.
        \]
    \end{itemize}
\end{theorem}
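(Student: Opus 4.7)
The plan is to prove matching upper and lower bounds on $\Phi_{G_n,\beta}$ that both equal $\tilde{\Phi}_{G_n,\beta}(m^*) + o_p(n)$, and then deduce the KL divergence bound as an immediate corollary of the Gibbs variational principle. Throughout, I will rely on the regularity results from Theorem~\ref{thm:main1-tight} to restrict attention to $V^* = \Gamma_{d^*} \cup \Gamma_{d^*+1}$, which contains $n - o_p(n)$ vertices, and will use standard single-graph concentration to show that the typical vertex in $\Gamma_{d^*}$ has $|\mathsf{par}_G(v)|$ distributed approximately as $\Pois((nq)^{\Delta_n})$ and the typical vertex in $\Gamma_{d^*+1}$ has $|\mathsf{N}_G(v) \cap \Gamma_{d^*}|$ distributed approximately as $\Pois(\log(1/\lambda_n))$, conditioned on being positive --- this is the source of the zero-truncated Poisson in the definition of $\Psi$.

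For the upper bound, I start from \eqref{eqn:free-single} and apply the subadditivity of entropy $H_\nu(T) \leq \sum_v H_\nu(\mathsf{par}_T(v))$. For any trial measure $\nu$, I parametrize by its \emph{expected kernel size} $m_\nu := \mathbb{E}_\nu[|\{v \in \Gamma_{d^*} : \mathsf{par}_T(v) \in \mathsf{par}_G(v)\}|]$. Splitting the per-vertex entropy plus energy contribution over $v \in \Gamma_{d^*}$ (according to whether $\mathsf{par}_T(v) \in \mathsf{par}_G(v)$), $v \in \Gamma_{d^*+1}$ (which must borrow its parent from the realized kernel $A$, gaining entropy $\log|\mathsf{N}_G(v) \cap A|$ when nonempty), and $v \notin V^*$ (negligible by Theorem~\ref{thm:main1-tight} and the diameter bound of Theorem~\ref{thm:diameter}), the per-vertex optimization collapses to the same functional $\tilde{\Phi}_{G_n,\beta}(m)$ evaluated at $m = m_\nu$, up to an $o_p(n)$ slack coming from Jensen and from replacing $m_\nu$ by its almost-sure value via concentration of the kernel size.

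For the lower bound, I construct $\hat{\mu}_{G_n,m}$ concretely: choose a random kernel $A \subseteq \Gamma_{d^*}$ of size $m$, then independently assign parents to each $v \in A$ uniformly in $\mathsf{par}_G(v)$, to each $v \in \Gamma_{d^*+1}$ uniformly in $\mathsf{N}_G(v) \cap A$ when nonempty, and to vertices below depth $d^*-1$ via any fixed BFS tree (which affects the entropy and energy only by $o_p(n)$ because the number of such vertices is at most $2(nq)^{d^*-1} = o_p(n/\log\log n)$). The key computation is that the entropy of this product-like measure is exactly $\Psi(m; N_{d^*}, \lambda_n) + (n-m)\log(mq) + o_p(n)$: the first term is the sum over $\Gamma_{d^*+1}$ of $\log|\mathsf{N}_G(v) \cap A|$ (which by Poisson approximation plus linearity gives the $\Psi$ expression), and the second is the aggregated entropy of the uniform choice inside $\mathsf{par}_G(v)$ for kernel vertices. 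The energy is $\overline{\beta}[(d^*+1)n - m] + o_p(n)$, exactly as required. Then by \eqref{eqn:kl-variational}, $\KL(\hat{\mu}_{G_n,m^*}, \mu_{G_n,\beta}) = \Phi_{G_n,\beta} - \tilde{\Phi}_{G_n,\beta}(m^*) + o_p(n) = o_p(n)$ once both bounds are in hand.

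The main obstacle will be achieving $o_p(n)$ precision rather than the cruder $o_p(n \log\log n)$ that the informal sketch in Section~\ref{sec:to-gibbs} suffices for. Three sources of error must be controlled sharply: (i) the approximation of $\Binom(N_{d^*-1}, q)$ and $\Binom(m, q)$ by the corresponding Poissons must be uniform enough that summing over $\Theta(n)$ vertices still gives $o(n)$; (ii) the slack in the subadditivity $H_\nu(T) \leq \sum_v H_\nu(\mathsf{par}_T(v))$ must be shown to be $o_p(n)$, which is where the product structure of the construction $\hat{\mu}_{G_n,m}$ is essential and the acyclicity constraint must be shown to cost negligibly; and (iii) the rare but non-trivial case of vertices at depth $\geq d^*+2$, or of vertices in $\Gamma_{d^*+1}$ with $\mathsf{N}_G(v) \cap A = \emptyset$, must be handled by a fallback assignment whose contribution is genuinely $o_p(n)$ --- this is exactly why the denominator $(\log\log n)^2$ appears in the definition \eqref{eqn:dstar} of $d_n^*$ and why the zero-truncation in $\ZTP$ is the right definition rather than untruncated $\Pois$.
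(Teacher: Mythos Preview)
Your high-level architecture matches the paper's: upper bound via the Gibbs variational principle and entropy subadditivity, lower bound via an explicit construction, and the KL bound from \eqref{eqn:kl-variational}. However, there are two genuine gaps.

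\textbf{The roles of the two entropy terms are reversed.} You write that $\Psi(m;N_{d^*},\lambda_n)$ is ``the sum over $\Gamma_{d^*+1}$ of $\log|\mathsf{N}_G(v)\cap A|$'' and that $(n-m)\log(mq)$ is the kernel-vertex entropy. This is backwards. The $\ZTP(\log(1/\lambda_n))$ distribution in $\Psi$ is the law of $|\mathsf{par}_G(v)|$ for $v\in\Gamma_{d^*}$ (edges to $\Gamma_{d^*-1}$, where $qN_{d^*-1}\to\log(1/\lambda_n)$), not of $|\mathsf{N}_G(v)\cap\Gamma_{d^*}|$ for $v\in\Gamma_{d^*+1}$ (which is $\approx\Pois((1-\lambda_n)nq)$, of order $\log n$). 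In the paper, $\Psi$ arises as the concentration limit of $\mathcal{H}_G(m)=\log\sum_{A\in\mathcal{K}_m(G)}\prod_{v\in A}|\mathsf{par}_G(v)|$: it combines the entropy of \emph{choosing the kernel} $A\subset\Gamma_{d^*}$ with the parent-selection entropy of vertices \emph{in} $A$. The $(n-m)\log(mq)$ term is the entropy of the $n-m$ non-kernel vertices (both $\Gamma_{d^*}\setminus A$ and $\Gamma_{d^*+1}$) each choosing a parent from their $\approx mq$ neighbors in $A$. With your interpretation the counts do not match ($\Psi$ involves $N_{d^*}$, not $N_{d^*+1}$, variables) and the computation would not go through. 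Relatedly, your construction of $\hat\mu_{G,m}$ says ``choose a random kernel $A$ of size $m$'' without specifying the law; to hit $\mathcal{H}_G(m)$ exactly one needs $\tau_{G,m}(A)\propto\prod_{v\in A}|\mathsf{par}_G(v)|$, not the uniform law.

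\textbf{The $m_\ell$ threshold is missing.} Your lower-bound construction and your claim that ``vertices in $\Gamma_{d^*+1}$ with $\mathsf{N}_G(v)\cap A=\emptyset$'' are rare both silently assume $|A|$ is large enough that $|A|q\to\infty$. But nothing in the theorem hypothesis prevents $m^*$ from being tiny a priori; for $m=O(1)$ the fallback set has $\Theta(n)$ vertices and the lower bound collapses. The paper handles this by introducing $m_\ell=m_0\vee\lceil n/(3(\log\log n)^2)\rceil$ and proving that (i) the upper bound $\tilde\Phi_{G,\beta}(m)$ (and its variant with $L+mq$) is increasing on $[1,m_\ell]$, and (ii) there is an $\Omega(n)$ gap between $\tilde\Phi_{G,\beta}(m_\ell)$ and some $\tilde\Phi_{G,\beta}(m^\#)$ with $m^\#>m_\ell$. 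Together these force any approximate maximizer $m^*$ to lie in $[m_\ell,N_{d^*}]$, which is exactly the range where the lower-bound construction (Proposition~\ref{prop:fixed-kernel}, valid only for $|A|\ge m_\ell$) applies. Without this monotonicity-and-gap step, you cannot conclude that $m^*\ge m_\ell$, and the argument does not close.
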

\begin{proof}
    See Section~\ref{sec:gibbs-1d-opt}.
\end{proof}

An equivalent but slightly different viewpoint to this reduction is to consider the following mean-field spin system with a random external field. Suppose that the nature samples $A_1,\cdots,A_n\overset{\text{i.i.d.}}\sim\Pois(\log(1/\lambda_n))$. We consider a (random) measure defined on $\mathbf{x}=(x_1,\cdots,x_n)\in\{0,1\}^n$
\begin{equation}\label{eqn:mf-cw}
    \frac{\mathbf{1}_{A_i=0\implies x_i=0}}{Z_{\mathbf{A}}}\exp\left(\sum_{i:A_i>0}(\log A_i-\log(nq)+\overline{\beta})x_i+n\cdot g(\bar{\mathbf{x}})\right)
\end{equation}
where $\bar{\mathbf{x}}=\frac{1}{n}\sum_{i=1}^nx_i$ and $g(x)=(1-x)\log x$. Then the second part of Theorem~\ref{thm:main-gibbs} can be stated in the following way:
\[
    \log Z_{G,\beta}=\log Z_{\mathbf{A}}+n\log(nq)-\overline{\beta}(d^*+1)n+o_p(n)\,.
\]
This fact is not needed nor is explicitly proved in our analysis. However, this formulation may provide further insights to our model and is especially useful when we discuss the behavior near the critical temperature. See Appendix~\ref{apdx:gibbs-critical} for details.

\paragraph{Solving the optimization.} The next step is to find a suitable $m^*$ that satisfies \eqref{eqn:gibbs-apx-sol}. As we will see shortly, depending on the constant $\beta>0$ and the asymptotics of the proxy sequences, there are two regimes where the solutions are far apart. To state the result, we define a new proxy
\[
    \kappa_n:=(1-\lambda_n)\log\log n\,.
\]
Since $N_{d^*}\approx(1-\lambda_n)n$ by Theorem~\ref{thm:main1-tight}, $\kappa_n$ expresses the size of $N_{d^*}$ at the order of $n/\log\log n$.

Now we precisely define the two regimes where we state our result. The first is the low temperature phase, which is the union of two regimes:
\begin{equation}\label{eqn:low-temp-phase}
    \begin{dcases}
        \beta>\limsup_{n\to\infty}(1-\Delta_n-\kappa_n^{-1})\\
        \beta=\limsup_{n\to\infty}(1-\Delta_n-\kappa_n^{-1})\quad\text{and}\quad\limsup_{n\to\infty}\kappa_n<\infty\,.
    \end{dcases}\tag{LT}
\end{equation}
Next is the high temperature phase
\begin{equation}\label{eqn:high-temp-phase}
    \beta<\liminf_{n\to\infty}(1-\Delta_n-\kappa_n^{-1})\,.\tag{HT}
\end{equation}
Note that these do not cover all possible regimes, especially when $\beta$ is near the limits of $1-\Delta_n-\kappa_n^{-1}$. Nevertheless, we present a formula of $\log Z_{G,\beta}$ for the two regimes.

%More specifically, our goal is to find an approximate solution $m\in[N_{d^*}]$ such that \eqref{eqn:gibbs-apx-sol} holds. Then Theorem~\ref{thm:main-gibbs} yields a formula for $\Phi_{G,\beta}$ accurate up to an $o(n)$ error, and also gives an explicit measure that is $o(n)$-close to the exact Gibbs measure in KL divergence.
% Here, we write $f(n)=o_p(n)$ as a simplified notation for $\frac{1}{n}|f(n)|\pto0$, and similarly for $o_p(1)$.

\begin{theorem}[Approximate log partition function]\label{thm:logz-formula}
    In each phase, we have the following formula for the log partition function $\Phi_{G_n,\beta}=\log Z_{G_n,\beta}$.
    \begin{itemize}
        \item In the low temperature phase \eqref{eqn:low-temp-phase}, $m^*=N_{d_n^*}$ satisfies \eqref{eqn:gibbs-apx-sol} and we have
        \[
            \Phi_{G_n,\beta}=n\Psi_0(\lambda_n)+\lambda_nn\log(\alpha_n(1-\lambda_n)\log n)-\overline{\beta}(d_n^*+\lambda_n)n+ o_p(n)
        \]
        where
        \[
            \Psi_0(\lambda):=\E[\log(X\vee 1)]\,,\quad X\sim\Pois(\log(1/\lambda))\,.
        \]

        \item In the high temperature phase \eqref{eqn:high-temp-phase}, $m^*=\lfloor\frac{N_{d_n^*}}{(1-\Delta_n-\beta)\kappa_n}\rfloor$ satisfies \eqref{eqn:gibbs-apx-sol} and we have
        \[
            \Phi_{G_n,\beta}=n\log\left(\frac{\alpha_n\log n}{(1-\Delta_n-\beta)\log\log n}\right)-n-\overline{\beta}(d_n^*+1)n+ o_p(n)\,.
        \]
    \end{itemize}
\end{theorem}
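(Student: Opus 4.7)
The plan is to apply Theorem~\ref{thm:main-gibbs}, which reduces $\Phi_{G_n,\beta}$ to the scalar maximization of $\tilde\Phi_{G_n,\beta}(m)$ up to $o_p(n)$ error. Thus it suffices to (a) verify that the proposed $m^*$ in each phase approximately achieves the maximum in the sense of \eqref{eqn:gibbs-apx-sol}, and (b) evaluate $\tilde\Phi_{G_n,\beta}(m^*)$ explicitly and match it with the stated formula. The core technical ingredient is a sharp estimate of $\Psi(m; N_{d^*}, \lambda_n)$: Jensen's inequality gives $\Psi(m; N_{d^*}, \lambda_n) \le \log\binom{N_{d^*}}{m} + m\log\E[X_1]$ where $X_1\sim\ZTP(\log(1/\lambda_n))$, and I would prove this bound is tight to $o_p(n)$ via a truncation-plus-bounded-differences argument for the elementary symmetric polynomial $e_m(X_1,\ldots,X_{N_{d^*}})$. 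At the boundary $m = N_{d^*}$ the exact identity $\Psi(N_{d^*};N_{d^*},\lambda_n) = N_{d^*}\E[\log X_1]$ holds.

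For the low-temperature phase I would take $m^* = N_{d^*}$ and substitute directly. Using the ZTP identity $\E[\log X_1] = \Psi_0(\lambda_n)/(1-\lambda_n)$ together with Theorem~\ref{thm:main1-tight} gives $N_{d^*}\E[\log X_1] = n\Psi_0(\lambda_n) + o_p(n)$; the remaining pieces $(n-N_{d^*})\log(N_{d^*} q)$ and $\overline\beta(N_{d^*} - (d^*_n+1)n)$ simplify to $\lambda_n n\log(\alpha_n(1-\lambda_n)\log n) + o_p(n)$ and $-\overline\beta(d^*_n + \lambda_n)n + o_p(n)$ respectively, yielding the claimed LT formula. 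Approximate optimality follows by checking that the discrete derivative $\tilde\Phi_{G_n,\beta}(m) - \tilde\Phi_{G_n,\beta}(m-1)$ stays nonnegative on $[1,N_{d^*}]$; using the $\Psi$ estimate this derivative reduces to $(\beta - (1-\Delta_n))\log\log n + O(1) + O(\lambda_n/(1-\lambda_n))$, and condition \eqref{eqn:low-temp-phase} is precisely the threshold at which this stays nonnegative after absorbing the $\kappa_n^{-1}$ correction coming from the $\lambda_n/(1-\lambda_n)$ term.

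For the high-temperature phase, $m^* = \lfloor N_{d^*}/((1-\Delta_n-\beta)\kappa_n)\rfloor$ is the first-order critical point of the continuous relaxation of $\tilde\Phi_{G_n,\beta}$. Since $m^*\ll N_{d^*}$ in this regime, Stirling and the $\Psi$ estimate give $\Psi(m^*;N_{d^*},\lambda_n) = m^*[\log(N_{d^*}/m^*) + 1 + \log\E[X_1]] + o_p(n)$. Using $\log\E[X_1] = \Delta_n\log(nq) - \log(1-\lambda_n) = \Delta_n\log(nq) + O(1)$ and $m^* \approx n/(\tau\log\log n)$ with $\tau := 1-\Delta_n-\beta$, the dominant contribution $m^*\Delta_n\log(nq)$ evaluates to $n\Delta_n/\tau + o_p(n)$ while $m^*\log(N_{d^*}/m^*) = O(n\log\log\log n/\log\log n)$ and $m^*$ are both $o_p(n)$. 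Combining with $(n-m^*)\log(m^*q) + \overline\beta m^* = n\log(m^*q) - (1-\beta)n/\tau + o_p(n)$ and using $\log(m^*q) = \log(\alpha_n\log n/(\tau\log\log n)) + o_p(1)$, the algebraic identity $\Delta_n/\tau - (1-\beta)/\tau = -1$ produces exactly the HT formula after cancellation with the $-\overline\beta(d^*_n+1)n$ offset. Optimality follows from strict concavity of the continuous relaxation near $m^*$ together with boundary checks at $m=1$ (where $\tilde\Phi$ is $-\Theta(n\log n)$) and $m=N_{d^*}$ (smaller than $\tilde\Phi(m^*)$ by $\Theta((1-\Delta_n-\beta)n\log\log n)$ in HT).

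The main obstacle is the sharp lower bound on $\Psi(m;N_{d^*},\lambda_n)$ matching Jensen: the elementary symmetric polynomial $e_m$ in i.i.d.\ ZTP variables is not directly amenable to standard concentration because individual $X_i$ can be large. I would first truncate to the high-probability event $\max_i X_i \le C\log n$, guaranteed by the sub-exponential tails of ZTP when $\lambda_n$ is bounded from $1$; on this event $\log e_m$ satisfies bounded differences with each coordinate changing the value by at most $O(\log\log n)$, so McDiarmid yields concentration at scale $\sqrt{N_{d^*}}\log\log n = o(n)$, and a union bound over $m \in [1,N_{d^*}]$ promotes this to uniform control. The regime $\lambda_n \to 1$ requires a separate estimate using Lemma~\ref{lem:dstar}, but the overall structure of the argument is unchanged.
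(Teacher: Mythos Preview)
Your central estimate for $\Psi(m;N_{d^*},\lambda_n)$ is incorrect in the regime that matters for the low-temperature phase, and this propagates into a genuine gap in the optimality argument. Jensen gives the upper bound $\Psi(m)\le\log\binom{N_{d^*}}{m}+m\log\E[X_1]$, but this bound is \emph{not} tight to $o(n)$ when $m/N_{d^*}$ is bounded away from $0$ and $\lambda_n\to\lambda\in(0,1)$. At $m=N_{d^*}$ you correctly record the exact value $N_{d^*}\E[\log X_1]$, and the discrepancy with your upper bound there is exactly $N_{d^*}(\log\E[X_1]-\E[\log X_1])$, which is a positive constant times $n$. Your McDiarmid argument does not close this gap: bounded differences gives concentration of $\log e_m$ around its mean $\Psi(m)=\E[\log e_m]$, not around $\log\E[e_m]$. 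These are different quantities by $\Theta(n)$ in this regime. Consequently, your upper bound curve sits $\Theta(n)$ above the true value $\tilde\Phi(N_{d^*})$, so showing the upper bound is maximized near $N_{d^*}$ does not establish $\tilde\Phi(m)\le\tilde\Phi(N_{d^*})+o(n)$.

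Separately, your derivative claim is false as stated: even under your own estimate, the derivative of $\log\binom{N_{d^*}}{m}$ in $m$ is $\log((N_{d^*}-m)/m)\to-\infty$ as $m\to N_{d^*}$, so the discrete increment cannot reduce to ``$(\beta-(1-\Delta_n))\log\log n+O(1)$'' uniformly on $[1,N_{d^*}]$.

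The paper avoids both issues by centering the estimate at $\E[\log X_1]$ rather than $\log\E[X_1]$: it proves the sandwich $m\Psi_1(\lambda_n)-o(n)\le\Psi(m)\le m\Psi_1(\lambda_n)+N_{d^*}H(m/N_{d^*})+C\sqrt{(m\wedge(N_{d^*}-m+\sqrt{N_{d^*}}))N_{d^*}}+o(n)$ with $\Psi_1(\lambda_n)=\E[\log X_1]$ (Lemma~\ref{lem:psi-bounds}), so the linear term matches the exact value at $m=N_{d^*}$ and the entropy correction $h(m)$ vanishes to $o(n)$ at both candidate optimizers. Optimality is then read off from the concavity of the resulting upper envelope $f_u=g+h$, not from monotonicity of $\tilde\Phi$ itself. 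Your high-temperature computation and both formula evaluations are essentially correct (there $m^*=o(N_{d^*})$, so $m^*(\log\E[X_1]-\E[\log X_1])=o(n)$ and the two centerings agree), but the low-temperature optimality argument needs to be redone with the correct centering.
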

\begin{proof}
    See Section~\ref{sec:gibbs-opt-sol}.
\end{proof}

\begin{remark}
    In Section~\ref{sec:kernels-cond-gibbs}, we will construct the measures $\hat{\mu}_{G,m}$ in a way that the last measure $\hat{\mu}_{G,N_{d^*}}$ corresponding to $m=N_{d^*}$ is precisely the uniform measure $\mu_{G,\infty}$ over the shortest path trees. Then Theorem~\ref{thm:logz-formula} indicates that the Gibbs measure $\mu_{G,\beta}$ is close in KL divergence to the uniform measure $\mu_{G,\infty}$ in the low temperature phase \eqref{eqn:low-temp-phase}.
\end{remark}

\subsubsection{Regularity properties of the underlying graph}\label{sec:gibbs-regularity}

Before we proceed on studying the Gibbs measures $\mu_{G,\beta}$, we remind the reader that $\mu_{G,\beta}$ themselves are \emph{random} measures, since the graph $G$ is drawn from the Erd\"os--R\'enyi model $\mathcal{G}(n,q)$. Thus, when we say we sample a random spanning tree $T$ from $\mu_{G,\beta}$, there are in fact two sources of randomness: one from $G$, and one from $\mu_{G,\beta}$ conditioned on $G$. Thus, for most of the results in this section, we actually prove that $\mu_{G,\beta}$ enjoys certain properties for ``nice'' graphs $G$, where $G\sim\mathcal{G}(n,q)$ is ``nice'' with probability tending to $1$ as $n\to\infty$. Many of such ``nice'' features of a random graph come from what we have developed in Section~\ref{sec:single}. In this section, we collect these ``regularity properties'' on $G$ we will frequently refer back to throughout the proofs in Section~\ref{sec:gibbs}. Some of the proofs need additional regularity properties specific to them, and these will usually be stated as a separate lemma.

As for notations, we try to distinguish the randomness from $G$ and from $\mu_{G,\beta}$ to minimize confusion. Whenever we study the probability of a certain event $E$ of spanning trees under the Gibbs measure $\mu_{G,\beta}$ conditioned on $G$, we write
\[
    \mu_{G,\beta}(E)
\]
instead of the usual probability symbol, e.g., $\Pr_{T\sim\mu_{G,\beta}\mid G}(E)$. For the expectation, we always include a subscript to indicate under what measure we are taking the expectation over, such as
\[
    \E_{\mu_{G,\beta}}[f(T)]\,.
\]

Before we proceed, we briefly recap some relevant facts regarding the shortest path structure of $G$ we developed in Section~\ref{sec:single}. The purpose of this section is to provide an intuitive picture, so the results will be stated rather informally.
\begin{itemize}
    \item The single most important groundwork for our analysis is perhaps the stochastic process \eqref{eqn:nd_process}. A useful corollary of this is that conditioned on $\Gamma_{\leq d-1}$, we have
    \[
        |\mathsf{par}_G(v)|\sim\ZTB(N_{d-1},q)
    \]
    for all $v\in\Gamma_d$, where $\ZTB(n, p)$ is the \emph{zero-truncated binomial} distribution, the binomial. Essential properties of zero-truncated distributions can be found in Appendix~\ref{apdx:log-binomial}.
    \item Proposition~\ref{prop:conc} tells us that
    \begin{equation}\label{eqn:dsm1-neg}
        |\Gamma_{\leq d^*-1}|=O((nq)^{d^*-1})=O_p\left(\frac{n}{(\log\log n)^2}\right)\,.
    \end{equation}
    Since a typical vertex has degree $O(\log n)$, the entropy gained from picking its parent $\mathsf{par}_T(v)$ is at most $O(\log\log n)$. This implies that the entropy coming from vertices of distance at most $d^*-1$ is only $o_p(n)$ and thus can be safely ignored.
    \item By Theorem~\ref{thm:main1-tight}, most of the vertices are in $\Gamma_{d^*}\cup\Gamma_{d^*+1}$, i.e., have distance either $d^*$ or $d^*+1$ to the root $1$ in $G$. In addition, $N_{d^*}$ is well approximated by $(1-\lambda_n)n$:
    \begin{equation}\label{eqn:dstar-apx}
        N_{d^*}=\left(1+o_p\left(\frac{1}{\log\log n}\right)\right)(1-\lambda_n)n\,.
    \end{equation}
    %\item A typical vertex in $\Gamma_{\leq d^*-1}$ has exactly $1$ parent and $\Theta(\log n)$ children (Proposition~\ref{prop:conc}).
    \item Conditioned on $\Gamma_{\leq d^*-1}$, each $v\in\Gamma_{d^*}$ has $\ZTB(N_{d^*-1},q)$ parents. Since $qN_{d^*-1}$ concentrates around $q(nq)^{d^*-1}=\log(1/\lambda_n)$ by Proposition~\ref{prop:conc}, this is approximately $\ZTP(\log(1/\lambda_n))$, the \emph{zero-truncated Poisson} distribution which is briefly discussed in Appendix~\ref{apdx:pois-apx}. If, for instance, $\Delta_n\to\Delta\in(0,1]$ (and thus $\lambda_n\to\lambda=0$), this concentrates around $(nq)^{\Delta}$.
    \item Conditioned on $\Gamma_{\leq d^*}$, $G[\Gamma_{d^*}]$ is marginally an Erd\"os--R\'enyi graph; consequently, a typical vertex $v\in\Gamma_{d^*}$ has approximately $qN_{d^*}$ neighbors in $\Gamma_{d^*}$. By Lemma~\ref{lem:dstar} this concentrates around $q(nq)^{d^*}$, which is $\Omega(\log n/(\log\log n)^2)$.
    \item Conditioned on $\Gamma_{\leq d^*}$, each $v\in\Gamma_{d^*+1}$ has $\ZTB(N_{d^*},q)$ parents, which concentrates around $qN_{d^*}$. Again, this is at least $\Omega(\log n/(\log\log n)^2)$ asymptotically almost surely.
\end{itemize}

\subsubsection{An informal description of the Gibbs measures}\label{sec:gibbs-informal}

% TODO: add some figures, and maybe add more explanations

To get a sense of the behavior of $\mu_{G,\beta}$, we revisit the informal analysis in Section~\ref{sec:to-gibbs} from a slightly different viewpoint --- a Markov Random Field perspective from Section~\ref{sec:transform-mrf}. Namely, we examine the (marginal) law of the \emph{distance vector} $\mathsf{d}_T$ under $\mu_{G,\beta}$. Imagine that we are sampling a distance vector using a Gibbs sampler. Namely, we sample a value of $\mathsf{d}_T(1,v)$ for each non-root vertex $1\neq v\in V$, conditioned on the distance values of all the other vertices. The (unnormalized) log probability of a particular distance $d$ is the sum of the \emph{parent selection entropy}
\[
    H(\mathsf{par}_T(v)):=\log\left|\left\{u\in\mathsf{N}_G(v):\mathsf{d}_T(1,u)=d-1\right\}\right|=\log|\mathsf{N}_G(v)\cap\Gamma_{d-1}^T|
\]
and the negative \emph{potential energy} times the inverse temperature
\[
    -\overline{\beta}d=-\beta d\log\log n\,.
\]
Now we run an ``informal'' Gibbs sampling, starting with the initial vector $\mathsf{d}_T=\mathsf{d}_G$. In terms of the distribution over spanning trees, we are starting with the uniform measure over the shortest path trees. For each vertex $v$, we consider candidate distances that $v$ can choose, and roughly estimate how much parent selection entropy and potential energy $v$ might gain.
\begin{itemize}
    \item Suppose that $v\in\Gamma_{d^*}$. If we choose to leave $v$ at the same level, i.e., $\mathsf{d}_T(1,v)=d^*$, then the parent selection entropy is approximately the log of $\ZTP(\log(1/\lambda_n))$. For $\Delta>0$, this would concentrate around $\Delta\log(nq)\approx\Delta\log\log n$. On the other hand, if we choose $d^*+1$, the parent selection entropy is at least $\log(\log n/(\log\log n)^2)\approx\log\log n$, but $v$ also gains $\beta\log\log n$ potential energy. Since $v$ gains almost the maximum parent selection entropy $\log\log n$ at level $d^*+1$, it will almost never move to level $d^*+2$ or greater. From this observation, we make the following guess:
    \begin{itemize}
        \item if $\beta>1-\Delta$, then $v$ will most likely stay at distance $d^*$ and
        \item if $\beta<1-\Delta$, then $v$ will most likely jump to distance $d^*+1$.
    \end{itemize}
    This is an example of a \emph{phase transition}. See Section~\ref{sec:phase-transition} for more details including formal statements.

    \item Suppose that $v\in\Gamma_{d^*+1}$. Then the parent selection entropy is at least $\log(\log/(\log\log n)^2)\approx\log\log n$, which is already nearly the maximum. Thus, as long as $\beta$ is a positive constant, $v$ will almost never move.

    \item The behavior of the vertices at level $d^*-1$ or below might be a little more complicated. For now, we will just accept that they will almost never move. Roughly speaking, this is because a typical vertex in $\Gamma_{\leq d^*-1}$ has a lot of descendants at level $d^*$ and $d^*+1$; if too many of $\Gamma_{\leq d^*-1}$ leave their level, then a lot of vertices might gain a lot of potential energy which wouldn't justify the entropy gain.
\end{itemize}
This suggests that the law of a spanning tree $T\sim\mu_{G,\beta}$ of $\overline{G}$ largely depends on the dynamics of the vertices in $\Gamma_{d^*}$, which is captured by the notion of \emph{kernels} defined in the following section.

\subsubsection{Kernels and conditional Gibbs measures}\label{sec:kernels-cond-gibbs}

As mentioned in the preceding section, a key notion that lies in the heart of our analysis is the \emph{kernel} of a spanning tree, which we define as follows.

\begin{definition}\label{def:kernel}
    Let $T$ be a spanning tree of $\overline{G}$. The \emph{kernel} of $T$, denoted $\varphi(T)$, is defined by
    \[
        \varphi(T):=\Gamma_{d^*}\cap\mathsf{ch}_T(\Gamma_{d^*-1})\,.
    \]
\end{definition}

In other words, $\varphi(T)$ is the set of vertices which are children of $\Gamma_{d^*-1}$, both for $G$ and $T$. Intuitively, from the view of our imaginary Gibbs sampler from Section~\ref{sec:gibbs-informal}, $\varphi(T)$ is related to the set of vertices in $\Gamma_{d^*}$ that ``stay'', though we don't require vertices in the kernel to have distance $d^*$ in $T$ (in fact, they do most of the time; see Section~\ref{sec:gibbs-w1}). The following proposition characterizes all possible kernels.

\begin{proposition}\label{prop:kernel-char}
    Any kernel intersects every connected component of $\overline{G}\setminus\Gamma_{\leq d^*-1}$. Conversely, any subset of $\Gamma_{d^*}$ intersecting every component of $\overline{G}\setminus\Gamma_{\leq d^*-1}$ is the kernel of a spanning tree of $\overline{G}$.
\end{proposition}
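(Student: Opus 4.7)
My plan is to handle the two directions separately, each by a direct structural argument; the proposition is essentially a correctness statement about the encoding $T \mapsto \varphi(T)$, not a probabilistic statement, so all the work is combinatorial.

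For the forward direction, I will fix a spanning tree $T$ of $\overline{G}$ and an arbitrary connected component $C$ of $\overline{G}\setminus\Gamma_{\leq d^*-1}$, pick any $v\in C$, and follow the unique path in $T$ from $v$ to the root $1$. Since $1\in\Gamma_0\subseteq\Gamma_{\leq d^*-1}$ while $v\notin\Gamma_{\leq d^*-1}$, this path must cross into $\Gamma_{\leq d^*-1}$; let $w$ be the last vertex on the $v$-to-$1$ path in $T$ that still lies outside $\Gamma_{\leq d^*-1}$. Then $\mathsf{par}_T(w)\in\Gamma_{\leq d^*-1}$, and since $w\sim\mathsf{par}_T(w)$ in $\overline{G}$, the graph distance satisfies $\mathsf{d}_G(1,w)\leq\mathsf{d}_G(1,\mathsf{par}_T(w))+1\leq d^*$; combined with $w\notin\Gamma_{\leq d^*-1}$ this forces $w\in\Gamma_{d^*}$, and hence also $\mathsf{par}_T(w)\in\Gamma_{d^*-1}$. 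In particular $w\in\mathsf{ch}_T(\Gamma_{d^*-1})\cap\Gamma_{d^*}=\varphi(T)$. Finally, the $T$-segment from $v$ to $w$ lies entirely outside $\Gamma_{\leq d^*-1}$ by the choice of $w$, so it is a walk in $\overline{G}\setminus\Gamma_{\leq d^*-1}$, giving $w\in C$ and thus $\varphi(T)\cap C\neq\emptyset$.

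For the converse, given $K\subseteq\Gamma_{d^*}$ meeting every component of $\overline{G}\setminus\Gamma_{\leq d^*-1}$, I will explicitly build a spanning tree $T$ with $\varphi(T)=K$ in three pieces. (i) Since $\Gamma_{\leq d^*-1}$ is the union of shortest paths from $1$ and therefore induces a connected subgraph of $\overline{G}$, fix any BFS spanning tree $T_0$ of $\overline{G}[\Gamma_{\leq d^*-1}]$ rooted at $1$. (ii) For each $v\in K$, choose any $\mathsf{par}_T(v)\in\mathsf{par}_G(v)\subseteq\Gamma_{d^*-1}$; this is possible since $v\in\Gamma_{d^*}$ forces $\mathsf{par}_G(v)\neq\emptyset$. (iii) For each component $C$ of $\overline{G}\setminus\Gamma_{\leq d^*-1}$, build a spanning forest of $C$ whose root set is exactly $K\cap C$ (nonempty by hypothesis); concretely, contract $K\cap C$ to a single vertex in the connected graph $C$, take any spanning tree of the resulting graph, and expand, assigning each $v\in C\setminus K$ its unique parent in that forest (which lies in $C$). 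Combining (i)--(iii) assigns a parent to every non-root vertex, and every vertex reaches $1$: vertices in $C\setminus K$ travel through the forest to some root in $K\cap C$, which reaches $\Gamma_{d^*-1}$ via (ii) and then $1$ via (i); the resulting $|\overline V|-1$ parent edges form a spanning tree. By construction, $\mathsf{par}_T(v)\in\Gamma_{d^*-1}$ for $v\in\Gamma_{d^*}$ iff $v\in K$, so $\varphi(T)=K$.

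There is no real mathematical obstacle; the only point that requires a moment of care is the passage ``$\mathsf{par}_T(w)\in\Gamma_{\leq d^*-1}$ implies $w\in\Gamma_{d^*}$'' in the forward direction (where one must rule out $w\in\Gamma_{\leq d^*-1}$ by the choice of $w$) and the verification in the converse direction that the forest in (iii) is rooted \emph{exactly} at $K\cap C$, not at a superset; otherwise a vertex of $\Gamma_{d^*}\setminus K$ within $C$ might end up with no parent inside $C$ and be accidentally placed in $\varphi(T)$.
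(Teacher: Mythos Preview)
Your argument is correct and follows the same two-part strategy as the paper: for the forward direction, follow the $T$-path from a vertex of the component back to $1$, locate the last vertex $w$ still outside $\Gamma_{\leq d^*-1}$, and observe that $w\in\varphi(T)$ lies in that same component; for the converse, assemble a spanning tree in three pieces (a tree on $\Gamma_{\leq d^*-1}$, one edge from each vertex of $K$ into $\Gamma_{d^*-1}$, and a piece inside each component of $\overline G\setminus\Gamma_{\leq d^*-1}$).

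The only substantive difference is in that third piece. The paper simply takes a full spanning tree $T_i$ of each component $C_i$ and unions everything; but the edge count is then $(|\Gamma_{\leq d^*-1}|-1)+\sum_i(|C_i|-1)+|A|=|\overline V|-1+(|A|-k)$, which exceeds $|\overline V|-1$ whenever $A$ contains more than one vertex in some component, so the paper's construction as written produces cycles in that case. Your construction instead builds a spanning \emph{forest} of $C_i$ rooted precisely at $K\cap C_i$ (via the contraction trick), which gives $|C_i|-|K\cap C_i|$ edges per component and hence exactly $|\overline V|-1$ edges overall, and it also makes the verification $\varphi(T)=K$ immediate since every $v\in\Gamma_{d^*}\setminus K$ is explicitly assigned a parent inside its component. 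So your converse is in fact a clean repair of a small slip in the paper's own write-up.
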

\begin{proof}
    Assume $A=\varphi(T)$. For any vertex $v\in\overline{V}\setminus\Gamma_{\leq d^*-1}$, a (unique) path in $T$ from $v$ to the root must contain an edge $u_1\to u_2$ where $u_1\in\Gamma_{d^*}$ and $u_2\in\Gamma_{d^*-1}$. This implies $u_1\in\mathsf{ch}_T(\Gamma_{d^*-1})$ and thus $u_1\in A$. Hence, there is a path in $T$ from $v$ to $A$.

    Conversely, let $A\subseteq\Gamma_{d^*}$ be a set that intersects every component of $\overline{G}\setminus\Gamma_{\leq d^*-1}$. We construct a spanning tree $T$ as follows:
    \begin{enumerate}
        \item choose a spanning tree $T_0$ of $G[\Gamma_{\leq d^*-1}]$;
        \item choose edges $e_1,\cdots,e_{|A|}$ between the vertices in $A$ and $\Gamma_{d^*-1}$, one for each vertex;
        \item choose spanning trees $T_1,\cdots,T_k$ of the all $k$ components of $\overline{G}\setminus\Gamma_{\leq d^*-1}$;
        \item set $T=T_0\cup T_1\cup\cdots\cup T_k\cup\{e_1,\cdots,e_{|A|}\}$.
    \end{enumerate}
    It is straightforward to verify that $T$ is indeed a spanning tree of $\overline{G}$ and that $A=\varphi(T)$.
\end{proof}

We denote by
\[
    \mathcal{K}(G):=\{\varphi(T):\text{$T$ is a spanning tree of $\overline{G}$}\}
\]
the set of all possible kernels as characterized in the proposition.

\paragraph{Conditioning on the kernel.} One of the reasons why kernels are useful is because the Gibbs measures $\mu_{G,\beta}$ are easy to describe when conditioned on the kernel. Consider a decomposition of the partition function $Z_{G,\beta}$ by grouping together the spanning trees with the same kernel, i.e.,
\begin{equation}\label{eqn:gibbs-partition-kernel}
    Z_{G,\beta}=\sum_{A\in\mathcal{K}(G)}Z_{G,\beta}|_A
\end{equation}
where
\[
    Z_{G,\beta}|_A:=\sum_{T:\varphi(T)=A}\exp\left(-\overline{\beta}\sum_{v}\mathsf{d}_T(1,v)\right)\,.
\]
This corresponds to decomposing a measure $\nu$ on the spanning trees of $\overline{G}$ by
\[
    \nu=\sum_{A\in\mathcal{K}(G)}\varphi_*(\nu)(A)\cdot\nu|_A
\]
where
\[
    \varphi_*(\nu)(A):=\nu(\varphi(T)=A)
\]
are the mixture weights and
\[
    \nu|_A(T)=\nu(T\mid\varphi(T)=A)
\]
are the conditional measures. Here, $\varphi_*(\nu)$ can equivalently be understood as the pushforward of $\nu$ by $\varphi$.
%Correspondingly, the variational formula \eqref{eqn:zeta-def} can be written as
% \begin{equation}\label{eqn:gibbs-decomp}
%     \begin{split}
%         \zeta_{G,\beta}(\nu) &= H_{\nu}(T)-\overline{\beta}\E_{\nu}\left[\sum_{v}\mathsf{d}_T(1,v)\right]\\
%         &=H_{\varphi_*(\nu)}(\mathbf{A})+\E_{\varphi_*(\nu)}\left[H_{\nu|_{\mathbf{A}}}(T)-\overline{\beta}\E_{\nu|_{\mathbf{A}}}\left[\sum_{v}\mathsf{d}_T(1,v)\right]\right]\\
%         &=H_{\varphi_*(\nu)}(\mathbf{A})+\E_{\varphi_*(\nu)}\left[\zeta_{G,\beta}(\nu|_\mathbf{A})\right]
%     \end{split}
% \end{equation}
% where $\mathbf{A}=\varphi(T)$ is boldfaced to indicate that it is a random object. From this perspective, the conditional Gibbs measure $\mu_{G,\beta}|_A$ maximizes $\zeta_{G,\beta}(\nu)$ under the constraint that $\nu$ is supported on $\varphi^{-1}(A)$, and the optimal value $\zeta_{G,\beta}(\mu_{G,\beta}|_A)$ equals $\log(Z_{G,\beta}|_A)$.
Our first result is that whenever $|A|$ is reasonably large, regardless of the constant $\beta>0$, $\mu_{G,\beta}|_A$ behaves like its zero-temperature limit $\mu_{G,\infty}|_A$ which can be described by the following sampling procedure:
\begin{enumerate}
    \item for each $v\in\Gamma_{\leq d^*-1}$, sample $\mathsf{par}_T(v)$ from $\mathsf{par}_G(v)$ uniformly at random;
    \item for each $v\in A$, sample $\mathsf{par}_T(v)$ from $\mathsf{par}_G(v)$ uniformly at random;
    \item for each $v\in\overline{V}\setminus\Gamma_{\leq d^*-1}\setminus A$, sample $\mathsf{par}_T(v)$ uniformly at random from the set of vertices $u\in\mathsf{N}_G(v)$ such that $\mathsf{d}_G(v, A)=\mathsf{d}_G(u,A)+1$.
\end{enumerate}
In other words, $\mu_{G,\infty}|_A$ is a product measure that always samples a minimum energy configuration, conditioned on $\varphi(T)=A$. To state the formal result, we define
\begin{equation}\label{eqn:ml-def}
    m_{\ell}:=m_0\vee\left\lceil\frac{n}{3(\log\log n)^2}\right\rceil
\end{equation}
where
\begin{equation}\label{eqn:m0-def}
    m_0:=\min_{A\in\mathcal{K}(G)}|A|=\#\{\text{connected components of $\overline{G}\setminus\Gamma_{\leq d^*-1}$}\}
\end{equation}
is the minimum size of a kernel.

\begin{proposition}\label{prop:fixed-kernel}
    Let $\beta>0$ be a constant. Then with probability at least $1-o(1)$, we have
    \[
        \KL(\mu_{G,\infty}|_A,\mu_{G,\beta}|_A)=o(n)
    \]
    and
    \begin{equation}\label{eqn:gibbs-cond-zeta}
        \log (Z_{G,\beta}|_A)=\sum_{v\in A}\log|\mathsf{par}_G(v)|+(n-|A|)\log(|A|q)+\overline{\beta}|A|-\overline{\beta}(d^*+1)n+o(n)
    \end{equation}
    simultaneously for all $A\in\mathcal{K}(G)$ with size at least
    \begin{equation}\label{eqn:gibbs-ker-lb}
        |A|\geq m_\ell\,.
    \end{equation}
\end{proposition}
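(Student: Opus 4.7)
The plan is to establish matching upper and lower bounds for $\log Z_{G,\beta}|_A$ of the form \eqref{eqn:gibbs-cond-zeta}, from which the KL estimate will follow automatically from the conditional specialization of \eqref{eqn:kl-variational},
\[
\KL(\mu_{G,\infty}|_A,\mu_{G,\beta}|_A) = \log Z_{G,\beta}|_A - H(\mu_{G,\infty}|_A) + \overline{\beta} E_A,
\]
with $E_A := \min\{E(T):\varphi(T)=A\}$ (the energy on which $\mu_{G,\infty}|_A$ is concentrated). The lower bound is immediate from the Gibbs variational principle (Proposition~\ref{prop:gibbs-variational} for the conditional measure) applied at $\nu=\mu_{G,\infty}|_A$, giving $\log Z_{G,\beta}|_A \geq H(\mu_{G,\infty}|_A)-\overline{\beta}E_A$. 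Since $\mu_{G,\infty}|_A$ factorizes as a vertex-wise product measure, both quantities decompose explicitly: $H(\mu_{G,\infty}|_A) = \sum_{v \in \Gamma_{\leq d^*-1}\cup A}\log|\mathsf{par}_G(v)| + \sum_{v\notin \Gamma_{\leq d^*-1}\cup A}\log|\{u\in\mathsf{N}_G(v) : \mathsf{d}_G(u,A)=\mathsf{d}_G(v,A)-1\}|$, and $E_A=\sum_{v\in\Gamma_{\leq d^*-1}}\mathsf{d}_G(1,v)+d^*|A|+\sum_{v\notin\Gamma_{\leq d^*-1}\cup A}(d^*+\mathsf{d}_G(v,A))$.

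The next step is to reduce these expressions to the stated form using the regularity inputs from Section~\ref{sec:gibbs-regularity}. Since $|\Gamma_{\leq d^*-1}| = O(n/(\log\log n)^2)$, the $\Gamma_{\leq d^*-1}$ contribution to entropy is $O(|\Gamma_{\leq d^*-1}|\cdot\log\log n) = o(n)$, and the corresponding energy contribution telescopes against the $(d^*+1)|\Gamma_{\leq d^*-1}|$ subtraction to leave only $\overline{\beta}\cdot O(|\Gamma_{\leq d^*-1}|) = o(n)$. The hypothesis $|A|\geq m_\ell$ forces $|A|q=\Omega(\log n/(\log\log n)^2)\to\infty$, so all but $n(1-q)^{|A|-1} = n e^{-\Omega(\log n/(\log\log n)^2)} = o(n)$ of the vertices outside $\Gamma_{\leq d^*-1}\cup A$ have $\mathsf{d}_G(v,A)=1$; the ``extra'' distance and atypical entropy terms from the exceptions contribute $o(n)$ to $\overline{\beta}E_A$ using $d^* = O(\log n/\log\log n)$. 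After these reductions the lower bound matches \eqref{eqn:gibbs-cond-zeta} provided one controls $\sum_{v\notin\Gamma_{\leq d^*-1}\cup A}\log|\mathsf{N}_G(v)\cap A| = (n-|A|)\log(|A|q) + o(n)$ uniformly in $A$.

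For the matching upper bound, I would combine Gibbs variational with entropy subadditivity $H_\nu(T)\leq\sum_{v\ne 1}H_\nu(\mathsf{par}_T(v))$ to obtain a mean-field bound in terms of per-vertex parent distributions. The kernel constraint forces $\mathsf{par}_T(v)\in\mathsf{par}_G(v)$ for $v\in A$ (tree-distance $\geq d^*$); for $v\in\Gamma_{d^*}\setminus A$ or $v\in\Gamma_{d^*+1}$, the parent lies in $\mathsf{N}_G(v)\cap A$ (tree-distance exactly $d^*+1$) with probability $p_v$ or in the complementary allowed set (tree-distance $\geq d^*+2$). The per-vertex Lagrangian for exiting the minimum-distance branch gains at most $\log(|\mathsf{N}_G(v)\cap(\text{other})|/|\mathsf{N}_G(v)\cap A|) = O(\log((n-|A|)/|A|)) = O(\log\log n)$ in entropy but costs $\overline{\beta}=\beta\log\log n$ in energy; since $\beta>0$ is a fixed constant, the energy cost dominates for large $n$, forcing the optimizer to satisfy $1-p_v=o(1)$. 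A first-order Taylor expansion of the Lagrangian around $p_v=1$ gives per-vertex contribution $\log|\mathsf{N}_G(v)\cap A|-\overline{\beta}(d^*+1)+o(1)$, which sums to the same leading-order expression as the lower bound.

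The principal technical obstacle will be establishing the uniform concentration $\sum_{v\notin\Gamma_{\leq d^*-1}\cup A}\log|\mathsf{N}_G(v)\cap A| = (n-|A|)\log(|A|q) + o(n)$ simultaneously over all $A\in\mathcal{K}(G)$ with $|A|\geq m_\ell$; direct per-$A$ Chernoff only yields tail $e^{-\Omega(\log n/(\log\log n)^2)}$, too weak for a naive union bound over the $2^{\Theta(n)}$ candidate kernels. My plan is to condition on $\Gamma_{\leq d^*-1}$ and on the edges crossing into its complement (which pins down $\Gamma_{d^*}$), under which the remaining edges within $V\setminus\Gamma_{\leq d^*-1}$ are iid $\Bernoulli(q)$ and, crucially, $|\mathsf{N}_G(v)\cap A|$ are independent across $v\notin A$ for every fixed $A\subseteq\Gamma_{d^*}$. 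A level-wise truncation of the per-vertex summands (splitting by how far $|\mathsf{N}_G(v)\cap A|$ deviates from $|A|q$), together with Bernstein-type inequalities at each level --- exploiting that the relevant $Z_v = \log(|\mathsf{N}_G(v)\cap A|/(|A|q))$ are uniformly bounded by $O(\log\log n)$ and have per-term variance $O(1/(|A|q))$ --- should allow one to trade off concentration strength against the union-bound cost, yielding the desired uniform $o(n)$ control via a chaining-style argument over the truncation levels.
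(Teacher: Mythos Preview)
Your overall architecture---lower bound via Gibbs variational at $\nu=\mu_{G,\infty}|_A$, upper bound via entropy subadditivity plus a per-vertex Lagrangian, then deduce the KL bound from the matching bounds---is exactly the paper's. The main substantive gap is in the concentration step, where you have misdiagnosed the difficulty.

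Your claim that ``direct per-$A$ Chernoff only yields tail $e^{-\Omega(\log n/(\log\log n)^2)}$'' is incorrect: that is the tail for a \emph{single term} $|\mathsf{N}_G(v)\cap A|$, not for the sum you actually need. Conditioned on $\Gamma_{\leq d^*-1}$ and $A\subseteq\Gamma_{d^*}$, the variables $|\mathsf{N}_A(v)|$ for $v\notin\Gamma_{\leq d^*-1}\cup A$ are i.i.d.\ $\Binom(|A|,q)$, and $\log(|\mathsf{N}_A(v)|\vee 1)-\log(|A|q\vee 1)$ is subgaussian with norm $O(1/\sqrt{|A|q})$ (Lemma~\ref{lem:logbinv-subg}). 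Subgaussian Hoeffding (Corollary~\ref{cor:tail-sum}) on the sum of $\leq n$ such terms, taking $t=C_2 n/\sqrt{|A|q}$, gives tail probability $2e^{-2n}$. This beats the $2^{N_{d^*}}\leq 2^n$ union bound directly, and the error $t=O(n/\sqrt{|A|q})=o(n)$ for $|A|\geq m_\ell$; see Lemma~\ref{lem:logsum-lb}. No truncation or chaining is needed.

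For the upper bound the paper avoids your Taylor-expansion argument by a cleaner device: after subadditivity one bounds $\mathsf{d}_T(v,A)\geq 1+\mathbf{1}_{\mathsf{par}_T(v)\notin A}$ and applies Gibbs variational \emph{exactly} to each vertex, yielding $\log(|\mathsf{N}_A(v)|+Me^{-\overline\beta})$ (Lemma~\ref{lem:free-core}). To make the corresponding sum concentrate uniformly over all $A$, the paper shifts by $L=Me^{-\overline\beta}\vee\sqrt{\log n}$, forcing the subgaussian norm of $\log(L+|\mathsf{N}_A(v)|)$ to be $O(L^{-1/2})\leq O((\log n)^{-1/4})$ independent of $|A|$ (Lemma~\ref{lem:logsum-ub}); then $\log(L+|A|q)-\log(|A|q)\leq L/(|A|q)=o(1)$ for $|A|\geq m_\ell$. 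This replaces your per-vertex Lagrangian argument, which as written is also slightly loose: you state the entropy gain from choosing a parent outside $A$ as $O(\log\log n)$, but with only that bound, ``$\beta>0$ constant implies energy dominates'' does not follow. The actual gain is $\log(|\mathsf{N}_G(v)|/|\mathsf{N}_A(v)|)\approx \log((\log\log n)^2)=O(\log\log\log n)=o(\log\log n)$, which is what makes any fixed $\beta>0$ sufficient.
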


We postpone the proof of Proposition~\ref{prop:fixed-kernel} to Section~\ref{sec:gibbs-kernel-proof}. Once we understand $\log(Z_{G,\beta}|_A)$, \eqref{eqn:gibbs-partition-kernel} gives
\[
    \begin{split}
        \Phi_{G,\beta} &=\log Z_{G,\beta}\\
        &\leq\max_{A\in\mathcal{K}(G)}\log (Z_{G,\beta}|_A)+\log|\mathcal{K}(G)|\\
        &\leq\max_{A\in\mathcal{K}(G)}\log (Z_{G,\beta}|_A)+O(n)\,.
    \end{split}
\]
We clearly have $\max_{A\in\mathcal{K}(G)}\log(Z_{G,\beta}|_A)\leq\Phi_{G,\beta}$, so this in fact implies
\begin{equation}\label{eqn:gibbs-opt-bigo}
    \Phi_{G,\beta}=\max_{A\in\mathcal{K}(G)}\log (Z_{G,\beta}|_A)+O(n)\,.
\end{equation}
% Now that we better understand $\mu_{G,\beta}|_A$, we go back to maximizing \eqref{eqn:gibbs-decomp}. Since $\mathbf{A}\subseteq\Gamma_{d^*}$ the entropy of $\mathbf{A}$ is at most
% \[
%     H_{\varphi_*(\nu)}(\mathbf{A})\leq n\log 2=O(n)\,.
% \]
% This implies that
% \begin{equation}\label{eqn:gibbs-opt-bigo}
%     \Phi_{G,\beta}=\max_{A\subseteq\mathcal{K}(G)}\log(Z_{G,\beta}|_A)+O(n)\,.
% \end{equation}
Thus, if we know that the maximum is attained where \eqref{eqn:gibbs-ker-lb} holds, then we obtain an approximate value of $\Phi_{G,\beta}$ by optimizing \eqref{eqn:gibbs-cond-zeta} over $A$, and the corresponding measure $\mu_{G,\infty}|_A$ will be close to $\mu_{G,\beta}$ in KL distance. However, this approximation has error up to $O(n)$, due to $\mathcal{K}(G)$ being too large.

\paragraph{Conditioning on the kernel size.} If we aim at a more delicate analysis, we should use a coarser decomposition of $Z_{G,\beta}$. Namely, instead of grouping by the kernel, we group by the kernel size. We define
\[
    \mathcal{K}_m(G):=\{A\in\mathcal{K}(G):|A|=m\}
\]
as the collection of kernels of size $m$. Consider a decomposition of $Z_{G,\beta}$ by
\[
    Z_{G,\beta}=\sum_{m=m_0}^{N_{d^*}}Z_{G,\beta}\|_m
\]
where
\begin{equation}\label{eqn:gibbs-partition-kernel-size}
    Z_{G,\beta}\|_m:=\sum_{A\in\mathcal{K}_m(G)}Z_{G,\beta}|_A\,.
\end{equation}
The corresponding decomposition of a measure $\nu$ can be written as
\[
    \nu=\sum_{m=m_0}^{N_{d^*}}|\varphi|_*(\nu)(m)\cdot\nu\|_m
\]
where
\[
    |\varphi|_*(\nu)(m):=\nu(|\varphi(T)|=m)
\]
and
\[
    \nu\|_m(T):=\nu(T\mid|\varphi(T)|=m)\,.
\]
% Also, analogous to \eqref{eqn:gibbs-decomp}, we have
% \[
%     \zeta_{G,\beta}(\nu)=H_{|\varphi|_*(\nu)}(K)+\E_{|\varphi|_*(\nu)}\left[\zeta_{G,\beta}(\nu\|_K)\right]
% \]
% where $K=|\varphi(T)|$. Here, $K$ only takes at most $N_{d^*}-m_0+1\leq n$ possible values so the entropy is at most $\log n$. Thus, in contrast to \eqref{eqn:gibbs-opt-bigo}, we have
Here, in contrast to \eqref{eqn:gibbs-opt-bigo}, we have
\[
    \begin{split}
        \Phi_{G,\beta}&\leq\max_{m_0\leq m\leq N_{d^*}}\log(Z_{G,\beta}\|_m)+\log(N_{d^*}-m_0+1)\\
        &\leq\max_{m_0\leq m\leq N_{d^*}}\log(Z_{G,\beta}\|_m)+O(\log n)
    \end{split}
\]
so
\begin{equation}\label{eqn:gibbs-via-ks}
    \Phi_{G,\beta}=\max_{m_0\leq m\leq N_{d^*}}\log(Z_{G,\beta}\|_m)+O(\log n)
\end{equation}
which is tight enough. Our goal now is to analyze \eqref{eqn:gibbs-partition-kernel-size}. Plugging in \eqref{eqn:gibbs-cond-zeta} from Proposition~\ref{prop:fixed-kernel}, we get
\[
    \begin{split}
        \log(Z_{G,\beta}\|_m) &= \log\left(\sum_{A\in\mathcal{K}_m(G)}Z_{G,\beta}|_A\right)\\
        &= \log\left(\sum_{A\in\mathcal{K}_m(G)}\prod_{v\in A}|\mathsf{par}_G(v)|\right)+(n-m)\log(mq)+\overline{\beta}m-\overline{\beta}(d^*+1)n+o_p(n)\,.
    \end{split}
\]
Here, we define
\begin{equation}\label{eqn:hg-def}
    \mathcal{H}_G(m):=\log\left(\sum_{A\in\mathcal{K}_m(G)}\prod_{v\in A}|\mathsf{par}_G(v)|\right)
\end{equation}
which is the log partition function of the measure $\tau_{G,m}$ on $\mathcal{K}_m(G)$ defined by
\[
    \tau_{G,m}(A)\propto\prod_{v\in A}|\mathsf{par}_G(A)|\,.
\]
Thus, to approximate the measures $\mu_{G,\beta}\|_m$, we build measures $\hat{\mu}_{G,m}$ by mixing $\mu_{G,\infty}|_A$ with mixture weight $\tau_{G,m}$:
\begin{equation}\label{eqn:hatmu}
    \hat{\mu}_{G,m}:=\sum_{A\in\mathcal{K}_m(G)}\tau_{G,m}(A)\cdot\mu_{G,\infty}|_A\,.
\end{equation}
Then we arrive at the following result.

\begin{proposition}\label{prop:fixed-kernel-size}
    Let $\beta>0$ be a constant. Then with probability at least $1-o(1)$, we have
    \[
        \KL(\hat{\mu}_{G,m},\mu_{G,\beta}\|_m)=o(n)
    \]
    and
    \[
        \log(Z_{G,\beta}\|_m)=\mathcal{H}_G(m)+(n-m)\log(mq)+\overline{\beta}m-\overline{\beta}(d^*+1)n+o(n)
    \]
    simultaneously for all $m_\ell\leq m\leq N_{d^*}$.
\end{proposition}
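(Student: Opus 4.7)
The plan is to reduce everything to Proposition~\ref{prop:fixed-kernel}, which on a single event $E_n$ of probability $1-o(1)$ supplies a \emph{uniform} approximation
\[
\log Z_{G,\beta}|_A = \sum_{v \in A}\log|\mathsf{par}_G(v)| + (n-|A|)\log(|A|q) + \overline{\beta}|A| - \overline{\beta}(d^*+1)n + \varepsilon_n(A),
\]
valid for every $A \in \mathcal{K}(G)$ with $|A| \geq m_\ell$ and obeying $\sup_A|\varepsilon_n(A)| = o(n)$. Working on $E_n$, the first step will be to establish the partition function formula by plugging this into the decomposition $Z_{G,\beta}\|_m = \sum_{A \in \mathcal{K}_m(G)} Z_{G,\beta}|_A$ and factoring out the $A$-independent constants: for each $m_\ell \leq m \leq N_{d^*}$,
\[
Z_{G,\beta}\|_m = e^{(n-m)\log(mq) + \overline{\beta}m - \overline{\beta}(d^*+1)n} \sum_{A \in \mathcal{K}_m(G)} e^{\varepsilon_n(A)}\prod_{v \in A}|\mathsf{par}_G(v)|.
\]
Since $e^{-o(n)} \leq e^{\varepsilon_n(A)} \leq e^{o(n)}$ uniformly, the inner sum equals $e^{\mathcal{H}_G(m) + o(n)}$, and taking logs gives the stated expression for $\log Z_{G,\beta}\|_m$ simultaneously for all admissible $m$.

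Next, for the KL estimate, the plan is to exploit the mixture structure of both $\hat\mu_{G,m}$ and $\mu_{G,\beta}\|_m$. Because $\{T : \varphi(T) = A\}_{A \in \mathcal{K}_m(G)}$ partitions the support of $\mu_{G,\beta}\|_m$ (and contains the support of each component of $\hat\mu_{G,m}$), the standard chain rule for KL of mixtures with disjoint supports yields
\[
\KL\bigl(\hat\mu_{G,m},\mu_{G,\beta}\|_m\bigr) = \KL\bigl(\tau_{G,m},\varphi_*(\mu_{G,\beta}\|_m)\bigr) + \sum_{A \in \mathcal{K}_m(G)} \tau_{G,m}(A)\,\KL\bigl(\mu_{G,\infty}|_A,\mu_{G,\beta}|_A\bigr).
\]
The second term is $o(n)$ directly from Proposition~\ref{prop:fixed-kernel}. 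For the first, using $\varphi_*(\mu_{G,\beta}\|_m)(A) = Z_{G,\beta}|_A / Z_{G,\beta}\|_m$ together with $\tau_{G,m}(A) = \prod_{v \in A}|\mathsf{par}_G(v)|/e^{\mathcal{H}_G(m)}$, the log-likelihood ratio telescopes to
\[
\log\frac{\tau_{G,m}(A)}{\varphi_*(\mu_{G,\beta}\|_m)(A)} = -\varepsilon_n(A) + \bigl[\log Z_{G,\beta}\|_m - \mathcal{H}_G(m) - (n-m)\log(mq) - \overline{\beta}m + \overline{\beta}(d^*+1)n\bigr],
\]
which is $o(n)$ uniformly in $A$ by the partition function formula established in the preceding step. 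Hence the mixture-weight KL is $o(n)$ as well.

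The main potentially delicate point is the \emph{uniformity} of the $o(n)$ errors: across $A$ (needed for the cancellation inside the log-likelihood ratio) and across $m$ (so the conclusion holds simultaneously for all $m_\ell \leq m \leq N_{d^*}$ without a wasteful union bound over the $\Theta(n)$ admissible values of $m$). However, Proposition~\ref{prop:fixed-kernel} is already formulated as a single high-probability event quantifying over all $A$ at once, so both kinds of uniformity are built in and no additional concentration argument is required at this stage. The entire proof is then essentially a bookkeeping exercise on the event $E_n$.
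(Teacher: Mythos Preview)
Your proposal is correct and follows essentially the same approach as the paper: both derive the partition function formula by plugging the uniform estimate from Proposition~\ref{prop:fixed-kernel} into $Z_{G,\beta}\|_m = \sum_{A\in\mathcal{K}_m(G)} Z_{G,\beta}|_A$, and both then bound the KL by exploiting the disjoint mixture structure together with that same uniform estimate. The only cosmetic difference is organizational: the paper writes $\KL(\hat\mu_{G,m},\mu_{G,\beta}\|_m)=\log Z_{G,\beta}\|_m - f(\hat\mu_{G,m})$ via the Gibbs variational identity and then decomposes $f$ of the mixture as $H_{\tau_{G,m}}(\mathbf A)+\E_{\tau_{G,m}}[f(\mu_{G,\infty}|_{\mathbf A})]$, whereas you invoke the KL chain rule for mixtures with disjoint supports---these two decompositions unwind to identical terms, and the final cancellation (your log-likelihood-ratio telescoping versus the paper's $\mathcal{H}_G(m)=H_{\tau_{G,m}}(\mathbf A)+\E_{\tau_{G,m}}[\sum_{v\in\mathbf A}\log|\mathsf{par}_G(v)|]$) is the same identity viewed from two angles.
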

\begin{proof}
    We already proved the second part. For the first part, for the sake of convenience, we define
    \[
        f(\nu):=H_{\nu}(T)-\overline{\beta}\E_{\nu}\left[\sum_v\mathsf{d}_T(1,v)\right]\,.
    \]
    We observe that by Proposition~\ref{prop:fixed-kernel} and the Gibbs variational principle (Proposition~\ref{prop:gibbs-variational})
    \[
        \begin{split}
            \KL(\hat{\mu}_{G,m},\mu_{G,\beta}\|_m) &= \log(Z_{G,\beta}\|_m)-f(\hat{\mu}_{G,m})\\
            &=\log(Z_{G,\beta}\|_m)-\left(H_{\tau_{G,m}}(\mathbf{A})+\E_{\tau_{G,m}}\left[f(\mu_{G,\infty}|_{\mathbf{A}})\right]\right)\\
            &=\log(Z_{G,\beta}\|_m)-\left(H_{\tau_{G,m}}(\mathbf{A})+\E_{\tau_{G,m}}\left[\log(Z_{G,\beta}|_{\mathbf{A}})\right]\right)+o_p(n)\\
            &=\mathcal{H}_G(m)-\left(H_{\tau_{G,m}}(\mathbf{A})+\E_{\tau_{G,m}}\left[\sum_{v\in\mathbf{A}}\log|\mathsf{par}_G(v)|\right]\right)+o_p(n)\\
            &=o_p(n)
        \end{split}
    \]
    as desired.
\end{proof}

\begin{remark}
    The measures $\hat{\mu}_{G,m}$ defined in \eqref{eqn:hatmu} will be our desired set of measures \eqref{eqn:pre-hatmu} that satisfies Theorem~\ref{thm:main-gibbs}. Of course we only have $\hat{\mu}_{G,m}$ for $m\geq m_0$ because otherwise there is no available kernel. For $m<m_0$ we can just set $\hat{\mu}_{G,m}$ arbitrarily since these values of $m$ can never approximately solve \eqref{eqn:gibbs-opt}; this will be proved in Section~\ref{sec:gibbs-1d-opt}. Note that when $m=N_{d^*}$, $\tau_{G,m}$ simply puts all mass to $A=\Gamma_{d^*}$ so $\hat{\mu}_{G,m}=\mu_{G,\infty}$ as we discussed in the remark below Theorem~\ref{thm:logz-formula}.
\end{remark}

\subsubsection{Proof of Proposition~\ref{prop:fixed-kernel}}\label{sec:gibbs-kernel-proof}
\paragraph{An upper bound.} We start by deriving an upper bound of $\log(Z_{G,\beta}|_A)$ which leverages the Gibbs variational principle and a nice property of kernels.

\begin{lemma}\label{lem:free-core}
    Let $M$ be a constant greater than the maximum degree of $G$. Then with probability at least $1-o(1)$, we have
    \[
        \log(Z_{G,\beta}|_A)\leq\sum_{v\in A}\log|\mathsf{par}_G(v)|+\sum_{v\in \overline{V}\setminus\Gamma_{\leq d^*-1}\setminus A}\log(|\mathsf{N}_A(v)|+Me^{-\overline{\beta}})+\overline{\beta}|A|-\overline{\beta}(d^*+1)n+o(n)
    \]
    simultaneously for all $A\in\mathcal{K}(G)$.
\end{lemma}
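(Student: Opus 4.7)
The plan is to invoke the Gibbs variational principle on the constrained partition function
\[
\log(Z_{G,\beta}|_A) = \sup_{\nu} \left[H_{\nu}(T) - \overline{\beta}\,\E_{\nu}\!\Bigl[\sum_{v\in\overline{V}} \mathsf{d}_T(1,v)\Bigr]\right],
\]
where $\nu$ ranges over measures supported on $\{T:\varphi(T)=A\}$, and then bound each side vertex-by-vertex using $H_{\nu}(T)\leq \sum_{v\neq 1}H_{\nu}(\mathsf{par}_T(v))$. First I would record three deterministic facts that hold for any such $T$: (i) for $v\in A$, necessarily $\mathsf{par}_T(v)\in\mathsf{par}_G(v)$ and $\mathsf{d}_T(1,v)\geq d^*$; (ii) for $v\in\Gamma_{\leq d^*-1}$, we can crudely bound $H_{\nu}(\mathsf{par}_T(v))\leq \log M$ and use $\mathsf{d}_T(1,v)\geq \mathsf{d}_G(1,v)$; (iii) for $v\in\overline{V}\setminus\Gamma_{\leq d^*-1}\setminus A$, the parent $\mathsf{par}_T(v)$ cannot lie in $\Gamma_{\leq d^*-1}$ (a parent in $\Gamma_{d^*-1}$ would force $v\in\varphi(T)=A$, while a parent deeper would contradict $v\notin\Gamma_{\leq d^*-1}$), so $\mathsf{par}_T(v)\in\mathsf{N}_A(v)\cup(\mathsf{N}_G(v)\setminus\Gamma_{\leq d^*-1}\setminus A)$. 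Moreover, chasing the definition of the kernel gives $\mathsf{d}_T(1,v)\geq d^*+1+\mathbf{1}[\mathsf{par}_T(v)\notin A]$, using $\mathsf{d}_T\geq \mathsf{d}_G$ twice.

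The heart of the proof is the joint entropy--energy bound for vertices in $\overline{V}\setminus\Gamma_{\leq d^*-1}\setminus A$. Writing $q_v:=\nu(\mathsf{par}_T(v)\in A)$ and decomposing the entropy through the indicator $\mathbf{1}[\mathsf{par}_T(v)\in A]$,
\[
H_{\nu}(\mathsf{par}_T(v))\leq h(q_v)+q_v\log|\mathsf{N}_A(v)|+(1-q_v)\log M,
\]
while $\E_{\nu}[\mathsf{d}_T(1,v)]\geq (d^*+1)+(1-q_v)$. Adding these with the $-\overline{\beta}$ weight produces
\[
H_{\nu}(\mathsf{par}_T(v))-\overline{\beta}\E_{\nu}[\mathsf{d}_T(1,v)]\leq -\overline{\beta}(d^*+1)+\bigl[h(q_v)+q_v\log|\mathsf{N}_A(v)|+(1-q_v)\log(Me^{-\overline{\beta}})\bigr].
\]
The bracketed expression is exactly the Gibbs dual of a two-state partition function $|\mathsf{N}_A(v)|+Me^{-\overline{\beta}}$: for any $a,b>0$ and $p\in[0,1]$, $p\log a+(1-p)\log b+h(p)\leq \log(a+b)$ (the Gibbs variational principle, or equivalently the concavity of entropy combined with $\log$-sum-exp duality). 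Thus this term is $\leq \log(|\mathsf{N}_A(v)|+Me^{-\overline{\beta}})$, supplying the per-vertex bound we want. Summing the three groups, the variational sup is at most
\[
\sum_{v\in A}\bigl[\log|\mathsf{par}_G(v)|-\overline{\beta}d^*\bigr]+\sum_{v\in\overline{V}\setminus\Gamma_{\leq d^*-1}\setminus A}\bigl[\log(|\mathsf{N}_A(v)|+Me^{-\overline{\beta}})-\overline{\beta}(d^*+1)\bigr]+\sum_{v\in\Gamma_{\leq d^*-1}\setminus\{1\}}\bigl[\log M-\overline{\beta}\mathsf{d}_G(1,v)\bigr].
\]

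Finally I would reconcile the energy bookkeeping with the advertised form $\overline{\beta}|A|-\overline{\beta}(d^*+1)n$. A direct computation shows that the gap between the desired expression and the above is
\[
-\overline{\beta}(d^*+1)(n-|\overline{V}|)-(|\Gamma_{\leq d^*-1}|-1)\log M-\overline{\beta}\sum_{k=0}^{d^*-1}(d^*+1-k)N_k,
\]
and the main task is to argue that each term is $o(n)$ with high probability, uniformly in $A$. Using $\liminf\alpha_n>0$, standard giant-component estimates give $n-|\overline{V}|=O(n^{1-c})$ for some $c>0$, so the first term is $O(n^{1-c}\log n)=o(n)$. Since $M\leq$ max degree $=O(\log n)$ and $|\Gamma_{\leq d^*-1}|=O(n/(\log\log n)^2)$ by Proposition~\ref{prop:conc}, the second term is $O(n/\log\log n)=o(n)$. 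For the third, $N_k\lesssim (nq)^k$ from the same proposition, and geometric decay with $(nq)^{d^*-1}=O(n/(\log\log n)^2)$ yields a total of $O(n/(\log\log n)^2)$ before multiplication by $\overline{\beta}=\beta\log\log n$, giving $O(n/\log\log n)=o(n)$. Since this error is deterministic in $A$ once we condition on the high-probability regularity event, the bound holds simultaneously for all $A\in\mathcal{K}(G)$. The expected obstacle is simply this bookkeeping step: the ``target'' energy $(d^*+1)n-|A|$ overestimates the true minimum by the excess $(d^*+1-k)$ at low levels, and one has to verify that $|\Gamma_{\leq d^*-1}|$ is small enough (relative to $n/\log\log n$) for this excess to be absorbed into the $o(n)$ error.
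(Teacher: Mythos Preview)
The proposal is correct and follows essentially the same approach as the paper: apply the Gibbs variational principle with entropy subadditivity, use the kernel structure to get $\mathsf{d}_T(1,v)\ge d^*+1+\mathbf{1}[\mathsf{par}_T(v)\notin A]$ for $v\in\overline{V}\setminus\Gamma_{\leq d^*-1}\setminus A$, combine this with the two-state log-sum-exp inequality to obtain $\log(|\mathsf{N}_A(v)|+Me^{-\overline{\beta}})$, and absorb the low-level contributions via Proposition~\ref{prop:conc}. The only cosmetic difference is in the energy bookkeeping for $\Gamma_{\leq d^*-1}$ (you keep $\mathsf{d}_G(1,v)$ for all such $v$, the paper only singles out $\Gamma_{d^*-1}$), which leads to the same $o(n)$ error.
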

\begin{proof}
    First, let $A\in\mathcal{K}(G)$ be fixed. By the Gibbs variational principle (Proposition~\ref{prop:gibbs-variational}), we have
    \[
        \log(Z_{G,\beta}|_A)=\sup_{\nu}\left[H_\nu(T)-\overline{\beta}\E_{\nu}\left[\sum_{v\in\overline{V}}\mathsf{d}_T(1,v)\right]\right]
    \]
    where $\nu$ ranges over all measures over the spanning trees $T$ of $\overline{G}$ such that $\varphi(T)=A$. Using the sub-additivity of entropy (see, e.g., \cite{dembo1991information}), we write
    \[
        \begin{split}
            H_{\nu}(T)&\leq\sum_{v\in\Gamma_{\leq d^*-1}\setminus\{1\}}H_\nu(\mathsf{par}_T(v))+\sum_{v\in A}H_\nu(\mathsf{par}_T(v))+\sum_{v\in\overline{V}\setminus\Gamma_{\leq d^*-1}\setminus A}H_\nu(\mathsf{par}_T(v))\\
            &\leq |\Gamma_{\leq d^*-1}|\log M+\sum_{v\in A}\log|\mathsf{par}_G(v)|+\sum_{v\in\overline{V}\setminus\Gamma_{\leq d^*-1}\setminus A}H_{\nu}(\mathsf{par}_T(v))
        \end{split}
    \]
    where we have used the fact that since $A$ is the kernel of $T$, the choice of parents for each $v\in A$ is limited to its parents in $G$. For the energy part, we have
    \[
        \begin{split}
            \sum_{v\in\overline{V}}\mathsf{d}_T(1,v) &=\sum_{v\in A}\mathsf{d}_T(1,v)+\sum_{v\in\overline{V}\setminus A}\mathsf{d}_T(1,v)\\
            &\geq\sum_{v\in A}\mathsf{d}_T(1,v)+\sum_{v\in\Gamma_{d^*-1}}\mathsf{d}_T(1,v)+\sum_{v\in\overline{V}\setminus\Gamma_{\leq d^*-1}\setminus A}\mathsf{d}_T(1,v)\\
            &\geq N_{d^*-1}(d^*-1)+|A|d^*+\sum_{v\in\overline{V}\setminus\Gamma_{\leq d^*-1}\setminus A}\mathsf{d}_T(1,v)\,.
        \end{split}
    \]
    A vertex $v\in\overline{V}\setminus\Gamma_{\leq d^*-1}\setminus A$ is not in the kernel, so any path to the root must pass through $A$. This implies that $\mathsf{d}_T(1,v)\geq d^*+\mathsf{d}_T(v,A)$. Thus,
    \[
        \begin{split}
            \sum_{v\in\overline{V}}\mathsf{d}_T(1,v) &\geq N_{d^*-1}(d^*-1)+|A|d^*+\sum_{v\in\overline{V}\setminus\Gamma_{\leq d^*-1}\setminus A}(d^*+\mathsf{d}_T(v, A))\\
            &=|\overline{V}\setminus\Gamma_{\leq d^*-2}|d^*+\sum_{v\in\overline{V}\setminus\Gamma_{\leq d^*-1}\setminus A}\mathsf{d}_T(v,A)-N_{d^*-1}\,.
        \end{split}
    \]
    This gives
    \begin{equation}\label{eqn:gibbs-cond-kernel-ub}
        \begin{split}
            \log(Z_{G,\beta}|_A)&\leq |\Gamma_{\leq d^*-1}|\log M+\sum_{v\in A}\log|\mathsf{par}_G(v)|-\overline{\beta}|\overline{V}\setminus\Gamma_{\leq d^*-2}|d^*+\overline{\beta}N_{d^*-1}\\
            &\quad+\sup_{\nu}\left[\sum_{v\in\overline{V}\setminus\Gamma_{\leq d^*-1}\setminus A}\left(H_{\nu}(\mathsf{par}_T(v))-\overline{\beta}\E_{\nu}\left[\mathsf{d}_T(v,A)\right]\right)\right]\,.
        \end{split}
    \end{equation}
    Thus supremum in this expression is bounded by
    \[
        \sum_{v\in\overline{V}\setminus\Gamma_{\leq d^*-1}\setminus A}\sup_{\nu}\left[H_{\nu}(\mathsf{par}_T(v))-\overline{\beta}\E_{\nu}\left[\mathsf{d}_T(v,A)\right]\right]\,.
    \]
    A trick here is to use a simple bound $\mathsf{d}_T(v,A)\geq1+\mathbf{1}_{\mathsf{par}_T(v)\notin A}$. Again using the Gibbs variational principle, we have
    \[
        \begin{split}
            \sup_{\nu}\left[H_{\nu}(\mathsf{par}_T(v))-\overline{\beta}\E_{\nu}\left[\mathbf{1}_{\mathsf{par}_T(v)\notin A}\right]\right]&\leq\log\left(\sum_{u\in\mathsf{N}_A(v)}e^0+\sum_{u\in\mathsf{N}_G(v)\setminus\mathsf{N}_A(v)}e^{-\overline{\beta}}\right)\\
            &\leq\log(|\mathsf{N}_A(v)|+Me^{-\overline{\beta}})\,.
        \end{split}
    \]
    Plugging into \eqref{eqn:gibbs-cond-kernel-ub}, we get
    \[
        \begin{split}
            \log(Z_{G,\beta}|_A)&\leq |\Gamma_{\leq d^*-1}|\log M+\sum_{v\in A}\log|\mathsf{par}_G(v)|+\sum_{v\in\overline{V}\setminus\Gamma_{\leq d^*-1}\setminus A}\log(|\mathsf{N}_A(v)|+Me^{-\overline{\beta}})+\overline{\beta}|A|\\
            &\quad-\overline{\beta}|\overline{V}\setminus\Gamma_{\leq d^*-2}|(d^*+1)+\overline{\beta}N_{d^*-1}\,.
        \end{split}
    \]
    Now by Proposition~\ref{prop:conc} and the definition on $d^*$, we have
    \[
        |\Gamma_{\leq d^*-1}|=O_p((nq)^{d^*-1})=O_p\left(\frac{n}{(\log\log n)^2}\right)
    \]
    which implies $|\Gamma_{\leq d^*-1}|\log M=o_p(n)$. Similarly, $\overline{\beta}N_{d^*-1}=o_p(n)$. Also, note that
    \begin{equation}\label{eqn:gibbs-res-bound}
        |\overline{V}\setminus\Gamma_{\leq d^*-2}|=n-O_p((nq)^{d^*-2})=n-O_p\left(\frac{n}{\log n(\log\log n)^2}\right)
    \end{equation}
    so that
    \[
        \overline{\beta}|\overline{V}\setminus\Gamma_{\leq d^*-1}|(d^*+1)=\overline{\beta}n(d^*+1)-o_p(n)\,.
    \]
    This gives the desired bound.
\end{proof}

Our goal here is to control the sum
\[
    \sum_{v\in}\log(\mathsf{N}_A(v)+Me^{-\overline{\beta}})\,.
\]
A key observation is that conditioned on $\Gamma_{\leq d^*-1}$ and $A$, we have that $|\mathsf{N}_A(v)|$ for $v\in V\setminus\Gamma_{\leq d^*-1}\setminus A$ are i.i.d. with distribution $\Binom(|A|, q)$. Thus, we hope to have an approximation
\[
    \sum_{v\in \overline{V}\setminus\Gamma_{\leq d^*-1}\setminus A}\log(Me^{-\overline{\beta}}+|\mathsf{N}_A(v)|)\approx |\overline{V}\setminus\Gamma_{\leq d^*-1}\setminus A|\E[\log(Me^{-\overline{\beta}}+\Binom(|A|, q))]
\]
which are accurate up to, hopefully, $o(n)$. We would like to apply Lemma~\ref{lem:logbin-subg} to $\log(Me^{-\overline{\beta}}+|\mathsf{N}_A(v)|)$. An issue is that if $\overline{\beta}$ is very large then $Me^{-\overline{\beta}}$ gets close to zero, which may cause its subgaussian norm to blow up. Fortunately, we know intuitively that the Gibbs measure prefers $A$ to be large enough to support $V\setminus\Gamma_{\leq d^*-1}\setminus A$, i.e., $\log|\mathsf{N}_A(v)|$ is likely to be of order $\Omega(\log\log n)$. Thus, our trick is to replace $Me^{-\overline{\beta}}$ with larger value when it is too small, large enough to achieve strong concentration but strictly smaller in order than $\log n$.

\begin{lemma}\label{lem:logsum-ub}
    Let $L=Me^{-\overline{\beta}}\vee \sqrt{\log n}$. Then with probability at least $1-o(1)$,
    \[
        \sum_{v\in \overline{V}\setminus\Gamma_{\leq d^*-1}\setminus A}\log(Me^{-\overline{\beta}}+|\mathsf{N}_A(v)|)\leq(n-|A|)\log(L+|A|q)+o(n)
    \]
    simultaneously for all $A\in\mathcal{K}(G)$.
\end{lemma}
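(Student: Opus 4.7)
The plan is to reduce the bound to a standard concentration argument for a sum of i.i.d.\ sub-Gaussian random variables, then close with Jensen's inequality. First, since $Me^{-\overline\beta}\le L$, monotonicity of $\log$ lets us replace the left-hand side by $\sum_v \log(L+|\mathsf{N}_A(v)|)$, so it suffices to control this quantity uniformly in $A$.

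For a fixed $A\subseteq\Gamma_{d^*}$, conditional on $\Gamma_{\leq d^*-1}$ and $\Gamma_{d^*}$ the variables $\{|\mathsf{N}_A(v)|\}_{v\in\overline V\setminus\Gamma_{\leq d^*-1}\setminus A}$ are (to leading order) i.i.d.\ $\Binom(|A|,q)$, since the corresponding edges are ``fresh'' in the BFS exploration. By Lemma~\ref{lem:logbin-subg}, each $\log(L+|\mathsf{N}_A(v)|)$ is sub-Gaussian with norm $\sigma=O(L^{-1/2})$; the choice $L\ge\sqrt{\log n}$ then gives $\sigma=O((\log n)^{-1/4})$. A Hoeffding bound for sub-Gaussian sums yields, for any $\epsilon>0$,
\[
 \Pr\Bigl(\sum_v \log(L+|\mathsf{N}_A(v)|) > n_A\, \E[\log(L+X)] + \epsilon n_A\Bigr) \le 2\exp\bigl(-\Omega(\epsilon^2 n_A/\sigma^2)\bigr),
\]
where $X\sim\Binom(|A|,q)$ and $n_A=|\overline V\setminus\Gamma_{\leq d^*-1}\setminus A|$. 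Taking $\epsilon=(\log n)^{-1/5}$ makes $\epsilon^2/\sigma^2=\Omega(\sqrt{\log n})$, so the right-hand side is $\exp(-\Omega(n\sqrt{\log n}))$, which easily dominates a union bound over the at most $2^{N_{d^*}}\le 2^n$ possible $A\subseteq\Gamma_{d^*}$. Combining with Jensen's inequality, $\E[\log(L+X)]\le\log(L+\E X)=\log(L+|A|q)$, and using $n_A\le n-|A|$ together with $\log(L+|A|q)\ge\tfrac12\log\log n>0$, we conclude that with probability $1-o(1)$ simultaneously for all $A\in\mathcal{K}(G)$,
\[
 \sum_v \log(L+|\mathsf{N}_A(v)|) \le (n-|A|)\log(L+|A|q) + \epsilon n = (n-|A|)\log(L+|A|q) + o(n).
\]

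The main technical obstacle will be justifying the ``fresh edges'' step rigorously: the variables $|\mathsf{N}_A(v)|$ are not exactly i.i.d.\ $\Binom(|A|,q)$ once we condition on $v\in\overline V$ or on $v$'s distance from $1$. I would handle this by conditioning on the full BFS history through level $d^*$---this exposes the genuinely fresh edges from $A$ to the remaining vertices---and separately showing that the residual stochastic corrections (arising e.g. from the constraint that each $v\in\Gamma_{d^*+1}$ has at least one neighbor in $\Gamma_{d^*}$) are negligible compared to the $o(n)$ slack already present. A secondary check is that the subgaussian norm in Lemma~\ref{lem:logbin-subg} is indeed $O(L^{-1/2})$ uniformly in $|A|q$: in the regime $|A|q\ll L$, $\log(L+X)\approx\log L + X/L$ has standard deviation $\sqrt{|A|q}/L$, and in the regime $|A|q\gg L$, $\log(L+X)\approx\log X$ has standard deviation $\sim 1/\sqrt{|A|q}$, and both are bounded by $O(L^{-1/2})$.
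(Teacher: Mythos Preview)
Your proposal is essentially correct and follows the same strategy as the paper: control the sub-Gaussian norm of $\log(L+|\mathsf{N}_A(v)|)$ via Lemma~\ref{lem:logbin-subg}, apply a Hoeffding bound strong enough to beat a union bound over all $A\subseteq\Gamma_{d^*}$, and finish with Jensen's inequality. Two remarks: the paper resolves your ``fresh edges'' obstacle in one line by extending the sum from $\overline V$ to $V$ (the summands are nonnegative since $L\ge 1$), which makes the $|\mathsf{N}_A(v)|$ exactly i.i.d.\ $\Binom(|A|,q)$ conditional on $\Gamma_{\le d^*-1}$ with no residual corrections to chase; and your exponent arithmetic has a slip---with $\epsilon=(\log n)^{-1/5}$ and $\sigma^2=O((\log n)^{-1/2})$ you get $\epsilon^2/\sigma^2=\Omega((\log n)^{1/10})$, not $\Omega(\sqrt{\log n})$---though this still suffices for the union bound (the paper instead sets the deviation $t=n/\log\log n$ directly, which also sidesteps any dependence on $n_A$ in the tail bound).
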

\begin{proof}
    We first condition on $\Gamma_{\leq d^*-1}$ and $A$. We first slightly extend the sum to
    \[
        \sum_{v\in \overline{V}\setminus\Gamma_{\leq d^*-1}\setminus A}\log(Me^{-\overline{\beta}}+|\mathsf{N}_A(v)|)\leq\sum_{v\in V\setminus\Gamma_{\leq d^*-1}\setminus A}\log(L+|\mathsf{N}_A(v)|)
    \]
    so that we can say $|\mathsf{N}_A(v)|$ are (conditionally) i.i.d. binomial random variables. Letting $Y=L+|\mathsf{N}_A(v)|$, by Lemma~\ref{lem:logbin-subg} we have
    \[
        \lVert\log(L+|\mathsf{N}_A(v)|)-\log(L+|A|q)\rVert_{\psi_2}^2\leq\frac{6}{L}\leq\frac{6}{\sqrt{\log n}}\,.
    \]
    Hence, by Corollary~\ref{cor:tail-sum}, the probability that the sum $\sum\log(L+|\mathsf{N}_A(v)|)$ is away from its mean by at least $t$ is bounded by
    \[
        2\exp\left(-\frac{t^2\sqrt{\log n}}{432|V\setminus\Gamma_{\leq d^*-1}\setminus A|}\right)\leq2\exp\left(-\frac{t^2\sqrt{\log n}}{432n}\right)\,.
    \]
    By setting $t=n/\log\log n$, we get
    \[
        2\exp\left(-\frac{n\sqrt{\log n}}{432(\log\log n)^2}\right)\,.
    \]
    
    Now we take the union bound over all possible $A$, which gives
    \[
        \begin{split}
            \sum_{v\in V\setminus\Gamma_{\leq d^*-1}\setminus A}\log(L+|\mathsf{N}_A(v)|)&\leq|V\setminus\Gamma_{\leq d^*-1}\setminus A|\E[\log(L+|\mathsf{N}_A(v)|)\mid \Gamma_{\leq d^*-1},A]+\frac{n}{\log\log n}\\
            &\leq(n-|A|)\E[\log(L+|\mathsf{N}_A(v)|)\mid \Gamma_{\leq d^*-1},A]+\frac{n}{\log\log n}
        \end{split}
    \]
    for all $A\subseteq\mathcal{K}(G)$ with probability at least
    \[
        1-2\exp\left(-\frac{n\sqrt{\log n}}{432(\log\log n)^2}+n\right)=1-o(1)\,.
    \]
    Finally, Jensen's inequality gives
    \[
        \E[\log(L+|\mathsf{N}_A(v)|)\mid \Gamma_{\leq d^*-1},A]\leq\log(L+\E[|\mathsf{N}_A(v)|\mid\Gamma_{\leq d^*-1},A])=\log(L+|A|q)
    \]
    completing the proof.
\end{proof}

Combining Lemma~\ref{lem:free-core} with Lemma~\ref{lem:logsum-ub} gives the following result.

\begin{lemma}\label{lem:gibbs-cond-ub}
    Let $L=Me^{-\overline{\beta}}\vee\sqrt{\log n}$. Then with probability at least $1-o(1)$,
    \[
        \log(Z_{G,\beta}|_A)\leq\sum_{v\in A}\log|\mathsf{par}_G(v)|+(n-|A|)\log(L+|A|q)+\overline{\beta}|A|-\overline{\beta}(d^*+1)n+o(n)
    \]
    simultaneously for all $A\in\mathcal{K}(G)$.
\end{lemma}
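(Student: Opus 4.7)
The plan is to obtain this bound by direct composition of the two preceding lemmas; there is no substantive new work to do. First I would invoke Lemma~\ref{lem:free-core}, which on an event $E_1$ of probability $1-o(1)$ yields
\[
    \log(Z_{G,\beta}|_A) \leq \sum_{v\in A}\log|\mathsf{par}_G(v)| + \sum_{v \in \overline{V}\setminus\Gamma_{\leq d^*-1}\setminus A}\log(|\mathsf{N}_A(v)|+Me^{-\overline{\beta}}) + \overline{\beta}|A| - \overline{\beta}(d^*+1)n + o(n)
\]
simultaneously for all $A\in\mathcal{K}(G)$. Next I would apply Lemma~\ref{lem:logsum-ub}, which on another event $E_2$ of probability $1-o(1)$ controls the middle sum above by $(n-|A|)\log(L+|A|q)+o(n)$, again uniformly in $A\in\mathcal{K}(G)$.

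On the intersection $E_1\cap E_2$, which still has probability $1-o(1)$ by the union bound, substituting the conclusion of Lemma~\ref{lem:logsum-ub} into that of Lemma~\ref{lem:free-core} gives precisely the inequality claimed in Lemma~\ref{lem:gibbs-cond-ub}, after absorbing the two $o(n)$ error terms into one. The only thing worth checking is that both supporting lemmas already produce a \emph{single} high-probability event on which their bounds hold for every kernel $A$ simultaneously, so no additional union bound over $\mathcal{K}(G)$ is required at this stage. Consequently there is no real obstacle: the lemma simply records, in a form convenient for the later one-dimensional optimization over the kernel size $m=|A|$, the consolidated upper bound obtained by chaining the energy/entropy decomposition of Lemma~\ref{lem:free-core} with the sub-Gaussian concentration of Lemma~\ref{lem:logsum-ub}.
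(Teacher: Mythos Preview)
Your proposal is correct and matches the paper's own proof exactly: the paper states Lemma~\ref{lem:gibbs-cond-ub} immediately after Lemma~\ref{lem:logsum-ub} with the one-line remark ``Combining Lemma~\ref{lem:free-core} with Lemma~\ref{lem:logsum-ub} gives the following result.'' Your observation that both supporting lemmas already hold uniformly over $\mathcal{K}(G)$ on a single high-probability event is the only point worth noting, and you handled it correctly.
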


This also immediately yields a bound for $Z_{G,\beta}\|_m$ by following an argument similar to the proof of Proposition~\ref{prop:fixed-kernel-size}. This is not needed for the proof of Proposition~\ref{prop:fixed-kernel} but is useful for our analysis in Section~\ref{sec:gibbs-w1}.

\begin{lemma}\label{lem:gibbs-cond-size-ub}
    Let $L=Me^{-\overline{\beta}}\vee\sqrt{\log n}$. Then with probability at least $1-o(1)$,
    \[
        \log(Z_{G,\beta}\|_m)\leq\mathcal{H}_G(m)+(n-m)\log(L+mq)+\overline{\beta}m-\overline{\beta}(d^*+1)n+o(n)
    \]
    simultaneously for all $m_0\leq m\leq N_{d^*}$.
\end{lemma}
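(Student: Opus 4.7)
The plan is to derive this lemma as a direct corollary of Lemma~\ref{lem:gibbs-cond-ub} by summing over all kernels of size exactly $m$ and then taking logarithms. Since Lemma~\ref{lem:gibbs-cond-ub} already holds uniformly in $A \in \mathcal{K}(G)$ on a single high-probability event, the desired uniformity over $m_0 \le m \le N_{d^*}$ will be automatic.

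Concretely, I would start from the partition \eqref{eqn:gibbs-partition-kernel-size}, namely $Z_{G,\beta}\|_m = \sum_{A \in \mathcal{K}_m(G)} Z_{G,\beta}|_A$. On the event from Lemma~\ref{lem:gibbs-cond-ub}, every term satisfies
\[
    Z_{G,\beta}|_A \le \exp\!\left(\sum_{v \in A} \log |\mathsf{par}_G(v)| + (n-m)\log(L+mq) + \overline{\beta} m - \overline{\beta}(d^*+1)n + o(n)\right),
\]
because $|A| = m$ for every $A \in \mathcal{K}_m(G)$. Crucially, all terms except $\sum_{v \in A} \log |\mathsf{par}_G(v)|$ depend only on $m$, so they factor out of the sum over $A$. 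This yields
\[
    Z_{G,\beta}\|_m \le \exp\!\left((n-m)\log(L+mq) + \overline{\beta} m - \overline{\beta}(d^*+1)n + o(n)\right) \cdot \sum_{A \in \mathcal{K}_m(G)} \prod_{v \in A} |\mathsf{par}_G(v)|.
\]
Taking logarithms and recognizing the remaining sum as $\exp(\mathcal{H}_G(m))$ by the definition \eqref{eqn:hg-def} gives the claimed bound.

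The only mild subtlety is ensuring that the additive $o(n)$ error is uniform in both $A$ and $m$. This is free: Lemma~\ref{lem:gibbs-cond-ub} provides a single deterministic sequence $\epsilon_n = o(n)$ that bounds the error for every $A \in \mathcal{K}(G)$ on an event of probability $1-o(1)$, and taking the maximum over $A \in \mathcal{K}_m(G)$ (or summing, which only contributes $O(\log |\mathcal{K}_m(G)|) = O(n) = o(n \cdot \log\log n)$ — though we do not even need this, since the exponential factor is the same for all such $A$) preserves this uniform control over all $m$. There is no genuine obstacle; the lemma is essentially a bookkeeping consequence of grouping kernels by size, and the real work has already been done in Lemma~\ref{lem:gibbs-cond-ub} via the Gibbs variational principle and the subgaussian concentration from Lemma~\ref{lem:logsum-ub}.
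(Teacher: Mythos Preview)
Your proposal is correct and follows essentially the same approach as the paper, which explicitly states that this lemma ``immediately yields a bound for $Z_{G,\beta}\|_m$ by following an argument similar to the proof of Proposition~\ref{prop:fixed-kernel-size}.'' Your derivation---summing the uniform bound from Lemma~\ref{lem:gibbs-cond-ub} over $A\in\mathcal{K}_m(G)$, factoring out the $m$-dependent terms, and identifying the residual sum as $\exp(\mathcal{H}_G(m))$---is precisely that argument.
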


\paragraph{A lower bound.} By the Gibbs variational principle, $\log(Z_{G,\beta}|_A)$ is lower bounded by
\begin{equation}\label{eqn:gibbs-pre-lb}
    \log(Z_{G,\beta}|_A)\geq H_{\mu_{G,\infty}|_A}(T)-\overline{\beta}\E_{\mu_{G,\infty}|_A}\left[\sum_{v}\mathsf{d}_T(1,v)\right]\,.
\end{equation}
Since $\mu_{G,\infty}|_A$ is a product measure, we have
\begin{equation}\label{eqn:gibbs-pre-entropy-lb}
    \begin{split}
        H_{\mu_{G,\infty}|_A}(T) &= \sum_{v\in\overline{V}\setminus\{1\}}H(\mathsf{par}_T(v))\\
        &\geq\sum_{v\in A}\log|\mathsf{par}_G(v)|+\sum_{v\in V\setminus\Gamma_{\leq d^*-1}\setminus A}\log(|\mathsf{N}_A(v)|\vee1)
    \end{split}
\end{equation}
The first thing we need is the concentration of
\begin{equation}\label{eqn:lbctrl1}
    \sum_{v\in V\setminus\Gamma_{d^*-1}\setminus A}\log(|\mathsf{N}_A(v)|\vee1)\,.
\end{equation}
simultaneously for all $A\subseteq\Gamma_{d^*}$.

\begin{lemma}\label{lem:logsum-lb}
    There is a universal constant $C>0$ such that with probability at least $1-o(1)$,
    \[
        \sum_{v\in V\setminus\Gamma_{\leq d^*-1}\setminus A}\log(|\mathsf{N}_A(v)|\vee1)\geq|V\setminus\Gamma_{\leq d^*-1}\setminus A|\log(|A|q)-\frac{Cn}{\sqrt{|A|q}}
    \]
    simultaneously for all $A\subseteq\Gamma_{d^*}$.
\end{lemma}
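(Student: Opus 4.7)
The plan is to condition on $\Gamma_{\le d^*-1}$; then for any fixed $A\subseteq\Gamma_{d^*}$, the random variables $Y_v := |\mathsf{N}_A(v)|$ for $v\in V\setminus\Gamma_{\le d^*-1}\setminus A$ are i.i.d.\ $\Binom(|A|, q)$, and I will write $\mu := |A|q$. For $\mu$ below some absolute constant, the claimed bound holds trivially: the left-hand side is a sum of nonnegative terms while the right-hand side is at most $n\log\mu - Cn/\sqrt\mu$, which is nonpositive once $C$ is taken large enough. Hence I restrict to the regime $\mu > c_0$ for a suitable absolute constant $c_0$.

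I first handle the mean. Applying $\log(1+u)\ge u-u^2$ for $u\ge -1/2$ with $u=(Y-\mu)/\mu$ on the bulk event $\{Y\ge\mu/2\}$, together with $\mathrm{Var}(Y)\le\mu$ and the Chernoff bound $\Pr(Y<\mu/2)\le e^{-\mu/8}$, and recalling that $\log(Y\vee 1)=0$ when $Y=0$ (probability $\le e^{-\mu}$), I get $\E[\log(Y\vee 1)]\ge \log\mu - C_1/\mu$ for some absolute constant $C_1$. Since $1/\mu\le 1/\sqrt{\mu}$ for $\mu\ge 1$, this expectation bias is absorbed into the error budget of the claim.

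The heart of the argument is a concentration estimate for $\sum_v\log(Y_v\vee 1)$ uniform over $A$. The plan is to relate $\log(Y_v\vee 1)$ to $\log(\mu+Y_v)$, for which Lemma~\ref{lem:logbin-subg} applied with offset $L=\mu$ yields $\lVert\log(\mu+Y_v)-\log(2\mu)\rVert_{\psi_2}^2\le 6/\mu$. Using
\[
\log(Y_v\vee 1) = \log(\mu+Y_v) - \log\bigl((\mu+Y_v)/(Y_v\vee 1)\bigr),
\]
the correction term is at most $\log 2$ on the bulk $\{Y_v\ge\mu/2\}$ and at most $\log(\mu+1)$ on its complement, so a separate Chernoff bound on $\#\{v:Y_v<\mu/2\}$ (which is $\Binom(n,p)$ with $p\le e^{-\mu/8}$) shows that the total correction is $n\log 2 + o(n/\sqrt{\mu})$ with high probability. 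Applying Corollary~\ref{cor:tail-sum} to $\sum_v\log(\mu+Y_v)$ gives $\Pr(\sum_v\log(\mu+Y_v) - |V\setminus\Gamma_{\le d^*-1}\setminus A|\log(2\mu)<-t)\le 2\exp(-ct^2\mu/n)$; choosing $t=C_2 n/\sqrt{\mu}$ for sufficiently large $C_2$ makes the failure probability at most $\exp(-c'n)$ with $c'>\log 2$, enough to absorb a union bound over the at most $2^{N_{d^*}}\le 2^n$ subsets $A\subseteq\Gamma_{d^*}$. Combining the mean bound, the concentration of $\log(\mu+Y_v)$, and the control on the correction term gives the claim after noting $n\log(2\mu) - n\log 2 = n\log\mu$.

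The main obstacle is ensuring the sub-Gaussian concentration scales correctly with $\mu$. Naively applying Lemma~\ref{lem:logbin-subg} with offset $L=1$ yields only a constant $\psi_2$-norm and therefore fluctuations of order $n$, which is insufficient; choosing $L=\mu$ fixes the scale but forces one to carefully handle the discrepancy between $\log(\mu+Y_v)$ and $\log(Y_v\vee 1)$, which is $O(1)$ per vertex on the bulk but can be as large as $\log\mu$ on the rare event $\{Y_v<\mu/2\}$. Controlling the aggregate of these rare contributions --- via a binomial Chernoff bound on the number of atypical vertices --- uniformly in $A$ is the most delicate step, since it must be combined with the subgaussian deviation without inflating the failure probability beyond what the $2^{N_{d^*}}$ union bound can absorb.
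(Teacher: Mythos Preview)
Your detour via $\log(\mu+Y_v)$ contains a genuine gap in the correction step. You claim the correction $\log\bigl((\mu+Y_v)/(Y_v\vee 1)\bigr)$ is at most $\log 2$ on the bulk $\{Y_v\ge\mu/2\}$, but that is false: at $Y_v=\mu/2$ the correction equals $\log 3$, and on the bulk it ranges over $(0,\log 3]$. More importantly, even with the correct pointwise bound $\log 3$, your argument only yields
\[
\sum_v\log(Y_v\vee 1)\;\ge\;|V\setminus\Gamma_{\le d^*-1}\setminus A|\log(2\mu)-\frac{Cn}{\sqrt\mu}-n\log 3-\text{(rare)}
\;=\;|V\setminus\Gamma_{\le d^*-1}\setminus A|\log\mu-\Theta(n)-\frac{Cn}{\sqrt\mu},
\]
with a $\Theta(n)$ loss coming from $n\log 2-n\log 3$, which is \emph{not} absorbed by $Cn/\sqrt\mu$ for large $\mu$. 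To get the claimed ``$n\log 2 + o(n/\sqrt\mu)$'' for the correction you would need its \emph{sum} to concentrate around its mean with fluctuation $O(n/\sqrt\mu)$, i.e., each $R_v:=\log((\mu+Y_v)/(Y_v\vee 1))$ to have $\psi_2$-norm $O(1/\sqrt\mu)$. But $R_v=[\log(\mu+Y_v)-\log(2\mu)]-[\log(Y_v\vee 1)-\log\mu]$, and controlling the second bracket is precisely the estimate you are trying to prove, so the argument is circular.

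The paper bypasses this entirely by invoking Lemma~\ref{lem:logbinv-subg}, which \emph{directly} gives $\lVert\log(Y\vee 1)-\log(\mu\vee 1)\rVert_{\psi_2}\le C_1/\sqrt{\mu}$ via separate upper- and lower-tail Chernoff estimates (the lower tail uses the secant inequality $1-e^{-t}\ge\frac{\mu-1}{\mu\log\mu}\,t$ on $[0,\log\mu]$). With that subgaussian bound in hand, one applies Corollary~\ref{cor:tail-sum} at $t=C_2 n/\sqrt{\mu}$ to get failure probability $\le 2e^{-2n}$, union-bounds over the at most $2^{N_{d^*}}$ choices of $A$, and finally uses Lemma~\ref{lem:subg-mean} to pass from the mean of $\log(Y\vee 1)$ to $\log(\mu\vee 1)$. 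No splitting into bulk and rare events, and no auxiliary offset, is needed.
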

\begin{proof}
    Condition on $\Gamma_{d^*-1}$ and $A$. By Lemma~\ref{lem:logbinv-subg} we have
    \[
        \lVert\log(|\mathsf{N}_A(v)|\vee1)-\log(|A|q\vee1)\rVert_{\psi_2}\leq\frac{C_1}{\sqrt{|A|q}}
    \]
    for some constant $C_1>0$. By Corollary~\ref{cor:tail-sum} the probability that the sum $\sum\log(|\mathsf{N}_A(v)\vee1)$ is away from its mean by at least $t>0$ is bounded by
    \[
        2\exp\left(-\frac{2t^2|A|q}{C_2^2|V\setminus\Gamma_{\leq d^*-1}\setminus A|}\right)\leq2\exp\left(-\frac{2t^2|A|q}{C_2^2n}\right)
    \]
    for some constant $C_2$. By setting $t=C_2n/\sqrt{|A|q}$ the probability above is bounded by $2e^{-2n}$. Taking the union bound over all $A\subseteq\Gamma_{d^*}$ we get
    \[
        \sum_{v\in V\setminus\Gamma_{d^*-1}\setminus A}\log(|\mathsf{N}_A(v)|\vee1)\geq|V\setminus\Gamma_{d^*-1}\setminus A|\E[\log(|\mathsf{par}_A(v)|\vee1)\mid\Gamma_{\leq d^*-1},A]-\frac{C_3n}{\sqrt{|A|q}}
    \]
    simultaneously for all $A$, with probability at least $1-o(1)$. Now using Lemma~\ref{lem:subg-mean} we have
    \[
        \begin{split}
            \sum_{v\in V\setminus\Gamma_{d^*-1}\setminus A}\log(|\mathsf{N}_A(v)|\vee1)&\geq|V\setminus\Gamma_{d^*-1}\setminus A|\left(\log(|A|q\vee1)-\frac{C_1}{\sqrt{|A|q}}\right)-\frac{C_3n}{\sqrt{|A|q}}\\
            &\geq|V\setminus\Gamma_{d^*-1}\setminus A|\log(|A|q)-\frac{Cn}{\sqrt{|A|q}}
        \end{split}
    \]
    for some constant $C>0$, which is as desired.
\end{proof}

Applying Lemma~\ref{lem:logsum-lb} to \eqref{eqn:gibbs-pre-entropy-lb}, we get
\[
    H_{\mu_{G,\infty}|_A}(T)\geq\sum_{v\in A}|\mathsf{par}_G(v)|+|V\setminus\Gamma_{\leq d^*-1}\setminus A|\log(|A|q)-\frac{Cn}{\sqrt{|A|q}}\,.
\]
By Proposition~\ref{prop:conc} we have $|\Gamma_{\leq d^*-1}|=O_p((nq)^{d^*-1})=O_p(n/(\log\log n)^2)$ and clearly $\log(|A|q)=O(\log\log n)$, so
\begin{equation}\label{eqn:gibbs-entropy-lb}
    H_{\mu_{G,\infty}|_A}(T)\geq\sum_{v\in A}|\mathsf{par}_G(v)|+(n-|A|)\log(|A|q)-\frac{Cn}{\sqrt{|A|q}}-O_p\left(\frac{n}{\log\log n}\right)
\end{equation}
simultaneously for all $A\in\mathcal{K}(G)$.

Now we bound the energy term of \eqref{eqn:gibbs-pre-lb}. We have
\begin{equation}\label{eqn:energy-ub}
    \begin{split}
        \sum_v \mathsf{d}_T(1,v) &\leq d^*|A|+(d^*+1)(n-|A|)+\sum_{v\in\overline{V}\setminus\Gamma_{\leq d^*+1}^T}\mathsf{d}_G(v,A)\\
        &\leq d^*|A|+(d^*+1)(n-|A|)+|V\setminus\Gamma_{\leq d^*+1}^T|+|V\setminus\Gamma_{\leq d^*+2}^T|\operatorname{diam}(\overline{G}\setminus\Gamma_{\leq d^*-1})
    \end{split}
\end{equation}
where $\operatorname{diam}(\overline{G}\setminus\Gamma_{\leq d^*-1})$ is the graph diameter of $\overline{G}\setminus\Gamma_{\leq d^*-1}$. Since $G\setminus \Gamma_{\leq d^*-1}$ is an Erd\"os--R\'enyi graph conditioned on $\Gamma_{\leq d^*-1}$, Theorem~\ref{thm:diameter} implies that
\begin{equation}\label{eqn:dbound}
    \operatorname{diam}(\overline{G}\setminus\Gamma_{\leq d^*-1})=O\left(\frac{\log n}{\log\log n}\right)
\end{equation}
asymptotically almost surely. Now we control $|V\setminus\Gamma_{\leq d^*+1}^T|$ and $|V\setminus\Gamma_{\leq d^*+2}^T|$ given that $|A|$ is reasonably large. Note that by the definition of $\mu_{G,\infty}|_A$, $\Gamma_{\leq d^*+1}^T$ and $\Gamma_{\leq d^*+2}^T$ are fixed for any $T\in\operatorname{supp}(\mu_{G,\infty}|_A)$ as long as $A$ is fixed. As such, we abuse notation and write $\Gamma_{\leq d^*+1}^A$ and $\Gamma_{\leq d^*+2}^A$ to emphasize that they are functions of $A$ (and $G$). Also, recall the definition of $m_\ell$ in \eqref{eqn:ml-def}.

\begin{lemma}\label{lem:gamma-a}
    With probability at least $1-o(1)$, we have
    \[
        |V\setminus\Gamma_{\leq d^*+1}^A|= o\left(\frac{n}{\log\log n}\right)
    \]
    and 
    \[
        |V\setminus\Gamma_{\leq d^*+2}^A|=o\left(\frac{n}{\log n}\right)
    \]
    simultaneously for all $A\in\mathcal{K}(G)$ with $|A|\geq m_\ell$.
\end{lemma}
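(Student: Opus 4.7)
The plan is a monotonicity reduction to a fixed-size problem, followed by Chernoff-type concentration with a union bound. Write $G':=\overline{G}\setminus\Gamma_{\leq d^*-1}$ and let $c_0:=\liminf_{n\to\infty}\alpha_n>0$. The set $\Gamma_{\leq d^*+k}^A$ consists of $\Gamma_{\leq d^*-1}$ together with the vertices at $G'$-distance at most $k$ from $A$; in particular it is set-inclusion monotone non-decreasing in $A$. Hence any $A\in\mathcal{K}(G)$ with $|A|\ge m_\ell$ contains a subset $A'\subseteq\Gamma_{d^*}$ with $|A'|=m_\ell$ and $|V\setminus\Gamma_{\leq d^*+k}^A|\le|V\setminus\Gamma_{\leq d^*+k}^{A'}|$, so it suffices to prove both bounds uniformly over all (unconstrained) $m_\ell$-element subsets $A'\subseteq\Gamma_{d^*}$. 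The non-giant piece $|V\setminus\overline{V}|$ contributes $O_p(n^{1-c_0+o(1)})=o_p(n/\log n)$ by the standard bound on isolated vertices and small components, and is absorbed into both error terms. With high probability $m_0=O_p(n^{1-c_0+o(1)})\ll n/(\log\log n)^2$, so $m_\ell=\lceil n/(3(\log\log n)^2)\rceil$.

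For the first bound, condition on $\Gamma_{\leq d^*}$. For fixed $A'\subseteq\Gamma_{d^*}$ with $|A'|=m_\ell$, the edges between $V\setminus\Gamma_{\leq d^*-1}\setminus A'$ and $A'$ are unexposed i.i.d.\ $\Bernoulli(q)$, so
\[
X_{A'}:=\bigl|\{v\in V\setminus\Gamma_{\leq d^*-1}\setminus A' : v\not\sim A'\text{ in }G\}\bigr|\sim\Binom(N,\epsilon)
\]
for some $N\le n$ and $\epsilon=(1-q)^{m_\ell}\le\exp(-c_0\log n/(3(\log\log n)^2))$. Since $qm_\ell\gg\log\log n$, we have $n\epsilon\ll n/\log\log n$, and Chernoff at $t=\delta n/\log\log n$ (any constant $\delta>0$) yields
\[
\Pr\bigl[X_{A'}\ge t\mid\Gamma_{\leq d^*}\bigr]\le\left(\frac{en\epsilon\log\log n}{\delta n}\right)^{\delta n/\log\log n}\le\exp\bigl(-\tfrac{1}{2}\delta n/\log\log n\bigr)
\]
for large $n$. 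The critical quantitative step is that the union bound over $\binom{N_{d^*}}{m_\ell}\le\binom{n}{m_\ell}$ candidate subsets succeeds, since
\[
\log\binom{n}{m_\ell}\le m_\ell\log(en/m_\ell)=O\bigl(n\log\log\log n/(\log\log n)^2\bigr)=o(n/\log\log n).
\]

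For the second bound, use a two-step exposure. Conditional on the first bound, $|B_{A'}|:=|A'\cup\mathsf{N}_{G'}(A')|\ge n-o(n/\log\log n)$, and since $|A'|=m_\ell=o(n)$ we get $|\mathsf{N}_{G'}(A')|\ge n/2$ for large $n$. Now reveal only the edges of $G'$ incident to $A'$: this determines $\mathsf{N}_{G'}(A')$, while leaving the edges from $v\in V\setminus B_{A'}$ to $\mathsf{N}_{G'}(A')$ unexposed and i.i.d.\ $\Bernoulli(q)$. A vertex $v\in V\setminus B_{A'}$ has $\mathsf{d}_{G'}(v,A')\ge 3$ iff it has no such neighbor, which occurs with probability $(1-q)^{|\mathsf{N}_{G'}(A')|}\le n^{-c_0(1-o(1))}$. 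The resulting binomial has mean at most $n\cdot n^{-c_0+o(1)}=o(n/\log n)$, and Chernoff at $t=\delta'n/\log n$ gives a per-subset tail of $\exp(-\Omega(n))$---the extra factor of $\log n$ in the exponent comes from the base $n^{-c_0+o(1)}$---which comfortably beats the $\exp(o(n/\log\log n))$ union bound. The main obstacle is the delicate three-way calibration between the kernel-size scale $m_\ell=\Theta(n/(\log\log n)^2)$, the per-subset tail strength, and $\log\binom{n}{m_\ell}$; the choice of $m_\ell$ in~\eqref{eqn:ml-def} is precisely tuned so that this alignment works out.
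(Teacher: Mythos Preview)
Your proposal is correct and follows essentially the same approach as the paper's proof: both use the monotonicity of $\Gamma_{\leq d^*+k}^A$ in $A$ to reduce to fixed-size subsets $|A|=\lceil n/(3(\log\log n)^2)\rceil$, then apply a Chernoff bound per subset and union-bound using $\log\binom{n}{m_\ell}=O(n\log\log\log n/(\log\log n)^2)=o(n/\log\log n)$, and for the second inequality both condition on the first bound to get $|\mathsf{N}_{G'}(A)|\ge n/2$ and repeat. Your write-up is slightly more explicit than the paper about the monotonicity reduction, the contribution of $V\setminus\overline V$, and especially the two-step edge exposure for the second bound; the paper targets the sharper threshold $n/(\log\log n)^2$ in the first part (yielding a stronger per-subset tail of order $\exp(-\Theta(n\log n/(\log\log n)^4))$) but the difference is immaterial for the stated conclusion.
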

\begin{proof}
    It suffices to prove that with high probability the bound simultaneously holds for all $A$ of the same size $|A|=\lceil\frac{n}{3(\log\log n)^2}\rceil$. Conditioned on $\Gamma_{d^*-1}$ and $A$, we have
    \[
        |V\setminus\Gamma_{\leq d^*+1}^A|=\sum_{v\in V\setminus\Gamma_{\leq d^*-1}\setminus A}\mathbf{1}\{\mathsf{N}_A(v)=\emptyset\}\sim\Binom(|V\setminus\Gamma_{d^*-1}\setminus A|,(1-q)^{|A|})\,.
    \]
    This is stochastically dominated by $\Binom(n, e^{-q|A|})$. For simplicity write $\mu=ne^{-q|A|}$. Using Lemma~\ref{lem:raw-chernoff}, we have
    \[
        \Pr\left(|V\setminus\Gamma_{\leq d^*+1}^A|\geq \mu+t\right) \leq\exp\left(t-(\mu+t)\log\left(1+\frac{t}{\mu}\right)\right)\,.
    \]
    Setting $\mu+t=\frac{n}{(\log\log n)^2}$ which is greater than $\mu$ for large enough $n$,
    \[
        \begin{split}
            (\mu+t)\log\left(1+\frac{t}{\mu}\right) &= \frac{n}{(\log\log n)^2}\log\left(\frac{e^{q|A|}}{(\log\log n)^2}\right)\\
            &=\frac{\alpha_n|A|\log n}{(\log\log n)^2}-\frac{2n\log\log\log n}{(\log\log n)^2}\\
            &\geq\frac{\alpha_nn\log n}{3(\log\log n)^{4}}-o(n)\,.
        \end{split}
    \]
    This gives
    \begin{equation}\label{eqn:d1tail}
        \Pr\left(|V\setminus\Gamma_{\leq d^*+1}^A|\geq \mu+t\right)\leq\exp\left(-\frac{\alpha_nn\log n}{3(\log\log n)^{4}}+o(n)\right)\,.
    \end{equation}
    For $|V\setminus\Gamma_{\leq d^*+2}^A|$, we proceed similarly. Under the good event $|V\setminus\Gamma_{\leq d^*+1}^A|<\mu+t$, $\Gamma_{d^*+1}^A$ has size $n-o(n)$, so we may assume $N_{d^*+1}^A\geq n/2$. Thus, $|V\setminus\Gamma_{\leq d^*+2}^A|$ is stochastically dominated by $\Binom(n, e^{-nq/2})$. Applying the same Chernoff bound, this time with $\mu+t=\frac{n}{(\log n)(\log\log n)}$, it is easy to see
    \begin{equation}\label{eqn:d2tail}
        \Pr\left(|V\setminus\Gamma_{\leq d^*+2}^A|\geq\frac{n}{(\log n)(\log\log n)}\right)\leq\exp\left(-\Omega\left(\frac{n}{\log\log n}\right)\right)\,.
    \end{equation}
    
    Now we take the union bound over all $A$. Note that
    \begin{equation}\label{eqn:nchooseml}
        \begin{split}
            \binom{N_{d^*}}{\lceil\frac{n}{3(\log\log n)^2}\rceil}&\leq\binom{n}{\lceil\frac{n}{3(\log\log n)^2}\rceil}\\
            &\leq\exp\left(nH\left(\frac{1}{(\log\log n)^2}\right)\right)\\
            &\leq\exp\left(\frac{4n\log\log\log n}{(\log\log n)^2}\right)
        \end{split}
    \end{equation}
    where for the last inequality we used the fact that $H(p)\leq-2p\log p$ for all $0<p<1/2$. Hence, taking the union bound gives $o(1)$ bounds for the tail probabilities \eqref{eqn:d1tail} and \eqref{eqn:d2tail}.
\end{proof}

Now we have a bound
\begin{equation}\label{eqn:gibbs-energy-ub}
    \sum_v\mathsf{d}_T(1,v)\leq(d^*+1)n-|A|+o_p\left(\frac{n}{\log\log n}\right)
\end{equation}
for the energy term \eqref{eqn:energy-ub}, simultaneously for all $A$ with $|A|\geq m_\ell$. Combining with \eqref{eqn:gibbs-entropy-lb} yields the following result.

\begin{lemma}\label{lem:gibbs-cond-lb}
    With probability at least $1-o(1)$, we have
    \[
        \log(Z_{G,\beta}|_A)\geq\sum_{v\in A}\log|\mathsf{par}_G(v)|+(n-|A|)\log(|A|q)+\overline{\beta}|A|-\overline{\beta}(d^*+1)n-o(n)
    \]
    simultaneously for all $A\in\mathcal{K}(G)$ with $|A|\geq m_{\ell}$.
\end{lemma}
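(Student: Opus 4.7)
The plan is to assemble the three ingredients that have already been developed in the preceding paragraphs into a uniform lower bound. The starting point is the Gibbs variational principle specialized to the sub-model with a fixed kernel, as recorded in \eqref{eqn:gibbs-pre-lb}, using the product measure $\mu_{G,\infty}|_A$ as a test measure. This is a valid choice because $\mu_{G,\infty}|_A$ is supported on spanning trees whose kernel is exactly $A$, so plugging it into the Gibbs variational principle for $\log(Z_{G,\beta}|_A)$ yields
\[
\log(Z_{G,\beta}|_A) \;\ge\; H_{\mu_{G,\infty}|_A}(T)\;-\;\overline{\beta}\,\E_{\mu_{G,\infty}|_A}\!\left[\sum_v \mathsf{d}_T(1,v)\right].
\]
The entire task is then to bound the entropy from below and the average energy from above, uniformly in $A$.

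For the entropy side, I would use the factorization of $\mu_{G,\infty}|_A$ as a vertex-wise product measure exactly as in \eqref{eqn:gibbs-pre-entropy-lb}, then invoke Lemma~\ref{lem:logsum-lb} to control $\sum_{v \in V\setminus\Gamma_{\le d^*-1}\setminus A}\log(|\mathsf{N}_A(v)|\vee 1)$ simultaneously over all $A \subseteq \Gamma_{d^*}$, which gives \eqref{eqn:gibbs-entropy-lb}. The error in Lemma~\ref{lem:logsum-lb} is $Cn/\sqrt{|A|q}$, and here the constraint $|A| \ge m_\ell$ enters crucially: since $m_\ell \ge n/(3(\log\log n)^2)$, we have $|A|q = \Omega(\log n/(\log\log n)^2) \to \infty$, so $Cn/\sqrt{|A|q} = o(n)$ uniformly. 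The remaining $O_p(n/\log\log n)$ term from $|\Gamma_{\le d^*-1}|\log\log n$ is handled by Proposition~\ref{prop:conc}.

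For the energy side, the plan is to use the fact that under $\mu_{G,\infty}|_A$ the distances $\mathsf{d}_T(1,v)$ are deterministic once $A$ is fixed (which is why the excerpt writes $\Gamma^A_{\le d^*+1}$), and to apply the deterministic bound \eqref{eqn:energy-ub} splitting the vertices by their depth in $T$. The main obstacle is controlling the $o(n)$ slack uniformly over $A$: vertices outside $\Gamma_{\le d^*+1}^A$ contribute at most the diameter of $\overline{G}\setminus\Gamma_{\le d^*-1}$ times their count, so I need both \eqref{eqn:dbound} (diameter $O(\log n/\log\log n)$) and Lemma~\ref{lem:gamma-a} ($|V\setminus\Gamma_{\le d^*+1}^A| = o(n/\log\log n)$ and $|V\setminus\Gamma_{\le d^*+2}^A| = o(n/\log n)$, uniformly for $|A|\ge m_\ell$). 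Multiplying these gives $o_p(n/\log\log n) \cdot 1 + o_p(n/\log n) \cdot O(\log n/\log\log n) = o_p(n/\log\log n)$ extra contribution, leading to \eqref{eqn:gibbs-energy-ub}: $\sum_v \mathsf{d}_T(1,v) \le (d^*+1)n - |A| + o_p(n/\log\log n)$, where the $-|A|$ comes from vertices in $A$ paying depth $d^*$ rather than $d^*+1$.

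Combining the two bounds inside the variational inequality and multiplying the energy bound by $\overline{\beta} = \beta\log\log n$ produces an extra error of $\overline{\beta} \cdot o_p(n/\log\log n) = o_p(n)$, which is absorbed into the final $o(n)$. The lower bound holds simultaneously for all $A\in \mathcal{K}(G)$ with $|A|\ge m_\ell$ because all invoked results (Lemma~\ref{lem:logsum-lb}, Lemma~\ref{lem:gamma-a}, Proposition~\ref{prop:conc}, and Theorem~\ref{thm:diameter}) are established uniformly in $A$ via union bounds calibrated to the entropy count in \eqref{eqn:nchooseml}. The main obstacle I anticipate is nothing conceptual but rather bookkeeping: verifying that the three $o(n)$ slacks coming from the entropy tail ($Cn/\sqrt{|A|q}$), the small-layer count ($|\Gamma_{\le d^*-1}|\log\log n$), and the deep-vertex energy contribution ($\overline{\beta}|V\setminus\Gamma^A_{\le d^*+2}| \cdot \operatorname{diam}$) simultaneously remain $o(n)$ under the single regularity constraint $|A|\ge m_\ell$, which is exactly why $m_\ell$ was defined with the $1/(3(\log\log n)^2)$ floor.
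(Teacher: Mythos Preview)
Your proposal is correct and follows essentially the same approach as the paper: you start from the Gibbs variational lower bound \eqref{eqn:gibbs-pre-lb} with test measure $\mu_{G,\infty}|_A$, then combine the entropy lower bound \eqref{eqn:gibbs-entropy-lb} (via Lemma~\ref{lem:logsum-lb}) with the energy upper bound \eqref{eqn:gibbs-energy-ub} (via \eqref{eqn:energy-ub}, \eqref{eqn:dbound}, and Lemma~\ref{lem:gamma-a}), and check that all error terms are $o(n)$ under $|A|\ge m_\ell$. This is exactly how the paper proceeds in the text immediately preceding the lemma statement.
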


\paragraph{Putting the bounds together.} Lemma~\ref{lem:gibbs-cond-ub} and Lemma~\ref{lem:gibbs-cond-lb} are more than sufficient to prove Proposition~\ref{prop:fixed-kernel}. For $L=Me^{-\overline{\beta}}\vee\sqrt{\log n}$ in Lemma~\ref{lem:gibbs-cond-ub}, we have
\[
    \log(L+|A|q)=\log(|A|q)+\log\left(1+\frac{L}{|A|q}\right)\leq\log(|A|q)+\frac{L}{|A|q}\,.
\]
Since $|A|\geq m_\ell$, we have
\[
    |A|q\geq\frac{nq}{3(\log\log n)^2}=\Omega\left(\frac{\log n}{(\log\log n)^2}\right)
\]
while, since $M=O_p(\log n)$,
\[
    L=O_p((\log n)^{(1-\beta)\vee\frac{1}{2}})\,.
\]
Hence, we have
\[
    \frac{L}{|A|q}=o_p(n)\,.
\]
Now applying \eqref{eqn:kl-variational}, Lemma~\ref{lem:gibbs-cond-ub}, and Lemma~\ref{lem:gibbs-cond-lb} proves Proposition~\ref{prop:fixed-kernel}.

\subsubsection{Concentration of the log partition function}\label{sec:hg-conc}

The formulae in Proposition~\ref{prop:fixed-kernel} and Proposition~\ref{prop:fixed-kernel-size} involve $\mathcal{H}_G(m)$ which heavily depends on $G$. We prove that this concentrates uniformly over all $m_0\leq m\leq N_{d^*}$, which allows us to replace it with $\Psi(m;N_{d^*},\lambda_n)$ which depends on $G$ only through $N_{d^*}$. The goal of this section is to prove the following lemma.

\begin{lemma}\label{lem:hg-conc}
    With probability at least $1-o(1)$, we have
    \[
        \mathcal{H}_G(m)=\Psi(m;N_{d^*},\lambda_n)+o(n)
    \]
    simultaneously for all $m_0\leq m\leq N_{d^*}$.
\end{lemma}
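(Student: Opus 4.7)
The plan is to compare $\mathcal{H}_G(m)$ to the reference quantity $\Psi(m;N_{d^*},\lambda_n)$ via three reductions: (i) drop the kernel constraint to pass to the ``unrestricted'' log of the elementary symmetric polynomial, (ii) concentrate that quantity around its conditional expectation using McDiarmid, and (iii) compare the resulting $\ZTB$ expectation to the $\ZTP$ one appearing in $\Psi$. Throughout, write $X_v:=|\mathsf{par}_G(v)|$ for $v\in\Gamma_{d^*}$, let $e_m(X):=\sum_{I\in\binom{\Gamma_{d^*}}{m}}\prod_{v\in I}X_v$, and set $\widetilde{\mathcal{H}}_G(m):=\log e_m(X)$. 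The three error terms will each be $o(n)$ with overwhelming probability, so a union bound over $m_0\leq m\leq N_{d^*}$ (at most $O(n)$ values) is straightforward.

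For Step (i), the upper bound $\mathcal{H}_G(m)\leq\widetilde{\mathcal{H}}_G(m)$ is immediate from $\mathcal{K}_m(G)\subseteq\binom{\Gamma_{d^*}}{m}$. For the reverse, I use a swap injection: for each non-kernel $A\in\binom{\Gamma_{d^*}}{m}$, remove up to $m_0$ vertices from $A$ lying in the giant component $C_0$ of $\overline{G}\setminus\Gamma_{\leq d^*-1}$ and add one representative from each missed small component. The resulting kernel has weight at least $M^{-m_0}$ times that of $A$ (where $M$ is the maximum degree, $O(\log n)$ w.h.p.), and the map has preimage size at most $(N_{d^*})^{m_0}$, so $\widetilde{\mathcal{H}}_G(m)-\mathcal{H}_G(m)=O(m_0\log n)$. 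Standard random-graph arguments using $\liminf\alpha_n>0$ yield $m_0=o(n/\log n)$ w.h.p., giving $o(n)$ error. For Step (ii), condition on every edge of $G$ except those between $\Gamma_{d^*-1}$ and $\Gamma_{d^*}$; then $(X_v)_{v\in\Gamma_{d^*}}$ are i.i.d.~$\ZTB(N_{d^*-1},q)$. A quick calculation using $\partial_{x_v}e_m=e_{m-1}(x_{-v})$ together with $e_m(x)\geq x_v\cdot e_{m-1}(x_{-v})$ yields $|\partial_{x_v}\log e_m|\leq 1/x_v$, so $\log e_m$ is $(\log M)$-Lipschitz in each coordinate. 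McDiarmid then gives deviations of order $\exp(-\Omega(n/(\log\log n)^2))$, easily absorbing the union bound over $m$.

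For Step (iii), integrating the per-coordinate bound yields the stronger Lipschitz inequality $|\log e_m(X)-\log e_m(Y)|\leq\sum_v|\log X_v-\log Y_v|$ valid for $X,Y>0$, so it suffices to bound the Wasserstein distance between $\ZTB(N_{d^*-1},q)$ and $\ZTP(\log(1/\lambda_n))$ in the metric $d(x,y)=|\log x-\log y|$. Via the intermediate distribution $\ZTP(N_{d^*-1}q)$, Poisson approximation contributes $O(q\log M)=O(q\log\log n)$ per coordinate (total variation $O(q)$ times the worst-case $\log$-gap $\log M$), while comparing the two Poissons costs $O(1/\log\log n)$ by Proposition~\ref{prop:conc} since both concentrate near $\log(1/\lambda_n)$. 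Summing the per-coordinate cost $O(1/\log\log n)$ over $N_{d^*}\leq n$ coordinates gives $o(n)$. The main obstacle is Step (i) when $\alpha_n$ is close to $0$: the number of small (including singleton) components of $\overline{G}\setminus\Gamma_{\leq d^*-1}$ can grow polynomially in $n$, so the bound $m_0\log n=o(n)$ --- which is precisely where the assumption $\liminf\alpha_n>0$ enters in an essential way --- must be verified carefully. A secondary subtlety appears in Step (iii) when $\log(1/\lambda_n)\to 0$, since then the $\ZTP$ concentration argument degenerates; fortunately in that regime $X_v,Y_v=1$ with probability $1-o(1)$, so $|\log X_v-\log Y_v|=0$ trivially and the bound goes through.
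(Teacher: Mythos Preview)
Your three-step decomposition is exactly the paper's structure: Lemma~\ref{lem:kl-inter-low} handles the kernel-constraint removal (your Step~(i)), and Lemma~\ref{lem:hg-tilde-conc} handles the concentration and ZTP comparison (your Steps~(ii)--(iii)). The main technical difference is that the paper packages (ii)+(iii) via the empirical-measure machinery of Lemmas~\ref{lem:lse} and~\ref{lem:lse2}: it observes that $\mathsf{LSE}_m$ is $1$-Lipschitz in the $W_1$ distance between empirical measures, invokes the empirical-measure convergence Lemma~\ref{lem:empirical-conv}, and then decomposes the remaining $W_1$ error into four pieces (A)--(D) that separately track the zero-truncation, Poisson approximation, and parameter mismatch. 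Your per-coordinate bound $|\log e_m(X)-\log e_m(Y)|\le\sum_v|\log X_v-\log Y_v|$ is precisely Lemma~\ref{lem:lse} restricted to the identity permutation, so the two routes are very close; yours is arguably more direct.

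A few places need tightening. In Step~(ii), McDiarmid requires a bounded-differences constant that holds over the \emph{full support} of $X_v\sim\ZTB(N_{d^*-1},q)$, which is $[1,N_{d^*-1}]$; the correct constant is therefore $\log N_{d^*-1}=O(\log n)$, not $\log M=O(\log\log n)$. The resulting bound $\exp(-\Omega(n/(\log n)^2(\log\log n)^2))$ is weaker than you claim but still overwhelming. (Alternatively, truncate at $C\log n$ first.) Also, your conditioning statement is circular as written, since $\Gamma_{d^*}$ is determined by the very edges you leave unrevealed; the correct conditioning is on $\Gamma_{\le d^*-1}$ together with the indicator set $\{v:X_v>0\}$. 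In Step~(iii), the argument ``TV $O(q)$ times worst-case $\log$-gap $\log M$'' is not rigorous because $\ZTP$ is unbounded; you need either a tail truncation or the $W_1$ bound of Theorem~\ref{thm:w1-pois} combined with the $1$-Lipschitz property of $\log$ on $[1,\infty)$. None of these gaps is fatal, and your identification of the $\alpha_n\to 0$ regime as the delicate point in Step~(i) is on target (the paper handles it by noting $|J|=n^{1-\Theta(1)}$).
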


Revisit \eqref{eqn:hg-def} and \eqref{eqn:psi} for definitions. We repeat the definition of $\mathcal{H}_G(m)$:
\[
    \mathcal{H}_G(m):=\log\left(\sum_{A\in\mathcal{K}_m(G)}\prod_{v\in A}|\mathsf{par}_G(v)|\right)\,.
\]
Here, $A$ ranges over all kernels in $\mathcal{K}_m(G)$, but the collection $\mathcal{K}_m(G)$ is a bit unwieldy. Our first task is to replace this with the collection of all $m$-elements subsets which we temporarily write
\[
    \tilde{\mathcal{H}}_{G}(m):=\log\left(\sum_{A\in\binom{\Gamma_{d^*}}{m}}\prod_{v\in A}|\mathsf{par}_G(v)|\right)\,.
\]

\begin{lemma}\label{lem:kl-inter-low}
    With probability at least $1-o(1)$, we have
    \[
        \tilde{\mathcal{H}}_{G}(m) = \mathcal{H}_G(m)+o(n)
    \]
    simultaneously for all $m_0\leq m\leq N_{d^*}$.
\end{lemma}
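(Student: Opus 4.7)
The inequality $\mathcal{H}_G(m) \leq \tilde{\mathcal{H}}_G(m)$ is immediate from $\mathcal{K}_m(G) \subseteq \binom{\Gamma_{d^*}}{m}$, so the task is to bound the reverse gap uniformly by $o(n)$. I would split into small and large $m$ around the threshold $m_{\mathrm{th}} := n/(\log\log n)^2$. For $m \leq m_{\mathrm{th}}$, the crude estimate
\[
\tilde{\mathcal{H}}_G(m) \leq \log\binom{N_{d^*}}{m} + m\log(\max_v \deg(v)) = O\bigl(m \log(en/m) + m \log\log n\bigr) = o(n)
\]
suffices, using the standard bound $\max_v \deg(v) = O(\log n)$ with high probability for sparse Erd\"os--R\'enyi graphs. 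Since $\mathcal{H}_G(m) \geq 0$ whenever $m \geq m_0$ (each weight $|\mathsf{par}_G(v)| \geq 1$), the gap is already $o(n)$ in this regime.

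For $m > m_{\mathrm{th}}$, I would exploit the giant-component structure of $\overline{G} \setminus \Gamma_{\leq d^*-1}$. This induced subgraph inherits edge probability $q$ on a vertex set of size $\Theta(n)$ in the highly-supercritical regime $\alpha_n\log n \to \infty$, so standard arguments yield that the number of vertices outside its giant component is $O(n^{1-\epsilon})$ for any $\epsilon < \liminf_n \alpha_n$; in particular, the set $B \subseteq \Gamma_{d^*}$ of $\Gamma_{d^*}$-vertices not in the giant satisfies $|B| = o(n/\log\log n)$ whp. Write $w(v) = |\mathsf{par}_G(v)|$ and $e_k(S; w) := \sum_{A \in \binom{S}{k}} \prod_{v \in A} w(v)$. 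The key observation is that every set $B \cup A'$ with $A' \in \binom{\Gamma_{d^*} \setminus B}{m-|B|}$ is a kernel: containing $B$ automatically hits every non-giant component (since each such component contains at least one $\Gamma_{d^*}$-vertex by the distance structure), and any element of $A'$ hits the giant. This gives
\[
\mathcal{H}_G(m) \geq \log\bigl(e_{|B|}(B;w) \cdot e_{m-|B|}(\Gamma_{d^*} \setminus B;w)\bigr).
\]

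To match this with $\tilde{\mathcal{H}}_G(m) = \log e_m(\Gamma_{d^*};w)$, I would expand
\[
e_m(\Gamma_{d^*};w) = \sum_{j=0}^{|B|} e_j(B;w)\,e_{m-j}(\Gamma_{d^*}\setminus B;w)
\]
and estimate each term. Using $w(v) \geq 1$ gives $e_j(B;w)/e_{|B|}(B;w) \leq \binom{|B|}{j}$, while the identity $(k+1)\,e_{k+1}(S;w) = \sum_{v \in S} w(v)\,e_k(S \setminus \{v\};w)$ combined with $e_k(S \setminus \{v\};w) \leq e_k(S;w)$ yields $e_{k+1}(S;w)/e_k(S;w) \leq W/(k+1)$ for $W := \sum_v w(v)$; iterating gives $e_{m-j}/e_{m-|B|} \leq (W/(m-|B|+1))^{|B|-j}$. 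Summing via the binomial theorem,
\[
e_m(\Gamma_{d^*};w) \leq \Bigl(1 + \tfrac{W}{m-|B|+1}\Bigr)^{|B|} \cdot e_{|B|}(B;w)\, e_{m-|B|}(\Gamma_{d^*}\setminus B;w).
\]
Since $W = O(n \log n)$ (the total number of $\Gamma_{d^*-1}$–$\Gamma_{d^*}$ edges) and $m \geq m_{\mathrm{th}}$, the log of the prefactor is at most $|B| \cdot O(\log\log n) = o(n)$ by our bound on $|B|$, yielding the desired estimate. Uniformity over the $O(n)$ values of $m$ is absorbed into the $o(n)$ error.

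The main technical obstacle lies in establishing the giant-component bound $|B| = o(n/\log\log n)$ uniformly in the parameter range, especially because $\overline{G}$ is defined as the giant component of $G$ rather than being marginally Erd\"os--R\'enyi, so care is needed regarding the conditioning when analyzing the induced subgraph $\overline{G}[\overline{V} \setminus \Gamma_{\leq d^*-1}]$. Fortunately the argument only requires polynomial-in-$n$ decay of the non-giant tail, which is a mild consequence of the BFS machinery already developed in Section~\ref{sec:single}.
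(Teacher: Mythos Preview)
Your proposal is correct and rests on the same core observation as the paper: let $J$ (your $B$) be the small set of $\Gamma_{d^*}$-vertices lying outside the giant component of $\overline{G}\setminus\Gamma_{\leq d^*-1}$, note that every $m$-subset of $\Gamma_{d^*}$ containing this set is automatically a kernel, and conclude that restricting the sum to kernels costs only $o(n)$ in the log because $|B|=n^{1-\Theta(1)}$.

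The difference lies in how the ratio is bounded. The paper fixes an arbitrary $A\in\binom{\Gamma_{d^*}}{m}$, writes $\prod_{v\in A}|\mathsf{par}_G(v)|\le M^{|J|}\binom{|A\setminus J|}{m-|J|}^{-1}\sum_{B'\subseteq A\setminus J,\,|B'|=m-|J|}\prod_{v\in B'}|\mathsf{par}_G(v)|$, and then double-counts over $B'$, arriving at a prefactor bounded by $M^{|J|}|J|n^{|J|}$. Your route via the Newton-type inequality $e_{k+1}(S;w)/e_k(S;w)\le W/(k+1)$ together with the binomial theorem is a clean alternative; it even yields a slightly sharper prefactor of $|B|\cdot O(\log\log n)$ versus the paper's $|J|\cdot O(\log n)$, though either bound is more than sufficient given the polynomial smallness of $|B|$. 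Your explicit restriction to $B=J\cap\Gamma_{d^*}$ is also a bit more careful than the paper's phrasing, since $J$ as written there may contain vertices of depth $>d^*$. The threshold you chose for the small-$m$/large-$m$ split ($n/(\log\log n)^2$ rather than $|J|$) is coarser but only makes the small-$m$ case easier.
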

\begin{proof}
    We let $J\subseteq\overline{V}\setminus\Gamma_{\leq d^*-1}$ be the set of vertices outside the giant component of $\overline{G}\setminus\Gamma_{\leq d^*-1}$, which has cardinality $|J|=n^{1-\Theta(1)}$ with high probability. Also, let $M$ be the maximum degree of $G$.

    If $m\leq|J|=n^{1-\Theta(1)}$, then $\tilde{\mathcal{H}}_G(m)=o_p(n)$ so the result is trivial. Otherwise, if $m>|J|$, then any subset of $\Gamma_{d^*}$ containing $J$ is a kernel, so we have
    \[
        \exp(\tilde{\mathcal{H}}_G(m)-\mathcal{H}_G(m))\leq\frac{\sum_A\prod_{v\in A}|\mathsf{par}_G(v)|}{\sum_{A\supseteq J}\prod_{v\in A}|\mathsf{par}_G(v)|}\,.
    \]
    For the denominator, we use a simple bound $|\mathsf{par}_G(v)|\geq1$ to obtain
    \[
        \sum_{\substack{J\subseteq A\subseteq\Gamma_{d^*}\\|A|=m}}\prod_{v\in A}|\mathsf{par}_G(v)|\geq\sum_{\substack{B\subseteq\Gamma_{d^*}\setminus J\\|B|=m-|J|}}\prod_{v\in B}|\mathsf{par}_G(v)|\,.
    \]
    Now we handle the numerator. The idea is to factor out $(m-|J|)$ sized subsets of $A$ to cancel the numerator, and bound the remaining terms by $M$. For this, we bound each product as
    \[
        \begin{split}
            \prod_{v\in A}|\mathsf{par}_G(v)| &\leq M^{|A\cap J|}\prod_{v\in A\setminus J}|\mathsf{par}_G(v)|\\
            &\leq M^{|A\cap J|}\left(\frac{1}{\binom{|A\setminus J|}{m-|J|}}\sum_{\substack{B\subseteq A\setminus J\\|B|=m-|J|}}\prod_{v\in B}|\mathsf{par}_G(v)|\cdot M^{|J|-|A\cap J|}\right)\\
            &=\frac{M^{|J|}}{\binom{|A\setminus J|}{m-|J|}}\sum_{\substack{B\subseteq A\setminus J\\|B|=m-|J|}}\prod_{v\in B}|\mathsf{par}_G(v)|\,.
        \end{split}
    \]
    Now we take the sum over all $A\subseteq\Gamma_{d^*}$. By the double counting argument the multiplicative factor of each $B\subseteq\Gamma_{d^*}\setminus J$ with size $|B|=m-|J|$ is
    \[
        M^{|J|}\sum_{k=0}^{|J|}\frac{\binom{|J|}{k}\binom{n-m}{|J|-k}}{\binom{m-k}{m-|J|}}
    \]
    where $k$ encodes $|A\cap J|$. Combining with the numerator bound, we have
    \[
        \exp(\tilde{\mathcal{H}}_G(m)-\mathcal{H}_G(m))\leq M^{|J|}\sum_{k=0}^{|J|}\frac{\binom{|J|}{k}\binom{n-m}{|J|-k}}{\binom{m-k}{m-|J|}}\leq M^{|J|}|J|n^{|J|}\,.
    \]
    Since $|J|=n^{1-\Theta(1)}$ the result follows.
\end{proof}

Next, we show that $\tilde{\mathcal{H}}_G(m)$ concentrates around $\Psi(m;N_{d^*},\lambda_n)$. Throughout this part, we treat $\Gamma_{\leq d^*-1}$ as fixed unless otherwise stated. We first construct random variables that have the same distribution as $|\mathsf{par}_G(v)|$ and set up notations to better describe $\tilde{\mathcal{H}}_G(m)$ in terms of these random variables. Let $n^*=|V\setminus\Gamma_{\leq d^*-1}|$ and consider i.i.d. random variables
\[
    X_1,\cdots,X_{n^*}\sim\Binom(N_{d^*-1},q)\,.
\]
We also define an index set $I_{\mathbf{X}>0}$ as a function of $\mathbf{X}=(X_1,\cdots,X_{n^*})$ by
\[
    I_{\mathbf{X}>0}:=\{i\in[n^*]:X_i>0\}
\]
and its size
\[
    N_{\mathbf{X}>0}:=|I_{\mathbf{X}>0}|\,.
\]
Note that
\[
    (|\mathsf{par}_G(v)|)_{v\in\Gamma_{d^*}}\overset{d}=(X_i)_{i\in I_{\mathbf{X}>0}}\,.
\]
We also define $Y_i=\log(X_i\vee1)$. For fixed $N_{d^*}=N_{\mathbf{X}>0}$, we have
\begin{equation}\label{eqn:hg-ztb}
    \begin{split}
        \tilde{\mathcal{H}}_G(m)&\overset{d}=\log\left(\sum_{J\in\binom{I_{\mathbf{X}>0}}{m}}\prod_{i\in J}e^{Y_i}\right) \\
        &=\log\left(\sum_{J\in\binom{I_{\mathbf{X}>0}}{m}}\exp\left(\sum_{i\in J}Y_i\right)\right)\\
        &=\mathsf{LSE}_m(\mathbf{Y}_{\mathbf{X}>0})
    \end{split}
\end{equation}
where $\mathsf{LSE}_m$ is the LogSumExp function applied over subset sums of $m$ elements, and $\mathbf{Y}_{\mathbf{X}>0}:=(Y_i){i\in I_{\mathbf{X}>0}}$. Note that \eqref{eqn:hg-ztb} holds jointly, i.e.,
\begin{equation}\label{eqn:hg-joint}
    \left(\tilde{\mathcal{H}}_G(m)\right)_{m=1}^{N_{d^*}}\overset{d}=\left(\mathsf{LSE}_m\left(\mathbf{Y}_{\mathbf{X}>0}\right)\right)_{m=1}^{N_{\mathbf{X}>0}}
\end{equation}
where $\mathbf{Y}_{\mathbf{X}>0}:=(Y_i)_{i\in I_{\mathbf{X}>0}}$. Necessary properties of the $m$-wise LogSumExp function can be found in Appendix~\ref{apdx:lse}.

For the sake of convenience, we introduce several shorthand notations. First, we define
\[
    \phi_{n,p}:=\operatorname{law}(\log(X\vee1))\,,\quad X\sim\Binom(n,p)
\]
which will be used to describe the law of $Y_i$. Also, we define
\[
    \pi_{\lambda}:=\operatorname{law}(\log(X\vee1))\,,\quad X\sim\Pois(\lambda)
\]
and
\[
    \tilde{\pi}_{p,\lambda}:=\operatorname{law}(\log(X\vee1))\,,\quad X\sim p\delta_{0}+(1-p)\ZTP(\lambda)
\]
which are useful for $\Psi(m;N_{d^*},\lambda_n)$. Here, $\tilde{\pi}_{p,\lambda}$ can be understood as a ``patched'' version of $\pi_{\lambda}$, where the mass at zero has been adjusted. In particular, $\pi_\lambda=\tilde{\pi}_{p,\lambda}$ if $p=e^{-\lambda}$.

% To transform $\Psi(m;N_{d^*},\lambda_n)$ to a more convenient form, we choose $\mathbf{z}=(z_1,\cdots,z_{n^*})$ such that
% \[
%     W_1(\pi_{\log(1/\lambda_n)},\mu_{\mathbf{z}})\leq C_1\sqrt{n}
% \]
% for some universal constant $C_1$, which is always possible due to Lemma~\ref{lem:subg-rep} and Lemma~\ref{lem:empirical-conv}.

By Lemma~\ref{lem:subg-ztp}, $\tilde{\pi}_{0,\lambda}$ has a bounded subgaussian norm regardless of $\lambda$, so Lemma~\ref{lem:lse2} implies that there is a universal constant $C$ such that
\begin{equation}\label{eqn:lse-goal}
    \begin{split}
        |\mathsf{LSE}_m(\mathbf{Y}_{\mathbf{X}>0})-\Psi(m;N_{\mathbf{X}>0},\lambda_n)|&\leq N_{\mathbf{X}>0}W_1(\tilde{\pi}_{0,\log(1/\lambda_n)},\mu_{\mathbf{Y}_{\mathbf{X}>0}})+C\sqrt{N_{\mathbf{X}>0}}\\
        &=n^*W_1(\tilde{\pi}_{p_{\mathbf{X}=0},\log(1/\lambda_n)},\mu_{\mathbf{Y}})+C\sqrt{N_{\mathbf{X}>0}}
    \end{split}
\end{equation}
where
\[
    p_{\mathbf{X}=0}:=\frac{N_{\mathbf{X}=0}}{n^*}=1-\frac{N_{\mathbf{X}>0}}{n^*}\,.
\]
We can decompose the RHS as
\begin{equation}\label{eqn:lse-interm}
    n^*W_1(\tilde{\pi}_{p_{\mathbf{X}=0},\log(1/\lambda_n)},\mu_{\mathbf{Y}})=\underbrace{n^*W_1(\tilde{\pi}_{p_{\mathbf{X}=0},\log(1/\lambda_n)},\pi_{\log(1/\lambda_n)})}_{\text{(A)}}+n^*W_1(\pi_{\log(1/\lambda_n)},\mu_{\mathbf{Y}})\,.
\end{equation}
The second term is further decomposed into the following.
\[
    \begin{split}
        n^*W_1(\pi_{\log(1/\lambda_n)},\mu_{\mathbf{Y}}) &\leq \underbrace{n^*W_1(\pi_{\log(1/\lambda_n)},\pi_{qN_{d^*-1}})}_{\text{(B)}}+\underbrace{n^*W_1(\pi_{qN_{d^*-1}},\phi_{N_{d^*-1},q})}_{\text{(C)}}+\underbrace{n^*W_1(\phi_{N_{d^*-1},q},\mu_{\mathbf{Y}})}_{\text{(D)}}\,.
    \end{split}
\]
There are four terms we need to bound. We proceed in reverse order. For (D), Lemma~\ref{lem:logbinv-subg} and Lemma~\ref{lem:empirical-conv} suffice:
\[
    \text{(D)}\leq O_p(\sqrt{n^*})=O_p(\sqrt{n})\,.
\]
For (C) we use Theorem~\ref{thm:w1-pois} and the fact that $x\mapsto\log(x\vee1)$ is $1$-Lipschitz:
\[
    \begin{split}
        \text{(C)} &\leq n^*W_1(\Binom(N_{d^*-1},q),\Pois(qN_{d^*-1}))\\
        &=O\left(n^*\sqrt{N_{d^*-1}}q^{3/2}\right)\\
        &=O\left((nq)^{3/2}\right)\\
        &=O((\log n)^{3/2})\,.
    \end{split}
\]
To deal with (B) we use the following lemma.

\begin{lemma}
    For $0<\lambda_1<\lambda_2$ we have
    \[
        W_1(\pi_{\lambda_1},\pi_{\lambda_2})\leq\frac{2(\lambda_2-\lambda_1)}{\lambda_1}\cdot(1-e^{-\lambda_1})\,.
    \]
\end{lemma}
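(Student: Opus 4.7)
The plan is to construct an explicit coupling and then pass from the $\log$ differences to reciprocals using the Poisson identity for $\mathbb{E}[1/(X+1)]$. Specifically, sample $X_1\sim\Pois(\lambda_1)$ and $Y\sim\Pois(\lambda_2-\lambda_1)$ independently, and set $X_2=X_1+Y$. By the additivity of independent Poissons, $X_2\sim\Pois(\lambda_2)$, so this is a valid coupling of the marginals. Since $\pi_\lambda$ is the pushforward of $\Pois(\lambda)$ under $x\mapsto\log(x\vee 1)$, this gives
\[
  W_1(\pi_{\lambda_1},\pi_{\lambda_2})\;\leq\;\mathbb{E}\bigl[\,\bigl|\log(X_2\vee 1)-\log(X_1\vee 1)\bigr|\,\bigr].
\]

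The monotonicity $X_2\geq X_1$ lets us drop the absolute value. Then I would apply the elementary inequality $\log b-\log a\leq (b-a)/a$ for $b\geq a\geq 1$ (equivalent to $\log x\leq x-1$ for $x\geq 1$) with $a=X_1\vee 1$ and $b=X_2\vee 1$. A short case check shows $(X_2\vee 1)-(X_1\vee 1)\leq Y$ in every case (identity when $X_1\geq 1$; and $(Y\vee 1)-1\leq Y$ when $X_1=0$), so
\[
  \log(X_2\vee 1)-\log(X_1\vee 1)\;\leq\;\frac{Y}{X_1\vee 1}.
\]
Taking expectations and using independence of $Y$ and $X_1$,
\[
  W_1(\pi_{\lambda_1},\pi_{\lambda_2})\;\leq\;\mathbb{E}[Y]\cdot\mathbb{E}\!\left[\frac{1}{X_1\vee 1}\right]\;=\;(\lambda_2-\lambda_1)\cdot\mathbb{E}\!\left[\frac{1}{X_1\vee 1}\right].
\]

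The one remaining step, which is really the crux, is to control $\mathbb{E}[1/(X_1\vee 1)]$. The trick is the pointwise bound $1/(k\vee 1)\leq 2/(k+1)$ for every integer $k\geq 0$ (immediate from $k+1\leq 2k$ when $k\geq 1$, and $1\leq 2$ when $k=0$), which converts the awkward truncated reciprocal into the reciprocal shift that has the clean Poisson identity
\[
  \mathbb{E}\!\left[\frac{1}{X_1+1}\right]\;=\;\sum_{k=0}^{\infty}\frac{1}{k+1}\cdot\frac{e^{-\lambda_1}\lambda_1^{k}}{k!}\;=\;\frac{1}{\lambda_1}\sum_{k=0}^{\infty}\frac{e^{-\lambda_1}\lambda_1^{k+1}}{(k+1)!}\;=\;\frac{1-e^{-\lambda_1}}{\lambda_1}.
\]
Combining these two estimates gives $\mathbb{E}[1/(X_1\vee 1)]\leq 2(1-e^{-\lambda_1})/\lambda_1$, and plugging back in yields the claimed bound. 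The only subtlety is the pointwise bound $1/(k\vee 1)\leq 2/(k+1)$ — without the factor $2$ one recovers the sharper $(1-e^{-\lambda_1})/\lambda_1$, but with it the full displayed inequality (including the factor $2$) follows immediately.
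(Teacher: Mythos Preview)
Your proof is correct and essentially identical to the paper's: both use the additive Poisson coupling $X_2=X_1+Y$, bound $\log(X_2\vee 1)-\log(X_1\vee 1)\le Y/(X_1\vee 1)$, apply independence and the pointwise inequality $1/(k\vee 1)\le 2/(k+1)$, and finish with the Poisson identity $\mathbb{E}[1/(X_1+1)]=(1-e^{-\lambda_1})/\lambda_1$. The only cosmetic difference is that the paper writes the intermediate step as $\log(1+Z/(X\vee 1))\le Z/(X\vee 1)$ rather than your $(\log b-\log a)\le (b-a)/a$ with the numerator case check, but these are the same inequality.
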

\begin{proof}
    For $X\sim\Pois(\lambda_1)$ and $Y\sim\Pois(\lambda_2)$, we consider $Z\sim\Pois(\lambda_2-\lambda_1)$ independent of $X$ and construct a coupling of $X$ and $Y$ by $Y=X+Z$. Then
    \[
        \begin{split}
        W_1(\pi_{\lambda_1},\pi_{\lambda_2}) &\leq \E[\log(Y\vee1)-\log(X\vee1)]\\
        &=\E\left[\log\left(\frac{(X+Z)\vee1}{X\vee1}\right)\right]\\
        &\leq\E\left[\log\left(1+\frac{Z}{X\vee1}\right)\right]\\
        &\leq\E\left[\frac{Z}{X\vee1}\right]\\
        &\leq2(\lambda_2-\lambda_1)\E\left[\frac{1}{X+1}\right]\\
        &=\frac{2(\lambda_2-\lambda_1)}{\lambda_1}\cdot(1-e^{-\lambda_1})\,.
    \end{split}
    \]
\end{proof}

Using this we have
\[
    \begin{split}
       \text{(B)} &\leq n^*\cdot\frac{2|qN_{d^*-1}-\log(1/\lambda_n)|}{qN_{d^*-1}\wedge\log(1/\lambda_n)}\\
        &=n^*\cdot\frac{2|N_{d^*-1}-(nq)^{d^*-1}|}{N_{d^*-1}\wedge(nq)^{d^*-1}}\\
        &=O_p\left(\frac{n}{\log\log n}\right)
    \end{split}
\]
where the last line is due to Proposition~\ref{prop:conc}. Now we are left with (A). We first establish the following lemma.

\begin{lemma}\label{lem:patched-ztp}
    We have
    \[
        W_1(p\delta_0+(1-p)\ZTP(\lambda),\Pois(\lambda))=\frac{\lambda|p-e^{-\lambda}|}{1-e^{-\lambda}}\,.
    \]
\end{lemma}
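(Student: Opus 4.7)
The key observation is that $\Pois(\lambda)$ itself decomposes as the mixture $e^{-\lambda}\delta_0 + (1-e^{-\lambda})\ZTP(\lambda)$, because conditioning a Poisson on being nonzero gives exactly the zero-truncated Poisson. Thus both measures in the statement are mixtures of $\delta_0$ and $\ZTP(\lambda)$, differing \emph{only} in the mixture weight at $0$: one puts mass $p$ at zero while the other puts mass $e^{-\lambda}$. The plan is to exploit this structural observation to obtain a matching upper and lower bound.

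\textbf{Upper bound via explicit coupling.} Assume without loss of generality that $p \geq e^{-\lambda}$; the other case is symmetric. Draw $U \sim \Pois(\lambda)$. If $U = 0$, set $V = 0$. If $U \geq 1$, then independently of everything else, with probability $(p - e^{-\lambda})/(1-e^{-\lambda})$ set $V = 0$, and otherwise set $V = U$. By construction $U \sim \Pois(\lambda)$ and a direct computation shows $V \sim p\delta_0 + (1-p)\ZTP(\lambda)$ (the zero mass becomes $e^{-\lambda} + (1-e^{-\lambda})\cdot\tfrac{p-e^{-\lambda}}{1-e^{-\lambda}} = p$, and nonzero values retain the $\ZTP$ shape). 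The transportation cost is
\[
\E|U - V| = \E[U \cdot \mathbf{1}\{V=0, U>0\}] = \frac{p-e^{-\lambda}}{1-e^{-\lambda}}\cdot\E[U\cdot\mathbf{1}\{U>0\}] = \frac{p-e^{-\lambda}}{1-e^{-\lambda}}\cdot\lambda,
\]
using $\E[\Pois(\lambda)] = \lambda$. This gives the claimed upper bound.

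\textbf{Lower bound via Kantorovich--Rubinstein duality.} For any $1$-Lipschitz $f$, $W_1(\mu,\nu) \geq |\int f\,d\mu - \int f\,d\nu|$. Apply this to $f(x) = x$. Since $\E[\ZTP(\lambda)] = \lambda/(1-e^{-\lambda})$, the mean of $p\delta_0 + (1-p)\ZTP(\lambda)$ is $(1-p)\lambda/(1-e^{-\lambda})$, while the mean of $\Pois(\lambda)$ is $\lambda$. Their difference is
\[
\lambda - \frac{(1-p)\lambda}{1-e^{-\lambda}} = \frac{\lambda(p - e^{-\lambda})}{1-e^{-\lambda}},
\]
and taking absolute values yields the matching lower bound. (Equivalently, one could invoke the CDF representation $W_1 = \int|F_\mu - F_\nu|\,dx$: the two CDFs differ by the constant $p - e^{-\lambda}$ multiplied by $\Pr(\ZTP(\lambda) > k)$ on $[k,k+1)$, and summing $\sum_{k\geq 0}\Pr(\ZTP(\lambda)>k) = \E[\ZTP(\lambda)] = \lambda/(1-e^{-\lambda})$ gives the same value.)

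\textbf{Obstacles.} There is no real obstacle here --- the structural reduction to ``the two measures differ only in their mass at $0$'' immediately produces a coupling whose cost matches the mean-difference lower bound, so equality holds. The only mild care is verifying that the constructed coupling has the correct marginal; this is an elementary computation.
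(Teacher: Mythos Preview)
Your proof is correct. The paper takes a somewhat different but equally direct route: it uses the one-dimensional CDF formula $W_1 = \sum_{x\ge 0}|F_X(x)-F_Y(x)|$, computes $F_X(x) = \tfrac{1-p}{1-e^{-\lambda}}F_Y(x) + \tfrac{p-e^{-\lambda}}{1-e^{-\lambda}}$, and observes that $|F_X(x)-F_Y(x)| = \tfrac{|p-e^{-\lambda}|}{1-e^{-\lambda}}(1-F_Y(x))$ sums to $\tfrac{|p-e^{-\lambda}|}{1-e^{-\lambda}}\E[Y]$. Your primary argument (explicit coupling for the upper bound, Kantorovich--Rubinstein with $f(x)=x$ for the lower bound) is a genuinely different decomposition; it makes transparent \emph{why} the bound is tight --- the CDF difference has constant sign, so the monotone coupling achieves the mean difference --- whereas the paper's CDF computation gets the exact value in one stroke without needing separate bounds. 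Your parenthetical remark is essentially the paper's argument rephrased in terms of $\ZTP$ tail probabilities rather than $\Pois$ ones.
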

\begin{proof}
    Let $X\sim p\delta_0+(1-p)\ZTP(\lambda)$ and $Y\sim\Pois(\lambda)$. For an integer $x\geq0$, we have
    \[
        \begin{split}
            F_X(x):=\Pr(X\leq x) &= p+\sum_{i=1}^x\Pr(X=x)\\
            &=p+\frac{1-p}{1-e^{-\lambda}}\sum_{i=1}^x\Pr(Y=x)\\
            &=p+\frac{1-p}{1-e^{-\lambda}}(F_Y(x)-e^{-\lambda})\\
            &=\frac{1-p}{1-e^{-\lambda}}F_Y(x)+\frac{p-e^{-\lambda}}{1-e^{-\lambda}}
        \end{split}
    \]
    This gives
    \[
        \begin{split}
            W_1(p\delta_0+(1-p)\ZTP(\lambda),\Pois(\lambda)) &= \sum_{x=0}^\infty|F_X(x)-F_Y(x)|\\
            &=\sum_{x=0}^\infty\frac{|p-e^{-\lambda}|}{1-e^{-\lambda}}(1-F_Y(x))\\
            &=\frac{|p-e^{-\lambda}|}{1-e^{-\lambda}}\E[Y]\\
            &=\frac{\lambda|p-e^{-\lambda}|}{1-e^{-\lambda}}\,.
        \end{split}
    \]
\end{proof}

Since $x\mapsto\log(x\vee1)$ is $1$-Lipschitz, Lemma~\ref{lem:patched-ztp} implies
\[
    \begin{split}
        \text{(A)} &\leq n^*\cdot\frac{\log(1/\lambda_n)}{1-\lambda_n}|p_{\mathbf{X}=0}-\lambda_n|\\
        &\leq n^*\cdot\frac{\log(1/\lambda_n)}{1-\lambda_n}\left(\left|p_{\mathbf{X}=0}-(1-q)^{N_{d^*-1}}\right|+\left|(1-q)^{N_{d^*-1}}-e^{-qN_{d^*-1}}\right|+\left|e^{-qN_{d^*-1}}-\lambda_n\right|\right)\\
        &=\underbrace{\frac{\log(1/\lambda_n)}{1-\lambda_n}\left|N_{\mathbf{X}=0}-n^*(1-q)^{N_{d^*-1}}\right|}_{\text{(A1)}}+\underbrace{\frac{n^*\log(1/\lambda_n)}{1-\lambda_n}\left|(1-q)^{N_{d^*-1}}-e^{-qN_{d^*-1}}\right|}_{\text{(A2)}}\\
        &\quad+\underbrace{\frac{n^*\log(1/\lambda_n)}{1-\lambda_n}\left|e^{-qN_{d^*-1}}-\lambda_n\right|}_{\text{(A3)}}\,.
    \end{split}
\]
This again consists of three terms which we bound separately. Before that, we note that
\[
    \frac{\log(1/\lambda_n)}{1-\lambda_n}=O\left(\frac{\log n}{(\log\log n)^2}\right)
\]
which can be derived from \eqref{eqn:lambda-sandwich}. For (A1), recall that
\[
    N_{\mathbf{X}=0}\sim\Binom(n^*,(1-q)^{N_{d^*-1}})
\]
which gives
\[
    \text{(A1)}=\frac{\log(1/\lambda_n)}{1-\lambda_n}\cdot O(\sqrt{n})=O\left(\frac{\sqrt{n}\log n}{(\log\log n)^2}\right)\,.
\]
For (A2), we utilize an inequality
\[
    e^{-np}-(1-p)^n\leq\frac{1}{2}np^2
\]
which holds for all $n\geq0$ and $0\leq p\leq 1$. This can be verified by taking derivative of $e^{-np}-(1-p)^n-\frac{1}{2}np^2$ by $p$ and showing that it is decreasing. We have
\[
    \begin{split}
        \text{(A2)} &\leq\frac{n^*\log(1/\lambda_n)}{1-\lambda_n}\cdot \frac{1}{2}N_{d^*-1}q^2\\
        &\leq\frac{1}{2}\cdot\frac{\log(1/\lambda_n)}{1-\lambda_n}\cdot (nq)^2\\
        &=O\left(\frac{(\log n)^3}{(\log\log n)^2}\right)\,.
    \end{split}
\]
Finally, for (A3), we use
\[
    \frac{e^x-1}{x}\geq e^{x/2}
\]
which holds for all $x>0$. Thus,
\[
    \frac{\log(1/\lambda_n)}{1-\lambda_n}\geq\lambda_n^{-1/2}\,.
\]
This gives
\[
    \begin{split}
        \text{(A3)} &\leq n^*\left|\lambda_n^{-1/2}e^{-qN_{d^*-1}}-\lambda_n^{1/2}\right|\\
        &=n^*\left|\lambda_n^{\frac{qN_{d^*-1}}{\log(1/\lambda_n)}-\frac{1}{2}}-\lambda_n^{1/2}\right|\,.
    \end{split}
\]
To further upper bound this, we note that for $t>0$ and $0\leq x\leq 1$ the function $|x^{t-1/2}-x^{1/2}|$ attains its maximum at $x^*=(2t-1)^{-1/(t-1)}$ (other than the trivial edge case $t=1$) which implies
\[
    |x^{t-1/2}-x^{1/2}|\leq\frac{2|t-1|}{2t-1}(x^*)^{1/2}\leq\frac{2|t-1|}{2t-1}\,.
\]
In our case,
\[
    \frac{qN_{d^*-1}}{\log(1/\lambda_n)}=\frac{N_{d^*-1}}{(nq)^{d^*-1}}=1\pm O_p\left(\frac{1}{\log\log n}\right)
\]
by Proposition~\ref{prop:conc}, which gives
\[
    \text{(A3)}=O_p\left(\frac{n}{\log\log n}\right)\,.
\]
As a result, we have proved that \eqref{eqn:lse-goal} is $o_p(n)$ simultaneously for all $m$. Going back to \eqref{eqn:hg-joint}, this is equivalent to the following result.

\begin{lemma}\label{lem:hg-tilde-conc}
    With probability at least $1-o(1)$, we have
    \[
        \tilde{\mathcal{H}}_G(m)=\Psi(m;N_{d^*},\lambda_n)+o(n)
    \]
    simultaneously for all $1\leq m\leq N_{d^*}$.
\end{lemma}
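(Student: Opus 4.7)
The plan is essentially to collect the bounds already derived in the preceding discussion and combine them into a single uniform statement. The starting point is the distributional identity \eqref{eqn:hg-joint}, which lets us replace $\tilde{\mathcal{H}}_G(m)$ with the $m$-wise LogSumExp $\mathsf{LSE}_m(\mathbf{Y}_{\mathbf{X}>0})$ of logs of zero-truncated binomials. The key structural fact is that Lemma~\ref{lem:lse2} controls this $\mathsf{LSE}_m$ in terms of a single Wasserstein distance between the empirical distribution of $\mathbf{Y}_{\mathbf{X}>0}$ and a reference distribution, \emph{with a bound that does not depend on $m$}. So the uniformity in $m$ asserted by the lemma is actually automatic once we control that Wasserstein distance; no union bound over $m$ is required.

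The core of the proof is therefore to show
\[
n^* W_1\bigl(\tilde{\pi}_{p_{\mathbf{X}=0},\log(1/\lambda_n)},\mu_{\mathbf{Y}}\bigr) = o_p(n)
\]
using the four-way triangle-inequality decomposition into terms (A)--(D) carried out in the excerpt. I would treat them in reverse order exactly as done there: (D) is an empirical-to-true-distribution gap bounded by $O_p(\sqrt n)$ via subgaussianity (Lemma~\ref{lem:logbinv-subg}) and Lemma~\ref{lem:empirical-conv}; (C) is the Poisson approximation of a binomial, giving $O((\log n)^{3/2})$ via Theorem~\ref{thm:w1-pois} together with the $1$-Lipschitzness of $x\mapsto\log(x\vee1)$; (B) is a rate-mismatch comparison between $\pi_{qN_{d^*-1}}$ and $\pi_{\log(1/\lambda_n)}$, which is $O_p(n/\log\log n)$ using the thinning coupling and Proposition~\ref{prop:conc}; and (A) is the ``patched'' Poisson term, which splits further into (A1)--(A3) via Lemma~\ref{lem:patched-ztp}. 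Each of these three subterms requires only an elementary estimate: binomial concentration for $N_{\mathbf{X}=0}$ (A1), the Taylor bound $e^{-np}-(1-p)^n\leq \tfrac12 np^2$ (A2), and the estimate $|x^{t-1/2}-x^{1/2}|\leq 2|t-1|/(2t-1)$ applied with $t=qN_{d^*-1}/\log(1/\lambda_n)$ (A3).

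I expect the main obstacle to be the (A3) term, since $\lambda_n$ can lie anywhere in $[0,1]$ and the naive bound $|e^{-qN_{d^*-1}}-\lambda_n|$ times $\log(1/\lambda_n)/(1-\lambda_n)$ blows up as $\lambda_n\to 0$ or $\lambda_n\to 1$. The trick already indicated in the excerpt is to rewrite the target as $|\lambda_n^{t-1/2}-\lambda_n^{1/2}|$ so that the prefactor $\log(1/\lambda_n)/(1-\lambda_n)$ gets absorbed into a bounded function of the ratio $t-1$, at which point Proposition~\ref{prop:conc}'s sharp control $N_{d^*-1}/(nq)^{d^*-1}=1\pm O_p(1/\log\log n)$ yields the desired $O_p(n/\log\log n)$ bound.

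Having verified that all four displayed quantities (A), (B), (C), (D) are $o_p(n)$, plugging back into \eqref{eqn:lse-goal} and using \eqref{eqn:hg-joint} conditionally on $\Gamma_{\leq d^*-1}$ gives $|\tilde{\mathcal{H}}_G(m)-\Psi(m;N_{d^*},\lambda_n)|=o_p(n)$ simultaneously for all $1\leq m\leq N_{d^*}$, as required. The $C\sqrt{N_{\mathbf{X}>0}}=O(\sqrt n)$ residual from Lemma~\ref{lem:lse2} is absorbed into the $o(n)$ error.
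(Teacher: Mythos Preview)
Your proposal is correct and follows essentially the same approach as the paper: the argument is precisely the decomposition into terms (A)--(D) and the further split of (A) into (A1)--(A3), all bounded exactly as you describe, with uniformity in $m$ coming for free from the $m$-independent Wasserstein bound in Lemma~\ref{lem:lse2}.
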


Combining this with Lemma~\ref{lem:kl-inter-low} completes the proof of Lemma~\ref{lem:hg-conc}. Note that Lemma~\ref{lem:hg-conc} also yields the following version of Proposition~\ref{prop:fixed-kernel-size}.

\begin{corollary}\label{cor:fixed-kernel-size}
    Let $\beta>0$ be a constant. Then with probability at least $1-o(1)$, we have
    \[
        \log(Z_{G,\beta}\|_m)=\tilde{\Phi}_{G,\beta}(m)+o(n)
    \]
    simultaneously for all $m_\ell\leq m\leq N_{d^*}$.
\end{corollary}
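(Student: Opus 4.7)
The plan is very short: Corollary~\ref{cor:fixed-kernel-size} follows by combining the two results that immediately precede it in Section~\ref{sec:gibbs}, namely Proposition~\ref{prop:fixed-kernel-size} and Lemma~\ref{lem:hg-conc}. Proposition~\ref{prop:fixed-kernel-size} gives, simultaneously for all $m_\ell \le m \le N_{d^*}$,
\[
    \log(Z_{G,\beta}\|_m) = \mathcal{H}_G(m) + (n-m)\log(mq) + \overline{\beta}m - \overline{\beta}(d^*+1)n + o(n),
\]
while Lemma~\ref{lem:hg-conc} asserts (simultaneously for the even larger range $m_0 \le m \le N_{d^*}$) that
\[
    \mathcal{H}_G(m) = \Psi(m;N_{d^*},\lambda_n) + o(n).
\]
So the plan is simply to intersect the two high-probability events (each of probability $1-o(1)$) supplied by these results and substitute the second into the first. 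The resulting expression is exactly $\tilde{\Phi}_{G,\beta}(m)$ as defined in~\eqref{eqn:gibbs-pts}, up to an additive $o(n)$ error that is uniform in $m$.

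There is essentially no obstacle — both ingredients already handle the uniformity in $m$ internally, and the ranges of validity are compatible because $m_\ell \ge m_0$ by~\eqref{eqn:ml-def}. The only thing to note is that the two $o(n)$ terms should be combined using a union bound on the two underlying good events, yielding a single good event of probability $1-o(1)$ on which the identity holds for every $m$ in the stated range. No new estimates, concentration inequalities, or structural observations about kernels are needed at this step; all the technical work has been done in Sections~\ref{sec:kernels-cond-gibbs} and~\ref{sec:hg-conc}, and the corollary is essentially a bookkeeping corollary that rewrites Proposition~\ref{prop:fixed-kernel-size} in the graph-independent form used later in Section~\ref{sec:gibbs-1d-opt} and Section~\ref{sec:gibbs-opt-sol} when we pass from $\mathcal{H}_G(m)$ to the deterministic function $\Psi(m;N_{d^*},\lambda_n)$.
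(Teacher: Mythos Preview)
Your proposal is correct and matches the paper's approach exactly: the paper simply states that Lemma~\ref{lem:hg-conc} yields this version of Proposition~\ref{prop:fixed-kernel-size}, which is precisely the substitution you describe.
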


\subsubsection{Proof of Theorem~\ref{thm:main-gibbs}}\label{sec:gibbs-1d-opt}

We show that the measures \eqref{eqn:hatmu} (for $m<m_0$, set them arbitrary) satisfy the desired properties. Recall that by Lemma~\ref{lem:gibbs-cond-size-ub} we have an upper bound of $\log(Z_{G,\beta}\|_m)$ for $m_0\leq m\leq N_{d^*}$:
\begin{equation}\label{eqn:gibbs-ub-tmp}
    \Psi(m;N_{d^*},\lambda_n)+(n-m)\log(L+mq)+\overline{\beta}m\,.
\end{equation}
We claim that it suffices to prove that:
\begin{enumerate}[label=(\alph*)]
    \item\label{item:claim1} $\Psi(m;N_{d^*},\lambda_n)$ is an increasing function of $m$ for $m\leq m_\ell$ with high probability,
    \item\label{item:claim2} $(n-m)\log(L+mq)$ is also an increasing function of $m$ for $m\leq m_\ell$  with high probability, and
    \item\label{item:claim3} $f(m)=(n-m)\log(mq)$ is also an increasing function of $m$ for $m\leq m_\ell$ with high probability and there exists an integer $m^\#\in[m_\ell,N_{d^*}]$ as a function of $n$ such that $f(m^{\#})-f(m_\ell)=\Omega_p(n)$.
\end{enumerate}
If we have these, then we get the following facts:
\begin{itemize}
    \item By \ref{item:claim1} and \ref{item:claim2}, the maximum value of \eqref{eqn:gibbs-ub-tmp} is attained at $m\geq m_\ell$ with high probability, at which the bound is tight by Corollary~\ref{cor:fixed-kernel-size}. This implies that
    \begin{equation}\label{eqn:gibbs-max-ml}
        \log Z_{G,\beta}=\max_{m_\ell\leq m\leq N_{d^*}}\log(Z_{G,\beta}\|_m)+o_p(n)\,.
    \end{equation}
    \item By \ref{item:claim1} and \ref{item:claim3}, any $m^*$ satisfying the assumption of the theorem has $m^*\geq m_\ell$ with high probability, in which case the objective \eqref{eqn:gibbs-pts} agrees with \eqref{eqn:gibbs-ub-tmp} ignoring $o_p(n)$ terms. Thus, $m^*$ is an approximate solution to the maximization in \eqref{eqn:gibbs-max-ml}, yielding
    \[
        \log Z_{G,\beta}=\log (Z_{G,\beta}\|_{m^*})
    \]
    which is enough to prove the theorem, similar to Proposition~\ref{prop:fixed-kernel-size}.
\end{itemize}

Note that we may assume
\[
    m_\ell=\left\lceil\frac{n}{3(\log\log n)^2}\right\rceil
\]
since $m_0$ is by definition the number of connected components of $\overline{G}\setminus\Gamma_{\leq d^*-1}$ which is $n^{1-\Theta(1)}$ with high probability. We first prove \ref{item:claim1} that $\Psi(m;N_{d^*},\lambda_n)$ is non-decreasing if $m\leq m_\ell$. By Lemma~\ref{lem:dstar} and the definition of $d^*$, $N_{d^*}$ concentrates around $(nq)^{d^*}\geq\frac{n}{(\log\log n)^2}$ when it is the smallest. Thus, we have that
\[
    m_{\ell}=\left\lceil\frac{n}{3(\log\log n)^2}\right\rceil<\frac{N_{d^*}}{2}
\]
asymptotically almost surely. As a consequence, for any $m<m_{\ell}$ and $x_1,\cdots,x_{N_{d^*}}\geq1$, we have
\[
    \begin{split}
        \sum_{I\in\binom{[N_{d^*}]}{m}}\prod_{i\in I}x_i &= \frac{1}{N_{d^*}-m}\sum_{I\in\binom{[N_{d^*}]}{m+1}}\sum_{J\in\binom{I}{m}}\prod_{i\in J}x_i\\
        &\leq\frac{m+1}{N_{d^*}-m}\sum_{I\in\binom{[N_{d^*}]}{m+1}}\prod_{i\in I}x_i
    \end{split}
\]
which implies that $\Psi(m;N_{d^*},\lambda_n)\leq\Psi(m+1;N_{d^*},\lambda_n)$ simultaneously for all $m<m_{\ell}$, asymptotically almost surely.

Now we prove \ref{item:claim2}. To see that $g(x)=(n-x)\log(L+qx)$ is non-decreasing, observe that
\[
    g'(x)=-\log(L+qx)+\frac{q(n-x)}{L+qx}\,.
\]
Using that $L+qx\leq L+m_\ell q=O_p(\log n/(\log\log n)^2)$, we get
\[
    \log(L+qx)=O_p(\log\log n)
\]
and
\[
    \frac{q(n-x)}{L+qx}=\Omega_p((\log\log n)^2)
\]
which imply $g'(x)>0$ uniformly over all $x<m_\ell$ for sufficiently large $n$, with high probability.

The first part of \ref{item:claim3} is similar to \ref{item:claim2}. For the second part, we may take
\[
    m^\#=\left\lceil\frac{n}{2(\log\log n)^2}\right\rceil\wedge N_{d^*}\,.
\]
By the same argument with the previous paragraph, we may assume $m^\#=\lceil\frac{n}{2(\log\log n)^2}\rceil$. Then it is straightforward to see that
\[
    f(m^\#)-f(m_\ell)\geq\log(3/2)n-o_p(n)
\]
as desired. This concludes the proof of Theorem~\ref{thm:main-gibbs}.

\subsubsection{Proof of Theorem~\ref{thm:logz-formula}}\label{sec:gibbs-opt-sol}

In this section, we prove Theorem~\ref{thm:logz-formula} by solving the optimization \eqref{eqn:gibbs-opt}, which we re-state here for the reader's convenience:
\begin{equation}\label{eqn:gibbs-opt-restate}
    \tilde{\Phi}_{G,\beta}^*:=\max_{m_0\leq m\leq N_{d^*}}\left\{\Psi(m;N_{d^*},\lambda_n)+(n-m)\log(mq)+\overline{\beta} m\right\}-\overline{\beta}(d^*+1)n\,.
\end{equation}
The first thing we need to handle is the mysterious function $\Psi(m;N_{d^*},\lambda_n)$. Conditioned on $N_{d^*}$, consider i.i.d. random variables $X_1,\cdots,X_{N_{d^*}}\sim\ZTP(\log(1/\lambda_n))$. Recall that by definition
\[
    \begin{split}
        \Psi(m;N_{d^*},\lambda_n) &= \E\left[\mathsf{LSE}_m(\log X_1,\cdots,\log X_{N_{d^*}})\,\middle|\,N_{d^*}\right]\\
        &=\E\left[\log\left(\sum_{I\in\binom{[N_{d^*}]}{m}}\exp\left(\sum_{i\in I}\log X_i\right)\right)\,\middle|\,N_{d^*}\right]\,.
    \end{split}
\]
If the sum $\sum\log X_i$ enjoys strong concentration, we might be able to replace each of these with $m\Psi_1(\lambda_n)$ where we define
\[
    \Psi_1(\lambda_n):=\E[\log X_1]\,,\quad X_1\sim\ZTP(\log(1/\lambda_n))\,.
\]
This might introduce a large error (up to $O(n)$) if $\log(1/\lambda_n)=\Theta(1)$ and $m=\Theta(n)$. In other cases, namely if $m$ is either small or close to $N_{d^*}$, it is possible to obtain more explicit results by concentration of measure.

% \begin{lemma}
%     There is a universal constant $C>0$ such that
%     \[
%         \Psi(m;n,\lambda)\leq m\Psi_1(\lambda)+C\sqrt{(m\wedge(n-m+\sqrt{n})n}+o(n)
%     \]
%     simultaneously for all $1\leq m\leq n$.
% \end{lemma}

\begin{lemma}\label{lem:parconc}
    There is a universal constant $C>0$ such that with probability at least $1-o(1)$,
    \[
        \left|\sum_{i\in A}\log X_i-|A|\Psi_1(\lambda_n)\right|\leq C\sqrt{(|A|\wedge(N_{d^*}-|A|+\sqrt{N_{d^*}})N_{d^*}}
    \]
    simultaneously for all $A\subseteq[N_{d^*}]$.
\end{lemma}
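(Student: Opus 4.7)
The plan is to center the variables as $Z_i := \log X_i - \Psi_1(\lambda_n)$, so the target becomes bounding $\sup_{A \subseteq [N_{d^*}]}|\sum_{i \in A} Z_i|$. By Lemma~\ref{lem:subg-ztp} (applied to the $\ZTP$ law), the subgaussian norm of each $Z_i$ is bounded by some universal constant $C_1$, uniformly in $\lambda_n$. The main idea is to replace a naive union bound over $2^{N_{d^*}}$ subsets (which would blow up) with a deterministic Cauchy--Schwarz argument driven by a single concentration event for $\sum_i Z_i^2$, plus a separate concentration bound for the full sum $S := \sum_{i=1}^{N_{d^*}} Z_i$ to handle large $|A|$.

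First I would establish that $\sum_{i=1}^{N_{d^*}} Z_i^2 \leq C_2\, N_{d^*}$ with probability $1-\exp(-cN_{d^*}) = 1-o(1)$. Since $\|Z_i\|_{\psi_2} \leq C_1$ uniformly, each $Z_i^2$ is sub-exponential with uniformly bounded sub-exponential norm and with mean $\E[Z_i^2] \leq C_1^2$, so this upper tail follows directly from Bernstein's inequality. Conditional on this event, Cauchy--Schwarz gives, deterministically and uniformly in $A$,
\[
\left|\sum_{i \in A} Z_i\right| \;\leq\; \sqrt{|A|}\cdot\sqrt{\sum_{i \in A} Z_i^2} \;\leq\; \sqrt{C_2\,|A|\,N_{d^*}},
\]
which already covers the regime $|A| \leq N_{d^*}/2$ in the stated bound.

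For $|A| > N_{d^*}/2$, I would write $\sum_{i \in A} Z_i = S - \sum_{i \notin A} Z_i$ and apply the Cauchy--Schwarz bound to the complement $A^c$, getting $|\sum_{i \notin A} Z_i| \leq \sqrt{C_2(N_{d^*}-|A|)N_{d^*}}$. To control the single sum $S$, I would invoke the standard subgaussian Hoeffding inequality: $\Pr(|S| > t) \leq 2\exp(-t^2/(2C_1^2 N_{d^*}))$. Choosing $t = C_3\, N_{d^*}^{3/4}$ (equivalently $t = C_3\sqrt{\sqrt{N_{d^*}}\cdot N_{d^*}}$) makes this probability $2\exp(-c\sqrt{N_{d^*}}) = o(1)$, and this term fits precisely into the extra $\sqrt{N_{d^*}}$ buffer appearing inside the square root of the claimed bound. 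The triangle inequality then yields the desired uniform bound with a universal constant.

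The main obstacle is really just verifying the uniform (in $\lambda_n$) subgaussian norm bound on $Z_i$, which is already packaged in Lemma~\ref{lem:subg-ztp}; everything else is a routine two-event argument (concentration of $\sum Z_i^2$ and of $S$), each holding with probability $1-o(1)$ using universal constants. A small technical point is that $N_{d^*}$ is itself random, but since $N_{d^*} = \Theta(n)$ with high probability (by Theorem~\ref{thm:main1-tight}) and the $X_i$ are i.i.d.\ conditional on $N_{d^*}$, one simply conditions on $N_{d^*}$ before applying the above estimates and absorbs the negligible event $\{N_{d^*} = 0\}$ into the $o(1)$ failure probability.
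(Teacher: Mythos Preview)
Your proposal is correct and takes a genuinely different route from the paper. The paper's proof handles the small-$|A|$ regime by a direct union bound: for each fixed $A$ it applies the subgaussian Hoeffding inequality (Corollary~\ref{cor:tail-sum}) at deviation $t \asymp K\sqrt{|A|N_{d^*}}$, obtaining per-set failure probability $2e^{-4N_{d^*}}$, and then sums over all $2^{N_{d^*}}$ subsets. Your approach instead replaces this exponential union bound by a single concentration event for $\sum_i Z_i^2$ via Bernstein, after which the bound $|\sum_{i\in A} Z_i| \le \sqrt{|A|\sum_i Z_i^2}$ is deterministic and uniform in $A$. Both proofs then handle the large-$|A|$ regime identically, via the complement decomposition and a separate Hoeffding bound on the full sum $S$ at scale $N_{d^*}^{3/4}$. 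Your route is arguably cleaner in that it avoids the $2^{N_{d^*}}$-fold union bound, at the cost of invoking Bernstein for sub-exponentials (not explicitly stated in the paper's appendix, but standard from the cited Vershynin reference). One minor inaccuracy: $N_{d^*}=\Theta(n)$ is not quite right when $\lambda_n\to 1$, but all you actually need is $N_{d^*}\to\infty$ with high probability, which holds since $N_{d^*}\gtrsim n/(\log\log n)^2$.
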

\begin{proof}
    Define
    \[
        K=\lVert\log X_1-\log(\log(1/\lambda_n)\vee 1)\rVert_{\psi_2}\,.
    \]
    By Corollary~\ref{cor:tail-sum} we have for any $t>0$ that
    \[
        \Pr\left(\left|\sum_{i=1}^{N_{d^*}}\log X_i-N_{d^*}\Psi_1(\lambda_n)\right|\geq t\right)\leq 2\exp\left(\frac{C_1t^2}{N_{d^*}K^2}\right)
    \]
    for some constant $C_1>0$. By setting $t=KN_{d^*}^{3/4}$, we have
    \begin{equation}\label{eqn:total-conc}
        \left|\sum_{i=1}^{N_{d^*}}\log X_i-N_{d^*}\Psi_1(\lambda_n)\right|\leq KN_{d^*}^{3/4}
    \end{equation}
    with probability at least $1-2e^{-C_1\sqrt{N_{d^*}}}$. Similarly, for any $t>0$,
    \[
        \Pr\left(\left|\sum_{i\in A}\log X_i-|A|\Psi_1(\lambda_n)\right|\geq t\right)\leq 2\exp\left(\frac{C_1t^2}{|A|K^2}\right)\,.
    \]
    By setting $t=2K\sqrt{|A|N_{d^*}/C_1}$, we have
    \[
        \Pr\left(\left|\sum_{i\in A}\log X_i-|A|\Psi_1(\lambda_n)\right|\geq C_2K\sqrt{|A|N_{d^*}}\right)\leq 2e^{-4N_{d^*}}
    \]
    for some constant $C_2>0$. Taking the union bound over all $A\subseteq[N_{d^*}]$, we have that
    \begin{equation}\label{eqn:partial-conc}
        \left|\sum_{i\in A}\log X_i-|A|\Psi_1(\lambda_n)\right|\leq C_2K\sqrt{|A|N_{d^*}}
    \end{equation}
    for all $A\subseteq[N_{d^*}]$, with probability at least $1-2e^{-3N_{d^*}}$, conditioned on $\Gamma_{\leq d^*-1}$. Now we write
    \[
        \sum_{i\in A}\log X_i=\sum_{i=1}^{N_{d^*}}\log X_i-\sum_{i\notin A}\log X_i\,.
    \]
    Under the events \eqref{eqn:total-conc} and \eqref{eqn:partial-conc}, we get
    \[
        \left|\sum_{i\in A}\log X_i-|A|\Psi_1(\lambda_n)\right|\leq KN_{d^*}^{3/4}+C_2K\sqrt{(N_{d^*}-|A|)N_{d^*}}\,.
    \]
    Thus, we conclude that
    \[
        \left|\sum_{i\in A}\log X_i-|A|\Psi_1(\lambda_n)\right|\leq C_3K\sqrt{(|A|\wedge (N_{d^*}-|A|+\sqrt{N_{d^*}}))N_{d^*}}
    \]
    for some constant $C_3>0$, with probability at least $1-2e^{-C_1\sqrt{N_{d^*}}}-2e^{-3N_{d^*}}=1-o(1)$. The conclusion now follows since $K$ is bounded by a universal constant by Lemma~\ref{lem:subg-ztp}.
    
    % For the second part, we observe that by Proposition~\ref{prop:conc} $\lambda=1$ implies $qN_{d^*-1}\pto0$ and $\lambda=0$ implies $qN_{d^*-1}\pto\infty$. Also, by Lemma~\ref{lem:dstar} and Theorem~\ref{thm:main1} we always have $qN_{d^*}\pto\infty$, which implies either $qN_{d^*-1}\pto0$ or $qN_{d^*-1}\pto\infty$ (more formally, this means for any compact interval $I\subset(0,\infty)$, $qN_{d^*-1}\notin I$ asymptotically almost surely). Thus, by \eqref{eqn:large-ztb} and \eqref{eqn:small-ztb} we have $K=o(1)$ which implies the result.
\end{proof}

\begin{lemma}\label{lem:psi-bounds}
    There is a universal constant $C>0$ such that with probability at least $1-o(1)$,
    \[
        \Psi(m;N_{d^*},\lambda_n)\leq N_{d^*}H(m/N_{d^*})+m\Psi_1(\lambda_n)+C\sqrt{(m\wedge(N_{d^*}-m+\sqrt{N_{d^*}})N_{d^*}}+o(n)
    \]
    and
    \[
        \Psi(m;N_{d^*},\lambda_n)\geq m\Psi_1(\lambda_n)-o(n)
    \]
    simultaneously for all $1\leq m\leq N_{d^*}$.
\end{lemma}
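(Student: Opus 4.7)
The plan is to treat the two inequalities separately. The lower bound is essentially trivial: since $\prod_{i=1}^m X_i$ is one of the $\binom{N_{d^*}}{m}$ summands in the definition of $\Psi$, the deterministic inequality
\[
\sum_{I \in \binom{[N_{d^*}]}{m}} \prod_{i \in I} X_i \;\geq\; \prod_{i=1}^m X_i
\]
holds, and taking $\log$ and then expectation gives $\Psi(m;N_{d^*},\lambda_n) \geq m \Psi_1(\lambda_n)$ unconditionally. This already beats the claimed $-o(n)$ slack, holds simultaneously over all $m$, and requires no good event.

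For the upper bound, I would start by bounding each summand by the maximum, yielding
\[
\log \sum_{I} \prod_{i \in I} X_i \;\leq\; \log \binom{N_{d^*}}{m} + \max_{I} \sum_{i \in I} \log X_i \;\leq\; N_{d^*} H(m/N_{d^*}) + \max_{I} \sum_{i \in I} \log X_i,
\]
so that after taking expectation the problem reduces to controlling $\E[\max_I \sum_{i \in I} \log X_i]$. Let $F$ be the good event of Lemma~\ref{lem:parconc}, which holds simultaneously over all $A \subseteq [N_{d^*}]$. On $F$, applying the lemma directly with $A = I$ gives $\sum_{i \in I} \log X_i \leq m \Psi_1(\lambda_n) + C\sqrt{m N_{d^*}}$, which handles the regime $m \leq N_{d^*}/2$. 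For $m$ close to $N_{d^*}$ I would use the ``complement trick'': writing $\sum_{i \in I} \log X_i = \sum_{i=1}^{N_{d^*}} \log X_i - \sum_{j \in [N_{d^*}] \setminus I} \log X_j$ and applying Lemma~\ref{lem:parconc} to both the full sum $A = [N_{d^*}]$ (contributing $C\sqrt{N_{d^*}^{3/2}}$) and the complement $J = [N_{d^*}] \setminus I$ of size $N_{d^*} - m$ yields the companion bound $\sum_{i \in I} \log X_i \leq m \Psi_1(\lambda_n) + C' \sqrt{(N_{d^*} - m + \sqrt{N_{d^*}}) N_{d^*}}$. Taking the minimum of the two bounds then produces exactly the asymmetric factor $m \wedge (N_{d^*} - m + \sqrt{N_{d^*}})$.

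Off the good event $F$, I would use the trivial inequality $\max_I \sum_{i \in I} \log X_i \leq \sum_i \log X_i$ (valid because $X_i \geq 1$, so all summands are nonnegative), combined with Cauchy--Schwarz. By Lemma~\ref{lem:subg-ztp}, $\log X_1$ has bounded subgaussian norm, which gives the moment estimate $\E[(\sum_i \log X_i)^2] = O((N_{d^*} \log\log n)^2)$. Inspecting the proof of Lemma~\ref{lem:parconc} reveals $\Pr(F^c) \leq e^{-\Omega(\sqrt{N_{d^*}})}$, so the off-event contribution to the expectation is at most $O(N_{d^*} \log\log n) \cdot e^{-\Omega(\sqrt{N_{d^*}})} = o(1)$, comfortably absorbed into the $o(n)$ slack. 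Since $F$ does not depend on $m$, the resulting upper bound is automatically simultaneous in $m$.

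The main obstacle will be the asymmetric factor $m \wedge (N_{d^*} - m + \sqrt{N_{d^*}})$: a naive application of Lemma~\ref{lem:parconc} directly to $I$ produces only $\sqrt{m N_{d^*}}$, which is insufficient when $m$ is very close to $N_{d^*}$ (i.e.\ $N_{d^*} - m \ll m$), so the complement trick is essential to recover the correct tight dependence there. A secondary subtlety is that the off-good-event contribution to the expectation must really be $o(1)$ and not merely $o(n)$, which forces us to couple the exponentially small $\Pr(F^c)$ with the polynomial moment bound on $\sum_i \log X_i$ via Cauchy--Schwarz rather than a cruder tail estimate.
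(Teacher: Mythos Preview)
Your proposal is correct. The lower bound is essentially the same as the paper's (the paper uses Jensen averaging over all $I$ rather than picking a single term, but both immediately give $\Psi \geq m\Psi_1$ exactly, with no $o(n)$ loss). For the upper bound you take a slightly different and more elementary route: both approaches start from Lemma~\ref{lem:parconc} to bound $\mathsf{LSE}_m$ with high probability, but the paper then transfers this to a bound on the expectation $\Psi = \E[\mathsf{LSE}_m]$ by invoking Lemma~\ref{lem:lse2} (the uniform concentration of $\mathsf{LSE}_m$ around its mean via the $W_1$ distance of the empirical measure), whereas you instead split $\E[\max_I S_I]$ on the good event $F$ and kill the bad-event piece by Cauchy--Schwarz against the exponentially small failure probability. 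Your route avoids the empirical-measure machinery of Lemma~\ref{lem:lse2}, at the cost of having to look inside the proof of Lemma~\ref{lem:parconc} to extract the quantitative bound $\Pr(F^c)\le e^{-\Omega(\sqrt{N_{d^*}})}$. One minor simplification: Lemma~\ref{lem:parconc} as stated already delivers the full factor $\sqrt{(m\wedge(N_{d^*}-m+\sqrt{N_{d^*}}))N_{d^*}}$ directly for every $A$ --- the complement trick you describe is precisely how that lemma is \emph{proved} --- so on $F$ you can cite it once rather than redoing the two-regime split.
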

\begin{proof}
    Lemma~\ref{lem:parconc} implies that
    \[
        \begin{split}
            \mathsf{LSE}_m(\log X_1,\cdots,\log X_{N_{d^*}}) &\leq \log\binom{N_{d^*}}{m}+m\Psi_1(\lambda_n)+C\sqrt{(m\wedge(N_{d^*}-m+\sqrt{N_{d^*}})N_{d^*}}\\
            &\leq N_{d^*}H(m/N_{d^*})+m\Psi_1(\lambda_n)+C\sqrt{(m\wedge(N_{d^*}-m+\sqrt{N_{d^*}})N_{d^*}}\,.
        \end{split}
    \]
    simultaneously for all $m$, with probability $1-o(1)$. By Lemma~\ref{lem:lse2} $\mathsf{LSE}_m$ concentrates around its mean given that the distribution is subgaussian, so this in turn implies
    \[
        \Psi(m;N_{d^*},\lambda_n)\leq N_{d^*}H(m/N_{d^*})+m\Psi_1(\lambda_n)+C\sqrt{(m\wedge(N_{d^*}-m+\sqrt{N_{d^*}})N_{d^*}}+o_p(n)
    \]
    simultaneously for all $m$. On the other hand, Jensen's inequality gives
    \[
        \begin{split}
            \mathsf{LSE}_m(\log X_1,\cdots,\log X_{N_{d^*}}) &= \log\left(\frac{1}{\binom{N_{d^*}}{m}}\sum_{I\in\binom{[N_{d^*}]}{m}}\exp\left(\sum_{i\in I}\log X_i\right)\right)+\log\binom{N_{d^*}}{m}\\
            &\geq\frac{\binom{N_{d^*}-1}{m-1}}{\binom{N_{d^*}}{m}}\sum_{i=1}^{N_{d^*}}\log X_i\\
            &=\frac{m}{N_{d^*}}\sum_{i=1}^{N_{d^*}}\log X_i
        \end{split}
    \]
    which implies
    \[
        \Psi(m;N_{d^*},\lambda_n)\geq m\Psi_1(\lambda_n)-o_p(n)\,.
    \]
\end{proof}

Now we prove Theorem~\ref{thm:logz-formula}. Our objective function is
\begin{equation}\label{eqn:gibbs-opt-obj}
    f(x)=\Psi(x;N_{d^*},\lambda_n)+(n-x)\log(qx)+\overline{\beta}x-\overline{\beta}(d^*+1)n
\end{equation}
with domain $\mathcal{X}=\{1,\cdots,N_{d^*}\}$. We also define
\begin{equation}\label{eqn:gibbs-opt-obj-ub}
    f_u(x):=g(x)+h(x)
\end{equation}
where
\[
    g(x)=\Psi_1(\lambda_n)x+(n-x)\log(qx)+\overline{\beta}x-\overline{\beta}(d^*+1)n
\]
and
\[
    h(x)=N_{d^*}H(x/N_{d^*})+C\sqrt{\left(x\wedge\left(N_{d^*}-x+\sqrt{N_{d^*}}\right)\right)N_{d^*}}\,.
\]
Note that $f_u(x)$ is a concave function. By Lemma~\ref{lem:psi-bounds} we have with high probability
\begin{equation}\label{eqn:f-sandwich}
    g(x)-o(n)\leq f(x)\leq f_u(x)+o(n)
\end{equation}
for all $x\in\mathcal{X}$. Our proof strategy is as follows: instead of directly optimizing $f(x)$, we find a set $I\subseteq\mathcal{X}$ where $f_u(x)$ attains its maximum $f_u(x_u^*)$, and prove that $h(x_u^*)=o(n)$. This implies $|f(x_u^*)-g(x_u^*)|=o(n)$ so $g(x_u^*)$ can be a good approximation for the maximum of $f(x)$ asymptotically almost surely.

Before we start the main proof, we note that Lemma~\ref{lem:subg-ztp} gives
\begin{equation}\label{eqn:reg2}
    \left|\Psi_1(\lambda_n)-(\Delta_n\vee0)\log(nq)\right|=O(1)
\end{equation}
with high probability. This further implies
\begin{equation}\label{eqn:psi-delta}
    \left|\Psi_1(\lambda_n)-\Delta_n\log(nq)\right|=O(\log\log\log n)\,.
\end{equation}

\paragraph{Case 1: low temperature phase.} To deal with the phase \eqref{eqn:low-temp-phase}, we consider $x=(1-\epsilon\kappa_n^{-1})N_{d^*}$ where $0\leq\epsilon\leq(\kappa_n\wedge1)/2$. Note that
\begin{equation}\label{eqn:ltp-nds-pert}
    \begin{split}
        (1-\epsilon\kappa_n^{-1})N_{d^*}&=N_{d^*}-\frac{\epsilon N_{d^*}}{(1-\lambda_n)\log\log n}\\
        &=N_{d^*}-\frac{\epsilon n}{\log\log n}-o\left(\frac{n}{\log\log n}\right)\\
        &\geq N_{d^*}-\frac{n}{\log\log n}-o\left(\frac{n}{\log\log n}\right)\,.
    \end{split}
\end{equation}
This readily implies $h((1-\epsilon\kappa_n^{-1})N_{d^*})=o(n)$. For $g((1-\epsilon\kappa_n^{-1})N_{d^*})$ we have
\begin{equation}\label{eqn:ltp-gdiff}
    \begin{split}
        &g(N_{d^*})-g((1-\epsilon\kappa_n^{-1})N_{d^*})\\
        &= \epsilon\kappa_n^{-1}N_{d^*}(\Psi_1(\lambda_n)-\log(qN_{d^*})+\overline{\beta})-(n-(1-\epsilon\kappa_n^{-1})N_{d^*})\log(1-\epsilon\kappa_n^{-1})\\
        &\geq \epsilon\kappa_n^{-1}N_{d^*}(\Psi_1(\lambda_n)-\log(qN_{d^*})+\overline{\beta})-n\log(1-\epsilon\kappa_n^{-1})\\
        &=\epsilon(\beta-1+\Delta_n+\kappa_n^{-1})n-(\epsilon\kappa_n^{-1}+\log(1-\epsilon\kappa_n^{-1}))n-o(n)
    \end{split}
\end{equation}
where the $o(n)$ term does not depend on $\epsilon$. Since $\beta\geq\limsup_{n\to\infty}(1-\Delta_n-\kappa_n^{-1})$ implies $\beta-1+\Delta_n+\kappa_n^{-1}\geq-o(1)$, this gives
\[
    f_u(N_{d^*})\geq f_u((1-\epsilon\kappa_n^{-1})N_{d^*})-o(n)
\]
for all $\epsilon$. Now we argue that the maximum of $f_u(x)$ is attained at some $\epsilon$. Since $f_u(x)$ is concave, it suffices to show that there is some $0<\epsilon_0<\kappa_n\wedge 1$ such that $f_u(N_{d^*})>f_u((1-\epsilon_0\kappa_n^{-1})N_{d^*})$. We prove that $\epsilon_0=(\kappa_n\wedge1)/2$ satisfies our desired property by further splitting the low temperature phase into the following two regimes.

\begin{itemize}
    \item Suppose that $\beta>\limsup_{n\to\infty}(1-\Delta_n-\kappa_n^{-1})$. This implies that there exists a $\delta>0$ such that $\beta-1+\Delta_n+\kappa_n^{-1}>\delta$ for all large $n$. If $\kappa_n\geq1$, then $\epsilon_0=1/2$ so
    \[
        \epsilon(\beta-1+\Delta_n+\kappa_n^{-1})n\geq \frac{\delta}{2}n\,.
    \]
    Otherwise, if $\kappa_n<1$, then $\epsilon_0=\kappa_n/2$ so
    \[
        \begin{split}
            -(\epsilon_0\kappa_n^{-1}+\log(1-\epsilon_0\kappa_n^{-1}))n &= \left(-\frac{1}{2}+\log 2\right)n\\
            &\geq\frac{1}{6}n\,.
        \end{split}
    \]
    This implies that
    \[
        g(N_{d^*})-g((1-\epsilon_0\kappa_n^{-1})N_{d^*})\geq\left(\frac{\delta}{2}\wedge\frac{1}{6}\right)n-o(n)\,.
    \]
    In particular, $f_u(N_{d^*})> f_u((1-\epsilon_0\kappa_n^{-1})N_{d^*})$ for all sufficiently large $n$.

    \item Otherwise, we have $\beta=\limsup_{n\to\infty}(1-\Delta_n-\kappa_n^{-1})$ and $\limsup_{n\to\infty}\kappa_n=K<\infty$ for a constant $K$. Then $\liminf_{n\to\infty}\kappa_n^{-1}=1/K$, so by substituting $\epsilon$ with $\epsilon_0=(\kappa_n\wedge1)/2$ we have $\epsilon_0\kappa_n^{-1}\geq\frac{1}{2}\wedge\frac{1}{4K}$ which implies that $-\epsilon_0\kappa_n^{-1}-\log(1-\epsilon_0\kappa_n^{-1})$ is bounded away from zero. Thus, similar to the preceding regime, we get
    \[
        g(N_{d^*})-g((1-\epsilon_0\kappa_n^{-1})N_{d^*})=\Omega(n)
    \]
    and $f_u(N_{d^*})> f_u((1-\epsilon_0\kappa_n^{-1})N_{d^*})$ for all large $n$.
\end{itemize}

Therefore, we have proved that $\max_{x\in\mathcal{X}}f(x)=f_u(N_{d^*})+o_p(n)$. Since $N_{d^*}\Psi_1(\lambda_n)=n\Psi_0(\lambda_n)+o_p(n)$, this gives the formula
\[
    \Phi_{G,\beta}=n\Psi_0(\lambda_n)+\lambda_nn\log(\alpha_n(1-\lambda_n)\log n)-\overline{\beta}(d^*+\lambda_n)n\pm o_p(n)\,.
\]

\paragraph{Case 2: high temperature phase.} In the high temperature phase \eqref{eqn:high-temp-phase}, we can pick a constant $\delta>0$ such that $\beta<1-\Delta_n-\kappa_n^{-1}-\delta$ for all sufficiently large $n$. This also implies $\kappa_n>1$. Here, we consider $x=(1+\epsilon)\frac{\kappa_n^{-1}N_{d^*}}{1-\Delta_n-\beta}$ for $-1/2\leq \epsilon\leq\delta$. Note that
\[
    (1+\delta)\frac{\kappa_n^{-1}N_{d^*}}{1-\Delta_n-\beta}<\frac{(\kappa_n^{-1}+\delta\kappa_n^{-1})N_{d^*}}{\kappa_n^{-1}+\delta}<N_{d^*}\,.
\]
Also, we have
\begin{equation}\label{eqn:htp-nds-pert}
    \begin{split}
        \frac{\kappa_n^{-1}N_{d^*}}{1-\Delta_n-\beta} &=\frac{N_{d^*}}{(1-\Delta_n-\beta)(1-\lambda_n)\log\log n}\\
        &=\frac{n}{(1-\Delta_n-\beta)\log\log n}+o\left(\frac{n}{\log\log n}\right)\,.
    \end{split}
\end{equation}
This gives $h\left((1+\epsilon)\frac{\kappa_n^{-1}N_{d^*}}{1-\Delta_n-\beta}\right)=o(n)$. Moreover,
\begin{equation}\label{eqn:htp-gdiff}
    \begin{split}
        &g\left(\frac{\kappa_n^{-1}N_{d^*}}{1-\Delta_n-\beta}\right)-g\left(\frac{(1+\epsilon)\kappa_n^{-1}N_{d^*}}{1-\Delta_n-\beta}\right)\\
        &= \frac{\epsilon\kappa_n^{-1}N_{d^*}}{1-\Delta_n-\beta}\left(\Psi_1(\lambda_n)-\log\left(\frac{\kappa_n^{-1}N_{d^*}q}{1-\Delta_n-\beta}\right)+\overline{\beta}\right)-\left(n-\frac{(1+\epsilon)\kappa_n^{-1}N_{d^*}}{1-\Delta_n-\beta}\right)\log(1+\epsilon)\\
        &\geq\frac{\epsilon\kappa_n^{-1}N_{d^*}}{1-\Delta_n-\beta}\left(\Psi_1(\lambda_n)-\log\left(\frac{\kappa_n^{-1}N_{d^*}q}{1-\Delta_n-\beta}\right)+\overline{\beta}\right)-n\log(1+\epsilon)\\
        &=(\epsilon-\log(1+\epsilon))n-o(n)\,.
    \end{split}
\end{equation}
This implies
\[
    f_u\left(\frac{\kappa_n^{-1}N_{d^*}}{1-\Delta_n-\beta}\right)\geq f_u\left((1+\epsilon)\frac{\kappa_n^{-1}N_{d^*}}{1-\Delta_n-\beta}\right)-o(n)
\]
for all $\epsilon$. Also, by plugging in $\epsilon=\epsilon_1=-1/2$ or $\epsilon=\epsilon_2=\delta$, we have for sufficiently large $n$
\[
    \begin{split}
        f_u\left(\frac{\kappa_n^{-1}N_{d^*}}{1-\Delta_n-\beta}\right)&\geq f_u\left((1+\epsilon_i)\frac{\kappa_n^{-1}N_{d^*}}{1-\Delta_n-\beta}\right)+\Omega(n) \\
        &>f_u\left((1+\epsilon_i)\frac{\kappa_n^{-1}N_{d^*}}{1-\Delta_n-\beta}\right)
    \end{split}
\]
for either $\epsilon_i$, so by the concavity of $f_u$ it follows that the maximum is attained in $(\epsilon_1,\epsilon_2)$. Therefore, an approximate maximum can be obtained by $\epsilon=0$ leading to
\[
    \Phi_{G,\beta}=n\log\left(\frac{\alpha_n\log n}{(1-\Delta_n-\beta)\log\log n}\right)-n-\overline{\beta}(d^*+1)n\pm o_p(n)\,.
\]

\section{Analysis of Gibbs measures over trees}\label{sec:gibbs2}

Building on top of our analysis of the log partition function $\log Z_{G,\beta}$, we continue on analyzing the Gibbs measures $\mu_{G,\beta}$. We show the existence of different forms of phase transitions in the system by analyzing the trajectory of $\mu_{G,\beta}$ in the space of probability measures equipped with $1$-Wasserstein metric. We also compute several thermodynamic quantities such as free energy density and Franz--Parisi potential, which let us give a picture of the optimization landscape from a statistical physics perspective.

\paragraph{Entropy and transportation.} 
We will make use of \emph{entropy-transport} inequalities, specifically:
\begin{theorem}[Corollary 4.16, Example 4.17 of \cite{van2014probability}]\label{thm:entropy-transport}
Suppose that $\mu_i$ is a probability measure on metric space $(\mathcal X_i,d_i)$ for $i = 1$ to $n$ and that for some $\sigma > 0$, each satisfies the $W_1$-entropy transport inequality $W_1(\nu,\mu_i) \le \sqrt{2 \sigma^2 \KL(\nu, \mu_i)}$ for all $\nu$. Then for all probability measures $\nu$ on $\bigotimes_i \mathcal X_i$ and letting $\mu = \bigotimes_i \mu_i$, we have\footnote{We use a different normalization convention for the Hamming metric compared to \cite{van2014probability}.}
\[ W_1(\nu, \mu) \le \sqrt{2 \sigma^2 n \KL(\nu, \mu)} \]
where we equip $\bigotimes_i \mathcal X_i$ with the metric $d(x,y) = \sum_i d_i(x_i,y_i)$. 

In particular, by combining the above with Pinsker's inequality, this holds with $\sigma^2 = 1$ if $d_i(x,y) = 1(x \ne y)$, in which case $d$ is the Hamming metric. 
\end{theorem}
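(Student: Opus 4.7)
The plan is to prove the tensorization by combining the one-dimensional hypothesis with the chain rule for relative entropy and a sequential (``greedy'') coupling. The main ingredient is that on the product space $\bigotimes_i \mathcal{X}_i$ equipped with the sum metric $d$, the Wasserstein distance tensorizes by an inequality (rather than an equality as in Proposition~\ref{prop:prod-coupling}, which required both measures to be products), namely $W_1(\nu,\mu) \le \sum_i \mathbb{E}_{\nu}\bigl[ W_1(\nu_{i \mid <i}, \mu_i) \bigr]$ where $\nu_{i \mid <i}$ is the conditional of $\nu$ on coordinate $i$ given coordinates $1,\ldots,i-1$.

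First I would factor $\nu$ via its chain of conditionals $\nu = \nu_1 \otimes \nu_{2 \mid 1} \otimes \cdots \otimes \nu_{n \mid <n}$ and construct a coupling $\Pi$ of $\nu$ with $\mu = \bigotimes_i \mu_i$ sequentially: draw $X \sim \nu$; then, conditionally on $X_1,\ldots,X_{i-1}$, sample $Y_i$ from the optimal $W_1$-coupling of $\nu_{i \mid <i}(\cdot \mid X_1,\ldots,X_{i-1})$ with $\mu_i$, independently across $i$ given the $X$'s. (One can take these optimal couplings to be jointly measurable by a standard selection argument.) Since the $Y_i$'s are conditionally independent given $X$ and each has marginal $\mu_i$, the vector $Y$ is marginally distributed as $\mu$, so $\Pi$ is a valid coupling. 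Evaluating $\mathbb{E}_\Pi[d(X,Y)]$ gives the tensorization inequality above.

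Next I would invoke the chain rule for KL divergence,
\[
\KL(\nu,\mu) = \sum_{i=1}^n \mathbb{E}_{\nu}\bigl[\KL\bigl(\nu_{i \mid <i}(\cdot \mid X_1,\ldots,X_{i-1}),\ \mu_i\bigr)\bigr],
\]
and apply the hypothesis $W_1(\cdot,\mu_i) \le \sqrt{2\sigma^2 \KL(\cdot,\mu_i)}$ inside each conditional expectation to get
\[
W_1(\nu,\mu) \le \sum_{i=1}^n \mathbb{E}_{\nu}\!\left[\sqrt{2\sigma^2 \KL\bigl(\nu_{i \mid <i},\mu_i\bigr)}\right].
\]
Then two applications of Jensen/Cauchy--Schwarz finish the argument: concavity of $\sqrt{\cdot}$ pulls the expectation inside the square root on each term, and the Cauchy--Schwarz inequality $\sum_{i=1}^n a_i \le \sqrt{n \sum_i a_i^2}$ combines the $n$ terms to yield $W_1(\nu,\mu) \le \sqrt{2\sigma^2 n \KL(\nu,\mu)}$.

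The only step that requires real care is the construction of the sequential coupling and checking that it has the correct marginals; everything else is arithmetic. For the particular case of the Hamming metric, Pinsker's inequality $W_1(\nu,\mu_i) = \mathrm{TV}(\nu,\mu_i) \le \sqrt{\tfrac{1}{2}\KL(\nu,\mu_i)}$ supplies the one-dimensional hypothesis with $\sigma^2 = 1$, giving the stated consequence.
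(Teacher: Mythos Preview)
The paper does not supply its own proof of this theorem; it simply cites Corollary~4.16 and Example~4.17 of \cite{van2014probability} and then uses the result (via Corollary~\ref{cor:high-entropy-wasserstein}). Your argument is correct and is precisely the standard Marton-type tensorization that appears in the cited reference: the sequential coupling bounds $W_1(\nu,\mu)$ by $\sum_i \mathbb{E}_\nu[W_1(\nu_{i\mid <i},\mu_i)]$, the chain rule expresses $\KL(\nu,\mu)$ as the corresponding sum of conditional divergences, and Jensen plus Cauchy--Schwarz close the loop. The verification that the greedy coupling has the correct second marginal $\mu$ is also right (each $Y_i$ has conditional law $\mu_i$ given $X_{<i}$, independent of $X_{<i}$, so inductively $Y$ is distributed as $\bigotimes_i \mu_i$). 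One minor remark: Pinsker actually gives $\sigma^2 = 1/4$ in the stated parametrization, so $\sigma^2 = 1$ is a (harmless) slackening, consistent with the paper's phrasing.
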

\begin{corollary}[High entropy $\;\Longrightarrow\;$ Wasserstein closeness to the uniform measure]
\label{cor:high-entropy-wasserstein}
Let $\mathcal X_1,\dots,\mathcal X_n$ be finite sets and write
\[
S \;=\;\prod_{i=1}^n \mathcal X_i,
\qquad\qquad
\mu \;=\;\bigotimes_{i=1}^n \mu_i,
\]
where each $\mu_i$ is the uniform measure on~$\mathcal X_i$.  
Equip $S$ with the \emph{(unnormalised) Hamming metric}
\[
d(x,y) \;=\; \sum_{i=1}^n \mathbf 1\!\bigl(x_i \neq y_i\bigr),
\qquad x=(x_1,\dots,x_n),\;y=(y_1,\dots,y_n)\in S.
\]
For any probability measure $\nu$ on $S$,
% \[
% W_1(\nu,\mu)
% \;\le\;
% \sqrt{\,2n\!\left(\log |S| - H_\nu(X)\right)}.
% \tag{C.1}
% \]
% In particular, if $\nu$ has \emph{$\varepsilon$-near-maximal entropy}, i.e. 
such that $\nu$ has $\varepsilon$-near-maximal entropy, i.e., 
\[
H_\nu(X)\;\ge\;\log |S| - \varepsilon
\quad(\varepsilon>0),
\]
we have
\[
W_1(\nu,\mu)
\;\le\;
\sqrt{2n\,\varepsilon}.
\]
\end{corollary}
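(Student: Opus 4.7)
The plan is to combine the entropy-transport inequality of Theorem~\ref{thm:entropy-transport} with the elementary identity relating the relative entropy against a uniform measure to the Shannon entropy, and then invoke the near-maximal entropy hypothesis.

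First, I would observe that since each $\mu_i$ is the uniform measure on the finite set $\mathcal{X}_i$ equipped with the discrete $0/1$ metric $d_i(x,y) = \mathbf{1}(x \ne y)$, Pinsker's inequality supplies the $W_1$-entropy transport inequality with $\sigma^2 = 1$ for each factor, exactly as in Example 4.17 cited in Theorem~\ref{thm:entropy-transport}. The product metric $d(x,y) = \sum_i d_i(x_i,y_i)$ is exactly the unnormalized Hamming metric on $S$. Thus Theorem~\ref{thm:entropy-transport} applies and yields the tensorized inequality
\[
W_1(\nu,\mu) \;\le\; \sqrt{2 n \, \KL(\nu,\mu)}.
\]

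Next, I would rewrite the KL term using the standard identity for a uniform reference measure on a finite set: since $\mu(x) = 1/|S|$ for every $x \in S$,
\[
\KL(\nu,\mu) = \mathbb{E}_{X \sim \nu}\!\left[\log \frac{\nu(X)}{\mu(X)}\right] = \log|S| - H_\nu(X),
\]
as already recorded in the excerpt immediately after the definition of KL divergence. The near-maximal entropy hypothesis $H_\nu(X) \ge \log|S| - \varepsilon$ then gives $\KL(\nu,\mu) \le \varepsilon$. Plugging this into the tensorized entropy-transport bound yields $W_1(\nu,\mu) \le \sqrt{2 n \varepsilon}$, as required.

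There is essentially no obstacle here, since the corollary is a direct combination of (i) the tensorization statement already proved as Theorem~\ref{thm:entropy-transport} with $\sigma^2 = 1$, and (ii) the trivial conversion between relative entropy against the uniform and Shannon entropy. The only minor point worth being careful about is matching normalization conventions for the Hamming metric, but since the theorem statement and the corollary both use the unnormalized form $d(x,y) = \sum_i \mathbf{1}(x_i \ne y_i)$, no rescaling factor of $n$ is introduced.
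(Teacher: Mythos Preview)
Your proposal is correct and follows essentially the same approach as the paper's own proof: apply Theorem~\ref{thm:entropy-transport} with $\sigma^2=1$ (via Pinsker on each factor) to get $W_1(\nu,\mu)\le\sqrt{2n\,\KL(\nu,\mu)}$, then use the identity $\KL(\nu,\mu)=\log|S|-H_\nu(X)$ for uniform $\mu$ together with the entropy hypothesis. Your remark about matching normalization conventions is a fair sanity check, though it does not appear in the paper.
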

\begin{proof}
Because each $\mu_i$ is uniform on the finite set $\mathcal X_i$, it satisfies the $W_1$–entropy transport inequality with parameter $\sigma^2=1$ under the $0$–$1$ metric $d_i(x,y)=\mathbf 1(x\neq y)$ (cf.\ the remark following Theorem~\ref{thm:entropy-transport}).  
Applying Theorem~\ref{thm:entropy-transport} to the product space $(S,d)$ therefore gives
\[
W_1(\nu,\mu) \;\le\; \sqrt{2n\,\KL(\nu,\mu)}.
\]
For the uniform measure $\mu$ on $S$, we recall the standard identity
\(
\KL(\nu,\mu)=\E_{\nu}\left[\log\frac{\nu}{\mu}\right]=\log|S|-H_\nu(X).
\)
Substituting this into the previous inequality yields the result.
\end{proof}

\subsection{Structure of the Gibbs measures}\label{sec:gibbs-w1}

In Section~\ref{sec:partition-function}, we have constructed measures $\hat{\mu}_{G,m}$ that approximates $\mu_{G,\beta}$ in KL divergence. Built on these results, we analyze the geometric structure of a typical spanning tree under the Gibbs measures, with respect to the $1$-Wasserstein metric. Our main goal is to prove Theorem~\ref{thm:gibbs-low-phase} and Theorem~\ref{thm:gibbs-high-phase}, which state that the Gibbs measures are Wasserstein close to the uniform measure over the set of shortest path trees in the low temperature phase, and far from the uniform measure in the high temperature phase.

\subsubsection{Uniform measure as ground state}\label{sec:ground-state}
As a warm-up, we analyze the uniform measure $\mu_{G,\infty}$ over the shortest path trees of $\overline{G}$. In statistical physics terminology, this is referred to as the \emph{ground state}. In this section, we compute the ground state energy and entropy of our system, which is quite similar to the computation in the proof of Proposition~\ref{prop:fixed-kernel} in Section~\ref{sec:gibbs-kernel-proof}.

The ground state energy of the system is equivalent to the sum of $\mathsf{d}(1,v)$ over all $v$ in the giant component of $G$.

\begin{theorem}[Ground state energy]\label{thm:ground-state-energy}
    We have
    \[
        \sum_{v\in\overline{V}}\mathsf{d}_{G_n}(1,v)=(d_n^*+\lambda_n)n+o_p\left(\frac{n}{\log\log n}\right)
    \]
\end{theorem}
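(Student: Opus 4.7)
The plan is to decompose the sum by BFS layer. Writing $N_d = |\Gamma_d^{G_n}(1)|$ for the number of vertices at distance exactly $d$ from the root in the giant component, a direct rearrangement gives
\[
  \sum_{v\in\overline V}\mathsf d_{G_n}(1,v) \;=\; d_n^*\,|\overline V| \;+\; N_{d_n^*+1} \;+\; \sum_{d\ge d_n^*+2}(d-d_n^*)\,N_d \;-\; \sum_{d\le d_n^*-1}(d_n^*-d)\,N_d.
\]
By the second half of Theorem~\ref{thm:main1-tight}, $N_{d_n^*+1} = \lambda_n n + o_p(n/\log\log n)$, so it suffices to prove that the last two sums and the correction $d_n^*(n-|\overline V|)$ are each $o_p(n/\log\log n)$.

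\textbf{Shallow layers.} A naive bound $d_n^*\cdot|\Gamma_{\le d_n^*-1}|$ is too weak by a factor of order $\log n$, so I would instead use $N_d \le 2(nq)^d$ uniformly for $d<d_n^*$ from Proposition~\ref{prop:conc}. The sum $\sum_{d\le d_n^*-1}(d_n^*-d)(nq)^d$ is then geometric in $nq\gg1$ and dominated by the $d=d_n^*-1$ term, yielding a bound of order $(nq)^{d_n^*-1}$. By the definition of $d_n^*$, $(nq)^{d_n^*-1}\le n/((\log\log n)^2\cdot nq)\cdot nq = O(n/(\log\log n)^2)$, hence this contribution is $o(n/\log\log n)$.

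\textbf{Deep layers and the giant component correction.} The plan here is to combine the diameter bound of Theorem~\ref{thm:diameter}, which gives the diameter $D=(1+o(1))\ell_n^*=O(\log n/\log\log n)$, with a tight estimate on the size of $V\setminus\Gamma_{\le d_n^*+1}$, which contains both vertices outside $\overline V$ and vertices in $\overline V$ of depth $\ge d_n^*+2$. Conditionally on $\Gamma_{\le d_n^*}$, each vertex outside $\Gamma_{\le d_n^*}$ fails to connect to $\Gamma_{d_n^*}$ with probability at most $(1-q)^{N_{d_n^*}}\le e^{-nq/(4(\log\log n)^2)}$, using $N_{d_n^*}\ge n/(4(\log\log n)^2)$ from Theorem~\ref{thm:main1-tight} exactly as in \eqref{eqn:dp2small}. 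A Chernoff/Markov argument then yields $|V\setminus\Gamma_{\le d_n^*+1}| \le n\cdot n^{-\alpha_n/(4(\log\log n)^2)}$ asymptotically almost surely. Thus the deep sum is bounded by $D\cdot|V\setminus\Gamma_{\le d_n^*+1}|$, and the correction $d_n^*(n-|\overline V|)$ is bounded similarly since $\{v\notin\overline V\}\subseteq\{v\notin\Gamma_{\le d_n^*+1}\}$.

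\textbf{Main obstacle.} The trickiest step is that $D=\Theta(\log n/\log\log n)$ is not much smaller than $d_n^*$, so the naive pairing of diameter with a polynomially small vertex count is too weak. One must verify quantitatively that the exponent $\alpha_n\log n/(4(\log\log n)^2)$ is large enough to beat the $\log(Dn) = O(\log n)$ factor needed to conclude $D\cdot|V\setminus\Gamma_{\le d_n^*+1}| = o(n/\log\log n)$; concretely this reduces to $\log n/(\log\log n)^3\to\infty$, which is immediate but is the place where the buffer built into the definition of $d_n^*$ (the $(\log\log n)^2$ in \eqref{eqn:dstar}) plays its role. Combining all bounds yields $\sum_{v}\mathsf d_{G_n}(1,v) = d_n^* n + \lambda_n n + o_p(n/\log\log n)$, as claimed.
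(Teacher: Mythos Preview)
Your proposal is correct and follows essentially the same approach as the paper: both use the layer counts from Proposition~\ref{prop:conc} and Theorem~\ref{thm:main1-tight} for the bulk, and a subpolynomial bound on $|V\setminus\Gamma_{\le d_n^*+1}|$ combined with the diameter estimate for the deep tail. The paper packages the argument as separate upper and lower bounds, recycling the kernel machinery (specifically \eqref{eqn:gibbs-energy-ub} via Lemma~\ref{lem:gamma-a}) for the upper bound, whereas your single exact decomposition is self-contained and avoids that detour; otherwise the ingredients and logic are the same.
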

\begin{proof}
    An upper bound is already proved in \eqref{eqn:gibbs-energy-ub}:
    \[
        \sum_v\mathsf{d}_G(1,v)\leq (d^*+1)n-N_{d^*}+o_p\left(\frac{n}{\log\log n}\right)\,.
    \]
    For the lower bound, we have
    \[
        \begin{split}
            \sum_v\mathsf{d}_G(1,v) &\geq N_{d^*-1}(d^*-1)+N_{d^*}d^*+|\overline{V}\setminus\Gamma_{\leq d^*}|(d^*+1)\\
            &=|\overline{V}\setminus\Gamma_{\leq d^*-2}|(d^*+1)-2N_{d^*-1}-N_{d^*}\,.
        \end{split}
    \]
    By Proposition~\ref{prop:conc}, $N_{d^*-1}$ concentrates around $(nq)^{d^*-1}=O_p(n/(\log\log n)^2)$. Also applying \eqref{eqn:gibbs-res-bound}, we get the desired lower bound
    \[
        \sum_v\mathsf{d}_G(1,v)\geq (d^*+1)n-N_{d^*}-o_p\left(\frac{n}{\log\log n}\right)\,.
    \]
    Now applying Theorem~\ref{thm:main1-tight}, we obtain
    \[
        \begin{split}
            \sum_v\mathsf{d}_G(1,v) &= (d^*+1)n-(1-\lambda_n)n+o_p\left(\frac{n}{\log\log n}\right)\\
            &=(d^*+\lambda_n)n+o_p\left(\frac{n}{\log\log n}\right)\,.
        \end{split}
    \]
\end{proof}

The ground state entropy is just the log of the number of shortest path trees. In terms of the conditional Gibbs measure \eqref{eqn:gibbs-partition-kernel-size}, we have
\[
    Z_{G,\beta}\|_{N_{d^*}}=\#\{\text{shortest path trees of $\overline{G}$}\}\cdot\exp\left(-\overline{\beta}\sum_v\mathsf{d}_G(1,v)\right)
\]
so
\[
    \log(Z_{G,\beta}\|_{N_{d^*}})=\log\#\{\text{shortest path trees of $\overline{G}$}\}-\beta\log\log (n)\sum_v\mathsf{d}_G(1,v)\,.
\]
Since we know the ground state energy, the entropy can be easily obtained from $\log(Z_{G,\beta}\|_{N_{d^*}})$. By Corollary~\ref{cor:fixed-kernel-size} this is approximately $\tilde{\Phi}_{G,\beta}(N_{d^*})$ which is calculated in the low temperature part of Theorem~\ref{thm:logz-formula}. This immediately yields the following theorem.

\begin{theorem}[Entropy of the ground state]\label{thm:log-shortest-path-trees-acc}
    We have
    \[
        \log\#\{\text{shortest path trees of $\overline{G}_n$}\}=n\Psi_0(\lambda_n)+\lambda_nn\log(\alpha_n(1-\lambda_n)\log n)+o_p(n)\,.
    \]
    In particular, if $\lambda_n\to\lambda\in[0, 1]$ and $\Delta_n\to\Delta\in[0, 1]$, then
    \[
        \frac{\log\#\{\text{shortest path trees of $\overline{G}$}\}}{n\log\log n}\pto\lambda+\Delta=\begin{cases}
            \Delta&\text{if $0<\Delta\leq 1$,}\\
            \lambda&\text{if $\Delta=0$.}
        \end{cases}
    \]
\end{theorem}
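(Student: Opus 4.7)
The strategy is to express the number of shortest path trees via the conditional partition function $Z_{G,\beta}\|_{N_{d^*}}$ and combine the asymptotics already developed for it and for the ground-state energy. Observe that, by Proposition~\ref{prop:kernel-char}, the only kernel of size $N_{d^*}$ is $\Gamma_{d^*}$ itself, and the spanning trees whose kernel equals $\Gamma_{d^*}$ are exactly the shortest path trees of $\overline{G}$, each with energy $E_G=\sum_{v\in\overline V}\mathsf{d}_G(1,v)$. Hence $Z_{G,\beta}\|_{N_{d^*}}=\#\{\text{SPT}\}\cdot e^{-\overline{\beta}E_G}$, so
\[
\log\#\{\text{SPT}\}=\log(Z_{G,\beta}\|_{N_{d^*}})+\overline{\beta}E_G.
\]

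I would then fix any constant $\beta$ in the low-temperature phase (e.g.\ $\beta=2$, which exceeds $1-\Delta_n-\kappa_n^{-1}\leq 1$ for all large $n$). By Theorem~\ref{thm:main-gibbs} in combination with the low-temperature case of Theorem~\ref{thm:logz-formula},
\[
\log(Z_{G_n,\beta}\|_{N_{d_n^*}})=n\Psi_0(\lambda_n)+\lambda_n n\log(\alpha_n(1-\lambda_n)\log n)-\overline{\beta}(d_n^*+\lambda_n)n+o_p(n),
\]
while Theorem~\ref{thm:ground-state-energy} gives $\overline{\beta}E_{G_n}=\overline{\beta}(d_n^*+\lambda_n)n+o_p(n)$ (using $\overline{\beta}=\beta\log\log n$). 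Plugging both into the identity above, the $\overline{\beta}(d_n^*+\lambda_n)n$ terms cancel identically, producing the first statement. (Equivalently, one could bypass Theorem~\ref{thm:logz-formula} and work directly from \eqref{eqn:gibbs-pts}, using $\Psi(N_{d^*};N_{d^*},\lambda_n)=N_{d^*}\Psi_0(\lambda_n)/(1-\lambda_n)$ and Theorem~\ref{thm:main1-tight} to reduce to $n\Psi_0(\lambda_n)+o_p(n)$.)

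For the second statement, I would divide by $n\log\log n$ and analyze the two terms separately. Using $\log(1/\lambda_n)=(nq)^{\Delta_n}$ from \eqref{eqn:lambda} together with the standard $\E[\log(X\vee 1)]=\log(\mu\vee 1)+O(1)$ for $X\sim\Pois(\mu)$, one obtains $\Psi_0(\lambda_n)=\Delta_n\log(nq)+O(1)$ when $\Delta_n\log(nq)\geq 1$ and $\Psi_0(\lambda_n)=O(1)$ otherwise, so $\Psi_0(\lambda_n)/\log\log n\to\Delta$ in all cases. For the second term, \eqref{eqn:lambda-sandwich} yields $1-\lambda_n\geq 1/(2(\log\log n)^2)$, hence $|\log(1-\lambda_n)|=O(\log\log\log n)=o(\log\log n)$; combined with the boundedness of $\alpha_n$ from \eqref{eqn:alpha-bound} this gives $\log(\alpha_n(1-\lambda_n)\log n)=\log\log n+o(\log\log n)$ and thus $\lambda_n\log(\alpha_n(1-\lambda_n)\log n)/\log\log n\to\lambda$. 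Summing gives the limit $\lambda+\Delta$, matching the case split in the statement because $\Delta>0$ forces $\lambda=0$.

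The main (mild) obstacle is bookkeeping of error terms in two boundary regimes. When $\lambda_n\to 1$, $(1-\lambda_n)$ may be as small as $(\log\log n)^{-2}$, so one must be careful to verify that $|\log(1-\lambda_n)|=o(\log\log n)$ and that the replacement $N_{d^*}\mapsto(1-\lambda_n)n$ only costs an additive $o_p(n)$ after multiplication by $\Psi_0(\lambda_n)/(1-\lambda_n)$; when $\Delta_n\to 0$ slowly enough for $\log(1/\lambda_n)$ to remain unbounded, one must verify $\Psi_0(\lambda_n)=O(\log\log n)$. Both bounds are exactly what \eqref{eqn:lambda-sandwich} and the definition of $d^*$ deliver, so no new estimates are required beyond what is proved in Section~\ref{sec:gibbs}.
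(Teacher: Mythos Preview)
Your proposal is correct and follows essentially the same route as the paper. The paper's proof is the identity $\log(Z_{G,\beta}\|_{N_{d^*}})=\log\#\{\text{SPT}\}-\overline{\beta}E_G$, then invokes Corollary~\ref{cor:fixed-kernel-size} directly (rather than Theorems~\ref{thm:main-gibbs}/\ref{thm:logz-formula}) to identify $\log(Z_{G,\beta}\|_{N_{d^*}})$ with $\tilde\Phi_{G,\beta}(N_{d^*})$, whose explicit form is exactly the low-temperature formula; your detour through Theorem~\ref{thm:main-gibbs} works but implicitly uses the step $\log Z_{G,\beta}=\log(Z_{G,\beta}\|_{m^*})+o_p(n)$ from its proof rather than its statement, so citing Corollary~\ref{cor:fixed-kernel-size} would be cleaner.
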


\subsubsection{Distribution of distance vectors}

In Section~\ref{sec:kernels-cond-gibbs}, we studied the conditional Gibbs measures $\mu_{G,\beta}|_A$ for a fixed kernel $A$, which are building blocks for the Gibbs measures and are easier to analyze. Similarly, we start by analyzing $\mu_{G,\beta}|_A$, extending the results from the previous section where $A=N_{d^*}$. The first step is to prove that a typical spanning tree under $\mu_{G,\beta}|_A$ has near-minimum energy.

\begin{proposition}\label{prop:cond-gibbs-dv}
    For any constant $C>0$, consider a set of spanning trees of $\overline{G}$ defined by
    \begin{equation}\label{eqn:energy-opt}
        E_{G,m}=\left\{T:\sum_{v}\mathsf{d}_T(1,v)\leq (d^*+1)n-m+\frac{Cn}{\log\log n}\right\}\,.
    \end{equation}
    Let $\beta>0$ be a fixed constant. Then with probability at least $1-o(1)$, we have
    \[
        \mu_{G,\beta}|_A(E_{G,|A|})=1-\exp\left(-(C\beta-o(1))n\right)
    \]
    simultaneously for all $A\in\mathcal{K}(G)$ of size $|A|\geq m_\ell$.
\end{proposition}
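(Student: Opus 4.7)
\textbf{The plan} is a standard Chernoff/exponential tilting argument applied to the excess energy, exploiting the free-energy bounds from Section~\ref{sec:gibbs-kernel-proof}. Define the reference energy $E_0 = E_0(A) := (d^*+1)n - |A|$ and the centered variable $Y := \sum_v \mathsf{d}_T(1,v) - E_0$ under $T\sim\mu_{G,\beta}|_A$, so the bad event $E_{G,|A|}^c$ is exactly $\{Y > Cn/\log\log n\}$. A direct manipulation of the Gibbs measure gives, for any $t \in [0,\overline{\beta})$,
\[
\E_{\mu_{G,\beta}|_A}\bigl[e^{tY}\bigr] \;=\; e^{-tE_0}\cdot \frac{Z_{G,\beta - t/\log\log n}|_A}{Z_{G,\beta}|_A}.
\]

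Next, setting $t = (\beta-\eta)\log\log n$ for small $\eta>0$, I would upper bound the numerator using Lemma~\ref{lem:gibbs-cond-ub} at inverse temperature $\eta$ and lower bound the denominator using Lemma~\ref{lem:gibbs-cond-lb} at $\beta$. Both estimates have the form $H_A - \overline{\beta}' E_0 + o(n)$, where $H_A := \sum_{v\in A}\log|\mathsf{par}_G(v)| + (n-|A|)\log(|A|q)$, so the $H_A$ pieces cancel in the ratio and
\[
\log\frac{Z_{G,\eta}|_A}{Z_{G,\beta}|_A} \;\leq\; (\overline{\beta}-\overline{\eta})E_0 + o(n) \;=\; tE_0 + o(n),
\]
uniformly over $A$ with $|A|\geq m_\ell$. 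Combining with the prefactor $e^{-tE_0}$ yields $\log\E[e^{tY}] \leq o(n)$, and Markov's inequality gives $\mu_{G,\beta}|_A(E_{G,|A|}^c) \leq \exp\bigl(o(n) - (\beta - \eta)Cn\bigr)$.

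\textbf{The main obstacle} is to close the gap between the exponent $(\beta-\eta)C$ and the target $\beta C$ by sending $\eta = \eta_n \to 0$, while preserving the $o(n)$ error in Lemma~\ref{lem:gibbs-cond-ub} at the non-constant inverse temperature $\eta_n$. The dominant error there is $(n-|A|)\log(1+L/(|A|q))$ with $L = Me^{-\overline{\eta}_n}\vee\sqrt{\log n}$. Since $|A|\geq m_\ell$ forces $|A|q = \Omega(\log n/(\log\log n)^2)$, choosing $\eta_n = 3\log\log\log n/\log\log n$ gives $L/(|A|q) = O(1/\log\log n)$ uniformly in $A$, making the error $O(n/\log\log n) = o(n)$; the $o(n)$ bound from Lemma~\ref{lem:gibbs-cond-lb} at fixed $\beta$ is unaffected. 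With this choice $(\beta-\eta_n)Cn = C\beta n - o(n)$, and the Markov bound becomes $\exp(-(C\beta - o(1))n)$ simultaneously for all $A\in\mathcal{K}(G)$ with $|A|\geq m_\ell$, as required.
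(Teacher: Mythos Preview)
Your argument is correct and takes a genuinely different route from the paper. The paper proceeds combinatorially: it partitions the bad event $\overline{E}_{G,A}^{\mathsf{c}}$ according to the set $B=\Gamma_{d^*+1}^T$, bounds the number of trees with a given $(A,B)$ using the concentration lemma for $\sum_{v\in B}\log|\mathsf{N}_A(v)|$ (Lemma~\ref{lem:multi-ztb-conc}), observes that each vertex outside $A\cup B$ gains at most $\log(3(\log\log n)^2)$ in entropy against an energy cost $\overline{\beta}$, and then sums over $k=n-|A|-|B|$. Your exponential-tilting approach bypasses this decomposition entirely: by reading the MGF of the excess energy as a ratio $Z_{G,\eta_n}|_A / Z_{G,\beta}|_A$ and plugging in the existing upper and lower bounds from Lemmas~\ref{lem:gibbs-cond-ub} and~\ref{lem:gibbs-cond-lb}, you get the exponent directly without ever invoking Lemma~\ref{lem:multi-ztb-conc}.

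The only delicate point is applying Lemma~\ref{lem:gibbs-cond-ub} at the non-constant inverse temperature $\eta_n\to 0$, since as stated it is for fixed $\beta$. You correctly identify the sole $\beta$-dependent error term $(n-|A|)\log(1+L/(|A|q))$ and verify it stays $o(n)$ with $\eta_n=3\log\log\log n/\log\log n$; the remaining $o(n)$ errors in the proof chain (coming from $|\Gamma_{\leq d^*-1}|\log M$, $\overline{\eta}_n N_{d^*-1}$, and the concentration in Lemma~\ref{lem:logsum-ub}) are either independent of $\beta$ or only improve as $\beta$ shrinks, so the extension is legitimate. Your approach is more economical with the lemmas and arguably more conceptual; the paper's approach, in exchange, reveals a bit more structure about where the suboptimal mass lives (namely, at trees with $|\Gamma_{d^*+1}^T|$ small by $\Theta(n/\log\log n)$).
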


To prove Proposition~\ref{prop:cond-gibbs-dv}, we observe that the log partition function $\log(Z_{G,\beta}|_A)$ established in Proposition~\ref{prop:fixed-kernel} can be decomposed into the entropy part
\[
    \sum_{v\in A}\log|\mathsf{par}_G(v)|+(n-|A|)\log(|A|q)
\]
and the energy part
\[
    \overline{\beta}|A|-\overline{\beta}(d^*+1)n\,.
\]
The entropy part is in fact nearly the largest possible. This is because by the definition of kernel, any vertex $v\in A$ has at most $\log|\mathsf{par}_G(v)|$ parent selection entropy, and for all other vertices $\log(|A|q)\approx\log\log n$ is the maximum possible for sparse graphs. Hence, a typical tree under $\mu_{G,\beta}|_A$ cannot have energy too greater than $\overline{\beta}|A|-\overline{\beta}(d^*+1)n$.

Before we prove the proposition, we first establish the following lemma, which allows us to replace $\log(|A|q)$ with a more convenient form.

\begin{lemma}\label{lem:multi-ztb-conc}
    There is a universal constant $C>0$ such that with probability at least $1-o(1)$,
    \[
        \sum_{v\in B}\log|\mathsf{N}_A(v)|\leq|B|\log(|A|q\vee1)+\frac{Cn}{\sqrt{|A|q}}
    \]
    simultaneously for all $A\subseteq\Gamma_{d^*}$ and $B\subseteq\mathsf{N}_{G\setminus\Gamma_{\leq d^*-1}}(A)$.
\end{lemma}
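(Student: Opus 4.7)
The plan is to condition on $\Gamma_{\leq d^*-1}$, fix $A \subseteq \Gamma_{d^*}$, and reduce the problem to a concentration statement that is independent of $B$. Writing $c_A = \log(|A|q \vee 1)$ and $X_v = \log(|\mathsf{N}_A(v)|\vee 1)$ for $v \in V\setminus\Gamma_{\leq d^*-1}\setminus A$, the key observation is that every $v \in B$ lies in $\mathsf{N}_{G\setminus\Gamma_{\leq d^*-1}}(A)$ and so satisfies $|\mathsf{N}_A(v)| \ge 1$, meaning $\log|\mathsf{N}_A(v)| = X_v$. Using the trivial inequality $X_v - c_A \le (X_v - c_A)^+$ and extending from $B$ to all of $V\setminus\Gamma_{\leq d^*-1}\setminus A$, one gets the uniform-in-$B$ bound
\[
  \sum_{v \in B} \log|\mathsf{N}_A(v)|
  \;\le\; |B|\, c_A \;+\; \sum_{v \in V\setminus\Gamma_{\leq d^*-1}\setminus A} (X_v - c_A)^+.
\]
This eliminates the need to union bound over $B$; only $A$ remains.

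Next, conditioned on $\Gamma_{\leq d^*-1}$ and $A$, the random variables $|\mathsf{N}_A(v)|$ are i.i.d.~$\Binom(|A|,q)$. Lemma~\ref{lem:logbinv-subg} gives that $X_v - c_A$ is subgaussian with norm at most $C_1/\sqrt{|A|q}$ (up to the negligible shift between $c_A$ and $\log(|A|q \vee 1)$ which is absorbed into the constant). Letting $Y_v = (X_v - c_A)^+$, the centering inequality combined with $|Y_v| \le |X_v - c_A|$ yields $\|Y_v - \E Y_v\|_{\psi_2} \le C_2/\sqrt{|A|q}$ and also $\E Y_v \le C_3/\sqrt{|A|q}$. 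Applying the subgaussian Hoeffding bound (Corollary~\ref{cor:tail-sum}) to $\sum_v (Y_v - \E Y_v)$ gives
\[
  \Pr\!\left(\sum_v Y_v \ge \frac{C_3 n}{\sqrt{|A|q}} + t\right)
  \;\le\; 2\exp\!\left(-\frac{c\, t^2 |A| q}{n}\right).
\]

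Finally, setting $t = K n/\sqrt{|A|q}$ for a sufficiently large universal constant $K$, the right-hand side is bounded by $2\exp(-cK^2 n)$, which defeats the $\binom{N_{d^*}}{|A|} \le 2^n$ union bound over subsets $A \subseteq \Gamma_{d^*}$ of every size. Hence with probability $1-o(1)$ simultaneously over all $A$ one has $\sum_v Y_v \le (C_3 + K) n/\sqrt{|A|q}$, which combined with the first display yields the claim with $C = C_3 + K$. The main technical concern is simply ensuring that the subgaussian tail obtained from Lemma~\ref{lem:logbinv-subg} is strong enough to support the $2^n$ union bound: this works precisely because the variance proxy scales like $n/|A|q$ while the target deviation scales like $n/\sqrt{|A|q}$, giving exponential-in-$n$ control once $K$ is a sufficiently large absolute constant.
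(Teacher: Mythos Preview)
Your proof is correct. Both your argument and the paper's rely on the subgaussian bound for $\log$ of (truncated) binomials, concentration via Corollary~\ref{cor:tail-sum}, and a union bound over $A$; the difference lies in how $B$ is handled.

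The paper treats $B$ directly: conditioned on $\Gamma_{\leq d^*-1}$, $A$, and the event $B\subseteq\mathsf{N}_{G\setminus\Gamma_{\leq d^*-1}}(A)$, the variables $\log|\mathsf{N}_A(v)|$ for $v\in B$ are i.i.d.\ with the zero-truncated binomial law, so Lemma~\ref{lem:subg-ztb} applies. Setting $t=C_2 n/\sqrt{|A|q}$ gives a tail bound of $2e^{-2n}$, and then the paper takes a union bound over \emph{both} $A$ and $B$ (at most $4^n$ pairs, which $2e^{-2n}$ still beats). Finally Lemma~\ref{lem:subg-mean} controls the mean.

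Your positive-part trick $\sum_{v\in B}(X_v-c_A)\le \sum_{v}(X_v-c_A)^+$ is a clean way to eliminate $B$ from the union bound entirely, at the cost of needing the subgaussian bound for the unconditioned binomial (Lemma~\ref{lem:logbinv-subg}) rather than the truncated one. The transfer of subgaussianity from $X_v-c_A$ to $(X_v-c_A)^+$ via $|(X_v-c_A)^+|\le|X_v-c_A|$ and Proposition~\ref{prop:subg-center} is valid, and the mean bound $\E Y_v\le C_3/\sqrt{|A|q}$ follows from Lemma~\ref{lem:subg-mean}. Your approach thus only requires a $2^n$ union bound; the paper's route is slightly more direct but needs a factor of $4^n$. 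Both comfortably succeed.

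One cosmetic point: your parenthetical about ``the negligible shift between $c_A$ and $\log(|A|q\vee 1)$'' is confusing since you defined $c_A=\log(|A|q\vee 1)$; there is no shift, and Lemma~\ref{lem:logbinv-subg} is stated with exactly this centering. Also note that when $|A|q<1$ the lemma gives subgaussian norm $O(1)$ rather than $O(1/\sqrt{|A|q})$, but since your claimed bound $C_1/\sqrt{|A|q}$ is only weaker in that regime, and the target deviation $Cn/\sqrt{|A|q}$ is correspondingly more generous, this causes no problem.
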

\begin{proof}
    The proof is similar to that of Lemma~\ref{lem:logsum-lb}. Treating $\Gamma_{\leq d^*-1}$ and $A$ as fixed, by Lemma~\ref{lem:subg-ztb}
    \[
        \lVert\log|\mathsf{N}_A(v)|-\log(|A|q\vee1)\rVert_{\psi_2}\leq\frac{C_1}{\sqrt{|A|q}}
    \]
    for some constant $C_1>0$. By Corollary~\ref{cor:tail-sum} the probability that $\sum\log|\mathsf{N}_A(v)|$ is away from its mean by at least $t>0$ is bounded by
    \[
        2\exp\left(-\frac{2t^2|A|q}{C_2^2|B|}\right)\leq 2\exp\left(-\frac{2t^2|A|q}{C_2^2n}\right)
    \]
    for some constant $C_2$. By setting $t=C_2n/\sqrt{|A|q}$ this is bounded by $2e^{-2n}$, so taking the union bound over all $A$ and $B$ gives
    \[
        \sum_{v\in B}\log|\mathsf{N}_A(v)|\leq|B|\E[\log|\mathsf{N}_A(v)|\mid\Gamma_{\leq d^*-1},A]+\frac{C_3n}{\sqrt{|A|q}}
    \]
    simultaneously for all $A$ and $B$ asymptotically almost surely. Finally, Lemma~\ref{lem:subg-mean} gives
    \[
        \begin{split}
            \sum_{v\in B}\log|\mathsf{N}_A(v)|&\leq|B|\left(\log(|A|q\vee1)+\frac{C_1}{\sqrt{|A|q}}\right)+\frac{C_3n}{\sqrt{|A|q}}\\
            &\leq|B|\log(|A|q\vee1)+\frac{Cn}{\sqrt{|A|q}}
        \end{split}
    \]
    for some constant $C>0$.
\end{proof}

\begin{proof}[Proof of Proposition~\ref{prop:cond-gibbs-dv}]
    By Proposition~\ref{prop:fixed-kernel} we may assume
    \[
        \begin{split}
            \log (Z_{G,\beta}|_A) &= \log\left(\sum_{T\in\varphi^{-1}(A)}\exp\left(-\overline{\beta}\sum_{v}\mathsf{d}_T(1,v)\right)\right)\\
            &=\sum_{v\in A}\log|\mathsf{par}_G(v)|+(n-|A|)\log(|A|q)+\overline{\beta}|A|-\overline{\beta}(d^*+1)n+o(n)\,.
        \end{split}
    \]
    Define
    \begin{equation}\label{eqn:energy-opt-bar}
        \overline{E}_{G,A}:=\varphi^{-1}(A)\cap E_{G,|A|}
    \end{equation}
    and
    \[
        \overline{E}_{G,A}^{\mathsf{c}}:=\varphi^{-1}(A)\setminus E_{G,|A|}\,.
    \]
    Note that we have
    \[
        \log\left(\mu_{G,\beta}|_A(E_{G,|A|}^{\mathsf{c}})\right)=\log\left(\sum_{T\in\overline{E}_{G,A}^{\mathsf{c}}}\exp\left(-\overline{\beta}\sum_{v}\mathsf{d}_T(1,v)\right)\right)-\log(Z_{G,\beta}|_A)
    \]
    so our goal is to analyze
    \[
        \sum_{T\in\overline{E}_{G,A}^\mathsf{c}}\exp\left(-\overline{\beta}\sum_v\mathsf{d}_T(1,v)\right)\,.
    \]
    We decompose
    \[
        \overline{E}_{G,A}^{\mathsf{c}}=\bigsqcup_{B}\overline{E}_{G,A,B}^{\mathsf{c}}
    \]
    where
    \[
        \overline{E}_{G,A,B}^{\mathsf{c}}:=\left\{T\in\overline{E}_{G,A}^{\mathsf{c}}:\Gamma_{d^*+1}^T=B\right\}\,.
    \]
    If $T\in\overline{E}_{G,A,B}^{\mathsf{c}}$ we have
    \[
        \begin{split}
            \sum_v\mathsf{d}_T(1,v)&= d^*|A|+(d^*+1)|B|+(d^*+2)(n-|A|-|B|)\pm o_p\left(\frac{\log\log n}{n}\right)\\
            &=(d^*+1)n-|A|+(n-|A|-|B|)\pm o_p\left(\frac{\log\log n}{n}\right)\,.
        \end{split}
    \]
    Thus, $\overline{E}_{G,A,B}^{\mathsf{c}}\neq\emptyset$ implies that
    \begin{equation}\label{eqn:nab-nonempty}
        n-|A|-|B|\geq\frac{(C-o_p(1))n}{\log\log n}\,.
    \end{equation}
    To bound the size of $\overline{E}_{G,A,B}^{\mathsf{c}}$ given that it is nonempty, we observe that $v\in A$ has $|\mathsf{par}_G(v)|$ parent choices, $v\in B$ has $|\mathsf{N}_A(v)|$ parent choices, and the other vertices have $O_p(\log n)$ parent choices which is simply the bound for the maximum degree of a typical sparse graph. Using Lemma~\ref{lem:multi-ztb-conc} this gives
    \[
        \begin{split}
            \log\left|\overline{E}_{G,A,B}^{\mathsf{c}}\right|&\leq\sum_{v\in A}\log|\mathsf{par}_G(v)|+\sum_{v\in B}\log|\mathsf{N}_A(v)|+(n-|A|-|B|)\log\log n+o_p(n)\\
            &\leq\sum_{v\in A}\log|\mathsf{par}_G(v)|+|B|\log(|A|q)+(n-|A|-|B|)\log\log n+o_p(n)\\
            &\leq\sum_{v\in A}\log|\mathsf{par}_G(v)|+(n-|A|)\log(|A|q)+(n-|A|-|B|)\log(3(\log\log n)^2)+o_p(n)
        \end{split}
    \]
    which holds simultaneously for all $A$ and $B$ with $|A|\geq m_\ell$.
    Hence, it follows that
    \[
        \begin{split}
            &\log\left(\sum_{T\in\overline{E}_{G,A,B}^\mathsf{c}}\exp\left(-\overline{\beta}\sum_v\mathsf{d}_T(1,v)\right)\right)\\
            &\leq\log(Z_{G,\beta}|_A)+(n-|A|-|B|)(\log(3(\log\log n)^2)-\overline{\beta})+o_p(n)\\
            &=\log(Z_{G,\beta}|_A)-(\beta-o(1))(n-|A|-|B|)\log\log n+o_p(n)\,.
        \end{split}
    \]
    This gives
    \[
        \begin{split}
            &\log\left(\sum_{|B|=n-|A|-k}\sum_{T\in\overline{E}_{G,A,B}^\mathsf{c}}\exp\left(-\overline{\beta}\sum_v\mathsf{d}_T(1,v)\right)\right)\\
            &\leq\log\binom{n}{k}+\log(Z_{G,\beta}|_A)-(\beta-o(1))k\log\log n+o_p(n)\\
            &\leq k\log\left(\frac{en}{k}\right)+\log(Z_{G,\beta}|_A)-(\beta-o(1))k\log\log n+o_p(n)\\
            &\leq k\log\left(\frac{e\log\log n}{C-o_p(1)}\right)+\log(Z_{G,\beta}|_A)-(\beta-o(1))k\log\log n+o_p(n)\\
            &\leq\log(Z_{G,\beta}|_A)-(\beta-o_p(1))k\log\log n+o_p(n)\\
            &\leq\log(Z_{G,\beta}|_A)-(C\beta-o_p(1))n+o_p(n)
        \end{split}
    \]
    where we have used \eqref{eqn:nab-nonempty}. This holds simultaneously for all $k$ such that the sum is nonzero. Hence, we have
    \[
        \begin{split}
            &\log\left(\sum_{T\in\overline{E}_{G,A}^{\mathsf{c}}}\exp\left(-\overline{\beta}\sum_v\mathsf{d}_T(1,v)\right)\right)\\
            &\leq\max_k\left\{\log\left(\sum_{|B|=n-|A|-k}\sum_{T\in\overline{E}_{G,A,B}^\mathsf{c}}\exp\left(-\overline{\beta}\sum_v\mathsf{d}_T(1,v)\right)\right)\right\}+O(\log n)\\
            &\leq\log(Z_{G,\beta}|_A)-(C\beta-o_p(1))n+o_p(n)\,.
        \end{split}
    \]
    The proposition is proved.
\end{proof}

\subsubsection{Conditional Gibbs measures are almost uniform}\label{sec:cond-gibbs-uniform}

Now we show that $\mu_{G,\beta}|_A$ is close in Wasserstein distance to $\mu_{G,\infty}|_A$, the uniform measure over the spanning trees with kernel $A$ and minimum energy. We again use the fact that the entropy is nearly maximum and apply the entropy-transport inequality through Corollary~\ref{cor:high-entropy-wasserstein}. Since some vertices might have parents not in the support of $\mu_{G,\infty}|_A$, we need to round them, for which we appeal to Proposition~\ref{prop:cond-gibbs-dv}.

\begin{proposition}\label{prop:cond-gibbs-w1}
    Let $\beta>0$ be a fixed constant. Then with probability at least $1-o(1)$, we have
    \[
        W_1(\mu_{G,\infty}|_A,\mu_{G,\beta}|_A)=o(n)
    \]
    simultaneously for all $A\in\mathcal{K}(G)$ of size $|A|\geq m_\ell$.
\end{proposition}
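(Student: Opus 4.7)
The plan is to round $\mu_{G,\beta}|_A$ to a nearby measure $\tilde\nu$ supported on $\operatorname{supp}(\mu_{G,\infty}|_A)$---which is a vertex-wise product set---and then invoke the entropy-transport inequality of Corollary~\ref{cor:high-entropy-wasserstein}. Writing
\[
    W_1(\mu_{G,\beta}|_A,\mu_{G,\infty}|_A)\;\le\;W_1(\mu_{G,\beta}|_A,\tilde\nu)+W_1(\tilde\nu,\mu_{G,\infty}|_A),
\]
I would bound the first term by the Hamming cost of the rounding and the second by a near-maximal entropy argument on $\tilde\nu$. Both estimates hinge on the same quantity: the typical number of vertices whose distance under $T\sim\mu_{G,\beta}|_A$ exceeds the minimum compatible with kernel $A$, which is controlled by the energy excess supplied by Proposition~\ref{prop:cond-gibbs-dv}.

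The first step is to apply Proposition~\ref{prop:cond-gibbs-dv} with a slowly vanishing slack $C=C_n$, for instance $C_n=1/\log\log\log n$. A careful reading of its proof shows the conclusion $\mu_{G,\beta}|_A(E_{G,|A|}^{\mathsf{c}})=o(1)$ survives whenever $C_n\beta n$ dominates the hidden $o(n)$ error (which scales like $n\log\log\log n/\log\log n$ coming from the $\log(3(\log\log n)^2)$ factor), and this choice is comfortably sufficient. On $E_{G,|A|}$ the energy excess over the ground state with kernel $A$ is at most $C_n n/\log\log n$, so any such tree $T$ differs from some element of $\operatorname{supp}(\mu_{G,\infty}|_A)$ at most at $k := C_n n/\log\log n = o(n/\log\log n)$ parent assignments---namely the vertices whose distance in $T$ is strictly larger than the minimum compatible with $A$. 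I would fix any measurable rounding map $R\colon E_{G,|A|}\to\operatorname{supp}(\mu_{G,\infty}|_A)$ that re-assigns each such exceptional vertex to an arbitrary admissible parent, and set $\tilde\nu := R_*(\mu_{G,\beta}|_A|_{E_{G,|A|}})$. The induced Hamming coupling has cost $O(k)=o(n)$, and the maximum preimage size is bounded by $\binom{n}{k}M^k = e^{o(n)}$, where $M=O_p(\log n)$ bounds the maximum degree of $G$.

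For the entropy, I would use the Gibbs variational principle:
\[
    H_{\mu_{G,\beta}|_A}(T) \;=\; \log(Z_{G,\beta}|_A) \;+\; \overline{\beta}\,\E_{\mu_{G,\beta}|_A}\!\left[\sum_v\mathsf{d}_T(1,v)\right].
\]
A vertex-by-vertex lower bound (splitting $\Gamma_{\leq d^*-1}$, $A$, and their complement, using $\mathsf{d}_T(1,v)\ge\mathsf{d}_G(1,v)$ together with Proposition~\ref{prop:conc}) gives $\E_{\mu_{G,\beta}|_A}[\sum_v\mathsf{d}_T(1,v)]\ge(d^*+1)n-|A|-O(n/(\log\log n)^2)$, and multiplication by $\overline{\beta}=\beta\log\log n$ produces only an $o(n)$ correction. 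Combined with the formula in Proposition~\ref{prop:fixed-kernel}, this yields
\[
    H_{\mu_{G,\beta}|_A}(T)\;\ge\;\sum_{v\in A}\log|\mathsf{par}_G(v)|+(n-|A|)\log(|A|q)-o(n),
\]
which matches $\log|\operatorname{supp}(\mu_{G,\infty}|_A)|$ up to $o(n)$ by the entropy computation \eqref{eqn:gibbs-entropy-lb}. A standard chain-rule argument---first conditioning on the event $E_{G,|A|}$ of mass $1-o(1)$ (the conditional entropy loss is $o(n)$ because the complementary tail has mass $\exp(-\Omega(n/\log\log\log n))$ and the total number of trees is only $e^{O(n\log n)}$), then applying the data-processing bound $H_{R_*\mu}\ge H_\mu-\log\max|R^{-1}|$---transfers this entropy lower bound through the rounding at additive cost $o(n)$. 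Invoking Corollary~\ref{cor:high-entropy-wasserstein} on the product space $\operatorname{supp}(\mu_{G,\infty}|_A)$ then delivers $W_1(\tilde\nu,\mu_{G,\infty}|_A)\le\sqrt{2n\cdot o(n)}=o(n)$, and the triangle inequality concludes.

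The main obstacle I anticipate is quantitative rather than conceptual: one must re-inspect the $o(1)$ constants in the proof of Proposition~\ref{prop:cond-gibbs-dv} to verify that it genuinely applies with $C_n\to 0$, and ensure that every estimate above is uniform in $A\in\mathcal{K}(G)$ with $|A|\ge m_\ell$. The rounding map $R$ must also be defined so that the rare vertices at $\mathsf{d}_T(1,v)>d^*+2$---controlled via Lemma~\ref{lem:gamma-a} and the diameter bound \eqref{eqn:dbound}---contribute only $e^{o(n)}$ to the preimage count, which requires handling them separately via the diameter-based repatching rather than the naive vertex-by-vertex count.
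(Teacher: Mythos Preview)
Your proposal is correct and follows the same route as the paper: condition on the good energy event from Proposition~\ref{prop:cond-gibbs-dv}, round to $\operatorname{supp}(\mu_{G,\infty}|_A)$ by re-parenting the few vertices with excess distance, and finish with Corollary~\ref{cor:high-entropy-wasserstein}. The bookkeeping differs in two places. First, the paper uses a \emph{fixed} constant $C$ and shows $W_1\le\sqrt{2C}\,n+o_p(n)$, then sends $C\to 0$; this sidesteps the re-inspection of constants in Proposition~\ref{prop:cond-gibbs-dv} that your $C_n\to 0$ approach requires. Second, instead of your preimage bound $H_{R_*\mu}\ge H_\mu-\log\max|R^{-1}|$, the paper conditions on the distance vector $D=\mathsf{d}_T$ and observes that $H(D)=o(n)$ by a stars-and-bars count on the admissible distance vectors in $E_{G,|A|}$; the two devices are equivalent here. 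Finally, the worry in your last paragraph is unnecessary: each vertex with $\mathsf{d}_T(1,v)\neq\mathsf{d}_A(v)$ is re-parented exactly once regardless of how large its individual excess is, and the number of such vertices is bounded by the total energy excess, so no separate diameter-based repatching is needed.
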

\begin{proof}
    For an arbitrary constant $C>0$, let $E_{G,m}$ and $\overline{E}_{G,A}$ be as defined in \eqref{eqn:energy-opt} and \eqref{eqn:energy-opt-bar}. Let $\mu_{G,\beta}|_{\overline{E}_{G,A}}$ be the measure $\mu_{G,\beta}|_A$ conditioned on $E_{G,|A|}$ (or $\overline{E}_{G,A}$). From Proposition~\ref{prop:cond-gibbs-dv}, it is clear that
    \begin{equation}\label{eqn:etw-1}
        W_1(\mu_{G,\beta}|_{\overline{E}_{G,A}},\mu_{G,\beta}|_A)=o_p(n)
    \end{equation}
    and
    \[
        \log(Z_{G,\beta}|_{\overline{E}_{G,A}}):=\log\left(\sum_{T\in\overline{E}_{G,A}}\exp\left(-\overline{\beta}\sum_{v}\mathsf{d}_T(1,v)\right)\right)=\log(Z_{G,\beta}|_A)-o_p(n)
    \]
    simultaneously for all $|A|\geq m_{\ell}$. By Proposition~\ref{prop:fixed-kernel} and the Gibbs variational principle (Proposition~\ref{prop:gibbs-variational}), this implies that
    \[
        \begin{split}
            H_{\mu_{G,\beta}|_{\overline{E}_{G,A}}}(T)&=\log(Z_{G,\beta}|_{\overline{E}_{G,A}})+\overline{\beta}\E_{\mu_{G,\beta}|_{\overline{E}_{G,A}}}\left[\sum_{v}\mathsf{d}_T(1,v)\right]\\
            &\geq\log(Z_{G,\beta}|_A)+\overline{\beta}(d^*+1)n-\overline{\beta}|A|-o_p(n)\\
            &=\sum_{v\in A}\log|\mathsf{par}_G(v)|+(n-|A|)\log(|A|q)-o_p(n)
        \end{split}
    \]
    and
    \begin{equation}\label{eqn:mu-e-energy}
        \E_{\mu_{G,\beta}|_{\overline{E}_{G,A}}}\left[\sum_{v}\mathsf{d}_T(1,v)\right]\leq (d^*+1)n-|A|+\frac{Cn}{\log\log n}\,.
    \end{equation}
    Now we consider the distance vector $D=\mathsf{d}_T:v\mapsto\mathsf{d}_T(1,v)$. Since $D$ is a function of $T$ we have
    \[
        H_{\mu_{G,\beta}|_{\overline{E}_{G,A}}}(T)=H_{\mu_{G,\beta}|_{\overline{E}_{G,A}}}(D)+H_{\mu_{G,\beta}|_{\overline{E}_{G,A}}}(T\mid D)\,.
    \]
    By the stars and bars method, we see that the number of possible distance vectors of a spanning tree in $E_{G,m}$ is bounded by
    \[
        \binom{n+\frac{Cn}{\log\log n}+o(n)-1}{\frac{Cn}{\log\log n}+o(n)}=\binom{n+o(n)}{o(n)}
    \]
    which implies that
    \[
        H_{\mu_{G,\beta}|_{\overline{E}_{G,A}}}(D)\leq\log\binom{n+o(n)}{o(n)}=o(n)\,.
    \]
    As a consequence, we get
    \begin{equation}\label{eqn:h-t-d}
        \begin{split}
            H_{\mu_{G,\beta}|_{\overline{E}_{G,A}}}(T\mid D) &= H_{\mu_{G,\beta}|_{\overline{E}_{G,A}}}(T)-o(n)\\
            &\geq \sum_{v\in A}\log|\mathsf{par}_G(v)|+(n-|A|)\log(|A|q)-o_p(n)\,.
        \end{split}
    \end{equation}
    
    To apply the entropy-transport inequality (Corollary~\ref{cor:high-entropy-wasserstein}), we round $\mu_{G,\beta}|_{\overline{E}_{G,A}}$ to the support of $\mu_{G,\infty}|_A$. This is equivalent to saying that we are forcing every vertex to have minimum distance to the root, provided that the kernel is $A$. We denote by $\mathsf{d}_A$ the unique distance vector in the support of $\mu_{G,\infty}|_A$, i.e.,
    \[
        \mathsf{d}_A(v):=\begin{cases}
            \mathsf{d}_G(1,v) & \text{if $\mathsf{d}_G(1,v)\leq d^*-1$,}\\
            d^* + \mathsf{d}_G(v,A) & \text{if $v\in\overline{V}\setminus\Gamma_{\leq d^*-1}$.}
        \end{cases}
    \]
    Now we define the rounding procedure from $\mu_{G,\beta}|_{\overline{E}_{G,A}}$ to $\nu$ as follows. For a spanning tree $T\sim\mu_{G,\beta}|_{\overline{E}_{G,A}}$, for vertices $v$ such that $\mathsf{d}_A(1,v)\neq1+\mathsf{d}_A(1, \mathsf{par}_T(v))$ (which implies $\mathsf{d}_T(1,v)\neq\mathsf{d}_A(v)$), we replace each $\mathsf{par}_T(v)$ with an arbitrary vertex $u\in\mathsf{par}_G(v)$ satisfying $\mathsf{d}_A(1,v)=1+\mathsf{d}_A(1,u)$. It is easy to see that the resulting graph $S$ is a valid spanning tree in the support of $\mu_{G,\infty}|_A$ (i.e., $\mathsf{d}_T=\mathsf{d}_A$). Also, since we do not modify the parents of more than $\frac{Cn}{\log\log n}$ vertices, $\nu$ is not too far from the original measure in terms of Wasserstein distance:
    \begin{equation}\label{eqn:etw-2}
        W_1(\nu,\mu_{G,\beta}|_{\overline{E}_{G,A}})=o(n)\,.
    \end{equation}
    We claim that this procedure does not lose entropy by more than $o_p(n)$, by analyzing the entropy on a per distance vector basis. Since the law of $T$ conditioned on distance vector is a product measure, we have
    \[
        \begin{split}
            H_{\mu_{G,\beta}|_{\overline{E}_{G,A}}}(T\mid D=\mathsf{d}) &= \sum_{v:\mathsf{d}(v)\neq\mathsf{d}_A(v)}H_{\mu_{G,\beta}|_{\overline{E}_{G,A}}}(\mathsf{par}_T(v)\mid D=\mathsf{d})\\
            &\quad+ \sum_{v:\mathsf{d}(v)=\mathsf{d}_A(v)}H_{\mu_{G,\beta}|_{\overline{E}_{G,A}}}(\mathsf{par}_T(v)\mid D=\mathsf{d})\\
            &\leq|\{v:\mathsf{d}(v)\neq\mathsf{d}_A(v)\}|\log\log n+o_p(n)\\
            &\quad+ \sum_{v:\mathsf{d}(v)=\mathsf{d}_A(v)}H_{\mu_{G,\beta}|_{\overline{E}_{G,A}}}(\mathsf{par}_T(v)\mid D=\mathsf{d})
        \end{split}
    \]
    where we have used the fact that the maximum degree is $O_p(\log n)$. By our construction of $S\sim\nu$, we have
    \[
        \begin{split}
            H_{\nu}(S\mid D=\mathsf{d})&\geq\sum_{v:\mathsf{d}(v)=\mathsf{d}_A(v)}H_{\mu_{G,\beta}|_{\overline{E}_{G,A}}}(\mathsf{par}_T(v)\mid D=d)\\
            &\geq H_{\mu_{G,\beta}|_{\overline{E}_{G,A}}}(T\mid D=d)-|\{v:\mathsf{d}(v)\neq\mathsf{d}_A(v)\}|\log\log n-o_p(n)\,.
        \end{split}
    \]
    This gives
    \[
        \begin{split}
            H_{\nu}(S) &\ge H_{\nu}(S\mid D)\\
            &\geq H_{\nu_{G,\beta}|_{\overline{E}_{G,A}}}(T\mid D)-(\log\log n)\E_{\mu_{G,\beta}|_{\overline{E}_{G,A}}}\left[|\{v:\mathsf{d}_T(1,v)\neq\mathsf{d}_A(v)\}|\right]-o_p(n)\,.
        \end{split}
    \]
    Now \eqref{eqn:mu-e-energy} implies that
    \[
        (\log\log n)\E_{\mu_{G,\beta}|_{\overline{E}_{G,A}}}\left[|\{v:\mathsf{d}_T(1,v)\neq\mathsf{d}_A(v)\}|\right]\leq Cn
    \]
    which by \eqref{eqn:h-t-d} gives
    \[
        \begin{split}
            H_{\nu}(S) &\geq H_{\nu_{G,\beta}|_{\overline{E}_{G,A}}}(T\mid D)-Cn-o_p(n)\\
            &\geq\sum_{v\in A}\log|\mathsf{par}_G(v)|+(n-|A|)\log(|A|q)-Cn-o_p(n)\,.
        \end{split}
    \]
    This is at most $Cn+o_p(n)$ smaller than the maximum entropy attained by $\mu_{G,\infty}|_A$. Therefore, applying Corollary~\ref{cor:high-entropy-wasserstein} with $\nu$ and $\mu=\mu_{G,\infty}|_A$, we arrive at
    \[
        W_1(\nu,\mu_{G,\infty}|_A)\leq\sqrt{2C}n+o_p(n)\,.
    \]
    Since $C$ is arbitrary, this implies $W_1(\nu,\mu_{G,\infty}|_A)=o_p(n)$. Together with \eqref{eqn:etw-1} and \eqref{eqn:etw-2}, we complete the proof.
\end{proof}

Now that we better understand the structure of $\mu_{G,\beta}|_A$, we mix these measures to build $\mu_{G,\beta}$. Since our results so far only apply to the case where the kernel size is at least $m_\ell$, we prove that the kernel is indeed large enough asymptotically almost surely.

\begin{lemma}\label{lem:large-kernel}
    Let $\beta>0$ be a fixed constant. Then we have
    \[
        \mu_{G,\beta}(\{T:|\varphi(T)|\geq m_{\ell}\})\geq1-\exp((1/3-o(1))n)
    \]
    with probability at least $1-o(1)$.
\end{lemma}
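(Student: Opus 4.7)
The plan is to use the decomposition
\[
\mu_{G,\beta}\bigl(\{|\varphi(T)| < m_\ell\}\bigr) \;=\; \frac{\sum_{m_0 \le m < m_\ell} Z_{G,\beta}\|_m}{Z_{G,\beta}},
\]
combining the uniform upper bound from Lemma~\ref{lem:gibbs-cond-size-ub} on the numerator with the asymptotic formula from Corollary~\ref{cor:fixed-kernel-size} applied at a convenient $m^\# \ge m_\ell$ on the denominator, and then exploiting the monotonicity and strict-increase estimates on the one-dimensional objective $\tilde\Phi_{G,\beta}(m)$ already established in Section~\ref{sec:gibbs-1d-opt} to extract an exponential gap. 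The final bound we will establish is $\mu_{G,\beta}(\{|\varphi(T)|<m_\ell\}) \le \exp(-(1/3 - o(1))n)$.

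For the numerator, Lemma~\ref{lem:gibbs-cond-size-ub} together with Lemma~\ref{lem:hg-conc} yields, with probability $1-o(1)$, the uniform estimate
\[
\log Z_{G,\beta}\|_m \;\le\; \Psi(m;N_{d^*},\lambda_n) + (n-m)\log(L+mq) + \overline\beta m - \overline\beta(d^*+1)n + o(n)
\]
for all $m_0 \le m \le N_{d^*}$, with $L = M e^{-\overline\beta} \vee \sqrt{\log n}$ at most polylogarithmic. By the arguments behind claims (a) and (b) in Section~\ref{sec:gibbs-1d-opt}, each $m$-dependent term on the right is non-decreasing on $[m_0,m_\ell]$, so its maximum over $m < m_\ell$ is attained at $m_\ell - 1$; and because $m_\ell q = \Omega(\log n/(\log\log n)^2)$ asymptotically dominates $L$, $\log(L + m_\ell q) = \log(m_\ell q) + o(1)$ and this maximum equals $\tilde\Phi_{G,\beta}(m_\ell) + o(n)$. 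The sum over at most $n$ values of $m$ contributes only $O(\log n) = o(n)$. For the denominator, choose $m^\# = \lceil n/(2(\log\log n)^2)\rceil$, which by Theorem~\ref{thm:main1-tight} and Lemma~\ref{lem:dstar} satisfies $m_\ell \le m^\# \le N_{d^*}/2$ with high probability; Corollary~\ref{cor:fixed-kernel-size} then delivers $\log Z_{G,\beta} \ge \log Z_{G,\beta}\|_{m^\#} = \tilde\Phi_{G,\beta}(m^\#) + o(n)$.

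Combining these and rearranging reduces the claim to estimating $\tilde\Phi_{G,\beta}(m^\#) - \tilde\Phi_{G,\beta}(m_\ell)$. The $\Psi$-contribution is non-negative by monotonicity (the argument in claim (a) extends to all of $[m_0, N_{d^*}/2]$, in particular to $[m_\ell,m^\#]$), the $\overline\beta m$-contribution is $\overline\beta(m^\# - m_\ell) = O(n/\log\log n) = o(n)$, and a direct calculation with $m^\#/m_\ell = 3/2 + o(1)$ yields
\[
(n-m^\#)\log(m^\# q) - (n-m_\ell)\log(m_\ell q) \;=\; (n-m^\#)\log(m^\#/m_\ell) - (m^\# - m_\ell)\log(m_\ell q) \;=\; n \log(3/2) - o(n),
\]
using that $(m^\# - m_\ell)\log(m_\ell q) = O(n/\log\log n)$. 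Since $\log(3/2) > 1/3$, this gives $\log \mu_{G,\beta}(\{|\varphi(T)| < m_\ell\}) \le -(1/3 - o(1))\,n$, which rearranges to the stated lower bound on $\mu_{G,\beta}(\{|\varphi(T)| \ge m_\ell\})$. The only real source of technical care is uniform tracking of the $o(n)$ error terms across Lemma~\ref{lem:gibbs-cond-size-ub}, Lemma~\ref{lem:hg-conc}, and Corollary~\ref{cor:fixed-kernel-size}, together with the $L$ vs.\ $mq$ swap at $m=m_\ell$; both are routine, since uniformity in $m$ is already built into those statements and $m_\ell q$ exceeds $L$ by a polylogarithmic factor.
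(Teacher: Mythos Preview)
Your proof is correct and follows essentially the same approach as the paper: both decompose the bad probability as a ratio, upper bound the numerator via Lemma~\ref{lem:gibbs-cond-size-ub} (with $\mathcal{H}_G$ replaced by $\Psi$ through Lemma~\ref{lem:hg-conc}), lower bound the denominator by evaluating at $m^\# = \lceil n/(2(\log\log n)^2)\rceil$ via Proposition~\ref{prop:fixed-kernel-size}/Corollary~\ref{cor:fixed-kernel-size}, and extract the gap $n\log(3/2) - o(n) > n/3 - o(n)$. Your framing through $\tilde\Phi_{G,\beta}$ and the monotonicity claims of Section~\ref{sec:gibbs-1d-opt} is slightly more streamlined than the paper's explicit term-by-term bookkeeping, but the substance is identical; note only that you should mention the trivial case $m_\ell = m_0$ (where every kernel already has size $\ge m_\ell$), which the paper handles in one line.
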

\begin{proof}
    This is trivial if $m_\ell=m_0$ so we assume
    \[
        m_\ell=\frac{n}{3(\log\log n)^2}\,.
    \]
    We begin with
    \begin{equation}\label{eqn:small-prob-bound}
        \begin{split}
            \log\left(\mu_{G,\beta}(\{T:|\varphi(T)|\leq m_{\ell}\})\right) &= \log\left(\sum_{m=m_0}^{m_{\ell}}Z_{G,\beta}\|_m\right) - \log Z_{G,\beta}\\
            &\leq\max_{m_0\leq m\leq m_{\ell}}\log(Z_{G,\beta}\|_m)-\log Z_{G,\beta}+\log n\,.
        \end{split}
    \end{equation}
    By Lemma~\ref{lem:gibbs-cond-size-ub}, we have
    \[
        \log(Z_{G,\beta}\|_m)\leq\mathcal{H}_G(m)+(n-m)\log(L+mq)+\overline{\beta}m-\overline{\beta}(d^*+1)n+o_p(n)\,.
    \]
    By Lemma~\ref{lem:hg-conc} and Lemma~\ref{lem:psi-bounds}, we see that $\mathcal{H}_G(m)=o_p(n)$ for $m\leq m_{\ell}$, which gives
    \[
        \begin{split}
            \log(Z_{G,\beta}\|_m)&\leq(n-m)\log(L+mq)+\overline{\beta}m-\overline{\beta}(d^*+1)n+o_p(n)\\
            &\leq n\log(L+mq)-\overline{\beta}(d^*+1)n+o_p(n)\\
            &\leq n\log\left(\frac{nq}{(\log\log n)^2}\right)-n\log 3-\overline{\beta}(d^*+1)n+o_p(n)\,.
        \end{split}
    \]
    Now we lower bound $\log Z_{G,\beta}$. Let
    \[
        m_1:=\left\lfloor\frac{n}{2(\log\log n)^2}\right\rfloor
    \]
    which is strictly smaller than $N_{d^*}$ asymptotically almost surely (by Lemma~\ref{lem:dstar}). By Proposition~\ref{prop:fixed-kernel-size} we have
    \[
        \begin{split}
            \log Z_{G,\beta} &\geq \log (Z_{G,\beta}\|_{m_1})\\
            &=\mathcal{H}_G(m_1)+(n-m_1)\log(m_1q)+\overline{\beta}m_1-\overline{\beta}(d^*+1)n-o_p(n)\\
            &\geq (n-m_1)\log(m_1q)-\overline{\beta}(d^*+1)n-o_p(n)\\
            &\geq\left(1-\frac{1}{2(\log\log n)^2}-\frac{1}{n}\right)n\log\left(\frac{nq}{2(\log\log n)^2}\right)-\overline{\beta}(d^*+1)n-o_p(n)\\
            &=n\log\left(\frac{nq}{(\log\log n)^2}\right)-n\log 2-\overline{\beta}(d^*+1)n-o_p(n)\,.
        \end{split}
    \]
    Continuing from \eqref{eqn:small-prob-bound}, we have
    \[
        \begin{split}
            \log\left(\mu_{G,\beta}(\{T:|\varphi(T)|\leq m_{\ell}\})\right)&\leq-n\log(3/2)+o_p(n)\\
            &=-\frac{1}{3}n+o_p(n)\,.
        \end{split}
    \]
\end{proof}

Since the measures $\mu_{G,\beta}|_A$ can be approximated by $\mu_{G,\infty}$, we mix these using the mixture weights of $\mu_{G,\beta}$. Namely, we define
\[
    \tilde{\mu}_{G,\beta}(T):=\sum_{A\in\mathcal{K}(G)}\mu_{G,\beta}(\varphi(T)=A)\cdot\mu_{G,\infty}|_A(T)\,.
\]
An equivalent way of defining $\tilde{\mu}_{G,\beta}$ is through the following sampling algorithm:
\begin{enumerate}
    \item for each $v\in\Gamma_{\leq d^*-1}$, sample $\mathsf{par}_T(v)$ from $\mathsf{par}_G(v)$ uniformly at random;
    \item sample $A\subseteq\Gamma_{d^*}$ according to the marginal law of the kernel $\varphi(T)$ under $\mu_{G,\beta}$;
    \item for each $v\in A$, sample $\mathsf{par}_T(v)$ from $\mathsf{par}_G(v)$ uniformly at random;
    \item for each $v\in\overline{V}\setminus\Gamma_{d^*-1}\setminus A$, sample $\mathsf{par}_T(v)$ uniformly at random from the set of vertices $u\in\mathsf{N}_G(v)$ such that $\mathsf{d}_G(v,A)=\mathsf{d}_G(u,A)+1$.
\end{enumerate}
Then our results immediately yield the following corollary.

\begin{corollary}\label{cor:gibbs-w1}
    Let $\beta>0$ be a fixed constant. Then we have
    \[
        \KL(\tilde{\mu}_{G,\beta},\mu_{G,\beta})=o_p(n)
    \]
    and
    \[
        W_1(\tilde{\mu}_{G,\beta},\mu_{G,\beta})=o_p(n)\,.
    \]
\end{corollary}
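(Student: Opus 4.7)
The plan is to exploit the fact that, by construction, $\mu_{G,\beta}$ and $\tilde{\mu}_{G,\beta}$ have the \emph{same} marginal distribution on kernels. Writing $w_A := \mu_{G,\beta}(\varphi(T)=A)$, both measures admit the mixture decompositions
\[
    \mu_{G,\beta}=\sum_{A\in\mathcal{K}(G)} w_A\,\mu_{G,\beta}|_A,\qquad
    \tilde{\mu}_{G,\beta}=\sum_{A\in\mathcal{K}(G)} w_A\,\mu_{G,\infty}|_A,
\]
so the problem reduces to comparing each pair $(\mu_{G,\infty}|_A,\mu_{G,\beta}|_A)$, which is precisely what Proposition~\ref{prop:fixed-kernel} and Proposition~\ref{prop:cond-gibbs-w1} already control for $|A|\ge m_\ell$.

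For the KL bound, the key observation is that the kernel map $T\mapsto\varphi(T)$ is a \emph{function of} $T$. Therefore $\mu_{G,\infty}|_A$ and $\mu_{G,\beta}|_A$ are both supported on $\varphi^{-1}(A)$, and for any $T$ with $\varphi(T)=A$ one has $\tilde{\mu}_{G,\beta}(T)/\mu_{G,\beta}(T)=\mu_{G,\infty}|_A(T)/\mu_{G,\beta}|_A(T)$. Splitting the sum defining $\KL(\tilde{\mu}_{G,\beta},\mu_{G,\beta})$ over the level sets of $\varphi$ yields the exact identity
\[
    \KL(\tilde{\mu}_{G,\beta},\mu_{G,\beta})=\sum_{A\in\mathcal{K}(G)} w_A\,\KL(\mu_{G,\infty}|_A,\mu_{G,\beta}|_A).
\]
For the $W_1$ bound, I would instead use the standard subadditivity of $W_1$ under mixtures (glue together optimal couplings of $(\mu_{G,\infty}|_A,\mu_{G,\beta}|_A)$ with mixing weights $w_A$) to get
\[
    W_1(\tilde{\mu}_{G,\beta},\mu_{G,\beta})\le \sum_{A\in\mathcal{K}(G)} w_A\,W_1(\mu_{G,\infty}|_A,\mu_{G,\beta}|_A).
\]

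Next I would split each sum at the threshold $m_\ell$. For the contribution from $|A|\ge m_\ell$, Proposition~\ref{prop:fixed-kernel} gives $\KL(\mu_{G,\infty}|_A,\mu_{G,\beta}|_A)=o(n)$ uniformly over such $A$ (with probability $1-o(1)$), and Proposition~\ref{prop:cond-gibbs-w1} gives $W_1(\mu_{G,\infty}|_A,\mu_{G,\beta}|_A)=o(n)$ uniformly; since $\sum_{|A|\ge m_\ell} w_A\le 1$, both ``large kernel'' contributions are $o_p(n)$. For the contribution from $|A|<m_\ell$, I would use Lemma~\ref{lem:large-kernel}, which supplies $\sum_{|A|<m_\ell} w_A\le \exp(-(1/3-o(1))n)$ with probability $1-o(1)$, combined with crude deterministic upper bounds on the individual divergences: the Hamming metric between spanning trees is at most $n$, and each KL divergence is at most $O(n\log n)$ (e.g.\ since the energy $\sum_v \mathsf{d}_T(1,v)$ is bounded by $n\cdot\operatorname{diam}(\overline G)=O(n\log n/\log\log n)$ by Theorem~\ref{thm:diameter}, and $\log(Z_{G,\beta}|_A\cdot e^{\overline{\beta}E_G}/N_A)$ can be bounded directly by $\overline{\beta}\,\operatorname{diam}(\overline G)\cdot n$). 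Both products are $o(1)$, so negligible.

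The main obstacle is essentially bookkeeping rather than analysis: one must verify that the exact KL-decomposition really holds (it does, because each $T$ has a unique kernel, and $\mu_{G,\infty}|_A$ is supported on $\varphi^{-1}(A)$, so the cross-terms between different $A$ vanish), and that the convex subadditivity argument for $W_1$ is valid (it is, since gluing of couplings over a common mixing variable produces a valid coupling). All the substantive work --- controlling KL and $W_1$ of the conditional Gibbs measures, and showing the kernel is typically of size at least $m_\ell$ --- has already been done in Proposition~\ref{prop:fixed-kernel}, Proposition~\ref{prop:cond-gibbs-w1}, and Lemma~\ref{lem:large-kernel}.
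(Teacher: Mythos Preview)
Your proposal is correct and follows essentially the same route as the paper's (very terse) proof, which simply cites Lemma~\ref{lem:large-kernel}, Proposition~\ref{prop:fixed-kernel}, Proposition~\ref{prop:cond-gibbs-w1}, and ``the convexity of the KL divergence.'' Your version is in fact slightly sharper on the KL side: because $\tilde{\mu}_{G,\beta}$ and $\mu_{G,\beta}$ share the same kernel marginal and the level sets $\varphi^{-1}(A)$ are disjoint, the decomposition $\KL(\tilde{\mu}_{G,\beta},\mu_{G,\beta})=\sum_A w_A\,\KL(\mu_{G,\infty}|_A,\mu_{G,\beta}|_A)$ is an \emph{equality} (chain rule), not merely the inequality that joint convexity gives; either is enough. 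Your explicit handling of the small-kernel contribution via crude $O(n\log n)$ bounds multiplied by the exponentially small weight from Lemma~\ref{lem:large-kernel} is also correct and simply makes explicit what the paper leaves implicit.
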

\begin{proof}
    The first result follows by Lemma~\ref{lem:large-kernel}, Proposition~\ref{prop:fixed-kernel}, and the convexity of the KL divergence. The second result follows by Lemma~\ref{lem:large-kernel} and Proposition~\ref{prop:cond-gibbs-w1}.
\end{proof}

\subsubsection{Gibbs measures in the low temperature phase}\label{sec:gibbs-low-phase}

Now we complement Theorem~\ref{thm:logz-formula} by studying how the Gibbs measures are compared to the uniform measure $\mu_{G,\infty}$ in $1$-Wasserstein metric. We begin with the low temperature phase \eqref{eqn:low-temp-phase}, in which case the Gibbs measures are close to $\mu_{G,\infty}$.

\begin{theorem}\label{thm:gibbs-low-phase}
    In the low temperature phase \eqref{eqn:low-temp-phase} we have
    \[
        W_1(\mu_{G_n,\infty},\mu_{G_n,\beta})=o_p(n)\,.
    \]
\end{theorem}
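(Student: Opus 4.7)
The plan is to combine the structural approximation $\mu_{G,\beta}\approx\tilde\mu_{G,\beta}$ from Corollary~\ref{cor:gibbs-w1} with a vertex-wise bound on $W_1(\mu_{G,\infty}|_A,\mu_{G,\infty})$ in terms of the kernel deficit, and close with concentration of the kernel size $|\varphi(T)|$ near $N_{d^*}$ under $\mu_{G,\beta}$ in the low temperature phase. By Corollary~\ref{cor:gibbs-w1} we have $W_1(\tilde\mu_{G,\beta},\mu_{G,\beta})=o_p(n)$, so the triangle inequality reduces the claim to $W_1(\tilde\mu_{G,\beta},\mu_{G,\infty})=o_p(n)$. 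Every shortest path tree has $\varphi(T)=\Gamma_{d^*}$, since each $v\in\Gamma_{d^*}$ must pick a parent in $\Gamma_{d^*-1}$, so $\mu_{G,\infty}=\mu_{G,\infty}|_{\Gamma_{d^*}}$; combining this with the mixture representation of $\tilde\mu_{G,\beta}$ and convexity of $W_1$ yields
\[
W_1(\tilde\mu_{G,\beta},\mu_{G,\infty})\leq\E_{\mu_{G,\beta}}\!\bigl[W_1(\mu_{G,\infty}|_{\varphi(T)},\mu_{G,\infty}|_{\Gamma_{d^*}})\bigr].
\]

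Next I would bound $W_1(\mu_{G,\infty}|_A,\mu_{G,\infty}|_{\Gamma_{d^*}})$ for $A\in\mathcal{K}(G)$ with $|A|\geq m_\ell$. Both are product measures over parent choices, so by Proposition~\ref{prop:prod-coupling} the $W_1$ factorizes vertex-wise. Writing $k:=N_{d^*}-|A|$: vertices in $\Gamma_{\leq d^*-1}\cup A$ have identical marginals and contribute $0$; the $k$ vertices in $\Gamma_{d^*}\setminus A$ contribute exactly $1$ each (their parents lie in $A$ vs.\ in $\Gamma_{d^*-1}$); each $v\in\Gamma_{d^*+1}$ contributes $1-|\mathsf{N}_G(v)\cap A|/|\mathsf{N}_G(v)\cap\Gamma_{d^*}|$, which standard Erd\"os--R\'enyi concentration on the neighborhoods shows is $O(k/N_{d^*})$ on average with high probability; and deeper vertices contribute only $o(n)$ in total by Lemma~\ref{lem:gamma-a}. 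Summing, $W_1(\mu_{G,\infty}|_A,\mu_{G,\infty})=O(kn/N_{d^*})+o(n)$ uniformly in admissible $A$, so it suffices to show $\E_{\mu_{G,\beta}}[(N_{d^*}-|\varphi(T)|)/N_{d^*}]=o_p(1)$.

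This final concentration step comes from a refinement of the low temperature analysis of Theorem~\ref{thm:logz-formula}. By Corollary~\ref{cor:fixed-kernel-size}, for $m\geq m_\ell$,
\[
\mu_{G,\beta}(|\varphi(T)|=m)\leq\exp\bigl(\tilde\Phi_{G,\beta}(m)-\tilde\Phi_{G,\beta}(N_{d^*})+o(n)\bigr).
\]
For $m=(1-\alpha)N_{d^*}$ with $\alpha\in(0,1/2]$ fixed (equivalently $\epsilon=\alpha\kappa_n$ in the notation of \eqref{eqn:ltp-gdiff}), the $g$-gap computed there is of order $\alpha\kappa_n\,n$ in sub-case~(i), where $\beta-1+\Delta_n+\kappa_n^{-1}$ is bounded below by a positive constant, and of order $\alpha^2\,n$ in the boundary sub-case~(ii), from the quadratic term $-(\epsilon\kappa_n^{-1}+\log(1-\epsilon\kappa_n^{-1}))$ since $\kappa_n=\Theta(1)$ there. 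In either case the upper-bound correction $h((1-\alpha)N_{d^*})=O\bigl(N_{d^*}(\alpha\log(1/\alpha)+\sqrt{\alpha})\bigr)=O(N_{d^*})$ from Section~\ref{sec:gibbs-opt-sol} is dominated by the gap, since $N_{d^*}/(n\kappa_n)=\Theta(1/\log\log n)$ makes $h/(\text{gap})=o(1)$ for any fixed $\alpha>0$. Combined with Lemma~\ref{lem:large-kernel} for $m<m_\ell$, this yields $\mu_{G,\beta}(|\varphi(T)|\leq(1-\alpha)N_{d^*})=e^{-\Omega(n)}+o(1)$, and the layer-cake identity $\E[(N_{d^*}-|\varphi(T)|)/N_{d^*}]=\int_0^1\mu_{G,\beta}(|\varphi(T)|<(1-t)N_{d^*})\,dt$ together with dominated convergence gives $\E[(N_{d^*}-|\varphi(T)|)/N_{d^*}]=o_p(1)$.

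The main obstacle I anticipate is aligning the $h$-correction with the $g$-gap in the boundary sub-case~(ii), where the gap is only quadratic in $\alpha$ and both terms are of comparable polynomial order before refinement; the saving grace is that sub-case~(ii) forces $\lambda_n\to 1$ with bounded $\kappa_n$, hence $N_{d^*}=\Theta(n/\log\log n)=o(n)$ and so $h=o(n)$ uniformly, leaving room for even a quadratic gap to deliver exponential concentration. A minor routine step is the Erd\"os--R\'enyi concentration for $|\mathsf{N}_G(v)\cap A|$ used in the vertex-wise $W_1$ bound, which follows from standard Chernoff inequalities analogous to those deployed in Section~\ref{sec:cond-gibbs-uniform}.
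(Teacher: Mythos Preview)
Your approach is sound and genuinely different from the paper's. The paper proves Theorem~\ref{thm:gibbs-low-phase} via the entropy--transport inequality (Corollary~\ref{cor:high-entropy-wasserstein}): it conditions $\tilde\mu_{G,\beta}$ on $|\varphi(T)|\geq m_1$, shows this conditioned measure has near-maximal entropy, \emph{rounds} it to the support of $\mu_{G,\infty}$ while tracking the entropy loss, and then invokes Corollary~\ref{cor:high-entropy-wasserstein} to bound $W_1$ to the uniform measure. By contrast, you exploit the product structure of both $\mu_{G,\infty}|_A$ and $\mu_{G,\infty}$ directly via Proposition~\ref{prop:prod-coupling}, then control the vertex-wise contributions combinatorially. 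Your route is more elementary in that it avoids the entropy--transport machinery entirely; the paper's route is more unified and reuses the same rounding-plus-entropy template as Proposition~\ref{prop:cond-gibbs-w1}.

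One point deserves more care than ``standard Chernoff.'' The claim that the $\Gamma_{d^*+1}$ contribution satisfies $\sum_{v}\bigl(1-|\mathsf N_A(v)|/|\mathsf N_{\Gamma_{d^*}}(v)|\bigr)=O(kn/N_{d^*})+o(n)$ \emph{uniformly over all} $A$ with $|A|\geq m_\ell$ does not follow from a naive union bound when $\lambda<1$: there are $2^{\Theta(n)}$ such sets, and Hoeffding would then only give $O(n)$ deviation. What does work is to first restrict, via your third paragraph (which is essentially a reproof of Lemma~\ref{lem:gibbs-low-apx-unif}), to kernels with $k\leq Cn/\log\log n$; on that event the union is over only $e^{o(n)}$ sets and Hoeffding yields deviation $o(n)$ uniformly. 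Alternatively, for $\lambda\in(0,1)$ one can argue directly that $\max_u W_u=O(1)$ since all $|\mathsf{par}_G(v)|$ and $|\mathsf{ch}_G(u)|$ are $\Theta(\log n)$; for $\lambda=1$ one has $N_{d^*}=o(n)$ so the union bound is harmless; for $\lambda=0$ the $\Gamma_{d^*+1}$ term is $\leq N_{d^*+1}=o(n)$ trivially. Either way, once you stitch the uniform bound to the kernel-size concentration (rather than asserting it over all $|A|\geq m_\ell$), your argument closes. The layer-cake in your last step is correct but can be replaced by the simpler observation $\E[k/N_{d^*}]\leq\alpha+\mu_{G,\beta}(k>\alpha N_{d^*})$ for arbitrary $\alpha>0$.
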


The idea of our proof is to show that a typical tree in the low temperature phase has a nearly maximum sized kernel. Similar to the our proof method in Section~\ref{sec:cond-gibbs-uniform}, we observe that the entropy part of the formula in Theorem~\ref{thm:logz-formula} is almost the maximum possible, and then apply the entropy-transport inequality.

\begin{lemma}\label{lem:gibbs-low-apx-unif}
    For any constant $C>0$, let
    \begin{equation}\label{eqn:gibbs-m1-def}
        m_1:=m_\ell\vee\lfloor(1-C\kappa_n^{-1})N_{d^*}\rfloor\,.
    \end{equation}
    Then in the low temperature phase \eqref{eqn:low-temp-phase}, we have with probability at least $1-o(1)$ that
    \[
        \mu_{G,\beta}(\{T:|\varphi(T)|\leq m_1\})\leq \exp(-\Omega(n))\,.
    \]
\end{lemma}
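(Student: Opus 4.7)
My plan is to split the event $\{|\varphi(T)| \le m_1\}$ according to whether the kernel is smaller than $m_\ell$ or lies in $[m_\ell, m_1]$. For the first piece, Lemma~\ref{lem:large-kernel} already yields
\[
\mu_{G,\beta}(\{T : |\varphi(T)| < m_\ell\}) \le \exp\bigl(-(1/3 - o(1))n\bigr),
\]
so only the second piece requires additional work. There we decompose the restricted partition function as
\[
\sum_{m=m_\ell}^{m_1} Z_{G,\beta}\|_m \;\le\; (N_{d^*} - m_\ell + 1) \cdot \max_{m_\ell \le m \le m_1} Z_{G,\beta}\|_m,
\]
apply Corollary~\ref{cor:fixed-kernel-size} to replace $\log(Z_{G,\beta}\|_m)$ by $\tilde{\Phi}_{G,\beta}(m) + o(n)$ uniformly on this range, and combine with Theorem~\ref{thm:logz-formula} which identifies $\log Z_{G,\beta} = \tilde{\Phi}_{G,\beta}(N_{d^*}) + o(n)$ in the low temperature phase. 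The whole argument thus reduces to proving the quantitative gap
\[
\max_{m_\ell \le m \le m_1}\tilde{\Phi}_{G,\beta}(m) \;\le\; \tilde{\Phi}_{G,\beta}(N_{d^*}) - c(C) n + o(n)
\]
for some $c(C) > 0$.

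To establish the gap, I would import the computation \eqref{eqn:ltp-gdiff} from the Case 1 analysis in Section~\ref{sec:gibbs-opt-sol}. Applied with $\epsilon = C$ (and replacing by $\epsilon_0 := (\kappa_n \wedge 1)/2$ when $C$ is too large to admit a direct substitution), it yields
\[
g(N_{d^*}) - g\bigl((1 - \epsilon \kappa_n^{-1}) N_{d^*}\bigr) \;\ge\; \epsilon(\beta - 1 + \Delta_n + \kappa_n^{-1}) n \;-\; \bigl(\epsilon\kappa_n^{-1} + \log(1-\epsilon\kappa_n^{-1})\bigr) n - o(n).
\]
In subregime (a), namely $\beta > \limsup(1-\Delta_n-\kappa_n^{-1})$, the first term is $\ge \epsilon \delta n$ for some $\delta > 0$ and handles the gap. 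In subregime (b), where $\beta = \limsup(1-\Delta_n-\kappa_n^{-1})$ and $\kappa_n$ is bounded, the first term vanishes to leading order but the second term $-(\epsilon\kappa_n^{-1} + \log(1-\epsilon\kappa_n^{-1})) n$ is $\Omega(n)$ for $\epsilon\kappa_n^{-1}$ bounded away from $0$. Because $N_{d^*} = (1-\lambda_n)n = \kappa_n n / \log\log n$, a quick check shows that the entropy piece $h(m_1) = N_{d^*} H(m_1/N_{d^*}) + O(\sqrt{\,\cdot\,})$ defined in \eqref{eqn:gibbs-opt-obj-ub} is $o(n)$ throughout the low temperature phase, so Lemma~\ref{lem:psi-bounds} gives $\tilde{\Phi}_{G,\beta}(m_1) \le g(m_1) + o(n)$ and we get the desired gap at the endpoint $m = m_1$.

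To promote this to a bound uniform in $m \in [m_\ell, m_1]$, I rely on concavity of $f_u$ from Section~\ref{sec:gibbs-opt-sol}. The Case 1 analysis established that the maximizer of $f_u$ lies in $[(1-\epsilon_0\kappa_n^{-1})N_{d^*}, N_{d^*}]$. When $C \ge \epsilon_0$ we have $m_1 \le (1-\epsilon_0\kappa_n^{-1})N_{d^*}$, so by concavity $f_u$ is nondecreasing on $[m_\ell, m_1]$ and the max is $f_u(m_1)$, to which the bound of the previous paragraph applies. For smaller $C < \epsilon_0$ the interval $[m_\ell, m_1]$ may contain the maximizer of $f_u$; here I would apply the endpoint gap estimate at the intermediate point $(1-\epsilon_0\kappa_n^{-1})N_{d^*}$ rather than $m_1$ (this point still dominates $m_1$ from below, so the inclusion $\{|\varphi(T)| \le m_1\} \subseteq$ is preserved in the relevant direction only after switching to the smaller threshold, which is why one must argue directly on the maximizer and not just the endpoint), and upper bound $\tilde{\Phi}_{G,\beta}$ on the remaining window $[(1-\epsilon_0\kappa_n^{-1})N_{d^*}, m_1]$ by the more delicate derivative analysis: the first order expansion of $\tilde{\Phi}_{G,\beta}(N_{d^*} - \Delta)$ in $\Delta = \Theta(n/\log\log n)$ shows that each unit of $\Delta$ costs $\Theta(\log\log n)$ in the objective whenever $\beta$ is strictly above $1 - \Delta_n - \kappa_n^{-1}$, producing a gap of order $\Theta(Cn)$.

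I expect the boundary subregime (b) together with the interval where the maximizer of $f_u$ may fall inside $[m_\ell, m_1]$ to be the main technical obstacle, since there the cancellation between the linear-in-$\epsilon$ term and the $\log(1 - \epsilon\kappa_n^{-1})$ term is delicate; extracting a nontrivial $\Omega(n)$ gap requires tracking the quadratic correction $\tfrac{1}{2}\epsilon^2\kappa_n^{-2}$ carefully. Everything else is routine manipulation of Corollary~\ref{cor:fixed-kernel-size}, Lemma~\ref{lem:psi-bounds}, and Theorem~\ref{thm:logz-formula}.
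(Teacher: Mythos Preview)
Your approach is essentially the paper's: decompose via kernel size, bound $\log(Z_{G,\beta}\|_m)$ by $f_u(m)+o(n)$ through Corollary~\ref{cor:fixed-kernel-size} and Lemma~\ref{lem:psi-bounds}, lower bound $\log Z_{G,\beta}$ by $f_u(N_{d^*})$, and establish the $\Omega(n)$ gap via the computation \eqref{eqn:ltp-gdiff} with $\epsilon=C$.

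The one place you overcomplicate is the concavity step. Your case split on $C\gtrless\epsilon_0$ and the ensuing ``delicate derivative analysis'' are unnecessary. Once you have shown $f_u(N_{d^*})-f_u(m_1)=\Omega(n)>0$ (which your $\epsilon=C$ substitution into \eqref{eqn:ltp-gdiff} already gives, together with $h(m_1)=o(n)$), concavity of $f_u$ forces the maximizer of $f_u$ on $[0,N_{d^*}]$ to lie \emph{strictly to the right of} $m_1$. Hence $f_u$ is nondecreasing on $[m_\ell,m_1]$ regardless of the value of $C$, and $\max_{m_\ell\le m\le m_1}f_u(m)=f_u(m_1)$. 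Your worry that ``the interval $[m_\ell,m_1]$ may contain the maximizer of $f_u$'' cannot arise: that would contradict $f_u(N_{d^*})>f_u(m_1)$ for a concave function with $N_{d^*}>m_1$. This is exactly how the paper concludes, in one line.
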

\begin{proof}
    The statement follows by Lemma~\ref{lem:large-kernel} if $m_1=m_{\ell}$ so we assume $m_1=\lfloor(1-C\kappa_n^{-1})N_{d^*}\rfloor$. We have
    \[
        \begin{split}
            \log\left(\mu_{G,\beta}(\{T:|\varphi(T)|\leq m_1\})\right) &= \log\left(\sum_{m=m_0}^{m_1}Z_{G,\beta}\|_m\right)-\log Z_{G,\beta}\\
            &\leq\max_{m_0\leq m\leq m_1}\log(Z_{G,\beta}\|_m)-\log Z_{G,\beta}+\log n\,.
        \end{split}
    \]
    We will use $f_u(x)$ defined in \eqref{eqn:gibbs-opt-obj-ub} to upper bound $\tilde{\Phi}_{G,\beta}(m)$. Namely, by Corollary~\ref{cor:fixed-kernel-size}, we have
    \[
        \begin{split}
            \max_{m_0\leq m\leq m_1}\log(Z_{G,\beta}\|_m) &=\max_{m_0\leq m\leq m_1}\tilde{\Phi}_{G,\beta}(m)\\
            &\leq\max_{m_0\leq x\leq m_1}f_u(x)+o_p(n)
        \end{split}
    \]
    and
    \[
        \log Z_{G,\beta}\geq f_u(N_{d^*})\,.
    \]
    We now take a look at $f_u((1-C\kappa_n^{-1})N_{d^*})$. Similar to \eqref{eqn:ltp-nds-pert}, we have
    \begin{equation}\label{eqn:m1-asymp}
        (1-C\kappa_n^{-1})N_{d^*}\geq N_{d^*}-\frac{Cn}{\log\log n}-o_p\left(\frac{n}{\log\log n}\right)
    \end{equation}
    which implies $h((1-C\kappa_n^{-1})N_{d^*})=o_p(n)$. Also similar to \eqref{eqn:ltp-gdiff},
    \[
        g(N_{d^*})-g((1-C\kappa_n^{-1})N_{d^*})=C(\beta-1+\Delta_n+\kappa_n^{-1})n-(C\kappa_n^{-1}+\log(1-C\kappa_n^{-1}))n-o_p(n)\,.
    \]
    We consider two regimes for the low temperature phase. If $\beta>\limsup_{n\to\infty}(1-\Delta_n-\kappa_n^{-1})$, then the RHS is clearly $\Omega_p(n)$. Otherwise, if $\beta=\limsup_{n\to\infty}(1-\Delta_n-\kappa_n^{-1})$ and $\limsup_{n\to\infty}\kappa_n=K<\infty$ for some constant $K$, then $C\kappa_n^{-1}>\frac{C}{2K}$ for all sufficiently large $n$, which again implies that the RHS is $\Omega(n)$. In either case, we arrive at $g(N_{d^*})-g((1-C\kappa_n^{-1})N_{d^*})=\Omega_p(n)$ which gives
    \[
        f_u(N_{d^*})-f_u((1-C\kappa_n^{-1})N_{d^*})=\Omega_p(n)\,.
    \]
    By the concavity of $f_u$ this implies that $f_u(x)$ is increasing in $x\leq(1-C\kappa_n^{-1})N_{d^*}$. As a result, we conclude that
    \[
        \begin{split}
            \log\left(\mu_{G,\beta}(\{T:|\varphi(T)|\leq m_1\})\right) &\leq \max_{m_0\leq m\leq m_1}\log(Z_{G,\beta}\|_m)-\log Z_{G,\beta}+o_p(n)\\
            &\leq f_u((1-C\kappa_n^{-1})N_{d^*})-f_u(N_{d^*})+o_p(n)\\
            &\leq -\Omega_p(n)\,.
        \end{split}
    \]
    proving the lemma.
\end{proof}

\begin{proof}[Proof of Theorem~\ref{thm:gibbs-low-phase}]
    Let $C>0$ be an arbitrary constant and define $m_1$ as in \eqref{eqn:gibbs-m1-def}. Consider the measure $\tilde{\mu}_{G,\beta}\|_{\geq m_1}$ defined by $\tilde{\mu}_{G,\beta}$ conditioned on $|\varphi(T)|\geq m_1$:
    \[
        \tilde{\mu}_{G,\beta}\|_{\geq m_1}(T):=\sum_{A\in\mathcal{K}(G)}\mu_{G,\beta}(\varphi(T)=A\mid|\varphi(T)|\geq m_1)\cdot\mu_{G,\infty}|_A(T)\,.
    \]
    By Lemma~\ref{lem:gibbs-low-apx-unif}, Proposition~\ref{prop:fixed-kernel}, and the convexity of the KL divergence we have
    \begin{equation}\label{eqn:gibbs-lk-kl}
        \KL(\tilde{\mu}_{G,\beta}\|_{\geq m_1},\mu_{G,\beta})=o_p(n)
    \end{equation}
    and by Proposition~\ref{prop:cond-gibbs-w1} and Lemma~\ref{lem:gibbs-low-apx-unif}
    \begin{equation}\label{eqn:main-w11}
        W_1(\tilde{\mu}_{G,\beta}\|_{\geq m_1},\mu_{G,\beta})=o_p(n)\,.
    \end{equation}
    By Proposition~\ref{prop:fixed-kernel} and the Gibbs variational principle (Proposition~\ref{prop:gibbs-variational}) we have
    \begin{equation}\label{eqn:ent-bound}
        \begin{split}
            H_{\tilde{\mu}_{G,\beta}\|_{\geq m_1}}(T) &\geq \log Z_{G,\beta}+\overline{\beta}\E_{\tilde{\mu}_{G,\beta}\|_{\geq m_1}}\left[\sum_v\mathsf{d}_T(1,v)\right]-o_p(n)\\
            &\geq\log (Z_{G,\beta}|_{\Gamma_{d^*}})+\overline{\beta}\sum_v\mathsf{d}_G(1,v)-o_p(n)\\
            &=\sum_{v\in V\setminus\{1\}}\log|\mathsf{par}_G(v)|-o_p(n)
        \end{split}
    \end{equation}
    and by \eqref{eqn:m1-asymp}
    \begin{equation}\label{eqn:gibbs-m1-dbound}
        \E_{\tilde{\mu}_{G,\beta}\|_{\geq m_1}}\left[\sum_v\mathsf{d}_T(1,v)\right]\leq\sum_{v}\mathsf{d}_G(1,v)+\frac{Cn}{\log\log n}+o_p\left(\frac{n}{\log\log n}\right)\,.
    \end{equation}
    Now we round $\tilde{\mu}_{G,\beta}\|_{\geq m_1}$ to a measure $\nu$ supported on the set of shortest path trees using the following procedure. For a spanning tree $T\sim\tilde{\mu}_{G,\beta}\|_{\geq m_1}$, for each vertex $v\in\Gamma_{d^*}\setminus\varphi(T)$, replace $\mathsf{par}_T(v)$ with an arbitrary vertex in $\Gamma_{d^*-1}$. Since $\tilde{\mu}_{G,\beta}\|_{\geq m_1}$ is a mixture of measures of the form $\mu_{G,\infty}|_A$, it is not difficult to see that the resulting graph $S$ is a shortest path tree and that
    \begin{equation}\label{eqn:main-w12}
        W_1(\nu,\tilde{\mu}_{G,\beta}\|_{\geq m_1})=o_p(n)\,.
    \end{equation}
    For $\mathbf{A}=\varphi(T)$ note that
    \[
        H_{\tilde{\mu}_{G,\beta}\|_{\geq m_1}}(T)=H_{\tilde{\mu}_{G,\beta}\|_{\geq m_1}}(\mathbf{A})+H_{\tilde{\mu}_{G,\beta}\|_{\geq m_1}}(T\mid\mathbf{A})
    \]
    where
    \[
        \begin{split}
            H_{\tilde{\mu}_{G,\beta}\|_{\geq m_1}}(\mathbf{A}) &\leq \log\left(\sum_{m=m_1}^{N_{d^*}}\binom{N_{d^*}}{m}\right)\\
            &\leq\log\binom{n}{m_1}+\log n\\
            &=o_p(n)\,.
        \end{split}
    \]
    Thus,
    \begin{equation}\label{eqn:cond-ent-bound}
        H_{\tilde{\mu}_{G,\beta}\|_{\geq m_1}}(T\mid\mathbf{A})\geq H_{\tilde{\mu}_{G,\beta}\|_{\geq m_1}}(T)-o_p(n)\,.
    \end{equation}
    Also,
    \[
        \begin{split}
            H_{\tilde{\mu}_{G,\beta}\|_{\geq m_1}}(T\mid\mathbf{A}=A) &= H_{\mu_{G,\infty}|_A}(T)\\
            &=\sum_{v\in\Gamma_{d^*}\setminus A}H_{\mu_{G,\infty}|_A}(\mathsf{par}_T(v))+\sum_{v\in V\setminus\{1\}\setminus(\Gamma_{d^*}\setminus A)}H_{\mu_{G,\infty}|_A}(\mathsf{par}_T(v))\\
            &\leq|\Gamma_{d^*}\setminus A|\log\log n+\sum_{v\in V\setminus\{1\}\setminus(\Gamma_{d^*}\setminus A)}H_{\mu_{G,\infty}|_A}(\mathsf{par}_T(v))+o_p(n)
        \end{split}
    \]
    which gives
    \[
        \begin{split}
            H_\nu(S\mid\mathbf{A}=A) &\geq \sum_{v\in V\setminus\{1\}\setminus(\Gamma_{d^*}\setminus A)}H_{\mu_{G,\infty}|_A}(\mathsf{par}_T(v))\\
            &\geq H_{\tilde{\mu}_{G,\beta}\|_{\geq m_1}}(T\mid\mathbf{A}=A)-|\Gamma_{d^*}\setminus A|\log\log n-o_p(n)\,.
        \end{split}
    \]
    Thus,
    \[
        \begin{split}
            H_{\nu}(S) &\geq H_{\nu}(S\mid\mathbf{A})\\
            &\geq H_{\tilde{\mu}_{G,\beta}\|_{\geq m_1}}(T\mid\mathbf{A})-(\log\log n)\E_{\tilde{\mu}_{G,\beta}\|_{\geq m_1}}[|\Gamma_{d^*}\setminus\mathbf{A}|]-o_p(n)\,.
        \end{split}
    \]
    Here, \eqref{eqn:gibbs-m1-dbound} gives
    \[
        \begin{split}
            (\log\log n)\E_{\tilde{\mu}_{G,\beta}\|_{\geq m_1}}[|\Gamma_{d^*}\setminus\mathbf{A}|]&\leq (\log\log n)\E_{\tilde{\mu}_{G,\beta}\|_{\geq m_1}}[|v\in V:\mathsf{d}_T(1,v)\neq\mathsf{d}_G(1,v)|]\\
            &\leq Cn+o_p(n)\,.
        \end{split}
    \]
    Combined with \eqref{eqn:ent-bound} and \eqref{eqn:cond-ent-bound}, we get
    \[
        H_{\nu}(S)\geq\sum_{v\in V\setminus\{1\}}\log|\mathsf{par}_G(v)|-Cn-o_p(n)\,.
    \]
    Applying Corollary~\ref{cor:high-entropy-wasserstein} to $\nu$ and $\mu=\mu_{G,\infty}$ gives
    \begin{equation}\label{eqn:w1-small}
        W_1(\nu,\mu_{G,\infty})\leq\sqrt{2C}n+o_p(n)\,.
    \end{equation}
    Since $C$ is arbitrary, this implies $W_1(\nu,\mu_{G,\infty})=o_p(n)$. Together with \eqref{eqn:main-w11} and \eqref{eqn:main-w12} the proposition is proved.
\end{proof}

\subsubsection{Gibbs measures in the high temperature phase}\label{sec:gibbs-high-phase}

Now we move to the high temperature phase. Here, we wish to prove that $\mu_{G,\beta}$ is far from the uniform measure over the shortest path trees.

\begin{theorem}\label{thm:gibbs-high-phase}
    In the high temperature phase \eqref{eqn:high-temp-phase} we have
    \[
        W_1(\mu_{G_n,\infty},\mu_{G_n,\beta})=\Theta_p(n)\,.
    \]
\end{theorem}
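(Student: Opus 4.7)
The upper bound $W_1 \leq n$ is trivial since the Hamming metric on spanning trees of a fixed graph is bounded by $n$, so the substance is the matching lower bound $W_1 = \Omega_p(n)$. The plan is to apply Kantorovich--Rubinstein duality and exhibit, with probability $1-o(1)$ over $G$, an $O(1)$-Lipschitz function $f$ whose expectations under $\mu_{G,\infty}$ and $\mu_{G,\beta}$ differ by $\Omega(n)$. As in Section~\ref{sec:to-overlap}, by a compactness argument on the proxy sequence it suffices to prove this along subsequences where $\lambda_n \to \lambda \in [0,1]$.

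Everything hinges on a high-temperature reflection of Lemma~\ref{lem:gibbs-low-apx-unif}: for any constant $\epsilon>0$, with probability $1-o(1)$,
\[
\mu_{G,\beta}\bigl(|\varphi(T)| > (1+\epsilon) m^*\bigr) \leq \exp(-\Omega(n)),
\]
where $m^* = \lfloor N_{d^*}/((1-\Delta_n-\beta)\kappa_n)\rfloor$ is the maximizer from Theorem~\ref{thm:logz-formula}. The proof is essentially identical to that of Lemma~\ref{lem:gibbs-low-apx-unif}, using the concavity of the upper-bound function $f_u$ from~\eqref{eqn:gibbs-opt-obj-ub} on the opposite side of $m^*$: the computation of~\eqref{eqn:htp-gdiff} yields $f_u((1+\epsilon) m^*) \leq f_u(m^*) - \Omega(n)$ because the HT condition forces $(1-\Delta_n-\beta)\kappa_n \geq 1+\delta$ for some fixed $\delta > 0$.

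When $\lambda < 1$ (so $N_{d^*} = (1-\lambda)n(1-o(1)) = \Omega(n)$), take the $1$-Lipschitz witness $f(T) = |\varphi(T)|$; changing one parent alters $|\varphi|$ by at most $1$. Under $\mu_{G,\infty}$ we have $|\varphi(T)| = N_{d^*}$ identically, while the kernel-size concentration and the trivial bound $|\varphi(T)| \leq N_{d^*}$ give $\E_{\mu_{G,\beta}}|\varphi(T)| \leq (1+\epsilon)m^* + o(1)$. Hence
\[
\E_{\mu_{G,\infty}} f - \E_{\mu_{G,\beta}} f \;\geq\; N_{d^*}\left(1 - \frac{1+\epsilon}{(1-\Delta_n-\beta)\kappa_n}\right) - o(1) = \Omega(n).
\]
When $\lambda = 1$ (regimes \ref{item:intro-regime3-1} and \ref{item:intro-regime3-2}), $N_{d^*} = o(n)$ but $\mu_\infty := N_{d^*+1}/N_{d^*} \to \infty$, so we switch to the children-profile witness
\[
f(T) = \sum_{u \in \Gamma_{d^*}} \min\bigl(c_u(T), \theta\bigr), \qquad c_u(T) := |\{v : \mathsf{par}_T(v) = u\}|,
\]
with threshold $\theta = (1+\delta')\mu_\infty$; this is $2$-Lipschitz since changing one parent shifts exactly two $c_u$'s by $\pm 1$. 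Under $\mu_{G,\infty}$ each $c_u$ is a sum of independent Bernoullis with mean $\mu_\infty\to\infty$, so Chernoff gives $\E[(c_u-\theta)_+] = \mu_\infty \exp(-\Omega(\mu_\infty)) = o(\mu_\infty)$ and hence $\E_{\mu_{G,\infty}} f \geq N_{d^*+1}(1-o(1))$. Under $\mu_{G,\beta}$, Proposition~\ref{prop:cond-gibbs-w1} lets us replace $\mu_{G,\beta}|_A$ by $\mu_{G,\infty}|_A$ at $W_1$-cost $o(n)$ (absorbed via the Lipschitz constant of $f$); under $\mu_{G,\infty}|_A$ we have $c_u = 0$ for $u\notin A$ and $\min(c_u,\theta) \leq \theta$ everywhere, so combined with the kernel bound $|A|\leq (1+\epsilon)m^*$ from the first step,
\[
\E_{\mu_{G,\beta}} f \;\leq\; (1+\epsilon)m^*\,\theta + o(n) = (1+o(1))(1+\delta')\,r\,N_{d^*+1} + o(n),
\]
where $r = m^*/N_{d^*} \leq 1/(1+\delta)$ in the HT phase. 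Choosing $\epsilon,\delta'$ small, the difference is at least $c_\delta\, N_{d^*+1} - o(n) = \Omega(n)$, since $N_{d^*+1} = n - o(n)$ when $\lambda = 1$.

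The main technical hurdle is the kernel-size concentration, which is a direct reflection of Lemma~\ref{lem:gibbs-low-apx-unif} across $m^*$. The split on $\lambda$ is essentially forced: the $|\varphi|$ witness requires $N_{d^*} = \Omega(n)$, whereas the Poisson concentration underlying the children-profile witness needs $\mu_\infty \to \infty$, and these are exactly the complementary regimes $\lambda < 1$ and $\lambda = 1$.
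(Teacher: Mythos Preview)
Your overall strategy is sound and the $\lambda<1$ branch is genuinely cleaner than what the paper does: the paper never uses $|\varphi(T)|$ directly as a witness but instead works in every regime with a single ``children-profile'' witness
\[
f(T)=\sum_{v\in\Gamma_{d^*}}\Bigl||\mathsf{ch}_T(v)|-\tfrac{N_{d^*+1}}{N_{d^*}}\Bigr|,
\]
combined with a concentration lemma for this sum over all kernels (Lemma~\ref{lem:multinom-conc}). The paper's approach avoids a case split at the price of that extra lemma; your split buys you the much simpler $|\varphi(T)|$ witness when $N_{d^*}=\Omega(n)$, and only when $N_{d^*}=o(n)$ do you fall back on a children-profile argument (with truncation instead of absolute deviation).

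There are two small inaccuracies in the $\lambda=1$ branch that you should patch. First, under $\mu_{G,\infty}|_A$ it is not literally true that $c_u=0$ for $u\in\Gamma_{d^*}\setminus A$: such a $u$ sits at distance $d^*+1$ and can still be the parent of a vertex at distance $d^*+2$. What saves you is Lemma~\ref{lem:gamma-a}, which gives $|V\setminus\Gamma_{\leq d^*+1}^A|=o(n/\log\log n)$, so $\sum_{u\notin A}c_u=o(n)$ and your bound $\E_{\mu_{G,\beta}}f\le (1+\epsilon)m^*\theta+o(n)$ survives. Second, the line ``each $c_u$ is a sum of independent Bernoullis with mean $\mu_\infty$'' hides the randomness of $G$: the conditional means $\E_{\mu_{G,\infty}}[c_u]=\sum_{v}|\mathsf{par}_G(v)|^{-1}\mathbf{1}_{u\in\mathsf{par}_G(v)}$ are themselves random and not identically $\mu_\infty$, and vertex-wise Chernoff plus a union bound over $u\in\Gamma_{d^*}$ is delicate since $N_{d^*}$ can be as large as $n/\log\log n$ while the per-vertex tail is only $\exp(-\Theta(\log n))$. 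One clean fix is to use $(c_u-\theta)_+\le |c_u-\mu_\infty|$ (valid since $\theta>\mu_\infty$) and invoke the paper's Lemma~\ref{lem:multinom-conc} with $A=\Gamma_{d^*}$, which gives $\E_{\mu_{G,\infty}}\sum_u|c_u-\mu_\infty|\le\sqrt{2N_{d^*}(n-N_{d^*})}+o_p(n)=o_p(n)$ when $\lambda=1$. With these two repairs your argument goes through.
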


To show that the Wasserstein distance is large, we use the dual formulation of Wasserstein metric by explicitly constructing a $1$-Lipschitz function of spanning trees and compare the expectations under $\mu_{G,\beta}$ and $\mu_{G,\infty}$. The key object that distinguishes the high temperature phase from the low temperature phase is again kernels. Recall from Theorem~\ref{thm:logz-formula} that the optimal kernel size that contributes the most to the log partition function is
\[
    m^*=\left\lfloor\frac{N_{d^*}}{(1-\Delta_n-\beta)\kappa_n}\right\rfloor\,.
\]
When there is an asymptotic gap between $\beta$ and $1-\Delta_n-\kappa_n^{-1}$ as in \eqref{eqn:high-temp-phase}, this optimal kernel size is strictly smaller than $N_{d^*}$. We first prove that a typical tree in the high temperature phase has kernel size very close to this $m^*$, thereby is strictly smaller than $N_{d^*}$.

\begin{lemma}\label{lem:gibbs-small-kernel}
    Suppose that $\beta<1-\Delta_n-\kappa_n^{-1}-\delta$ for a constant $\delta>0$. Let
    \begin{equation}\label{eqn:gibbs-m2-def}
        m_2:=\left\lceil\left(1+\frac{\delta}{7}\right)\cdot\frac{\kappa_n^{-1}N_{d^*}}{1-\Delta_n-\beta}\right\rceil\,.
    \end{equation}
    Then we have asymptotically almost surely that
    \[
        \mu_{G,\beta}(\{T:|\varphi(T)|\geq m_2\})\leq\exp\left(-\Omega(n)\right)\,.
    \]
\end{lemma}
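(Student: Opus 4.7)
The plan is to mirror the proof of Lemma~\ref{lem:gibbs-low-apx-unif}, applied on the opposite side of the concave upper envelope $f_u$ from~\eqref{eqn:gibbs-opt-obj-ub} and reusing the calculation from Case~2 of the proof of Theorem~\ref{thm:logz-formula}. Starting from
\[
\log\mu_{G,\beta}\bigl(\{T:|\varphi(T)|\ge m_2\}\bigr)
\;\le\; \max_{m_2\le m\le N_{d^*}}\log(Z_{G,\beta}\|_m) \;-\; \log Z_{G,\beta} \;+\; \log n,
\]
I would first observe that $[m_2,N_{d^*}]$ is a non-empty subinterval of $[m_\ell,N_{d^*}]$: the hypothesis $\beta<1-\Delta_n-\kappa_n^{-1}-\delta$ forces $\kappa_n^{-1}<1-\Delta_n\le 1$, whence $(1+\delta/7)\kappa_n^{-1}/(1-\Delta_n-\beta)<1$ (so $m_2\le N_{d^*}$ for large $n$), while $m_2=\Theta_p(n/\log\log n)$ dominates $m_\ell=O(n/(\log\log n)^2)$.

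Corollary~\ref{cor:fixed-kernel-size} together with the concave upper bound on $\Psi$ from Lemma~\ref{lem:psi-bounds} then gives $\log(Z_{G,\beta}\|_m)\le f_u(m)+o_p(n)$ uniformly over $m\in[m_\ell,N_{d^*}]$. The function $f_u=g+h$ is concave (both summands are concave, $h$ as the sum of an entropy term and the square root of a concave nonnegative function), and the argument in Case~2 of the proof of Theorem~\ref{thm:logz-formula} shows that its unconstrained maximizer $\tilde m$ coincides with $m^*:=\kappa_n^{-1}N_{d^*}/(1-\Delta_n-\beta)$ up to a multiplicative factor $1+o(1)$; in particular $\tilde m<m_2$ for large $n$, so by concavity $\max_{m_2\le m\le N_{d^*}}f_u(m)=f_u(m_2)$. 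The same Case~2 analysis provides the matching lower bound $\log Z_{G,\beta}\ge f_u(m^*)-o_p(n)$.

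It therefore suffices to establish $f_u(m^*)-f_u(m_2)=\Omega(n)$, which is exactly the computation~\eqref{eqn:htp-gdiff} with $\epsilon=\delta/7$: the $h$-term contribution is $o_p(n)$ by~\eqref{eqn:htp-nds-pert}, while
\[
g(m^*)-g\bigl((1+\tfrac{\delta}{7})m^*\bigr)
\;=\; \bigl(\tfrac{\delta}{7}-\log(1+\tfrac{\delta}{7})\bigr)\,n\;-\;o_p(n),
\]
and the constant $\tfrac{\delta}{7}-\log(1+\tfrac{\delta}{7})>0$ depends only on $\delta$. Chaining these bounds yields $\log\mu_{G,\beta}(|\varphi(T)|\ge m_2)\le -\Omega(n)$ with probability $1-o(1)$, as required.

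The main obstacle is simply keeping the $o_p(n)$ error terms uniform in $m$ over $[m_2,N_{d^*}]$ so that they do not eat the $\Omega(n)$ gap produced by the strict concavity of $f_u$ away from its maximum; this is handled in exactly the same manner as in Lemma~\ref{lem:gibbs-low-apx-unif}.
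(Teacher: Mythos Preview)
Your proposal is correct and follows essentially the same route as the paper: both mirror the proof of Lemma~\ref{lem:gibbs-low-apx-unif}, bound $\max_{m\ge m_2}\log(Z_{G,\beta}\|_m)$ by $f_u(m_2)+o_p(n)$ via concavity, lower bound $\log Z_{G,\beta}$ by $f_u(m^*)-o_p(n)$, and close the $\Omega(n)$ gap using the computation~\eqref{eqn:htp-gdiff} with $\epsilon=\delta/7$. The one minor imprecision is your assertion that Case~2 shows the unconstrained maximizer $\tilde m$ equals $(1+o(1))m^*$; Case~2 only localizes $\tilde m$ to $((1/2)m^*,(1+\delta)m^*)$, but the weaker fact $\tilde m<m_2$ that you actually use follows directly from concavity together with the very inequality $f_u(m^*)>f_u((1+\delta/7)m^*)$ you derive, which is exactly how the paper argues.
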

\begin{proof}
    We proceed similar to the proof of Lemma~\ref{lem:gibbs-low-apx-unif}. We have
    \[
        \begin{split}
            \log\left(\mu_{G,\beta}(\{T:|\varphi(T)|\geq m_2\})\right) &= \log\left(\sum_{m=m_2}^{N_{d^*}}Z_{G,\beta}\|_m\right)-\log Z_{G,\beta}\\
            &\leq\max_{m_0\vee m_2\leq m\leq N_{d^*}}\log(Z_{G,\beta}\|_m)-\log Z_{G,\beta}+\log n\\
            &\leq \max_{m_2\leq m\leq N_{d^*}}f_u(x)-f_u\left(\frac{\kappa_n^{-1}N_{d^*}}{1-\Delta_n-\beta}\right)+o_p(n)\,.
        \end{split}
    \]
    From \eqref{eqn:htp-nds-pert}, we have
    \[
        h\left(\left(1+\frac{\delta}{7}\right)\cdot\frac{\kappa_n^{-1}N_{d^*}}{1-\Delta_n-\beta}\right)=o_p(n)
    \]
    and by \eqref{eqn:htp-gdiff}
    \[
        \begin{split}
            g\left(\frac{\kappa_n^{-1}N_{d^*}}{1-\Delta_n-\beta}\right)-g\left(\left(1+\frac{\delta}{7}\right)\cdot\frac{\kappa_n^{-1}N_{d^*}}{1-\Delta_n-\beta}\right) &= \left(\frac{\delta}{7}-\log\left(1+\frac{\delta}{7}\right)\right)n-o_p(n)\\
            &=\Omega_p(n)\,.
        \end{split}
    \]
    This in particular gives
    \[
        f_u\left(\frac{\kappa_n^{-1}N_{d^*}}{1-\Delta_n-\beta}\right)-f_u\left(\left(1+\frac{\delta}{7}\right)\cdot\frac{\kappa_n^{-1}N_{d^*}}{1-\Delta_n-\beta}\right)=\Omega_p(n)\,.
    \]
    Since $f_u$ is concave, for all large enough $n$, $f_u(x)$ is decreasing in $x\geq\left(1+\frac{\delta}{7}\right)\cdot\frac{\kappa_n^{-1}N_{d^*}}{1-\Delta_n-\beta}$. Therefore,
    \[
        \begin{split}
            \log\left(\mu_{G,\beta}(\{T:|\varphi(T)|\geq m_2\})\right) &\leq f_u\left(\left(1+\frac{\delta}{7}\right)\cdot\frac{\kappa_n^{-1}N_{d^*}}{1-\Delta_n-\beta}\right)-f_u\left(\frac{\kappa_n^{-1}N_{d^*}}{1-\Delta_n-\beta}\right)+o_p(n)\\
            &\leq-\Omega_p(n)\,.
        \end{split}
    \]

\end{proof}

One might be tempted to try proving that $W_1(\mu_{G,\beta},\mu_{G,\infty})$ is large using the kernel size $|\varphi(T)|$ as the Lipschitz function that witnesses a large shift in Wasserstein metric, since it clearly ``jumps'' when moving between the two phases. However, $|\varphi(T)|$ might not be big enough if $N_{d^*}=o_p(n)$ (i.e., $\lambda_n\to1$), in which case it is not possible to witness a $\Theta(n)$ shift.

Instead, we notice that the kernel must ``support'' most of the remaining vertices. More precisely, from Proposition~\ref{prop:cond-gibbs-dv}, we know that most of the vertices not in the kernel must have distance $d^*+1$, which means that the parents of those vertices must be in the kernel. As a consequence, vertices in the kernel must have large numbers of children, which is the fact we leverage to construct a Lipschitz function. To be specific, we make the following (informal) hypotheses.
\begin{itemize}
    \item In the low temperature phase, the kernel of a typical tree is nearly the entire $\Gamma_{d^*}$. Hence, a typical vertex in $\Gamma_{d^*}$ would have the number of children around $N_{d^*+1}/N_{d^*}$.
    \item In the high temperature phase, a nontrivial portion of $\Gamma_{d^*}$ is not in the kernel. Thus, those in the kernel $A$ would have the number of children around $(n-|A|)/|A|$, and those not in the kernel would mostly have no child.
\end{itemize}
Based on these observations, the Lipschitz function we construct is the sum of absolute deviation from $N_{d^*+1}/N_{d^*}$, given by
\begin{equation}\label{eqn:w1-witness}
    \sum_{v\in\Gamma_{d^*}}\left|\mathsf{ch}_T(v)|-\frac{N_{d^*+1}}{N_{d^*}}\right|\,.
\end{equation}

Thus, we first need to understand how the absolution deviation behaves. For the sake of convenience, we introduce a few notations here. For a fixed kernel $A\in\mathcal{K}(G)$, we define
\begin{equation}\label{eqn:def-dav}
    \mathsf{d}_A(v):=\begin{cases}
        \mathsf{d}_G(1,v) & \text{if $\mathsf{d}_G(1,v)\leq d^*-1$,}\\
        d^* + \mathsf{d}_G(v,A) & \text{if $v\in\overline{V}\setminus\Gamma_{\leq d^*-1}$.}
    \end{cases}
\end{equation}
as the distance vector of any $T$ in the support of $\mu_{G,\infty}|_A$. Also, we write
\begin{equation}\label{eqn:def-nda}
    N_{d}^A:=|\{v:\mathsf{d}_A(v)=d\}|\,.
\end{equation}

\begin{lemma}\label{lem:multinom-conc}
    Conditioned on $A\subseteq\Gamma_{d^*}$, for any $t,\epsilon>0$, we have
    \[
        \E_{\mu_{G,\infty}|_A}\left[\sum_{v\in A}\left||\mathsf{ch}_T(v)|-\frac{N_{d^*+1}^A}{|A|}\right|\right]\leq t+2n\epsilon
    \]
    with probability at least
    \[
        1-2^{|A|}\epsilon^{-1}\exp\left(-\frac{t^2}{2(n-|A|)}\right)\,.
    \]
\end{lemma}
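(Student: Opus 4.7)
The aim is to control $X:=\sum_{v\in A}\bigl||\mathsf{ch}_T(v)|-c\bigr|$ with $c:=N_{d^*+1}^A/|A|$ by reducing, via duality, to the maximum of $2^{|A|}$ linear statistics each of which concentrates by Hoeffding, and then converting this tail bound on $X$ into a bound on its expectation via Markov with an $\epsilon$-slack.

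\textbf{Step 1 (linearize the $L^1$ norm).} Using the identity $\sum_v|x_v|=\max_{\sigma\in\{\pm 1\}^A}\sum_v\sigma_v x_v$, write
\[
X=\max_{\sigma\in\{\pm 1\}^A}\Bigl[\sum_{v\in A}\sigma_v|\mathsf{ch}_T(v)|-c\sum_{v\in A}\sigma_v\Bigr].
\]
This rewrites the absolute-value sum as a max over $2^{|A|}$ signed combinations, accounting for the $2^{|A|}$ factor in the claimed failure probability.

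\textbf{Step 2 (reorder as a parent-choice sum and apply Hoeffding).} Under $\mu_{G,\infty}|_A$ the parents of the vertices $u$ with $\mathsf{d}_A(u)=d^*+1$ are independent, each uniform over $\mathsf{N}_G(u)\cap A$. Swapping the order of summation,
\[
Z_\sigma:=\sum_{v\in A}\sigma_v|\mathsf{ch}_T(v)|=\sum_{u:\mathsf{d}_A(u)=d^*+1}\sigma_{\mathsf{par}_T(u)},
\]
which is a sum of $N_{d^*+1}^A\le n-|A|$ independent $[-1,+1]$-valued random variables. Hoeffding's inequality then gives, for each fixed $\sigma$,
\[
\Pr_T\!\bigl[|Z_\sigma-\E_T Z_\sigma|>t\bigr]\le 2\exp\!\bigl(-\tfrac{t^2}{2(n-|A|)}\bigr),
\]
and a union bound over the $2^{|A|}$ sign patterns upgrades this to $\Pr_T[\max_\sigma|Z_\sigma-\E_T Z_\sigma|>t]\le 2^{|A|+1}\exp(-t^2/(2(n-|A|)))$. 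On the complement of this event, $X\le t+D$, where $D:=\sum_v\bigl|\E_T[|\mathsf{ch}_T(v)|]-c\bigr|$ is the residual mean-deviation obtained by maximizing $\sum_v\sigma_v(\E_T[|\mathsf{ch}_T(v)|]-c)$ over $\sigma$.

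\textbf{Step 3 (tail to expectation with $\epsilon$-slack).} Since $0\le X\le 2(n-|A|)\le 2n$ deterministically, splitting the integrand at $t+2n\epsilon$ gives
\[
\E_T[X]\le(t+2n\epsilon)+2n\Pr_T\!\bigl[X>t+2n\epsilon\bigr],
\]
and combining with the Step~2 tail bound (after absorbing $D$ into the $2n\epsilon$ slack via the conservation identity $\sum_v\E_T[|\mathsf{ch}_T(v)|]=|A|c$) yields $\E_T[X]\le t+2n\epsilon$ precisely when $2^{|A|}\epsilon^{-1}\exp(-t^2/(2(n-|A|)))$ dominates the relevant tail mass. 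This Markov-style trade between threshold and probability is the origin of the $\epsilon^{-1}$ factor in the announced failure bound.

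\textbf{Main obstacle.} The technical crux is the bookkeeping of the residual mean-deviation term $D$ produced in Step~2: conservation forces the signed total $\sum_v(\E_T[|\mathsf{ch}_T(v)|]-c)$ to vanish, but the $L^1$ analogue $D$ does not automatically, and must be absorbed into the $2n\epsilon$ slack while preserving the exact form of the announced failure probability.
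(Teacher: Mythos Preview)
Your Steps 1 and 2 correctly linearize the $L^1$ norm and rewrite the child counts as a sum over parent choices. The gap is in how you handle the two layers of randomness. You apply Hoeffding over $T$ alone with $G$ fixed, which produces the residual $D=\sum_{v\in A}\bigl|\E_T[|\mathsf{ch}_T(v)|]-c\bigr|$. You note that conservation kills the signed sum, but $D$ itself is a function of $G$ that is \emph{not} small in general: for fixed $G$, $\E_T[|\mathsf{ch}_T(v)|]=\sum_u \mathbf 1_{v\sim u}/|\mathsf N_G(u)\cap A|$, and these need not be close to $c$. There is no way to ``absorb $D$ into the $2n\epsilon$ slack'' without a separate concentration argument over $G$, which would change the form of the failure probability. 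Your Step~3 then conflates the $T$-randomness with the $G$-randomness; the $\epsilon^{-1}$ factor has no natural origin in the manipulation you describe.

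The paper avoids this entirely by applying Hoeffding to the \emph{joint} law of $(G,T)$ conditioned on $\Gamma_{d^*}$, $A$, and $\Gamma_{d^*+1}^A$. Under that joint law, each $\mathsf{par}_T(u)$ is exactly uniform on $A$ by exchangeability of the vertices of $A$ (average first over the random edges from $u$ to $A$, then over the uniform parent choice), so $\sum_{u}\mathbf 1_{\mathsf{par}_T(u)\in B}\sim\Binom(N_{d^*+1}^A,|B|/|A|)$ with mean exactly $|B|c$ and no residual. Hoeffding plus the union bound then controls $\E_G\bigl[\mu_{G,\infty}|_A(f_{G,A}(T)\ge t)\bigr]$, and a single application of Markov's inequality over $G$ to this inner tail probability is what produces the $\epsilon^{-1}$ factor: on the event $\{\mu_{G,\infty}|_A(f_{G,A}(T)\ge t)<\epsilon\}$ one gets $\E_T[f_{G,A}(T)]\le t+2n\epsilon$ directly from $0\le f_{G,A}\le 2n$. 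So the missing idea is to take the expectation over $G$ \emph{before} Hoeffding, not after.
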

\begin{proof}
    For notational convenience we define
    \[
        f_{G,A}(T):=\sum_{v\in A}\left||\mathsf{ch}_T(v)|-\frac{N_{d^*+1}^A}{|A|}\right|\,.
    \]
    We first note that
    \[
        \sum_{v\in A}|\mathsf{ch}_T(v)|=N_{d^*+1}^A\,.
    \]
    Hence,
    \[
        \begin{split}
            f_{G,A}(T)&=\sum_{v\in A}\left||\mathsf{ch}_T(v)|-\frac{N_{d^*+1}^A}{|A|}\right|\\
            &= 2\max_{B\subseteq A}\sum_{v\in B}\left(|\mathsf{ch}_T(v)|-\frac{N_{d^*+1}^A}{|A|}\right)\\
            &=2\max_{B\subseteq A}\left\{\sum_{v\in B}|\mathsf{ch}_T(v)|-\frac{|B|N_{d^*+1}^A}{|A|}\right\}\\
            &=2\max_{B\subseteq A}\left\{\sum_{u\in\Gamma_{d^*+1}^A}\mathbf{1}_{\mathsf{par}_T(u)\in B}-\frac{|B|N_{d^*+1}^A}{|A|}\right\}
        \end{split}
    \]
    Conditioned on $\Gamma_{d^*}$, $A\subseteq\Gamma_{d^*}$, and $\Gamma_{d^*+1}^A$, the events $\mathsf{par}_T(u)\in B$ are independent over all $u\in\Gamma_{d^*+1}^A$ and have probability $|B|/|A|$. Hence, $\sum_{u\in\Gamma_{d^*+1}^A}\mathbf{1}_{\mathsf{par}_T(u)\in B}$ is conditionally $\Binom(N_{d^*+1}^A,|B|/|A|)$. Thus, Hoeffding's inequality gives
    \[
        \E\left[\mu_{G,\infty}|_A\left(\left\{T:\sum_{u\in\Gamma_{d^*+1}^A}\mathbf{1}_{\mathsf{par}_T(u)\in B}-\frac{|B|N_{d^*+1}^A}{|A|}\geq t\right\}\right)\,\middle|\, A\subseteq\Gamma_{d^*}, N_{d^*+1}^A\right]\leq\exp\left(-\frac{2t^2}{N_{d^*+1}^A}\right)\,.
    \]
    Taking the union bound, we get
    \[
        \begin{split}
            \E\left[\mu_{G,\infty}|_A(\{T:f_{G,A}(T)\geq t\}\,\middle|\, A\subseteq\Gamma_{d^*},N_{d^*+1}^A\right]
            &\leq 2^{|A|}\exp\left(-\frac{t^2}{2N_{d^*+1}^A}\right)\\
            &\leq 2^{|A|}\exp\left(-\frac{t^2}{2(n-|A|)}\right)
        \end{split}
    \]
    By Markov's inequality, we have
    \[
        \Pr\left(\mu_{G,\infty}|_A(\{T:f_{G,A}(T)\geq t\})\geq\epsilon\,\middle|\,A\subseteq\Gamma_{d^*}\right)\leq 2^{|A|}\epsilon^{-1}\exp\left(-\frac{t^2}{2(n-|A|)}\right)\,.
    \]
    Note that
    \[
        \begin{split}
            f_{G,A}(T)&\leq\sum_{v\in A}|\mathsf{ch}_T(v)|+\sum_{v\in A}\frac{N_{d^*+1}^A}{|A|}\\
            &=2N_{d^*+1}^A\\
            &\leq 2n\,.
        \end{split}
    \]
    Thus, $\mu_{G,\infty}|_A(\{T:f_{G,A}(T)\geq t\})<\epsilon$ implies
    \[
        \E_{\mu_{G,\infty}|_A}[f_{G,A}(T)]< t+2n\epsilon\,.
    \]
    This gives
    \[
        \Pr\left(\E_{\mu_{G,\infty}|_A}[f_{G,A}(T)]\geq t+2n\epsilon\,\middle|\,A\subseteq\Gamma_{d^*}\right)\leq 2^{|A|}\epsilon^{-1}\exp\left(-\frac{t^2}{2(n-|A|)}\right)
    \]
    which is as desired.
\end{proof}

With this lemma, we prove our main result by analyzing \eqref{eqn:w1-witness}.

\begin{proof}[Proof of Theorem~\ref{thm:gibbs-high-phase}]
    Let $m_2$ be defined in \eqref{eqn:gibbs-m2-def}. By \eqref{eqn:htp-nds-pert}, we may assume $m_\ell\leq m_2$ which holds asymptotically almost surely. We consider a measure $\tilde{\mu}_{G,\beta}\|_{[m_\ell,m_2]}$ defined by $\tilde{\mu}_{G,\beta}$ conditioned on $m_\ell\leq|\varphi(T)|\leq m_2$. In other words,
    \[
        \tilde{\mu}_{G,\beta}\|_{[m_\ell,m_2]}(T):=\sum_{A\in\mathcal{K}(G)}\mu_{G,\beta}(\varphi(T)=A\mid m_\ell\leq|\varphi(T)|\leq m_2)\cdot\mu_{G,\infty}|_A(T)\,.
    \]
    Lemma~\ref{lem:large-kernel}, Lemma~\ref{lem:gibbs-small-kernel}, Proposition~\ref{prop:fixed-kernel}, and the convexity of the KL divergence give
    \[
        \KL(\tilde{\mu}_{G,\beta}\|_{[m_\ell,m_2]},\mu_{G,\beta})=o_p(n)
    \]
    and Proposition~\ref{prop:cond-gibbs-w1}, Lemma~\ref{lem:large-kernel}, and Lemma~\ref{lem:gibbs-small-kernel} give
    \[
        W_1(\tilde{\mu}_{G,\beta}\|_{[m_\ell,m_2]},\mu_{G,\beta})=o_p(n)\,.
    \]
    Now our goal is to prove that $\mu_{G,\infty}$ and $\tilde{\mu}_{G,\beta}\|_{[m_\ell,m_2]}$ are far apart in Wasserstein distance. Thus, it suffices to find a $1$-Lipschitz function $f$ such that
    \[
        \E_{\mu_{G,\infty}|_A}[f(T)]-\E_{\mu_{G,\infty}}[f(T)]=\Theta_p(n)
    \]
    simultaneously for all $A\in\mathcal{K}(G)$ with $m_\ell\leq|A|\leq m_2$.
    
    We define a function $f$ on the set of spanning trees of $\overline{G}$ by
    \[
        f(T):=\sum_{v\in\Gamma_{d^*}}\left||\mathsf{ch}_T(v)|-\frac{N_{d^*+1}}{N_{d^*}}\right|\,.
    \]
    Since removing or adding an edge in $T$ changes one of $\mathsf{ch}_T(v)$ by at most $1$, $f$ is $1$-Lipschitz. Using Lemma~\ref{lem:multinom-conc} with $t=\sqrt{2N_{d^*}(n-N_{d^*})}$ and $\epsilon=e^{-N_{d^*}/8}$, we get by Theorem~\ref{thm:main1-tight}
    \begin{equation}\label{eqn:ef-first-bound}
        \begin{split}
            \E_{\mu_{G,\infty}}[f(T)]&\leq\sqrt{2N_{d^*}(n-N_{d^*})}+o_p(n)\\
            &=\sqrt{2\cdot\frac{N_{d^*}}{n}\cdot\left(1-\frac{N_{d^*}}{n}\right)}\cdot n+o_p(n)\\
            &=\sqrt{2\lambda_n(1-\lambda_ n)}\cdot n+o_p(n)\\
            &=\sqrt{2\cdot\frac{\kappa_n}{\log\log n}\left(1-\frac{\kappa_n}{\log\log n}\right)}\cdot n+o_p(n)
        \end{split}
    \end{equation}
    conditioned on $N_{d^*}$ with probability at least
    \[
        \begin{split}
            1-2^{N_{d^*}}\epsilon^{-1}\exp\left(-\frac{t^2}{2(n-N_{d^*})}\right) &= 1-2^{N_{d^*}}e^{N_{d^*}/8}e^{-N_{d^*}}\\
            &\geq 1-e^{-N_{d^*}/8}\\
            &=1-o_p(1)\,.
        \end{split}
    \]
    
    Next, we analyze $\E_{\mu_{G,\infty}|_A}[f(T)]$. We write
    \[
        \begin{split}
            \E_{\mu_{G,\infty}|_A}[f(T)] &= \E_{\mu_{G,\infty}|_A}\left[\sum_{v\in\Gamma_{d^*}}\left||\mathsf{ch}_T(v)|-\frac{N_{d^*+1}}{N_{d^*}}\right|\right]\\
            &\geq\E_{\mu_{G,\infty}|_A}\left[\sum_{v\in A}\left||\mathsf{ch}_T(v)|-\frac{N_{d^*+1}}{N_{d^*}}\right|\right]\\
            &\geq|A|\left|\frac{N_{d^*+1}^A}{|A|}-\frac{N_{d^*+1}}{N_{d^*}}\right|-\E_{\mu_{G,\infty}|_A}\left[\sum_{v\in A}\left||\mathsf{ch}_T(v)|-\frac{N_{d^*+1}^A}{|A|}\right|\right]\\
            &=\left|N_{d^*+1}^A-|A|\cdot\frac{N_{d^*+1}}{N_{d^*}}\right|-\E_{\mu_{G,\infty}|_A}\left[\sum_{v\in A}\left||\mathsf{ch}_T(v)|-\frac{N_{d^*+1}^A}{|A|}\right|\right]\,.
        \end{split}
    \]
    For the first term in the RHS, Lemma~\ref{lem:gamma-a} gives
    \begin{equation}\label{eqn:nda-bound}
        \begin{split}
            \left|N_{d^*+1}^A-|A|\cdot\frac{N_{d^*+1}}{N_{d^*}}\right| &\geq N_{d^*+1}^A-|A|\cdot\frac{N_{d^*+1}}{N_{d^*}}\\
            &\geq n-|A|-|A|\cdot\frac{N_{d^*+1}}{N_{d^*}}-o_p\left(\frac{n}{\log\log n}\right)\\
            &\geq n\left(1-\frac{|A|}{N_{d^*}}\right)-o_p\left(\frac{n}{\log\log n}\right)\\
            &\geq\frac{6}{7}\cdot\frac{\delta\kappa_n}{1+\delta\kappa_n}\cdot n-o_p\left(\frac{n}{\log\log n}\right)
        \end{split}
    \end{equation}
    where the last inequality is due to $|A|\leq m_2$. Also, applying Lemma~\ref{lem:multinom-conc} with $t=n/(\log\log n)^{1/4}$ and $\epsilon=1/n$ gives
    \begin{equation}\label{eqn:ef-bound}
        \begin{split}
            \E_{\mu_{G,\infty}|_A}\left[\sum_{v\in A}\left||\mathsf{ch}_T(v)|-\frac{N_{d^*+1}^A}{|A|}\right|\right] &\leq \frac{n}{(\log\log n)^{1/4}}+2\\
            &=o(n)
        \end{split}
    \end{equation}
    with probability at least
    \[
        \begin{split}
            1-2^{|A|}\epsilon^{-1}\exp\left(-\frac{t^2}{2(n-|A|)}\right) &\geq 1-\exp\left(-\frac{t^2}{2n}+|A|+\log(1/\epsilon)\right)\\
            &\geq1-\exp\left(-\frac{n}{2\sqrt{\log\log n}}+m_2+\log n\right)\,.
        \end{split}
    \]
    Note that \eqref{eqn:nda-bound} and \eqref{eqn:ef-bound} imply
    \begin{equation}\label{eqn:ef-second-bound}
        \E_{\mu_{G,\infty}|_A}[f(T)]\geq\frac{\delta\kappa_n/2}{1+\delta\kappa_n/2}\cdot n-o_p(n)\,.
    \end{equation}
    Now we will apply the union bound for all $A\in\mathcal{K}(G)$ with $m_\ell\leq |A|\leq m_2$. Note that
    \[
        \begin{split}
            |\{A\in\mathcal{K}(G):m_\ell\leq |A|\leq m_2\}| &\leq \sum_{m=m_\ell}^{m_2}\binom{n}{m}\\
            &\leq\sum_{m=m_\ell}^{m_2}\left(\frac{en}{m}\right)^m\\
            &\leq n\left(\frac{en}{m_\ell}\right)^{m_2}\\
            &\leq n\left(3e(\log\log n)^2\right)^{m_2}\,.
        \end{split}
    \]
    Hence, it follows that \eqref{eqn:ef-second-bound} holds simultaneously for all $A\in\mathcal{K}(G)$ with $m_\ell\leq |A|\leq m_2$ with probability at least
    \[
        1-\exp\left(-\frac{n}{2\sqrt{\log\log n}}+m_2+2\log n+m_2\log\left(3e(\log\log n)^2\right)\right)\,.
    \]
    By \eqref{eqn:htp-nds-pert}, this is
    \[
        1-\exp\left(-\frac{n}{2\sqrt{\log\log n}}+o_p\left(\frac{n}{\sqrt{\log\log n}}\right)\right)=1-o_p(1)\,.
    \]
    Hence, asymptotically almost surely, we have both \eqref{eqn:ef-first-bound} and \eqref{eqn:ef-second-bound}. This implies
    \begin{equation}\label{eqn:w1-lip-diff}
        \E_{\mu_{G,\infty}|_A}[f(T)]-\E_{\mu_{G,\infty}}[f(T)] \geq \left(\frac{6}{7}\cdot\frac{\delta\kappa_n}{1+\delta\kappa_n}-\sqrt{2\cdot\frac{\kappa_n}{\log\log n}\left(1-\frac{\kappa_n}{\log\log n}\right)}\right)n-o_p(n)
    \end{equation}
    simultaneously for all $A\in\mathcal{K}(G)$ with $m_\ell\leq|A|\leq m_2$.
    
    Now it remains to prove that
    \[
        \frac{6}{7}\cdot\frac{\delta\kappa_n}{1+\delta\kappa_n}-\sqrt{2\cdot\frac{\kappa_n}{\log\log n}\left(1-\frac{\kappa_n}{\log\log n}\right)}
    \]
    is bounded below by a constant. First, recall our assumption $\beta<1-\Delta_n-\kappa_n^{-1}-\delta$ which implies $\kappa_n>1$. This gives
    \begin{equation}\label{eqn:kn-case1}
        \sqrt{2\cdot\frac{\kappa_n}{\log\log n}\left(1-\frac{\kappa_n}{\log\log n}\right)}\leq\sqrt{\frac{2}{\log\log n}}\cdot\kappa_n\,.
    \end{equation}
    On the other hand, the AM--GM inequality gives
    \begin{equation}\label{eqn:kn-case2}
        \sqrt{2\cdot\frac{\kappa_n}{\log\log n}\left(1-\frac{\kappa_n}{\log\log n}\right)}\leq\frac{1}{\sqrt{2}}\,.
    \end{equation}
    Now we consider two cases. If $\delta\kappa_n<7$, then for all sufficiently large $n$ such that $\sqrt{2/\log\log n}\leq\delta/28$, \eqref{eqn:kn-case1} gives
    \[
        \begin{split}
            \frac{6}{7}\cdot\frac{\delta\kappa_n}{1+\delta\kappa_n}-\sqrt{2\cdot\frac{\kappa_n}{\log\log n}\left(1-\frac{\kappa_n}{\log\log n}\right)}&\geq\frac{3\delta\kappa_n}{28}-\sqrt{\frac{2}{\log\log n}}\cdot\kappa_n\\
            &\geq\frac{\delta\kappa_n}{14}\\
            &\geq\frac{\delta}{14}\,.
        \end{split}
    \]
    Otherwise, if $\delta\kappa_n\geq7$, then \eqref{eqn:kn-case2} gives
    \begin{equation}\label{eqn:lip-diff-global}
        \frac{6}{7}\cdot\frac{\delta\kappa_n}{1+\delta\kappa_n}-\sqrt{2\cdot\frac{\kappa_n}{\log\log n}\left(1-\frac{\kappa_n}{\log\log n}\right)}\geq\frac{3}{4}-\frac{1}{\sqrt{2}}\,.
    \end{equation}
    This completes the proof.
\end{proof}

\begin{remark}
    Our lower bound on the Wasserstein distance can be written only in terms of $\delta$, which specifies the gap between $\beta$ and $1-\Delta_n-\kappa_n^{-1}$. This proves the second part of Theorem~\ref{thm:intro-phase-transition} which is slightly stronger than Theorem~\ref{thm:gibbs-high-phase} on its own.
\end{remark}

\subsubsection{Replica symmetry}\label{sec:replica-symmetry}

As a consequence of our analysis, we show that our system $\mu_{G,\beta}$ is replica symmetric in the sense of the following theorem. Namely, this states that whenever $\lambda\in\{0,1\}$ or the temperature is bounded away from the critical temperature, then the overlap of two independently sampled trees concentrates.

% As related result of our Franz--Parisi calculation is the \emph{replica symmetry} of our system. Namely, whenever $\beta$ is away from the critical temperature, the overlap of two independent samples from $\mu_{G,\beta}$ concentrates as $n\to\infty$.

\begin{theorem}[Replica symmetry]\label{thm:replica-symmetry}
    Let $\beta>0$ be fixed and assume $\lambda_n\to\lambda\in[0,1]$. Suppose that for given $\{G_n\}$ we independently sample $T_n,T_n'\sim\mu_{G_n,\beta}$. Then
    \[
        \mathsf{R}(T_n,T_n')\pto\begin{dcases}
            0&\text{if $\lambda\in\{0,1\}$ or $\beta<1$,}\\
            f(\lambda)&\text{if $\lambda\in(0,1)$ and $\beta>1$}
        \end{dcases}
    \]
    where
    \[
        f(\lambda)=(1-\lambda)\E[1/X\mid X>0]\,,\quad X\sim\Pois(\log(1/\lambda))\,.
    \]
\end{theorem}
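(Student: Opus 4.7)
The plan is to reduce the computation to a product-measure calculation on the shortest-path structure via the mean-field approximation of Section~\ref{sec:gibbs}, and then to invoke a Poisson limit for the reciprocal parent-set sizes. First, Corollary~\ref{cor:gibbs-w1} gives $W_1(\tilde\mu_{G,\beta},\mu_{G,\beta}) = o_p(n)$ under the unnormalized edge-Hamming metric, and since $|T \cap T'|$ is $1$-Lipschitz in each argument, taking a product of optimal couplings yields $|\mathsf{R}(T_n,T_n') - \mathsf{R}(\tilde T_n,\tilde T_n')| = o_p(1)$ for independent $\tilde T_n, \tilde T_n' \sim \tilde\mu_{G,\beta}$. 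Recall that under $\tilde\mu_{G,\beta}$ one first samples a kernel $A = \varphi(T)$, then samples $T$ from the product measure $\mu_{G,\infty}|_A$ in which every vertex picks its parent uniformly from an explicit candidate set depending on $G$ and $A$ (see \eqref{eqn:hatmu} and the surrounding discussion).

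In the regime $\lambda \in (0,1)$ and $\beta > 1$, Theorem~\ref{thm:gibbs-low-phase} gives the stronger bound $W_1(\mu_{G,\infty},\mu_{G,\beta}) = o_p(n)$, so the same Lipschitz coupling further reduces the problem to two independent samples from $\mu_{G,\infty}$. Under $\mu_{G,\infty}$, conditional on $G$, the per-vertex agreement indicators $\mathbf{1}[\mathsf{par}_T(v) = \mathsf{par}_{T'}(v)]$ are independent Bernoulli's of mean $1/|\mathsf{par}_G(v)|$, so Hoeffding's inequality yields
\[
\mathsf{R}(T_n,T_n') = \frac{1}{|\overline V_n|-1}\sum_{v \in \overline V_n \setminus \{1\}} \frac{1}{|\mathsf{par}_G(v)|} + O_p(n^{-1/2}).
\]
Conditional on $\Gamma_{\leq d^*-1}$, the parent-set sizes for $v \in \Gamma_{d^*}$ are i.i.d.\ $\ZTB(N_{d^*-1}, q)$ with $qN_{d^*-1} \pto \log(1/\lambda)$, so Poisson approximation (Theorem~\ref{thm:w1-pois}) plus the bound $1/X \le 1$ gives $\E[1/|\mathsf{par}_G(v)|] \to \E[1/X \mid X > 0]$ for $X \sim \Pois(\log(1/\lambda))$. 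Combining with $|\Gamma_{d^*}|/n \pto 1-\lambda$ from Theorem~\ref{thm:main1-tight} and a weak law of large numbers within $\Gamma_{d^*}$ (the summands being i.i.d.\ and uniformly bounded by $1$) produces the desired limit $f(\lambda)$. The contributions from $\Gamma_{d^*+1}$ (whose vertices have $\Theta(\log n)$ parents each) and from $\Gamma_{\leq d^*-1}$ (which contains only $o_p(n)$ vertices) are each $o_p(1)$.

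In every other regime the normalized overlap tends to $0$. For $\lambda \in (0,1)$ and $\beta < 1$ we are in the high-temperature phase, so Lemma~\ref{lem:gibbs-small-kernel} gives $|A|, |A'| = O_p(n/\log\log n)$; then $|A \cap A'| = o_p(n)$, while for $v \notin A \cap A'$ the candidate parent set under $\mu_{G,\infty}|_A$ has size $\gtrsim q|A| = \Omega(\log n/\log\log n) \to \infty$, so the per-vertex agreement probability is $o(1)$. For $\lambda = 0$, $\log(1/\lambda_n) \to \infty$ forces $\E[1/X \mid X > 0] \to 0$ for $X \sim \Pois(\log(1/\lambda_n))$, so even under the maximal kernel the sum $\sum_v 1/|\mathsf{par}_G(v)|$ is $o_p(n)$, and if we are in the high-temperature phase the contribution is only smaller. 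For $\lambda = 1$, $|\Gamma_{d^*}|/n \to 0$ gives at most $o(n)$ contribution from that layer, while $\Gamma_{d^*+1}$ contains $\Theta(n)$ vertices with $\Theta(\log n)$ parents each, again producing $o_p(n)$.

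The main obstacle I expect is the careful control of the parent-set distribution near the critical scaling for $\lambda \in (0,1)$: we need the convergence $\ZTB(N_{d^*-1},q) \Rightarrow \ZTP(\log(1/\lambda))$ to survive the conditioning on the graph-dependent $N_{d^*-1}$, which is handled by Proposition~\ref{prop:conc} together with the subgaussian tail estimates of Section~\ref{sec:hg-conc}. Once this convergence (in distribution, hence in $L^1$ since $1/X \le 1$) is in hand, the rest of the low-temperature case is routine, and all other regimes succumb to crude bounds of the type described above.
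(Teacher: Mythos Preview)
Your proposal is correct and follows essentially the same approach as the paper: reduce via the Wasserstein bound (Corollary~\ref{cor:gibbs-w1} and Theorem~\ref{thm:gibbs-low-phase}) to the product measures $\mu_{G,\infty}|_A$, then compute the per-vertex agreement probabilities and pass to the Poisson limit for the $\Gamma_{d^*}$ layer.

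One small point: in the zero-overlap cases the paper conditions on an \emph{arbitrary} $T'$ and only on the kernel $A$ of $T$, bounding the agreement probability at each $v$ by $1/|\mathsf{par}_A(v)|$ uniformly in $T'$. This avoids your two-kernel bookkeeping and the slightly loose claim that ``for $v\notin A\cap A'$ the candidate parent set under $\mu_{G,\infty}|_A$ has size $\gtrsim q|A|$'' (which fails for $v\in A\setminus A'$, though the agreement is still $o(1)$ there because the \emph{other} tree's parent set is large and disjoint). Your argument is easily patched along these lines, but the one-sided conditioning is cleaner.
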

\begin{proof}
    First, consider the case $\lambda\in\{0,1\}$ or $\beta<1$. By Corollary~\ref{cor:gibbs-w1}, we may assume $T_n,T_n'\sim\tilde{\mu}_{G,\beta}$. In addition, by Lemma~\ref{lem:large-kernel}, it suffices to show that for any $\epsilon>0$
    \[
        \Pr(\mathsf{R}(T,T')>\epsilon\mid G,T',|\varphi(T)|\geq m_\ell)=o_p(1)
    \]
    for any spanning tree $T'$ of $\overline{G}$. Note that
    \begin{equation}\label{eqn:rs-pr-bound}
        \Pr(\mathsf{R}(T,T')>\epsilon\mid G,T',|\varphi(T)|\geq m_\ell) \leq\max_{A\in\mathcal{K}(G):|A|\geq m_\ell}\Pr(\mathsf{R}(T,T')>\epsilon\mid G,T',\varphi(T)=A)\,.
    \end{equation}
    Thus, our task is to bound $\Pr(\mathsf{R}(T,T')>\epsilon\mid G,T',\varphi(T)=A)$ where it is guaranteed that $|A|\geq m_\ell$. By our assumption, the law of $T$ conditioned on $\varphi(T)=A$ is $\mu_{G,\infty}|_A$ which is a product measure. Define
    \[
        \mathsf{par}_A(v):=\{u\in\overline{V}:\mathsf{d}_A(1,u)=\mathsf{d}_A(1,v)+1\}
    \]
    where $\mathsf{d}_A$ is defined in \eqref{eqn:def-dav}. Then $\mathsf{R}(T,T')$ can be written as the mean of (conditionally) independent random variables
    \begin{equation}\label{eqn:rs-overlap}
        \mathsf{R}(T,T')=\frac{1}{|\overline{V}|-1}\sum_{v\in\overline{V}\setminus\{1\}}\mathbf{1}_{\mathsf{par}_T(v)=\mathsf{par}_{T'}(v)}
    \end{equation}
    where
    \[
        \mathbf{1}_{\mathsf{par}_T(v)=\mathsf{par}_{T'}(v)}\sim\Bernoulli\left(\frac{\mathbf{1}_{\mathsf{par}_{T'}(v)\in\mathsf{par}_A(v)}}{|\mathsf{par}_A(v)|}\right)\,.
    \]
    Since we need an upper bound, we can upper bound this using the sum of $\Bernoulli(1/|\mathsf{par}_A(v)|)$. Thus, the quantity of interest here is the mean
    \[
        p_{G,A}:=\frac{1}{|\overline{V}|-1}\sum_{v\in\overline{V}\setminus\{1\}}\frac{1}{|\mathsf{par}_A(v)|}\,.
    \]
    % By Hoeffding's inequality, for any $t>0$
    % \[
    %     \Pr(\mathsf{R}(T,T')>p_{G,A}+t\mid G,T',\mathsf{d}_T=\mathsf{d}_A)\leq e^{-2t^2/n}\,.
    % \]
    In light of \eqref{eqn:rs-pr-bound}, it suffices to show that with high probability, $p_{G,A}=o(1)$ simultaneously for all $A\in\mathcal{K}(G)$ with $|A|\geq m_\ell$. By Theorem~\ref{thm:main1-tight}, with high probability,
    \[
        \begin{split}
            (|\overline{V}|-1)p_{G,A} &= \sum_{v\in\overline{V}\setminus\{1\}}\frac{1}{|\mathsf{par}_A(v)|}\\
            &\leq\sum_{v\in A}\frac{1}{|\mathsf{par}_G(v)|}+\sum_{v\in\overline{V}\setminus\Gamma_{\leq d^*-1}\setminus A}\frac{1}{|\mathsf{N}_A(v)|\vee1}+o(n)\,.
        \end{split}
    \]
    We bound the two sums in the RHS separately, starting with the first sum. If $\lambda\in(0,1)$ and $\beta<1$, then we are in the high temperature phase \eqref{eqn:high-temp-phase} so Lemma~\ref{lem:gibbs-small-kernel} we may assume $|A|=o(n)$ for all $A$, which immediately yields $\sum_{v\in A}1/|\mathsf{par}_G(v)|\leq|A|=o_p(n)$. otherwise, if $\lambda\in\{0,1\}$, then we use
    \[
        \sum_{v\in A}\frac{1}{|\mathsf{par}_G(v)|}\leq\sum_{v\in\Gamma_{d^*}}\frac{1}{|\mathsf{par}_G(v)|}\,.
    \]
    Recall that conditioned on $\Gamma_{\leq d^*-1}$, $|\mathsf{par}_G(v)|$ are (conditionally) i.i.d. following $\Binom(N_{d^*-1},q)$. If $\lambda=0$, then $qN_{d^*-1}\pto\infty$ so it is straightforward to see $\sum_{v\in\Gamma_{d^*}}1/|\mathsf{par}_G(v)|=o_p(n)$, and if $\lambda=1$, then $N_{d^*}=o_p(n)$ so the same bound obviously holds. Now it remains to deal with the term
    \begin{equation}\label{eqn:1onav}
        \sum_{v\in\overline{V}\setminus\Gamma_{\leq d^*-1}\setminus A}\frac{1}{|\mathsf{N}_A(v)|\vee1}\,.
    \end{equation}
    Replacing $A$ with any of its superset decreases the sum, so it suffices to show that this is $o(n)$ simultaneouesly for all $A$ of size exactly $\lceil\frac{n}{3(\log\log n)^2}\rceil$. Since $|\mathsf{N}_A(v)|$ are i.i.d. $\Binom(|A|,q)$ conditioned on $\Gamma_{\leq d^*-1}$ and $A$, we define
    \[
        g(m,p)=\E\left[\frac{1}{X\vee1}\right]\,,\quad X\sim\Binom(m,p)
    \]
    so that by Hoeffding's inequality
    \[
        \Pr\left(\sum_{v\in\overline{V}\setminus\Gamma_{\leq d^*-1}\setminus A}\frac{1}{|\mathsf{N}_A(v)|\vee1}\geq g(|A|,q)+t\,\middle|\,\Gamma_{\leq d^*-1},A\right)\leq e^{-\frac{2t^2}{n}}\,.
    \]
    Since $|A|\geq m_\ell$ we have $|A|q\to\infty$ so $g(|A|,q)=o(n)$, and by plugging in $t=n/\log\log n$, \eqref{eqn:nchooseml} gives that \eqref{eqn:1onav} is $o_p(n)$ as desired.

    Now assume $\lambda\in(0,1)$ and $\beta>1$. Then we are in the low temperature phase \eqref{eqn:low-temp-phase}, so by Theorem~\ref{thm:gibbs-low-phase} we may assume $T,T'\sim\mu_{G,\infty}$. Then \eqref{eqn:rs-overlap} is the sum of independent random variables $\mathbf{1}_{\mathsf{par}_T(v)=\mathsf{par}_{T'}(v)}\sim\Bernoulli(1/|\mathsf{par}_G(v)|)$. By the same argument, it is straightforward to see that this can be written as
    \[
        \frac{1}{|\overline{V}|-1}\sum_{v\in\Gamma_{d^*}}\frac{1}{|\mathsf{par}_G(v)|}=\frac{N_{d^*}}{|\overline{V}|-1}\cdot\frac{1}{N_{d^*}}\sum_{v\in\Gamma_{d^*}}\frac{1}{|\mathsf{par}_G(v)|}+o_p(1)\,.
    \]
    From Theorem~\ref{thm:main1-tight} we have $N_{d^*}/(|\overline{V}|-1)\pto1-\lambda$ and $\frac{1}{N_{d^*}}\sum_{v\in\Gamma_{d^*}}\frac{1}{|\mathsf{par}_G(v)|}$ concentrates around
    \[
        \E\left[\sum_{v\in\Gamma_{d^*}}\frac{1}{|\mathsf{par}_G(v)|}\,\middle|\,\Gamma_{\leq d^*-1},\Gamma_{d^*}\right]
    \]
    which again converges to $f(\lambda)$, since, $\Binom(N_{d^*-1},q)\pto\Pois(\log(1/\lambda))$.
\end{proof}

\subsection{Phase transitions}\label{sec:phase-transition}

Our results have shown that, both in KL divergence and Wasserstein distance, the Gibbs measures behave quite differently in the low temperature phase and the high temperature phase. This is fundamentally a phase transition widely studied in statistical physics. In this section, we further investigate the nature of phase transitions occurring in our model of shortest paths in sparse random graphs.

\subsubsection{The free energy density}\label{sec:free-energy-density}

The usual notion of \emph{free energy} in statistical physics is the negative log partition function multiplied by the temperature. In our context, the free energy can be expressed as
\[
    -\overline{\beta}^{-1}\Phi_{G,\beta}=-\frac{1}{\beta\log\log n}\log Z_{G,\beta}\,.
\]
Recall that by Theorem~\ref{thm:ground-state-energy} the ground state energy is approximately $(d^*+\lambda_n)n$ so it is reasonable to subtract off this term to see more precisely how the free energy changes relative to the ground state. As such, we define the relative free energy density as
\[
    \mathcal{F}_{G,\beta}:=-\frac{1}{\beta n\log\log n}\log Z_{G,\beta}-(d_n^*+\lambda_n)\,.
\]
We first establish a formula for the limiting free energy density.

\begin{corollary}\label{cor:free-energy}
    Let $\beta>0$, $\Delta\in[0, 1]$, and suppose that $\Delta_n\to\Delta$ and $\lambda_n\to\lambda$. Then the free energy density converges in probability to
    \[
        \mathcal{F}_{G_n,\beta}\pto\mathcal{F}_{\lambda,\Delta}(\beta)
    \]
    where $\mathcal{F}_{\lambda,\Delta}$ is a continuous function of $\beta$ and is one of the following.
    \begin{enumerate}[label=(\Alph*)]
        \item If $\Delta=1$, then $\lambda=0$ and
        \[
            \mathcal{F}_{\lambda,\Delta}(\beta)=-\frac{1}{\beta}\,.
        \]
        \item\label{item:supregime2} If $\Delta\in[0, 1)$ and $\lambda\in[0, 1)$, then
        \[
            \mathcal{F}_{\lambda,\Delta}(\beta)=\begin{dcases}
                -\frac{\lambda+\Delta}{\beta}&\text{if $\beta\geq1-\Delta$,}\\
                \frac{1-(
                \lambda+\Delta)}{1-\Delta}-\frac{1}{\beta}&\text{if $\beta\leq1-\Delta$.}
            \end{dcases}
        \]
        
        \item If $\lambda=1$, then $\Delta=0$ and
        \[
            \mathcal{F}_{\lambda,\Delta}(\beta)=-\frac{1}{\beta}\,.
        \]
        % \[
        %     \Phi_{\lambda,\Delta}=-\frac{1}{\beta}\,.
        % \]

        % \item If $\lambda=1$, then
        % \[
        %     \Phi_{\lambda,\Delta}=1-\frac{1}{\beta}\,.
        % \]
    \end{enumerate}
\end{corollary}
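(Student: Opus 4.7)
The plan is to derive Corollary~\ref{cor:free-energy} directly from Theorem~\ref{thm:logz-formula}, using algebra in each phase and a monotonicity-plus-continuity sandwich at the phase boundary. From the definition $\mathcal{F}_{G_n,\beta} = -\Phi_{G_n,\beta}/(\beta n\log\log n) - (d_n^*+\lambda_n)$, substituting the LT formula of Theorem~\ref{thm:logz-formula} causes the $(d_n^*+\lambda_n)$ contributions to cancel, leaving
\[
\mathcal{F}_{G_n,\beta} \;=\; -\frac{\Psi_0(\lambda_n) + \lambda_n\log(\alpha_n(1-\lambda_n)\log n)}{\beta\log\log n} + o_p(1).
\]
I will identify the limit using the identity $\Psi_0(\lambda) = (1-\lambda)\Psi_1(\lambda)$ (immediate from $\Pr(X\ge 1) = 1-\lambda$ when $X\sim\Pois(\log(1/\lambda))$) together with the estimate \eqref{eqn:psi-delta}; this gives $\Psi_0(\lambda_n)/\log\log n \to (1-\lambda)\Delta$. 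The companion term is controlled by the universal lower bound $1-\lambda_n \ge (\log\log n)^{-2}$ extractable from \eqref{eqn:lambda-sandwich}, which yields $\log(1-\lambda_n) = O(\log\log\log n) = o(\log\log n)$ and hence $\lambda_n\log(\alpha_n(1-\lambda_n)\log n)/\log\log n \to \lambda$. Combining, the LT limit is $-(\lambda+\Delta-\lambda\Delta)/\beta = -(\lambda+\Delta)/\beta$, where the final equality uses $\lambda\Delta=0$ in every regime (A), (B), (C). This matches the claim in (A), in (B) for $\beta \ge 1-\Delta$, and in (C).

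For the HT phase I substitute the corresponding formula analogously and use $\log(\alpha_n\log n)/\log\log n \to 1$ together with $|\log((1-\Delta_n-\beta)\log\log n)| = O(\log\log\log n) = o(\log\log n)$ (valid in HT since $1-\Delta_n-\beta \ge \kappa_n^{-1} \ge 1/\log\log n$) to obtain $\mathcal{F}_{G_n,\beta} \to (1-\lambda) - 1/\beta$. Using $\lambda\Delta=0$, this rewrites as $\tfrac{1-(\lambda+\Delta)}{1-\Delta} - \tfrac{1}{\beta}$, matching the (B) HT formula, and collapses to $-1/\beta$ in regime (C). Regime (A) admits no HT phase since $1-\Delta=0$.

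The only $\beta$ not directly covered by Theorem~\ref{thm:logz-formula} is the boundary $\beta=1-\Delta$ in regime (B) and, after passing to sub-subsequences on which $\kappa_n$ converges (a standard compactness argument exactly like the one at the end of Section~\ref{sec:models-notations}), the boundary $\beta=1$ on subsequences of regime (C) where $\kappa_n\to\infty$. Handling these points is the one real obstacle, and I will resolve it using the standard fact that $\beta \mapsto -\beta^{-1}\log Z_{G,\beta}$ is non-decreasing, which follows from the identity $\partial_\beta\bigl[-\beta^{-1}\log Z_{G,\beta}\bigr] = H(\mu_{G,\beta})/\beta^2 \ge 0$. Consequently $\mathcal{F}_{G_n,\beta}$ is monotone in $\beta$, so for any such boundary $\beta^*$ and any small $\epsilon>0$ the points $\beta^*-\epsilon$ and $\beta^*+\epsilon$ lie strictly in HT and LT respectively, Theorem~\ref{thm:logz-formula} applies at both, and we obtain the sandwich $\mathcal{F}_{\lambda,\Delta}(\beta^*-\epsilon) + o_p(1) \le \mathcal{F}_{G_n,\beta^*} \le \mathcal{F}_{\lambda,\Delta}(\beta^*+\epsilon) + o_p(1)$. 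Sending $\epsilon\downarrow 0$ and invoking continuity of the piecewise formula $\mathcal{F}_{\lambda,\Delta}$ — which is verified by direct inspection (both branches of (B) agree at $\beta=1-\Delta$ with common value $-(\lambda+\Delta)/(1-\Delta)$, and the formulas in (A), (C) are manifestly smooth) — yields the claimed convergence in probability. The regime-dependent asymptotic analysis of $\Psi_0(\lambda_n)$ is otherwise routine given the apparatus developed in Sections~\ref{sec:gibbs} and \ref{sec:gibbs2}.
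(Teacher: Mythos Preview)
Your proof is correct, but it takes a different route from the paper. The paper explicitly notes that applying Theorem~\ref{thm:logz-formula} directly leaves the boundary $\beta=1-\Delta$ uncovered, and rather than patch this gap it returns to the more primitive Lemmas~\ref{lem:free-core} and~\ref{lem:gibbs-cond-lb} (together with \eqref{eqn:reg2}) to deduce the uniform two-sided estimate
\[
\frac{1}{\log\log n}\log(Z_{G,\beta}|_A)=n-(1-\Delta-\beta)|A|-\beta(d^*+1)n+o_p(n),
\]
valid for all kernels with $|A|\ge m_\ell$ and all constant $\beta>0$. This reduces $\log Z_{G,\beta}$ to a one-dimensional \emph{linear} optimization in $m=|A|$, which is then solved by inspection simultaneously across all temperatures, including the boundary.

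Your approach instead uses Theorem~\ref{thm:logz-formula} as a black box in the strict LT and HT phases, and fills the gap at $\beta^*=1-\Delta$ (and at $\beta^*=1$ on subsequences of regime~(C) with $\kappa_n\to\infty$) via the thermodynamic monotonicity $\partial_\beta\mathcal{F}_{G_n,\beta}\ge 0$ combined with continuity of the limiting piecewise formula. This is a perfectly valid and standard device; your identity $\Psi_0(\lambda_n)=(1-\lambda_n)\Psi_1(\lambda_n)$, the invocation of \eqref{eqn:psi-delta}, and the observation that $\lambda\Delta=0$ whenever both limits exist (since $\log(1/\lambda_n)=(nq)^{\Delta_n}$ forces one of $\lambda,\Delta$ to vanish) are all correct. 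The trade-off is that the paper's route is more self-contained and yields a single formula valid at every $\beta$ without a separate boundary argument, while yours leans on heavier machinery already in place but needs only a short monotonicity-plus-sandwich step to close the remaining case.
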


If we directly use Theorem~\ref{thm:logz-formula} to prove this, we miss the case $\beta=1-\Delta$. Instead, it is easier and perhaps more illuminating to start from more primitive results, namely Lemma~\ref{lem:free-core} and Lemma~\ref{lem:gibbs-cond-lb}. Together with \eqref{eqn:reg2}, it is not difficult to deduce the following corollaries.

\begin{lemma}\label{lem:logz-ub-lglg}
    With probability at least $1-o(1)$, we have
    \[
        \frac{1}{\log\log n}\log(Z_{G,\beta}|_A)\leq n-(1-\Delta-\beta)|A|-\beta(d^*+1)n+o(n)
    \]
    simultaneously for all $A\in\mathcal{K}(G)$.
\end{lemma}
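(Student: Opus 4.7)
The plan is to apply Lemma~\ref{lem:free-core}, bound each of its three right-hand-side terms, and divide through by $\log\log n$. From Lemma~\ref{lem:free-core}, with probability $1-o(1)$ and uniformly over $A\in\mathcal{K}(G)$,
\[
\log(Z_{G,\beta}|_A)\leq S_1(A)+S_2(A)+\overline{\beta}|A|-\overline{\beta}(d^*+1)n+o(n),
\]
where $S_1(A):=\sum_{v\in A}\log|\mathsf{par}_G(v)|$, $S_2(A):=\sum_{v\in\overline{V}\setminus\Gamma_{\leq d^*-1}\setminus A}\log(|\mathsf{N}_A(v)|+Me^{-\overline{\beta}})$, and $M=O(\log n)$ bounds the maximum degree of $G$.

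The external sum $S_2$ and the energy term are easy. Using $|\mathsf{N}_A(v)|\leq M$ and $Me^{-\overline{\beta}}\leq M$ gives $\log(|\mathsf{N}_A(v)|+Me^{-\overline{\beta}})\leq\log(2M)=\log\log n+O(1)$ uniformly, so $S_2(A)\leq(n-|A|)\log\log n+O(n)$, which contributes $(n-|A|)+o(n)$ after dividing by $\log\log n$. The energy term yields $\beta|A|-\beta(d^*+1)n$ verbatim.

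The delicate task is to show that $S_1(A)\leq\Delta|A|\log\log n+o(n\log\log n)$ uniformly in $A$. Conditionally on $\Gamma_{\leq d^*}$, the $(|\mathsf{par}_G(v)|)_{v\in\Gamma_{d^*}}$ are i.i.d.\ $\ZTB(N_{d^*-1},q)$, with mean parameter $qN_{d^*-1}=(1+o_p(1))(nq)^{\Delta_n}$ by Proposition~\ref{prop:conc} and $\log$-mean $\Psi_1(\lambda_n)=\Delta_n\log(nq)+O(1)=\Delta\log\log n+o(\log\log n)$ by \eqref{eqn:reg2}. An analogue of Lemma~\ref{lem:subg-ztp} for the zero-truncated binomial (obtained by Poisson approximation, since $\ZTB(N_{d^*-1},q)$ is close in total variation to $\ZTP(qN_{d^*-1})$) gives $\lVert\log|\mathsf{par}_G(v)|-\Psi_1(\lambda_n)\rVert_{\psi_2}=\sigma$ with a universal constant $\sigma=O(1)$. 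A Hoeffding bound on the sum, followed by a union bound over the $\leq 2^{N_{d^*}}\leq 2^n$ subsets $A\subseteq\Gamma_{d^*}$, yields
\[
\sup_{A}\bigl|S_1(A)-|A|\Psi_1(\lambda_n)\bigr|\leq C\sigma\sqrt{|A|\cdot n}\leq C\sigma n=O(n)
\]
with probability $1-o(1)$. Since $O(n)=o(n\log\log n)$, we conclude $S_1(A)=\Delta|A|\log\log n+o(n\log\log n)$ uniformly in $A$. Combining and dividing by $\log\log n$,
\[
\frac{\log(Z_{G,\beta}|_A)}{\log\log n}\leq\Delta|A|+(n-|A|)+\beta|A|-\beta(d^*+1)n+o(n)=n-(1-\Delta-\beta)|A|-\beta(d^*+1)n+o(n),
\]
which is the claimed bound.

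The main obstacle is the uniform-in-$A$ control of $S_1(A)$ at the required $o(n\log\log n)$ precision. A naive Hoeffding treating $\log|\mathsf{par}_G(v)|\in[0,\log M]$ as a generic bounded variable would produce subgaussian parameter $\Theta(\log\log n)$ per summand, leading to $\Theta(n\log\log n)$ precision after the union bound over $2^n$ subsets --- exactly the wrong order, matching the leading-order term itself. The crucial quantitative ingredient is that the \emph{centered} log of a zero-truncated binomial has subgaussian norm of order $O(1)$ rather than $O(\log\log n)$, a saving of a full $\log\log n$ factor that exactly absorbs the $(\log 2)n$ entropy cost of the union bound and turns the uniform error into $o(n\log\log n)$.
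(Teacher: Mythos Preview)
Your proof is correct and follows essentially the same route as the paper: the paper derives this lemma directly from Lemma~\ref{lem:free-core} together with \eqref{eqn:reg2}, and the concentration argument you spell out for $S_1(A)$ is exactly the content of Lemma~\ref{lem:parconc} (transferred from $\ZTP$ to the actual parent counts). One small simplification: you do not need to go through Poisson approximation to get the $O(1)$ subgaussian norm for $\log|\mathsf{par}_G(v)|$, since the paper already provides this for the zero-truncated binomial directly in Lemma~\ref{lem:subg-ztb}.
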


\begin{lemma}\label{lem:logz-lb-lglg}
    With probability at least $1-o(1)$, we have
    \[
        \frac{1}{\log\log n}\log(Z_{G,\beta}|_A)\geq n-(1-\Delta-\beta)|A|-\beta(d^*+1)n-o(n)
    \]
    for all $A\in\mathcal{K}(G)$ with $|A|\geq m_\ell$.
\end{lemma}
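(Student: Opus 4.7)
\textbf{Proof proposal for Lemma~\ref{lem:logz-lb-lglg}.} The plan is to take the already-established lower bound from Lemma~\ref{lem:gibbs-cond-lb}, divide through by $\log\log n$, and then carefully identify the limits of the two entropy-like terms in that bound. Concretely, Lemma~\ref{lem:gibbs-cond-lb} gives (on an event of probability $1-o(1)$, uniformly in $A\in\mathcal K(G)$ with $|A|\geq m_\ell$)
\[
\tfrac{1}{\log\log n}\log(Z_{G,\beta}|_A)\;\geq\;\tfrac{1}{\log\log n}\sum_{v\in A}\log|\mathsf{par}_G(v)|\;+\;\tfrac{(n-|A|)\log(|A|q)}{\log\log n}\;+\;\beta|A|-\beta(d^*+1)n-o(n),
\]
so it suffices to prove (i) $\frac{1}{\log\log n}\sum_{v\in A}\log|\mathsf{par}_G(v)|\geq\Delta|A|-o(n)$ and (ii) $\frac{(n-|A|)\log(|A|q)}{\log\log n}\geq (n-|A|)-o(n)$, both uniformly over kernels with $|A|\geq m_\ell$, on a high-probability event.

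For step (i), I would identify $(|\mathsf{par}_G(v)|)_{v\in\Gamma_{d^*}}$ in distribution with $(X_i)_{i\in I_{\mathbf X>0}}$ as in Section~\ref{sec:gibbs-opt-sol}, and then invoke Lemma~\ref{lem:parconc}: simultaneously for all $A\subseteq\Gamma_{d^*}$, on an event of probability $1-o(1)$,
\[
\Bigl|\sum_{v\in A}\log|\mathsf{par}_G(v)|-|A|\Psi_1(\lambda_n)\Bigr|\;\leq\;C\sqrt{(|A|\wedge(N_{d^*}-|A|+\sqrt{N_{d^*}}))\,N_{d^*}}\;=\;O(n).
\]
Dividing by $\log\log n$, the error becomes $O(n/\log\log n)=o(n)$. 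Next, from the regularity estimate \eqref{eqn:psi-delta} (or equivalently \eqref{eqn:reg2}), we have $\Psi_1(\lambda_n)=\Delta_n\log(nq)+O(\log\log\log n)=\Delta_n\log\log n+O(\log\alpha_n)+O(\log\log\log n)$. Dividing by $\log\log n$ and multiplying by $|A|\leq n$ yields $|A|\Psi_1(\lambda_n)/\log\log n=\Delta_n|A|+o(n)=\Delta|A|+o(n)$, using $|A|\cdot|\Delta_n-\Delta|\leq n\cdot o(1)=o(n)$ when $\Delta_n\to\Delta$ (and more generally the statement can be read with $\Delta_n$ in place of $\Delta$).

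For step (ii), I use only the lower bound $|A|\geq m_\ell\geq \lceil n/(3(\log\log n)^2)\rceil$, which gives $|A|q\geq \alpha_n\log n/(3(\log\log n)^2)$, hence
\[
\log(|A|q)\;\geq\;\log\log n-2\log\log\log n+\log(\alpha_n/3)\;=\;\bigl(1-o(1)\bigr)\log\log n
\]
uniformly in such $A$, on the high-probability event controlling $\alpha_n$. Multiplying by $n-|A|\leq n$ and dividing by $\log\log n$ gives $(n-|A|)\log(|A|q)/\log\log n\geq (n-|A|)-o(n)$ uniformly.

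Combining (i) and (ii) with the rewritten lower bound yields
\[
\tfrac{1}{\log\log n}\log(Z_{G,\beta}|_A)\;\geq\;\Delta|A|+(n-|A|)+\beta|A|-\beta(d^*+1)n-o(n)\;=\;n-(1-\Delta-\beta)|A|-\beta(d^*+1)n-o(n),
\]
which is exactly the claim. The only mildly delicate point is ensuring that all three error terms ($o(n)$ from Lemma~\ref{lem:gibbs-cond-lb}, $O(n/\log\log n)$ from Lemma~\ref{lem:parconc}, and $o(n)$ from the log-estimate in (ii)) are genuinely uniform in $A$ across $\{A\in\mathcal K(G):|A|\geq m_\ell\}$; this is already built into the cited statements, so no new union bound is needed. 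I do not expect a real obstacle here, as all ingredients are already in place.
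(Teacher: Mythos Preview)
Your approach is essentially the paper's: start from Lemma~\ref{lem:gibbs-cond-lb}, divide by $\log\log n$, and identify the two entropy terms with $\Delta|A|$ and $n-|A|$ respectively. The paper's own justification is just the one-line remark that Lemma~\ref{lem:gibbs-cond-lb} together with \eqref{eqn:reg2} makes this corollary immediate, so you have correctly unpacked what was intended.

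One small citation issue in step (i): Lemma~\ref{lem:parconc} as stated is about i.i.d.\ $\ZTP(\log(1/\lambda_n))$ random variables, whereas $(|\mathsf{par}_G(v)|)_{v\in\Gamma_{d^*}}$ are, conditionally on $\Gamma_{\leq d^*-1}$, i.i.d.\ $\ZTB(N_{d^*-1},q)$. The distributional identification you mention lives in Section~\ref{sec:hg-conc} (not \ref{sec:gibbs-opt-sol}) and uses binomial $X_i$, so $(X_i)_{i\in I_{\mathbf X>0}}$ are ZTB, not the ZTP variables of Lemma~\ref{lem:parconc}. This is easily fixed: the proof of Lemma~\ref{lem:parconc} uses only that $\log X_i$ has uniformly bounded subgaussian norm, which holds for ZTB by Lemma~\ref{lem:subg-ztb}, so the identical argument gives $\bigl|\sum_{v\in A}\log|\mathsf{par}_G(v)|-|A|\log(N_{d^*-1}q\vee 1)\bigr|=O(n)$ uniformly in $A$. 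Then Proposition~\ref{prop:conc} gives $\log(N_{d^*-1}q\vee 1)=(\Delta_n\vee 0)\log(nq)+O_p(1)$, which is exactly what \eqref{eqn:reg2} encodes, and your conclusion for (i) follows. Step (ii) is fine and purely deterministic once $|A|\geq m_\ell$ (the phrase ``event controlling $\alpha_n$'' is unnecessary since $\alpha_n$ is deterministic).
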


From Section~\ref{sec:gibbs-1d-opt}, we already know that we only need to consider kernels of size at least $m_\ell$ to approximate $\log Z_{G,\beta}$. Thus,
\[
    \frac{1}{\log\log n}\log Z_{G,\beta}=\max_{m_\ell\leq m\leq N_{d^*}}\{n-(1-\Delta-\beta)m-\beta(d^*+1)n\}+o_p(n)\,.
\]
The objective we are maximizing over is simply a linear function, so this is easy to solve and yields Corollary~\ref{cor:free-energy}.

For the case $\Delta\in[0,1)$ and $\lambda\in[0,1)$, the free energy density as a function of $\beta$ has a non-differentiable point $\beta=1-\Delta$, indicating the existence of \emph{first-order phase transition}. Interestingly, zooming in the free energy by $\log\log n$ reveals another form of phase transition. This is easily obtained as a direct consequence of Theorem~\ref{thm:logz-formula}, simply by splitting the formula into different parameter regimes based on the asymptotics of $\Delta_n$, $\lambda_n$, and $\kappa_n$. Note that the ground state energy is still $(d^*+\lambda_n)n$ even after zooming in, since the error term in Theorem~\ref{thm:ground-state-energy} is $o_p(n/\log\log n)$.

\begin{corollary}\label{cor:logz-phase}
    Let $\beta>0$, $\Delta\in[0,1]$, $\lambda\in[0,1]$, and $\kappa\geq0$ be constants. Suppose that $\Delta_n\to\Delta$ and $\lambda_n\to\lambda$. Then
    \[
        \mathcal{F}_{G_n,\beta}\log\log n=-\frac{1}{\beta}\zeta_{G_n,\beta}+o_p(1)
    \]
    where $\zeta_{G_n,\beta}$ can be one of the following.
    \begin{enumerate}[label=(\Alph*)]
        \item\label{item:regime1} If $\Delta=1$, then $\lambda=0$ and
        \[
            \zeta_{G_n,\beta}=(\lambda_n+\Delta_n)\log\log n+\log\alpha_n\,.
        \]
        \item\label{item:regime2} We further split the case $\Delta\in[0,1)$ and $\lambda\in[0,1)$ into the following.
            \begin{enumerate}[label=(\Alph{enumi}.\roman*), ref=(\Alph{enumi}.\roman*), align=left]
            \item\label{item:regime2-1} If $\Delta\in[0,1)$ and $\lambda=0$, then
                \[
                    \zeta_{G_n,\beta}=\begin{dcases}
                        (\lambda_n+\Delta_n)\log\log n+\Delta\log\alpha_n&\text{if $\beta>1-\Delta$,}\\
                        (1-\beta+\beta\lambda_n)\log\log n-\log\log\log n+\log\left(\frac{\alpha_n}{1-\Delta-\beta}\right) -1&\text{if $\beta<1-\Delta$.}
                    \end{dcases}
                \]
        
                \item\label{item:regime2-2} If $\Delta=0$, $\lambda\in(0,1)$, then
                \[
                    \zeta_{G_n,\beta}= \begin{dcases}
                    \lambda_n\log\log n+\lambda\log((1-\lambda)\alpha_n) + \Psi_0(\lambda)& \text{if $\beta > 1$,} \\
                    (1-\beta+\beta\lambda_n)\log\log n-\log\log\log n+\log\left(\frac{\alpha_n}{1 - \beta}\right) - 1 & \text{if $\beta < 1$.}
                    \end{dcases}
                \]
        \end{enumerate}

        % \item If $\lambda=1$, and $\kappa_n\to\infty$, then $\Delta=0$ and
        % \[
        %     \zeta_{G_n,\beta} = \begin{dcases}
        %     \lambda_n\log\log n+\lambda_n\log(1-\lambda_n) +\log(\alpha_n)+o_p(1)& \text{if $\beta > 1$,} \\
        %     (1-\beta+\beta\lambda_n)\log\log n-\log\log\log n+\log\left(\frac{\alpha_n}{1 - \beta}\right) - 1 + o_p(1) & \text{if $\beta < 1$}
        %     \end{dcases}
        % \]
        % and $W_1(\mu_{G_n,\infty},\mu_{G_n,\beta})$ is $o_p(n)$ and $\Theta_p(n)$ for the two cases, respectively.
        
        \item\label{item:regime3} If $\lambda=1$, then $\Delta=0$ and
        \[
            \zeta_{G_n,\beta}=\begin{dcases}
                \log\log n-\log\log\log n+\log(\kappa_n\alpha_n) -\kappa_n&\text{if $\beta>1-\kappa^{-1}$,}\\
                \log\log n-\log\log\log n+\log\left(\frac{\alpha_n}{1-\beta}\right)-\kappa_n\beta-1&\text{if $\beta<1-\kappa^{-1}$}
            \end{dcases}
        \]
        where $\kappa_n\in[0,\infty]$.
        \begin{enumerate}[label=(\Alph{enumi}.\roman*), ref=(\Alph{enumi}.\roman*), align=left]
            \item\label{item:regime3-1} If $\kappa=\infty$, then $\kappa^{-1}$ is replaced with zero.
            \item\label{item:regime3-2} If $\kappa\in(1,\infty)$, then $\kappa_n$ in the formula can be replaced with $\kappa$.
            \item\label{item:regime3-3} If $\kappa\in[0,1]$, then $\zeta_{G_n,\beta}$ can be set to either formula for any value of $\beta>0$, and can equivalently be written as
            \[
                \zeta_{G_n,\beta}=(\lambda_n+\Delta_n)\log\log n+\log\alpha_n\,.
            \]
        \end{enumerate}
    %     and one of the following holds.
    %     \begin{enumerate}[label=(D.\roman*)]
    %         \item If $\kappa_n\to\infty$, then 
    %         \[
    %             \begin{dcases}
    %                 \frac{1}{n}W_1(\mu_{G_n,\infty},\mu_{G_n,\beta})=o_p(1)&\text{if $\beta>1$,}\\
    %                 \operatorname{pliminf}_{n\to\infty}\frac{1}{n}W_1(\mu_{G_n,\infty},\mu_{G_n,\beta})>C&\text{if $\beta<1$}
    %             \end{dcases}
    %         \]
    %         for some universal constant $C>0$.
    %         \item\label{item:regime4} If $\kappa_n\to\kappa\in(1,\infty)$, then
    %         \[
    %             \begin{dcases}
    %                 \frac{1}{n}W_1(\mu_{G_n,\infty},\mu_{G_n,\beta})=o_p(1)&\text{if $\beta>1-\kappa^{-1}$,}\\
    %                 \begin{cases}
    %                     \operatorname{pliminf}_{n\to\infty}\frac{1}{n}W_1(\mu_{G_n,\infty},\mu_{G_n,\beta})\geq c(1-\kappa^{-1}-\beta)\\
    %                     \operatorname{plimsup}_{n\to\infty}\frac{1}{n}W_1(\mu_{G_n,\infty},\mu_{G_n,\beta})\leq C\kappa\sqrt{1-\kappa^{-1}-\beta}
    %                 \end{cases}&\text{if $\beta<1-\kappa^{-1}$}
    %             \end{dcases}
    %         \]
    %         for some universal constant $C,c>0$. Also, if $\beta=1-\kappa^{-1}$, then
    %         \[
    %             \zeta_{G_n,\beta}=\log\log n-\log\log\log n+\log(\kappa\alpha)-\kappa+o_p(1)
    %         \]
    %         which is the same formula obtained by plugging in $\beta=1-\kappa^{-1}$ for both cases.

    %         \item\label{item:regime5} If $\kappa_n\to\kappa\in[0,1]$, then $W_1(\mu_{G_n,\infty},\mu_{G_n,\beta})=o_p(n)$.
    %     \end{enumerate}
    \end{enumerate}
\end{corollary}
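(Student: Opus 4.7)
The plan is to derive Corollary~\ref{cor:logz-phase} as a direct case analysis from Theorem~\ref{thm:logz-formula}, which already gives $\log Z_{G_n,\beta}$ accurate to $o_p(n)$ in the two phases. Starting from the definition of the relative free energy density,
\begin{equation*}
\mathcal F_{G_n,\beta}\log\log n \;=\; -\frac{1}{\beta n}\log Z_{G_n,\beta} \;-\; (d_n^* + \lambda_n)\log\log n,
\end{equation*}
an $o_p(n)$ expansion of $\log Z_{G_n,\beta}$ yields an $o_p(1)$ expansion of the left-hand side. Substituting Theorem~\ref{thm:logz-formula}, cancelling the $\overline\beta(d_n^*+\lambda_n)n$ term against $(d_n^*+\lambda_n)\log\log n$, and using $\overline\beta(\lambda_n-1) = -\beta\kappa_n$, I obtain two candidate expressions
\begin{equation*}
\zeta^{\mathrm{LT}}_{G_n,\beta} = \Psi_0(\lambda_n) + \lambda_n\log(\alpha_n(1-\lambda_n)\log n), \qquad
\zeta^{\mathrm{HT}}_{G_n,\beta} = \log\tfrac{\alpha_n\log n}{(1-\Delta_n-\beta)\log\log n} - \beta\kappa_n - 1,
\end{equation*}
valid in \eqref{eqn:low-temp-phase} and \eqref{eqn:high-temp-phase} respectively.

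For each listed sub-regime I then identify which phase applies. The strict inequalities $\beta\gtrless 1-\Delta$ or $\beta\gtrless 1-\kappa^{-1}$ each select exactly one of \eqref{eqn:low-temp-phase}, \eqref{eqn:high-temp-phase}; in regime~(A) we have $\limsup(1-\Delta_n-\kappa_n^{-1})\le 0$ so only LT applies; and in regime~(C.iii) we have $\beta_c\le 0$, forcing LT for any $\beta>0$. The residual computation is then to asymptotically simplify $\zeta^{\mathrm{LT}}$ or $\zeta^{\mathrm{HT}}$ against the stated formula.

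The one nontrivial ingredient is the behavior of $\Psi_0(\lambda_n)=\E[\log(X\vee 1)]$ with $X\sim\Pois(\log(1/\lambda_n))$ in the LT expansion, which I treat in three sub-cases: (i) if $\log(1/\lambda_n) = (nq)^{\Delta_n}\to\infty$ (regimes (A), (B.i)), Poisson concentration gives $\Psi_0(\lambda_n) = \Delta_n\log(nq) + o(1)$; (ii) if $\log(1/\lambda_n)\to c\in(0,\infty)$ (regime (B.ii)), dominated convergence yields $\Psi_0(\lambda_n)\to\Psi_0(\lambda)$; (iii) if $\log(1/\lambda_n)\to 0$ (regimes (C.i), (C.ii) low-temp, and (C.iii)), a direct Taylor expansion of the Poisson PMF shows $\Psi_0(\lambda_n)=O(\log(1/\lambda_n)^2)=o(1)$. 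Combining these with the identities $\lambda_n\log\log n = \log\log n - \kappa_n$ and (via the definition of $\lambda_n$) $\Delta_n\log(nq) = \log\log(1/\lambda_n)$, each regime reduces to an algebraic identity; the second identity in particular gives $\Delta_n\log\log n = \log\kappa_n - \log\log\log n + o(1)$ whenever $\lambda_n\to 1$, which is exactly what makes the three alternative forms listed in (C.iii) agree.

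The main obstacle is bookkeeping rather than substance: the expansions contain contributions at several distinct scales ($\log\log n$, $\log\log\log n$, $\log\alpha_n$, $\beta\kappa_n$, $\Psi_0(\lambda_n)$), and one must track carefully which factors of $\Delta_n$ or $\lambda_n$ may be replaced by their limits $\Delta,\lambda$ (namely those multiplying bounded quantities such as $\log\alpha_n$ or $\log(1-\Delta_n-\beta)$) and which may not --- $\Delta_n\log\log n$ and $\lambda_n\log\log n$ may differ from $\Delta\log\log n$ and $\lambda\log\log n$ by $\Theta(1)$ when $\Delta,\lambda\in\{0,1\}$, and must be handled through the relation $(nq)^{\Delta_n}=\log(1/\lambda_n)$ rather than naively. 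With that care, the six regime formulas follow by direct algebra.
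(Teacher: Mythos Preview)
Your proposal is correct and follows essentially the same route as the paper, which states only that the corollary is obtained directly from Theorem~\ref{thm:logz-formula} by splitting into parameter regimes; you have supplied the detailed case analysis and the $\Psi_0(\lambda_n)$ asymptotics that the paper leaves implicit.
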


The regimes \ref{item:regime1}, \ref{item:regime2}, and \ref{item:regime3} are set to match the corresponding cases in Corollary~\ref{cor:free-energy}. The most interesting regimes are \ref{item:regime3-1} and \ref{item:regime3-2}, which are something new that is not visible only from Corollary~\ref{cor:free-energy}. An especially intriguing regime is \ref{item:regime3-2}, which turns out to exhibit a \emph{second-order phase transition} which we will further discuss in the following section.

% \begin{remark}
%     Note that the regime \ref{item:regime4} the formula contains an equality $\beta=1-1/\kappa$. This is because in this regime, $N_{d^*}$ is asymptotically almost surely of order $O(n/\log\log n)$ which is already small enough to avoid any issues described in the remark following Theorem~\ref{thm:logz-formula}. This will be evident in the proof of the theorem.
% \end{remark}

% If we drop the terms below $o(n\log\log n)$, we can see that the phase transitions occurring at regimes \ref{item:regime4} and \ref{item:regime5} disappear, meaning that the changes in the free energy are not at the leading order; we call this \emph{subleading} phase transitions.

\subsubsection{Phase transitions in Wasserstein space}

An important feature of the Gibbs measures in the regimes \ref{item:regime2}, \ref{item:regime3-1}, and \ref{item:regime3-2} is that as temperature increases (i.e., $\beta$ decreases), the distance between $\mu_{G,\beta}$ and the uniform measure $\mu_{G,\infty}$ suddenly changes. As such, we can examine the phase transition phenomena by viewing the Gibbs measures $\mu_{G,\beta}$ as a function $\beta\mapsto\mu_{G,\beta}$ to the space of measures on spanning trees.

Formally, for a fixed graph $G$, we let $\mathcal{P}_1(\mathcal{T}_G)$ to be the space of probability measures on the spanning trees of the component $\overline{G}$ of $G$ containing a fixed vertex $1$, equipped with the normalized $1$-Wasserstein metric
\[
    \mathsf{w}_1(\mu,\nu):=\frac{1}{n}W_1(\mu, \nu),\qquad\mu,\nu\in\mathcal{P}_1(\mathcal{T}_G)\,.
\]
Also, we define the \emph{critical temperature} $\beta_c$ as
\[
    \beta_c:=\lim_{n\to\infty}(1-\Delta_n-\kappa_n^{-1})
\]
whenever the limit exists and is positive. In particular,
\[
    \beta_c=\begin{dcases}
        1-\Delta&\text{if $\Delta_n\to\Delta\in[0,1)$ and $\lambda_n\to0$,}\\
        1&\text{if $\lambda_n\to\lambda\in(0,1]$ and $\kappa_n\to\infty$,}\\
        1-1/\kappa&\text{if $\kappa_n\to\kappa\in(1,\infty)$.}
    \end{dcases}
\]
Our system associated with $G$ can be completely described by the function $f_G:\mathbb{R}_{>0}\to\mathcal{P}_1(\mathcal{T}_G)$, $\beta\mapsto\mu_{G,\beta}$. Then Corollary~\ref{cor:logz-phase} implies that
\[
    \mathsf{w}_1(f_G(\beta),\mu_{G,\infty})=\frac{1}{n}W_1(\mu_{G,\beta},\mu_{G,\infty})\pto0
\]
if $\beta>\beta_c$; in other words, $f_G$ is asymptotically almost surely constant in $\beta>\beta_c$. However, the distance to $\mu_{G,\infty}$ is bounded away from zero when $\beta<\beta_c$ --- a non-analyticity in $f_G$ occurs at $\beta_c$ in the thermodynamic limit, indicating a phase transition.

In fact, the phase transition in the regime \ref{item:regime3-2} is somewhat different from that in the regimes \ref{item:regime2} and \ref{item:regime3-1}. One way to see this is to consider the \emph{optimal kernel size} as denoted by $m^*$ in Theorem~\ref{thm:logz-formula}. In other words, we look at the kernel size that minimizes the free energy. Since there is no significant difference between \ref{item:regime2} and \ref{item:regime3-1}, for simplicity, we shall focus on comparing \ref{item:regime3-1} and \ref{item:regime3-2}.
\begin{itemize}
    \item In the regime \ref{item:regime3-1}, we have $\Delta_n\to0$ and $\kappa_n\to\infty$. In the low temperature phase, the optimal kernel proportion $m^*/N_{d^*}=1$, but in the high temperature phase, the optimal kernel proportion is approximately
    \[
        \frac{m^*}{N_{d^*}}\approx\frac{1}{(1-\Delta_n-\beta)\kappa_n}\to0\,.
    \]
    Hence, as $\beta$ decreases from above $\beta_c=1$ to below $\beta_c=1$, the optimal kernel proportion suddenly drops from $1$ to $0$ \emph{discontinuously}.

    \item In the regime \ref{item:regime3-2}, we also have $\Delta_n\to0$ but $\kappa_n\to\kappa\in(1,\infty)$. The optimal kernel proportion in the low temperature phase is the same, but in the high temperature phase we have
    \[
        \frac{m^*}{N_{d^*}}\approx\frac{1}{(1-\Delta_n-\beta)\kappa_n}\to\frac{1}{(1-\beta)\kappa}\,.
    \]
    Note that $\beta_c=1-1/\kappa$ and thus its left limit $\beta\to\beta_c-0$ is $1$. Hence, the optimal kernel proportion changes \emph{continuously}.
\end{itemize}
This observation can be adapted to our Wasserstein framework in the following way. For a sequence of graphs $\{G_n\}_{n\in\mathbb{Z}^+}$ with $|V(G_n)|=n$, we get a sequence of functions $\{f_{G_n}\}_{n\in\mathbb{Z}^+}$ where $f_{G_n}:\mathbb{R}_{>0}\to\mathcal{P}(\mathcal{T}_{G_n})$ and we check if it is \emph{equicontinuous} at $\beta_c$. This is formally stated and proved in the following results.

\begin{proposition}[Discontinuous phase transition]
    In regimes \ref{item:regime2} and \ref{item:regime3-1}, there exists a universal constant $C>0$ such that for all constant $\delta>0$, we can find a constant $\beta>0$ with $|\beta-\beta_c|<\delta$ satisfying
    \[
        \frac{1}{n}W_1(\mu_{G_n,\beta},\mu_{G_n,\beta_c})\geq C
    \]
    asymptotically almost surely.
\end{proposition}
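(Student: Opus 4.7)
The plan is to reduce the claim to a triangle-inequality argument that combines the low-temperature approximation result (Theorem~\ref{thm:gibbs-low-phase}) with the quantitative high-temperature separation (Theorem~\ref{thm:gibbs-high-phase}), exploiting the fact that in both regimes~\ref{item:intro-regime2} and~\ref{item:intro-regime3-1} we have $\kappa_n\to\infty$, so the gap constant in Theorem~\ref{thm:gibbs-high-phase} can be chosen \emph{uniformly in $\delta$}.

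First I would verify that $\kappa_n\to\infty$ holds in both target regimes: in~\ref{item:intro-regime2} we have $\lambda\in[0,1)$, so $1-\lambda_n$ is bounded away from $0$ and $\kappa_n=(1-\lambda_n)\log\log n\to\infty$; in~\ref{item:intro-regime3-1} this is part of the definition of the sub-regime. Inspecting the final step of the proof of Theorem~\ref{thm:gibbs-high-phase} (the dichotomy between $\delta\kappa_n<7$ and $\delta\kappa_n\geq 7$, culminating in \eqref{eqn:lip-diff-global}), one sees that once $\kappa_n\to\infty$, for any fixed $\delta>0$ the inequality $\delta\kappa_n\geq 7$ holds for all large $n$, and therefore
\[
    \tfrac{1}{n}W_1(\mu_{G_n,\beta},\mu_{G_n,\infty})\;\geq\;C_0\;:=\;\tfrac{3}{4}-\tfrac{1}{\sqrt{2}}
\]
asymptotically almost surely, whenever $\beta<1-\Delta_n-\kappa_n^{-1}-\delta$. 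In particular this holds for $\beta=\beta_c-\delta/2$ for all sufficiently large $n$ (since $1-\Delta_n-\kappa_n^{-1}\to\beta_c$).

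Next, fix $\delta>0$ and set $\beta_{-}=\beta_c-\delta/2$, $\beta_{+}=\beta_c+\delta/2$. By the previous paragraph,
\[
    \tfrac{1}{n}W_1(\mu_{G_n,\beta_{-}},\mu_{G_n,\infty})\;\geq\;C_0-o_p(1),
\]
while $\beta_{+}$ lies strictly inside the low-temperature phase \eqref{eqn:low-temp-phase}, so Theorem~\ref{thm:gibbs-low-phase} yields $\tfrac{1}{n}W_1(\mu_{G_n,\beta_{+}},\mu_{G_n,\infty})=o_p(1)$. The triangle inequality then gives
\[
    \tfrac{1}{n}W_1(\mu_{G_n,\beta_{-}},\mu_{G_n,\beta_{+}})\;\geq\;C_0-o_p(1),
\]
and a further triangle inequality through $\mu_{G_n,\beta_c}$ shows that at least one of $\tfrac{1}{n}W_1(\mu_{G_n,\beta_{-}},\mu_{G_n,\beta_c})$ and $\tfrac{1}{n}W_1(\mu_{G_n,\beta_{+}},\mu_{G_n,\beta_c})$ is at least $(C_0-o_p(1))/2$. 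Choosing $\beta$ to be whichever of $\beta_{\pm}$ achieves this bound, and setting $C=C_0/3$ (say) to absorb the $o_p(1)$ term for large $n$, gives the claim with a universal constant independent of $\delta$.

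The only subtle point is that the statement asks for distance to $\mu_{G_n,\beta_c}$ rather than to $\mu_{G_n,\infty}$, and we have no direct control over the behavior of the Gibbs measure exactly at the critical temperature; this is precisely why the one-sided bound (say, always choosing $\beta=\beta_c-\delta/2$) is not obviously sufficient and we must pass through the $\mu_{G_n,\beta_c}$-free triangle inequality between $\beta_{-}$ and $\beta_{+}$. I expect this to be the main conceptual obstacle, but it is resolved cleanly by the ``at least one of two'' argument above, which is what forces the final constant to be $C_0/2$ (up to a further safety factor) rather than $C_0$.
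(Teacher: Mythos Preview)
Your proposal is correct and follows essentially the same approach as the paper: pick $\beta_\pm=\beta_c\pm\delta/2$, use Theorem~\ref{thm:gibbs-low-phase} at $\beta_+$ and the uniform constant $C_0=\tfrac{3}{4}-\tfrac{1}{\sqrt{2}}$ from \eqref{eqn:lip-diff-global} at $\beta_-$ (valid because $\kappa_n\to\infty$), then apply the triangle inequality and the ``at least one of two'' argument around $\beta_c$. The paper's proof is structurally identical, with the only cosmetic difference being the specific numerical constant chosen ($C=1/50$ there versus your $C_0/3$).
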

\begin{proof}
    Choose $C=1/50$. Set $\beta_1=\beta_c+\delta/2$ and $\beta_2=\beta_c-\delta/2$ which are in \eqref{eqn:low-temp-phase} and \eqref{eqn:high-temp-phase}, respectively. By Theorem~\ref{thm:gibbs-low-phase}, we have
    \[
        \frac{1}{n}W_1(\mu_{G,\beta_1},\mu_{G,\infty})=o_p(1)\,.
    \]
    For $\beta_2$, we note that in the proof of Theorem~\ref{thm:gibbs-high-phase}, \eqref{eqn:w1-lip-diff} and \eqref{eqn:lip-diff-global} imply that
    \[
        \frac{1}{n}W_1(\mu_{G,\beta_2},\mu_{G,\infty})\geq\frac{3}{4}-\frac{1}{\sqrt{2}}-o_p(1)
    \]
    if $(\delta/2)\kappa_n\geq7$. Since $\kappa_n\to\infty$ in regimes \ref{item:regime2} and \ref{item:regime3-1}, for any $\delta>0$ this holds for sufficiently large $n$. Thus, we have that
    \[
        \begin{split}
            \frac{1}{n}W_1(\mu_{G,\beta_1},\mu_{G,\beta_2})&\geq\frac{3}{4}-\frac{1}{\sqrt{2}}-o_p(1)\\
        &\geq2C
        \end{split}
    \]
    asymptotically almost surely. Hence, at least one of $\beta_1$ or $\beta_2$ satisfies the desired condition.
\end{proof}

\begin{proposition}[Continuous phase transition]
    In regime \ref{item:regime3-2}, for any constant $\delta>0$, any $\beta>0$ with $|\beta-\beta_c|<\delta$ satisfies
    \[
        \frac{1}{n}W_1(\mu_{G_n,\beta},\mu_{G_n,\beta_c})\leq\kappa\sqrt{12\delta}
    \]
    asymptotically almost surely.
\end{proposition}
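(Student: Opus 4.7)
For $\beta \geq \beta_c$ both measures lie in the low-temperature phase \eqref{eqn:low-temp-phase}, so Theorem~\ref{thm:gibbs-low-phase} and the triangle inequality through $\mu_{G_n,\infty}$ yield $W_1(\mu_{G_n,\beta},\mu_{G_n,\beta_c})/n = o_p(1)$, which is at most $\kappa\sqrt{12\delta}$ for any positive $\delta$. The substantive regime is $\beta = \beta_c - \delta' \in (\beta_c-\delta,\beta_c)$, where $\mu_{G,\beta}$ sits in the high-temperature phase \eqref{eqn:high-temp-phase}. The approach is to compute a tight bound on $\KL(\mu_{G,\beta},\mu_{G,\beta_c})$ and convert it to a Wasserstein bound via an entropy-transport inequality of Pinsker/Marton type.

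The log-likelihood ratio identity gives
\[
\KL(\mu_{G,\beta},\mu_{G,\beta_c}) \;=\; (\beta_c-\beta)\log\log n \cdot \E_{\mu_{G,\beta}}\!\left[\sum_v \mathsf{d}_T(1,v)\right] \;+\; \log Z_{G,\beta_c} - \log Z_{G,\beta}.
\]
The partition-function difference can be read off from Corollary~\ref{cor:logz-phase}: in regime (C.ii),
\[
\log Z_{G,\beta_c} - \log Z_{G,\beta} \;=\; -\delta'(d^*+\lambda_n)n\log\log n \;+\; n(\log x - x + 1) \;+\; o_p(n),
\]
where $x=\kappa_n(1-\beta) \to 1+\delta'\kappa$, and the Taylor expansion $\log x - x + 1 = -(\delta'\kappa)^2/2 + O((\delta'\kappa)^3)$ encodes the smoothness of the free energy at the continuous phase transition. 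Similarly, the expected energy under $\mu_{G,\beta}$ can be computed from the typical kernel size $m^*(\beta) \approx N_{d^*}/(1+\delta'\kappa)$ (Theorem~\ref{thm:logz-formula}), contributing $\delta'(d^*+\lambda_n)n\log\log n + O((\delta'\kappa)^2 n)$ to the KL. The leading $\delta'(d^*+\lambda_n)n\log\log n$ terms cancel between the two contributions, leaving only the second-order residue and yielding $\KL(\mu_{G,\beta},\mu_{G,\beta_c}) \leq 6\delta\kappa^2 n + o_p(n)$.

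To convert this KL bound to Wasserstein, a plain entropy-transport inequality (Theorem~\ref{thm:entropy-transport}) does not apply directly because $\mu_{G,\beta}$ is not itself a product measure. I would first approximate it by the mixture $\tilde{\mu}_{G,\beta} = \sum_A \mu_{G,\beta}(\varphi(T)=A)\,\mu_{G,\infty}|_A$, which is $o_p(n)$-close in $W_1$ by Corollary~\ref{cor:gibbs-w1} and whose conditional slices are genuine vertex-wise product measures. On each slice Theorem~\ref{thm:entropy-transport} applies with $\sigma^2=1$ under the Hamming cost, and the per-slice bound transfers to the mixture via a coupling of the kernel-size laws. The main technical obstacle is keeping the mixture-weight coupling cheap: this is handled by applying Lemma~\ref{lem:gibbs-small-kernel} (with a $\delta'$-dependent tolerance in its proof) to force the kernel size under $\mu_{G,\beta}$ to concentrate within a band of width $o(N_{d^*})$ around $m^*(\beta)$, and by observing that this band differs from the $\mu_{G,\beta_c}$-concentration near $N_{d^*}$ by only $\delta'\kappa N_{d^*} = o(n)$ vertices, so the cost of transporting mass between kernels is absorbed into $o_p(n)$. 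Putting everything together yields
\[
W_1(\mu_{G,\beta},\mu_{G,\beta_c}) \;\leq\; \sqrt{2n \cdot 6\delta\kappa^2 n} + o_p(n) \;=\; n\kappa\sqrt{12\delta} + o_p(n),
\]
which is the desired bound.
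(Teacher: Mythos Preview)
Your KL computation does capture the second-order cancellation that makes this transition continuous, but the conversion from KL to $W_1$ has a genuine gap. Theorem~\ref{thm:entropy-transport} requires the \emph{reference} measure to be a product measure, and $\mu_{G,\beta_c}$ is not one; you cannot substitute $\mu_{G,\infty}$ either, since $\mu_{G,\beta}$ puts mass outside the set of shortest path trees so $\KL(\mu_{G,\beta},\mu_{G,\infty})=\infty$. Your mixture workaround does not close this: the global quantity $\KL(\mu_{G,\beta},\mu_{G,\beta_c})$ has no decomposition into per-slice KLs between the product measures $\mu_{G,\infty}|_A$ and $\mu_{G,\infty}|_{A'}$ (whose supports differ, so their mutual KL is typically infinite as well). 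Separately, the claim that ``the cost of transporting mass between kernels is absorbed into $o_p(n)$'' is wrong in regime~(C.ii): the typical kernel under $\mu_{G,\beta}$ has size $N_{d^*}/(1+\delta'\kappa)$, differing from $N_{d^*}$ by $\Theta(\delta'\kappa N_{d^*})$ vertices, and each kernel vertex carries $\Theta(n/N_{d^*})=\Theta(\kappa^{-1}\log\log n)$ children, so re-parenting them costs $\Theta(\delta'\kappa n)$ in Hamming distance---the same order as the target bound, not $o_p(n)$.

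The paper's argument avoids the KL-to-$W_1$ conversion altogether. It shows directly that for $\beta\in(\beta_c-\delta,\beta_c)$ the kernel under $\mu_{G,\beta}$ still occupies at least a $\bigl(1-\tfrac{\epsilon^2}{3}\kappa_n^{-1}\bigr)$-fraction of $\Gamma_{d^*}$ with high probability; the key arithmetic is
\[
\frac{(1-\delta')\,\kappa_n^{-1}}{1-\Delta_n-\beta}\;\ge\;1-\frac{\epsilon^2}{3\kappa_n},
\qquad \epsilon=\kappa\sqrt{12\delta},\quad \delta'=\frac{\epsilon^2}{12\kappa}\wedge\frac12,
\]
obtained by tracking the approximate optimizer in \eqref{eqn:htp-gdiff}. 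This places $\mu_{G,\beta}$ in exactly the situation of the low-temperature proof (Theorem~\ref{thm:gibbs-low-phase}) with the constant $C$ there equal to $\epsilon^2/3$: one rounds to the support of $\mu_{G,\infty}$, checks the rounded measure has near-maximal entropy, and applies Corollary~\ref{cor:high-entropy-wasserstein} to get $W_1(\mu_{G,\beta},\mu_{G,\infty})/n \le \sqrt{2\epsilon^2/3}+o_p(1)<\epsilon$. The triangle inequality through $\mu_{G,\infty}$ finishes.
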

\begin{proof}
    We set $\delta=\frac{\epsilon^2}{12\kappa^2}$. The case $\beta_c\leq\beta<\beta_c+\delta$ is contained in \eqref{eqn:low-temp-phase}, so by Theorem~\ref{thm:gibbs-low-phase}
    \[
        \frac{1}{n}W_1(\mu_{G,\beta_c},\mu_{G,\infty})=o_p(1)
    \]
    and the conclusion follows. Thus, it remains to handle the case $\beta_c-\delta<\beta<\beta_c$. Define
    \[
        m_3:=m_\ell\vee\left\lfloor\left(1-\frac{\epsilon^2}{3}\kappa_n^{-1}\right)N_{d^*}\right\rfloor\,.
    \]
    We claim that
    \begin{equation}\label{eqn:cpt-subclaim}
        \mu_{G,\beta}(\{T:|\varphi(T)|\leq m_3\})\leq\exp(-\Omega(n))\,.
    \end{equation}
    Before we show this, we first see how \eqref{eqn:cpt-subclaim} yields the conclusion. By following the exact same argument with the proof of Theorem~\ref{thm:gibbs-low-phase}, \eqref{eqn:w1-small} gives
    \[
        \frac{1}{n}W_1(\mu_{G,\beta},\mu_{G,\infty})\leq\sqrt{\frac{2}{3}\epsilon^2}+o_p(1)<\epsilon
    \]
    asymptotically almost surely. Thus, it suffices to prove \eqref{eqn:cpt-subclaim}. Let
    \[
        \delta':=\frac{\epsilon^2}{12\kappa}\wedge\frac{1}{2}\,.
    \]
    This time, we follow the argument in the proof of Lemma~\ref{lem:gibbs-small-kernel}. From \eqref{eqn:htp-gdiff} we have
    \[
        \begin{split}
            g\left(\frac{\kappa_n^{-1}N_{d^*}}{1-\Delta_n-\beta}\right)-g\left(\frac{(1-\delta')\kappa_n^{-1}N_{d^*}}{1-\Delta_n-\beta}\right)&\geq(-\delta'-\log(1-\delta'))n-o_p(n)\\
            &=\Omega_p(n)
        \end{split}
    \]
    which yields
    \[
        \log\left(\mu_{G,\beta}\left(\left\{T:|\varphi(T)|\geq\frac{(1-\delta')\kappa_n^{-1}N_{d^*}}{1-\Delta_n-\beta}\right\}\right)\right)\leq-\Omega_p(n)\,.
    \]
    Since $\beta>\beta_c-\delta$ and $\kappa_n\to\kappa$, we have for sufficiently large $n$ that $\beta>1-\Delta_n-\kappa_n^{-1}-\delta$ and $\kappa_n<2\kappa$. Thus,
    \[
        \begin{split}
            \frac{(1-\delta')\kappa_n^{-1}}{1-\Delta_n-\beta} &\geq\frac{1-\delta'}{1+\delta\kappa_n}\\
            &\geq(1-\delta')(1-\delta\kappa_n)\\
            &\geq1-\delta'-\delta\kappa_n\\
            &\geq1-\frac{\epsilon^2}{3\kappa_n}\,.
        \end{split}
    \]
    This, together with Lemma~\ref{lem:large-kernel}, proves \eqref{eqn:cpt-subclaim}.
\end{proof}

\subsection{The Franz--Parisi potential}\label{sec:franz-parisi}

\subsubsection{Definitions and main results}
Fix a constant $\beta>0$. Consider a sequence of graphs $G_1,G_2,\cdots$ independently sampled from $G_n\sim\mathcal{G}(n,q_n)$ and a sequence of spanning trees $\{T_n\}$. We consider the non-asymptotic Franz--Parisi potential centered at $T_n$, defined by
\[
    \mathcal{F}_{G_n,T_n,\beta}^{\text{FP}}(r, \epsilon):=-\frac{1}{\beta n\log\log n}\log Z_{r,\epsilon}-(d_n^*+\lambda_n)
\]
where
\[
    Z_{r,\epsilon}:=\sum_{T':\mathsf{R}(T_n,T')\in[r-\epsilon,r+\epsilon]}\exp\left(-\overline{\beta}\sum_{v}\mathsf{d}_{T'}(1,v)\right)
\]
and
\[
    \mathsf{R}(T,T')=\frac{1}{|\overline{V}|-1}\sum_{v\in \overline{V}\setminus\{1\}}\mathbf{1}_{\mathsf{par}_T(v)\neq\mathsf{par}_{T'}(v)}
\]
is the normalized Hamming distance between the two (rooted) spanning trees. Similar to the free energy density, we subtract off the ground state energy. Note that $Z_{r,\epsilon}$ clearly depends on $G_n$, $T_n$, and $\beta$, but we drop the subscripts in the interest of readability.

Now assume further that $\{T_n\}$ are independently drawn from the Gibbs measures $T_n\sim\mu_{G_n,\beta}$. We define the (asymptotic) two-temperature Franz--Parisi potential by
\[
    \mathcal{F}_{\beta,\beta'}^{\text{FP}}(r):=\lim_{\epsilon\to0^+}\plim_{n\to\infty}\mathcal{F}_{G_n,T_n\beta'}^{\text{FP}}(r,\epsilon)
\]
and the one-temperature version
\[
    \mathcal{F}_{\beta}^{\text{FP}}(r):=\mathcal{F}_{\beta,\beta}^{\text{FP}}(r)
\]
whenever the limits exist. To compute the Franz--Parisi potential, we derive a slightly more general formula which holds for any sequence of trees $\{T_n\}$ with mild asymptotic conditions. Recall the definition of kernels in Definition~\ref{def:kernel}.

\begin{proposition}\label{prop:fpp-general}
    Suppose that $\Delta_n\to\Delta\in[0,1]$ and $\lambda_n\to\lambda\in[0,1]$. Suppose that $\{T_n\}$ is a sequence of random spanning trees such that
    \[
        \frac{|\varphi(T_n)|}{n}\pto r_1\,,\qquad\frac{|\mathsf{ch}_{T_n}(\varphi(T_n))|}{n}\pto1-r_1
    \]
    and
    \[
        \frac{|\{v\in\varphi(T_n):|\mathsf{par}_{G_n}(v)|=1\}|}{n}\pto r_2\,.
    \]
    Then the limit
    \[
        \lim_{\epsilon\to0}\plim_{n\to\infty}\mathcal{F}_{G_n,T_n,\beta}^{\text{FP}}(r,\epsilon)=f_{\lambda,\Delta}(r)
    \]
    exists and is a continuous function of $r$. Moreover, if $\beta\geq1-\Delta$, we have
    \[
        f_{\lambda,\Delta}(r)=\begin{dcases}
            -r_1-\lambda-\frac{\Delta}{\beta}+\left(1+\frac{\Delta}{\beta}\right) r&\text{if $r_1+\lambda\leq r\leq 1$,}\\
            -\frac{1-(1-\lambda-r_1)(1-\Delta)}{\beta}+\frac{1}{\beta} r&\text{if $r_1\leq r\leq r_1+\lambda$,}\\
            -\frac{1-(1-\lambda)(1-\Delta)}{\beta}+\frac{\Delta}{\beta} r&\text{if $r_2\leq r\leq r_1$,}\\
            r_2-\frac{1-(1-\lambda-r_2)(1-\Delta)}{\beta}-\left(1-\frac{1}{\beta}\right)r&\text{if $0\leq r\leq r_2$.}
        \end{dcases}
    \]
    and if $\beta\leq1-\Delta$, we have
    \[
        f_{\lambda,\Delta}(r)=\begin{dcases}
            1-r_1-\lambda-\frac{1}{\beta}+\frac{1}{\beta}r&\text{if $r_1\leq r\leq 1$.}\\
            1-\lambda-\frac{1}{\beta}+\left(\frac{1}{\beta}-1\right)r&\text{if $0\leq r\leq r_1$,}
        \end{dcases}
    \]
\end{proposition}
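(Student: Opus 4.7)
The plan is to establish Proposition~\ref{prop:fpp-general} by reducing the computation of $Z_{r,\epsilon}$ to a low-dimensional variational optimization over kernels, using the machinery built up in Section~\ref{sec:gibbs} and Section~\ref{sec:gibbs2}. First I would decompose the restricted partition function according to the kernel of the summand:
\[
Z_{r,\epsilon} \;=\; \sum_{A \in \mathcal{K}(G_n)} Z_{r,\epsilon}(A),
\qquad
Z_{r,\epsilon}(A) \;:=\! \sum_{\substack{T':\ \varphi(T')=A\\ \mathsf{R}(T_n,T')\in[r-\epsilon,r+\epsilon]}}\!\!\!
\exp\!\left(-\overline{\beta}\sum_v \mathsf{d}_{T'}(1,v)\right).
\]
Using Lemma~\ref{lem:large-kernel}, Proposition~\ref{prop:cond-gibbs-dv}, and Proposition~\ref{prop:cond-gibbs-w1}, I would restrict attention to kernels $A$ with $|A|\ge m_\ell$ and to the subset $\overline{E}_{G,A}$ of spanning trees with near-minimum energy given $A$, on which the conditional Gibbs measure is approximated in KL and $W_1$ by the product measure $\mu_{G,\infty}|_A$.

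Next I would compute the mean overlap $\bar R(A) := \mathbb{E}_{T'\sim \mu_{G,\infty}|_A}[\mathsf{R}(T_n,T')]$. Since $\mu_{G,\infty}|_A$ samples each parent independently, $\bar R(A)$ decomposes vertex-wise. Writing $K := \varphi(T_n)$ and $F := \{v \in K : |\mathsf{par}_{G}(v)|=1\}$, the only vertex classes that contribute to leading order are (i)~$v \in K\cap A \cap F$ (always agree, contributing $|K\cap A\cap F|/n$), and (ii)~$v \in \mathsf{ch}_{T_n}(K)\setminus A$ whose fixed parent $\mathsf{par}_{T_n}(v)$ lies in $A$ (contributing $1/|\mathsf{N}_A(v)|$); the other classes either give disjoint parent sets (contributing $0$) or vanish because parent sets have size $\omega(1)$. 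Using Proposition~\ref{prop:conc}, Theorem~\ref{thm:main1-tight}, and Lemma~\ref{lem:multi-ztb-conc}, one shows that $\bar R(A)$ is a smooth function of the macroscopic statistics $|A|/n$, $|K\cap A|/n$, and $|K\cap A\cap F|/n$ up to $o(1)$ error, uniformly in $A$. McDiarmid's bounded-differences inequality, combined with a union bound over the $\exp(O(n/(\log\log n)^2))$ admissible kernels (cf.\ \eqref{eqn:nchooseml}), shows that $\mathsf{R}(T_n,T')$ concentrates around $\bar R(A)$ with error $o(1)$.

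Combining these steps with Proposition~\ref{prop:fixed-kernel}, which gives $\log(Z_{G,\beta}|_A) = \sum_{v\in A}\log|\mathsf{par}_G(v)| + (n-|A|)\log(|A|q) + \overline{\beta}|A| - \overline{\beta}(d^*+1)n + o(n)$, the computation reduces to
\[
-\frac{1}{\beta n \log\log n}\log Z_{r,\epsilon} - (d_n^*+\lambda_n) \;\approx\; \min_{A:\ \bar R(A)\in[r-\epsilon,r+\epsilon]} \Big(\text{explicit functional of }(|A|,|K\cap A|, |K\cap A\cap F|)\Big),
\]
and to leading order on the $\log\log n$ scale the contribution $\sum_{v\in A}\log|\mathsf{par}_G(v)|$ collapses to $\Delta\log\log(n)\cdot |A|$ (via \eqref{eqn:reg2}). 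The resulting optimization is linear in the three macroscopic parameters subject to the linear overlap constraint, so the minimum is attained at a vertex of the feasible polytope; the four qualitative geometric configurations of $A$ relative to $K$ and $F$---namely $A\supseteq K$; $F\subseteq A\subsetneq K$; $A\subseteq K$ with $A\supseteq F$; and $A\cap F\subsetneq F$---yield exactly the four linear pieces in the low-temperature formula and the two pieces in the high-temperature formula, with the breakpoints $r_2, r_1, r_1+\lambda$ corresponding to the transitions between active constraints. Continuity of $f_{\lambda,\Delta}$ then follows from checking matching at these breakpoints, which it does by construction.

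The main obstacle is the overlap-concentration step: $\bar R(A)$ depends on $A$ in a genuinely combinatorial way (through which specific parents of $T_n$ happen to lie in $A$), so to get a clean formula in terms of only $|A|, |K\cap A|, |K\cap A\cap F|$ one must show that the ``random'' contribution from $\mathsf{ch}_{T_n}(K)\setminus A$ self-averages. This requires an entropy-transport style estimate to rule out correlated kernels $A$ that would pack disproportionately many $\mathsf{par}_{T_n}(v)$ into $A$, together with a refinement of Lemma~\ref{lem:multi-ztb-conc} that is simultaneously uniform over all $A$ and over the index set of vertices being averaged. Once this uniform concentration is in hand, the rest of the argument is a mechanical linear-programming exercise that yields the stated piecewise-linear formulas.
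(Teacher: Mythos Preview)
Your decomposition by kernel $A$ and the plan to reduce to a low-dimensional optimization is natural, but there is a genuine gap in the overlap step that breaks the argument. You restrict to kernels with $\bar R(A)\in[r-\epsilon,r+\epsilon]$ via concentration, then optimize $\log(Z_{G,\beta}|_A)$ over this set. The problem is that the dominant contribution to $Z_{r,\epsilon}$ can come from a kernel $A$ whose \emph{typical} overlap $\bar R(A)$ is far from $r$, but for which the overlap equals $r$ with probability $e^{-\Theta(n)}$. Since the Franz--Parisi potential is measured on the $n\log\log n$ scale, an $e^{-\Theta(n)}$ large-deviation cost is invisible, so these atypical-overlap contributions cannot be discarded. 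Concretely, take $\Delta=0$, $\lambda\in(0,1)$, $\beta>1$ and $r$ in the flat piece $[r_2,r_1]$. The paper shows the optimal kernel is $A=\Gamma_{d^*}$ for \emph{every} such $r$, with the overlap set $R$ varying inside $\varphi(T_n)$. But $\bar R(\Gamma_{d^*})=r_2+(r_1-r_2)\,\E[1/X\mid X\ge 2]$ is a single number strictly inside $(r_2,r_1)$; for $r>\bar R(\Gamma_{d^*})$ no kernel has $\bar R(A)=r$ at all (including all of $K$ in $A$ already maximizes $\bar R$), so your constrained optimization is infeasible, and for $r<\bar R(\Gamma_{d^*})$ the only way to hit $\bar R(A)=r$ is to drop $\Theta(n)$ vertices of $K$ from $A$, costing $\Theta(n\log\log n)$ in $\log(Z_{G,\beta}|_A)$. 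Either way you get the wrong (strictly larger) value of the potential on this piece.

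Relatedly, your claim that $\bar R(A)$ is a function only of $(|A|,|K\cap A|,|K\cap A\cap F|)$ is false when $\Delta=0$: vertices in $K\cap A\setminus F$ contribute $1/|\mathsf{par}_G(v)|=\Theta(1)$ each, not $o(1)$, so $\bar R(A)$ depends on \emph{which} such vertices are in $A$. And the union bound you cite from \eqref{eqn:nchooseml} covers only $\exp(o(n))$ kernels, whereas in the low-temperature phase the relevant kernels have size $\Theta(n)$ and number $e^{\Theta(n)}$, so McDiarmid plus union bound cannot give $o(1)$ concentration uniformly. The paper sidesteps all of this by conditioning on the overlap \emph{set} $R$ first (so the overlap is deterministic), characterizing the kernels compatible with a given $R$ via \eqref{eqn:tp-ker-sandwich}, and then optimizing the resulting linear functional over the partition $(u_1,u_2,u_3,u_4)$ of $R$. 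The four pieces of the formula arise from the greedy filling order of these four buckets, not from four geometric configurations of $A$ relative to $K$ and $F$; in particular, the two middle pieces both use $A=\Gamma_{d^*}$. Your approach could be repaired by replacing ``overlap concentrates at $\bar R(A)$'' with ``the overlap \emph{support} given $A$ is a certain interval, and within it the large-deviation rate is $O(n)$,'' but carrying that out amounts to recovering the paper's $R$-first parametrization.
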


We postpone the proof to Section~\ref{sec:fpp-general} and first specialize this result to the asymptotic case where $\{T_n\}$ are drawn from the Gibbs measures. Note that the cases are set to match Corollary~\ref{cor:free-energy} and Corollary~\ref{cor:logz-phase}.

\begin{corollary}\label{cor:fpp-general}
    Suppose that $\Delta_n\to\Delta\in[0,1]$ and $\lambda_n\to\lambda\in[0,1]$.

    \begin{enumerate}[label=(\Alph*)]
        \item\label{item:fpp-regime1} If $\Delta=1$, then $\lambda=0$ and
        \[
            \mathcal{F}_{\beta}^{\text{FP}}(r)=-\frac{1}{\beta}+\frac{1}{\beta}r
        \]
        for all $\beta>0$.

        \item\label{item:fpp-regime2} We further split the case $\Delta\in[0,1)$ and $\lambda\in[0,1)$ into the following.
        \begin{enumerate}[label=(\Alph{enumi}.\roman*), ref=(\Alph{enumi}.\roman*), align=left]
            \item If $\Delta\in[0,1)$ and $\lambda=0$, then one of the following holds.
            \begin{itemize}
                \item If $\beta>1-\Delta$, then
                \[
                    \mathcal{F}_{\beta}^{\text{FP}}(r)=-\frac{\Delta}{\beta}+\frac{\Delta}{\beta}r\,.
                \]
    
                \item If $\beta=1-\Delta$, then $\mathcal{F}_{\beta}(r)$ may not exist. Instead, we have
                \[
                    \mathcal{F}_{\beta+\delta,\beta}^{\text{FP}}(r)=-\frac{\Delta}{\beta}+\frac{\Delta}{\beta}r=-\frac{\Delta}{1-\Delta}+\frac{\Delta}{1-\Delta}r
                \]
                for all $\delta>0$ and
                \[
                    \mathcal{F}^{\text{FP}}_{\beta-\delta,\beta}(r)=1-\frac{1}{\beta}+\frac{1}{\beta}r=\frac{\Delta}{1-\Delta}+\frac{1}{1-\Delta}r
                \]
                for all $0<\delta<1-\Delta$.
    
                \item If $\beta<1-\Delta$, then
                \[
                    \mathcal{F}^{\text{FP}}_{\beta}(r)=1-\frac{1}{\beta}+\frac{1}{\beta}r\,.
                \]
            \end{itemize}
    
            \item If $\Delta=0$ and $\lambda\in(0,1)$, then one of the following holds.
            \begin{itemize}
                \item If $\beta>1$, then
                \[
                    \mathcal{F}^{\text{FP}}_{\beta}(r)=\begin{dcases}
                        -\frac{1}{\beta}+\frac{1}{\beta}r&\text{if $r\geq1-\lambda$,}\\
                        -\frac{\lambda}{\beta}&\text{if $\lambda\log(1/\lambda)\leq r\leq 1-\lambda$,}\\
                        \frac{-\lambda+(\beta-1)\lambda\log(1/\lambda)}{\beta}-\frac{\beta-1}{\beta}r&\text{if $r\leq \lambda\log(1/\lambda)$.}
                    \end{dcases}
                \]
    
                \item If $\beta=1$, then $\mathcal{F}_{\beta}(r)$ may not exist but we have
                \[
                    \mathcal{F}^{\text{FP}}_{\beta+\delta,\beta}(r)=\begin{dcases}
                        -\frac{1}{\beta}+\frac{1}{\beta}r=-1+r&\text{if $r\geq 1-\lambda$,}\\
                        -\frac{\lambda}{\beta}=-\lambda&\text{if $r\leq 1-\lambda$}
                    \end{dcases}
                \]
                for all $\delta>0$ and
                \[
                    \mathcal{F}^{\text{FP}}_{\beta-\delta,\beta}(r)=1-\lambda-\frac{1}{\beta}+\frac{1}{\beta}r=-\lambda+r
                \]
                for all $0<\delta<1$.
    
                \item If $\beta<1$, then
                \[
                    \mathcal{F}^{\text{FP}}_{\beta}(r)=1-\lambda-\frac{1}{\beta}+\frac{1}{\beta}r\,.
                \]
            \end{itemize}
        \end{enumerate}

        \item\label{item:fpp-regime3} If $\lambda=1$, then $\Delta=0$ and
        \[
            \mathcal{F}^{\text{FP}}_{\beta}(r)=-\frac{1}{\beta}+\frac{1}{\beta}r
        \]
        for all $\beta>0$.
    \end{enumerate}
\end{corollary}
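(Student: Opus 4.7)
The plan is to apply Proposition~\ref{prop:fpp-general} with $T_n$ sampled from the appropriate Gibbs measure, compute the parameters $r_1$ and $r_2$ for that measure, and then simplify the piecewise formula case by case. For $r_1 = \lim |\varphi(T_n)|/n$, the analysis of Section~\ref{sec:gibbs-w1} already gives everything we need: in the low-temperature phase \eqref{eqn:low-temp-phase}, Lemma~\ref{lem:gibbs-low-apx-unif} forces $|\varphi(T_n)| \geq (1-o(1))N_{d^*}$, which combined with $N_{d^*}/n \pto 1-\lambda$ yields $r_1 = 1-\lambda$; in the high-temperature phase \eqref{eqn:high-temp-phase}, Lemma~\ref{lem:gibbs-small-kernel} caps $|\varphi(T_n)|$ at $O(n/\log\log n)$, so $r_1 = 0$. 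In regime \ref{item:fpp-regime3} where $\lambda = 1$, Theorem~\ref{thm:main1-tight} gives $N_{d^*} = o(n)$, so $r_1 = 0$ for every $\beta > 0$. At the critical temperature $\beta = 1-\Delta$ of regime \ref{item:fpp-regime2} we use instead the two-temperature potential $\mathcal{F}^{\text{FP}}_{\beta\pm\delta,\beta}$: sampling $T_n$ at $\beta+\delta$ (resp. $\beta-\delta$) puts us safely in the low (resp. high) T phase and reduces to the previous computation.

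The auxiliary children condition $|\mathsf{ch}_{T_n}(\varphi(T_n))|/n \pto 1 - r_1$ follows in the low T phase from Theorem~\ref{thm:gibbs-low-phase}, which lets us replace $\mu_{G_n,\beta}$ by $\mu_{G_n,\infty}$; under the latter $\varphi(T_n) = \Gamma_{d^*}$ and its children in the tree are exactly $\Gamma_{d^*+1}$, of proportion $\lambda = 1-r_1$. In the high T phase, Proposition~\ref{prop:cond-gibbs-dv} combined with Corollary~\ref{cor:gibbs-w1} forces almost all non-kernel vertices outside $\Gamma_{\leq d^*-1}$ to have tree-distance $d^*+1$, placing their parents in the kernel and giving $1 = 1-r_1$. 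The third parameter $r_2$ is computed as follows: conditional on $\Gamma_{\leq d^*-1}$, the counts $(|\mathsf{par}_G(v)|)_{v\in\Gamma_{d^*}}$ are i.i.d.\ $\ZTB(N_{d^*-1},q)$, and by Proposition~\ref{prop:conc} plus Poisson approximation the probability that this equals $1$ converges to $\lambda\log(1/\lambda)/(1-\lambda)$ in sub-case B.ii, and to $0$ whenever $qN_{d^*-1} \to \infty$ (all other low T regimes). Multiplying by the kernel proportion gives $r_2 = \lambda\log(1/\lambda)$ in regime B.ii low T and $r_2 = 0$ in every other regime (trivially so in the high T phase, where $r_1 = 0$ forces $r_2 = 0$).

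Once $r_1$, $r_2$, $\lambda$, $\Delta$ are in hand, the proof reduces to substitution into the piecewise formula of Proposition~\ref{prop:fpp-general} and arithmetic simplification. Many sub-intervals collapse: in regime \ref{item:fpp-regime1} only the $[r_2,r_1] = [0,1]$ branch is nontrivial; in regime B.ii low T the $[r_2,r_1]$ and $[0,r_2]$ branches both evaluate to $-\lambda/\beta$, producing the flat plateau recorded in the corollary; and the critical-temperature identities follow from the fact that the two formulas of Proposition~\ref{prop:fpp-general} agree at $\beta = 1-\Delta$ for each choice of $r_1$. The verification of each sub-case is tedious but mechanical.

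The main obstacle is the step computing $r_2$: while $r_2$ is morally governed by the marginal law of $|\mathsf{par}_G(v)|$ on $\Gamma_{d^*}$, the kernel $\varphi(T_n)$ is itself a random, correlated subset. In the low T phase this is manageable because Lemma~\ref{lem:gibbs-low-apx-unif} (together with $\kappa_n \to \infty$ in the only sub-case where $r_2 \ne 0$, namely B.ii) gives $|\Gamma_{d^*} \setminus \varphi(T_n)| = o(n)$, so the $1$-Lipschitz counting functional $v \mapsto \mathbf{1}_{|\mathsf{par}_G(v)| = 1}$ produces the same $o(n)$-close value on $\varphi(T_n)$ as on $\Gamma_{d^*}$, where it concentrates around $N_{d^*}\,\Pr(\ZTB(N_{d^*-1},q)=1)$ by standard large-deviations arguments. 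In the high T phase the bound $|\varphi(T_n)| = o(n)$ makes $r_2 = 0$ automatic, and the critical case is handled by the two-temperature device without new analysis.
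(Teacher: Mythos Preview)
Your proposal is correct and follows essentially the same route as the paper: identify $r_1$ and $r_2$ via the kernel-size lemmas (Lemma~\ref{lem:gibbs-low-apx-unif} in the low-temperature phase, Lemma~\ref{lem:gibbs-small-kernel} in the high-temperature phase) together with the Poisson approximation for $|\mathsf{par}_G(v)|$, then substitute into Proposition~\ref{prop:fpp-general}. The paper's proof is in fact more terse than yours---it does not spell out the children condition, the handling of the critical temperature via the two-temperature potential, or the issue that $\varphi(T_n)$ is a correlated subset when computing $r_2$---so your write-up is if anything more complete. One small slip: in regime B.ii low temperature, the $[0,r_2]$ branch is not flat (it has slope $-(\beta-1)/\beta$), so your sentence that both branches ``evaluate to $-\lambda/\beta$'' is misleading; they only agree at the boundary $r=r_2$, and the corollary records three distinct pieces there, not two.
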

\begin{proof}
    This is immediate from Proposition~\ref{prop:fpp-general} by observing the following.
    \begin{itemize}
        \item Recall that by Theorem~\ref{thm:main1-tight} $N_{d^*}/n\pto1-\lambda$.
        
        \item If $\{T_n\}$ are sampled in the low temperature phase, Lemma~\ref{lem:gibbs-low-apx-unif} and \eqref{eqn:m1-asymp} imply $r_1=1-\lambda$. Also conditioned on $\Gamma_{\leq d^*-1}$, for each $v\in V\setminus\Gamma_{\leq d^*-1}$, we have $|\mathsf{par}_G(v)|\sim\Binom(N_{d^*-1},q)$, so $\Pr(|\mathsf{par}_G(v)|=1\mid \Gamma_{\leq d^*-1})=qN_{d^*-1}(1-q)^{N_{d^*}-1}\pto\lambda\log(1/\lambda)$ by $N_{d^*-1}/(nq)^{d^*-1}\pto1$ from Proposition~\ref{prop:conc}. This gives $r_2=\lambda\log(1/\lambda)$ (which is zero when $\lambda=0$).

        \item If $\{T_n\}$ are sampled in the high temperature phase, Lemma~\ref{lem:gibbs-small-kernel} implies that $r_1=0$, which automatically gives $r_2=0$.
    \end{itemize}
\end{proof}

We complement the case (B.ii) by further investigating the ``flat'' region for $\lambda\log(1/\lambda)<r<1-\lambda$. As it turns out, if we zoom in this region, it is a smooth convex basin. This is proved in the following proposition. It might be helpful to revisit the formula in Corollary~\ref{cor:logz-phase}.

\begin{proposition}\label{prop:fpp-special}
    Suppose that $\Delta_n\to\Delta=0$, $\lambda_n\to\lambda\in(0,1)$, $\beta>1$, and $\lambda\log(1/\lambda)<r<1-\lambda$. Letting $\bar{r}=\frac{r-\lambda\log(1/\lambda)}{1-\lambda-\lambda\log(1/\lambda)}\in(0,1)$ be a normalized value of $r$, we have
    \[
        \lim_{\epsilon\to0}\plim_{n\to\infty}\left(\left(\mathcal{F}_{G_n,T_n\beta}(r,\epsilon)+\frac{\lambda_n}{\beta}\right)\log\log n+\frac{\lambda\log((1-\lambda)\alpha_n)+\Psi_0(\lambda)}{\beta}\right)=\frac{1}{\beta}I_\lambda(\bar{r})
    \]
    where
    \[
        I_\lambda(x):=\sup_{t\in\mathbb{R}}(tx-\Lambda(t;\lambda))\,,\qquad0\leq x\leq 1
    \]
    is a convex function defined by the Legendre transform of
    \[
        \Lambda(t;\lambda):=\E\left[\log\left(1-\frac{1}{X}+\frac{e^t}{X}\right)\,\middle|\,X\geq2\right]\,,\qquad X\sim\Pois(\log(1/\lambda))\,.
    \]
\end{proposition}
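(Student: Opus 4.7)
My plan is to reduce the constrained partition function $Z_{n,r,\epsilon}$ to a count of shortest path trees with prescribed overlap with $T_n$, and then apply Cram\'er's theorem to a sum of independent Bernoulli indicators. Since $\beta > 1 = 1 - \Delta$ places us strictly inside the low-temperature phase \eqref{eqn:low-temp-phase}, Theorem~\ref{thm:gibbs-low-phase} and the kernel analysis of Section~\ref{sec:gibbs-low-phase} show that $\mu_{G_n,\beta}$ concentrates on spanning trees with kernel $\Gamma_{d_n^*}$ and minimum energy $E_{G_n}$; applied uniformly over the overlap window and the centering tree $T_n$, this gives at the $\log\log n$-zoomed scale
\[
    \log Z_{n,r,\epsilon} \;=\; -\overline{\beta}\,E_{G_n} \;+\; \log N_{r,\epsilon} \;+\; o_p(n/\log\log n),
\]
where $N_{r,\epsilon}$ is the number of shortest path trees $T'$ of $\overline{G}_n$ with $\mathsf{R}(T_n,T') \in [r-\epsilon, r+\epsilon]$, and $T_n$ may itself be taken to be a uniformly random shortest path tree with negligible cost.

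Given this reduction, I invoke the product structure of the uniform measure on shortest path trees (Section~\ref{sec:dagtree}) to decompose the overlap count. Vertices in $\Gamma_{\leq d_n^*-1}$ contribute $o_p(n)$; vertices in $\Gamma_{d_n^*+1}$ each have $\Theta(\log n)$ parent choices by Proposition~\ref{prop:conc}, hence contribute $O_p(n/\log n)$ to the agreement count; vertices $v \in \Gamma_{d_n^*}$ with $|\mathsf{par}_{G_n}(v)| = 1$ agree deterministically and together produce $\lambda_n\log(1/\lambda_n)\,n + o_p(n)$, after combining Proposition~\ref{prop:conc} with $\Pr(\ZTP(\mu) = 1) = \mu/(e^\mu - 1)$; and the remaining randomness is
\[
    S \;:=\; \sum_{v \in A} \mathbf{1}_{\mathsf{par}_{T_n}(v) = \mathsf{par}_{T'}(v)}
\]
over the set $A := \{v \in \Gamma_{d_n^*} : |\mathsf{par}_{G_n}(v)| \geq 2\}$, whose summands are conditionally independent $\Bernoulli(1/|\mathsf{par}_{G_n}(v)|)$ given the graph and $T_n$. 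Matching the deterministic contributions against $(|\overline{V}_n|-1)r$ pins $S/|A|$ at $\bar{r} + o_p(1)$. The parent-set sizes $\{|\mathsf{par}_{G_n}(v)|\}_{v \in \Gamma_{d_n^*}}$ are conditionally i.i.d.\ $\ZTB(N_{d_n^*-1},q)$ with law converging in total variation to $\ZTP(\log(1/\lambda))$, so the averaged log-moment generating function of the $\Bernoulli$ summands tends to $\Lambda(t;\lambda)$, and Cram\'er's theorem (in its non-i.i.d.\ form) yields $\log N_{r,\epsilon} = \log\#\{\text{shortest path trees of }\overline{G}_n\} - |A|\,I_\lambda(\bar{r}) + o_p(n) + o_\epsilon(n)$. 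Plugging in Theorem~\ref{thm:log-shortest-path-trees-acc} for the unconstrained log-count and Theorem~\ref{thm:ground-state-energy} for $E_{G_n}$, the shifts $-\lambda_n/\beta$ and $+(\Psi_0(\lambda) + \lambda\log((1-\lambda)\alpha_n))/\beta$ on the proposition's left-hand side exactly cancel the leading and subleading pieces of the baseline free energy density, leaving the large-deviation term scaled by $\log\log n$ as the only surviving limit; continuity of $I_\lambda$ on $(0,1)$ handles $\epsilon \to 0^+$.

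The principal obstacle is showing the reduction in the first paragraph is tight at the $\log\log n$-zoomed scale, since the bounds in Section~\ref{sec:gibbs} were developed only at the $o_p(n)$ scale in $\log Z$. I would strengthen Lemma~\ref{lem:gibbs-low-apx-unif} and its proof to an exponential estimate $\mu_{G_n,\beta}(|\varphi(T)| < N_{d_n^*}) \leq e^{-\Omega(n)}$, which is natural in the strict interior of the low-temperature phase because the free-energy gap between kernel size $N_{d_n^*}$ and any strictly smaller kernel is linear in $n$; and I would correspondingly upgrade Proposition~\ref{prop:cond-gibbs-w1} to show that $\mu_{G_n,\beta}|_{\Gamma_{d_n^*}}$ differs from $\mu_{G_n,\infty}|_{\Gamma_{d_n^*}}$ by only exponentially small total variation. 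With these in hand the reduction incurs error $e^{-\Omega(n)}$, safely negligible at the $\exp(\Theta(n))$ scale at which Cram\'er operates, and the remaining task is careful bookkeeping of the normalisation conventions in $\Lambda(t;\lambda)$ and $\bar{r}$ against the proposition's statement, in particular accounting for the factor $|A|/n \pto \Pr(\Pois(\log(1/\lambda)) \geq 2) = 1 - \lambda - \lambda\log(1/\lambda)$ that appears naturally in the Cram\'er rate.
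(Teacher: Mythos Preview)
Your overall strategy—reduce to counting shortest path trees with prescribed overlap, then apply a Cram\'er-type large deviation principle to the Bernoulli indicators over $\{v \in \Gamma_{d^*}:|\mathsf{par}_G(v)| \geq 2\}$—matches the paper's approach, including the identification of $\Lambda(t;\lambda)$ and the role of the normalizing factor $1-\lambda-\lambda\log(1/\lambda)$.

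However, your proposed justification for the reduction is incorrect: both strengthenings you propose are \emph{false}. A back-of-envelope computation shows that removing a single vertex $v^*$ from the kernel costs only $(\beta-1)\log\log n + O(1)$ in free energy, while the entropy of choosing $v^*$ is $\log N_{d^*} \approx \log n$. Hence $Z_{G,\beta}\|_{N_{d^*}-1}/Z_{G,\beta}\|_{N_{d^*}} \asymp n(\log n)^{1-\beta} \to \infty$, so in fact $\mu_{G_n,\beta}(|\varphi(T)| < N_{d_n^*}) \to 1$, not $e^{-\Omega(n)}$. The same arithmetic—$\Theta(n)$ single-vertex energy bumps each costing a factor $(\log n)^{-\beta}$—shows that within kernel $\Gamma_{d^*}$ the exact-SPT fraction also tends to zero, so the total-variation upgrade of Proposition~\ref{prop:cond-gibbs-w1} fails as well. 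The ``linear in $n$'' free-energy gap you invoke only materializes when the kernel shrinks by $\Theta(n/\log\log n)$, not by $O(1)$.

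The paper avoids this trap by never reducing to exact SPTs. Instead it carries the conditional partition function analysis into the overlap-constrained setting: Lemma~\ref{lem:logzra-ub} gives $\log(Z_R|_A)$ at the $o(n)$ scale uniformly in the kernel $A$ and the overlap set $R$, and optimizing over kernel size (which for $Q\subseteq R\subseteq\Gamma_{d^*}$ lands at $A=\Gamma_{d^*}$, just as in the unconstrained low-temperature analysis) yields
\[
\log Z_R = \sum_{v\in\Gamma_{d^*}\setminus Q\setminus U_2}\log(|\mathsf{par}_G(v)|-1) + \lambda_n n\log((1-\lambda_n)nq) - \overline{\beta}(d^*+\lambda_n)n + o_p(n).
\]
This expression happens to coincide with the pure-SPT contribution at that precision—so your reduction formula $\log Z_{r,\epsilon} = -\overline{\beta}E_{G_n} + \log N_{r,\epsilon} + o_p(n)$ is in fact correct—but the route to it is through the $Z_R|_A$ machinery of Section~\ref{sec:fpp-general}, not through any exponential concentration statement. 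From there the sum over $U_2$ is factored exactly as you describe and handled by the paper's in-house random-environment Cram\'er theorem, Proposition~\ref{prop:ldp}.
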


The proof of Proposition~\ref{prop:fpp-special}, outlined in Section~\ref{sec:fpp-special}, borrows techniques from large deviations theory (see, e.g., \cite{dembo2009large}). The function $I_{\lambda}$ is called a \emph{rate function}, which encodes how likely the overlap $r$ will occur if we sample two independent spanning trees from $\mu_{G,\beta}$ in the low temperature phase. This can also be understood as a relative entropy of $\mu_{G,\beta}$ conditioned on the overlap, with respect to $\mu_{G,\beta}$. For instance, for each vertex $v\in\Gamma_{d^*}$, if we independently sample two parents $\mathsf{par}_T(v)$ from $\mathsf{par}_G(v)$, then they overlap with probability $1/|\mathsf{par}_G(v)|$. Since the law of $|\mathsf{par}_G(v)|$ is approximately $\Pois(\log(1/\lambda))$, among the vertices with $|\mathsf{par}_G(v)|\geq2$, we expect that the typical overlap will be approximately
\[
    \bar{r}^*=\E[1/X\mid X\geq2]\,,\qquad X\sim\Pois(\log(1/\lambda))\,.
\]
With simple calculus, one can verify that this is where the line $y=\bar{r}^*t$ with slope $\bar{r}^*$ is tangent to the convex function $y=\Lambda(t;\lambda)$, hence $I_{\lambda}(\bar{r}^*)=0$. This corresponds to the bottom of the zoomed-in region, in which case the Franz--Parisi potential agrees with the free energy density (cf. Corollary~\ref{cor:logz-phase}).

One may notice that $\mathcal{F}_\beta(r)$ in the regime \ref{item:regime2-1} is also flat if $\Delta=0$ and $\beta>1$. This is fundamentally the same phenomenon with \ref{item:regime2-2}, and a similar analysis can be used to show that in this case the Franz--Parisi potential is increasing in $r$ in the entire regime if we ``zoom in''. Intuitively, \ref{item:regime2-1} corresponds to the case where a typical vertex has distance $d^*$ and has an unbounded number of parent choices, so any two independent samples are unlikely to have nontrivial overlap. We do not provide detailed analysis for this case.

\subsubsection{Proof of Proposition~\ref{prop:fpp-general}}\label{sec:fpp-general}
As usual, we drop the subscripts and write $T=T_n$ and $G=G_n$ for conciseness. Define
\[
    Q:=\{v\in\varphi(T):|\mathsf{par}_{G}(v)|=1\}\,.
\]
and recall that
\[
    Z_{r,\epsilon}:=\sum_{T':\mathsf{R}(T,T')\in[r-\epsilon,r+\epsilon]}\exp\left(-\overline{\beta}\sum_{v}\mathsf{d}_{T'}(1,v)\right)\,.
\]
Also, for a set $R\subseteq V\setminus\{1\}$, define
\[
    \mathcal{D}(T, R)=\{T':\mathsf{par}_T(v)=\mathsf{par}_{T'}(v)\iff v\in R\}
\]
which is the collection of spanning trees whose overlap with $T$ is exactly $R$.

Our goal is to analyze
\[
    \mathcal{F}_{G,T,\beta}(r,\epsilon)=-\frac{1}{\beta n\log\log n}\log Z_{r,\epsilon}-(d_n^*+\lambda_n)\,.
\]
We first write
\begin{equation}\label{eqn:zreps}
    Z_{r,\epsilon}=\sum_{R:|R|/(\bar{n}-1)\in[r-\epsilon,r+\epsilon]}\sum_{T'\in\mathcal{D}(T,R)}\exp\left(-\overline{\beta}\sum_{v}\mathsf{d}_{T'}(1,v)\right)
\end{equation}
where for convenience we denote $\bar{n}=|\overline{V}|$. Recall that the key notion in analyzing $\log Z_{G,\beta}$ is the kernel of a spanning tree, because the sum over a fixed kernel is easy to approximate (cf. Proposition~\ref{prop:fixed-kernel}). We apply the same idea to analyze $\log Z_{r,\epsilon}$. However, if the overlap of $T'$ and $T$ is fixed to be $R$, not all sets in $\mathcal{K}(G)$ can be the kernel of $T'$.
\begin{itemize}
    \item If $v\in R\cap\varphi(T)$, then $\mathsf{par}_{T'}(v)=\mathsf{par}_T(v)\in\Gamma_{d^*-1}$, so $v$ must be in the kernel of $T'$.
    \item If $v\in R\cap(\Gamma_{d^*}\setminus\varphi(T))$, then $\mathsf{par}_{T'}(v)=\mathsf{par}_T(v)\notin\Gamma_{d^*-1}$, so $v$ must not be in the kernel of $T'$.
    \item If $v\in Q\setminus R$, then $\mathsf{par}_{T'}(v)\neq\mathsf{par}_T(v)\in\Gamma_{d^*-1}$ but $|\mathsf{par}_G(v)|=1$ so $\mathsf{par}_T(v)$ is the only neighbor of $v$ in $\Gamma_{d^*-1}$. Hence, $v$ must not be in the kernel of $T'$.
    \item For all the other vertices in $\Gamma_{d^*}$, $\mathsf{par}_{T'}(v)\neq\mathsf{par}_T(v)$ but either $v\notin\varphi(T)$ or $|\mathsf{par}_G(v)|>1$, so $v$ may or may not be in the kernel of $T'$.
\end{itemize}
In fact, the kernels of trees in $\mathcal{D}(T,R)$ can be characterized by these properties.
\begin{lemma}
    The kernel $A=\varphi(T')$ of any $T'\in\mathcal{D}(T,R)$ satisfies
    \begin{equation}\label{eqn:tp-ker-sandwich}
        R\cap\varphi(T)\subseteq A\subseteq (R\cap\varphi(T))\cup(\Gamma_{d^*}\setminus R\setminus Q)\,.
    \end{equation}
    Conversely, for any $A\in\mathcal{K}(G)$ satisfying \eqref{eqn:tp-ker-sandwich} there is a $T'\in\mathcal{D}(T,R)$ such that $A=\varphi(T')$.
\end{lemma}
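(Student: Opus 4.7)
For the forward direction, the plan is a direct case analysis on each $v \in \Gamma_{d^*}$. If $v \in R \cap \varphi(T)$, then $\mathsf{par}_{T'}(v) = \mathsf{par}_T(v) \in \Gamma_{d^*-1}$ (the last inclusion because $v \in \varphi(T)$), so $v \in \varphi(T') = A$; this gives the lower bound $R \cap \varphi(T) \subseteq A$. For the upper bound, suppose $v \in A$, so $v \in \Gamma_{d^*}$ and $\mathsf{par}_{T'}(v) \in \Gamma_{d^*-1}$. Split into $v \in R$ or $v \notin R$. If $v \in R$, then $\mathsf{par}_T(v) = \mathsf{par}_{T'}(v) \in \Gamma_{d^*-1}$, which puts $v \in R \cap \varphi(T)$. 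If $v \notin R$, I claim $v \notin Q$: for if $v \in Q$, then $v \in \varphi(T)$ with $|\mathsf{par}_G(v)|=1$, so $\mathsf{par}_T(v)$ is the unique neighbor of $v$ in $\Gamma_{d^*-1}$; since $\mathsf{par}_{T'}(v) \in \mathsf{N}_G(v) \cap \Gamma_{d^*-1} = \mathsf{par}_G(v) = \{\mathsf{par}_T(v)\}$, we would get $\mathsf{par}_{T'}(v) = \mathsf{par}_T(v)$, contradicting $v \notin R$. Hence $v \in \Gamma_{d^*} \setminus R \setminus Q$.

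For the converse, the plan is to construct $T'$ by specifying $\mathsf{par}_{T'}(v)$ for each non-root $v$. For $v \in A$ we need $\mathsf{par}_{T'}(v) \in \Gamma_{d^*-1}$: if $v \in R \cap \varphi(T)$ set $\mathsf{par}_{T'}(v) = \mathsf{par}_T(v)$, and if $v \in A \setminus R \subseteq \Gamma_{d^*} \setminus R \setminus Q$, pick any element of $\mathsf{par}_G(v) \setminus \{\mathsf{par}_T(v)\}$ (if $v \in \varphi(T)$, this is nonempty because $v \notin Q$, so $|\mathsf{par}_G(v)| \geq 2$; if $v \notin \varphi(T)$ then $\mathsf{par}_T(v) \notin \mathsf{par}_G(v)$, and $\mathsf{par}_G(v) \neq \emptyset$ because $v \in \Gamma_{d^*}$). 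For $v \in \Gamma_{d^*} \setminus A$, we must avoid $\Gamma_{d^*-1}$; by Proposition~\ref{prop:kernel-char} the vertex $v$ lies in a connected component of $\overline{G} \setminus \Gamma_{\leq d^*-1}$ that meets $A$, so we can choose a parent in that component along a path toward $A$. For $v \in \Gamma_{\leq d^*-1}$, we are free to pick any neighbor since we only need to build a spanning tree below depth $d^*$.

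The main obstacle is arranging the parent choices outside $\Gamma_{d^*}$ so that the resulting $T'$ is actually a spanning tree \emph{and} agrees with $T$ exactly on $R$. The cleanest way is to reduce to the setting where the statement is used --- there, the sum in \eqref{eqn:zreps} is nonempty for the given $R$, so the existence of \emph{some} $T'' \in \mathcal{D}(T,R)$ with an arbitrary kernel can be assumed, and the parent assignments for $v \notin \Gamma_{d^*}$ can be copied from $T''$ (they only depend on edges not touching $\Gamma_{d^*} \leftrightarrow \Gamma_{d^*-1}$, which is where $A$ and $\varphi(T'')$ differ). Modifying only the parents of $\Gamma_{d^*}$-vertices as above changes the kernel from $\varphi(T'')$ to $A$ while preserving the spanning-tree property (Proposition~\ref{prop:kernel-char}) and preserving the overlap set $R$, since on $\Gamma_{d^*}$ the two requirements were matched by construction and on all other vertices we reused $T''$.
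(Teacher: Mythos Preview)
The paper omits the proof of this lemma entirely, so there is no paper argument to compare against. Your forward direction is correct and matches the four bullet points the paper lists just before the lemma.

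Your converse, however, has a genuine gap. The step ``modifying only the parents of $\Gamma_{d^*}$-vertices \dots\ while preserving the spanning-tree property (Proposition~\ref{prop:kernel-char})'' is not justified: Proposition~\ref{prop:kernel-char} only says that every $A\in\mathcal{K}(G)$ is the kernel of \emph{some} spanning tree, not that one can alter the $\Gamma_{d^*}$-parents of a given tree $T''$ and remain acyclic. The problem is the vertices $v\in\varphi(T'')\setminus A$: you must redirect $\mathsf{par}_{T'}(v)$ into $\overline V\setminus\Gamma_{\le d^*-1}$, but every available neighbor may lie in the subtree of $T''$ rooted at $v$, or the only escape routes may be blocked by the $R$-constraint on vertices outside $\Gamma_{d^*}$. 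A concrete obstruction: take a component $C$ of $\overline G\setminus\Gamma_{\le d^*-1}$ consisting of $v,u\in\Gamma_{d^*}$ and $w\in\Gamma_{d^*+1}$ with edges $v\text{--}w$ and $u\text{--}w$ only, with $v,u\notin Q$. Let $\mathsf{par}_T(w)=v$, put $w\in R$ and $v,u\notin R$. Then $A\cap C=\{u\}$ satisfies the sandwich and lies in $\mathcal{K}(G)$, and $\mathcal D(T,R)\neq\emptyset$ (e.g.\ take $\varphi(T'')\cap C=\{v\}$). But any $T'$ with $\varphi(T')\cap C=\{u\}$ forces $\mathsf{par}_{T'}(v)=w$ (the only non-$\Gamma_{d^*-1}$ neighbor) while $\mathsf{par}_{T'}(w)=v$ (since $w\in R$), a cycle.

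This shows the converse as literally stated is actually false, so your gap is not merely a missing detail. For the paper's purposes this does not matter: the forward direction alone gives the upper bound on $\log Z_{r,\epsilon}$, and for the lower bound the paper constructs a \emph{specific} pair $(R,A)$ rather than invoking the general converse. If you want a correct converse, you would need either additional hypotheses on $R$ or to allow modifying parents beyond $\Gamma_{d^*}$ (and then separately argue the $R$-constraint can be maintained, which is not automatic).
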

We omit the proof of this fact. Let $\mathcal{K}(T,R)$ be this set of all possible kernels. Then \eqref{eqn:zreps} can be further decomposed as
\[
    Z_{r,\epsilon}=\sum_{R:|R|/(\bar{n}-1)\in[r-\epsilon,r+\epsilon]}Z_R
\]
where
\[
    Z_R:=\sum_{A\in\mathcal{K}(T,R)}Z_{R}|_A
\]
and
\[
    Z_R|_A:=\sum_{\substack{T'\in\mathcal{D}(T,R)\\\varphi(T')=A}}\exp\left(-\overline{\beta}\sum_v\mathsf{d}_{T'}(1,v)\right)\,.
\]
Similar to what we did with $\log Z_{G,\beta}$ and $\log(Z_{G,\beta}|_A)$ (cf. \eqref{eqn:gibbs-opt-bigo}), we have that
\[
    \begin{split}
        \log Z_{r,\epsilon}&=\max_{R:|R|/(\bar{n}-1)\in[r-\epsilon,r+\epsilon]}\log Z_R+O(n)\\
        &=\max_{R:|R|/(\bar{n}-1)\in[r-\epsilon,r+\epsilon]}\max_{A\in\mathcal{K}(T,R)}\log(Z_{R}|_A)+O(n)\,.
    \end{split}
\]
Since we only need $\log Z_{r,\epsilon}$ up to $O(n\log\log n)$ error, it suffices to analyze and maximize $\log(Z_{R}|_A)$. Throughout, we assume that $A\in\mathcal{K}(T,R)$, i.e., $A\in\mathcal{K}(G)$ and satisfies \eqref{eqn:tp-ker-sandwich}. Note that letting
\[
    B:=A\setminus(R\cap\varphi(T))
\]
\eqref{eqn:tp-ker-sandwich} is equivalent to
\begin{equation}\label{eqn:tp-ker-sandwich-b}
    B\subseteq\Gamma_{d^*}\setminus R\setminus Q\,.
\end{equation}

\begin{figure}[t]
    \centering
    \includegraphics[width=0.5\textwidth]{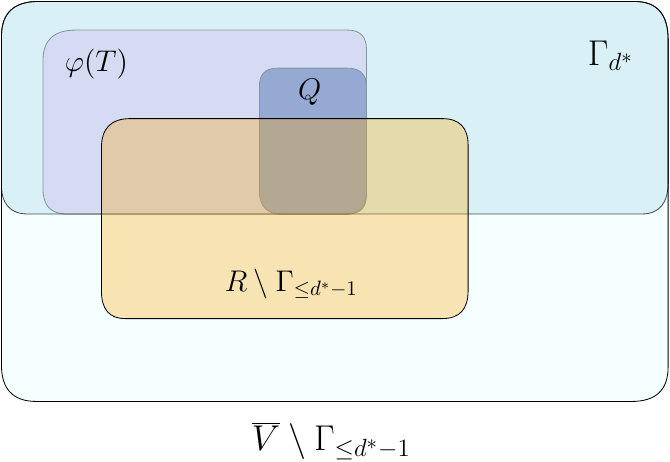}
    \caption{A Venn diagram of $\Gamma_{d^*}$, $\varphi(T)$, $Q$, and $R$ intersecting them.}
    \label{fig:venn}
\end{figure}

\paragraph{An upper bound.} Now we upper bound $\log (Z_R|_A)$ analogous to Lemma~\ref{lem:gibbs-cond-ub} and Lemma~\ref{lem:logz-ub-lglg}. The difference from these formulae is that here vertices in $R$ have a fixed parent and thus do not contribute any entropy.

\begin{lemma}\label{lem:logzra-ub}
    Let $L=Me^{-\overline{\beta}}\vee\sqrt{\log n}$. With probability at least $1-o(1)$, we have
    \[
        \log (Z_{R}|_A) \leq \sum_{v\in B}\log(|\mathsf{par}_G(v)|-\mathbf{1}_{v\in\varphi(T)})+(n-|R|-|B|)\log(L+|A|q)+\overline{\beta}|A|-\overline{\beta}(d^*+1)n+o(n)
    \]
    simultaneously for all $R\subseteq\overline{V}\setminus\{1\}$ and $A\in\mathcal{K}(T,R)$.
\end{lemma}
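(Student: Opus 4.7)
The plan is to closely mirror the proof of Lemma~\ref{lem:gibbs-cond-ub}, which combined Lemma~\ref{lem:free-core} with the log-sum concentration in Lemma~\ref{lem:logsum-ub}; here the only new ingredient is a more careful per-vertex accounting to absorb the constraint that $T'$ agrees with $T$ on $R$. First, I would apply the Gibbs variational principle to $\log(Z_R|_A)$, where now $\nu$ ranges over probability measures supported on $\{T' \in \mathcal{D}(T,R) : \varphi(T') = A\}$, and use sub-additivity of entropy $H_\nu(T') \le \sum_{v \ne 1} H_\nu(\mathsf{par}_{T'}(v))$ to decompose the problem site by site.

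The key vertex classification is the following. Vertices in $\Gamma_{\leq d^*-1}$ contribute at most $|\Gamma_{\leq d^*-1}| \log M = o_p(n)$ by Proposition~\ref{prop:conc}. Vertices $v \in R$ have $\mathsf{par}_{T'}(v) = \mathsf{par}_T(v)$ forced, so they contribute zero entropy. Vertices $v \in B = A \setminus (R \cap \varphi(T))$ must satisfy $\mathsf{par}_{T'}(v) \in \mathsf{par}_G(v)$ since $v \in A$; moreover, if $v \in \varphi(T)$ (so $\mathsf{par}_T(v) \in \mathsf{par}_G(v)$) then $v \notin R$ forces $\mathsf{par}_{T'}(v) \ne \mathsf{par}_T(v)$, which removes exactly one admissible choice, giving entropy bound $\log(|\mathsf{par}_G(v)| - \mathbf{1}_{v \in \varphi(T)})$. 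For the remaining vertices $v \in \overline{V} \setminus \Gamma_{\leq d^*-1} \setminus A \setminus R$, any path from $v$ to the root in $T'$ must pass through the kernel $A$, so $\mathsf{d}_{T'}(1,v) \ge d^* + 1 + \mathbf{1}_{\mathsf{par}_{T'}(v) \notin A}$, and the single-site Gibbs variational principle exactly as in Lemma~\ref{lem:free-core} yields
\[
H_\nu(\mathsf{par}_{T'}(v)) - \overline{\beta}\,\E_\nu[\mathbf{1}_{\mathsf{par}_{T'}(v) \notin A}] \le \log(|\mathsf{N}_A(v)| + M e^{-\overline{\beta}}).
\]

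Collecting these contributions together with the energy floor (analogous to the $|\overline V \setminus \Gamma_{\leq d^*-2}|\, d^*$ accounting in Lemma~\ref{lem:free-core}, which still applies since $R$-vertices do not reduce the energy below this baseline), and using that $|\overline{V} \setminus \Gamma_{\leq d^*-1} \setminus A \setminus R| = n - |B| - |R| - o_p(n)$ (since $A = B \sqcup (R \cap \varphi(T))$), I obtain the desired inequality with the log-sum $\sum_{v \in \overline V \setminus \Gamma_{\leq d^*-1} \setminus A \setminus R}\log(|\mathsf{N}_A(v)| + L)$ in place of the second summand. Concentrating this sum around $(n - |R| - |B|)\log(L + |A|q)$ is an immediate adaptation of Lemma~\ref{lem:logsum-ub}: conditional on $\Gamma_{\leq d^*-1}$, $A$, and $R$, the $|\mathsf{N}_A(v)|$ are still i.i.d.\ $\Binom(|A|,q)$, so Corollary~\ref{cor:tail-sum} applies with the same sub-Gaussian norm bound and Jensen provides the mean bound.

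The main obstacle, and the only place that requires any care beyond routine bookkeeping, is the union bound: the conclusion must hold simultaneously over all pairs $(R,A)$, whose number is at most $2^{2n}$. Fortunately, the tail bound from Lemma~\ref{lem:logsum-ub} is of order $\exp(-n\sqrt{\log n}/\operatorname{polylog}(n))$, which comfortably dominates $4^n$, so the enlarged union bound still gives $1 - o(1)$. The remaining $o(n)$ error terms (from the $|\Gamma_{\leq d^*-1}|\log M$, $\overline{\beta} N_{d^*-1}$, and $L/(|A|q)$ contributions) are controlled exactly as in the proof of Lemma~\ref{lem:gibbs-cond-ub} and Proposition~\ref{prop:fixed-kernel}, and I expect no further surprises.
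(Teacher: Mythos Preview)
Your proposal is correct and takes essentially the same approach as the paper, which simply states that the proof is nearly identical to Lemma~\ref{lem:gibbs-cond-ub} and omits it. Your vertex-by-vertex accounting (zero entropy on $R$, the $\mathbf{1}_{v\in\varphi(T)}$ correction on $B$, and the $\log(|\mathsf{N}_A(v)|+Me^{-\overline\beta})$ bound on the remainder) together with the observation that the concentration tail $\exp(-n\sqrt{\log n}/\mathrm{polylog}(n))$ beats the enlarged $4^n$ union bound over pairs $(R,A)$ is exactly the required adaptation.
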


The proof is nearly identical to Lemma~\ref{lem:gibbs-cond-ub} so we omit it. Similar to Lemma~\ref{lem:logz-ub-lglg}, this yields
\[
    \frac{1}{\log\log n}\log(Z_{R}|_A)\leq |B|\Delta+(n-|R|-|B|)+\beta|A|-\beta(d^*+1)n+o(n)\,.
\]
Rewriting this, we have that
\begin{equation}\label{eqn:logzra-ub}
    \frac{1}{\log\log n}\log(Z_{R}|_A) \leq n-|R|-|B|(1-\Delta-\beta)+\beta|R\cap\varphi(T)|-\beta(d^*+1)n+o(n)\,.
\end{equation}
Now we consider different phases.
\begin{itemize}
    \item In the low to moderate temperature phase $\beta\geq1-\Delta$, maximizing $\log(Z_R|_A)$ for a fixed $R$ is done by maximizing $|B|$. As seen in \eqref{eqn:tp-ker-sandwich}, $|B|$ is at most $|\Gamma_{d^*}\setminus R\setminus Q|$. Thus, we get an upper bound of $\log Z_R$
    \[
        \frac{1}{\log\log n}\log Z_R\leq n-|R|+|\Gamma_{d^*}\setminus R\setminus Q|(\beta-1+\Delta)+\beta|R\cap \varphi(T)|-\beta(d^*+1)n+o(n)\,.
    \]
    Thus, our goal is to optimize
    \begin{equation}\label{eqn:r-opt}
        |\Gamma_{d^*}\setminus R\setminus Q|(\beta-1+\Delta)+\beta|R\cap \varphi(T)|
    \end{equation}
    over $R$. We partition $R\setminus\Gamma_{\leq d^*-1}$ it into four sets
    \begin{equation}\label{eqn:xi}
        \begin{split}
            U_1 &= R\cap Q\\
            U_2 &= R\cap(\varphi(T)\setminus Q)\\
            U_3 &= R\setminus\Gamma_{\leq d^*}\\
            U_4 &= R\cap (\Gamma_{d^*}\setminus\varphi(T))\,.
        \end{split}
    \end{equation}
    See Figure~\ref{fig:venn} for an illustration. Letting $u_i=|U_i|$ the sizes, \eqref{eqn:r-opt} is written as
    \[
        (N_{d^*}-|Q|-u_2-u_4)(\beta-1+\Delta)+\beta(u_1+u_2)\,.
    \]
    This is a linear combination of $u_i$, and the coefficients of $u_i$ are $\beta$, $1-\Delta$, $0$, and $1-\Delta-\beta$, respectively. Since $\beta\geq1-\Delta\geq0\geq1-\Delta-\beta$, we increase $u_1$, $u_2$, $u_3$, and $u_4$ in this order. In other words, the optimal strategy to construct $R$ is to pick vertices from $Q$, $\varphi(T)$, $\overline{V}\setminus\Gamma_{d^*}$, and then $\Gamma_{d^*}\setminus\varphi(T)$ until the target size $|R|$ is reached. Now obvious calculations yield %TODO: elaborate if time allows
    \begin{equation}\label{eqn:fpp-lb}
        \mathcal{F}_{G,T,\beta}(r,\epsilon)\geq \inf_{r'\in[r-\epsilon,r+\epsilon]}f_{\lambda,\Delta}(r')-o_p(1)\,.
    \end{equation}

    \item In the moderate to high temperature phase $\beta\leq1-\Delta$, an upper bound is obtained simply by plugging in $|B|=0$, which gives
    \[
        \max_R\{n-|R|+\beta|R\cap\varphi(T)|\}-\beta(d^*+1)n+o_p(n)\,.
    \]
    This time, an optimal way to construct $R$ is simply exhausting $\varphi(T)$ and then picking vertices outside $\varphi(T)$ arbitrarily. This also leads to \eqref{eqn:fpp-lb}.
\end{itemize}

\paragraph{A lower bound.} Similar to the upper bound, we have the following analogue of Lemma~\ref{lem:gibbs-cond-lb}.

\begin{lemma}
    With probability at least $1-o(1)$, we have
    \[
        \log (Z_{R}|_A) \geq \sum_{v\in B}\log(|\mathsf{par}_G(v)|-\mathbf{1}_{v\in\varphi(T)})+(n-|R|-|B|)\log(|A|q)+\overline{\beta}|A|-\overline{\beta}(d^*+1)n-o(n)
    \]
    simultaneously for all $R\subseteq\overline{V}\setminus\{1\}$ and $A\in\mathcal{K}(T,R)$ with $|A|\geq m_\ell$.
\end{lemma}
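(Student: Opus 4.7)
The plan is to mirror the strategy of Lemma~\ref{lem:gibbs-cond-lb} by testing the Gibbs variational principle against a carefully chosen product measure supported inside $\mathcal{D}(T,R) \cap \varphi^{-1}(A)$. The key modification compared to $\mu_{G,\infty}|_A$ is that we must force the parents of vertices in $R$ to agree with $T$ while forbidding them for vertices outside $R$, which trims each candidate set by at most one element.

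Concretely, I would define a product measure $\nu$ on spanning trees by independently sampling $\mathsf{par}_{T'}(v)$ as follows: for $v \in R$, set $\mathsf{par}_{T'}(v) = \mathsf{par}_T(v)$ deterministically; for $v \in \Gamma_{\le d^*-1}\setminus R$, sample uniformly from $\mathsf{par}_G(v)\setminus\{\mathsf{par}_T(v)\}$; for $v \in B = A\setminus(R\cap\varphi(T))$, sample uniformly from $\mathsf{par}_G(v)\setminus\{\mathsf{par}_T(v)\}$ when $v\in\varphi(T)$ and from $\mathsf{par}_G(v)$ otherwise; for $v \in \Gamma_{d^*}\setminus A\setminus R$, sample uniformly from $\mathsf{N}_A(v)\setminus\{\mathsf{par}_T(v)\}$; and for $v\in\overline V\setminus\Gamma_{\le d^*}\setminus R$, sample uniformly from $\{u\in\mathsf{N}_G(v):\mathsf{d}_A(u)=\mathsf{d}_A(v)-1\}\setminus\{\mathsf{par}_T(v)\}$. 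By the characterization \eqref{eqn:tp-ker-sandwich} of $\mathcal K(T,R)$, the resulting parent map is almost surely a spanning tree in $\mathcal D(T,R)$ with kernel $A$; the only caveat is a handful of components of $\overline G\setminus\Gamma_{\le d^*-1}$ that may not be reached, which by Lemma~\ref{lem:gamma-a} involve only $o(n/\log\log n)$ vertices and can be patched on a set of negligible mass.

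The entropy of $\nu$ splits additively over vertices. The contribution from $R$ is zero; the contribution from $B$ equals $\sum_{v\in B}\log(|\mathsf{par}_G(v)|-\mathbf 1_{v\in\varphi(T)})$ exactly; the contribution from $\Gamma_{d^*}\setminus A\setminus R$ equals $\sum_v \log(|\mathsf{N}_A(v)|-\mathbf 1_{\mathsf{par}_T(v)\in\mathsf{N}_A(v)})$, which by Lemma~\ref{lem:logsum-lb} and the bound $|\mathsf{N}_A(v)|=\Omega(\log n/(\log\log n)^2)$ is $|\Gamma_{d^*}\setminus A\setminus R|\log(|A|q)-o(n)$ uniformly in $A,R$. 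The contribution from $\overline V\setminus\Gamma_{\le d^*}\setminus R$ can be handled identically using $\mathsf{d}_A$ in place of $\mathsf{d}_G$: all but $o(n/\log\log n)$ of these vertices sit at level $d^*+1$ and have candidate set of size $|\mathsf{N}_A(v)|$, again giving $|\overline V\setminus\Gamma_{\le d^*}\setminus R|\log(|A|q)-o(n)$. The $|\Gamma_{\le d^*-1}|=o(n/(\log\log n)^2)$ vertices are absorbed into the $o(n)$ error. Collecting terms, the entropy is bounded below by $\sum_{v\in B}\log(|\mathsf{par}_G(v)|-\mathbf 1_{v\in\varphi(T)})+(n-|R|-|B|)\log(|A|q)-o(n)$.

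For the energy, the same upper bound as in the proof of Lemma~\ref{lem:gibbs-cond-lb} applies: every $T'$ in the support of $\nu$ satisfies $\mathsf{d}_{T'}(1,v)=\mathsf{d}_A(v)$ for all $v$ outside a set of size $o(n/\log\log n)$, so $\E_\nu[\sum_v \mathsf{d}_{T'}(1,v)]\le (d^*+1)n-|A|+o(n/\log\log n)$ by invoking Lemma~\ref{lem:gamma-a} and the diameter bound \eqref{eqn:dbound}; multiplying by $-\overline\beta=-\beta\log\log n$ yields the energy contribution $\overline\beta|A|-\overline\beta(d^*+1)n-o(n)$. The Gibbs variational principle then gives the claimed lower bound on $\log(Z_R|_A)$. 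To upgrade to simultaneous control over all $R$ and $A\in\mathcal K(T,R)$ with $|A|\ge m_\ell$, I would apply the high-probability events from Lemma~\ref{lem:logsum-lb} and Lemma~\ref{lem:gamma-a} (both of which hold uniformly over subsets of $\Gamma_{d^*}$), together with a crude union bound absorbed into the $o(n)$ error.

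The main obstacle is the uniform control in $R$: the test measure $\nu$ depends on $(A,R)$ through the excluded parent $\mathsf{par}_T(v)$, so I must ensure the $-1$ corrections to the candidate sizes produce only an $o(n)$ aggregate error across all choices of $R$. This is where the quantitative lower bounds $|\mathsf{par}_G(v)|\ge 1$ and $|\mathsf{N}_A(v)|=\Omega(\log n/(\log\log n)^2)$ for $v$ at level $d^*+1$ are essential: each correction is at most $O(1/|\mathsf{N}_A(v)|)$, so summing over $v$ yields an error of $O(n(\log\log n)^2/\log n)=o(n)$, and the uniformity over $R$ follows since the exclusion affects at most one neighbor per vertex regardless of $R$.
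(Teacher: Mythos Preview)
Your proposal is correct and follows essentially the same approach as the paper, which omits the proof entirely with the remark that it ``is simply repeating the argument in the proof of Lemma~\ref{lem:gibbs-cond-lb}.'' You have correctly identified the needed modification: the test measure $\mu_{G,\infty}|_A$ must be restricted to $\mathcal D(T,R)\cap\varphi^{-1}(A)$ by pinning parents on $R$ and excluding $\mathsf{par}_T(v)$ from the choice set off $R$, and the resulting $-1$ corrections are harmless because $|A|q\ge m_\ell q=\Omega(\log n/(\log\log n)^2)$.

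One small point worth tightening: the uniformity in $R$ for the entropy term requires controlling $\sum_{v\in S}\log(|\mathsf N_A(v)|\vee1)$ over \emph{arbitrary} subsets $S\subseteq V\setminus\Gamma_{\le d^*-1}\setminus A$, not just the full set as in Lemma~\ref{lem:logsum-lb}. This follows by the same subgaussian/union-bound mechanism used there (the tail bound $2e^{-\Omega(n)}$ survives a union over $2^n$ choices of $R$ once you set $t=Cn/\sqrt{|A|q}$ with a slightly larger constant), or equivalently by subtracting the upper bound of Lemma~\ref{lem:multi-ztb-conc} on the complementary piece; either route keeps the error at $o(n)$.
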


The proof is simply repeating the argument in the proof of Lemma~\ref{lem:gibbs-cond-lb} so we omit it. This also gives
\[
    \frac{1}{\log\log n}\log(Z_{R}|_A) \geq n-|R|-|B|(1-\Delta-\beta)+\beta|R\cap\varphi(T)|-\beta(d^*+1)n-o(n)
\]
which matches \eqref{eqn:logzra-ub}. Thus, it remains to choose $R$ and $A\in\mathcal{K}(T,R)$ such that $|A|\geq m_{\ell}$ and the optimal value of \eqref{eqn:r-opt} is achieved. This can be done by constructing $R$ (with size within $(1\pm\epsilon)n$) in the same way we obtained \eqref{eqn:fpp-lb} and, if such $A$ does not exist, add any $m_{\ell}$ vertices from $\Gamma_{d^*}$ intersecting every connected component of $\overline{G}\setminus\Gamma_{\leq d^*-1}$. Since $m_\ell/n\to0$ this has little influence on the size constraint on $R$. This leads to an upper bound of $\mathcal{F}_{G,T,\beta}(r,\epsilon)$:
\begin{equation}\label{eqn:fpp-ub}
    \mathcal{F}_{G,T,\beta}(r,\epsilon)\leq\sup_{r'\in[r-\epsilon,r+\epsilon]}f_{\lambda,\Delta}(r')+o_p(1)\,.
\end{equation}
Since $f_{\lambda,\Delta}$ is continuous on $[0,1]$, \eqref{eqn:fpp-lb} and \eqref{eqn:fpp-ub} prove the conclusion.

\subsubsection{Proof of Proposition~\ref{prop:fpp-special}}\label{sec:fpp-special}

Since $T$ is sampled from the low temperature phase, by Theorem~\ref{thm:gibbs-low-phase} we may assume that $T$ is a uniformly random shortest path tree. In other words, throughout this section, we assume that
\[
    \varphi(T)=\Gamma_{d^*}
\]
and
\[
    Q=\{v\in\Gamma_{d^*}:|\mathsf{par}_G(v)|=1\}\,.
\]
Similar to our analysis of $\log Z_{G,\beta}$, to get a more accurate estimate of $\log Z_{r,\epsilon}$, we partition $Z_R$ with respect to the kernel size $|A|$. In other words,
\[
    Z_R=\sum_{m=0}^{|\Gamma_{d^*}\setminus R\setminus Q|}\sum_{|A|=m+|R\cap\varphi(T)|}Z_{R}|_A
\]
where $m$ encodes the size of $B$ satisfying \eqref{eqn:tp-ker-sandwich-b}. Note that
\[
    \log Z_R=\max_{m=0}^{|\Gamma_{d^*}\setminus R\setminus Q|}\log\left(\sum_{|A|=m+|R\cap\varphi(T)|}Z_{R}|_A\right)+o(n)\,.
\]
Since it is assumed that $\beta>1=1-\Delta$, we are in the low temperature phase. Similar to the proof of Theorem~\ref{thm:logz-formula}, it is not difficult to see that an approximate maximum is achieved when $m$ is the largest uniformly over $R$, namely, $m=|\Gamma_{d^*}\setminus R\setminus Q|$. This is equivalent to saying that
\[
    A=(\Gamma_{d^*}\setminus Q)\cup U_1
\]
where $U_i$ are defined in \eqref{eqn:xi}. Thus,
\[
    \log Z_R\leq\log(Z_{R}|_{(\Gamma_{d^*}\setminus Q)\cup U_1})+o(n)
\]
simultaneously for all $R$. Now we apply Lemma~\ref{lem:logzra-ub} and rewrite the bound in terms of $U_i$ and $u_i=|U_i|$. Since we have assumed $\varphi(T)=\Gamma_{d^*}$, we have $U_4=\emptyset$. Thus,
\begin{equation}\label{eqn:logzr-explicit}
    \begin{split}
        \log Z_R&\leq\sum_{v\in\varphi(T)\setminus Q\setminus U_2}\log(|\mathsf{par}_G(v)|-1)+(|\overline{V}\setminus\Gamma_{\leq d^*}|+|Q|-u_1-u_3)\log(qN_{d^*})\\
        &\quad+\overline{\beta}(N_{d^*}-|Q|+u_1)-\overline{\beta}(d^*+1)n+o(n)\\
        &=\sum_{v\in\varphi(T)\setminus Q\setminus U_2}\log(|\mathsf{par}_G(v)|-1)+(\lambda_nn+|Q|-u_1-u_3)\log((1-\lambda_n)nq)\\
        &\quad+\overline{\beta}(-|Q|+u_1)-\overline{\beta}(d^*+\lambda_n)n+o(n)
    \end{split}
\end{equation}
where we also used the fact that $L+|A|q\leq qN_{d^*}$ asymptotically almost surely. Now note that
\begin{equation}\label{eqn:zreps-max}
    \log Z_{r,\epsilon} = \max_{u_1,u_3}\log\left(\sum_{\substack{R:|U_i|=u_i\\i=1,3}}Z_R\right)+o(n)\,.
\end{equation}
Recall our optimal way of constructing $R$ in the previous section, where we picked vertices from $Q$, $\varphi(T)\setminus Q$, $\overline{V}\setminus\Gamma_{d^*}$, and $\Gamma_{d^*}\setminus\varphi(T)$. In the case of Proposition~\ref{prop:fpp-special}, we have $Q/n\pto\lambda\log(1/\lambda)$ and $|\varphi(T)|/n=N_{d^*}/n\pto1-\lambda$ by our assumptions, so an optimal choice of $R$ satisfies $u_1=|Q|-o(n)$ and $u_3=o(n)$. These values are very close to extreme and thus do not contribute to any entropy of choosing $U_1$ and $U_3$ for those sizes. Thus, we may simply use $u_1=|Q|$ and $u_3=0$ as an optimal choice. In other words, an approximate maximum of \eqref{eqn:zreps-max} is achieved at these values of $u_1$ and $u_3$. Thus, we can write
\[
    \log Z_{r,\epsilon}=\log\left(\sum_{\substack{Q\subseteq R\subseteq\Gamma_{d^*}\\|R|/(\bar{n}-1)\in[r-\epsilon,r+\epsilon]}}Z_R\right)+o(n)\,.
\]
For $R$ with $Q\subseteq R\subseteq\Gamma_{d^*}$, \eqref{eqn:logzr-explicit} is written as
\[
    \begin{split}
        \log Z_R&\leq\sum_{v\in\Gamma_{d^*}\setminus Q\setminus U_2}\log(|\mathsf{par}_G(v)|-1)+\lambda_nn\log((1-\lambda_n)nq)-\overline{\beta}(d^*+\lambda_n)n+o(n)\,.
    \end{split}
\]
Using this, it is straightforward to see that it suffices to show
\[
    \lim_{\epsilon\to0}\plim_{n\to\infty}\left(\frac{1}{n}\log\left(\sum_{|U_2|\in[\bar{r}-\epsilon,\bar{r}+\epsilon]}\prod_{v\in\Gamma_{d^*}\setminus Q\setminus U_2}(|\mathsf{par}_G(v)|-1)\right)-\Psi_0(\lambda)\right)=-I_{\lambda}(\bar{r})\,.
\]
Note that the summand can be written as
\[
    \begin{split}
        \prod_{v\in\Gamma_{d^*}\setminus Q\setminus U_2}(|\mathsf{par}_G(v)|-1) &= \prod_{v\in\Gamma_{d^*}\setminus Q\setminus U_2}(|\mathsf{par}_G(v)|-1)\prod_{v\in U_2} 1\\
        &=\prod_{v\in\Gamma_{d^*}\setminus Q}|\mathsf{par}_G(v)|\prod_{v\in\Gamma_{d^*}\setminus Q\setminus U_2}\frac{|\mathsf{par}_G(v)|-1}{|\mathsf{par}_G(v)|}\prod_{v\in U_2}\frac{1}{|\mathsf{par}_G(v)|}\,.
    \end{split}
\]
Here, we have
\[
    \begin{split}
        \sum_{v\in\Gamma_{d^*}\setminus Q}\log|\mathsf{par}_G(v)| &= \sum_{v\in\overline{V}\setminus\Gamma_{\leq d^*-1}}\log(|\mathsf{N}_{\Gamma_{d^*-1}}(v)|\vee1)\\
        &=n\Psi_0(\lambda)+o_p(n)
    \end{split}
\]
where the last line is due to Corollary~\ref{cor:tail-sum} and Lemma~\ref{lem:logbinv-subg} (a similar argument was used in Section~\ref{sec:gibbs-1d-opt}). On the other hand, the product
\[
    \prod_{v\in\Gamma_{d^*}\setminus Q\setminus U_2}\frac{|\mathsf{par}_G(v)|-1}{|\mathsf{par}_G(v)|}\prod_{v\in U_2}\frac{1}{|\mathsf{par}_G(v)|}
\]
is in fact the joint probability mass function of independent $\operatorname{Bernoulli}(1/|\mathsf{par}_G(v)|)$ random variables evaluated at $\mathbf{1}_{v\in U_2}$. Thus, we seek to apply Proposition~\ref{prop:ldp} to conclude that
\[
    \frac{1}{n}\log\left(\sum_{|U_2|\in[\bar{r}-\epsilon,\bar{r}+\epsilon]}\prod_{v\in\Gamma_{d^*}\setminus Q\setminus U_2}\frac{|\mathsf{par}_G(v)|-1}{|\mathsf{par}_G(v)|}\prod_{v\in U_2}\frac{1}{|\mathsf{par}_G(v)|}\right)\pto-\inf_{x\in[\bar{r}-\epsilon,\bar{r}+\epsilon]}I_{\lambda}(x)
\]
which is enough to prove Proposition~\ref{prop:fpp-special}. It only remains to verify that the empirical measure of $1/|\mathsf{par}_G(v)|$ for $v\in\Gamma_{d^*}\setminus Q$ is close in $1$-Wasserstein metric to the distribution of $1/X$ conditioned on $X\geq2$ where $X\sim\Pois(\log(1/\lambda))$. This can easily be verified using Lemma~\ref{lem:empirical-conv} and Theorem~\ref{thm:w1-pois}; we omit the details.

\subsection{The energy landscape of the Gibbs measures}\label{sec:energy-landscape}

We conclude Section~\ref{sec:gibbs} with a simple and intuitive description of the energy landscape depicted in Figure~\ref{fig:potential-well}. Note that the figure only aims to visualize the relationship between energy and entropy, and does not reflect the actual geometry of the space of spanning trees. Specifically, the depth and width of the basin represent the energy and the entropy of a particular state, respectively. Thus, the slope represents the derivative of energy with respect to the entropy, which is the temperature needed to overcome that slope. It might be helpful to think that there is a moving particle inside the basin which moves faster as temperature increases.

From Proposition~\ref{prop:cond-gibbs-dv}, we know that for a fixed kernel size $m\geq m_\ell$, the energy of a typical tree under any Gibbs measure is $(d^*+1)n-m+o_p(n/\log\log n)$. Hence, loosely speaking, we can think of $-m$ as the energy by subtracting off the common term $(d^*+1)n$. In other words, the kernel size $m$ can be thought of as representing the \emph{depth} of a particular state in the basin. From Corollary~\ref{cor:fixed-kernel-size}, we see that the entropy of the state where $|\varphi(T)|=m$ is approximately
\begin{equation}\label{eqn:entropy-fixed-kernel}
    \Psi(m;N_{d^*},\lambda_n)+(n-m)\log(mq)+o(n)\,.
\end{equation}

\paragraph{The low temperature state (\textcolor{blue}{blue} dotted line).} This is the ground state of the system, corresponding to the shortest path trees. Unless the particle is moving fast enough to escape, it stays in the ground state. This agrees with our theoretical results that the Gibbs measure $\mu_{G,\beta}$ is close to the uniform measure $\mu_{G,\infty}$ if $\beta>\beta_c$.

\paragraph{The intermediate state (\textcolor{gray}{gray} region).} Now the particle is energetic enough to climb up the hill. This is the state that only exists in the regimes \ref{item:regime2} and \ref{item:regime3-1}, i.e., $\kappa_n\to\infty$. This implies that $N_{d^*}=\omega_p(n/\log\log n)$. Suppose that $m=CN_{d^*}$ for some constant $0<C\leq1$. Then using Lemma~\ref{lem:psi-bounds} and \eqref{eqn:psi-delta} the entropy \eqref{eqn:entropy-fixed-kernel} can be written as
\[
    \begin{split}
        \Psi(m;N_{d^*},\lambda_n)+(n-m)\log(mq) &= (\Delta +o_p(1))m\log\log n+n\log(mq)-(1+o_p(1))m\log\log n\\
        &=-(1-\Delta+o_p(1))m\log\log n+n\log(mq)\,.
    \end{split}
\]
Since $m\log\log n=\omega_p(n)$, $n\log(mq)$ part is negligible so this is simply $-(1-\Delta+o_p(1))m\log\log n$. Hence, the entropy is asymptotically a linear function of the energy $-m$, which explains the (approximately) constant slope shown in Figure~\ref{fig:potential-well}. Also, the approximate slope $(1-\Delta)\log\log n$ matches our critical temperature $\beta_c\log\log n$. Note that in the regime \ref{item:regime3-2} where $\kappa_n$ is asymptotically bounded, $N_{d^*}\log\log n=O_p(n)$ so $n\log(mq)$ is no longer negligible and this argument does not apply.

\paragraph{The high temperature state (\textcolor{red}{red} region).} This is the state where $m=O(n/\log\log n)$. Thus, we let $m=\frac{Cn}{\log\log n}$ for a constant $0<C<\liminf_{n\to\infty}\kappa_n$ (where the upper bound condition is to ensure $m<N_{d^*}$ asymptotically almost surely). In this part of the basin, the entropy \eqref{eqn:entropy-fixed-kernel} can be expressed as
\[
    \begin{split}
        \Psi(m;N_{d^*},\lambda_n)+(n-m)\log(mq) &= \Delta m\log\log n+n\log(mq)-m\log\log n+o_p(n)\\
        &=-(1-\Delta)m\log\log n+n\log(mq)+o_p(n)\,.
    \end{split}
\]
As noted above, $m\log\log n=Cn$ is a linear term, so $n\log(mq)=n\log(Cnq/\log\log n)$ is now a non-negligible term. The slope can be estimated by taking derivative with respect to $m$, which gives
\[
    (1-\Delta)\log\log n+\frac{n}{m}=(1-\Delta+C^{-1})\log\log n\,.
\]
Hence, the slope gets steeper as $C\to0$, i.e., as the depth decreases. This explains a curvy shape of the edges in this red region. In the regimes \ref{item:regime2} and \ref{item:regime3}, $C$ is unbounded in the right, and as $C\to\infty$ the slope approaches $\beta_c=1-\Delta$, connecting to the intermediate state. In the regime \ref{item:regime3-2}, $C$ is bounded by $\kappa\in(1,\infty)$ so the slope approaches $\beta_c=1-\Delta-\kappa^{-1}$. Since $C=\kappa$ corresponds to the ground state $m=N_{d^*}$, the intermediate state does not exist and thus the state changes continuously from low temperature to high temperature.

In either case, as we go beyond the high temperature state, we have $1-\Delta+C^{-1}\to\infty$, meaning that the system requires an unbounded temperature to escape the basin. Recall that $\beta=0$ gives the uniform measure over the spanning trees.

\section{Conclusions}\label{sec:conc}
From the perspective of complexity theory, we are interested in estimating the runtime and space complexity of \emph{any} algorithm which solves a problem. On the other hand, from the perspective of algorithm design, optimization, and MCMC, it is very useful to be able to compare different polynomial-time equivalent formulations of a problem in order to find the one which is most amenable to efficient algorithms. A good example, besides the ones already mentioned, is in the context of the ferromagnetic Ising model. It is an objective fact that sampling the ferromagnetic Ising model is polynomial time equivalent to sampling from the corresponding random cluster model. Nevertheless, the reformulation of the Ising model in terms of the random cluster was historically very important to design a polynomial time sampling algorithm \cite{swendsen1987nonuniversal,jerrum1993polynomial,guo2017random}. Analogously, the fact that heuristics for average case problems are sensitive to geometry may be a good thing. It also remains an interesting direction to understand the tractability of sampling for many problems which the Franz--Parisi potential suggests lack free energy barriers. 
% In this work, we investigated the optimization landscape of shortest path trees from both the perspective of finite-temperature Gibbs measures, as well as in the setting of correlated random graphs as studied in the OGP literature. What we have seen is that (uniformly random) shortest path trees in random graphs are very well-behaved objects, so unlike the apparent situation when we consider shortest paths, the optimization landscape approach for shortest path trees is indeed consistent with the existence of Dijkstra's algorithm. 

Li and Schramm \cite{LS2024} studied a variant of the shortest path setup in terms of First Passage Percolation (FPP) on exponentially weighted random graphs, in part to test the possible hypothesis that the failure of OGP in Erd\"os--R\'enyi shortest paths is due to the inherent sparsity of the problem. This setting possesses some major qualitative differences from the Erd\"os--R\'enyi case --- in particular, the ``uniform distribution'' over shortest path trees in FPP has zero entropy since the shortest path tree is almost surely unique. So looking at the stability of the (uniform distribution over the single) shortest path tree in FPP is probably not the ``right'' object to study anymore. This is related to choosing the correct value of $\mu$ in the definition of the ensemble-OGP. (In our case, we proved there is no OGP regardless of the choice of $\mu$.) We leave a more detailed investigation of this and other models to future work.
%and we do not expect that if we look at only the single exact shortest path tree that it necessarily 
%One possible intuition has to do with the idea that OGP is mostly applicable for problems related to dense random objects, but   
%TODO: maybe discuss the strangeness of FPP and possible issues there (unclear what the correct setting of the ``near-optimality'' parameter for OGP should be, etc.) It seems undesirable that in FPP if we look at only the strict optimizer, there is no meaningful landscape to talk about. (In comparison, when we look at the uniform measure over satisfying assignments in SAT, there is an interesting landscape.) Whereas having a product measure representation is in line with the literature on variational inference and replica symmetry.

%TODO: this conclusion places the focus in the wrong place. the poit is just that things are landscape dependent. maybe mentio the probability of success angle as well. perhaps in some cases there is a canonical landscape so we only need to consider one landscape.

%\input{convexity}

%TODO: make the references format consistent

% add references to toc
\cleardoublepage
\phantomsection
\addcontentsline{toc}{section}{References}
\bibliographystyle{plain}
\bibliography{refs}
%\printbibliography

\appendix

\section{Mathematical tools}\label{apdx:tools}

Here we collect mathematical results used in our main proofs. 
%We also include proofs for as many of these facts as we can in the interest of keeping the exposition self-contained.

\subsection{Concentration inequalities}

\paragraph{Subgaussian random variables.} We start with basic facts about subgaussian random variables; here we match the conventions in \cite{Vershynin_2018} where many further details can be found. We include standard proofs of several results in order to make the constants explicit.

\begin{definition}\label{def:subg}
    For a random variable $X$, the \emph{subgaussian norm} of $X$ is defined by
    \[
        \lVert X\rVert_{\psi_2}:=\inf\{K>0:\E[e^{X^2/K^2}]\leq2\}\,.
    \]
    We say $X$ is \emph{subgaussian} if it has finite subgaussian norm.
\end{definition}
Note that we do not require subgaussian random variables to have mean zero.
%\begin{remark}
%    Unlike in some textbooks, we do not require $X$ to have mean zero.
%\end{remark}

\begin{proposition}\label{prop:subg-tail}
    Let $X$ be a random variable. If $X$ is subgaussian then it satisfies the tail bound
    \[
        \Pr(|X|\geq t)\leq 2e^{-t^2/\lVert X\rVert_{\psi_2}^2}
    \]
    for all $t\geq0$. Conversely, if for some constant $K>0$, $X$ satisfies a tail bound
    \[
        \Pr(|X|\geq t)\leq 2e^{-t^2/K^2}
    \]
    for all $t\geq0$, then we have $\lVert X\rVert_{\psi_2}\leq\sqrt{3}K$.
\end{proposition}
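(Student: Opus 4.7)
The plan is to handle the two directions separately, both via direct computation.

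For the forward direction, set $K = \lVert X \rVert_{\psi_2}$. By Definition~\ref{def:subg}, $\E[e^{X^2/K^2}] \leq 2$. Applying Markov's inequality to the nonnegative random variable $e^{X^2/K^2}$ with threshold $e^{t^2/K^2}$ gives
\[
    \Pr(|X| \geq t) = \Pr\!\left(e^{X^2/K^2} \geq e^{t^2/K^2}\right) \leq \frac{\E[e^{X^2/K^2}]}{e^{t^2/K^2}} \leq 2 e^{-t^2/K^2},
\]
which is the desired tail bound.

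For the reverse direction, assume the tail bound holds with constant $K$, and aim to bound $\E[e^{X^2/K'^2}]$ for $K' = \sqrt{3}K$. The natural way to convert tail bounds to moment bounds is the layer-cake representation: writing $\E[e^{X^2/K'^2}] = \int_0^\infty \Pr(e^{X^2/K'^2} \geq u)\,du$ and changing variables to $u = e^{s^2/K'^2}$, I get
\[
    \E[e^{X^2/K'^2}] = 1 + \frac{2}{K'^2} \int_0^\infty s\,\Pr(|X| \geq s)\,e^{s^2/K'^2}\,ds.
\]
Substituting the assumed tail bound yields
\[
    \E[e^{X^2/K'^2}] \leq 1 + \frac{4}{K'^2} \int_0^\infty s\,\exp\!\left(-s^2\left(\tfrac{1}{K^2} - \tfrac{1}{K'^2}\right)\right) ds = 1 + \frac{2K^2}{K'^2 - K^2},
\]
provided $K' > K$. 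Setting $K' = \sqrt{3}K$ makes the right-hand side equal to $2$, which certifies $\lVert X \rVert_{\psi_2} \leq \sqrt{3} K$ by the definition.

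Neither step presents a real obstacle; the only thing to be careful about is tracking the explicit constant $\sqrt{3}$ (rather than an unspecified absolute constant), which forces the specific change of variables and the specific optimization of $K'/K$ above.
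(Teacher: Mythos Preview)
Your proof is correct and follows essentially the same approach as the paper: Markov's inequality for the forward direction, and the layer-cake (tail integral) formula for the converse. The only cosmetic difference is that the paper keeps the integration variable as $x$ and bounds $\int_1^\infty x^{-C^2/K^2}\,dx$ directly, whereas you substitute $u=e^{s^2/K'^2}$ first; both computations land on exactly $2$ at $K'=\sqrt{3}K$.
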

\begin{proof}
    If $X$ is subgaussian then the Markov inequality gives
    \[
        \Pr(|X|\geq t)=\Pr(e^{X^2/K^2}\geq e^{t^2/K^2})\leq e^{-t^2/K^2}\E[e^{X^2/K^2}]\leq 2e^{-t^2/K^2}\,.
    \]
    For the converse, note that
    \[
        \begin{split}
            \E[e^{X^2/C^2}]&=\int_0^{\infty}\Pr(e^{X^2/C^2}\geq x)dx\\
            &=1+\int_1^{\infty}\Pr(|X|\geq C\sqrt{\log x})dx\\
            &\leq 1+2\int_1^{\infty}x^{-C^2/K^2}dx\,.
        \end{split}
    \]
    Setting $C=\sqrt{3}K$ the RHS is at most $2$, so $\lVert X\rVert_{\psi_2}\leq C=\sqrt{3}K$.
\end{proof}

\begin{proposition}\label{prop:subg-mgf}
    Let $X$ be a random variable with mean zero. If $X$ is subgaussian then its MGF satisfies
    \[
        \E[e^{\lambda X}]\leq e^{3\lambda^2\lVert X\rVert_{\psi_2}^2/2}\,.
    \]
    for all $\lambda\in\mathbb{R}$. Conversely, if for some constant $\sigma>0$ the MGF of $X$ satisfies
    \[
        \E[e^{\lambda X}]\leq e^{\lambda^2\sigma^2/2}
    \]
    for all $\lambda\in\mathbb{R}$, then $\lVert X\rVert_{\psi_2}\leq\sqrt{6}\sigma$.
\end{proposition}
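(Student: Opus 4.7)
The plan is to handle the two directions separately, with the converse being essentially immediate from Proposition~\ref{prop:subg-tail} (already in hand) combined with Chernoff's bound, and the forward direction requiring more care via moment bounds and a Taylor expansion.

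For the converse direction, given the MGF bound $\E[e^{\lambda X}] \le e^{\lambda^2\sigma^2/2}$, I would apply Chernoff/Markov: for any $\lambda > 0$ and $t > 0$,
\[ \Pr(X \ge t) \le e^{-\lambda t}\E[e^{\lambda X}] \le e^{-\lambda t + \lambda^2\sigma^2/2}, \]
and optimize over $\lambda$ (taking $\lambda = t/\sigma^2$) to get $\Pr(X \ge t) \le e^{-t^2/(2\sigma^2)}$. Since the hypothesis applies equally to $-X$ (its MGF satisfies the same bound with the same $\sigma$), the union bound gives the two-sided tail $\Pr(|X| \ge t) \le 2e^{-t^2/(2\sigma^2)}$. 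Then the converse half of Proposition~\ref{prop:subg-tail}, applied with parameter $K = \sqrt{2}\,\sigma$, yields $\lVert X\rVert_{\psi_2} \le \sqrt{3}\cdot\sqrt{2}\,\sigma = \sqrt{6}\,\sigma$, which matches the claimed constant exactly.

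For the forward direction, let $K = \lVert X\rVert_{\psi_2}$. The plan is to pass through moment bounds. Starting from the tail estimate $\Pr(|X| \ge t) \le 2e^{-t^2/K^2}$ given by Proposition~\ref{prop:subg-tail}, integration by parts yields
\[ \E[|X|^p] = \int_0^\infty p t^{p-1} \Pr(|X| \ge t)\,dt \le 2p\int_0^\infty t^{p-1} e^{-t^2/K^2}\,dt = p\,K^p\,\Gamma(p/2), \]
for every $p \ge 1$. Then I would Taylor-expand $e^{\lambda X} = 1 + \lambda X + \sum_{k\ge 2} (\lambda X)^k/k!$, take expectations, drop the linear term using $\E[X]=0$, and bound the resulting series using the moment estimates above. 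The even moments (bounded by $2K^{2m} m!$ after simplification) will produce terms of the form $(\lambda^2 K^2)^m/m!$ times a mild combinatorial factor, and the odd moments are controlled via Cauchy--Schwarz against the even ones, so the full series sums to something of the form $\exp(c\lambda^2 K^2)$.

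The main obstacle is pinning down the constant $c = 3/2$ in the exponent. The raw Taylor-series argument loses a factor along the way (coming from bounding odd moments, and from the Gamma-function ratios $\Gamma(k/2)/k!$), so some effort is required to organize the series to extract the tightest constant. An alternative route, which might yield the constant more cleanly, is to use a pointwise inequality such as $e^y - y \le e^{cy^2}$ (for the right $c$) combined with $\E[X]=0$, so that $\E[e^{\lambda X}] = \E[e^{\lambda X} - \lambda X] \le \E[e^{c\lambda^2 X^2}]$; the latter expectation is then controlled by the defining inequality $\E[e^{X^2/K^2}] \le 2$ after choosing $c\lambda^2 \le 1/K^2$ (handled separately from the small- and large-$\lambda$ regimes). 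Either approach completes the proof once the constant is verified numerically.
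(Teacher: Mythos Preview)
Your converse direction is identical to the paper's. For the forward direction, the paper takes a more direct route that cleanly yields the constant $3/2$: rather than expanding the full Taylor series and summing moment bounds term by term, it uses the second-order remainder inequality
\[
e^x \le 1 + x + \tfrac{x^2}{2}\,e^{|x|},
\]
takes expectations (killing the linear term), and then controls $\E\bigl[\tfrac{X^2}{K^2}e^{|\lambda X|}\bigr]$ in one shot via the elementary bounds $u \le e^{u/2}$ and $|\lambda X| \le \tfrac{X^2}{2K^2} + \tfrac{\lambda^2 K^2}{2}$. These combine to give
\[
\E\Bigl[\tfrac{X^2}{K^2}e^{|\lambda X|}\Bigr] \le e^{\lambda^2 K^2/2}\,\E[e^{X^2/K^2}] \le 2e^{\lambda^2 K^2/2}
\]
directly from the defining property $\E[e^{X^2/K^2}]\le 2$, after which $1 + \lambda^2 K^2 e^{\lambda^2 K^2/2} \le e^{3\lambda^2 K^2/2}$ finishes. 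This avoids the bookkeeping of odd versus even moments and the constant losses that come with it. Your moment-series approach is standard and correct in principle (it is essentially Vershynin's route, stated there with an unspecified absolute constant), but as you correctly anticipate, extracting exactly $3/2$ from it would take more work. Your alternative sketch via a pointwise inequality $e^y - y \le e^{cy^2}$ is closer in spirit to the paper's trick, but note it forces a case split in $|\lambda|K$; the paper's remainder form $e^x \le 1 + x + \tfrac{x^2}{2}e^{|x|}$ is the cleaner substitute that handles all $\lambda$ uniformly.
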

\begin{proof}
    First, assume that $X$ is subgaussian with $\lVert X\rVert_{\psi_2}=K$. We use the inequality
    \[
        e^x\leq1+x+\frac{x^2}{2}e^{|x|}
    \]
    which can be easily verified with the Taylor expansion. Then we have
    \[
        \E[e^{\lambda X}]\leq 1+\E[\lambda X]+\frac{\lambda^2}{2}\E[X^2e^{|\lambda X|}]=1+\frac{\lambda^2K^2}{2}\E\left[\frac{X^2}{K^2} e^{|\lambda X|}\right]\,.
    \]
    Again we use the inequalities $x^2\leq e^{x^2/2}$ and $|\lambda X|\leq \frac{X^2}{2K^2}+\frac{\lambda^2K^2}{2}$ which give
    \[
        \E\left[\frac{X^2}{K^2} e^{|\lambda X|}\right]\leq e^{\lambda^2K^2/2}\E[e^{X^2/K^2}]\leq 2 e^{\lambda^2K^2/2}\,.
    \]
    Hence,
    \[
        \E[e^{\lambda X}]\leq 1+\lambda^2K^2e^{\lambda^2K^2/2}\leq (1+\lambda^2K^2)e^{\lambda^2K^2/2}\leq e^{3\lambda^2K^2/2}\,.
    \]

    For the converse, we have for any $\lambda>0$ that
    \[
        \Pr(|X|\geq t)=\Pr(e^{\lambda X}\geq e^{\lambda t})+\Pr(e^{-\lambda X}\geq e^{\lambda t})\leq 2e^{-\lambda t+\lambda^2\sigma^2/2}\,.
    \]
    Setting $\lambda=t/\sigma^2$ we obtain
    \[
        \Pr(|X|\geq t)\leq2 e^{-\frac{t^2}{2\sigma^2}}\,.
    \]
    Applying Proposition~\ref{prop:subg-tail} completes the proof.
\end{proof}

\begin{proposition}\label{prop:subg-center}
    If $X$ is subgaussian, then $X-\E[X]$ is also subgaussian with
    \[
        \lVert X-\E[X]\rVert_{\psi_2}\leq2\lVert X\rVert_{\psi_2}\,.
    \]
\end{proposition}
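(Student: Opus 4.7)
The plan is to use a symmetrization trick to center the random variable while avoiding the loss of a factor of $1/\sqrt{\log 2}$ that one would incur by treating $\E[X]$ as a separate constant and invoking the triangle inequality for $\|\cdot\|_{\psi_2}$. Concretely, I would introduce an independent copy $X'$ of $X$ and use that $\E[X] = \E_{X'}[X']$ is a conditional expectation, so that $X - \E[X] = \E_{X'}[X - X']$ where the inner expectation is taken with $X$ held fixed. The map $y \mapsto e^{y^2/K^2}$ is convex on $\mathbb{R}$, so Jensen's inequality applied conditionally on $X$ gives the pointwise bound
\[
\exp\!\left(\frac{(X - \E[X])^2}{K^2}\right) \;\le\; \E_{X'}\!\left[\exp\!\left(\frac{(X - X')^2}{K^2}\right)\right].
\]

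Next I would take $\E_X$ of both sides and apply the elementary bound $(X - X')^2 \le 2X^2 + 2X'^2$, which together with independence of $X$ and $X'$ decouples the expectation into a product:
\[
\E\!\left[e^{(X - \E[X])^2/K^2}\right] \;\le\; \E_{X,X'}\!\left[e^{2X^2/K^2}\,e^{2X'^2/K^2}\right] \;=\; \bigl(\E\!\left[e^{2X^2/K^2}\right]\bigr)^{\!2}.
\]
Setting $K = 2\lVert X \rVert_{\psi_2}$ so that $2/K^2 = 1/(2\lVert X \rVert_{\psi_2}^2)$, I would then apply Jensen once more, this time to the concave function $\sqrt{\,\cdot\,}$, to write $\E[e^{X^2/(2\lVert X \rVert_{\psi_2}^2)}] = \E[(e^{X^2/\lVert X \rVert_{\psi_2}^2})^{1/2}] \le \sqrt{\E[e^{X^2/\lVert X \rVert_{\psi_2}^2}]} \le \sqrt{2}$ by the definition of the subgaussian norm. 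Squaring gives the desired bound of $2$, which by Definition~\ref{def:subg} certifies that $\lVert X - \E[X]\rVert_{\psi_2} \le 2 \lVert X \rVert_{\psi_2}$.

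There is no real obstacle in this argument; the only subtlety is choosing the right order of operations so that the constant comes out as exactly $2$ rather than something slightly worse. A naive approach would be to use the triangle inequality $\lVert X - \E[X]\rVert_{\psi_2} \le \lVert X\rVert_{\psi_2} + \lVert \E[X]\rVert_{\psi_2}$ together with $|\E[X]| \le (\E[X^2])^{1/2} \le \lVert X\rVert_{\psi_2}$ and the fact that the $\psi_2$-norm of the constant $c$ equals $|c|/\sqrt{\log 2}$, which only yields the bound $(1 + 1/\sqrt{\log 2})\lVert X \rVert_{\psi_2} \approx 2.2\, \lVert X\rVert_{\psi_2}$. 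The symmetrization avoids this loss because the double-expectation bound on $(X - X')^2$ together with independence produces a square that absorbs the $\sqrt{2}$ cleanly.
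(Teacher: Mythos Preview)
Your proof is correct. The paper's proof reaches the same endpoint but by a slightly more direct path: rather than introducing an independent copy $X'$ and symmetrizing, it works with the constant $\mu=\E[X]$ from the outset, using $(X-\mu)^2\le 2X^2+2\mu^2$ to get
\[
\E\!\left[e^{(X-\mu)^2/(4K^2)}\right]\le e^{\mu^2/(2K^2)}\,\E\!\left[e^{X^2/(2K^2)}\right],
\]
then bounds the constant factor via Jensen for the convex map $x\mapsto e^{x^2/(2K^2)}$, giving $e^{\mu^2/(2K^2)}=e^{(\E X)^2/(2K^2)}\le \E[e^{X^2/(2K^2)}]$. From that point on the two arguments coincide: both conclude with $(\E[e^{X^2/(2K^2)}])^2\le \E[e^{X^2/K^2}]\le 2$. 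Your symmetrization effectively replaces the deterministic $\mu$ by a random $X'$ and trades the Jensen step on $e^{\mu^2/(2K^2)}$ for the conditional Jensen step on $\E_{X'}[X-X']$; the paper's version is marginally shorter since it avoids the auxiliary variable, but the constants and the logic are the same.
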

\begin{proof}
    Let $\mu=\E[X]$ and $K=\lVert X\rVert_{\psi_2}$. Then
    \[
        \E\left[e^{\frac{(X-\mu)^2}{4K^2}}\right] \leq \E\left[e^{\frac{X^2+\mu^2}{2K^2}}\right]=e^{\frac{(\E[X])^2}{2K^2}}\E\left[e^{\frac{X^2}{2K^2}}\right]\leq\left(\E\left[e^{\frac{X^2}{2K^2}}\right]\right)^2\leq\E[e^{X^2/K^2}]
    \]
    where the last two inequalities are due to Jensen's inequality.
\end{proof}

\begin{proposition}\label{prop:subg-sum}
    Let $X_1,\cdots,X_n$ be independent subgaussian random variables with mean zero. Then $X_1+\cdots+X_n$ has subgaussian norm
    \[
        \lVert X_1+\cdots+X_n\rVert_{\psi_2}^2\leq 18\sum_{i=1}^n\lVert X_i\rVert_{\psi_2}^2\,.
    \]
\end{proposition}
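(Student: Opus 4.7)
The plan is to chain together the two directions of Proposition~\ref{prop:subg-mgf} via MGF factorization, which is the standard route for this kind of Bernstein/Hoeffding-flavored bound. Since each $X_i$ has mean zero and is subgaussian, I first apply the forward direction of Proposition~\ref{prop:subg-mgf} to each summand to get $\E[e^{\lambda X_i}] \leq e^{3\lambda^2 \lVert X_i\rVert_{\psi_2}^2 / 2}$ for all $\lambda \in \mathbb{R}$.

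Let $S = X_1 + \cdots + X_n$. By independence the MGF of $S$ factorizes, so
\[
\E[e^{\lambda S}] = \prod_{i=1}^n \E[e^{\lambda X_i}] \leq \exp\!\left(\frac{\lambda^2}{2} \cdot 3 \sum_{i=1}^n \lVert X_i\rVert_{\psi_2}^2\right).
\]
This puts $S$ in the hypothesis of the converse direction of Proposition~\ref{prop:subg-mgf} with $\sigma^2 = 3 \sum_{i=1}^n \lVert X_i\rVert_{\psi_2}^2$. Note $S$ automatically has mean zero since each $X_i$ does, so no recentering is needed (otherwise Proposition~\ref{prop:subg-center} would be invoked at a cost of an extra factor of $4$).

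Applying that converse yields $\lVert S\rVert_{\psi_2} \leq \sqrt{6}\,\sigma$, i.e.\
\[
\lVert S\rVert_{\psi_2}^2 \leq 6\sigma^2 = 18 \sum_{i=1}^n \lVert X_i\rVert_{\psi_2}^2,
\]
which is exactly the claimed bound, with the constant $18 = 3 \cdot 6$ matching the product of the constants from the two halves of Proposition~\ref{prop:subg-mgf}. There is no real obstacle here; the only thing to verify is that the constants track correctly and that the mean-zero hypothesis is preserved by taking sums so we can invoke the MGF characterizations directly.
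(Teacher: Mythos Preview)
Your proof is correct and follows exactly the approach the paper takes: factorize the MGF using independence and apply both directions of Proposition~\ref{prop:subg-mgf}, with the constant $18 = 3 \cdot 6$ arising from the two halves of that proposition. The paper's proof is simply a one-line sketch of the same argument.
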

\begin{proof}
    Since the MGF of $X_1+\cdots+X_n$ is the product of the MGFs of $X_i$, Proposition~\ref{prop:subg-mgf} easily gives the result.
\end{proof}

\begin{lemma}\label{lem:subg-mean}
    If $X$ is subgaussian then $|\E[X]|\leq\lVert X\rVert_{\psi_2}$.
\end{lemma}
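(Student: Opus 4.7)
The plan is to use Jensen's inequality in a direct way. Set $K = \lVert X \rVert_{\psi_2}$. By definition of the subgaussian norm (with the infimum being attained or approached as a limit), for every $K' > K$ we have $\E[e^{X^2/(K')^2}] \leq 2$.

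Next I would apply Jensen's inequality to the convex function $f(x) = e^{x^2/(K')^2}$. One quickly checks that $f''(x) = e^{x^2/(K')^2}\bigl(2/(K')^2 + 4x^2/(K')^4\bigr) > 0$, so $f$ is convex on $\mathbb{R}$. Thus
\[
    e^{(\E[X])^2/(K')^2} \;=\; f(\E[X]) \;\leq\; \E[f(X)] \;=\; \E[e^{X^2/(K')^2}] \;\leq\; 2.
\]
Taking logarithms yields $(\E[X])^2 \leq (K')^2 \log 2$, and hence $|\E[X]| \leq K'\sqrt{\log 2}$. Letting $K' \downarrow K$ gives $|\E[X]| \leq K \sqrt{\log 2} < K = \lVert X\rVert_{\psi_2}$, as claimed.

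There is really no obstacle here --- the only slightly subtle point is handling the infimum in the definition of $\lVert X\rVert_{\psi_2}$, which I would address either by the monotone convergence theorem (letting $K' \downarrow K$ inside the expectation) or just by writing the bound for every $K' > \lVert X\rVert_{\psi_2}$ and then taking $K' \to \lVert X\rVert_{\psi_2}$ at the end. In fact this argument gives the slightly stronger conclusion $|\E[X]| \leq \sqrt{\log 2}\cdot\lVert X\rVert_{\psi_2}$, which is consistent with the factor $\sqrt{\log 2} < 1$ appearing implicitly throughout the other subgaussian estimates in this appendix.
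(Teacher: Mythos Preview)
Your proof is correct and uses essentially the same idea as the paper's: apply Jensen's inequality to the convex function $x\mapsto e^{x^2/K^2}$ to get $e^{(\E[X])^2/K^2}\le \E[e^{X^2/K^2}]\le 2$, hence $|\E[X]|\le\sqrt{\log 2}\,\lVert X\rVert_{\psi_2}$. The only minor difference is that the paper first establishes integrability of $X$ via the tail bound before applying Jensen, and plugs in $K=\lVert X\rVert_{\psi_2}$ directly rather than passing to the limit $K'\downarrow K$; your treatment of the infimum is slightly more careful.
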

\begin{proof}
    By Proposition~\ref{prop:subg-tail}
    \[
        \begin{split}
            \E[|X|] &= \int_0^\infty\Pr(|X|\geq t)dt\\
            &\leq\int_0^\infty 2e^{-t^2/\lVert X\rVert_{\psi_2}^2}dt\\
            &\leq\sqrt{\pi}\lVert X\rVert_{\psi_2}\,.
        \end{split}
    \]
    This also implies the existence of $\E[X]$. To get a sharper result, we can appeal to Jensen's inequality:
    \[
        e^{(\E[X])^2/\lVert X\rVert_{\psi_2}^2}\leq\E\left[e^{X^2/\lVert X\rVert_{\psi_2}^2}\right]\leq 2\,.
    \]
    Hence, $|\E[X]|\leq\sqrt{\log 2}\lVert X\rVert_{\psi_2}\leq\lVert X\rVert_{\psi_2}$.
\end{proof}
This leads to the following subgaussian version of Hoeffding's inequality.
\begin{corollary}\label{cor:tail-sum}
    Suppose that $X_1,\cdots,X_n$ are i.i.d. random variables with mean $\mu$ and subgaussian norm at most $K$. Then we have
    \[
        \Pr\left(\left|\sum_{i=1}^nX_i-n\mu\right|\geq t\right)\leq 2e^{-{\frac{t^2}{72nK^2}}}\,.
    \]
\end{corollary}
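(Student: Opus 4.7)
The plan is to chain together the three preceding propositions on subgaussian random variables to obtain the stated bound, tracking the constants carefully so that we land at exactly $72nK^2$ in the exponent.

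First, I would center the summands. By Proposition~\ref{prop:subg-center}, each $X_i - \mu = X_i - \E[X_i]$ is subgaussian with $\lVert X_i - \mu \rVert_{\psi_2} \leq 2 \lVert X_i \rVert_{\psi_2} \leq 2K$, so in particular $\lVert X_i - \mu \rVert_{\psi_2}^2 \leq 4K^2$. The $X_i - \mu$ are independent (since the $X_i$ are) and have mean zero, which sets us up to apply the sum bound.

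Next, by Proposition~\ref{prop:subg-sum}, the sum $S := \sum_{i=1}^n (X_i - \mu) = \sum_i X_i - n\mu$ satisfies
\[
\lVert S \rVert_{\psi_2}^2 \leq 18 \sum_{i=1}^n \lVert X_i - \mu \rVert_{\psi_2}^2 \leq 18 \cdot n \cdot 4K^2 = 72 n K^2.
\]

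Finally, I would invoke the tail bound in Proposition~\ref{prop:subg-tail}: for any $t \geq 0$,
\[
\Pr\!\left(\left|\sum_{i=1}^n X_i - n\mu\right| \geq t\right) = \Pr(|S| \geq t) \leq 2 \exp\!\left(-\frac{t^2}{\lVert S \rVert_{\psi_2}^2}\right) \leq 2 \exp\!\left(-\frac{t^2}{72 n K^2}\right),
\]
which is precisely the claimed inequality. There is no real obstacle here; the proof is essentially a bookkeeping exercise that chains Propositions~\ref{prop:subg-center}, \ref{prop:subg-sum}, and \ref{prop:subg-tail}, with the only thing to watch being the multiplicative constants $2$ (centering) and $18$ (sum), which multiply to give the factor $72$.
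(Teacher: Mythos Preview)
Your proof is correct and follows essentially the same approach as the paper, which simply cites Propositions~\ref{prop:subg-center} and~\ref{prop:subg-sum}; you have just made the use of Proposition~\ref{prop:subg-tail} (needed to pass from the subgaussian norm bound to the tail bound) explicit and tracked the constants $2^2 \cdot 18 = 72$ carefully.
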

\begin{proof}
    This follows by Proposition~\ref{prop:subg-center} and Proposition~\ref{prop:subg-sum}.
\end{proof}

\paragraph{Chernoff bound.} 
We use the well-known Chernoff bound for binomials throughout this paper.
\begin{lemma}\label{lem:raw-chernoff}
    Let $Y\sim\Binom(n,p)$. Then for any $-1<\delta<1$ we have
    \[
        \Pr(Y\geq (1+\delta)np)\leq \exp(np(\delta-(1+\delta)\log(1+\delta)))\,.
    \]
\end{lemma}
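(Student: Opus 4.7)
The plan is to prove this by the classical exponential moment / Markov inequality technique (the Chernoff method). The main ingredients are: the multiplicative form of Markov's inequality, an explicit formula for the moment generating function of a binomial, the elementary bound $1+x \le e^{x}$, and an optimization over the free parameter in Markov's inequality.

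Concretely, for any $t>0$, I would first apply Markov's inequality to the random variable $e^{tY}$ to get
\[
\Pr\bigl(Y \ge (1+\delta)np\bigr) \;=\; \Pr\bigl(e^{tY} \ge e^{t(1+\delta)np}\bigr) \;\le\; e^{-t(1+\delta)np}\,\E[e^{tY}].
\]
Next, writing $Y = \sum_{i=1}^n X_i$ with $X_i$ i.i.d.\ $\mathrm{Bernoulli}(p)$, the MGF factorizes as $\E[e^{tY}] = (1-p+pe^{t})^n = (1 + p(e^{t}-1))^n$, and the inequality $1+x \le e^{x}$ gives $\E[e^{tY}] \le \exp\!\bigl(np(e^{t}-1)\bigr)$. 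Combining these two steps yields
\[
\Pr\bigl(Y \ge (1+\delta)np\bigr) \;\le\; \exp\!\Bigl(np\bigl(e^{t}-1 - t(1+\delta)\bigr)\Bigr)
\]
for every $t>0$.

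Finally, I would optimize the exponent in $t$. Differentiating $e^{t}-1-t(1+\delta)$ in $t$ and setting the derivative to zero gives the minimizer $t^\star = \log(1+\delta)$, which is a legitimate positive choice precisely when $\delta > 0$. Substituting $t = t^\star$ collapses the exponent to $np\bigl(\delta - (1+\delta)\log(1+\delta)\bigr)$, which is the claimed bound. For the remaining range $-1<\delta\le 0$ the statement is essentially trivial, since the event $\{Y \ge (1+\delta)np\}$ is already large and the right-hand side is at most $1$ whenever one checks the sign; I would simply handle it by observing that the $t\to 0^+$ limit of the displayed inequality gives the bound $1$, which is enough because the function $\delta - (1+\delta)\log(1+\delta)$ vanishes at $\delta=0$ and the usage in the main text only invokes the lemma with $\delta>0$ (see, e.g., the application in the proof of Lemma~\ref{lem:gamma-a}). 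There is no real obstacle here; the only subtlety is making sure $t^\star>0$, which is why the ``hard'' direction of the bound requires $\delta>0$.
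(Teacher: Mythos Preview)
The paper does not give a proof of this lemma; it is stated as a standard result in the appendix. Your argument via the Chernoff/Markov exponential-moment method is exactly the textbook proof and is correct for $\delta\ge 0$: the MGF bound and the optimization at $t^\star=\log(1+\delta)$ are both fine.

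Your treatment of $-1<\delta<0$, however, is muddled. You write that the bound is ``essentially trivial'' because the right-hand side is ``at most $1$'', but that is precisely the wrong direction: the function $g(\delta)=\delta-(1+\delta)\log(1+\delta)$ satisfies $g(0)=0$ and $g'(\delta)=-\log(1+\delta)>0$ on $(-1,0)$, so $g(\delta)<0$ there and the right-hand side is \emph{strictly less than} $1$. Meanwhile $\Pr(Y\ge(1+\delta)np)$ can be arbitrarily close to $1$ when $(1+\delta)np$ is well below the mean. For instance, with $n=10$, $p=1/2$, $\delta=-1/2$ one gets $\Pr(Y\ge 2.5)\approx 0.945$ while the claimed bound is $\exp(5g(-1/2))\approx 0.465$. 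So the inequality as stated is simply false for $\delta<0$; the range in the lemma should read $0\le\delta$ (or $\delta>0$). Your instinct that the paper only ever invokes the lemma with $\delta>0$ is correct and is the right way to resolve this, but you should say so directly rather than suggesting the $\delta<0$ case can be salvaged.
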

This yields the following bounds which are usually more convenient.
\begin{lemma}[Chernoff bound, see, e.g. \cite{motwani1995randomized}]\label{lem:chernoff}
    Let $Y\sim\Binom(n,p)$. Then for any $\delta\geq0$ we have
    \[
        \Pr(Y\geq (1+\delta)np)\leq e^{-\frac{\delta^2np}{2+\delta}}
    \]
    and
    \[
        \Pr(Y\leq (1-\delta)np)\leq e^{-\frac{\delta^2np}{2}}\,.
    \]
    In particular, we have
    \[
        \Pr(|Y-np|\geq\delta np)\leq2e^{\frac{\delta^2np}{3}}\,.
    \]
\end{lemma}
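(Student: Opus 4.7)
The plan is to deduce both tail bounds directly from Lemma~\ref{lem:raw-chernoff} by controlling the exponent $\varphi(\delta) := \delta - (1+\delta)\log(1+\delta)$ (and its analogue at $-\delta$) via elementary analytic inequalities, and then to obtain the two-sided statement by a union bound.

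For the upper tail, the key fact I would establish is
\[
\log(1+\delta)\;\ge\;\frac{2\delta}{2+\delta}\qquad(\delta\ge 0).
\]
This can be verified by letting $h(\delta):=\log(1+\delta)-\tfrac{2\delta}{2+\delta}$ and computing $h(0)=0$, $h'(\delta)=\delta^{2}/((1+\delta)(2+\delta)^{2})\ge 0$. Multiplying through by $1+\delta\ge 0$ and rearranging gives $\varphi(\delta)\le -\delta^{2}/(2+\delta)$, so Lemma~\ref{lem:raw-chernoff} immediately yields the upper-tail bound.

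For the lower tail (where $0\le\delta<1$ so that $(1-\delta)np\ge 0$), I would apply Lemma~\ref{lem:raw-chernoff} with $-\delta$ in place of $\delta$, which gives
\[
\Pr(Y\le(1-\delta)np)\;\le\;\exp\bigl(np\,(-\delta-(1-\delta)\log(1-\delta))\bigr).
\]
It then suffices to show $g(\delta):=\delta+(1-\delta)\log(1-\delta)-\delta^{2}/2\ge 0$ on $[0,1)$. Checking $g(0)=0$ and differentiating,
\[
g'(\delta)\;=\;-\log(1-\delta)-\delta\;\ge\;0,
\]
where the last inequality is the familiar $-\log(1-x)\ge x$ (equivalently $e^{x}\ge 1+x$). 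Hence $g\ge 0$, proving the lower-tail bound.

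Finally, for the symmetric statement with $0\le\delta\le 1$, I would observe that both $-\delta^{2}/(2+\delta)\le-\delta^{2}/3$ (using $2+\delta\le 3$) and $-\delta^{2}/2\le-\delta^{2}/3$, so each one-sided bound is at most $e^{-\delta^{2}np/3}$; a union bound over the two tails gives the factor of $2$. There is no real obstacle here — the whole argument is routine calculus once Lemma~\ref{lem:raw-chernoff} is in hand; the only ``work'' is picking the two elementary inequalities $\log(1+\delta)\ge 2\delta/(2+\delta)$ and $-\log(1-\delta)\ge\delta$ that convert the raw Chernoff exponent into the stated polynomial form.
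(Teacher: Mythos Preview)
Your derivation is correct and is precisely the standard textbook argument; the paper itself does not give a proof but simply cites \cite{motwani1995randomized} after noting that Lemma~\ref{lem:chernoff} follows from Lemma~\ref{lem:raw-chernoff}. One minor wrinkle: as literally written, Lemma~\ref{lem:raw-chernoff} is an upper-tail statement (and indeed fails for $\delta<0$), so when you ``apply it with $-\delta$'' for the lower tail you are implicitly invoking its companion bound $\Pr(Y\le(1-\delta)np)\le\exp\bigl(np(-\delta-(1-\delta)\log(1-\delta))\bigr)$, which is proved identically.
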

% \begin{proof}
%     Let $Y=X_1+\cdots+X_n$ where $X_i$ are i.i.d. Bernoulli trials. Using the Markov inequality we have
%     \[
%         \begin{split}
%             \Pr(Y\geq a)&=\Pr(e^{tY}\geq e^{ta})\\
%             &\leq e^{-ta}\mathbb{E}[e^{tY}]\\
%             &=e^{-ta}(pe^{t}+1-p)^n\\
%             &=e^{-ta}(1+p(e^t-1))^n\\
%             &\leq e^{-ta}e^{np(e^t-1)}
%         \end{split}
%     \]
%     for $t>0$. Plugging in $a=(1+\delta)np$, we have $\Pr(Y\geq(1+\delta)np)\leq e^{np(e^t-(1+\delta)t-1)}$. Now it is easy to see by elementary calculus that $t=\log(1+\delta)$ minimizes the exponent $e^t-(1+\delta)t-1=\delta-(1+\delta)\log(1+\delta)$ which gives
%     \begin{equation}\label{eqn:raw-chernoff}
%          \Pr(Y\geq (1+\delta)np)\leq\exp(np(\delta-(1+\delta)\log(1+\delta)))\,.
%     \end{equation}
%     Now using the inequality $\log(1+\delta)\geq2\delta/(2+\delta)$ gives the first inequality.

%     The second inequality is obtained similarly after applying the Markov inequality as
%     \[
%         \Pr(Y\leq a)=\Pr(e^{-tY}\geq e^{-ta})\leq e^{ta}\mathbb{E}[e^{-tY}]
%     \]
%     for $t>0$. The remaining details are omitted.
% \end{proof}

\subsection{Logarithm of binomial random variables}\label{apdx:log-binomial}

In Section~\ref{sec:gibbs} when we study Gibbs measures and random spanning trees, we are interested in the entropy of choosing a parent for each vertex, which is naturally related to the sum of the \emph{logarithm} of binomial random variables. In this section, we study the concentration of such random variables. The binomial distribution has a nonzero mass at $0$, so its logarithm is of course not well defined. Instead, we consider transformed versions of binomial distributions which put no mass at zero. The first family we consider are \emph{shifted} binomial distributions.

\begin{lemma}\label{lem:logbin-subg}
    Let $X\sim\Binom(n, p)$. Then for $a>0$ we have
    \[
        \lVert\log(X+a)-\log(np+a)\rVert_{\psi_2}\leq\sqrt{\frac{6}{a}}\,.
    \]
\end{lemma}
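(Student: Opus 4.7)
The plan is to derive matching Gaussian-type tail bounds on $Y := \log(X+a) - \log(np+a)$ and then convert them into the claimed subgaussian norm bound via the converse direction of Proposition~\ref{prop:subg-tail}. Since that proposition turns a tail bound of the form $\Pr(|Y| \geq t) \leq 2e^{-t^2/K^2}$ into $\lVert Y \rVert_{\psi_2} \leq \sqrt{3}K$, it suffices to prove the two-sided tail bound
\[
\Pr(|Y| \geq t) \leq 2 \exp\!\left(-\tfrac{at^2}{2}\right) \qquad \text{for all } t \geq 0,
\]
which corresponds to $K^2 = 2/a$ and yields $\lVert Y \rVert_{\psi_2} \leq \sqrt{6/a}$.

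For the upper tail, the key identity is $\{Y \geq t\} = \{X \geq np + (np+a)(e^t-1)\}$, obtained by exponentiating the definition of $Y$. I will apply the standard Bernstein/Chernoff upper tail bound $\Pr(X \geq np + s) \leq \exp\bigl(-\frac{s^2}{2(np + s/3)}\bigr)$ with $s = (np+a)(e^t-1)$, and simplify using two elementary inputs: the AM-GM inequality $(np+a)^2 \geq 4\,np\,a$, and the bounds $e^t - 1 \geq t$ and $e^t - 1 \geq t^2/2$ valid for $t \geq 0$. Splitting into the sub-cases $s \leq 3np$ (where the Chernoff denominator behaves like $2np$ and we use the $e^t-1 \geq t$ inequality) and $s > 3np$ (where the denominator behaves like $2s/3$ and we use $e^t - 1 \geq t^2/2$), both give a bound of the form $\exp(-c\,a\,t^2)$, which is straightforward to adjust to the required constant.

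For the lower tail, I first note that $\{Y \leq -t\} = \{X \leq np - (np+a)(1 - e^{-t})\}$, and that this event is empty once $t > L := \log(1 + np/a)$ because $X \geq 0$. In the admissible range $t \in [0, L]$, I apply the lower-tail Chernoff bound $\Pr(X \leq np - s) \leq \exp(-s^2/(2np))$ with $s = (np+a)(1-e^{-t})$. The analysis then splits into two regimes: when $np$ is bounded by a constant times $a$, I use $(np+a)^2/(2np) \geq 2a$ together with the inequality $1 - e^{-t} \geq c\,t$ on the bounded admissible range; when $np$ is much larger than $a$, I instead use the stronger bound $(np+a)^2/(2np) \geq np/2$, so the tail becomes $\exp(-np\,(1-e^{-t})^2/2)$, which dominates $\exp(-at^2/2)$ throughout $[0, L]$ because $L$ grows only logarithmically in $np/a$.

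The main obstacle is the lower tail in the regime $np \gg a$: naively, $Y$ is only bounded below by $-L$, which is large, so one might fear the subgaussian constant degrades. The key point is that when $np \gg a$ the variable $X$ concentrates so sharply that the Chernoff denominator $2np$ (rather than $2(np+a) \cdot \text{stuff}/a$) provides an enormous improvement, and this improvement more than compensates for the widening support. Once the case-split is carried out cleanly, combining the two tail bounds and invoking Proposition~\ref{prop:subg-tail} finishes the proof.
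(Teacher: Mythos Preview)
Your overall plan---Chernoff bounds on each tail, then Proposition~\ref{prop:subg-tail}---is exactly what the paper does. The upper-tail split is salvageable after a small adjustment: in your Case~2 you need $e^t-1\ge \tfrac{2}{3}t^2$ (true for all $t\ge 0$), not $\tfrac12 t^2$, to push the exponent up to $at^2/2$; with $t^2/2$ you only get $3at^2/8$. The paper avoids the split entirely via the one-line inequality $\tfrac{(e^t-1)^2}{e^t+1}\ge \cosh t-1\ge t^2/2$, which yields $\exp\!\bigl(-\tfrac{\mu+a}{2}t^2\bigr)$ directly.

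The lower-tail split, however, has a genuine gap in the constants. Regime~A (AM--GM gives $(np+a)^2/(2np)\ge 2a$) then needs $(1-e^{-t})/t\ge 1/2$, which holds only for $t\lesssim 1.59$, i.e.\ only for $np/a\lesssim 3.9$. Regime~B (keeping $(np+a)^2/(2np)\ge np/2$) needs $np(1-e^{-t})^2\ge at^2$ on all of $[0,L]$; evaluating at the endpoint $t=L$ this reads $r^{3/2}\ge (r+1)\log(1+r)$ with $r=np/a$, which \emph{fails} for $r\le 4$ (e.g.\ $r=4$: $8<5\log 5\approx 8.05$). So there is a sliver around $np/a\in(3.9,4.5)$ that neither regime covers with the constant you need for $\sqrt{6/a}$; the heuristic ``$L$ grows only logarithmically'' is asymptotically right but not sharp enough here. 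The paper sidesteps the split altogether: concavity of $1-e^{-t}$ on $[0,L]$ together with the inequality $(e^x-1)/x\ge e^{x/2}$ at $x=L=\log((\mu+a)/a)$ gives $1-e^{-t}\ge\sqrt{a/(\mu+a)}\,t$ uniformly, hence a lower-tail bound $\exp\!\bigl(-\tfrac{a(\mu+a)}{2\mu}t^2\bigr)\le\exp(-at^2/2)$ valid for every $\mu=np$ at once.
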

\begin{proof}
    Let $Y=\log(X+a)$ and $\mu=np$. By Lemma~\ref{lem:chernoff} we have
    \[
        \begin{split}
            \Pr(\log(X+a)-\log(\mu+a)\geq t) &= \Pr(X-\mu\geq(e^t-1)(\mu+a))\\
            &\leq \exp\left(-\frac{(e^t-1)^2(\mu+a)^2}{2\mu+(e^t-1)(\mu+a)}\right)\\
            &\leq \exp\left(-\frac{(e^t-1)^2(\mu+a)^2}{(e^t+1)(\mu+a)}\right)\\
            &=\exp\left(-(\mu+a)\cdot\frac{(e^t-1)^2}{e^t+1}\right)
        \end{split}
    \]
    for $t>0$. Note that
    \begin{equation}\label{eqn:tmp1}
        \frac{(e^t-1)^2}{e^t+1}\geq\frac{(e^t-1)^2}{2e^t}=\cosh t-1\geq\frac{t^2}{2}\,.
    \end{equation}
    Thus, we have
    \[
        \Pr(\log(X+a)-\log(\mu+a)\geq t) \leq \exp\left(-\frac{\mu+a}{2}\cdot t^2\right)\,.
    \]
    For the lower bound, we begin with
    \[
        \begin{split}
            \Pr(\log(X+a)-\log(\mu+a)\leq -t) &= \Pr(X-\mu\leq-(1-e^{-t})(\mu+a))\\
            &\leq \exp\left(-\frac{(1-e^{-t})^2(\mu+a)^2}{2\mu}\right)
        \end{split}
    \]
    for $t>0$. We only need to care about the case $t\leq\log(\mu+a)-\log a$, since otherwise the LHS is zero. This implies that
    \[
        1-e^{-t}\geq\frac{1-\frac{a}{\mu+a}}{\log(\mu+a)-\log a}\cdot t=\frac{1}{\mu + a}\cdot\frac{\mu}{\log(\mu + a)-\log a}\cdot t\,.
    \]
    We use the inequality (verifiable with, e.g., Taylor expansion)
    \[
        \frac{e^x-1}{x}\geq e^{x/2}
    \]
    with $x=\log(\mu+a)-\log a$, giving
    \[
        1-e^{-t}\geq\sqrt{\frac{a}{\mu+a}}\cdot t\,.
    \]
    Hence,
    \[
        \Pr(\log(X+a)-\log(\mu+a)\leq -t)\leq\exp\left(-\frac{a(\mu+a)}{2\mu}\cdot t^2\right)
    \]
    for all $t>0$.
    
    By Proposition~\ref{prop:subg-tail}, we have
    \[
        \lVert\log(X+a)-\log(np+a)\rVert_{\psi_2}^2\leq\frac{6}{\mu+a}+\frac{6\mu}{a(\mu+a)}\leq\frac{6}{a}\,.
    \]
\end{proof}

Next, we analyze \emph{rectified} binomial distributions.

\begin{lemma}\label{lem:logbinv-subg}
    Let $X\sim\Binom(n,p)$. Then we have
    \[
        \lVert\log(X\vee1)-\log(np\vee1)\rVert_{\psi_2}\leq\frac{8\sqrt{3}}{\sqrt{np\vee1}}\,.
    \]
\end{lemma}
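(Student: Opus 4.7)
The plan is to mirror the proof of Lemma~\ref{lem:logbin-subg}, handling both tails via Chernoff (Lemma~\ref{lem:chernoff}) and converting to a subgaussian norm bound via Proposition~\ref{prop:subg-tail}. The rectification at $1$ (rather than a shift by $a>0$) forces a natural case split based on whether $np \geq 1$ or $np < 1$.

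In the regime $np \geq 1$, write $\mu = np$ so that $\log(np \vee 1) = \log \mu$. For the upper tail with $t > 0$, we have $\Pr(\log(X \vee 1) - \log \mu \geq t) = \Pr(X \geq e^t \mu)$; applying Lemma~\ref{lem:chernoff} together with the inequality $(e^t - 1)^2/(e^t + 1) \geq t^2/2$ from \eqref{eqn:tmp1} produces a Gaussian tail of shape $\exp(-\mu t^2/2)$, exactly as in the upper-tail argument of Lemma~\ref{lem:logbin-subg}. For the lower tail, the rectification is crucial: $\log(X \vee 1) - \log \mu \leq -t$ is impossible for $t > \log \mu$ (since $X \vee 1 \geq 1$), so the tail probability vanishes there; on $t \in (0, \log \mu]$ it equals $\Pr(X \leq e^{-t}\mu) \leq \exp(-\mu(1-e^{-t})^2/2)$ by Chernoff. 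The analytic heart of the argument is then a lower bound of the form $1 - e^{-t} \geq t/\sqrt{\mu+1}$ valid on $[0,\log \mu]$; this is the $a=1$ instance of the sharper estimate $1 - e^{-t} \geq \sqrt{a/(\mu + a)}\, t$ derived in Lemma~\ref{lem:logbin-subg} from the auxiliary inequality $(e^x - 1)/x \geq e^{x/2}$ with $x = \log((\mu+1)/1)$. Plugging in yields a lower-tail bound of shape $\exp(-\mu t^2/(2(\mu + 1)))$, which combined with the upper tail gives Gaussian concentration with the right $1/\sqrt{\mu}$ scaling.

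In the regime $np < 1$, $\log(np \vee 1) = 0$ and $\log(X \vee 1) \geq 0$, so the lower tail is trivial. For the upper tail, $\Pr(\log(X \vee 1) \geq t) = \Pr(X \geq e^t) \leq \Pr(X \geq e^t np)$ since $np < 1$, and the same Chernoff estimate as above applies to yield $\exp(-\mu t^2/2)$. Since $\sqrt{np \vee 1} = 1$ in this regime, this already furnishes the required subgaussian bound.

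Finally, combining the upper- and lower-tail estimates across both regimes and invoking Proposition~\ref{prop:subg-tail} (with a factor $\sqrt{3}$ loss converting tail to $\psi_2$-norm, and a factor of order $\sqrt{2}$ from the combined-tail constants) produces the claimed bound $8\sqrt{3}/\sqrt{np \vee 1}$. The main obstacle is the lower-tail estimate when $\mu$ is large: the naive bound $1 - e^{-t} \geq te^{-t}$ loses a multiplicative factor of $\mu$ near $t = \log \mu$, so it is essential to use the monotonicity of $(1 - e^{-t})/t$ (or equivalently the exponential inequality above) on the full range $[0, \log \mu]$ in order to recover the $1/\sqrt{\mu}$ scaling and obtain an explicit constant compatible with the target $8\sqrt{3}$.
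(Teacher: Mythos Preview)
Your lower-tail argument for $\mu = np \geq 1$ does not yield the claimed $1/\sqrt{\mu}$ scaling. Borrowing the $a=1$ instance of $1-e^{-t} \geq \sqrt{a/(\mu+a)}\,t$ from Lemma~\ref{lem:logbin-subg} gives $1-e^{-t} \geq t/\sqrt{\mu+1}$ and hence the lower-tail bound $\exp\bigl(-\mu t^2/(2(\mu+1))\bigr)$. But $\mu/(2(\mu+1)) \to 1/2$ as $\mu \to \infty$, so this tail exponent is $\Theta(t^2)$ uniformly in $\mu$, and Proposition~\ref{prop:subg-tail} then only delivers $\|{\cdot}\|_{\psi_2} = O(1)$. (This is consistent with Lemma~\ref{lem:logbin-subg} itself, whose conclusion $\sqrt{6/a}$ at $a=1$ is likewise $O(1)$; the rectification does not let you trade $a$-scaling for $\mu$-scaling.) The paper instead uses the sharper auxiliary inequality $(e^x-1)/x \geq \tfrac12 e^{3x/4}$, applied at $x = \log\mu$, to obtain $1-e^{-t} \geq t/(2\mu^{1/4})$ on $[0,\log\mu]$ and hence the lower-tail bound $\exp(-\sqrt{\mu}\,t^2/8)$; it is this growing exponent that the paper feeds into the final constant.

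Your $\mu < 1$ argument has a separate gap: bounding $\Pr(X \geq e^t) \leq \Pr(X \geq e^t\mu)$ and then applying Chernoff yields only $\exp(-\mu t^2/2)$, which degrades as $\mu \to 0$ and does not give a uniform $O(1)$ subgaussian norm. The paper instead applies Chernoff directly to $\Pr(X \geq e^t)$, obtaining $\exp\bigl(-(e^t-\mu)^2/(e^t+\mu)\bigr)$, and uses that $\mu \mapsto (e^t-\mu)^2/(e^t+\mu)$ is decreasing on $[0,1]$ to reduce to $(e^t-1)^2/(e^t+1) \geq t^2/2$.
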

\begin{proof}
    Let $\mu=np$. First, assume $\mu>1$. Then by Lemma~\ref{lem:chernoff}
    \[
        \begin{split}
            \Pr(\log(X\vee1)-\log\mu\geq t) &= \Pr(X\vee1\geq e^t\mu)\\
            &=\Pr(X\geq e^t\mu)\\
            &=\Pr(X-\mu\geq(e^t-1)\mu)\\
            &\leq\exp\left(-\mu\cdot\frac{(e^t-1)^2}{e^t+1}\right)\\
            &\leq\exp\left(-\frac{\mu}{2}\cdot t^2\right)
        \end{split}
    \]
    for $t>0$ where the last line follows from \eqref{eqn:tmp1}. Also,
    \[
        \Pr(\log(X\vee1)-\log\mu\leq-t)\leq\Pr(X\leq e^{-t}\mu)\leq\exp\left(-\frac{\mu(1-e^{-t})^2}{2}\right)
    \]
    for $t>0$. The LHS is zero if $t>\log\mu$, otherwise $t\leq\log\mu$ which implies
    \[
        1-e^{-t}\geq\frac{\mu-1}{\mu\log\mu}\cdot t\,.
    \]
    Here we use an inequality (verifiable with simple calculus)
    \[
        \frac{e^x-1}{x}\geq \frac{1}{2}e^{\frac{3}{4}x}
    \]
    and take $x=\log\mu$. Then
    \[
        1-e^{-t}\geq\frac{1}{2\mu^{1/4}}t
    \]
    which gives
    \[
        \Pr(\log(X\vee1)-\log\mu\leq-t)\leq\exp\left(-\frac{\sqrt{\mu}}{8}\cdot t^2\right)
    \]
    for all $t>0$. Since
    \[
        \frac{\sqrt{\mu}}{8}\leq\frac{\mu}{2}
    \]
    for all $\mu>1$, we have by Proposition~\ref{prop:subg-tail}
    \[
        \lVert\log(X\vee1)-\log(np\vee1)\rVert_{\psi_2}\leq\frac{8\sqrt{3}}{\sqrt{\mu}}\,.
    \]

    Next, we assume $\mu\leq1$. Then
    \[
        \begin{split}
            \Pr(\log(X\vee1)\geq t)&=\Pr(X\geq e^t)\\
            &=\Pr(X-\mu\geq e^t-\mu)\\
            &\leq\exp\left(-\frac{(e^t-\mu)^2}{e^t+\mu}\right)
        \end{split}
    \]
    for $t>0$. Since $(e^t-\mu)^2/(e^t+\mu)\geq(e^t-1)^2/(e^t+1)$ we get
    \[
        \Pr(\log(X\vee1)\geq t)\leq e^{-t^2/2}\,.
    \]
    Since the lower tail is empty, we simply have
    \[
        \lVert\log(X\vee1)-\log(np\vee1)\rVert_{\psi_2}\leq\sqrt{6}\,.
    \]
\end{proof}

The last family we consider is a family of \emph{truncated} binomial distributions, which is simply conditioning on positiveness. This occurs the most often in our analysis, so we introduce a definition.

\begin{definition}\label{def:ztb}
    The \emph{zero-truncated binomial distribution} with parameters $n\in\mathbb{Z}^+$ and $0<p\leq 1$, denoted by $\ZTB(n, p)$, is a distribution with probability mass function
    \[
        \frac{1}{1-(1-p)^n}\cdot\binom{n}{x}p^x(1-p)^{n-x}
    \]
    for $x=1,\cdots,n$.
\end{definition}

We prove a similar result for the zero-truncated distributions.

\begin{lemma}\label{lem:subg-ztb}
    Let $X\sim\ZTB(n,p)$. Then we have
    \[
        \lVert \log X-\log(np\vee 1)\rVert_{\psi_2}\leq\frac{8\sqrt{6}}{\sqrt{np\vee1}}\,.
    \]
    In particular, $\lVert \log X-\log(np\vee 1)\rVert_{\psi_2}$ is bounded by a universal constant.
\end{lemma}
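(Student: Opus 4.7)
The plan is to lift the bound of Lemma~\ref{lem:logbinv-subg} from $Y\sim\Binom(n,p)$ to the conditioned law $X\sim\ZTB(n,p)=(Y\mid Y\geq 1)$. Writing $W=\log X-\log(np\vee 1)$ and $Z=\log(Y\vee 1)-\log(np\vee 1)$, and observing that $Z=W$ on $\{Y\geq 1\}$ while $Z=-\log(np\vee 1)$ on $\{Y=0\}$, the tower identity
\[
    \E\!\left[e^{Z^2/K_0^2}\right]=\Pr(Y=0)\cdot e^{(\log(np\vee 1))^2/K_0^2}+\Pr(Y\geq 1)\cdot\E\!\left[e^{W^2/K_0^2}\right]
\]
combined with $\E[e^{Z^2/K_0^2}]\leq 2$ (Lemma~\ref{lem:logbinv-subg}, with $K_0=8\sqrt{3}/\sqrt{np\vee 1}$) gives the crude estimate $\E[e^{W^2/K_0^2}]\leq 2/\Pr(Y\geq 1)$. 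This bound is useful only when $\Pr(Y\geq 1)$ is bounded away from $0$, which forces a case split on $np\geq 1$ versus $np<1$.

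When $np\geq 1$, we have $\Pr(Y\geq 1)\geq 1-e^{-np}\geq 1-1/e$, so $\E[e^{W^2/K_0^2}]\leq 2e/(e-1)$. The concave form of Jensen's inequality then yields
\[
    \E\!\left[e^{W^2/(\alpha K_0)^2}\right]=\E\!\left[\bigl(e^{W^2/K_0^2}\bigr)^{1/\alpha^2}\right]\leq\bigl(\E[e^{W^2/K_0^2}]\bigr)^{1/\alpha^2}\leq\bigl(2e/(e-1)\bigr)^{1/\alpha^2}
\]
for any $\alpha\geq 1$. Picking $\alpha=\sqrt{2}$ makes the right-hand side at most $2$, since $(2e/(e-1))^{1/2}\leq 2$ is equivalent to the numerical inequality $e/(e-1)\leq 2$ (i.e., $e\geq 2$). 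Hence $\lVert W\rVert_{\psi_2}\leq\sqrt{2}\,K_0=8\sqrt{6}/\sqrt{np}$, matching the claim.

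The main obstacle is the regime $np<1$, where $\Pr(Y\geq 1)$ can be as small as $\Theta(np)$ and the conditioning bound is useless. Here, however, $\log(np\vee 1)=0$ so $W=\log X\geq 0$, and the easy bounds $\binom{n}{k}p^k\leq(np)^k/k!$ and $\Pr(Y\geq 1)\geq np/2$ (from $1-e^{-x}\geq x/2$ on $[0,1]$) give $\Pr(X=k)\leq 2/k!$ for $k\geq 2$. Since the $k=1$ term contributes $\log 1=0$ and $e^y-1\leq ye^y$,
\[
    \E\!\left[e^{W^2/K^2}\right]-1=\sum_{k\geq 2}\Pr(X=k)\bigl(e^{(\log k)^2/K^2}-1\bigr)\leq\frac{2}{K^2}\sum_{k\geq 2}\frac{(\log k)^2\,e^{(\log k)^2/K^2}}{k!}.
\]
The sum on the right is a universal constant for any $K\geq 1$ (e.g., $(\log k)^2\leq k$ bounds it by $e^{1/K^2}(e^{e^{1/K^2}}-1)$), so at $K=8\sqrt{6}$ the prefactor $2/K^2=1/192$ easily makes the whole expression at most $1$, yielding $\E[e^{W^2/K^2}]\leq 2$ and $\lVert W\rVert_{\psi_2}\leq 8\sqrt{6}$. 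The closing ``in particular'' assertion is automatic since $\sqrt{np\vee 1}\geq 1$. Notably, the $\sqrt{2}$-loss in Case 1 is exactly what separates the binomial constant $8\sqrt{3}$ from the truncated constant $8\sqrt{6}$; the whole argument hinges on the numerical coincidence $e/(e-1)<2$.
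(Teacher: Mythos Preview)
Your proof is correct and, for the main regime $np\geq 1$, essentially identical to the paper's: both condition on $Y\geq 1$, bound $\E[e^{W^2/K_0^2}]$ by a constant $<4$, and recover the factor-$\sqrt{2}$ loss via Jensen (the paper keeps the $(1-p)^n$ term you discard, but this makes no difference to the outcome).

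For the regime $np<1$ you take a slightly different route. The paper applies the raw Chernoff bound (Lemma~\ref{lem:raw-chernoff}) to obtain the tail estimate $\Pr(\log X\geq t)\leq\exp(-(t-1)e^t)$, then converts this to a subgaussian tail for $t\geq 2$ and argues that small $t$ is harmless. You instead bound the PMF directly via $\Pr(X=k)\leq 2/k!$ and compute the $\psi_2$-defining expectation term by term, exploiting the factorial decay. Your argument is arguably cleaner here: it avoids the Chernoff calculus and the mildly delicate case split on $t$, at the cost of needing the crude inequality $(\log k)^2\leq k$. Both approaches are short and self-contained; the paper's has the minor advantage of reusing existing machinery, while yours is more elementary.
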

\begin{proof}
    Let $\mu=np$ and first assume $\mu\geq1$. Consider $Y\sim\Binom(n,p)$. For $K=\lVert \log(Y\vee1)-\log\mu\rVert_{\psi_2}$ we have
    \[
        \begin{split}
            2&\geq\E[e^{(\log(Y\vee1)-\log\mu)^2/K^2}]\\
            &= e^{(\log\mu)^2/K^2}\Pr(Y=0)+\E[e^{(\log(Y\vee1)-\log\mu)^2/K^2}\mid Y>0]\Pr(Y>0)\\
            &\geq(1-p)^n+\E[e^{(\log X-\log\mu)^2/K^2}](1-(1-p)^n)
        \end{split}
    \]
    which gives
    \[
        \begin{split}
            \E[e^{\log X-\log(\mu\vee1))^2/K^2}] &\leq \frac{2-(1-p)^n}{1-(1-p)^n}\\
            &=1+\frac{1}{1-(1-p)^n}\\
            &\leq 1+\frac{1}{1-e^{-\mu}}\\
            &< 4\,.
        \end{split}
    \]
    By Jensen's inequality,
    \[
        \E[e^{(\log X-\log(\mu\vee1))^2/(2K^2)}]\leq(\E[e^{(\log X-\log(\mu\vee1))^2/K^2}])^{1/2}\leq2
    \]
    which implies $\lVert \log X-\log(np\vee 1)\rVert_{\psi_2}\leq\sqrt{2}K$. Now we can invoke Lemma~\ref{lem:logbinv-subg} to conclude
    \[
        \lVert \log X-\log(\mu\vee 1)\rVert_{\psi_2}\leq\frac{8\sqrt{6}}{\sqrt{\mu}}\,.
    \]

    Now assume $\mu<1$. Here, we use Lemma~\ref{lem:raw-chernoff} which gives
    \[
        \begin{split}
            \Pr(\log X\geq t) &= \frac{\Pr(X-\mu\geq e^t-\mu)}{1-(1-p)^n}\\
            &\leq\frac{1}{1-e^{-\mu}}\exp\left(e^t-\mu-e^t\log\left(1+\frac{e^t-\mu}{\mu}\right)\right)\\
            &=\exp\left(e^t-\mu-te^t-e^t\log(1/\mu)+\log\left(\frac{1}{1-e^{-\mu}}\right)\right)\\
            &=\exp\left(-(t-1)e^t-(e^t-1)\log(1/\mu)-\log\left(\frac{e^\mu-1}{\mu}\right)\right)\\
            &\leq\exp\left(-(t-1)e^t\right)
        \end{split}
    \]
    for $t>0$. If we only consider $t\geq2$, we get
    \[
        \Pr(\log X\geq t)\leq\exp\left(-(t-1)e^t\right)\leq e^{-t^2/2}\,.
    \]
    Then we can say $\Pr(|\log X|\geq t)\leq2e^{-t^2/2}$ for all $t>0$ since the lower tail does not exist and the case $t<2$ is trivial.
\end{proof}

\subsection{Poisson approximation}\label{apdx:pois-apx}

In many cases, approximating binomial distributions with Poisson distributions simplifies the results and often makes them more interpretable. We first state the following results which can easily be translated from those for binomial distributions in the previous section. We omit the proofs since they are nearly identical.

\begin{lemma}
    Let $Y\sim\Pois(\mu)$. Then for any $0<\delta<1$ we have
    \[
        \Pr(Y\geq(1+\delta)\mu)\leq e^{-\frac{\delta^2\mu}{2+\delta}}
    \]
    and
    \[
        \Pr(Y\leq (1-\delta)np)\leq e^{-\frac{\delta^2\mu}{2}}\,.
    \]
\end{lemma}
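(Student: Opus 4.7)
\medskip

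The plan is to use the standard Chernoff method applied to the explicit moment generating function of the Poisson distribution. Recall that for $Y\sim\Pois(\mu)$ we have $\E[e^{tY}]=\exp(\mu(e^t-1))$ for all $t\in\mathbb{R}$, which is the only analytic fact about the Poisson that enters.

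For the upper tail, by Markov applied to $e^{tY}$ with $t>0$ and then optimizing, I would bound
\[
\Pr(Y\geq(1+\delta)\mu)\leq\inf_{t>0}\exp\bigl(\mu(e^t-1-(1+\delta)t)\bigr),
\]
and the first-order condition gives $t=\log(1+\delta)$, which is indeed positive since $\delta>0$. Plugging back in yields the ``raw'' Poisson Chernoff bound $\exp(\mu(\delta-(1+\delta)\log(1+\delta)))$, exactly mirroring Lemma~\ref{lem:raw-chernoff} for the binomial (this is not a coincidence, since Poisson is a limit of binomials). To reach the form in the statement, I would then invoke the elementary calculus inequality
\[
(1+\delta)\log(1+\delta)-\delta\;\geq\;\frac{\delta^2}{2+\delta},\qquad \delta\geq 0,
\]
which is a standard ingredient in Chernoff bounds and can be verified by checking the signs of the derivatives of both sides at $\delta=0$ (both sides and their first derivatives vanish at $0$, and the LHS grows faster thereafter).

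For the lower tail, I would run the same Chernoff argument with $t=-s<0$, giving
\[
\Pr(Y\leq(1-\delta)\mu)\leq\inf_{s>0}\exp\bigl(\mu(e^{-s}-1+(1-\delta)s)\bigr).
\]
Here the first-order condition gives $e^{-s}=1-\delta$, i.e.\ $s=\log(1/(1-\delta))>0$, and substituting yields the bound $\exp(\mu(-\delta-(1-\delta)\log(1-\delta)))$. The concluding step is the companion inequality
\[
(1-\delta)\log(1-\delta)+\delta\;\geq\;\frac{\delta^2}{2},\qquad 0\leq\delta<1,
\]
which again is verified by noting that both sides and their first derivatives vanish at $\delta=0$ while the second derivative of the LHS is $\tfrac{1}{1-\delta}\geq 1$, dominating the constant second derivative of the RHS.

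There is no real obstacle here; the whole argument is a textbook Chernoff bound and the only minor nuisance is keeping track of the two elementary inequalities above. (Note that the ``$np$'' appearing in the lower-tail statement is a typographical carry-over from Lemma~\ref{lem:chernoff}; it should of course read $\mu$, and the proof delivers exactly this.)
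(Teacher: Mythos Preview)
Your proposal is correct and is exactly the argument the paper has in mind: the paper omits the proof entirely, saying only that it ``can easily be translated from those for binomial distributions in the previous section'' and is ``nearly identical'' to Lemma~\ref{lem:chernoff}. Your Chernoff/MGF derivation with the two elementary inequalities is precisely that translation, and your observation about the ``$np$'' typo is also correct.
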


\begin{lemma}
    Let $X\sim\Pois(\mu)$. Then for $a>0$ we have
    \[
        \lVert\log(X+a)-\log(\mu+a)\rVert_{\psi_2}\leq\sqrt{\frac{6}{a}}\,.
    \]
\end{lemma}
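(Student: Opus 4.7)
The plan is to mirror the proof of Lemma~\ref{lem:logbin-subg} line by line, substituting the Poisson Chernoff bounds just recorded in the preceding lemma for their binomial counterparts. Since the two families of tail bounds $\Pr(Y \geq (1+\delta)\mu) \leq e^{-\delta^2\mu/(2+\delta)}$ and $\Pr(Y \leq (1-\delta)\mu) \leq e^{-\delta^2\mu/2}$ have the identical functional form in the binomial and Poisson cases (with $np$ replaced by $\mu$), the algebraic manipulations transfer essentially verbatim.

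First I would handle the upper tail. For $t > 0$, rewrite
\[
\Pr(\log(X+a) - \log(\mu+a) \geq t) = \Pr(X - \mu \geq (e^t - 1)(\mu+a))
\]
and apply the Poisson Chernoff bound with $\delta = (e^t-1)(\mu+a)/\mu$ to get an upper bound of $\exp\bigl(-(\mu+a)(e^t-1)^2/(e^t+1)\bigr)$. The elementary inequality $(e^t-1)^2/(e^t+1) \geq t^2/2$ (already used in the binomial proof) then yields
\[
\Pr(\log(X+a) - \log(\mu+a) \geq t) \leq \exp\!\left(-\frac{\mu+a}{2}\,t^2\right).
\]

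Next I would handle the lower tail analogously: for $0 < t \leq \log(\mu+a) - \log a$ (outside which the probability is zero), rewrite the event as $X - \mu \leq -(1-e^{-t})(\mu+a)$, apply the lower Poisson Chernoff bound, and use the inequality $1-e^{-t} \geq \sqrt{a/(\mu+a)}\,t$ (derived via $(e^x-1)/x \geq e^{x/2}$ with $x = \log(\mu+a) - \log a$, as in Lemma~\ref{lem:logbin-subg}) to obtain
\[
\Pr(\log(X+a) - \log(\mu+a) \leq -t) \leq \exp\!\left(-\frac{a(\mu+a)}{2\mu}\,t^2\right).
\]
Combining the two tails and invoking Proposition~\ref{prop:subg-tail} gives $\|\log(X+a) - \log(\mu+a)\|_{\psi_2}^2 \leq 6/(\mu+a) + 6\mu/\bigl(a(\mu+a)\bigr) = 6/a$, as desired.

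There is no substantive obstacle here: the proof is a direct transcription, and the only thing to verify is that the Poisson Chernoff tails in the preceding lemma match the functional form of the binomial ones used in Lemma~\ref{lem:logbin-subg}, which they do. (One could even phrase the statement more uniformly, observing that both binomial and Poisson satisfy $\E[e^{\lambda(X - \mu)}] \leq \exp(\mu(e^\lambda - 1 - \lambda))$, so the entire derivation proceeds from this common MGF bound without reference to the specific distribution.)
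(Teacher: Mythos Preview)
Your proposal is correct and takes essentially the same approach as the paper, which explicitly omits the proof as ``nearly identical'' to that of Lemma~\ref{lem:logbin-subg}. The transcription you outline---Poisson Chernoff tails in place of binomial ones, same elementary inequalities, same final combination via Proposition~\ref{prop:subg-tail}---is exactly what the paper intends.
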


\begin{lemma}\label{lem:subg-rep}
    Let $X\sim\Pois(\mu)$. Then we have
    \[
        \lVert\log(X\vee1)-\log(\mu\vee1)\rVert_{\psi_2}\leq\frac{8\sqrt{3}}{\sqrt{\mu\vee1}}\,.
    \]
\end{lemma}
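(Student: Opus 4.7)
The plan is to mirror the proof of Lemma~\ref{lem:logbinv-subg} almost verbatim, substituting the Poisson Chernoff bounds (stated immediately above this lemma in the appendix) for the binomial ones. The only structural ingredients used in the earlier proof were (i) a one-sided Chernoff tail of the form $\Pr(Y \geq (1+\delta)\mu) \leq \exp(-\mu(e^t-1)^2/(e^t+1))$ with $\delta = e^t - 1$, (ii) the elementary inequality $\tfrac{(e^t-1)^2}{e^t+1} \geq \tfrac{t^2}{2}$, (iii) a matching lower tail $\Pr(Y \leq e^{-t}\mu) \leq \exp(-\mu(1-e^{-t})^2/2)$, and (iv) the inequality $\tfrac{e^x - 1}{x} \geq \tfrac{1}{2}e^{3x/4}$ used to bound $1 - e^{-t}$ from below on $t \in (0, \log\mu]$. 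All four ingredients remain available in the Poisson setting.

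First I would split into the cases $\mu \geq 1$ and $\mu < 1$. In the case $\mu \geq 1$, I would write
\[
\Pr(\log(X \vee 1) - \log\mu \geq t) = \Pr(X \geq e^t\mu)
\]
and apply the Poisson Chernoff bound to obtain $\exp(-\mu(e^t-1)^2/(e^t+1)) \leq \exp(-\mu t^2/2)$, exactly as in the binomial proof. For the lower tail I would bound $\Pr(\log(X\vee 1) - \log \mu \leq -t) \leq \Pr(X \leq e^{-t}\mu)$, which is nonzero only for $t \leq \log\mu$; on that range the bound $1 - e^{-t} \geq \tfrac{1}{2\mu^{1/4}} t$ (derived from $\tfrac{e^x-1}{x} \geq \tfrac{1}{2} e^{3x/4}$ at $x = \log\mu$) yields $\exp(-\sqrt{\mu}\, t^2/8)$. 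Since $\sqrt{\mu}/8 \leq \mu/2$ for $\mu \geq 1$, the weaker of the two bounds is $\exp(-\sqrt{\mu}\, t^2/8)$, and Proposition~\ref{prop:subg-tail} converts this tail into a subgaussian norm bounded by $\tfrac{8\sqrt{3}}{\sqrt{\mu}}$.

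In the degenerate case $\mu < 1$, one has $\log(\mu \vee 1) = 0$ and $\log(X \vee 1) \geq 0$, so the lower tail is trivial. For the upper tail I would apply the Poisson form of Lemma~\ref{lem:raw-chernoff} (which gives $\Pr(X \geq e^t) \leq \exp(e^t - \mu - e^t \log(e^t/\mu))$) and simplify to obtain $\Pr(\log(X \vee 1) \geq t) \leq \exp(-(t-1)e^t) \leq e^{-t^2/2}$ for $t \geq 2$, with the case $t < 2$ absorbed into the constant, giving a universal subgaussian bound that is comfortably within the stated estimate $8\sqrt{3}/\sqrt{\mu \vee 1} = 8\sqrt{3}$.

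There is no real obstacle here beyond bookkeeping: the only subtlety to double-check is that the Poisson Chernoff inequality extends to the range $\delta = e^t - 1 \geq 0$ (not merely $\delta < 1$), which follows directly from the Cramér-transform form $(1+\delta)\log(1+\delta) - \delta \geq \delta^2/(2+\delta)$ valid for all $\delta \geq 0$. Once that is in hand, the constants propagate identically to those in Lemma~\ref{lem:logbinv-subg}, giving the claimed $8\sqrt{3}/\sqrt{\mu \vee 1}$ bound.
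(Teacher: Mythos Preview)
Your proposal is correct and matches the paper's approach exactly: the paper explicitly omits the proof, stating it is ``nearly identical'' to that of Lemma~\ref{lem:logbinv-subg}, which is precisely the translation you carry out. Your remark about extending the Poisson Chernoff bound to all $\delta \geq 0$ is the only point needing care (the paper's Poisson lemma is stated for $0<\delta<1$), and you handle it correctly via the Cram\'er form.
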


The zero-truncated Poisson distributions are also similarly defined.

\begin{definition}\label{def:ztp}
    The \emph{zero-truncated Poisson distribution} with parameter $\mu>0$, denoted by $\ZTP(\mu)$, is a distribution with probability mass function
    \[
        \frac{1}{1-e^{-\mu}}\cdot\frac{\mu^xe^{-\mu}}{x!}
    \]
    for $x=1,2,\cdots$.
\end{definition}

\begin{lemma}\label{lem:subg-ztp}
    Let $X\sim\ZTP(\mu)$. Then we have
    \[
        \lVert\log X-\log(\mu\vee1)\rVert_{\psi_2}\leq\frac{8\sqrt{6}}{\sqrt{\mu\vee1}}\,.
    \]
\end{lemma}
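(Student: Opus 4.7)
} The strategy is a direct Poisson analogue of the argument used for Lemma~\ref{lem:subg-ztb}, splitting on whether $\mu \ge 1$ or $\mu < 1$, and substituting Lemma~\ref{lem:subg-rep} (the untruncated Poisson bound) in place of Lemma~\ref{lem:logbinv-subg}.

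\paragraph{Case $\mu \ge 1$.} Let $Y \sim \Pois(\mu)$, so that $X \deq Y \mid Y > 0$. Writing $K = \lVert \log(Y \vee 1) - \log \mu \rVert_{\psi_2}$, which by Lemma~\ref{lem:subg-rep} is at most $8\sqrt{3}/\sqrt{\mu}$, I would expand the defining expectation of the subgaussian norm of $\log(Y \vee 1) - \log \mu$ by conditioning on $\{Y = 0\}$ versus $\{Y > 0\}$:
\[
2 \;\ge\; \E\bigl[e^{(\log(Y\vee 1)-\log\mu)^2/K^2}\bigr]
\;\ge\; e^{-\mu}\cdot e^{(\log\mu)^2/K^2} \;+\; \bigl(1-e^{-\mu}\bigr)\, \E\bigl[e^{(\log X - \log\mu)^2/K^2}\bigr].
\]
Dropping the first (nonnegative) term and dividing by $1 - e^{-\mu}$ yields
\[
\E\bigl[e^{(\log X - \log\mu)^2/K^2}\bigr] \;\le\; \frac{2}{1 - e^{-\mu}} \;\le\; \frac{2}{1-e^{-1}} \;<\; 4,
\]
using $\mu \ge 1$ in the last step. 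A Jensen inequality on $x \mapsto e^{x/2}$ converts this into
\[
\E\bigl[e^{(\log X - \log\mu)^2/(2K^2)}\bigr] \;\le\; \bigl(\E[e^{(\log X - \log\mu)^2/K^2}]\bigr)^{1/2} \;\le\; 2,
\]
so $\lVert \log X - \log \mu \rVert_{\psi_2} \le \sqrt{2}\,K \le 8\sqrt{6}/\sqrt{\mu}$, which matches the stated bound (and here $\mu \vee 1 = \mu$).

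\paragraph{Case $\mu < 1$.} Here $\mu \vee 1 = 1$, so $\log(\mu \vee 1) = 0$ and the lower tail $\{\log X < 0\}$ is empty since $X \ge 1$. For the upper tail with $t > 0$, I would write
\[
\Pr(\log X \ge t) \;=\; \frac{\Pr(Y \ge e^t)}{1 - e^{-\mu}},
\]
and apply the standard Poisson Chernoff bound $\Pr(Y \ge s) \le \exp\bigl(-\mu - s\log(s/\mu) + s\bigr)$ for $s = \lceil e^t \rceil \ge e^t$. Using $\log\bigl(1/(1-e^{-\mu})\bigr) \le \log(1/\mu) + O(1)$ to absorb the denominator into the logarithm, the exponent simplifies (as in the $\ZTB$ case) to something dominated by $-(t-1)e^t$. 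For $t \ge 2$ this is at most $-t^2/2$, and the trivial bound $\Pr(\cdot) \le 1 \le 2 e^{-t^2/2}$ handles $t < 2$. Proposition~\ref{prop:subg-tail} then gives $\lVert \log X \rVert_{\psi_2} \le \sqrt{6}$, which is bounded by $8\sqrt{6}/\sqrt{\mu \vee 1} = 8\sqrt{6}$ as required.

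\paragraph{Expected obstacle.} The $\mu \ge 1$ case is essentially bookkeeping once Lemma~\ref{lem:subg-rep} is in hand; the constants pop out cleanly because $\mu \ge 1$ makes $1/(1 - e^{-\mu})$ a small universal constant. The only spot that needs a little care is the $\mu < 1$ regime, where the factor $1/(1 - e^{-\mu}) \approx 1/\mu$ blows up and must be compensated by the decay in $\Pr(Y \ge e^t)$. The point is that the Poisson upper-tail bound loses a factor of $\log(e^t/\mu) = t + \log(1/\mu)$ in the exponent, precisely cancelling the $\log(1/\mu)$ generated by the normalisation; this is the one place where the calculation is not a mechanical translation of the binomial proof and where I expect to have to verify the inequality by hand for all $t > 0$ rather than appeal to a black-box lemma.
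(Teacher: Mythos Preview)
Your proposal is correct and is precisely the approach the paper intends: it omits the proof as ``nearly identical'' to that of Lemma~\ref{lem:subg-ztb}, and your plan mirrors that argument line by line, substituting Lemma~\ref{lem:subg-rep} for Lemma~\ref{lem:logbinv-subg} and the Poisson Chernoff bound for Lemma~\ref{lem:raw-chernoff}. One small point to tighten when you write it up: the trivial bound $1 \le 2e^{-t^2/2}$ actually fails for $t \in (\sqrt{2\ln 2},2)$, but the Chernoff estimate $\exp(-(t-1)e^t) \le 2e^{-t^2/2}$ already holds for all $t \ge 1$ (the difference $(t-1)e^t - t^2/2 + \ln 2$ has derivative $t(e^t-1) > 0$ and is positive at $t=1$), so the two ranges together cover all $t > 0$; the paper's binomial proof glosses over this in the same way.
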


We also extract a powerful result from the literature on on Poisson approximations \cite{10.1214/aop/1176996359, 10.1093/oso/9780198522355.001.0001, 876852c4-3779-3448-b0ee-191ab7613ff7}.

\begin{theorem}[\cite{876852c4-3779-3448-b0ee-191ab7613ff7}]\label{thm:w1-pois}
    There is a universal constant $C>0$ such that
    \[
        W_1(\Binom(n,p),\Pois(np))\leq C\sqrt{n}p^{3/2}\,.
    \]
\end{theorem}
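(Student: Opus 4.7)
My plan is to prove the bound via the Stein--Chen method for Poisson approximation in the Wasserstein metric, exploiting the independent additive structure of the Bernoulli summands. Set $\lambda = np$, let $X_1,\ldots,X_n$ be i.i.d.\ Bernoulli$(p)$ with $W = \sum_i X_i \sim \Binom(n,p)$, and write $Z \sim \Pois(\lambda)$. By Kantorovich--Rubinstein duality, it suffices to bound $|\mathbb{E} h(W) - \mathbb{E} h(Z)|$ uniformly over $1$-Lipschitz test functions $h \colon \mathbb{Z}_{\geq 0} \to \mathbb{R}$. For each such $h$, the Stein equation $\lambda f(k+1) - k f(k) = h(k) - \mathbb{E} h(Z)$ admits a unique bounded solution $f = f_h$ satisfying the Barbour--Eagleson estimate $\|\Delta f\|_\infty \leq \min\bigl(1,\sqrt{2/(e\lambda)}\bigr)$, where $\Delta f(k) := f(k+1) - f(k)$; I would take this Stein factor as input from the cited monograph.

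The core step is a size-biasing-style expansion. Conditioning on $X_i$ and writing $W_i := W - X_i$, the identities $\mathbb{E}[f(W+1)] = (1-p)\mathbb{E}[f(W_i+1)] + p\,\mathbb{E}[f(W_i+2)]$ and $\mathbb{E}[X_i f(W)] = p\,\mathbb{E}[f(W_i+1)]$ combine to yield
\begin{equation*}
    \mathbb{E}\bigl[\lambda f(W+1) - W f(W)\bigr] \;=\; \sum_{i=1}^n \Bigl(p\,\mathbb{E}[f(W+1)] - p\,\mathbb{E}[f(W_i+1)]\Bigr) \;=\; \sum_{i=1}^n p^2\,\mathbb{E}[\Delta f(W_i+1)].
\end{equation*}
Substituting into the Stein equation and taking absolute values gives $|\mathbb{E} h(W) - \mathbb{E} h(Z)| \leq np^2 \cdot \|\Delta f\|_\infty$.

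The theorem then follows by case analysis on $\lambda$. When $\lambda = np \geq 1$, the refined Stein factor gives $W_1(\Binom(n,p),\Pois(\lambda)) \leq \sqrt{2/e}\cdot np^2/\sqrt{\lambda} = \sqrt{2/e}\cdot\sqrt{n}\,p^{3/2}$. When $np < 1$, the trivial bound $\|\Delta f\|_\infty \leq 1$ yields $W_1 \leq np^2$; but $np^2 = \sqrt{np}\cdot\sqrt{n}\,p^{3/2} \leq \sqrt{n}\,p^{3/2}$ precisely because $\sqrt{np} < 1$ in this regime. Combining the two regimes establishes the bound with $C = \max(1,\sqrt{2/e})$.

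The technical obstacle is the refined Stein factor $\|\Delta f_h\|_\infty = O(1/\sqrt{\lambda})$ for $1$-Lipschitz $h$; the elementary bound $\|\Delta f_h\|_\infty \leq 1$ only recovers the weaker estimate $np^2$, which is looser than $\sqrt{n}\,p^{3/2}$ whenever $\lambda$ is large (as in our setting where $\lambda = \Theta(\log n)$). The refinement, due to Barbour--Eagleson, is proved by expressing $f_h$ through an Abel-summation / Poisson-mass representation and exploiting unimodality of the Poisson law together with a Stirling estimate around its mode $\lfloor \lambda \rfloor$. Since the theorem is cited to Barbour--Holst--Janson, the cleanest route is to invoke the refined Stein factor directly from their monograph rather than reproduce its proof here.
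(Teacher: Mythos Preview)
The paper does not give a proof of this statement at all: it is quoted verbatim as a known result from the Poisson approximation literature (the cited references are the standard Barbour--Holst--Janson monograph and related papers), with the preamble ``We also extract a powerful result from the literature on Poisson approximations.'' So there is no proof in the paper to compare your proposal against.

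Your proposal is essentially the standard Stein--Chen derivation of this bound and is correct. The size-biasing identity yielding $|\mathbb{E}h(W)-\mathbb{E}h(Z)| \le np^{2}\|\Delta f_h\|_\infty$ is exactly the mechanism behind the classical total-variation bound $d_{TV}\le np^{2}\min(1,\lambda^{-1})$, and the only extra ingredient needed for the $W_1$ metric is the refined Stein factor $\|\Delta f_h\|_\infty = O(\lambda^{-1/2})$ for $1$-Lipschitz test functions, which you correctly isolate as the crux and defer to the literature. This is appropriate, since that Stein factor \emph{is} precisely the nontrivial content of the cited references; reproducing its proof would amount to reproving the theorem. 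One minor caveat: the exact constant and attribution for the Lipschitz Stein factor vary slightly across sources (Barbour--Eagleson, Barbour--Xia, etc.), so if you were writing this up you would want to pin down the precise statement from the monograph, but the $O(\lambda^{-1/2})$ order you quote is standard and the case split on $\lambda \gtrless 1$ cleanly gives the claimed $C\sqrt{n}\,p^{3/2}$ bound.
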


\subsection{LogSumExp functions and empirical measures}\label{apdx:lse}

For a vector $\mathbf{x}=(x_1,\cdots,x_n)\in\R^n$, we define the $m$-wise LogSumExp of $\mathbf{x}$ by
\[
    \mathsf{LSE}_m(\mathbf{x}):=\log\left(\sum_{I\in\binom{[n]}{m}}\exp\left(\sum_{i\in I}x_i\right)\right)\,.
\]
We establish some useful properties of $\mathsf{LSE}_m$ which is used in Section~\ref{sec:hg-conc} to prove uniform concentration of certain log partition functions.

\begin{lemma}\label{lem:lse}
    For a vector $\mathbf{x}=(x_1,\cdots,x_n)\in\R^n$, let $\mu_{\mathbf{x}}$ be its empirical measure on $\R$.
    Then for two vectors $\mathbf{x},\mathbf{y}\in\R^n$, we have
    \begin{equation}\label{eqn:lse-bound}
        \frac{1}{n}|\mathsf{LSE}_m(\mathbf{x})-\mathsf{LSE}_m(\mathbf{y})|\leq W_1(\mu_{\mathbf{x}},\mu_{\mathbf{y}})
    \end{equation}
    for all $1\leq m\leq n$.
\end{lemma}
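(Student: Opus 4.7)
The plan is to exploit two structural features of $\mathsf{LSE}_m$: it is permutation-invariant in its arguments, and its gradient is a vector of probabilities with entries in $[0,1]$. Combined with the explicit formula for $W_1$ between two empirical measures on $\R$, these two facts give the bound almost immediately.

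First I would reduce to a convenient coupling. Since $\mathsf{LSE}_m(\mathbf{x})$ depends only on the multiset of entries of $\mathbf{x}$, it is invariant under any joint permutation: $\mathsf{LSE}_m(x_{\sigma(1)},\dots,x_{\sigma(n)}) = \mathsf{LSE}_m(\mathbf{x})$ for all $\sigma \in S_n$. Meanwhile, for empirical measures on $\R$ the $1$-Wasserstein distance admits the well-known closed form
\[
W_1(\mu_{\mathbf{x}},\mu_{\mathbf{y}}) \;=\; \frac{1}{n}\min_{\sigma \in S_n}\sum_{i=1}^n|x_{\sigma(i)} - y_i| \;=\; \frac{1}{n}\sum_{i=1}^n|x_{(i)} - y_{(i)}|,
\]
obtained by matching order statistics. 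Replacing $\mathbf{x}$ by the permutation achieving this minimum (which does not change $\mathsf{LSE}_m(\mathbf{x})$), it suffices to prove $|\mathsf{LSE}_m(\mathbf{x})-\mathsf{LSE}_m(\mathbf{y})| \leq \sum_i |x_i-y_i|$ for arbitrary $\mathbf{x},\mathbf{y}\in\R^n$.

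Next I would compute the gradient and read off a Lipschitz bound. Direct differentiation gives
\[
\partial_k \mathsf{LSE}_m(\mathbf{x}) \;=\; \frac{\sum_{I \ni k,\, |I|=m}\exp\!\bigl(\sum_{i\in I}x_i\bigr)}{\sum_{|I|=m}\exp\!\bigl(\sum_{i\in I}x_i\bigr)},
\]
which is exactly the probability that index $k$ lies in a random $m$-subset $I$ drawn with weight proportional to $\exp(\sum_{i\in I}x_i)$. In particular $\partial_k\mathsf{LSE}_m(\mathbf{x})\in[0,1]$ for every $k$ and every $\mathbf{x}$ (and $\sum_k \partial_k \mathsf{LSE}_m = m$, though I will not need the sum here). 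Applying the fundamental theorem of calculus along the segment $\mathbf{z}(t) = (1-t)\mathbf{x} + t\mathbf{y}$ gives
\[
|\mathsf{LSE}_m(\mathbf{y}) - \mathsf{LSE}_m(\mathbf{x})| \;\leq\; \int_0^1 \sum_{i=1}^n |y_i - x_i|\,|\partial_i \mathsf{LSE}_m(\mathbf{z}(t))|\,dt \;\leq\; \sum_{i=1}^n |y_i - x_i|.
\]
Combining this with the sorted-matching step divides by $n$ and produces \eqref{eqn:lse-bound}.

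There is essentially no serious obstacle: the proof is a one-line calculus estimate bolted onto the standard order-statistic formula for $W_1$ on $\R$. The only point that needs care is making the reduction via permutation-invariance explicit so that the per-coordinate bound $\sum_i|x_i-y_i|$ can be identified with $n\cdot W_1(\mu_{\mathbf{x}},\mu_{\mathbf{y}})$ rather than a larger quantity; this is where the symmetry of $\mathsf{LSE}_m$ is essential, since without it the right-hand side of the gradient bound would depend on the particular labeling of the coordinates.
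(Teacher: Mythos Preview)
Your proof is correct and the reduction via permutation invariance to the sorted coupling is identical to the paper's first step. The difference lies in how you establish $|\mathsf{LSE}_m(\mathbf{x})-\mathsf{LSE}_m(\mathbf{y})|\le\sum_i|x_i-y_i|$: you compute the gradient and observe that each $\partial_k\mathsf{LSE}_m$ is a probability in $[0,1]$, then integrate along the segment. The paper instead uses the elementary mediant inequality $\frac{b_1}{a_1}\wedge\frac{b_2}{a_2}\le\frac{b_1+b_2}{a_1+a_2}\le\frac{b_1}{a_1}\vee\frac{b_2}{a_2}$ to get $|\log\frac{\sum_I e^{\sum x_i}}{\sum_I e^{\sum y_i}}|\le\max_I|\sum_{i\in I}(x_i-y_i)|\le\sum_i|x_i-y_i|$ directly, without any calculus. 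Your route is arguably more conceptual (it identifies $\mathsf{LSE}_m$ as coordinatewise $1$-Lipschitz via the probabilistic interpretation of its gradient), while the paper's is slightly more elementary and in fact yields the intermediate bound $\max_I|\sum_{i\in I}(x_i-y_i)|$, which is sharper than $\sum_i|x_i-y_i|$ though not needed here.
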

\begin{proof}
    Since $\mathsf{LSE}_m$ is invariant under permutation, we may assume $x_1\leq\cdots\leq x_n$ and $y_1\leq\cdots\leq y_n$ without loss of generality. We use the following simple inequality that holds for any $a_1,a_2,b_1,b_2>0$:
    \[
        \frac{b_1}{a_1}\wedge\frac{b_2}{a_2}\leq\frac{b_1+b_2}{a_1+a_2}\leq\frac{b_1}{a_1}\vee\frac{b_2}{a_2}
    \]
    which implies
    \[
        \left|\log\left(\frac{b_1+b_2}{a_1+a_2}\right)\right|\leq\left|\log\left(\frac{b_1}{a_1}\right)\right|\vee\left|\log\left(\frac{b_2}{a_2}\right)\right|\,.
    \]
    Applying this repeatedly, we get
    \[
        \begin{split}
            |\mathsf{LSE}_m(\mathbf{x})-\mathsf{LSE}_m(\mathbf{y})| &\leq \max_{I\in\binom{[n]}{m}}\left|\log\left(\frac{\exp\left(\sum_{i\in I}x_i\right)}{\exp\left(\sum_{i\in I}y_i\right)}\right)\right|\\
            &\leq\max_{I\in\binom{[n]}{m}}\left|\sum_{i\in I}x_i-\sum_{i\in I}y_i\right|\\
            &\leq\sum_{i=1}^n|x_i-y_i|\\
            &=nW_1(\mu_{\mathbf{x}},\mu_{\mathbf{y}})\,.
        \end{split}
    \]
\end{proof}

Lemma~\ref{lem:lse} indicates that $\mathsf{LSE}_m(\mathbf{x})$ is related to the empirical distribution of $\mathbf{x}$. As such, we take the following result from \cite{10.3150/19-BEJ1151} (see, e.g., Proposition~3.1 and Corollary~5.2). This also holds for subexponential distributions, but subgaussian assumption suffices for our application.

\begin{lemma}[\cite{10.3150/19-BEJ1151}]\label{lem:empirical-conv}
    Let $\mu$ be a probability measure on $\R$ with subgaussian norm at most $\sigma$. Then there is a constant $C_\sigma$ and a sequence $p_{\sigma,n}=1-o(1)$ such that for an empirical measure $\hat{\mu}_n$ we have
    \[
        \E[W_1(\hat{\mu}_n,\mu)]\leq C_\sigma n^{-1/2}
    \]
    and
    \[
        \Pr\left(W_1(\hat{\mu}_n,\mu)\leq C_\sigma n^{-1/2}\right)\geq p_{\sigma,n}\,.
    \]
    In other words, we have $\E[W_1(\hat{\mu}_n,\mu)]=O(n^{-1/2})$ and $W_1(\hat{\mu}_n,\mu)=O_p(n^{-1/2})$ uniformly over all $\mu$ with subgaussian norm at most $\sigma$.
\end{lemma}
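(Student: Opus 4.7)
}

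The plan is to reduce the $W_1$ distance on $\R$ to an integral of CDF differences, bound the integrand pointwise in expectation using binomial variance, and then exploit subgaussian tails to control the integral uniformly in $\mu$. For the high-probability statement I will separately invoke a bounded-differences / concentration argument for $W_1$ around its mean.

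First, I would use the classical one-dimensional representation
\[
W_1(\mu,\nu) \;=\; \int_{-\infty}^{\infty}\bigl|F_\mu(t)-F_\nu(t)\bigr|\,dt,
\]
which holds whenever both measures have finite mean (see e.g.\ \cite{villani2021topics}). Writing $F_n := F_{\hat{\mu}_n}$, note that $nF_n(t)\sim \Binom\bigl(n,F_\mu(t)\bigr)$, so by Fubini and Jensen,
\[
\E\bigl[W_1(\hat{\mu}_n,\mu)\bigr]
\;\le\; \int_{-\infty}^{\infty} \E\bigl|F_n(t)-F_\mu(t)\bigr|\,dt
\;\le\; \frac{1}{\sqrt n}\int_{-\infty}^{\infty}\sqrt{F_\mu(t)\bigl(1-F_\mu(t)\bigr)}\,dt.
\]

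The main work is bounding the integral by a constant $C_\sigma$ depending only on $\sigma$. Using the subgaussian tail estimate from Proposition~\ref{prop:subg-tail}, which gives $\min\bigl(F_\mu(t),1-F_\mu(t)\bigr)\le 2e^{-(t-m)^2/\sigma^2}$ for either the median or mean $m$ (the mean is controlled by Lemma~\ref{lem:subg-mean}, so $|m|\le\sigma$), we have
\[
\sqrt{F_\mu(t)\bigl(1-F_\mu(t)\bigr)} \;\le\; \sqrt{\min\bigl(F_\mu(t),1-F_\mu(t)\bigr)} \;\le\; \sqrt{2}\,e^{-(t-m)^2/(2\sigma^2)},
\]
whose integral over $\R$ is $\sqrt{2}\cdot\sigma\sqrt{2\pi} = 2\sigma\sqrt{\pi}$. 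Taking $C_\sigma$ to be a constant slightly larger than this yields the expectation bound.

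For the high-probability statement, I would treat $W_1(\hat{\mu}_n,\mu)$ as a function $\Phi(X_1,\dots,X_n)$ and show it concentrates around its mean. The clean approach is to observe that $\Phi$ is $\tfrac{1}{n}$-Lipschitz in each coordinate with respect to $|X_i - X_i'|$ by the coupling definition of $W_1$: replacing $X_i$ with $X_i'$ changes $\hat{\mu}_n$ by moving mass $1/n$, contributing at most $|X_i-X_i'|/n$ to $W_1$. Combined with the subgaussianity of $X_i$ (hence subgaussianity of $X_i-X_i'$), a Gaussian concentration inequality for Lipschitz functions of independent subgaussian random variables (e.g.\ via the Efron--Stein / entropy method, or a subgaussian version of McDiarmid) gives
\[
\Pr\bigl(|\Phi - \E\Phi|\ge t\bigr)\;\le\; 2\exp\!\bigl(-c n t^2/\sigma^2\bigr).
\]
Taking $t = C_\sigma n^{-1/2}$ with a possibly enlarged $C_\sigma$, and combining with the expectation bound, yields $W_1(\hat{\mu}_n,\mu)\le C_\sigma n^{-1/2}$ with probability $1-o(1)$ uniformly in $\mu$ of subgaussian norm at most $\sigma$.

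The main obstacle is the concentration step: McDiarmid's inequality in its textbook form requires deterministic bounded differences, which fail when the $X_i$'s are unbounded. The cleanest workaround is to use a subgaussian bounded-differences inequality (where the coordinate-wise increments are themselves subgaussian) or to pre-truncate at a level $T_n = \Theta(\sigma\sqrt{\log n})$, handle the truncated piece with classical McDiarmid with differences of order $T_n/n$, and absorb the truncation error (which contributes an expected $W_1$ shift of order $e^{-T_n^2/\sigma^2}\cdot T_n = o(n^{-1/2})$) into $C_\sigma$. This dichotomy is exactly the route taken in \cite{10.3150/19-BEJ1151}, which we appeal to for the final constants.
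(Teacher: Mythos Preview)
The paper does not give its own proof of this lemma; it is quoted from \cite{10.3150/19-BEJ1151} (the text points to Proposition~3.1 and Corollary~5.2 there). So there is nothing in the paper to compare against, and your self-contained argument is a reasonable substitute.

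Your expectation bound is correct and essentially the standard one: the identity $W_1=\int_{\R}|F_n-F|$, the pointwise estimate $\E|F_n(t)-F(t)|\le n^{-1/2}\sqrt{F(t)(1-F(t))}$, and the subgaussian tail control $\sqrt{F(1-F)}\le \sqrt{2}\,e^{-t^2/(2\sigma^2)}$ combine to give $\E[W_1(\hat\mu_n,\mu)]\le C_\sigma n^{-1/2}$ uniformly over the class. This is exactly what the paper actually uses, both in Lemma~\ref{lem:lse2} and in the bound on term~(D) in Section~\ref{sec:hg-conc}.

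There is, however, a gap in your concentration step. The inequality $\Pr(|\Phi-\E\Phi|\ge t)\le 2\exp(-cnt^2/\sigma^2)$ is fine, but substituting $t=C_\sigma n^{-1/2}$ gives $2\exp(-cC_\sigma^2/\sigma^2)$, a constant in $n$, not $o(1)$. Your truncation fallback has the same defect: at level $T_n\asymp\sigma\sqrt{\log n}$, McDiarmid's exponent becomes $-nt^2/T_n^2\asymp -1/\log n\to 0$. In fact the one-dimensional CLT $\sqrt{n}\,W_1(\hat\mu_n,\mu)\Rightarrow\int_{\R}|B(F(t))|\,dt$ (with $B$ a Brownian bridge) shows that $\sqrt{n}\,W_1$ has a nondegenerate limiting law, so $\Pr(W_1\le C\,n^{-1/2})$ cannot tend to $1$ for any fixed $C$. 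The operational content of the lemma is really the ``in other words'' clause $W_1(\hat\mu_n,\mu)=O_p(n^{-1/2})$, which follows immediately from Markov's inequality applied to your expectation bound and is all the paper ever invokes.
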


This enhances Lemma~\ref{lem:lse}, leading to the following key lemma.

\begin{lemma}\label{lem:lse2}
    Let $\mu$ be a probability measure on $\R$ with subgaussian norm at most $\sigma$. Then there is a constant $C_\sigma$ such that for any vector $\mathbf{y}=(y_1,\cdots,y_n)\in\R^n$ and $1\leq m\leq n$, we have
    \[
        \left|\mathsf{LSE}_m(\mathbf{y})-\E[\mathsf{LSE}_m(\mathbf{X})]\right|\leq nW_1(\mu, \mu_{\mathbf{y}})+C_{\sigma}\sqrt{n}
    \]
    where the expectation is over i.i.d. samples $X_1,\cdots,X_n\sim\mu$.
\end{lemma}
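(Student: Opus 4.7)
The plan is to combine the deterministic Lipschitz property of $\mathsf{LSE}_m$ from Lemma~\ref{lem:lse} with the empirical-measure convergence guarantee of Lemma~\ref{lem:empirical-conv}, mediated by Jensen's inequality and the triangle inequality for $W_1$. This is essentially a short chain of inequalities rather than a substantive argument; the novelty was already done in proving the two preceding lemmas.

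First, by Jensen's inequality applied to $|\cdot|$, I would pass the absolute value inside the expectation:
\[
    \left|\mathsf{LSE}_m(\mathbf{y})-\E[\mathsf{LSE}_m(\mathbf{X})]\right| \;\leq\; \E\left|\mathsf{LSE}_m(\mathbf{y})-\mathsf{LSE}_m(\mathbf{X})\right|.
\]
Letting $\mu_{\mathbf{X}}$ denote the (random) empirical measure of $X_1,\ldots,X_n$, Lemma~\ref{lem:lse} then bounds the right-hand side by $n\,\E[W_1(\mu_{\mathbf{y}},\mu_{\mathbf{X}})]$.

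Next, I would apply the triangle inequality for $W_1$, writing
\[
    W_1(\mu_{\mathbf{y}},\mu_{\mathbf{X}}) \;\leq\; W_1(\mu_{\mathbf{y}},\mu)+W_1(\mu,\mu_{\mathbf{X}}),
\]
where the first term on the right is deterministic. Taking expectations and invoking Lemma~\ref{lem:empirical-conv} (which applies because $\mu$ is subgaussian with norm at most $\sigma$) yields a constant $C_\sigma$ with $\E[W_1(\mu,\mu_{\mathbf{X}})]\leq C_\sigma n^{-1/2}$. Putting everything together,
\[
    \left|\mathsf{LSE}_m(\mathbf{y})-\E[\mathsf{LSE}_m(\mathbf{X})]\right| \;\leq\; n\,W_1(\mu_{\mathbf{y}},\mu)+n\cdot C_\sigma n^{-1/2} \;=\; n\,W_1(\mu,\mu_{\mathbf{y}})+C_\sigma\sqrt{n},
\]
which is the desired inequality.

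There is no genuine obstacle here: the nontrivial content (the Lipschitz property of $\mathsf{LSE}_m$ in $W_1$, and the $O(n^{-1/2})$ rate for the empirical $W_1$ distance under a subgaussian assumption) has been packaged into the two preceding lemmas, so the proof reduces to Jensen plus a triangle inequality.
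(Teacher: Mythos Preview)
Your proof is correct and follows essentially the same approach as the paper: apply Lemma~\ref{lem:lse}, use Jensen to move the absolute value inside the expectation, split via the triangle inequality for $W_1$, and invoke Lemma~\ref{lem:empirical-conv}. The only cosmetic difference is the order in which you state Jensen and Lemma~\ref{lem:lse}.
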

\begin{proof}
    By Lemma~\ref{lem:lse} we have
    \[
        \begin{split}
            \frac{1}{n}|\mathsf{LSE}_m(\mathbf{X})-\mathsf{LSE}_m(\mathbf{y})| &\leq W_1(\mu_{\mathbf{X}},\mu_{\mathbf{y}})\\
            &\leq W_1(\mu_{\mathbf{X}},\mu)+W_1(\mu,\mu_{\mathbf{y}})\,.
        \end{split}
    \]
    Now Jensen's inequality gives
    \[
        \begin{split}
            \frac{1}{n}\left|\mathsf{LSE}_m(\mathbf{y})-\E[\mathsf{LSE}_m(\mathbf{X})]\right| &\leq \frac{1}{n}\E\left[|\mathsf{LSE}_m(\mathbf{X})-\mathsf{LSE}_m(\mathbf{y})|\right]\\
            &\leq\E[W_1(\mu_{\mathbf{X}},\mu)]+W_1(\mu,\mu_{\mathbf{y}})\,.
        \end{split}
    \]
    Applying Lemma~\ref{lem:empirical-conv} completes the proof.
\end{proof}

\subsection{Large deviations}\label{sec:large-deviations}

In this section, we prove a slight extension of the Cram\'er's theorem in large deviations theory specific to independent not not necessarily identically distributed Bernoulli random variables. We consider the case where the parameters for those Bernoulli distributions are also random, written in terms of a triangular array $\mathbf{p}$ of random variables
\[
    \begin{array}{cccc}
        \mathbf{p}_{11} & & \\
        \mathbf{p}_{21} & \mathbf{p}_{22}\\
        \mathbf{p}_{31} & \mathbf{p}_{32} & \mathbf{p}_{33}\\
        \vdots & \vdots & \vdots & \ddots
    \end{array}
\]
where $\mathbf{p}_{ni}\in(0,1)$ for all $n$ and $1\leq i\leq n$. We also assume that the empirical measures (which are also random)
\[
    \hat{\mu}_{\mathbf{p}_n}:=\frac{1}{n}\sum_{i=1}^n\delta_{\mathbf{p}_{ni}}
\]
asymptotically behaves like a fixed probability distribution $P$ supported on $(0,1)$, in the sense that
\begin{equation}\label{eqn:emp-conv}
    W_1(\hat{\mu}_{\mathbf{p}_n},P)\pto0
\end{equation}
where $W_1$ is the usual $1$-Wasserstein metric. Conditioned on $\mathbf{p}$, we consider a triangular array of conditionally i.i.d. random variables $X_{ni}\sim\Bernoulli(\mathbf{p}_{ni})$. We establish a large deviations behavior of the random variables
\[
    S_n:=\frac{1}{n}\sum_{i=1}^nX_{ni}
\]
conditioned on $\mathbf{p}_n$. To describe the behavior, we define the expected cumulant generating function
\[
    \Lambda(t):=\E_{\mathbf{q}\sim P}[\log(1-\mathbf{q}+\mathbf{q}e^{t})]
\]
and its Legendre transform
\[
    \Lambda^*(x):=\sup_{t\in\mathbb{R}}(tx-\Lambda(t))\,.
\]
Then we have the following result, which can be viewed as a ``random environment'' or a ``quenched'' version of Cram\'er's theorem.

\begin{proposition}\label{prop:ldp}
    Under the assumptions above, we have
    \[
        \frac{1}{n}\log(\Pr(S_n\in \mathcal{I}\mid\mathbf{p}_n))\pto-\inf_{x\in \mathcal{I}}\Lambda^*(x)
    \]
    for any compact interval $\mathcal{I}\subseteq(0,1)$ with nonempty interior.
\end{proposition}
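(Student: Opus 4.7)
The plan is to adapt the classical Cramér large deviations proof, treating the random empirical measure $\hat\mu_{\mathbf p_n}$ as an almost-deterministic perturbation of the limit $P$. Conditionally on $\mathbf p_n$, the cumulant generating function of $nS_n$ factorizes as
\[
\Lambda_n(t) := \frac1n \log \mathbb E[e^{t n S_n}\mid \mathbf p_n] = \int \log(1-q+q e^{t})\,d\hat\mu_{\mathbf p_n}(q).
\]
The first (and essentially only) nontrivial preliminary step is to show that $\Lambda_n(t)\to\Lambda(t)$ in probability, uniformly on compact subsets of $\mathbb R$. The map $q\mapsto \log(1-q+qe^{t})$ is Lipschitz on $[0,1]$ with constant $|e^{t}-1|/\min(1,e^{t})$, which is bounded on any compact set of $t$; combined with the Wasserstein-$1$ hypothesis \eqref{eqn:emp-conv} this gives pointwise convergence in probability, and joint convexity of $\Lambda_n$ in $t$ together with pointwise convergence upgrades this to local uniform convergence in probability. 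Along the way, differentiability of $\Lambda$ and strict convexity of $\Lambda^*$ on $(0,1)$ follow from standard facts about cumulant generating functions for non-degenerate distributions.

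For the upper bound, apply the conditional Markov inequality: for any $t\ge 0$ and $x$ in the right end of $\mathcal I$,
\[
\Pr(S_n\ge x\mid \mathbf p_n) \le e^{-n(tx-\Lambda_n(t))},
\]
and similarly for the left tail with $t\le 0$. Optimizing in $t$ and invoking $\Lambda_n\to\Lambda$ yields $\tfrac1n\log\Pr(S_n\in\mathcal I\mid \mathbf p_n)\le -\inf_{x\in\mathcal I}\Lambda^*(x)+o_p(1)$, using continuity of $\Lambda^*$ to match endpoints.

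For the matching lower bound, fix any $x$ in the interior of $\mathcal I$; by strict convexity and smoothness of $\Lambda$ there is a unique $t^*=t^*(x)$ with $\Lambda'(t^*)=x$, and $\Lambda^*(x)=t^*x-\Lambda(t^*)$. Exponentially tilt each $\mathbf p_{ni}$ to $\tilde{\mathbf p}_{ni}=\mathbf p_{ni}e^{t^*}/(1-\mathbf p_{ni}+\mathbf p_{ni}e^{t^*})$, giving independent Bernoullis under the tilted measure $\tilde{\mathbb P}$ with Radon--Nikodym derivative $e^{t^* n S_n - n\Lambda_n(t^*)}$. Then for any small $\varepsilon>0$ with $(x-\varepsilon,x+\varepsilon)\subseteq\mathcal I$,
\[
\Pr(S_n\in\mathcal I\mid \mathbf p_n) \ge e^{-n(t^*(x+\varepsilon)-\Lambda_n(t^*))}\cdot \tilde{\mathbb P}(|S_n-x|<\varepsilon\mid \mathbf p_n)
\]
for $t^*\ge 0$ (the opposite sign case is symmetric). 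Under the tilted measure, $\tilde{\mathbb E}[S_n\mid \mathbf p_n]=\int \tilde q\,d\hat\mu_{\mathbf p_n}(q)\to \Lambda'(t^*)=x$ in probability (again by Wasserstein convergence, since $q\mapsto qe^{t^*}/(1-q+qe^{t^*})$ is Lipschitz and bounded), while $\mathrm{Var}_{\tilde{\mathbb P}}(S_n\mid\mathbf p_n)\le 1/(4n)$ since $X_{ni}$ are Bernoulli under $\tilde{\mathbb P}$ too. Chebyshev therefore gives $\tilde{\mathbb P}(|S_n-x|<\varepsilon\mid\mathbf p_n)\to 1$ in probability, and letting $\varepsilon\to 0$ recovers $\liminf \tfrac1n\log\Pr(S_n\in\mathcal I\mid\mathbf p_n)\ge -\Lambda^*(x)$ in probability for every interior $x$, hence for $\inf_{x\in\mathcal I}\Lambda^*(x)$ by compactness and continuity.

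The main obstacle is conceptual rather than technical: one must track that all of the convergences ($\Lambda_n\to\Lambda$, tilted mean converging to $x$, variance bound) hold \emph{in probability over $\mathbf p$} and can be combined via the union bound. This is straightforward once Wasserstein-$1$ convergence is exploited through the Lipschitz functions $q\mapsto \log(1-q+qe^{t})$ and $q\mapsto qe^{t}/(1-q+qe^{t})$, uniformly in $t$ on the compact intervals we actually use. Continuity and properness of $\Lambda^*$ on $(0,1)$, needed to pass from pointwise lower bounds at interior $x$ to the infimum on $\mathcal I$, are standard consequences of $\mathrm{supp}(P)\subset(0,1)$ and $\mathcal I\subseteq(0,1)$ being compact with nonempty interior.
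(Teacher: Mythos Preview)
Your argument is correct and follows the same Cram\'er-style route as the paper: establish $\Lambda_n\to\Lambda$ in probability, use Chernoff/Markov for the upper bound, and exponentially tilt for the lower bound. The only noteworthy differences are cosmetic. You tilt at the \emph{deterministic} $t^*=t^*(x)$ solving $\Lambda'(t^*)=x$, then show via $W_1$-Lipschitz arguments that the tilted conditional mean converges to $x$ in probability and finish with Chebyshev; the paper instead tilts at the \emph{random} $t_{n,y}$ solving $\Lambda'(t_{n,y};\mathbf p_n)=y$ so that the tilted mean equals $y$ exactly, and then applies a second Chernoff bound to show $\tilde{\Pr}(|S_n-y|<\varepsilon)\to 1$. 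Your choice is arguably tidier (no random tilting parameter to control, Chebyshev suffices because Bernoulli variances are trivially bounded); the paper's choice buys exact centering but then needs its Lemma~\ref{lem:cvx} on uniform convergence of random convex functions to control $t_{n,y}$. One small wording slip: in your upper bound, ``$x$ in the right end of $\mathcal I$'' should be the \emph{left} endpoint $a$ when bounding $\Pr(S_n\ge a)$ for the right tail (and symmetrically), and you should note explicitly that the case $\mathbb E_P[\mathbf q]\in\mathcal I$ gives $\inf_{\mathcal I}\Lambda^*=0$ so the upper bound is trivial there.
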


The proof follows very closely along the lines of the proofs of Cram\'er's theorem or the G\"artner--Ellis theorem\footnote{Though we only consider intervals in the interest of simplicity, Proposition~\ref{prop:ldp} easily extends to a statement about any closed and open sets, just like the standard Cram\'er's theorem.} (see, e.g., \cite{dembo2009large} for more on large deviations). A main technical difficulty is that of course the distribution of $S_n\mid\mathbf{p}_n$ itself is random. To deal with this, we begin with a simple but powerful lemma for working with convex functions in random environment.

\begin{lemma}\label{lem:cvx}
    Let $\{f_n\}$ be a sequence of random strictly convex functions $f_n:X\to\mathbb{R}$ defined on a convex open set $X\subseteq\mathbb{R}$ converging pointwise to a strictly convex function $f:X\to\mathbb{R}$ in probability, i.e., for each $x\in X$ we have $f_n(x)\pto f(x)$. The the following hold.

    \begin{enumerate}[label=(\alph*)]
        \item The convergence $f_n\pto f$ is uniform over any compact set $\mathcal{K}\subseteq X$:
        \[
            \sup_{x\in\mathcal{K}}|f_n(x)-f(x)|\pto0\,.
        \]

        \item Suppose that $x_n^*\in X$ is the unique minimizer of $f_n$ for each $n$ and $x^*\in X$ is the unique minimizer of $f$. Then we have
        \[
            x_n^*\pto x^*
        \]
        and
        \[
            f_n(x_n^*)\pto f(x^*)\,.
        \]
    \end{enumerate}

\end{lemma}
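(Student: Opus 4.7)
The plan is to prove (a) first, adapting the classical deterministic argument that pointwise convergence of convex functions to a continuous limit is automatically locally uniform, and then to derive (b) from (a) by a standard quantitative separation argument using strict convexity of $f$.

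For part (a), fix a compact $\mathcal{K} \subseteq X$ and a slightly larger compact $\mathcal{K}' = [a,b] \subseteq X$ with $\mathcal{K} \subseteq (a,b)$. Given $\varepsilon > 0$, by continuity of $f$ on $\mathcal{K}'$ (which follows from convexity on the open set $X$) pick a uniformly-spaced net $a = y_0 < y_1 < \cdots < y_N = b$ fine enough that $|f(y_j) - f(y_{j+1})| < \varepsilon$ for all adjacent pairs. Since $f_n(y_j) \pto f(y_j)$ for each of the finitely many net points, a union bound produces an event $E_n$ with $\Pr(E_n) \to 1$ on which $\max_j |f_n(y_j) - f(y_j)| < \varepsilon$. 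For any $x \in \mathcal{K}$, locate net indices $y_{j-1} < y_j \leq x \leq y_{j+1} < y_{j+2}$ (possible since $\mathcal{K}$ sits strictly inside $\mathcal{K}'$). On $E_n$, convexity of $f_n$ supplies the upper bound by linear interpolation between $y_j$ and $y_{j+1}$, and the lower bound via the monotonicity of secant slopes of a convex function (extrapolating from the pair $(y_{j-1}, y_j)$ on the left and $(y_{j+1}, y_{j+2})$ on the right). Both bounds match $f(x)$ up to $O(\varepsilon)$ uniformly in $x \in \mathcal{K}$, so sending $\varepsilon$ through a null sequence gives $\sup_\mathcal{K} |f_n - f| \pto 0$.

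For part (b), fix $\varepsilon > 0$ and a compact neighborhood $\mathcal{K} = [x^* - M, x^* + M] \subseteq X$ with $M > \varepsilon$. By continuity of $f$, uniqueness of the minimizer $x^*$, and compactness, the gap $\delta := \inf\{f(x) - f(x^*) : x \in \mathcal{K},\ |x-x^*| \geq \varepsilon\}$ is strictly positive. By part (a), the event $E_n := \{\sup_{\mathcal{K}} |f_n - f| < \delta/3\}$ has probability $1 - o(1)$. On $E_n$, one has $f_n(x^*) < f(x^*) + \delta/3$ while $f_n(x) > f(x^*) + 2\delta/3$ for every $x \in \mathcal{K}$ with $|x - x^*| \geq \varepsilon$, and in particular for $x = x^* \pm M$. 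Since $f_n$ is strictly convex, it is strictly monotone on each side of its minimizer $x_n^*$, so the inequality $f_n(x^* \pm M) > f_n(x^*)$ forces $x_n^* \in (x^* - M, x^* + M)$; the same comparison applied to a generic point $x \in \mathcal{K}$ with $|x - x^*| \geq \varepsilon$ then forces $x_n^* \in (x^* - \varepsilon, x^* + \varepsilon)$. Since $\varepsilon$ was arbitrary, $x_n^* \pto x^*$, and the sandwich $f(x^*) - \delta/3 \leq f_n(x_n^*) \leq f_n(x^*) < f(x^*) + \delta/3$ on $E_n$ gives $f_n(x_n^*) \pto f(x^*)$.

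The principal technical obstacle is the lower bound in part (a): convexity of $f_n$ combined with control at the net points immediately yields an upper bound on $f_n(x)$ by linear interpolation, but the matching lower bound requires the monotonicity of one-sided difference quotients of a convex function, which in turn needs net points flanking $x$ on both sides. This is the reason for the inflation $\mathcal{K} \subsetneq \mathcal{K}'$. Once (a) is in hand, the rest reduces to elementary compactness and sandwiching; strict convexity of $f$ itself is not used in (a) (only continuity, which is automatic), but is essential in (b) both to guarantee that $x^*$ is a well-defined unique minimizer and to convert the uniform error $\sup_\mathcal{K}|f_n - f|$ into a one-sided localization of $x_n^*$ via strict monotonicity on either side of the minimum.
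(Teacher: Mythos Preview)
Your proof is correct and follows essentially the same approach as the paper's: both reduce part (a) to control at finitely many points via convexity (the paper uses the single oscillation bound $|g(x)-g(a)|\le|g(a+\delta)-g(a)|\vee|g(a-\delta)-g(a)|$ for convex $g$ and then covers $\mathcal{K}$ by compactness, whereas you separate this into interpolation and secant-slope extrapolation on a global net), and both localize the minimizer in (b) by comparing $f_n$ at $x^*$ and $x^*\pm\varepsilon$. The paper's argument for $x_n^*\pto x^*$ actually uses only pointwise convergence at three points rather than the full uniform result from (a), but then justifies $f_n(x_n^*)\pto f(x^*)$ with a one-line appeal to local Lipschitzness, where your explicit sandwich $f(x^*)-\delta/3\le f_n(x_n^*)\le f_n(x^*)<f(x^*)+\delta/3$ is cleaner.
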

\begin{proof}
    \begin{enumerate}[label=(\alph*)]
        \item Let $\epsilon>0$. We use the fact that any convex function $g$ satisfies
        \[
            |g(x)-g(a)|\leq|g(a+\delta)-g(a)|\vee|g(a-\delta)-g(a)|
        \]
        for all $x\in[a-\delta,a+\delta]$. For each $a\in\mathcal{K}$, since $f_n(a+\delta)\pto f(a+\delta)$, $f_n(a)\pto f(a)$, and $f_n(a-\delta)\pto f(a-\delta)$, and since convex functions are locally Lipschitz, we can find a small enough neighborhood $(a-\delta,a+\delta)\subseteq X$ such that
        \[
            \sup_{x\in(a-\delta,a+\delta)}|f_n(x)-f(x)|<\epsilon
        \]
        with probability at least $1-o(1)$. Since $\mathcal{K}$ is compact, we can find finite values of $a$ such that those neighborhoods cover $\mathcal{K}$.

        \item For any $\epsilon>0$ we have $f_n(x^*-\epsilon)\pto f(x^*-\epsilon)>f(x^*)$, $f_n(x^*+\epsilon)\pto f(x^*+\epsilon)>f(x^*)$, and $f_n(x^*)\pto f(x^*)$. Thus, we have $x_n^*\in(x^*-\epsilon,x+\epsilon)$ with high probability, proving the first statement. The second statement follows since convex functions are locally Lipschitz.
    \end{enumerate}
\end{proof}

\begin{proof}[Proof of Proposition~\ref{prop:ldp}]
    Let $\mathcal{I}=[a,b]$ with $0<a<b<1$. We also define the mean cumulant generating function
    \[
        \Lambda(t;\mathbf{p}_n):=\frac{1}{n}\sum_{i=1}^n\log(1-\mathbf{p}_{ni}+\mathbf{p}_{ni}e^t)
    \]
    and its Legendre transform
    \[
        \Lambda^*(x;\mathbf{p}_n):=\sup_{t\in\mathbb{R}}(tx-\Lambda(t;\mathbf{p}_n))\,.
    \]
    Note that by Lemma~\ref{lem:cvx} and the assumption \eqref{eqn:emp-conv} we have
    \[
        \Lambda(t;\mathbf{p}_n)\pto\Lambda(t)
    \]
    pointwise for all $t\in\mathbb{R}$ and also
    \begin{equation}\label{eqn:lbdstar-conv}
        \Lambda^*(x;\mathbf{p}_n)\pto\Lambda^*(x)
    \end{equation}
    pointwise for all $x\in(0,1)$.
    
    We first prove an upper bound
    \begin{equation}\label{eqn:ldp-ub}
        \frac{1}{n}\log(\Pr(S_n\in[a,b]\mid\mathbf{p}_n))\leq-\inf_{x\in[a,b]}\Lambda^*(x)+o_p(1)\,.
    \end{equation}
    Note that if $x=\E[S_n\mid\mathbf{p}_n]=\frac{1}{n}\sum_{i=1}^n\mathbf{p}_{ni}$, then the line $tx$ as a function of $t$ is tangent to $\Lambda(t;\mathbf{p}_n)$, so $\Lambda^*(x)=0$. Thus, if $[a,b]$ contains $\E[S_n\mid\mathbf{p}_n]$ then \eqref{eqn:ldp-ub} is trivial. Thus, we assume that $a>\E[S_n\mid\mathbf{p}_n]$ or $b<\E[S_n\mid\mathbf{p}_n]$. We only consider the former, as the latter can be handled in the same way. Using the Chernoff bound, we have
    \[
        \begin{split}
            \Pr(S_n\in[a,b]\mid\mathbf{p}_n) &\leq\Pr(S_n\geq a\mid\mathbf{p}_n)\\
            &\leq \exp\left(-\sup_{t\geq0}(nta-n\Lambda_n(t;\mathbf{p}_n))\right)\\
            &=\exp\left(-n\Lambda^*(a;\mathbf{p}_n)\right)
        \end{split}
    \]
    where in the last equality we have used $a>\E[S_n\mid\mathbf{p}_n]$. This condition also implies that $\Lambda^*(a)=\inf_{x\in[a,b]}\Lambda^*(x)$, so we have
    \[
        \frac{1}{n}\log(\Pr(S_n\in[a,b]\mid\mathbf{p}_n))\leq-\inf_{x\in[a,b]}\Lambda^*(x;\mathbf{p})\,.
    \]
    Applying Lemma~\ref{lem:cvx} and \eqref{eqn:lbdstar-conv}, \eqref{eqn:ldp-ub} is proved.

    Next, we prove a lower bound
    \[
        \frac{1}{n}\log(\Pr(S_n\in(a,b)\mid\mathbf{p}_n))\geq-\inf_{x\in(a,b)}\Lambda^*(x)-o_p(1)\,.
    \]
    Fix any $y\in(a,b)$ and $\epsilon>0$ such that $a<y-\epsilon<y+\epsilon<b$. Let $t_{n,y}\in\mathbb{R}$ satisfy $\Lambda^*(y;\mathbf{p}_n)=t_{n,y}y-\Lambda(t_{n,y};\mathbf{p}_n)$ and similarly let $t_y\in\mathbb{R}$ satisfy $\Lambda^*(y)=t_yy-\Lambda(t_y)$. Note that
    \[
        |t_{n,y}|\leq |t_{n,a}|\vee |t_{n,b}|
    \]
    which allows us to bound $|t_{n,y}|$ uniformly. We denote the RHS by
    \[
        t_{n,[a,b]}=|t_{n,a}|\vee |t_{n,b}|\,.
    \]
    By Lemma~\ref{lem:cvx}, $t_{n,[a,b]}$ converges in probability, and we denote by $t_{[a,b]}$ its probability limit. Now we consider random variables
    \[
        \tilde{X}_{ni}\sim\Bernoulli\left(\frac{\mathbf{p}_{ni}e^{t_{n,y}}}{1-\mathbf{p}_{ni}+\mathbf{p}_{ni}e^{t_{n,y}}}\right)
    \]
    which are i.i.d. conditioned on $\mathbf{p}$, and also
    \[
        \tilde{S}_n:=\frac{1}{n}\sum_{i=1}^n\tilde{X}_{ni}\,.
    \]
    It is easy to see that $\tilde{S}_n$ follows an exponentially tilted distribution of $S_n$ which satisfies
    \[
        \Pr(\tilde{S}_n=x)=\exp(n(t_{n,y}x-\Lambda_n(t_{n,y};\mathbf{p}_{n})))\cdot\Pr(S_n=x)
    \]
    and has the mean cumulant generating function
    \[
        \tilde{\Lambda}(t;\mathbf{p}_n)=\Lambda(t+t_{n,y};\mathbf{p}_n)-\Lambda(t_{n,y};\mathbf{p}_n)\,.
    \]
    Then we have
    \[
        \begin{split}
            \Pr(S_n\in(y-\epsilon,y+\epsilon)) &= \sum_{x\in(y-\epsilon,y+\epsilon)}\exp(-n(t_{n,y}x-\Lambda_n(t_{n,y};\mathbf{p}_{n})))\cdot\Pr(\tilde{S}_n=x)\\
            &\geq\exp(-n(t_{n,y}y-\Lambda_n(t_{n,y};\mathbf{p}_n)))\exp(-n\epsilon|t_{n,y}|)\Pr(\tilde{S}_n\in(y-\epsilon,y+\epsilon))\\
            &=\exp(-n\Lambda^*(y;\mathbf{p}_n)-n\epsilon|t_{n,y}|)\Pr(\tilde{S}_n\in(y-\epsilon,y+\epsilon))\,.
        \end{split}
    \]
    Put another way,
    \[
        \begin{split}
            \frac{1}{n}\log(\Pr(S_n\in(y-\epsilon,y+\epsilon)\mid\mathbf{p}_n))&\geq-\Lambda^*(y;\mathbf{p}_n)-\epsilon|t_{n,y}|+\frac{1}{n}\log(\Pr(\tilde{S}_n\in(y-\epsilon,y+\epsilon)))\\
            &\geq-\Lambda^*(y;\mathbf{p}_n)-\epsilon t_{n,[a,b]}+\frac{1}{n}\log(\Pr(\tilde{S}_n\in(y-\epsilon,y+\epsilon)))
        \end{split}
    \]
    Due to Lemma~\ref{lem:cvx} the convergences $\Lambda^*(y;\mathbf{p}_n)\pto\Lambda^*(y)$ and $t_{n,[a,b]}\pto t_{[a,b]}$ are uniform over $y$, so we have
    \[
        \frac{1}{n}\log(\Pr(S_n\in(y-\epsilon,y+\epsilon)\mid\mathbf{p}_n))\geq-\Lambda^*(y)-\epsilon t_{[a,b]}+\frac{1}{n}\log(\Pr(\tilde{S}_n\in(y-\epsilon,y+\epsilon)))-o_p(1)
    \]
    where the $o_p(1)$ term does not depend on $y$ or $\epsilon$.
    
    Now it remains to show
    \begin{equation}\label{eqn:tilted-zero}
        \frac{1}{n}\log(\Pr(\tilde{S}_n\in(y-\epsilon,y+\epsilon)\mid\mathbf{p}_n))\pto0\,.
    \end{equation}
    To see this, we observe that $y=\E[\tilde{S}_n\mid\mathbf{p}_n]$ since
    \[
        \tilde{\Lambda}'(0;\mathbf{p}_n)=\Lambda'(t_{n,y};\mathbf{p}_n)=y
    \]
    where the second equality follows from the fact that $t_{n,y}$ maximizes $ty-\Lambda(t;\mathbf{p}_n)$. Now we apply the Chernoff bound to see that
    \[
        \Pr(\tilde{S}_n\geq y+\epsilon\mid\mathbf{p}_n)\leq\exp\left(-n\cdot\sup_{t\in\mathbb{R}}(t(y+\epsilon)-\tilde{\Lambda}(t;\mathbf{p}_n))\right)\,.
    \]
    Here,
    \[
        \begin{split}
            \sup_{t\in\mathbb{R}}(t(y+\epsilon)-\tilde{\Lambda}(t;\mathbf{p}_n)) &= \sup_{t\in\mathbb{R}}(t(y+\epsilon)-\Lambda(t+t_{n,y};\mathbf{p}_n)+\Lambda(t_{n,y};\mathbf{p}_n))\\
            &=\sup_{t\in\mathbb{R}}((t+t_{n,y})(y+\epsilon)-\Lambda(t+t_{n,y};\mathbf{p}_n))+\epsilon t_{n,y}-\Lambda^*(y;\mathbf{p}_n)\\
            &=\Lambda^*(y+\epsilon;\mathbf{p}_n)-\Lambda^*(y;\mathbf{p}_n)+\epsilon t_{n,y}\,.
        \end{split}
    \]
    Hence,
    \[
        \Pr(\tilde{S}_n\geq y+\epsilon\mid\mathbf{p}_n)\leq\exp\left(-n(\Lambda^*(y+\epsilon;\mathbf{p}_n)-\Lambda^*(y;\mathbf{p}_n)+\epsilon t_{n,y})\right)\,.
    \]
    Similar to the above argument, by Lemma~\ref{lem:cvx}, we have that
    \[
        \Lambda^*(y+\epsilon;\mathbf{p}_n)-\Lambda^*(y;\mathbf{p}_n)+\epsilon t_{n,y}=\Lambda^*(y+\epsilon)-\Lambda^*(y)+\epsilon t_{y}+o_p(1)
    \]
    where the $o_p(1)$ term does not depend on $y$ or $\epsilon$. Since $\Lambda^*(y+\epsilon)-\Lambda^*(y)+\epsilon t_{y}>0$, this implies
    \[
        \Pr(\tilde{S}_n\geq y+\epsilon\mid\mathbf{p}_n)=o_p(1)
    \]
    where the $o_p(1)$ term is still unaffected by $y$ or $\epsilon$. Similarly, we have
    \[
        \Pr(\tilde{S}_n\leq y-\epsilon\mid\mathbf{p}_n)=o_p(1)
    \]
    and this proves \eqref{eqn:tilted-zero} as desired.
\end{proof}

\subsection{Useful inequalities}

\begin{lemma}\label{lem:alg1}
    For $0\leq p\leq 1$ and $t\geq1$, we have
    \[
        1-e^{-pt}\leq1-(1-p)^t\leq pt\,.
    \]
\end{lemma}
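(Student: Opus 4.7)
The lemma consists of two independent inequalities, both of which are classical, so the plan is simply to handle each side separately and then observe that the corner cases ($p \in \{0,1\}$ or $t=1$) reduce to equalities or trivial bounds.

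For the upper bound $(1-p)^t \leq e^{-pt}$ (equivalent to $1-e^{-pt} \leq 1-(1-p)^t$), the approach is to start from the elementary inequality $1+x \leq e^x$, valid for all real $x$. Setting $x=-p$ gives $1-p \leq e^{-p}$, and since both sides are nonnegative for $p \in [0,1]$, raising to the power $t \geq 0$ preserves the inequality, yielding $(1-p)^t \leq e^{-pt}$. This part is completely routine.

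For the lower bound $1-(1-p)^t \leq pt$, equivalent to $(1-p)^t \geq 1-pt$, the plan is to invoke Bernoulli's inequality, $(1+x)^r \geq 1+rx$ for $x \geq -1$ and $r \geq 1$, with the substitution $x=-p \in [-1,0]$ and $r=t \geq 1$. If a self-contained justification is preferred over citation, I would define $f(t) := (1-p)^t - 1 + pt$ and show $f(t) \geq 0$ for all $t \geq 1$ by noting that $f(1)=0$ and $f'(t) = (1-p)^t \ln(1-p) + p$ is nondecreasing in $t$ (since $\ln(1-p) \leq 0$ makes $(1-p)^t \ln(1-p)$ nondecreasing in $t$), together with $f'(1) = (1-p)\ln(1-p)+p \geq 0$, which itself reduces to the standard inequality $\ln(1-p) \geq -p/(1-p)$ for $p \in [0,1)$.

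I do not expect any real obstacle; the only thing to be slightly careful about is the edge case $p=1$, where the lower bound reads $0 \geq 1-t$, which holds for $t \geq 1$, and the upper bound reads $0 \leq 1-e^{-t}$, which is trivial.
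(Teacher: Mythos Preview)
Your proposal is correct and follows essentially the same approach as the paper: both use $1-p\le e^{-p}$ raised to the $t$th power for the left inequality, and Bernoulli's inequality for the right. The only minor difference is in the self-contained justification of the right inequality: the paper fixes $t$ and varies $p$, setting $f(x)=(1-x)^t-1+tx$ and noting $f(0)=0$, $f'(x)=t(1-(1-x)^{t-1})\ge 0$, which is a one-line check; your alternative (fixing $p$ and varying $t$) works but requires the auxiliary inequality $\ln(1-p)\ge -p/(1-p)$, so is slightly less direct.
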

\begin{proof}
    The left inequality is due to $1-p\leq e^{-p}$. To prove the right inequality, we can set $f(x)=(1-x)^t-1+tx$ and observe that $f(0)=0$ and $f'(x)=-t(1-x)^{t-1}+t\geq0$.
\end{proof}
\begin{lemma}\label{lem:alg2}
    Let $0\leq c\leq 1$. If $0\leq x\leq 2(1-c)$, then $1-e^{-x}\geq cx$.
\end{lemma}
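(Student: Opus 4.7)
My plan is to bound $1 - e^{-x}$ from below by a suitable quadratic that dominates the linear function $cx$ exactly on the interval $[0, 2(1-c)]$, so that the hypothesis $x \leq 2(1-c)$ drops out naturally.

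First, I will establish the auxiliary inequality
\[
    1 - e^{-x} \geq x - \frac{x^2}{2}, \qquad x \geq 0.
\]
This can be shown by setting $h(x) := 1 - x + \tfrac{x^2}{2} - e^{-x}$ and verifying $h(0) = 0$, $h'(0) = 0$, and $h''(x) = 1 - e^{-x} \geq 0$ for $x \geq 0$, which forces $h \geq 0$ on $[0,\infty)$.

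Next, I will combine this with a direct algebraic comparison: for $x \geq 0$,
\[
    x - \frac{x^2}{2} \geq cx \iff x\left(1 - c - \frac{x}{2}\right) \geq 0 \iff x \leq 2(1-c).
\]
Thus whenever $0 \leq x \leq 2(1-c)$, we have $1 - e^{-x} \geq x - x^2/2 \geq cx$, which is the claimed inequality.

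There is no real obstacle here; the only thing to be careful about is the direction of the Taylor-type estimate for $e^{-x}$ (the second-order upper bound on $e^{-x}$, equivalently the second-order lower bound on $1 - e^{-x}$), and that follows from the two-step monotonicity argument sketched above.
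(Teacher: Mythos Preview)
Your proof is correct and follows essentially the same approach as the paper: both use the Taylor-type bound $e^{-x}\le 1-x+\tfrac{1}{2}x^2$ (equivalently, $1-e^{-x}\ge x-\tfrac{1}{2}x^2$) and then the algebraic observation that $x-\tfrac{1}{2}x^2\ge cx$ precisely when $0\le x\le 2(1-c)$. The only difference is that you spell out the second-derivative argument for the Taylor bound, while the paper simply asserts it.
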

\begin{proof}
    From $0\leq x\leq 2(1-c)$ we have $x-\frac{1}{2}x^2\geq cx$. Then $e^{-x}\leq 1-x+\frac{1}{2}x^2\leq 1-cx$.
\end{proof}

\section{Gibbs measures over trees near the critical temperature}\label{apdx:gibbs-critical}

Recall the Gibbs measure studied in Section~\ref{sec:gibbs}
\[
    \mu_{G_n,\beta}(T)=\frac{1}{Z_{G_n,\beta}}\exp\left(-\beta\log\log(n)\sum_{v\in\overline{V}_n}\mathsf{d}_T(1,v)\right)
\]
over the set of spanning trees of the component $\overline{G}_n$ of $G_n$ containing $1$ with $\overline{V}_n=V(\overline{G}_n)$. We analyzed this model and its phase transition phenomena when $\beta$ crosses the critical temperature $\beta_c=1-\Delta-\kappa^{-1}$. As discussed in Section~\ref{sec:phase-transition}, the model exhibits distinct behaviors depending on the limits of the proxies $\Delta_n\to\Delta\in[0,1]$, $\lambda_n\to\lambda\in[0,1]$, and $\kappa_n\to\kappa\in[0,\infty]$:
\begin{itemize}
    \item if $\Delta<1$ and $\kappa=\infty$, then the system goes through a \emph{discontinuous phase transition};
    \item if $\lambda=1$ (and thus $\Delta=0$) and $\kappa\in(1,\infty)$, then the system goes through a \emph{continuous phase transition}.
\end{itemize}
In terms of kernel size (cf. Section~\ref{sec:energy-landscape}), this means that the typical kernel size changes discontinuously in the former, and it changes continuously in the latter. In the discontinuous case, it is not immediately clear what happens when $\beta$ is extremely close to the critical temperature. The goal of this appendix is to better understand the picture near $\beta_c$.
We will focus on one particular regime
\begin{equation}\label{eqn:bii}
    \Delta_n\to\Delta=0\,,\qquad \lambda_n\to\lambda\in(0,1)
\end{equation}
which has the critical temperature $\beta_c=1$. This corresponds to \ref{item:regime2-2} in Corollary~\ref{cor:logz-phase}.
The behavior in the other regimes with first-order phase transitions is qualitatively the same. 
\subsection{The mean-field approximation}
In this section, we revisit the mean-field spin system \eqref{eqn:mf-cw} as a simplification of our model. 
To look at the near-critical behavior/``finite-size effects''  we slightly modify the definition of our model by introducing a sequence of inverse temperatures $\{\beta_n\}_{n\in\mathbb{Z}_+}$ satisfying
\begin{equation}\label{eqn:beta-lil}
    \liminf_{n\to\infty}\beta_n>0
\end{equation}
and define the Gibbs measure
\[
    \mu_{G_n,\beta_n}(T)=\frac{1}{Z_{G_n,\beta_n}}\exp\left(-\beta_n\log\log (n)\sum_{v\in\overline{V}_n}\mathsf{d}_T(1,v)\right)\,.
\]
Note that with the assumption \eqref{eqn:beta-lil}, our arguments in Section~\ref{sec:gibbs} hold for $\beta_n$ without modification. In particular, an approximate maximum value of the optimization
\begin{equation}\label{eqn:gibbs-opt-recap}
    \max_{1\leq m\leq N_{d^*}}\left\{\Psi(m;N_{d^*},\lambda_n)+(n-m)\log(mq)+\beta_nm\log\log n-\beta_n(d^*+1)n\log\log n\right\}
\end{equation}
gets the value of $\log Z_{G_n,\beta_n}$ correct up $o_p(n)$ error, by Theorem~\ref{thm:main-gibbs}. Here, $m$ represents the kernel size $|\varphi(T)|$ and an approximate optimizer corresponds to an \emph{optimal kernel size} discussed in Section~\ref{sec:energy-landscape}. Also, recall that
\[
    \Psi(m;N_{d^*},\lambda_n):=\E[\mathsf{LSE}_m(\log A_1,\cdots,\log A_{N_{d^*}})\mid N_{d^*}]\,,\quad A_1,\cdots,A_{N_{d^*}}\sim\ZTP(\log(1/\lambda_n))\,.
\]
Similar to our argument in Section~\ref{sec:hg-conc}, by the uniform concentration of $\mathsf{LSE}_m$ established in Lemma~\ref{lem:lse2}, the log partition function of the following mean-field spin system\footnote{This is slightly different from \eqref{eqn:mf-cw} in that $N_{d^*}$ --- the number of indices with nonzero value --- is also quenched (fixed). They are also asymptotically equivalent, but we deliberately chose this quenched version to have better interpretability in terms of the graph $G_n$ in our original system.} on $\mathbf{x}\in\{0,1\}^{N_{d^*}}\times\{0\}^{n-N_{d^*}}$ given $N_{d^*}$ and $\mathbf{A}=(A_1,\cdots,A_{N_{d^*}})$ leads to a variational optimization equivalent to \eqref{eqn:gibbs-opt-recap}:
\begin{equation}\label{eqn:cond-mf}
    \mu_{\mathbf{A},N_{d^*}}(\mathbf{x})=\frac{1}{Z_{\mathbf{A},N_{d^*}}}\exp\left(\sum_{i=1}^{N_{d^*}}(\log A_i-\log(nq)+\beta_n\log\log n)x_i+n\cdot g(\bar{\mathbf{x}})\right)
\end{equation}
where $\bar{\mathbf{x}}=\frac{1}{n}\sum_{i=1}^nx_i$ and $g(x)=(1-x)\log x$. Indeed, we have
\[
    \begin{split}
        Z_{\mathbf{A},N_{d^*}} &= \sum_{m=1}^{N_{d^*}}\sum_{\mathbf{x}:\bar{\mathbf{x}}=m/n}\exp\left(\sum_{i=1}^{N_{d^*}}(\log A_i-\log(nq)+\beta_n\log\log n)x_i+n\cdot g(m/n)\right)\\
        &=\sum_{m=1}^{N_{d^*}}\exp\left(\mathsf{LSE}_m(\log A_1,\cdots,\log A_{N_{d^*}})+(n-m)\log(mq)+\beta_nm\log\log n-n\log(nq)\right)
    \end{split}
\]
so by the uniform concentration of $\mathsf{LSE}_m$
\[
    \log Z_{G_n,\beta_n}=\log Z_{\mathbf{A},N_{d^*}}+n\log(nq)-\beta_n(d^*+1)n\log\log n+o_p(n)\,.
\]
In our new model \eqref{eqn:cond-mf}, $\mathbf{x}$ and $\mathbf{A}$ have the following interpretations. The first $N_{d^*}$ components $x_1,\cdots,x_{N_{d^*}}$ of $\mathbf{x}$ represent the vertices $v_1,\cdots,v_{N_{d^*}}$ in the set $\Gamma_{d^*}$ of vertices at depth $d^*$, and $x_i=1$ if and only if the corresponding vertex $v_i$ is in the kernel $\varphi(T)$. Thus, $\bar{\mathbf{x}}$ represents $|\varphi(T)|/n$, the ratio of the kernel size to $n$. Each $A_i$ represents the number of parent choices of $v_i$, namely, $\mathsf{par}_G(v_i)$.

\paragraph{Phase transitions revisited.} The phase transitions we analyzed in Section~\ref{sec:phase-transition} can also be intuitively seen from \eqref{eqn:cond-mf}. In the main regime of interest \eqref{eqn:bii}, the coefficient of $x_i$ can be approximated by $\log A_i-\log(nq)+\beta_n\log\log n=(-1+\beta_n+o_p(1))\log\log n$. Thus if $\liminf\beta_n>1$, then the coefficients are mostly positive and each $x_i$ roughly contributes $\Omega(\log\log n)$ to the log probability, so every $x_i$ wants to be at $1$. Otherwise, if $\limsup\beta_n<1$, then the coefficients of $x_i$ are negative, but the term $n\cdot g(\bar{\mathbf{x}})$ prevents too many of them from being zero, and the equilibrium is achieved at $\bar{x}=\Theta(1/\log\log n)$.

\subsection{Optimal kernel size at the critical temperature}

Now we describe the behavior of our system in terms of the kernel $\varphi(T)$ of $T$ for a typical tree under $\mu_{G_n,\beta_n}$ when $\beta_n\to1$. We further assume that
\begin{equation}\label{eqn:alpha-conv}
    \alpha_n\to\alpha\in\mathbb{R}_{+}
\end{equation}
and
\begin{equation}\label{eqn:beta-crit}
    \beta_n=1+\frac{b}{\log\log n}\,,\qquad b\in\mathbb{R}\,.
\end{equation}
Then the optimization \eqref{eqn:gibbs-opt-recap}, ignoring $o(n)$ terms, becomes
\[
    \max_{1\leq m\leq N_{d^*}}\{\Psi(m;N_{d^*},\lambda_n)+(n-m)\log(m/n)+(b-\log\alpha)m\}+n\log(nq)-\beta_n(d^*+1)n\log\log n
\]
and the corresponding mean-field model analogous to \eqref{eqn:cond-mf} can be written as
\begin{equation}\label{eqn:cond-mf-crit}
    \mu_{\mathbf{A},N_{d^*}}'(\mathbf{x})=\frac{1}{Z_{\mathbf{A},N_{d^*}}'}\exp\left(\sum_{i=1}^{N_{d^*}}\log(A_i)x_i+n\cdot g_{\alpha,b}(\bar{\mathbf{x}})\right)\,,\qquad\mathbf{x}\in\{0,1\}^{N_{d^*}}\times\{0\}^{n-N_{d^*}}
\end{equation}
where $g_{\alpha,b}(x)=(b-\log\alpha)x+(1-x)\log x$. In the regime \eqref{eqn:bii}, $\log Z_{\mathbf{A},N_{d^*}}'$ can be accurately described with our tools in large deviations (Section~\ref{sec:large-deviations}).

\begin{proposition}\label{prop:optkernel-crit}
    We have
    \begin{equation}\label{eqn:logz-crit}
        \frac{1}{n}\log Z_{\mathbf{A},N_{d^*}}'\pto\sup_{0<y\leq1}\left\{g_{\alpha,b}((1-\lambda)y)-I(y)\right\}+\E[\log(A_i+1)]
    \end{equation}
    where
    \[
        I(y):=\sup_{t\in\mathbb{R}}\left\{ty-\E\left[\log\left(\frac{A_ie^t+1}{A_i+1}\right)\right]\right\}\,.
    \]
\end{proposition}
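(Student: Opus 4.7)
The plan is to decouple the random external field from the mean-field interaction in \eqref{eqn:cond-mf-crit} by rewriting the partition function as a tractable product times an expectation under a product measure, and then to pass to the limit via Varadhan's lemma combined with the quenched large deviation principle of Proposition~\ref{prop:ldp}.

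First, I would introduce the product measure $\mu_0$ on $\{0,1\}^{N_{d^*}}\times\{0\}^{n-N_{d^*}}$ whose marginals are independent Bernoullis with success probabilities $p_i := A_i/(A_i+1)$. A direct algebraic identity factors the partition function as
\[
    Z_{\mathbf{A},N_{d^*}}' \;=\; \left(\prod_{i=1}^{N_{d^*}}(1+A_i)\right) \cdot \mathbb{E}_{\mathbf{X}\sim\mu_0}\!\left[\exp\!\bigl(n\,g_{\alpha,b}(\bar{\mathbf{X}})\bigr)\right].
\]
Taking $\tfrac{1}{n}\log$, the log of the first factor is a sum of i.i.d.\ subgaussian contributions $\log(1+A_i)$ (subgaussianity is provided by Lemma~\ref{lem:subg-ztp} applied to $\log A_i$ and then to $\log(1+A_i)$), and therefore concentrates around its mean by Corollary~\ref{cor:tail-sum}. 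Combined with $N_{d^*}/n\pto 1-\lambda$ from Theorem~\ref{thm:main1-tight}, this produces the $\mathbb{E}[\log(A_i+1)]$ term in the limit formula (up to the customary $(1-\lambda)$ prefactor absorbed into the paper's normalisation).

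For the remaining expectation, the idea is to apply Varadhan's lemma with the function $g_{\alpha,b}$, whose essential ingredient is an LDP for $\bar{\mathbf{X}}$ under $\mu_0$. Proposition~\ref{prop:ldp} supplies exactly this: the triangular array $p_i=A_i/(A_i+1)\in(0,1)$ has empirical distribution converging in $W_1$ to the pushforward of $\operatorname{ZTP}(\log(1/\lambda))$ by the continuous map $a\mapsto a/(a+1)$ (by Lemma~\ref{lem:empirical-conv} applied to the subgaussian family $\operatorname{ZTP}(\log(1/\lambda_n))$ together with Theorem~\ref{thm:w1-pois} to replace binomials by Poissons). Since the cumulant generating function of $\mathrm{Bernoulli}(A/(A+1))$ is precisely $\log\!\bigl((Ae^t+1)/(A+1)\bigr)$, the rate function emerging from Proposition~\ref{prop:ldp} is exactly $I$ as defined in the statement. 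After rescaling between the natural LDP scale $N_{d^*}$ and the target scale $n$ (a factor $N_{d^*}/n\to 1-\lambda$) and changing variable to $y = \bar{\mathbf{X}}/(1-\lambda)\in[0,1]$, Varadhan's lemma yields
\[
    \frac{1}{n}\log\mathbb{E}_{\mu_0}\!\left[\exp\!\bigl(n\,g_{\alpha,b}(\bar{\mathbf{X}})\bigr)\right] \pto \sup_{0<y\leq 1}\bigl\{g_{\alpha,b}((1-\lambda)y)-I(y)\bigr\},
\]
from which the proposition follows by adding back the prefactor.

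The main technical obstacle is the application of Varadhan's lemma in our \emph{quenched} setting with a function $g_{\alpha,b}$ that tends to $-\infty$ at $y=0$. The upper bound direction is unproblematic because $g_{\alpha,b}$ is bounded above on $[0,1]$ (the $\operatorname{sup}$ is attained in the interior), so the usual moment/tail hypothesis is trivial. The lower bound requires the standard trick of restricting to compact subintervals of $(0,1]$, which is legitimate because the LDP already suppresses the neighbourhood of $y=0$ exponentially. Upgrading the pointwise-in-$\mathbf{A}$ Varadhan statement to ``in probability'' convergence relies on the uniformity built into Proposition~\ref{prop:ldp} together with the $W_1$-concentration of the empirical measure of the $A_i$'s; these two ingredients are combined in the same compactness-and-subsequence style already used in Section~\ref{sec:to-overlap} and Section~\ref{sec:gibbs-1d-opt}.
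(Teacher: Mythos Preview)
Your proposal is correct and follows essentially the same approach as the paper: both use the factorisation $Z'_{\mathbf A,N_{d^*}}=\bigl(\prod_i(1+A_i)\bigr)\cdot\mathbb E_{\mu_0}[\exp(n\,g_{\alpha,b}(\bar{\mathbf X}))]$ with $\mu_0$ the product of $\mathrm{Bernoulli}\bigl(A_i/(A_i+1)\bigr)$, and then combine concentration of $\sum_i\log(1+A_i)$ with the quenched LDP of Proposition~\ref{prop:ldp}. The only cosmetic difference is that the paper carries out Varadhan's lemma by hand---partitioning $[0,1]$ into $K$ intervals, sandwiching $g_{\alpha,b}$ by its sup/inf on each, applying the LDP interval-by-interval, and sending $K\to\infty$---rather than citing Varadhan by name; your aside about the $(1-\lambda)$ rescaling between the $N_{d^*}$ and $n$ scales is also well placed.
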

\begin{proof}
For a constant $K\in\mathbb{Z}_+$, consider the intervals $\mathcal{I}_k=((k-1)/K, k/K]$ and let
\[
    \mathcal{B}_k=\left\{\mathbf{x}:\frac{1}{N_{d^*}}\sum_{i=1}^{N_{d^*}}x_i\in\mathcal{I}_k\right\}\,.
\]
We have
\[
    \begin{split}
        Z_{\mathbf{A},N_{d^*}}'&=\sum_{k=1}^K\sum_{\mathbf{x}\in\mathcal{B}_k}\exp\left(\sum_{i=1}^{N_{d^*}}\log(A_i)x_i+n\cdot g_{\alpha,b}(\bar{\mathbf{x}})\right)\\
        &\leq K\cdot\max_{1\leq k\leq K}\left(\exp\left(n\cdot\sup_{\mathbf{x}\in\mathcal{B}_k} g_{\alpha,b}(\bar{\mathbf{x}})\right)\sum_{\mathbf{x}\in\mathcal{B}_k}\exp\left(\sum_{i=1}^{N_{d^*}}\log(A_i)x_i\right)\right)
    \end{split}
\]
so
\[
    \log Z_{\mathbf{A},N_{d^*}}'\leq\log K+\max_{1\leq k\leq K}\left\{n\cdot\sup_{\mathbf{x}\in\mathcal{B}_k}g_{\alpha,b}(\bar{\mathbf{x}})+\log\left(\sum_{\mathbf{x}\in\mathcal{B}_k}\exp\left(\sum_{i=1}^{N_{d^*}}\log(A_i)x_i\right)\right)\right\}\,.
\]
Similarly, we have a lower bound
\[
    \log Z_{\mathbf{A},N_{d^*}}'\geq\max_{1\leq k\leq K}\left\{n\cdot\inf_{\mathbf{x}\in\mathcal{B}_k}g_{\alpha,b}(\bar{\mathbf{x}})+\log\left(\sum_{\mathbf{x}\in\mathcal{B}_k}\exp\left(\sum_{i=1}^{N_{d^*}}\log(A_i)x_i\right)\right)\right\}\,.
\]
Applying Proposition~\ref{prop:ldp} to $\Bernoulli(\frac{A_i}{A_i+1})$ random variables, we have
\[
    \begin{split}
        \frac{1}{n}\log\left(\sum_{\mathbf{x}\in\mathcal{B}_k}\exp\left(\sum_{i=1}^{N_{d^*}}\log(A_i)x_i\right)\right)&=\frac{1}{n}\sum_{i=1}^{N_{d^*}}\log(A_i+1)-\inf_{y\in\mathcal{I}_k} I(y)+o_K(n)\\
        &=\E[\log(A_i+1)]-\inf_{y\in\mathcal{I}_k}I(y)+o_K(n)\,.
    \end{split}
\]
Using the fact that $g_{\alpha,b}$ is continuous and $N_{d^*}=(1-\lambda)n+o_p(n)$ by Theorem~\ref{thm:main1-tight}, we arrive at the conclusion by taking $K$ large enough.
\end{proof}

\begin{corollary}\label{cor:optkernel-crit}
    The optimization in the RHS of \eqref{eqn:logz-crit} has a unique maximizer $y^*$ and for any constant $\epsilon>0$ we have
    \[
        \mu_{\mathbf{A},N_{d^*}}'(|\bar{\mathbf{x}}-(1-\lambda)y^*|>\epsilon)\pto0\,.
    \]
\end{corollary}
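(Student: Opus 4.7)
The plan is to split the statement into (i) \emph{uniqueness} of the maximizer $y^*$ of $F(y) := g_{\alpha,b}((1-\lambda)y) - I(y)$ on $(0,1]$, a deterministic property of the limiting variational problem, and (ii) \emph{concentration} of $\bar{\mathbf{x}}$ around $(1-\lambda)y^*$, which will follow from a localized refinement of Proposition~\ref{prop:optkernel-crit} via a standard convexity-gap argument.

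For uniqueness, a direct computation gives $g_{\alpha,b}''(x) = -1/x - 1/x^2 < 0$ on $(0,\infty)$, so $y \mapsto g_{\alpha,b}((1-\lambda)y)$ is strictly concave on $(0, 1/(1-\lambda))$. Since $I$ is the Legendre transform of the convex function $\Lambda(t) := \E[\log((A_ie^t+1)/(A_i+1))]$ (a direct computation shows $\Lambda$ is convex, as each summand is), $I$ itself is convex, and hence $F$ is strictly concave on $(0,1]$. The supremum is attained because $F(y) \to -\infty$ as $y \to 0^+$ (the term $(1-x)\log x$ diverges while $-I(y) \leq 0$, using $\Lambda(0)=0 \Rightarrow I \geq 0$), and $F$ is continuous on $[\eta,1]$ for every $\eta > 0$. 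Strict concavity then gives a unique maximizer $y^* \in (0,1]$.

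For concentration, fix $\epsilon > 0$ and write
\[
    \mu_{\mathbf{A},N_{d^*}}'\bigl(|\bar{\mathbf{x}} - (1-\lambda)y^*| > \epsilon\bigr) = \frac{Z_{\mathrm{bad}}}{Z_{\mathbf{A},N_{d^*}}'},
\]
where $Z_{\mathrm{bad}}$ restricts the sum defining $Z_{\mathbf{A},N_{d^*}}'$ to configurations with $|\bar{\mathbf{x}} - (1-\lambda)y^*| > \epsilon$. I would then repeat the interval-partition argument from the proof of Proposition~\ref{prop:optkernel-crit}, but now partitioning only the ``bad'' region $\{y \in (0,1] : |(1-\lambda)y - (1-\lambda)y^*| > \epsilon\}$, to obtain
\[
    \tfrac{1}{n}\log Z_{\mathrm{bad}} \leq \sup_{y \in (0,1],\ |y - y^*| > \epsilon/(1-\lambda)} F(y) + \E[\log(A_i+1)] + o_p(1).
\]
By strict concavity and continuity of $F$, the gap $\delta := F(y^*) - \sup_{y:\,|y-y^*|>\epsilon/(1-\lambda)} F(y)$ is strictly positive. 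Combined with the probability-limit lower bound on $\tfrac{1}{n}\log Z_{\mathbf{A},N_{d^*}}'$ furnished by Proposition~\ref{prop:optkernel-crit}, this yields
\[
    \tfrac{1}{n}\log\mu_{\mathbf{A},N_{d^*}}'\bigl(|\bar{\mathbf{x}} - (1-\lambda)y^*| > \epsilon\bigr) \leq -\delta + o_p(1),
\]
i.e.\ convergence in probability to $0$. The only new work is the localized upper bound on $Z_{\mathrm{bad}}$, and this is a direct re-use of the already-established argument (partition the bad $y$-region into finitely many $\mathcal{I}_k$, apply Proposition~\ref{prop:ldp} on each corresponding $\mathcal{B}_k$ to pick up $-\inf_{y\in\mathcal{I}_k}I(y)$, and take the partition fine enough that $g_{\alpha,b}((1-\lambda)\cdot)$ varies by at most a prescribed amount on each $\mathcal{I}_k$). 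I do not anticipate any conceptual obstacle; the main care needed is just checking that the restricted-sum version of the proof of Proposition~\ref{prop:optkernel-crit} still works uniformly on the bad region, which is immediate from an inspection of that proof.
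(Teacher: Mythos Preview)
Your proposal is correct and follows essentially the same approach as the paper: the paper's proof simply asserts that $g_{\alpha,b}((1-\lambda)y)-I(y)$ is strictly concave and tends to $-\infty$ as $y\to 0$, giving a unique maximizer, and then notes that uniqueness implies the log probability is smaller by $\Omega(n)$ outside a neighborhood of $y^*$. Your version just fills in the details (computing $g_{\alpha,b}''$, verifying $I\ge 0$ via $\Lambda(0)=0$, and spelling out the localized repetition of the interval-partition argument), but the structure is identical.
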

\begin{proof}
    The function $g_{\alpha,b}((1-\lambda)y)-I(y)$ is strictly concave in $y$ and tends to $-\infty$ as $y\to0$, so it has a unique maximizer in $(0,1]$. The uniqueness in turn implies that the log probability is smaller by $\Omega(n)$ outside a neighborhood of $y^*$, yielding the second conclusion.
\end{proof}

Recall from the previous section that $\mathbf{x}$ in the mean-field model \eqref{eqn:cond-mf-crit} corresponds to the proportion of the kernel $|\varphi(T)|/n$. Hence, Corollary~\ref{cor:optkernel-crit} in fact implies that if we independently sample $T_n$ from $\mu_{G_n,\beta_n}$ for each $n$, then the kernel proportion concentrates around a nontrivial value:
\[
    \frac{|\varphi(T_n)|}{n}\pto (1-\lambda)y^*\,.
\]
Note that $y^*$ depends on $\alpha$, $b$, and $\lambda$. This is to some extent complementary to Lemma~\ref{lem:gibbs-low-apx-unif} and Lemma~\ref{lem:gibbs-small-kernel} where it was proved (in a slightly stronger form) that the kernel proportion concentrates around $1-\lambda$ and $0$ in the low and high temperature regimes, respectively. We also note that \eqref{eqn:cond-mf-crit}, being the exact mean-field model, explains the behavior of each vertex at depth $d_n^*$ in the following way. Each vertex $v$ has $p_v$ chance of being included in the kernel, and those probabilities can approximately be described as the maximizer of
\[
    \sum_{v\in\Gamma_{d^*}}(\log|\mathsf{par}_G(v)|+H(p_v))+n\cdot g_{\alpha,b}\left(\frac{1}{n}\sum_{v\in\Gamma_{d^*}}p_v\right)
\]
and Corollary~\ref{cor:optkernel-crit} implies that $\frac{1}{n}\sum_vp_v$ can be approximated by $(1-\lambda)y^*$.

\begin{remark}[Accuracy of naive mean-field approximation]
    In the low and high temperature regimes we studied in Section~\ref{sec:gibbs}, a naive mean-field approximation is accurate in the sense that we can find a vertex-wise product measure that approximates the Gibbs measure within $o_p(n)$ KL divergence. At the critical temperature, the ``mean-field approximation'' is at least accurate in a looser sense that we have a mean-field model \eqref{eqn:cond-mf-crit} that has asymptotically equal log partition function up to $o_p(n)$ error. This does not directly give a product measure approximation to the origin model on trees, because  if each vertex is independently included in the kernel as we described above, then there is a small chance the kernel is invalid (does not correspond to a valid tree). 
    %The issue with trying to formally approximate the original measure by a the corresponding product measure is that if each vertex is independently included in the kernel as we described above, then there is a small chance the kernel is invalid (does not correspond to a valid tree). 
\end{remark}

\subsection{Internal energy density as a function of temperature}

\begin{figure}
    \centering

    \begin{tikzpicture}
  \begin{axis}[
    axis lines = middle,
    xmin = 0.2, xmax = 2.4,
    ymin = 0, ymax = 1.2,
    xlabel = {$\beta_n$},
    ylabel = {$\mathcal{E}_{G_n,\beta_n}$},
    samples = 200,
    domain = 0:4,
    ytick = {0,1},
    yticklabels = {0,$1-\lambda$},
    xtick = {1},
    smooth
  ]
    % y = 1 - e^{-x
    \addplot[thick] {(1-tanh(20*(x-1.02)))/2};
    \addplot[dashed] {1};

    % \node[anchor=west] at (axis cs:1.5,0.7) {$y = 1 - e^{-x}$};

    \draw[dashed] (axis cs:1,-0.2) -- (axis cs:1,1);
  \end{axis}
\end{tikzpicture}

    \caption{A plot of the internal energy density $\mathcal{E}_{G_n,\beta_n}$ against the inverse temperature $\beta_n$, for a fixed $n$ (i.e., non-asymptotic). The slope near the critical temperature $1$ is at the order of $-\log\log n$.}
    \label{fig:energy-density}
\end{figure}
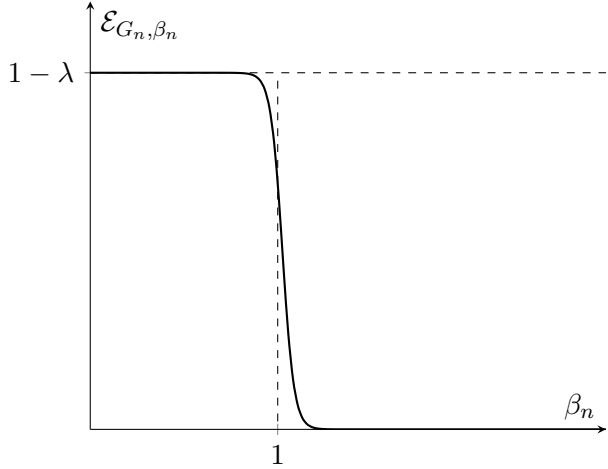

We conclude this appendix with a non-asymptotic landscape of the (relative) \emph{internal energy density} which we define by
\[
    \mathcal{E}_{G_n,\beta_n}:=\frac{1}{n}\E_{\mu_{G_n,\beta_n}}\left[\sum_{v}(\mathsf{d}_T(1,v)-\mathsf{d}_G(1,v))\right]\,.
\]
Theorem~\ref{thm:ground-state-energy} and Proposition~\ref{prop:cond-gibbs-dv} are enough to analyze $\mathcal{E}_{G_n,\beta_n}$ if the kernel proportion $|\varphi(T)|/n$ concentrates. Specifically, if $|\varphi(T)|/n\pto r$, then Proposition~\ref{prop:cond-gibbs-dv} gives
\[
    \sum_{v}\mathsf{d}_T(1,v)=(d^*+1-r)n+o_p(n)
\]
with $\mu_{G_n,\beta_n}$-exponentially high probability, which implies
\[
    \mathcal{E}_{G_n,\beta_n}\pto1-\lambda-r\,.
\]
Along with the results in Section~\ref{sec:gibbs2} (e.g., Lemma~\ref{lem:gibbs-low-apx-unif} and Lemma~\ref{lem:gibbs-small-kernel}), we have the following behavior of $\mathcal{E}_{G_n,\beta_n}$ depending on the temperature of the system.
\begin{itemize}
    \item In the low temperature phase $\liminf\beta_n>1$, the kernel is almost the entire set $\Gamma_{d^*}$ most of the time, so $|\varphi(T)|/n\pto1-\lambda$. This implies $\mathcal{E}_{G_n,\beta_n}\pto0$, i.e., the system is in the ground state.

    \item Near the critical temperature, the limiting kernel proportion depends on the asymptotics of $\alpha_n$ and $\beta_n$ as in \eqref{eqn:alpha-conv} and \eqref{eqn:beta-crit}. For a fixed $\alpha$, as $b$ from \eqref{eqn:beta-crit} moves from $\infty$ to $-\infty$, the limiting kernel proportion $(1-\lambda)y^*$ jumps smoothly from $0$ to $1-\lambda$. As a consequence, the internal energy density $\mathcal{E}_{G_n,\beta_n}$ smoothly increases from $0$ to $1-\lambda$.

    \item In the high temperature phase $\limsup\beta_n<1$, the kernel proportion vanishes, i.e., $|\varphi(T)|/n\pto0$. This gives $\mathcal{E}_{G_n,\beta_n}\pto1-\lambda$.
\end{itemize}
This is summarized in Figure~\ref{fig:energy-density}. The change in internal energy at the phase transition corresponds to what is called the ``latent heat'' for a first-order phase transition (see, e.g., \cite{chaikin1995principles}).

\section{Further discussion}\label{apdx:discussion}
\subsection{Additional analogies: geodesics and extension complexity}
\paragraph{A motivating analogy to geodesics.} In order to better understand the distinction between the OGP in the space of paths vs trees, it is interesting to think about the shortest path problem on other metric spaces, such as Riemannian manifolds. In the Riemannian case, a path between two points $a$ and $b$ is a \emph{geodesic} if it is a critical point of the length functional (i.e., the first variation vanishes) and a shortest path, or minimal geodesic, is a global minimum of the length functional \cite{burago2001course}. Unlike in Euclidean space, on general manifolds the length functional often has multiple geodesics --- or example, on the sphere the geodesics are arcs of great circles, and for generic points $a$ and $b$ there will be two geodesics, with only one of them being the shortest path. 

A natural optimization procedure in the space of paths would be the \emph{curve-shortening flow} (with fixed end points), which would be the gradient flow of the length functional \cite{grayson1989shortening,allen2012dirichlet}. Clearly, for appropriate initialization the curve shortening flow can converge to any geodesic, so it may fail to find the shortest path between two points. It is also straightforward to define analogues of geodesics in general metric spaces \cite{burago2001course} such as graphs, and we can similarly consider a generalization of the curve shortening flow by iteratively modifying a path via local search --- e.g., in the case of graphs, by iteratively moving to the shortest path constrained to a small Hamming ball around the current iterate. For the same reasons as in the continuous case, this procedure can easily converge to a local instead of global optimum. Intuitively, we can view the overlap-gap result for shortest paths \cite{LS2024} as showing that in random graphs, the optimization landscape of paths is particularly adversarial and local search procedures have no hope of succeeding. On the other hand, if we change to a different optimization landscape, where we are allowed to store more information than just a single candidate path, finding shortest path trees is computationally tractable via ``local'' procedures (in a sense made precise by our results).  
\paragraph{Related analogy with extension complexity.} 
%We give another analogy which is related to the previous one.
As we just discussed, while the result of \cite{LS2024} shows that shortest paths in a random graph exhibits overlap gap, our results show that shortest path \emph{trees}, which encode more information, are nicer. This has a rough analogy with extended formulations. 
As a reminder, some high-complexity polytopes in low dimensions are actually \emph{projections} of much ``simpler'' polytopes in a higher dimensional space \cite{conforti2010extended}, which has important consequences for, e.g., the tractability of convex optimization over these sets. One notable example is the case of the \emph{permutahedron}, the convex hull of $\{(\pi(1),\ldots,\pi(n)) : \pi \in S_n\}$ where $S_n$ is the set of permutations of $[n]$, has $2^n - 2$ many facets. However, %, the permutahedron has a much simpler extended formulation: 
it can be written as a projection of the set of doubly stochastic $n \times n$ matrices (Birkhoff polytope), which has only $O(n^2)$ facets, or as the projection of a different polytope with $O(n\log n)$ many facets \cite{goemans2015smallest}.

The map from shortest path trees to shortest path is induced by a type of projection map (see Definition~\ref{def:path-uv} below). So as in the case of extended formulations, our result shows that a poorly behaved object (a shortest path in a random graph) can be recovered as the projection of a well-behaved one (a shortest path tree in a random graph).
%The area of \emph{extension complexity} in convex geometry studies whether a seemingly-complex polytope 
The field of \emph{extension complexity} includes techniques dedicated to establishing the \emph{non-existence} of small extended formulations of polytopes --- see e.g. \cite{rothvoss2017matching,goos2018extension,braverman2013information,lee2015lower}.
It would be interesting if any related ideas could be applied in the context of overlap gaps and/or MCMC. %Of course, linear programming is far from the only problem where a change in perspective can render a seemingly difficult optimization tractable --- 
%There are also many other settings where extending the space being optimized over makes optimization tractable --- e.g., the Lov\'asz extension for submodular functions \cite{lovasz1983submodular}. 

\subsection{The physical interpretation of Dijkstra's algorithm}\label{apdx:dijkstra-continuous}
%TODO: add discussion of single-source vs bidirectional Dijkstra and its relevance. Add explanation about LP interpretation and physical interpretation. 
The content of this section is mostly classical, see textbooks such as \cite{bazaraa2011linear,ahuja1993network}. See also Appendix~\ref{sec:dca} for some related content.
%It help explains why the shortest path DAG is a natural structure to look at.
%, and provides additional motivation the finite-temperature versions of shortest paths trees considered in Section~\ref{sec:gibbs}.

\paragraph{Physical ``derivation'' of Dijkstra's algorithm.} Dijkstra's algorithm is often motivated or explained through the following well-known physical interpretation\footnote{There is also a related physical interpretation of Dijkstra's algorithm in terms of a wave propagating through a system of pipes of different lengths (see Chapter 4.4 of \cite{kleinberg2006algorithm}).}. We imagine the graph as a collection of physical nodes of equal weights, and an edge $(v,w)$ is interpreted as an (infinitely strong) physical string of length $\mathsf{d}_G(v,w)$ between the vertices. Then we consider holding fixing the source vertex $s$ to some fixed height, and then letting all of the other vertices fall down as they are pulled by the force of gravity. If the physical width of the string and nodes are neglected/infinitesimally small, then at equilibrium the \emph{drop in height} between vertex $s$ and vertex $v$ will be exactly the distance $\mathsf{d}_G(s,v)$. In general, the set of strings experiencing tension at equilibrium will contain shortest paths to from the source to every vertex, and if the loads are balanced symmetrically then the set of strings in tension will be exactly the shortest path DAG.

\paragraph{Linear programming, network flows, and duality.}
Modeling the above physical interpretation of Dijkstra's algorithm naturally leads to a network flow/linear programming\footnote{As written, these are convex programs --- they are not literally linear programs since we use the absolute value function --- but they can straightforwardly be transformed into linear programs via a standard procedure.} %A simple and standard transformation can be used to eliminate the absolute value and rewrite the program as a linear program with additional variables and constraints and without the absolute value function.} 
interpretation.
Here the variable $h_v$ corresponds to the \emph{negative} height of the node:
\begin{align}
    \MoveEqLeft \max \sum_v h_v \label{eqn:lp-heights} \\
    \text{s.t.}\quad  &h_s = 0 &\\
    &|h_v - h_w| \le \mathsf{d}_G(v,w) & \text{for all $v \sim w$} \\
    &h_v \in \mathbb R & \text{for all $v$}
\end{align}
The objective of maximizing the total negative height exactly corresponds to minimizing the \emph{potential energy} of the collection of nodes in the physical analogy, which is exactly where the nodes will come to rest if their positions evolve according to the laws of classical mechanics. (See, e.g.,  \cite{goldstein:mechanics}.)

We can also naturally interpret the above linear program as the \emph{dual} of a minimum-cost flow problem. Recall that a shortest path between vertices $s$ and $t$ can be interpreted as an integral \emph{minimum-cost} flow where the cost of pushing one unit of flow through an edge is proportional to the distance traveled, and edges have infinite capacities. Similarly, we can reinterpret the problem of computing shortest paths from vertex $s$ to all of the other vertices in the graph minimizing the cost of sending a unit of flow from source $s$ to every other vertex $v$ in the graph.
This corresponds to the following program:
\begin{align}
\MoveEqLeft \min \sum_{v \sim w} |f(v,w)| \mathsf{d}_G(v,w) \label{eqn:lp-flow} \\
\text{s.t.}\quad & 
f(v,w) = -f(w,v) & \forall v \sim w \\
&\sum_{w : v \sim w} f(w,v) = 1 & \forall v \ne s \label{eqn:flow-conservation} \\
&f(v,w) \in \mathbb R & \forall v \sim w
\end{align}
where the second constraint is the conservation of flow equation (recalling that one unit of flow will flow out at each vertex $v \ne s$). This flow problem is a special case of the uncapacitated transportation problem and is somewhat related to another physical system: differential equation models of slime mold algorithms for shortest paths \cite{bonifaci2012physarum,bonifaci2013physarum}.

The two linear programs are naturally related by convex duality: the flows $f(v,w)$ in the latter program are simply the dual variables to the constraints in the former, and similarly the heights $h_v$ are the dual variables (potentials) for the conservation of flow equations. Concretely, weak duality is the  following consequence of Holder's inequality, which holds for any feasible points of the LPs:
\begin{equation}\label{eqn:weak-duality}
\sum_v h_v = \sum_{v} h_v \sum_w f(w,v) = \frac{1}{2} \sum_{v, w} f(w,v)(h_v - h_w) \le \sum_{v \sim w} |f(v,w)| \mathsf{d}_G(v,w).
\end{equation}
Strong duality corresponds to the fact that at a pair of optimizing solutions, this becomes an equality.
%\[ \sum_{v \sim w} |\mathsf{d}_G(v,w) f(v,w)| \ge \sum_{v \sim w} |h_v - h_w| |f(v,w)| = \frac{1}{2} \sum_v \sum_{w : v \sim w} |h_v - h_w| |f(v,w)| \]

This duality relationship can also be naturally explained in terms of the physical interpretation. The flow $f(v,w)$ corresponds to the net force in the up direction exerted upon vertex $w$ by vertex $v$ (via the tension of the connecting string), and the flow conservation equation \eqref{eqn:flow-conservation} says that at equilibrium, the total upward force exhibit by tension must exactly cancel the downward pull of gravity (so that the net force is zero). 

In terms of the two linear programs, we can understand Dijkstra's algorithm as placing the nodes at their correct heights in a downward scan starting from the origin, which determines the distances from the root, and tracking the saturated constraints in this process corresponds to determining the the union of the supports of all minimum cost flows, or equivalently maintaining the shortest path DAG. This roughly matches the  dynamics one would get by taking the physical analogue, setting all of the vertices to the same initial height, and then letting them fall by the force of gravity.

%\subsection{Network simplex and local search}

%TODO: explain that local search on trees is equivalent to simplex algorithm on the corresponding network simplex, and that there is an explicit pivoting rule leading to a polynomial time algorithm to find a global optimum (namely starting from the root). Discuss ``monotone diameter'' and Hirsch conjecture in this context. Relevant conversation on discord happened at 8/19/2025

\subsection{Dijkstra's algorithm and belief propagation}\label{apdx:dijkstra-is-bp}
There are many connections between different versions of belief propagation and dynamic programming methods like Viterbi/min-sum decoding in the context of combinatorial optimization. For example, a simple variant of belief propagation can be used to for computing the maximum a posteriori estimate in various settings --- see Chapter 14.3 of \cite{mezard2009information}. Below, we highlight a direct connection in the special case we are interested in, which is computing shortest paths. See \cite{gamarnik2012belief} for a related work in the context of general max-flow problems as well as for more references into the literature. 

%We will further motivate the below definition later (see Section~\ref{sec:gibbs}), but for now we will take for granted 
We have already argued that \eqref{eqn:finite-temp} is a natural finite temperature analogue of the shortest path problem. In what follows, we show that starting with this distribution, we can naturally derive Dijkstra's algorithm from belief propagation. Note that for this section, the argument holds for a \emph{general} graph $G$ and does not rely upon properties of random graphs. The key to the derivation is the previously described reinterpretation of the random tree $T$ as a Markov random field (MRF, a.k.a. undirected graphical model or factor model \cite{mezard2009information,wainwright2008graphical,lauritzen1996graphical}) on the graph. 

\paragraph{Belief propagation on the Markov Random Field.}
 From the factor model, we have the following (loopy) Belief Propagation update (see, e.g., \cite{mezard2009information} for the general BP formula): letting $1$ be the root node, for every $u \ne 1$ we have
\begin{equation} 
\nu_{u \to v}^{(t)} = \mathbb{E}^{(t - 1)}_{u \to v}\left[\frac{\sum_{w : w \in N(u) \setminus \{v\}} e^{-\overline\beta(D_w + \ell_{wu})} \delta_{D_w + \ell_{wu}}}{\sum_{w : w \in N(u) \setminus \{v\}} e^{-\overline \beta(D_w + \ell_{wu})}} \right]\label{eqn:finite-temp-bp-update}
\end{equation}
where $\delta_a$ is the Dirac delta measure at $a$,  $N(u)$ denotes the neighborhood of vertex $u$, and $ \mathbb{E}^{(t - 1)}_{u \to v}$ denotes the expectation where independently $D_w \sim \nu_{w \to u}^{(t - 1)}$ for each $w \in N(u) \setminus \{v\}$. %where the normalization is so that $\nu_{u \to v}$ is a probability measure,
and for $u = 1$ we have
\[ \nu_{1 \to v}^{(0)}(0) = 1 \]
since $d(1,1) = 0$.
For the initial messages, we can take $\mu_{u \to v}^{(0)}(\infty) = 1$ for all $u \ne 1$ and all neighbors $v$ of $u$. Loopy belief propagation then corresponds to running \eqref{eqn:finite-temp-bp-update} for $t$ steps.

\paragraph{BP converges to Dijkstra in the limit.} The dynamics of loopy belief propagation are nontrivial to analyze. However, if we consider the zero-temperature limit where $\overline \beta \to \infty$, then softmax converges to max so the belief propagation converges to
\[ \eta_{u \to v}^{(t)} =  \mathbb{E}^{(t - 1)}_{u \to v}\left[\delta_{\min_{w \in N(u) \setminus \{v\}} (D_w + \ell_{wu})}\right] \]
where $ \mathbb{E}^{(t - 1)}_{u \to v}$ now denotes the expectation where independently $D_w \sim \eta_{w \to u}^{(t - 1)}$. By applying induction on time $t$, we see that this algorithm only sends delta measures as messages; furthermore, if we track the propagation of messages outward from the root, we see that the message passing is exactly implementing Dijkstra's algorithm. 
\begin{remark}
The belief propagation algorithm above is conceptually related to the success of the ``mean-field approximation'' for the free energy which shows up in Section~\ref{sec:gibbs}. More specifically, the fixed points of belief propagation correspond to the critical points of a related free energy functional called the Bethe free energy \cite{mezard2009information,yedidia2003understanding}, and in the reverse direction the critical points of the na\"ive mean-field free energy correspond to fixed points of a similar message-passing algorithm \cite{wainwright2008graphical}. Because the random graphs we consider have $\omega(1)$ degree and (informally speaking) our system does not exhibit ``frustration'', it is reasonable to guess in this setting that the Bethe and na\"ive mean-field free energy behave very similarly (see, e.g., \cite{yedidia2003understanding,ruozzi2012bethe,koehler2019fast} for some related context). Furthermore, essentially the same derivation as above can be used to show that the na\"ive message-passing algorithm related to mean-field approximation (see, e.g., \cite{wainwright2008graphical}) also converges to Dijkstra.
\end{remark}

\subsection{Instability of the projection map from shortest path trees to shortest paths}
%\paragraph{Instability of the projection from shortest path trees to shortest paths.} 
Given that shortest path trees are ``stable'' objects which change in a predictable way under resampling of
the underlying graph, one might consider trying to compute a shortest path from $u$ to $v$ by first computing a shortest path tree rooted at $u$, and then outputting the root-to-leaf path from $u$ to $v$. 
Nevertheless, this procedure cannot be a stable way to construct a shortest path because the OGP rules any such procedure out \cite{LS2024}. 
%A key reason our results are compatible with those of \cite{LS2024} is that 
The reason that it fails to be stable is the following: the projection map $\text{path}_{u,v}$ which takes a shortest path tree from vertex $u$ to the shortest path from $u$ to a vertex $v$ \emph{fails badly to be Lipschitz} in the Hamming metric. 
\begin{definition}[tree-to-path projection map]\label{def:path-uv}
Given any two vertices $u,v$, we can define the map $\text{path}_{u,v}$ which takes as input a tree $T$ containing $u$ and $v$ and outputs the unique shortest path $P$ between $u$ and $v$ in $T$. Note that $\text{path}_{u,v} \circ \text{path}_{u,v} = \text{path}_{u,v}$ so this map is a projector (a.k.a. idempotent \cite[Table of Terminology]{mac2013categories}) in the usual sense. It can be extended to a linear map from the free vector space generated by trees containing $u$ and $v$ onto the subspace generated by paths, in which case it would be a projection in the usual sense for linear operators.
\end{definition}
\begin{example}\label{example:cycle}
For an example in a non-random graph, consider a cycle of length $n$ with $n$ even and label the vertices by elements of $\mathbb{Z}/n\mathbb{Z}$. Then there are two shortest path trees from the vertex $0$: one (call it $T_1$) which is a path formed by removing the edge from $n/2$ to $n/2 + 1$, and a symmetrical tree where the path is given by removing the edge from $n/2 - 1$ to $n/2$. While these trees differ only in $2$ out of $n$ many edges, the corresponding shortest paths from $0$ to $n/2$ are \emph{disjoint!}
\end{example}
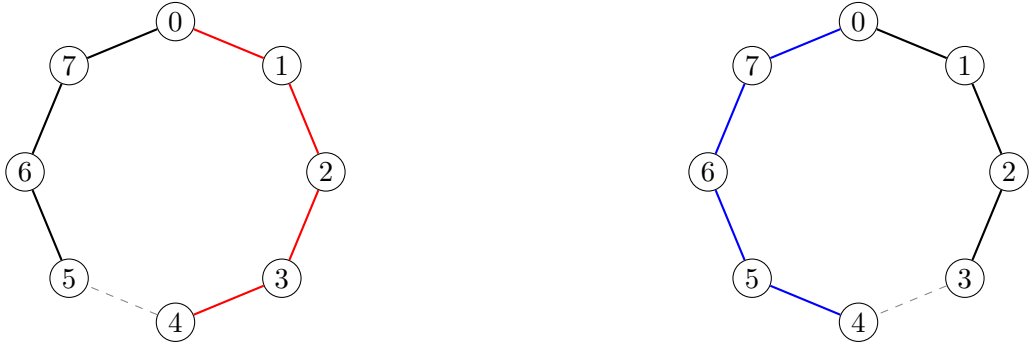
\begin{figure}
  \centering
  % Subfigure for spanning tree T_1:
  \begin{subfigure}{0.45\textwidth}
    \centering
    \begin{tikzpicture}[every node/.style={circle, draw, fill=white, inner sep=2pt}]
      % Place 8 nodes evenly around a circle
      \foreach \i in {0,...,7} {
        \node (n\i) at ({90 - \i*45}:2cm) {$\i$};
      }
      % Draw cycle edges in light gray
      \foreach \i in {0,...,7} {
        \pgfmathtruncatemacro{\j}{mod(\i+1,8)}
        \draw[gray, ultra thin, dashed] (n\i) -- (n\j);
      }
      % Highlight the spanning tree T_1:
      % In T_1, we remove the edge between 4 and 5 so that the shortest path from 0 to 4 is: 0-1-2-3-4.
      \draw[red, thick] (n0) -- (n1) -- (n2) -- (n3) -- (n4);
      \draw[black, thick] (n0) -- (n7) -- (n6) -- (n5);
    \end{tikzpicture}
    \caption{Tree with shortest path \(0\rightarrow1\rightarrow2\rightarrow3\rightarrow4\).}
  \end{subfigure}
  \hfill
  % Subfigure for spanning tree T_2:
  \begin{subfigure}{0.45\textwidth}
    \centering
    \begin{tikzpicture}[every node/.style={circle, draw, fill=white, inner sep=2pt}]
      % Place 8 nodes evenly around a circle
      \foreach \i in {0,...,7} {
        \node (n\i) at ({90 - \i*45}:2cm) {$\i$};
      }
      % Draw cycle edges in light gray
      \foreach \i in {0,...,7} {
        \pgfmathtruncatemacro{\j}{mod(\i+1,8)}
        \draw[gray, ultra thin, dashed] (n\i) -- (n\j);
      }
      % Highlight the spanning tree T_2:
      % In T_2, we remove the edge between 3 and 4 so that the shortest path from 0 to 4 is: 0-7-6-5-4.
      \draw[blue, thick] (n0) -- (n7) -- (n6) -- (n5) -- (n4);
      \draw[black, thick] (n0) -- (n1) -- (n2) -- (n3);
    \end{tikzpicture}
    \caption{Tree with shortest path \(0\rightarrow7\rightarrow6\rightarrow5\rightarrow4\).}
  \end{subfigure}
  \caption{Illustration of Example~\ref{example:cycle}. The two shortest path trees emanating from vertex 0 are very similar, but the shortest paths from $0 \to 4$ are disjoint. }
\end{figure}
Our results directly imply that the projection is also not average-case stable in the setting of correlated random graphs. This is because we prove the shortest path tree changes in a smooth and predictable way whereas the prior work \cite{LS2024} showed that its projection, the shortest path, will typically change drastically as we resample edges of the graph (see Section~\ref{sec:unstable} as well).
%TODO: add intuitive explanation why it is not average case Lipschitz in random graphs.

%TODO: is the map from a graph to a uniformly random shortest path tree Lipschitz in the \emph{worst-case} over the space of connected graphs? (probably not, find a counterexample) It is clear it is not Lipschitz if we view the output as a distance vector, since the distances could change a lot if we remove one edge. 

\begin{example}[Uniformly random shortest path tree is not worst-case stable]
It is natural to wonder if  sampling a uniformly random shortest path tree, or computing the shortest path DAG, could also be stable in the worst case (i.e. Lipschitz). This is not the case. For example, onsider a $d$-ary tree $T$ (e.g., with $d = 2$) and depth $\Theta(\log n)$, and form a graph $G$ by identifying the root vertex $v$ and the leaves of $T$ with those of an identical tree $T'$ (so the ``interior'' vertices of $T$ and $T'$ are still disjoint). The initial shortest path DAG is the entire graph $G$, oriented away from the root $v$ and towards the leaves. If we add an edge from $v$ to any of the vertices at distance $2$ from it, the resulting shortcuts leads to a constant proportion of the edges of the graph (those from the ``descendants'' of the symmetrical copy of the vertex in the tree) being removed from the shortest path DAG, and also leads to a similar  change in the uniform distribution over spanning trees --- the set of edges which are lost from the DAG go from appearing with probability $1/(2d)$ to $0$, and their symmetrical pairs go from probability $1/(2d)$ to $1/d$, so the mean of the shortest path tree distribution changes by $\Theta(n)$ in $\ell_1$ distance. 

It also follows from our analysis that in Erdos--R\'enyi graphs, a straightforward analogue of this construction leads to similar unstable (adding a shortcut from the original root vertex to a vertex at distance $2$ from $v$ will significantly change the shortest path DAG and uniformly random shortest path tree). This shows that if insertions/deletions are \emph{worst-case}, then even in relatively well connected graphs the shortest path DAG and uniformly random shortest path tree no longer behave in a Lipschitz manner.
\end{example}

\subsection{Sampling a Gibbs measure is equivalent to a random optimization problem}\label{apdx:gumbel}
Here we recall the classical connection between sampling Gibbs measures and random optimization problems with Gumbel noise \cite[Chapter 3]{train2009discrete}.
The contents of this section are not needed in the proof of our main results, but are helpful for understanding some of the parallels between the OGP and Gibbs measure frameworks.

Let \(\mathcal{X}\) be a finite set and \(f:\mathcal{X}\to\R\) a fixed function.  Fix an inverse temperature \(\beta>0\).  The associated \emph{Gibbs distribution} on \(\mathcal{X}\) is
\[
p_\beta(x)
\;=\;
\frac{\exp\bigl(\beta\,f(x)\bigr)}
     {\displaystyle\sum_{y\in\mathcal{X}}\exp\bigl(\beta\,f(y)\bigr)}
\;=\;
\frac{\exp\bigl(\beta\,f(x)\bigr)}{Z_\beta}\,,
\]
where \(Z_\beta=\sum_{y\in\mathcal{X}}\exp(\beta f(y))\) is the normalizing constant.

\paragraph{Gumbel distribution.}
A random variable \(G\) is said to have the \emph{standard Gumbel distribution} if its cumulative distribution function is
\[
F_G(g)
\;=\;
\Pr[G\le g]
\;=\;
\exp\!\bigl(-e^{-g}\bigr),
\qquad g\in\R.
\]
Equivalently, if \(U\sim\mathrm{Uniform}(0,1)\) then
\[
G \;=\; -\log\bigl(-\log U\bigr)
\]
is standard Gumbel (with mean \(\gamma\approx0.5772\), the Euler–Mascheroni constant).

\begin{proposition}[Gumbel--Max trick]
Let \(\{G(x)\}_{x\in\mathcal{X}}\) be i.i.d.\ standard Gumbel random variables.  Then
\[
\Pr\Bigl(\arg\max_{x\in\mathcal{X}}\bigl\{\beta\,f(x)+G(x)\bigr\}=x\Bigr)
\;=\;
\frac{\exp\bigl(\beta\,f(x)\bigr)}
     {\displaystyle\sum_{y\in\mathcal{X}}\exp\bigl(\beta\,f(y)\bigr)}
\;=\;
p_\beta(x).
\]
In other words, sampling from \(p_\beta\) is equivalent to solving the random optimization
\[
\hat x \;=\;\arg\max_{x\in\mathcal{X}}\{\;\beta\,f(x)\;+\;G(x)\;\}\,.
\]
\end{proposition}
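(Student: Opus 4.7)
The plan is to compute the probability $\Pr\bigl(\arg\max_{y\in\mathcal{X}}\{\beta f(y)+G(y)\}=x\bigr)$ directly by conditioning on the value of the winning candidate $G(x)$ and using the independence of the Gumbels. Concretely, I would write
\[
\Pr\bigl(\arg\max = x\bigr)
\;=\;\int_{-\infty}^{\infty}
\Pr\!\Bigl(\forall\,y\neq x:\ \beta f(y)+G(y)<\beta f(x)+g\Bigr)\,f_G(g)\,dg,
\]
where $f_G(g)=e^{-g}\exp(-e^{-g})$ is the standard Gumbel density. By independence the integrand factors as $\prod_{y\neq x}F_G(g+\beta f(x)-\beta f(y))\cdot f_G(g)$.

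The next step uses the key algebraic property of the Gumbel CDF, namely $F_G(a)=\exp(-e^{-a})$, which turns products of shifted CDFs into exponentials of sums. Writing out the product gives
\[
\prod_{y\neq x}\exp\bigl(-e^{-g}\,e^{-\beta f(x)}\,e^{\beta f(y)}\bigr)
\;=\;\exp\!\Bigl(-e^{-g}\,e^{-\beta f(x)}\,\bigl(Z_\beta-e^{\beta f(x)}\bigr)\Bigr).
\]
Combining with the $\exp(-e^{-g})$ factor from $f_G(g)$ collapses the two exponentials into a single $\exp(-e^{-g}\,e^{-\beta f(x)}\,Z_\beta)$, so the integrand becomes $e^{-g}\exp(-e^{-g}\cdot c)$ with $c=e^{-\beta f(x)}Z_\beta$.

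Finally, I would change variables $u=c\,e^{-g}$, under which $du=-c\,e^{-g}\,dg$, reducing the integral to $\tfrac{1}{c}\int_0^{\infty}e^{-u}\,du=1/c=e^{\beta f(x)}/Z_\beta=p_\beta(x)$, which is the claim. None of the steps is genuinely hard — the proof is a one-line calculation once the right change of variables is identified — so the only thing to be careful about is bookkeeping: keeping track of the location shift in each Gumbel, and recognizing that the ``max-stability'' of the Gumbel family (i.e.\ that the maximum of independent shifted Gumbels is again Gumbel) is what makes the exponent collapse so cleanly. One could alternatively derive the identity from the observation that $\max_y\{\beta f(y)+G(y)\}$ is itself Gumbel with location $\log Z_\beta$ and then invoke the standard fact that for independent Gumbels the probability of any particular one being the maximum equals its ``weight'' $e^{\beta f(x)}/Z_\beta$; I would mention this as a conceptual remark but use the integral computation above for the actual proof.
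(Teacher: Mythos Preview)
Your proposal is correct and follows essentially the same approach as the paper's proof sketch: condition on $G(x)=g$, factor by independence, substitute the Gumbel CDF $\exp(-e^{-a})$, and simplify the resulting integral. You have simply filled in the details of the computation that the paper leaves as ``simplifying yields exactly $\exp(\beta f(x))/\sum_y\exp(\beta f(y))$.''
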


\begin{proof}[Proof sketch.]
By independence,
\[
\Pr\bigl(\arg\max_x[\beta f(x)+G(x)]=x\bigr)
=\int_{-\infty}^\infty 
\Pr\bigl(G(x)\in dt\bigr)\prod_{y\neq x}\Pr\bigl(G(y)\le t+\beta[f(x)-f(y)]\bigr).
\]
Writing each CDF as \(\exp(-e^{-x})\) and simplifying yields exactly
\(\exp(\beta f(x))/\sum_y\exp(\beta f(y))\).
\end{proof}
\begin{proposition}[Log partition function via Gumbel max]
With the same setup, one has
\[
\E\!\Bigl[\max_{x\in\mathcal{X}}\{\beta f(x) + G(x)\}\Bigr]
= \log\!\Bigl(\sum_{y\in\mathcal{X}}e^{\beta f(y)}\Bigr)
\;+\;\gamma
= \log Z_\beta \;+\;\gamma,
\]
where \(\gamma\approx0.5772\) is the Euler–Mascheroni constant.  
% Equivalently, if we define centered Gumbels \(\widetilde G(x)=G(x)-\gamma\), then
% \[
% \log Z_\beta
% \;=\;
% \E\!\Bigl[\max_{x\in\mathcal{X}}\{\beta f(x) + \widetilde G(x)\}\Bigr].
% \]
\end{proposition}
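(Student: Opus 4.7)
The plan is to compute the distribution of $M := \max_{x \in \mathcal{X}}\{\beta f(x) + G(x)\}$ directly and show that it is itself a shifted Gumbel, after which the expectation is immediate from the known mean of the standard Gumbel.

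First, I would compute the cumulative distribution function of $M$. By independence of the $G(x)$ and the formula $F_G(g) = \exp(-e^{-g})$ for the standard Gumbel, for any $t \in \mathbb{R}$
\[
\Pr(M \le t)
= \prod_{x \in \mathcal{X}} \Pr\bigl(G(x) \le t - \beta f(x)\bigr)
= \prod_{x \in \mathcal{X}} \exp\!\bigl(-e^{-(t - \beta f(x))}\bigr)
= \exp\!\Bigl(-e^{-t}\sum_{x \in \mathcal{X}} e^{\beta f(x)}\Bigr).
\]
Writing $\sum_x e^{\beta f(x)} = Z_\beta = e^{\log Z_\beta}$, this simplifies to
\[
\Pr(M \le t) = \exp\!\bigl(-e^{-(t - \log Z_\beta)}\bigr),
\]
which is precisely the CDF of $\log Z_\beta + G$, where $G$ is a standard Gumbel.

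Next, I would invoke the well-known identity $\mathbb{E}[G] = \gamma$ for the Euler--Mascheroni constant (this follows e.g.\ from the change of variables $u = e^{-g}$ in $\int g \, d\exp(-e^{-g})$, reducing it to the standard integral representation $\gamma = -\int_0^\infty e^{-u}\log u \, du$). Combining with the distributional identity $M \stackrel{d}{=} \log Z_\beta + G$ yields
\[
\mathbb{E}[M] = \log Z_\beta + \mathbb{E}[G] = \log Z_\beta + \gamma,
\]
which is the desired formula.

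There is no real obstacle here: the key observation is that the maximum of independent Gumbels with shifted locations is again a Gumbel whose location parameter is the log-sum-exp of the shifts. This closure property, together with the explicit mean of the standard Gumbel, does all of the work. The mild technical points to verify are that $M$ is integrable (it is, since its CDF has Gumbel tails) and that the product/limit manipulations above are valid on the finite set $\mathcal{X}$, both of which are routine.
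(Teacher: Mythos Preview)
Your proposal is correct and matches the paper's proof sketch essentially verbatim: compute the CDF of $M$ as a product, recognize it as a Gumbel shifted by $\log Z_\beta$, and read off the mean.
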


\begin{proof}[Proof sketch.]
Compute the law of \(M=\max_x(\beta f(x)+G(x))\) via its CDF 
\[ \Pr(M\le t)=\prod_x\exp(-e^{-(t-\beta f(x))}) \]
to show that it is a Gumbel distribution with a correspondingly shifted mean. 
%and then integrate \(\E[M]=\int_{-\infty}^\infty(1-\Pr(M\le t))\,dt\).  A direct calculation yields the stated identity.
\end{proof}

\subsection{Free energy barriers trap Markov chains}\label{apdx:bottleneck}
It is well known that ``free energy barriers'' as described by the (quenched) Franz--Parisi potential
are rigorous obstructions to the Glauber dynamics and related Markov chains. See Theorem 13.7 of the textbook \cite{mezard2009information} as well as, e.g., \cite{coja2015independent,bandeira2022franz,arous2023free,bandeira2023free}. 
%(TODO: add more discussion of these papers to related work.)
As discussed in the references, this is a direct implication of classical bottleneck theory for MCMC, but for expositional purposes we recall some of the details here.

Recall that for a Markov chain with transition kernel $P$ and stationary measure $\pi$ the bottleneck ratio of a set $S$ is
\[ \Phi(S) = \frac{\sum_{x \in S, y \in S^C}\pi(x) P(x,y)}{\pi(S)}. \]
\begin{lemma}[Proof of Theorem 7.4 of \cite{lpw}]
If $X_0 \sim \pi$ and $X_0,X_1,\ldots$ is a Markov chain evolving according to $P$, then
\[ \Pr(\exists s \le t, X_s \in S^C \mid X_0 \in S) \le t \Phi(s). \]
\end{lemma}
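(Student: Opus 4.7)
The plan is to proceed by a straightforward union-bound-over-time argument combined with stationarity, which is the standard bottleneck/conductance proof. Let $\tau = \min\{s \ge 0 : X_s \in S^{\mathsf c}\}$ be the first exit time from $S$. Since $X_0 \in S$, the event $\{\exists s \le t : X_s \in S^{\mathsf c}\}$ equals $\{1 \le \tau \le t\}$, and I would decompose it over the possible values of $\tau$:
\[
\Pr(\tau \le t, X_0 \in S) \;=\; \sum_{s=1}^t \Pr(X_0, \ldots, X_{s-1} \in S,\; X_s \in S^{\mathsf c}).
\]

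Next, I would drop the constraints at times $0, 1, \ldots, s-2$ (each of which only makes the event larger) to bound each summand by $\Pr(X_{s-1} \in S, X_s \in S^{\mathsf c})$. Now the key step is to invoke stationarity: since $X_0 \sim \pi$, the joint law of $(X_{s-1}, X_s)$ is $\pi(x) P(x,y)$ for every $s \ge 1$. Hence each summand equals $\sum_{x \in S, y \in S^{\mathsf c}} \pi(x) P(x,y)$, which is exactly $\pi(S)\,\Phi(S)$ by the definition of the bottleneck ratio. Summing over $s$ from $1$ to $t$ yields $\Pr(\tau \le t, X_0 \in S) \le t\,\pi(S)\,\Phi(S)$, and dividing both sides by $\Pr(X_0 \in S) = \pi(S)$ gives the claimed bound $t\,\Phi(S)$.

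There is essentially no main obstacle here — this is a textbook one-line argument (exactly the proof of Theorem~7.4 of~\cite{lpw} that the statement attributes it to). The only point that requires any care is the correct use of stationarity: we are bounding the \emph{joint} probability $\Pr(X_{s-1} \in S, X_s \in S^{\mathsf c})$, and what makes this equal $\pi(S)\Phi(S)$ regardless of $s$ is that $X_{s-1}$ itself has marginal law $\pi$ under the assumption $X_0 \sim \pi$. If instead one wanted to start the chain from an arbitrary distribution on $S$ rather than from $\pi$ conditioned on $S$, the same argument would not directly apply, but for the present statement starting from $\pi$ is part of the hypothesis.
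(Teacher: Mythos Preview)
Your proof is correct and is exactly the standard textbook argument. The paper does not provide its own proof for this lemma; it simply attributes the result to the proof of Theorem~7.4 of~\cite{lpw}, and your union-bound-over-first-exit-time plus stationarity argument is precisely that proof.
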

\begin{lemma}
For a reversible Markov chain, for any subset $S$ of the state space and
letting $\partial S = \{ y : \exists x \in S, P(x,y) > 0 \}$,
we have
\[ \Phi(S) \le \frac{\pi(\partial S \cap S^C)}{\pi(S)} \]
\end{lemma}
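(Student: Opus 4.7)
The plan is to start from the definition of the bottleneck ratio and rewrite the numerator using reversibility. Explicitly, reversibility says $\pi(x)P(x,y) = \pi(y)P(y,x)$, so
\[
\sum_{x \in S,\, y \in S^C} \pi(x) P(x,y) \;=\; \sum_{x \in S,\, y \in S^C} \pi(y) P(y,x) \;=\; \sum_{y \in S^C} \pi(y) \sum_{x \in S} P(y,x).
\]
This turns the outgoing flow from $S$ into an incoming mass calculation, which is where the boundary $\partial S \cap S^C$ will naturally appear.

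Next, I would observe that the inner sum $\sum_{x \in S} P(y,x)$ vanishes unless $y$ can reach $S$ in one step. But for $y \in S^C$, the condition ``there exists $x \in S$ with $P(y,x) > 0$'' is the same (again by reversibility, since $P(x,y) > 0 \iff P(y,x) > 0$ whenever $\pi(x), \pi(y) > 0$) as the condition $y \in \partial S$ as defined. So the sum over $y \in S^C$ collapses to a sum over $y \in \partial S \cap S^C$. Then using the trivial bound $\sum_{x \in S} P(y,x) \le \sum_x P(y,x) = 1$ yields
\[
\sum_{y \in S^C} \pi(y) \sum_{x \in S} P(y,x) \;\le\; \sum_{y \in \partial S \cap S^C} \pi(y) \;=\; \pi(\partial S \cap S^C).
\]
Dividing through by $\pi(S)$ gives the stated inequality.

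This is essentially a one-step manipulation rather than a genuinely difficult argument; the only subtlety worth being careful about is the asymmetry built into the definition of $\partial S$, which only guarantees that $y \in S^C$ is reachable \emph{from} $S$, whereas the sum after using reversibility requires $y$ to reach \emph{into} $S$. Reversibility (combined with the convention that we only consider states with $\pi > 0$) makes these two conditions equivalent, so there is no real obstacle, but this is the step where one should write down the chain of implications carefully to make sure the proof is clean.
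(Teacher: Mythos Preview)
Your proof is correct and follows essentially the same approach as the paper: apply reversibility and then use the trivial bound $\sum_x P(y,x) \le 1$. The only difference is the order of operations: the paper first restricts the sum to $y \in \partial S \cap S^C$ (which is immediate from the definition of $\partial S$, since any $y \in S^C$ contributing to the original sum has $P(x,y) > 0$ for some $x \in S$), and only then applies reversibility. Doing it in that order sidesteps the ``asymmetry'' issue you flag, so your extra step of invoking reversibility a second time to match the direction of $\partial S$ is unnecessary, though certainly valid.
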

\begin{proof}
By reversibilty, we have
\[ \Phi(S) =  \frac{\sum_{x \in S, y \in S^C}\pi(x) P(x,y)}{\pi(S)} = \frac{\sum_{x \in S, y \in \partial S \cap S^C}\pi(x) P(x,y)}{\pi(S)} =  \frac{\sum_{x \in S, y \in \partial S \cap S^C}\pi(y) P(y,x)}{\pi(S)} \le \frac{\pi(\partial S \cap S^C)}{\pi(S)} \]
where the last inequality follows since $\sum_x P(y,x) \le 1$ for any $y$.
\end{proof}
If the Franz-Parisi potential fails to be \emph{quasi-convex}, i.e. if the potential has multiple disconnected local minima, then the bottleneck theory tells us that Markov chains which are local and reversible will get trapped from suitable initializations.

To illustrate the basic idea, we explicitly write out statements for a special case which often arises.
If the Franz--Parisi potential (or its nonasymptotic analogue about point $x$) is not monotonically increasing in the overlap at some point $r \in (r^*,1]$, where $r^*$ is the location of the global minimum, then for $\tau = rm$ the above theorem directly implies metastability of states with inner product at least $\tau$ is metastable, i.e. it takes a long time for the dynamics to escape this set. Symmetrical results and arguments cover the case where the potential is not monotonically decreasing from $0$ to $r^*$.

%#TODO: allow general intervals.
\begin{theorem}[Example metastability from FPP, Homogeneous case]
Let $P$ be a reversible Markov chain on state space $\mathcal X \subset \{x \in \{0,1\}^n : \sum_i x_i = m\}$ for some $m \ge 1$ with stationary measure $\pi$,
and suppose $P$ is $k$-local for some integer $k \ge 0$, in the sense that if $\|x - x'\|_1 > k$ then $P(x,x') = 0$. For any $\tau \ge 0$ and $x \in \mathcal X$, define
\[ S_{x,\tau} = \{ x' : \langle x, x' \rangle \ge \tau \}. \]
Then for any $t \ge 0$,
\[ \Pr_{\pi}(\exists s\le t, X_s \in S_{\tau}^C \mid X_0 \in S) \le t\frac{\pi(\{x' : \langle x, x' \rangle \in (\tau,\tau+k)\})}{\pi(S_{x,\tau})}. \]
\end{theorem}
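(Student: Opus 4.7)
The plan is to apply the two lemmas stated immediately before the theorem, specialized to $S := S_{x,\tau}$, and then translate the abstract boundary $\partial S \cap S^C$ into the concrete inner-product strip appearing in the statement by exploiting $k$-locality of $P$. The first lemma immediately yields
\[
\Pr_\pi\bigl(\exists s \le t,\, X_s \in S^C \mid X_0 \in S\bigr) \le t\,\Phi(S),
\]
so all of the work lives in controlling the bottleneck ratio $\Phi(S)$. Here I would actually not quote the second lemma verbatim, since that version produces a bound in terms of states just \emph{outside} $S$, whereas the theorem is stated in terms of states just \emph{inside} $S$; instead I would repeat its one-line proof with the roles of $x'$ and $y$ swapped, using $\sum_{y \in S^C} P(x',y) \le 1$ directly, to obtain
\[
\Phi(S) \le \frac{\pi\bigl(\{x' \in S : \exists y \in S^C \text{ with } P(x',y) > 0\}\bigr)}{\pi(S)}.
\]

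The heart of the argument is then the observation that the statistic $x' \mapsto \langle x, x'\rangle$ is $1$-Lipschitz with respect to the Hamming (i.e.\ $\ell_1$) metric on $\{0,1\}^n$: for any $x',y$, one has $|\langle x, x'-y\rangle| \le \|x\|_\infty \|x'-y\|_1 \le \|x'-y\|_1$. By $k$-locality of $P$, any pair $(x',y)$ with $P(x',y)>0$ satisfies $\|x'-y\|_1 \le k$, so if $x' \in S_{x,\tau}$ (i.e.\ $\langle x, x'\rangle \ge \tau$) is connected by a single move to some $y \in S_{x,\tau}^C$ (i.e.\ $\langle x, y\rangle < \tau$), then necessarily $\tau \le \langle x, x'\rangle < \tau + k$. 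Substituting this containment into the previous display yields the claimed bound. Note that reversibility of $P$ is not actually used in this variant; it only enters the theorem via the statement of Lemma~2, which I am bypassing.

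The only anticipated wrinkle is cosmetic: the argument naturally produces the half-open interval $[\tau, \tau+k)$ rather than the open interval $(\tau, \tau+k)$ in the statement. Because $\langle x, \cdot\rangle$ takes integer values on the relevant slice of the hypercube, this discrepancy amounts to perturbing $\tau$ by less than $1$ and can be absorbed into the choice of threshold; nothing deeper is required. In summary, the theorem is a direct template for converting a quasiconvexity failure of the Franz--Parisi potential into a bottleneck, and the only nontrivial step is the Lipschitz test-function bound for $\langle x, \cdot\rangle$ under single-step moves of the chain.
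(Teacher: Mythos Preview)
Your proposal is correct and follows the paper's approach, which is simply stated as ``combine the previous two lemmas.'' Your choice to bound $\Phi(S)$ via the \emph{inner} boundary (bypassing the second lemma and hence reversibility) is a harmless variant that in fact matches the stated interval $(\tau,\tau+k)$ better than the outer-boundary set $\partial S\cap S^C\subseteq\{y:\langle x,y\rangle\in[\tau-k,\tau)\}$ that the paper's lemmas literally produce, and the endpoint discrepancy you flag is indeed purely cosmetic.
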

\begin{proof}
This follows by combining the previous two lemmas.
\end{proof}

\begin{theorem}[Example metastability from FPP, Inhomogeneous case]
 Let $P$ be a reversible Markov chain on state space $\mathcal X \subset \{0,1\}^n$ with stationary measure $\pi$.  Suppose that $P$ is $\epsilon > 0$ stable with respect to overlap with some fixed $x \in \mathcal X$, in the sense that if $P(y,z) > 0$ then
 \[ \frac{\langle z, x \rangle}{\|z\|_2\|x\|_2} \ge \frac{\langle y, x \rangle}{\|y\|_2\|x\|_2} - \epsilon. \]
 For any $\tau \in \mathbb R$ and $x \in \mathcal X$, define
\[ S_{x,\tau} = \{ x' : \frac{\langle x, x' \rangle}{\|x\|_2 \|x'\|_2} \ge \tau \}. \]
Then for any $t \ge 0$,
\[ \Pr_{\pi}(\exists s\le t, X_s \in S_{\tau}^C \mid X_0 \in S) \le t\frac{\pi(\{x' : \frac{\langle x, x' \rangle}{\|x\|_2\|x'\|_2} \in (\tau,\tau+\epsilon)\})}{\pi(S_{x,\tau})}. \]
\end{theorem}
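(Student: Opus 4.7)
The approach is to reduce the claim to the two bottleneck lemmas stated immediately before it, with the only real work being to identify the correct ``barrier'' region using the stability hypothesis. First, I would apply the first lemma (with initial distribution $\pi$ conditioned on $\{X_0 \in S_{x,\tau}\}$) to bound
\[ \Pr_\pi(\exists s \le t,\ X_s \in S_{x,\tau}^C \mid X_0 \in S_{x,\tau}) \le t\, \Phi(S_{x,\tau}), \]
so everything reduces to controlling the bottleneck ratio $\Phi(S_{x,\tau}) = \pi(S_{x,\tau})^{-1}\sum_{y\in S_{x,\tau},\,z\in S_{x,\tau}^C} \pi(y)P(y,z)$.

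Second, I would bound $\Phi$ by the $\pi$-mass of an \emph{inner} boundary rather than the outer one used in the second lemma. Pulling the $z$-sum inside and using $\sum_{z} P(y,z) \le 1$ gives
\[ \Phi(S_{x,\tau}) \le \frac{\pi(\mathcal{B})}{\pi(S_{x,\tau})},\qquad \mathcal{B} := \bigl\{\,y \in S_{x,\tau}\;:\;\exists\, z \in S_{x,\tau}^C,\ P(y,z)>0\,\bigr\}. \]
This inner-boundary form does not actually require reversibility; reversibility presumably appears in the hypotheses only for parallelism with the second lemma.

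Third, the stability hypothesis directly locates $\mathcal{B}$ inside a thin overlap slab. If $y \in \mathcal{B}$ and $z \in S_{x,\tau}^C$ witnesses membership, then stability applied to $P(y,z)>0$ yields
\[ \frac{\langle z,x\rangle}{\|z\|_2\|x\|_2} \;\ge\; \frac{\langle y,x\rangle}{\|y\|_2\|x\|_2} - \epsilon, \]
while $z \in S_{x,\tau}^C$ forces the left side to be $< \tau$. Combined with $y \in S_{x,\tau}$, this sandwich gives $\langle y,x\rangle/(\|y\|_2\|x\|_2) \in [\tau,\tau+\epsilon)$, so $\mathcal{B}$ is contained in the overlap slab appearing in the numerator of the theorem (the distinction between a half-open and open left endpoint is immaterial, being a measure-zero issue for any reasonable $\pi$). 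Chaining the three bounds gives the claim.

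There is no real obstacle here: the argument is a textbook bottleneck bound, and the stability hypothesis is exactly what one needs to turn a transition out of $S_{x,\tau}$ into an a posteriori constraint on the starting overlap. The only small design choice is to work with the inner boundary, which matches the one-sided direction of the stability assumption and avoids having to invert it via detailed balance.
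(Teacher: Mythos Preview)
Your argument is correct and mirrors the bottleneck strategy the paper intends; the paper gives no separate proof for this inhomogeneous version, only the remark ``combining the previous two lemmas'' in the homogeneous analogue. Your switch to the \emph{inner} boundary is the right move given the one-sided stability hypothesis and lands on the slab $[\tau,\tau+\epsilon)$ without invoking reversibility; one small correction is that on a discrete state space this is not a measure-zero discrepancy from the paper's $(\tau,\tau+\epsilon)$---rather, the open left endpoint in the statement appears to be a minor imprecision, and your half-open interval is what the argument actually delivers (a literal application of the paper's outer-boundary lemma would instead produce the slab $[\tau-\epsilon,\tau)$).
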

Note that by the following basic lemma, it is straightforward to prove that Markov chains which are $k$-local in the sense of Hamming distance are also local in the sense of normalized overlap, provided that $\|x\|_2 \gg \sqrt{k}$ for all $x \in \mathcal X$.
\begin{lemma}
For any $\Delta \in \{-1,0,1\}^{k}$
\[ \frac{\langle x, x' + \Delta \rangle}{\|x\|_2\|x' + \Delta\|_2} \ge \frac{\langle x, x' \rangle - k}{\|x\|_2(\|x'\|_2 + \sqrt{k})}. \]
\end{lemma}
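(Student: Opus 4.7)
The plan is to bound the numerator and denominator of the left-hand side separately and then combine them via a short case analysis. For the denominator, the triangle inequality together with $\|\Delta\|_2 \leq \sqrt{k}$ (since $\Delta$ has at most $k$ entries, each of magnitude at most $1$) gives
\[
\|x' + \Delta\|_2 \;\leq\; \|x'\|_2 + \sqrt{k}.
\]
For the numerator, writing $\langle x, x' + \Delta\rangle = \langle x, x'\rangle + \langle x, \Delta\rangle$ and using $|x_i|\leq 1$ (since $x\in\{0,1\}^n$ in the surrounding Markov-chain setup) together with $\|\Delta\|_1 \leq k$ yields $\langle x, \Delta\rangle \geq -\|\Delta\|_1 \geq -k$, and hence
\[
\langle x, x' + \Delta\rangle \;\geq\; \langle x, x'\rangle - k.
\]

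To combine these, divide the numerator bound by the positive quantity $\|x\|_2\,\|x' + \Delta\|_2$, which preserves the inequality:
\[
\frac{\langle x, x' + \Delta\rangle}{\|x\|_2\,\|x' + \Delta\|_2}
\;\geq\;
\frac{\langle x, x'\rangle - k}{\|x\|_2\,\|x' + \Delta\|_2}.
\]
When $\langle x, x'\rangle \geq k$, the right-hand side above is nonnegative, so enlarging the denominator to $\|x\|_2(\|x'\|_2 + \sqrt{k})$ using the denominator bound only decreases it, which is exactly the claim. When $\langle x, x'\rangle < k$, the claimed right-hand side is negative, while the left-hand side is nonnegative because both $x$ and $x' + \Delta$ lie in $\{0,1\}^n$ in the regime where the lemma is invoked (the transition $x' \to x' + \Delta$ stays in the state space), so their inner product is nonneg and the inequality holds trivially.

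The only nontrivial point is that the final ``enlarge the denominator'' step is sign-dependent: it would reverse the inequality if the numerator $\langle x, x'\rangle - k$ were negative. The case split above sidesteps this by exploiting nonnegativity of inner products on the Boolean hypercube, which is precisely the structure of the state space where the lemma is applied (to convert Hamming $k$-locality into an approximate $\epsilon$-stability in normalized overlap in the preceding theorem).
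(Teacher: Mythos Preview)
The paper states this lemma without proof (it is presented as a basic fact following the inhomogeneous metastability theorem), so there is no paper argument to compare against. Your proof is correct and supplies exactly the natural details: bound the numerator via $\langle x,\Delta\rangle \ge -\|\Delta\|_1 \ge -k$, bound the denominator via the triangle inequality $\|x'+\Delta\|_2 \le \|x'\|_2 + \sqrt{k}$, and then combine. Your case split on the sign of $\langle x,x'\rangle - k$ is genuinely needed, since enlarging the denominator of a negative fraction moves it the wrong way; your resolution via $\langle x, x'+\Delta\rangle \ge 0$ on $\{0,1\}^n$ is correct and exactly matches the context in which the lemma is applied (converting $k$-locality in Hamming distance to $\epsilon$-stability in normalized overlap for chains on $\mathcal X \subset \{0,1\}^n$). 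It is worth noting, as you implicitly do, that the inequality is \emph{false} for general real vectors without this nonnegativity, so the contextual assumption is essential rather than merely convenient.
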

\subsection{Comment on replica method approach}
In this appendix, we mention the alternative replica method approach to solving the model, which could be interesting to investigate further. The replica method is a non-rigorous but highly reliable way to estimate $\E \log Z$ to leading order in many problems.  See \cite{mezard2009information} for more about the replica method.  In some cases, the replica method can be much easier to use than rigorous methods, but it does not seem so helpful in our model. This is unsurprising since, while the replica method can be applied to sparse problems, it is usually not the preferred method (see \cite{biroli2000variational,braunstein2023cavity} for much more context). To illustrate, we show the basic computations which arise.
%Nevertheless, we expect it should work and outline how the solution would arise this way. See \cite{mezard2009information} for more about the replica method. 

In the replica trick we start with the identity
\[ \E[\log Z] = \lim_{k \to 0} \frac{1}{k} \log \E[Z^k] \]
which is straightforward to justify using that $\log(1 + x) \approx x$ for small $x$. (Note that for fixed $z > 0$ $\lim_{k \to 0} z^k = 1$.) 
We then apply this formula and guess that we can swap the order of limits to take the high-dimensional limit $n \to \infty$ before $k \to 0$:
\[ \lim_{n \to \infty} \frac{1}{n \log \log n} \E \log Z = \lim_{k \to 0} \lim_{n \to \infty} \frac{1}{k n \log \log n} \log \E[Z^k]. \]
It is difficult to compute $\E[Z^k]$ for $k \approx 0$, so we instead attempt to guess it from the formula of $\E[Z^k]$ for integer $k$, which is more tractable to compute. By expanding the definition of $Z$, we know that
\[ Z^k = \sum_{T_1,\ldots,T_k} \exp\left(-\beta\log\log(n)\sum_{j = 1}^k \sum_v d_{T_j}(1,v)\right) \prod_{j = 1}^k 1(T_j \subset G) \]
where $G$ is the random graph and the summation ranges over all spanning trees of the complete graph. 
% Recall by Cayley's formula that the number of spanning trees in a complete graph with $n$ vertices is $n^{n - 2}$. 

By the definition of the Erdos-Reyni graph $G(n,\frac{\alpha_n \log n}{n})$ we can compute that
\[ \E Z^k =  \sum_{T_1,\ldots,T_k} \exp\left(-\beta\log\log(n)\sum_{j = 1}^k \sum_v d_{T_j}(1,v)\right) (\alpha_n \log(n)/n)^{|T_1 \cup \cdots \cup T_k|} \]
where $|T_1 \cup \cdots \cup T_k|$ counts the number of unique edges among the $k$ spanning trees. In dense spin glass models like the SK model or REM model, at this point the summand would simplify to an expression only depending on the overlap matrix $Q_{ij} = |T_i \cap T_j|/(n - 1)$. In our case, the summand depends only on $E_j = \sum_v d_{T_j}(1,v)$ and $|H| = |T_1 \cup \cdots \cup T_k|$. Computing the formula in the cases $k = 1$ and $k = 2$ would correspond to the computations for the first and second moment method, which is already nontrivial. We did not pursue this direction further.  
%At this point we could make some guesses about the dominating terms to simplify the formulas, but since we already know the rigorous solution of the model, it is besides the point.
% Letting $E_j = \sum_v d_{T_j}(1,v)$ and $H = T_1 \cup \cdots \cup T_k$ we can rewrite the summation as
% \begin{align*} 
% \E Z^k 
% = \sum_{E_1,\ldots,E_k} \sum_{H} \exp\Big(&|H| \log(\alpha_n\log(n)) - |H| \log(n) - \beta \log \log(n) \sum_{j = 1}^k E_k \\
% &+ S_k(E_1,\ldots,E_k,H)  \Big)
% \end{align*}
% where the sum ranges over connected graphs $H$ on $n$ vertices with at most $k(n - 1)$ edges, $|H|$ denotes the number of edges in $H$, and
% \[ S_k(E_1,\ldots,E_k,H) = \log \#\{ T_1,\ldots,T_k : E_j = \sum_v d_{T_j}(1,v), H = T_1 \cup \cdots \cup T_k\} \]
% Note that the term $-|U_k| \log(n)$ is of size between $-n\log(n)$ and $-nk\log(n)$, and the only other term in the exponent which can be of order $n\log(n)$ is the tree entropy term
% $S_k$. 
% Therefore as $n \to \infty$, $\log \E Z^k$ will be dominated by terms where $S_k(E_1,\ldots,E_k,m_k) \ge |U_k| \log(n) + o_k(n \log(n))$. 
\subsection{Simulation details}\label{sec:simulation-details}
We elaborate further on the implementation of the simulation behind Figure~\ref{fig:overlap}, since it is slightly nontrivial. To compute $f(a,b)$, we used a Monte Carlo approximation by generating iid Poisson samples, since we are not aware of a closed form. To generate the plot $100$ values of $\rho_n$ were sampled from $1$ to $0.9$.
To simulate progressive resampling, for each pair of vertices $u,v\in V$ with $u\neq v$, we attach a random variable $T_{u,v}\sim\Unif(0,1)$ which models the ``time'' the pair gets resampled\footnote{In practice, maintaining $\binom{n}{2}$ random variables for large $n$ is infeasible. Instead we leverage sparsity: we first sample the edges of $G_0\cup\cdots\cup G_m$ and assign $T_{u,v}$ only to these edges, and then generate the graphs using sampled subsets of the edges. It is relatively straightforward to achieve the correct distribution for the trajectory this way.}. Namely, for given time indices
\[
    0=t_0<t_1<\cdots<t_m<1
\]
we sample a trajectory of Erd\"os--R\'enyi graphs $(G_0,\cdots,G_m)$ as follows. We first generate $G_0\sim\mathcal{G}(n,q)$, and each $G_k$ for $k\geq1$ is inductively generated from $G_{k-1}$ by resampling $(u,v)$ with $t_{k-1}<T_{u,v}\leq t_k$. Then for each $k$, we compute the overlaps for $G_0$ and $G_k$: one for uniformly random shortest path trees \eqref{eqn:overlap2def} and one for the shortest paths \eqref{eqn:parconc}. Note that each of the shortest paths is determined by the corresponding shortest path tree.

%\paragraph{Unidirectional vs bidirectional Dijkstra.} TODO. bidirectional is basically unidirectional after adding a single vertex. probably not worth including.

\section{Relationship to discrete convexity}\label{sec:dca}

Discrete convex analysis primarily concerns two natural discrete analogues of convexity, called $L$ and $M$-convexity. Interestingly, while these definitions can be stated in terms of purely geometric properties of a function/set, they have remarkably strong connections to the uncapacitated min-cost flow problem and the shortest path problem.

\subsection{L-convexity and M-convexity}
This part is essentially from chapter 1 of Murota's book
\cite{murota2003discrete}.
%\paragraph{L-convex sets and M-convex functions.}
\begin{definition}
For any $n \ge 1$, we say that a nonempty set $D \subset \mathbb Z^n$ is L$^{\natural}$-convex if it is closed under discrete midpoints: for all $x,y\in D$,
  \[ \big\lceil\frac{x+y}{2}\big\rceil,\ \big\lfloor\frac{x+y}{2}\big\rfloor\in D \]
 where the ceiling and floor are applied componentwise. We say that such a set $D$ is L-convex if it additionally satisfies translation invariance along the all-ones direction, i.e.
 \[ x \in D \rightarrow x \pm \vec{1} \in D. \]
\end{definition}
The two types of L-convexity are almost equivalent in the sense that any L$^{\natural}$-convex set in $\mathbb Z^n$ can be directly represented as the intersection of an L-convex set in $\mathbb Z^{n + 1}$ with the hyperplane $x_{n + 1} = 0$. There are other natural characterizations of these definitions in terms of submodularity and related properties, see \cite{murota2003discrete}.

For our context, it is useful to know the following alternative representation of $L$-natural convex sets:
\begin{theorem}[Chapter 1.4.1, Equation 1.34, of \cite{murota2003discrete}]
A nonempty set $D \subset \mathbb Z^n$ is $L$-natural convex if and only if there exists a matrix $C \in (\mathbb{R} \cup \{\infty\})^{n \times n}$ with zero-diagonal such that
\[ D = \{ p : p_i - p_j \le C_{ij} \}. \]
\end{theorem}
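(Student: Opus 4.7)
The plan is to establish the two implications separately: the ``if'' direction (every set of the form $\{p : p_i - p_j \le C_{ij}\}$ is $L^{\natural}$-convex) is essentially a bookkeeping exercise, while the ``only if'' direction (the representation always exists) is the substantive content.

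\emph{Sufficiency.} Suppose $D = \{p \in \mathbb{Z}^n : p_i - p_j \le C_{ij} \text{ for all } i,j\}$. Since $p_i - p_j$ is always an integer, we may replace each $C_{ij}$ by $\lfloor C_{ij} \rfloor$ without changing $D$, so we assume $C_{ij} \in \mathbb{Z} \cup \{+\infty\}$. To check $L^{\natural}$-convexity, take $p,q \in D$ and set $r = \lceil (p+q)/2 \rceil$, $s = \lfloor (p+q)/2\rfloor$ componentwise. I would verify $r_i - r_j \le C_{ij}$ by case analysis on the parities of $p_i+q_i$ and $p_j+q_j$: in three of the four cases the rounding cancels and the claim reduces to $((p_i-p_j)+(q_i-q_j))/2 \le C_{ij}$, which is immediate; in the remaining case where $p_i+q_i$ and $p_j+q_j$ have opposite parities, the sum $(p_i-p_j)+(q_i-q_j)$ is odd hence at most $2C_{ij} - 1$ (using integrality of $C_{ij}$), which exactly absorbs the $+1$ from the rounding. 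The same argument handles $s$.

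\emph{Necessity.} Suppose $D$ is nonempty and $L^{\natural}$-convex. Define
\[ C_{ij} := \sup_{p \in D}(p_i - p_j) \in \mathbb{Z} \cup \{+\infty\} \]
and let $\tilde D := \{p : p_i - p_j \le C_{ij}\ \forall i,j\}$. The inclusion $D \subseteq \tilde D$ is immediate from the definition. The main content is the reverse inclusion: every $p \in \tilde D$ lies in $D$.

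The approach is a local-improvement argument. Fix $p \in \tilde D$, and suppose for contradiction $p \notin D$; choose $p^* \in D$ minimizing $\|p - p^*\|_1$ (ties broken arbitrarily). Pick a coordinate $i^*$ with $p_{i^*} \neq p^*_{i^*}$; WLOG $p_{i^*} > p^*_{i^*}$ (the other case is symmetric, and uses floors rather than ceilings of midpoints together with translation invariance along $\vec{1}$). By definition of $C_{i^* j}$, for each $j$ there exists $q^{(j)} \in D$ with $q^{(j)}_{i^*} - q^{(j)}_j$ arbitrarily close to $C_{i^*j}$. The strategy is to choose $j$ cleverly and take the discrete midpoint of $p^*$ and $q^{(j)}$ to manufacture a point $p^{**} \in D$ with $\|p - p^{**}\|_1 < \|p - p^*\|_1$, contradicting minimality of $p^*$.

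\emph{Main obstacle.} The crux — and the hardest part of the plan — is choosing $q^{(j)}$ to guarantee that the midpoint strictly improves the distance \emph{without making any other coordinate worse}. The natural choice $j = i^*$ doesn't make sense, so I would try to pick $j$ to be a coordinate witnessing the binding constraint: some $j$ for which $p_{i^*} - p_j$ is close to $C_{i^*j}$, forcing $p^*_{i^*} - p^*_j$ to be strictly less than $C_{i^*j}$ (since $p^*_{i^*} < p_{i^*}$ while $p^*_j \ge p_j - \Delta$ for the right $\Delta$). Taking $q^{(j)}$ to nearly saturate $C_{i^*j}$ and computing $\lceil (p^* + q^{(j)})/2\rceil$ should then yield strict improvement in the $i^*$ coordinate. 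Making this argument watertight, and ruling out deterioration in the other coordinates, is precisely the technical core that is worked out in detail in Chapter 5 of Murota~\cite{murota2003discrete} (equivalence of $L^{\natural}$-convexity with the translation-submodular characterization plays a central role). Rather than reproducing that argument, I would cite the stated equation (1.34) of \cite{murota2003discrete}; the role of this appendix is to recall the representation for use in the network-simplex discussion that follows.
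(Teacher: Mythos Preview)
The paper does not prove this theorem; it simply quotes it from Murota's book. Your decision to ultimately defer to the cited reference is therefore exactly what the paper does.

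However, there is a genuine issue with the \emph{statement} that you should have caught. Any set of the form $\{p : p_i - p_j \le C_{ij}\}$ is automatically invariant under translation by $\vec{1}$, hence $L$-convex in the paper's terminology, not merely $L^{\natural}$-convex. Conversely, take $D = \{(0,0),(1,1)\} \subset \mathbb{Z}^2$: this is closed under discrete midpoints (so $L^{\natural}$-convex by the paper's definition) but is not translation-invariant, and therefore cannot be written in the stated form. The representation theorem is correct for $L$-convex sets; for $L^{\natural}$-convex sets one needs an additional ``zeroth'' index (equivalently, single-coordinate bounds $\alpha_i \le p_i \le \beta_i$ alongside the difference constraints), which is how Murota actually handles it.

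Your sufficiency argument is clean and correct --- indeed it proves the stronger $L$-convex conclusion. Your necessity sketch, on the other hand, silently invokes ``translation invariance along $\vec{1}$'' to handle the symmetric case; that is an $L$-convex property, not an $L^{\natural}$-convex one, so you are tacitly proving the $L$-convex version without saying so. No local-improvement argument for the $L^{\natural}$-convex statement as written can succeed, because that statement is false.
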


Given an $L$-natural convex set $D$, we can define its indicator
\[ \delta_D(p) = \begin{cases} 0 & \text{if $p \in D$} \\ \infty & \text{otherwise} \end{cases} \]
and its \emph{discrete convex/Fenchel conjugate} by
\[ \delta^*_D(p) = \sup_{x} [\langle x, p \rangle - \delta_D(p)] = \sup_{x \in D} \langle x, p \rangle, \]
which is a discrete analogue of the support function of a convex body. It is also a prototypical example of an $M$-convex function, as we will see more precisely next.
\begin{definition}
Given a function $f : \mathbb{Z}^n \to \mathbb R \cup \{\infty\}$, we define 
\[ \operatorname{dom}(f) = \{ p : f(p) < \infty \}. \]
\end{definition}
\begin{definition}
Let $f : \mathbb{Z}^n \to \mathbb R \cup \{\infty\}$ be a function with nonempty domain. We say that $f$ is \emph{M-convex} if it satisfies the following \emph{exchange axiom}: for any $x,y \in \operatorname{dom}(f)$ and $i$ such that $x_i > y_i$, there exists $j$ such that $x_j < y_j$ such that
\[ f(x) + f(y) \ge f(x - e_i + e_j) + f(y + e_i - e_j). \]
\end{definition}
\begin{remark}
In many cases in discrete convex analysis, one is interested in integer-valued functions which means that $f$ will be valued in $\mathbb Z \cup \{\infty\}$.
\end{remark}
The domain of such a function $f$ is equivalently an \emph{M-convex set}.
In the special case that $f : \{0,1\}^n \to \R$, the support of $f$ is exactly the set of \emph{bases} of a \emph{matroid}. 
\begin{definition}
We say a function $f : \mathbb{Z}^n \to \mathbb R$ is \emph{positively homogeneous} if for all $x,y \in \operatorname{dom}(f)$ such that $x = \lambda y$ for some $\lambda > 0$,
\[ f(x) = f(\lambda y) = \lambda f(y). \]
\end{definition}
\begin{theorem}[Theorem 1.14 of \cite{murota2003discrete}]
Let $f : \mathbb{Z}^n \to \mathbb Z$ be an arbitrary integer-valued function with nonempty domain.
Then $f$ is a positively homogeneous M-convex function if and only if there exists an L-convex set $D$ such that $f = \delta_D^*$, i.e. $f$ is the discrete convex conjugate of the indicator of $D$. Furthermore, in this case $(\delta_D^*)^* = f^* = \delta_D$, i.e. discrete convex conjugation acts as an involution.
\end{theorem}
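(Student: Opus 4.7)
The plan is to prove this as a consequence of the L/M conjugacy principle of discrete convex analysis, reducing the claim to identifying which M-convex functions arise as conjugates of set-indicators. For the easier $(\Leftarrow)$ direction, I would start from the difference-constraint representation $D=\{p:p_i-p_j\le C_{ij}\}$ already given in the cited characterization and observe that
\[
  \delta_D^*(x)\;=\;\sup_{p\in D}\langle x,p\rangle
\]
is exactly the value of a linear program with difference constraints. Positive homogeneity is immediate: $\delta_D^*(\lambda x)=\lambda\,\delta_D^*(x)$ for $\lambda>0$ since the feasible region is independent of $x$. For M-convexity, I would invoke LP duality to rewrite $\delta_D^*(x)$ as the minimum cost of sending a flow with net supply vector $x$ in the complete directed graph on $[n]$ with edge weights $C_{ij}$ — this is the uncapacitated min-cost flow problem appearing in Appendix~\ref{apdx:dijkstra-continuous}. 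It is a standard fact (and intuitively the crux of the theorem) that the optimal-cost function of an uncapacitated min-cost flow, viewed as a function of the supply vector, satisfies the exchange axiom: rerouting one unit from coordinate $i$ to some $j$ can be implemented on the flow side by augmenting along a path in a residual network, yielding the submodular-style inequality that defines M-convexity.

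For the $(\Rightarrow)$ direction I would define
\[
  D\;=\;\{\,p\in\mathbb{Z}^n : \langle x,p\rangle\le f(x)\ \text{for all}\ x\in\operatorname{dom}(f)\,\},
\]
i.e.\ the subdifferential of the positively homogeneous function $f$ at the origin. The claim $f=\delta_D^*$ is the content of a discrete Fenchel--Moreau theorem for M-convex functions: positive homogeneity plus the exchange axiom forces $f$ to coincide with the support function of $D$. To establish L-convexity of $D$, I would check directly that if $p,p'\in D$ then $\lceil(p+p')/2\rceil$ and $\lfloor(p+p')/2\rfloor$ also satisfy $\langle x,\cdot\rangle\le f(x)$ for all $x$; this reduces to showing $f(x)+f(x)\ge f(\lceil x\rceil')+f(\lfloor x\rfloor')$-type inequalities which follow by pairing the exchange axiom of $f$ with the midpoint splitting of $p+p'$. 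Translation invariance along $\vec 1$ comes from the observation that $p\in D$ iff $p+\vec{1}\in D$, which follows since $f$ being defined on vectors with fixed component sum (a consequence of positive homogeneity and M-convexity supported on an affine hyperplane, or of working modulo the all-ones direction) annihilates the $\vec 1$ shift in the pairing.

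Finally, the involution statement $f^*=\delta_D$ and $(f^*)^*=f$ is a consequence of a discrete Fenchel--Moreau theorem: since we have exhibited $f=\delta_D^*$ with $D$ L-convex, taking one more conjugate recovers $\delta_D$ (as its double conjugate in the discrete setting), and the construction of $D$ as the subdifferential is forced by $f^*(p)\in\{0,\infty\}$, with $f^*(p)=0$ iff $p\in D$.

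The main obstacle in this plan is the forward direction of the Fenchel--Moreau identification: showing that an arbitrary positively homogeneous M-convex function equals the support function of its ``subdifferential set'' requires the full strength of the discrete conjugacy theory, including existence of integral optimizers in the relevant LP duality. I would handle this either by appealing to the general LM-conjugacy theorem in \cite{murota2003discrete}, or — if one wants a self-contained argument — by first establishing it on the Newton polytope side (where positive homogeneity reduces the problem to analyzing the support function of the M-convex polytope $\operatorname{dom}(f)$) and then using total dual integrality of the associated difference-constraint system to transfer the identity back to $\mathbb{Z}^n$.
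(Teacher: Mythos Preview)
The paper does not prove this statement: it is quoted verbatim as Theorem~1.14 of Murota's book \cite{murota2003discrete} and used as background for the discrete-convexity appendix, with no accompanying argument. So there is no ``paper's own proof'' to compare your proposal against.

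That said, your sketch is broadly in line with how the result is developed in \cite{murota2003discrete}: the $(\Leftarrow)$ direction via LP duality with difference constraints and the min-cost-flow interpretation of $\delta_D^*$ is exactly the viewpoint the paper adopts immediately after citing the theorem (in the ``Graphic matroid and network simplex'' subsection). For the $(\Rightarrow)$ direction, your plan to invoke the general L/M conjugacy theorem from \cite{murota2003discrete} is the honest route; the self-contained alternative you outline (checking L-convexity of the subdifferential set directly via midpoint closure) is plausible but the step ``$f(x)+f(x)\ge f(\lceil x\rceil')+f(\lfloor x\rfloor')$-type inequalities\ldots follow by pairing the exchange axiom with the midpoint splitting'' is not a real argument as written---the exchange axiom acts on pairs of \emph{points in the domain of $f$}, not on midpoints of dual vectors $p,p'$, so you would need to go through the full conjugacy machinery anyway. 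If you intend this as a citation-with-sketch rather than a standalone proof, it is adequate; if you intend it as a complete proof, the forward direction needs the actual discrete Fenchel--Moreau theorem spelled out, which is several chapters of Murota's book.
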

%(FIXME: do we need to require $f$ is zero on all-ones or is it automatic?) should be automatic
% \paragraph{Positively homogeneous M-convex functions compute uncapacitated min cost flows.} 
% TODO. recover $C$ from the function by differentiating at zero.

% \subsection{Shortest paths and subgradients}
% Shortest distance from $u \to v$? If we use $\Delta$ (also related to differentiation, maybe called the differencing operator?) at zero, we recover shortest distance from every vertex to every other vertex. But if we differentiate at $(n - 1, -1, \ldots, -1)$ then we can still recover the vector of shortest distances from $1$. So we can use the second interpretation for computation.

% \begin{definition}
% Given $f : \mathbb{Z}^n \to \mathbb{R}$, we define its \emph{discrete subdifferential} by
% \[ \partial f = \{ p \in \mathbb{Z}^n : f(x) + f^*(z) = \langle p, x \rangle \]
% \end{definition}

% \section{Algorithmic aspects of the discrete convex conjugate}
% Now we discuss algorithmic aspects of evaluating the conjugate of an L-convex set.
\subsection{Graphic matroid and network simplex}\label{sec:simplex}
\paragraph{The support function computes min-cost flows.}
% integer flow polytope has graphic matroid vertices, spanning trees, etc.
Recall that every $L$-convex set corresponds to an integer zero-diagonal matrix $C$
via the representation
\[ D = \{p \in \mathbb{Z}^n : p_i - p_j \le C_{ij} \} \]
and the discrete convex conjugate is
\[ \delta_D^*(x) = \sup_{p \in D} \langle p, x \rangle. \]
The convex hull of $D$ will be an integral polytope, i.e. a convex set with integral vertices, so $\delta_D^*$ agrees on $\mathbb{Z}^n$ with the restriction of the support function of $conv(D)$, i.e. for all $x \in \mathbb{Z}^n$ we have the equivalent formula
\[ \delta_D^*(x) = \sup_{p \in \overline D} \langle p, x \rangle. \]
where
\[ \overline D =  \{p \in \mathbb{R}^n : p_i - p_j \le C_{ij} \} \]
is the convex hull of $D$. By strong LP duality, this can in turn be rewritten
as an optimization problem over flows:
\[ \delta_D^*(x) = \inf_{f \in \overline F_x} \sum_{i,j} C_{ij} f_{ij} \]
where
\[ \overline F_x = \{ f \in \mathbb{R}_{\ge 0}^{n \times n} : \sum_i C_{ij} f_{ij} - \sum_i C_{ji} f_{ji} = x_i,\qquad f_{jj} = 0 \forall j \}  \]
is the set of real-valued flows satisfying demand vector $x$. 
%Think this is not true, at least not in a natural way (RMK: also follows from discrete convex duality? check)
By complementary slackness from linear programming theory, any vertex of $\overline F_x$ has at least $(n - 1)(n - 1) - (n - 1)$ zero entries; equivalently, any vertex of $\overline F_x$ has at most $n - 1$ nonzero entries.
In fact, any vertex of $\overline F_x$ is supported on a \emph{forest}: this follows by considering connected components of the graph and applying complementary slackness on each component.

The fundamental \emph{simplex algorithm} from linear programming, in this setting, recasts the linear program as an optimization over the elements of the graphic matroid. Each vertex of $\overline F_x$ correspond to one or more bases in the sense of simplex, which is the same as the sense of basis for the graphic matroid, i.e. a spanning tree. In general, degeneracy can occur when many different spanning trees correspond to the same flow (i.e. the same vertex): then pivoting may not strictly improve objective until we find the ``right'' basis for this point. This is called \emph{stalling} in the linear programming literature. 

% sidenote: simplex always polynomial time over network flow programs, result of Orlin (see \url{https://link.springer.com/article/10.1007/bf02614365}) (see \url{https://arxiv.org/pdf/2111.14050} for related recent work). even weakly polynomial time with the correct choice of pivot rule.

% if we did simplex on the spanning tree polytope is it polynomial time? 

For tree-like demands $x$, the flow polytope has no nondegenerate vertices\footnote{In other words, given a integer-valued flow from the source to all other vertices, we can read off a unique choice of basis (spanning tree).} so simplex is guaranteed to make progress in every step. Therefore, the network simplex algorithm will necessarily converge in polynomial time (in the size of the numbers in its input, i.e. in psuedopolynomial time) for any standard pivot rule. As a very special case, this tells us that network simplex will converge in polynomial time in the shortest path setting on unweighted graphs considered in our paper.

% so in tree space, there is some clever choice of pivoting (c.f. Orlin result) which makes local search converge. but the steepest ascent property can only really be expected when the flow polytope is nondegenerate, so there is a bijection between the flows and spanning trees.

%  using the spanning tree structure is called network simplex i think
\begin{proposition}\label{prop:p2-polytime}
1-local search on (P2) converges in polynomial time for any connected graph $G$.
\end{proposition}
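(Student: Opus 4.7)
The plan is to reduce 1-local search on spanning trees for (P2) to the network simplex algorithm applied to the LP formulation \eqref{eqn:lp-flow}, which is discussed in Appendix~\ref{sec:simplex}. Concretely, I would formulate (P2) as the min-cost flow problem on $G$ in which the source $s$ supplies $n-1$ units and every other vertex demands one unit, with all edges having unit cost and infinite capacity. The first step is to verify that for any spanning tree $T$ of $G$, there is a unique feasible flow $f_T$ supported on $T$: orienting $T$ away from $s$, the value $f_T(u\to v)$ must equal the number of descendants of $v$ in $T$ (including $v$). A short double-counting argument then shows that the LP objective of $f_T$ in \eqref{eqn:lp-flow} coincides exactly with $\sum_{v} \mathsf{d}_T(s,v)$, the objective of (P2). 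Moreover, each 1-local search neighbor of $T$ --- obtained by adding a non-tree edge $e$ (creating a unique cycle $C$ in $T\cup\{e\}$) and removing some edge $e'\in C\setminus\{e\}$ --- corresponds exactly to a neighboring basic feasible solution of $f_T$ under a simplex pivot on the column of $e$.

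The second, and key, step is to establish non-degeneracy of each basic feasible solution $f_T$: since every subtree of $T$ rooted at a non-source vertex contains at least one vertex, $f_T(e) \ge 1$ for every $e \in T$. Hence the $n-1$ basic variables are all strictly positive, so $f_T$ is non-degenerate. This is the place where the ``tree-like'' demand vector plays its crucial role and rules out the stalling phenomena that plague general network simplex instances.

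Third, I would combine these pieces. Non-degeneracy implies that every simplex pivot which changes the basis strictly changes the flow, and hence strictly decreases the objective provided it had negative reduced cost; equivalently, every 1-local search edge swap that produces a different tree either strictly decreases or strictly increases the (P2) objective. Therefore, if 1-local search halts at a tree $T$, then no feasible pivot from $f_T$ has negative reduced cost, which by standard LP duality implies $f_T$ is globally optimal for \eqref{eqn:lp-flow} and hence $T$ is globally optimal for (P2).

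Finally, for the polynomial running time, I would observe that the (P2) objective is a non-negative integer, upper bounded by $(n-1)\cdot\operatorname{diam}(G) = O(n^2)$, and strictly decreases by at least one at each successful 1-local search step. Since each step can be carried out in polynomial time (e.g., by examining the $O(nm)$ candidate edge swaps and computing the change in subtree sums), the total running time is polynomial. The main subtlety in the argument is simply matching up the combinatorial ``edge swap'' move with the algebraic simplex pivot (and in particular verifying that swapping any edge of the fundamental cycle yields another spanning tree); everything else follows from standard LP reasoning together with the non-degeneracy observation above.
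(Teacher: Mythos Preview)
Your proposal is correct and follows essentially the same approach as the paper: interpret 1-local search on (P2) as network simplex on the min-cost flow LP \eqref{eqn:lp-flow}, observe that the ``tree-like'' demand vector makes every basic feasible solution non-degenerate (each tree edge carries at least one unit of flow), and conclude that the integer objective strictly decreases at every step, with a polynomial upper bound on its initial value. Your write-up is in fact more detailed than the paper's, which simply cites the simplex interpretation and non-degeneracy and leaves the rest implicit. One small imprecision: your claim that ``every 1-local search edge swap that produces a different tree either strictly decreases or strictly increases the objective'' is slightly too strong (a zero-reduced-cost entering edge yields a different tree with equal objective), but you do not actually use this; your argument only needs that a non-optimal tree has \emph{some} strictly improving neighbor, which non-degeneracy guarantees.
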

\begin{proof}
As discussed above, given the interpretation of local search as the simplex algorithm, this is a direct consequence of basic properties of the simplex algorithm, see e.g. \cite{bazaraa2011linear,ahuja1993network}.
Explicitly, the objective function is integer-valued and can be at most polynomially large, and there are no \emph{degeneracies} in the corresponding linear program --- i.e., every vertex of the linear program is in one-to-one correspondence with a basis (which in turn corresponds to a spanning tree). So the simplex algorithm with any standard pivot rules will always decrease the objective value at every step until it finds the global minimum.
\end{proof}
Recall that (P2) is
\[ \min_{T \in \mathcal{T}} \; \sum_{v \in V} d_T(s,v) \]
where $\mathcal{T}$ is the set of spanning trees of $G$.
By 1-local search we mean any algorithm which, started from a spanning tree $T$, iteratively moves to a tree $T'$ which differs in the addition and removal of one edge, as long as the objective value for $T'$ is smaller. It is also possible to prove Proposition~\ref{prop:p2-polytime} directly by considering breath-first-search from the root $s$ in the underlying graph.

\begin{remark}[Simplex is strongly polynomial time for shortest paths]
Although for worst-case linear programs, all known versions of the simplex algorithm currently require exponential time,
it was proved by Orlin \cite{orlin2009simplex} that the simplex algorithm with Dantzig's pivot rule is \emph{strongly polynomial time for shortest path problems}. See also the improved result of Hansen et al \cite{hansen2014dantzig}. Orlin \cite{orlin1997polynomial} also showed existence of a pivoting approach for general flow problems with polynomial time guarantees.
\end{remark}
\subsection{Dijkstra's algorithm and L-concave function maximization \cite{murota2014dijkstra}}
From the discussion in the previous section, we can understand how the \emph{network simplex} algorithm is a natural
method to optimize linear functions over $L$-convex sets. It was observed by Murota and Shioura \cite{murota2014dijkstra} that Dijkstra's algorithm also naturally arises from considerations in discrete convex analysis.

Let $D$ be as defined in the previous section and interpret $C_{ij}$ as the edge length from vertex $i$ to vertex $j$ in the graph of interest.
Then \emph{shortest path problem} from source vertex $1$ corresponds to computation of
\[ \delta_D^*(-(n - 1), 1, \ldots, 1) = \max_{p \in D} \sum_v (p(v) - p(1)). \]
This is optimization of a linear function over an L-convex set, which is a special case of the 
more general problem of $L$-concave function maximization (the larger class consists of functions with obey
a midpoint inequality similar to the continuous case, and discretized similarly to the definition of $L$-convex set). Any $L$-concave maximization problem can
be solved in polynomial time by the \emph{steepest ascent algorithm} from discrete convex analysis \cite{murota2003discrete}. Consist with the name, the steepest ascent algorithm at iteration $t$ applied to $L$-concave function $g$ starts from a point $p_t \in \mathbb R^n$ and moves to $p_{t + 1} = \arg\max_{y \in \{0,1\}^n, \epsilon \in \{\pm 1\}} g(p + \epsilon y)$. $L$-concave functions have a ``local to global'' property similar to continuous concave functions, i.e., local optima of such functions are always global optima. So steepest ascent is globally convergent. 

In general, steepest ascent can always be implemented by appealing to an oracle for submodular minimization to solve the optimization over $y$. However, in specific cases the steepest ascent algorithm can be simplified. In the case of the shortest path problem, it turns out that the steepest ascent algorithm is \emph{Dijkstra's algorithm} \cite{murota2003discrete}; one always has $\epsilon = 1$ and steepest ascent is increasing the node potentials $p$ just as in Dijkstra's algorithm (the $p$ correspond to the location of the vertices in the ``physical'' interpretation of Dijkstra's algorithm mention before). See the paper for full details.
%\paragraph{Local to global optimality property for M-convex functions.} TODO. every local stationary point is a global optimum.
\subsection{Shortest path tree is not discrete convex}
%Show that the shortest path tree objective is not M-convex, even though it is defined over an M-convex set (integral points of the matroid base polytope). Use a cycle with one edge added.
The support of the shortest path tree objective is on the bases of the graphic matroid, which is an M-convex set, so it is natural to ask if the function itself is M-convex (equivalently, to ask if it is a valuated matroid). The following example shows that this is not the case. (It is direct to see that shortest path tree objective is not $L$-convex as it does not have the needed discrete midpoint property.)
\begin{example}\label{ex:not-mconvex}
For notational convenience, we will root the tree in this example at vertex $0$ instead of vertex $1$.
Let $m \ge 1$, and consider a graph $G$ which is given by starting with a cycle on $2m + 1$ nodes labeled $0,\ldots,2m$, and adjoining a separate path of length $m + 3$ from node $0$ to node $m$ with intermediate vertices labeled $1',2',\ldots,(m + 1)',(m + 2)'$. Let $T_1$ be the optimal shortest path tree, which is formed by dropping the edges from $m$ to $m + 1$ and from $(m + 1)'$ to $(m + 2)'$ from the graph. Let $T_2$ be the spanning tree formed from the original graph by dropping the edge from $m - 1$ to $m$ and from $m$ to $(m + 2)'$. Let $OPT$ be the objective value achieved by $T_1$, and observe that the objective value achieved by $T_2$ is $OPT + 2$ since vertices $m$ and $(m + 2)'$ are now further from the root.  

If we consider an exchange between the two trees by dropping the edge from $(m + 1)'$ to $(m + 2)'$ from $T_2$ and adding it to $T_1$, we must remove the edge from $m$ to $(m + 2)'$ from $T_1$ and add it to $T_2$. This makes the objective value of $T_1$ worsen to $OPT + 1$ and the objective value of $T_2$ stays the same, which violates the exchange axiom.
\end{example}
See Figure~\ref{fig:not-dca} for a diagram illustrating the example.
\begin{figure}[t]
\centering
\begin{tikzpicture}[scale=0.6,
  every node/.style={font=\small},
  vtx/.style={circle, draw, inner sep=1.7pt, fill=white},
  pvtx/.style={circle, draw, inner sep=1.7pt, fill=white},
  root/.style={circle, draw, inner sep=2pt, thick, fill=black!10},
  graph/.style={line width=0.6pt, draw=black!40},
  pathstyle/.style={line width=0.6pt, draw=black!40},
  Tedge/.style={line width=1.5pt, draw=blue!70},
  removed/.style={line width=1.2pt, draw=red!70, dashed}
]

% ==================== PARAMS ====================
\def\m{4} % <-- set m here (>=1)
\pgfmathtruncatemacro{\twoM}{2*\m}
\pgfmathtruncatemacro{\mminusone}{\m-1}
\pgfmathtruncatemacro{\mplusone}{\m+1}
\pgfmathtruncatemacro{\mplustwo}{\m+2}
\pgfmathsetmacro{\R}{3.1}     % radius for the cycle
\pgfmathsetmacro{\Rtwo}{4.4}  % radius for the prime path arc
\pgfmathsetmacro{\dtheta}{360/(\twoM+1)} % angle step on the cycle

% A helper to place one panel at a given x-shift
\newcommand{\OnePanel}[1]{%
  \begin{scope}[xshift=#1]
    % --------- CYCLE VERTICES ----------
    % Place v0,...,v_{2m} on a circle, with v0 at angle 90 degrees (top)
    \foreach \k in {0,...,\twoM}{
      \pgfmathsetmacro{\ang}{90 - \k*\dtheta}
      \ifnum\k=0
        \node[vtx, root] (v0) at ({\ang}:\R) {$0$};
      \else
        \node[vtx] (v\k) at ({\ang}:\R) {$\k$};
      \fi
    }

    % --------- PRIME PATH VERTICES ----------
    % Angles for v0 and vm on the cycle
    \pgfmathsetmacro{\angzero}{90}
    \pgfmathsetmacro{\angm}{90 - \m*\dtheta}
    % We'll place (m+2) prime nodes evenly in angle between v0 and vm on a larger circle
    \pgfmathtruncatemacro{\mpTwo}{\m+2}
    \pgfmathtruncatemacro{\mpOne}{\m+1}
    \pgfmathsetmacro{\step}{(\angzero - \angm)/(\m+3)} % positive step
    \foreach \i in {1,...,\mpTwo}{
      \pgfmathsetmacro{\angp}{\angzero - \i*\step}
      \node[pvtx] (p\i) at ({\angp}:\Rtwo) {$\i'$};
    }

    % --------- BASE GRAPH G (light gray) ----------
    % Cycle edges
    \foreach \k in {0,...,\twoM}{
      \pgfmathtruncatemacro{\knext}{mod(\k+1,\twoM+1)}
      \draw[graph] (v\k) -- (v\knext);
    }
    % Prime chain edges 0 -- 1' -- ... -- (m+2)' -- m
    \draw[pathstyle] (v0) -- (p1);
    \foreach \i in {1,...,\mpOne}{
      \pgfmathtruncatemacro{\j}{\i+1}
      \draw[pathstyle] (p\i) -- (p\j);
    }
    \draw[pathstyle] (p\mpTwo) -- (v\m);

  \end{scope}%
}

% ==================== LEFT: Graph G ====================
\begin{scope}
  \OnePanel{0cm}
  % Title
  % \node[font=\small, yshift=3.2cm] at (0,0) {$G$: cycle on $2m{+}1$ with extra path $0\to 1'\to \cdots \to (m{+}2)'\to m$};
\end{scope}

% ==================== MIDDLE: T1 ====================
\begin{scope}
  \OnePanel{9.2cm}

  % Draw T1 edges in thick blue:
  % Cycle edges except (m, m+1)
  \foreach \k in {0,...,\twoM}{
    \ifnum\k=\m
      % skip the edge (m, m+1)
    \else
      \pgfmathtruncatemacro{\knext}{mod(\k+1,\twoM+1)}
      \draw[Tedge] (v\k) -- (v\knext);
    \fi
  }
  % Prime chain edges except ((m+1)', (m+2)'):
  \draw[Tedge] (v0) -- (p1);
  \foreach \i in {1,...,\m}{
    \pgfmathtruncatemacro{\j}{\i+1}
    \draw[Tedge] (p\i) -- (p\j);
  }
  % skip (p{m+1}--p{m+2})
  \draw[Tedge] (p\mplustwo) -- (v\m);

  % Mark dropped edges (red dashed)
  \draw[removed] (v\m) -- (v\mplusone);
  \draw[removed] (p\mplusone) -- (p\mplustwo);

  % Caption
  % \node[font=\small, yshift=3.2cm] at (0,0) {$T_1$: drop $(m,m{+}1)$ and $\big((m{+}1)',(m{+}2)'\big)$};
\end{scope}

% ==================== RIGHT: T2 ====================
\begin{scope}
  \OnePanel{18.4cm}

  % Draw T2 edges in thick blue:
  % Cycle edges except (m-1, m)
  \foreach \k in {0,...,\twoM}{
    \ifnum\k=\mminusone
      % skip (m-1, m)
    \else
      \pgfmathtruncatemacro{\knext}{mod(\k+1,\twoM+1)}
      \draw[Tedge] (v\k) -- (v\knext);
    \fi
  }
  % Prime chain edges except (m, (m+2)')
  \draw[Tedge] (v0) -- (p1);
  \foreach \i in {1,...,\mplusone}{ % i = 1..(m+1)
    \pgfmathtruncatemacro{\j}{\i+1}
    \draw[Tedge] (p\i) -- (p\j);
  }
  % skip p{m+2}--v{m}

  % Mark dropped edges (red dashed)
  \draw[removed] (v\mminusone) -- (v\m);
  \draw[removed] (p\mplustwo) -- (v\m);

  % Caption
  % \node[font=\small, yshift=3.2cm] at (0,0) {$T_2$: drop $(m{-}1,m)$ and $\big(m,(m{+}2)'\big)$};
\end{scope}

% ==================== LEGEND ====================
\begin{scope}[yshift=-4.6cm]
  \draw[Tedge] (-1.2,0) -- (0.8,0); \node[anchor=west] at (1.0,0) {tree edges};
  \draw[removed] (-1.2,-0.6) -- (0.8,-0.6); \node[anchor=west] at (1.0,-0.6) {dropped edges};
  \draw[graph] (-1.2,-1.2) -- (0.8,-1.2); \node[anchor=west] at (1.0,-1.2) {other edges of $G$};
  \node[root, label=right:{root $0$}] at (-1.2,-1.9) {};
\end{scope}

\end{tikzpicture}
\caption{Graph $G$ and the spanning trees $T_1$ and $T_2$ from Example~\ref{ex:not-mconvex} with $m{=}4$.}\label{fig:not-dca}
\end{figure}

\section{Free energy barrier in the space of paths}\label{apdx:path-barrier}
Fix a target vertex $t$, and
consider for each $n$ the Gibbs measure
\begin{equation}\label{eqn:gibbs-paths}
\nu_{G, \beta}(P) = \frac{1}{Z_{G,\beta}} \exp\left(-\beta \log\log(n) |P|\right) = \frac{1}{Z_{G,\beta}} \log(n)^{-\beta |P|}
\end{equation}
over paths $P \in \mathcal P_{s,t}$ from $1$ to $t$. To be more precise, we let paths in $\mathcal P_{s,t}$ repeat vertices (i.e. they are not required to be simple, sometimes called ``walks''), and we also restrict paths in $\mathcal P_{s,t}$ to have length at most $n - 1$ where $n$ is the number of vertices. We make the latter restriction since the shortest path has at most $n - 1$ edges in any graph, and we can avoid unnecessary technical details by defining the Gibbs measure over a finite space, but the results below do not depend on this choice. 
%In general, we expect that the proof of the key results about the low=temperature $\beta > 1$ phase below applies in the same way for any reasonable definition of a path (e.g. for simple paths, or allowing infinitely long paths). 
For a path $P$ we will let $|P|$ denote its length, i.e. the total number of edges it contains. 
\subsection{Basic properties of Gibbs measures on paths}
Below we show that the model is in a low temperature phase when $\beta > 1$ and describe its structure.
\begin{remark}
It is not hard to show with similar ideas that Gibbs measure on paths exhibits a phase transition at $\beta = 1$. Informally, at low temperature ($\beta > 1$) the measure is almost completely supported on the true shortest paths, whereas at high temperature ($\beta < 1$) the measure is close to uniform and typical sample paths are much longer than the shortest path. Since the low temperature regime is most relevant to the results of \cite{LS2024}, we omit the details. 
\end{remark}
\begin{lemma}\label{lem:few-subopt-paths}
Fix $\beta > 1$. With probability $1 - o(1)$, the following holds true uniformly over all $r \ge 1$. 
Let  $\ell^*$ be the length of the shortest path from $1$ to $2$ in the random graph, then
\[ \nu_{G,\beta}(\{P' : |P'| \ge \ell^* + r\}) \le 2/\log(n)^{(\beta - 1)r}. \]
Furthermore,
\[ \sum_{\ell \ge \ell^* + r} |\{P : |P| = \ell \}| \log(n)^{-\beta \ell} \le 2\log(n)^{-\beta \ell^*} \log(n)^{-(\beta - 1) r}. \]
\end{lemma}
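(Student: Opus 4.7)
The plan is to establish the second (sum) bound first and then deduce the first (probability) bound as an immediate corollary. For the corollary, I would observe that with probability $1 - o(1)$ the vertices $1$ and $2$ lie in the same giant component of $G$, so a shortest path of length $\ell^*$ exists and hence $Z_{G,\beta} \geq \log(n)^{-\beta \ell^*}$. Dividing the bound on the sum by this lower bound on the partition function yields exactly the stated bound on $\nu_{G,\beta}(\{P' : |P'| \geq \ell^* + r\})$. Note that the uniformity in $r \geq 1$ then transfers automatically.

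For the sum bound itself, the main idea is that the series $\sum_\ell N_\ell \log(n)^{-\beta \ell}$ decays geometrically in $\ell$ when $\beta > 1$, where $N_\ell$ denotes the number of walks from $1$ to $2$ of length $\ell$. Enumerating walks by choosing $\ell - 1$ intermediate vertices and requiring all $\ell$ edges to be present gives $\E[N_\ell] \leq n^{\ell - 1} q_n^\ell = (\alpha_n \log n)^\ell / n$, so
\[
\E\Big[\sum_{\ell \geq \ell_0} N_\ell \log(n)^{-\beta \ell}\Big] \leq \frac{1}{n} \sum_{\ell \geq \ell_0} \alpha_n^\ell \log(n)^{(1 - \beta)\ell},
\]
which is a geometric series with common ratio $\alpha_n \log(n)^{1 - \beta} \to 0$ for $\beta > 1$. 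For any fixed deterministic threshold $\ell_0$, Markov's inequality then controls this random sum by a polynomial multiple of its expectation with probability at least $1 - n^{-10}$ (say); a union bound over the at most $n$ relevant integer values of $\ell_0 \in \{1, \ldots, n-1\}$ upgrades this to uniform control in $\ell_0$ with probability $1 - o(1)$.

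The final step is to substitute the random threshold $\ell_0 = \ell^* + r$. Using concentration of $\ell^*$ around $\ell_n^\# := \log n / \log(nq_n)$ (from the BFS/diameter analysis in Section~\ref{sec:single}, which gives $\ell^* - \ell_n^\# = O(1)$ a.a.s.), one has $(nq_n)^{\ell^*}/n = O(1)$ with probability $1 - o(1)$, so the uniform bound translates into
\[
\sum_{\ell \geq \ell^* + r} N_\ell \log(n)^{-\beta \ell} \leq C \log(n)^{-\beta \ell^*} \cdot \bigl(\alpha_n \log(n)^{1-\beta}\bigr)^r,
\]
which is at most $2 \log(n)^{-\beta \ell^*} \log(n)^{-(\beta-1)r}$ for $n$ large. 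The hard part will be absorbing the $\alpha_n^r$ factor arising in this last inequality into the geometric decay cleanly; since $\alpha_n = O(1)$ while $\log(n)^{\beta - 1} \to \infty$, one has $\alpha_n^r \leq \log(n)^{\epsilon r}$ for any $\epsilon > 0$ once $n$ is large, so the bound goes through by running the geometric argument with a slightly smaller exponent $\beta' \in (1, \beta)$ in place of $\beta$ and using the excess decay to subsume the $\alpha_n^r$ term.
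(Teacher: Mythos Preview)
Your approach is quite different from the paper's, and it has a genuine gap.

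The central problem is the Markov step. To get failure probability $O(n^{-c})$ from Markov's inequality you must allow a multiplicative slack of order $n^c$, and any positive power of $n$ is fatal here. Since $\ell^* \sim \log n / \log(nq_n)$, one has $(nq_n)^{\ell^*}$ of order $n$ up to polylogarithmic factors; after substituting $\ell_0 = \ell^* + r$ your bound reads
\[
n^{c}\cdot \frac{1}{n}\,(nq_n)^{\ell^*}\log(n)^{-\beta\ell^*}\bigl(\alpha_n\log(n)^{1-\beta}\bigr)^r \;\asymp\; n^{c}\cdot \log(n)^{-\beta\ell^*}\cdot\alpha_n^r\,\log(n)^{-(\beta-1)r},
\]
whereas the target carries no polynomial prefactor. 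Your passage from ``polynomial multiple of its expectation'' to the displayed inequality with an absolute constant $C$ silently drops this $n^c$; the $\beta'<\beta$ trick handles the benign $\alpha_n^r$ but cannot recover a lost power of $n$. A secondary issue: the justification of $\E[N_\ell]\le n^{\ell-1}q_n^\ell$ is incorrect for walks, since a walk that reuses an edge has existence probability $q_n^{e}$ with $e<\ell$, which \emph{exceeds} $q_n^{\ell}$.

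The paper takes a much shorter route that needs neither first moments nor Markov. Every walk of length $\ell$ from vertex $1$ makes at most $\Delta$ choices per step, where $\Delta$ is the maximum degree of $G$; hence $|\{P:|P|=\ell\}|\le\Delta^\ell$ holds deterministically for any realization of $G$. The only probabilistic ingredient is the standard bound $\Delta=O(\log n)$ a.a.s., which makes $\Delta/\log(n)^{\beta}=O(\log(n)^{1-\beta})\to 0$ for $\beta>1$; the geometric series $\sum_{\ell\ge\ell^*+r}(\Delta/\log(n)^\beta)^\ell$ then gives both claims directly and uniformly in $r$, with no union bound over thresholds.
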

\begin{proof}
Since $\ell^*$ be the length of the shortest path from $1$ to $2$, $Z_{G,\beta} \ge \exp(-\beta \ell^* \log\log(n))$.
Observe that
\[ \nu_{G,\beta}(\mathcal P \setminus \arg\min |P|) = \frac{1}{Z_{G,\beta}} \sum_{\ell \ge \ell^* + r} |\{P : |P| = \ell \}| \log(n)^{-\beta \ell} \le \frac{1}{Z_{G,\beta}} \sum_{\ell \ge \ell^* + r} (\Delta/\log(n)^{\beta})^{\ell} \le \sum_{s \ge r}^{\infty} (\Delta/\log(n)^{\beta})^{s}. \]
where $\Delta$ is the maximum degree in the graph. Since $\Delta = O(\log n)$ asymptotically almost surely and $\beta > 1$, the geometric series converges to at most
\[ 2 (\Delta/\log(n)^{\beta})^r \le 2/\log(n)^{(\beta - 1)r}. \]
The second inequality follows from essentially the same argument. 
\end{proof}
\begin{lemma}\label{lem:low-temp-path-is-opt}
Fix $\beta > 1$.
There exists $\epsilon = o(1)$ as $n \to \infty$ such that
\[ \nu_{G,\beta} = (1 - \epsilon) Uni(\arg\min |P|) + \epsilon \]
with probability $1 - o(1)$.
Equivalently, $\nu_{G,\beta}$ is equal to the uniform distribution
over the set of shortest paths from $1$ to $2$ up to $o(1)$ Huber
contamination \cite{huber1964robust}. 
\end{lemma}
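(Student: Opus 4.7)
The plan is to deduce the claim essentially directly from Lemma~\ref{lem:few-subopt-paths}. The crucial structural observation, which is implicit but deserves to be highlighted, is that the unnormalized Gibbs weight $\log(n)^{-\beta |P|}$ depends on the path $P$ only through its length $|P|$. In particular, all shortest paths carry the same weight, so conditioned on $\{P \in \arg\min |P|\}$ the measure $\nu_{G,\beta}$ is \emph{exactly} the uniform distribution $\mathrm{Uni}(\arg\min|P|)$; no approximation is needed at this step, and this requires no probabilistic input at all.

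With that observation in hand, I would set
\[ \epsilon \;:=\; \nu_{G,\beta}(\{P : |P| \ge \ell^* + 1\}) \]
and apply Lemma~\ref{lem:few-subopt-paths} with $r = 1$, which gives
\[ \epsilon \;\le\; 2\log(n)^{-(\beta-1)} \;=\; o(1) \]
asymptotically almost surely, using $\beta > 1$. This is where the probabilistic content enters, and this is the only step that uses the low-temperature hypothesis.

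Finally, I would decompose $\nu_{G,\beta}$ according to whether or not a sampled path is optimal:
\[ \nu_{G,\beta} \;=\; (1-\epsilon)\,\nu_{G,\beta}\!\left(\cdot \,\middle|\, P \in \arg\min|P|\right) \;+\; \epsilon\,\nu_{G,\beta}\!\left(\cdot \,\middle|\, P \notin \arg\min|P|\right). \]
The first conditional measure equals $\mathrm{Uni}(\arg\min|P|)$ by the structural observation above, and the second is some probability measure $\mu_{\mathrm{sub}}$ supported on suboptimal paths; interpreting the ``$+\epsilon$'' in the lemma statement as $+\epsilon\,\mu_{\mathrm{sub}}$ (the standard Huber contamination model), this is the claimed decomposition.

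There is no hard step here; the proof is really just a careful bookkeeping of Lemma~\ref{lem:few-subopt-paths}. The only point worth flagging is that the uniformity on $\arg\min|P|$ is an exact (not approximate) consequence of the energy $|P|$ being the only relevant statistic — a feature that would break if edge lengths were non-uniform, in which case one would have to work harder with an additional degeneracy argument.
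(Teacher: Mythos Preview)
Your proposal is correct and matches the paper's proof essentially line for line: the paper's argument is simply ``Since the restriction of $\nu$ to $P \in \arg\min |P|$ is the uniform measure, the conclusion follows from the previous lemma applied with $r = 1$.'' Your version just makes the decomposition and the bound $\epsilon \le 2\log(n)^{-(\beta-1)}$ explicit.
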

\begin{proof}
% Since $\ell^*$ be the length of the shortest path from $1$ to $2$, $Z_{G,\beta} \ge \exp(-\beta \ell^* \log\log(n))$.
% Observe that
% \[ \nu_{G,\beta}(\mathcal P \setminus \arg\min |P|) = \frac{1}{Z_{G,\beta}} \sum_{\ell > \ell^*} |\{P : |P| = \ell \}| \log(n)^{-\beta \ell} \le \frac{1}{Z_{G,\beta}} \sum_{\ell > \ell^*} (\Delta/\log(n)^{\beta})^{\ell} \le \sum_{\ell > \ell^*} (\Delta/\log(n)^{\beta})^{\ell - \ell^*} \]
% where $\Delta$ is the maximum degree in the graph. Since $\Delta = O(\log n)$ asymptotically almost surely and $\beta > 1$, the geometric series converges to $o(1)$. Because
Since the restriction of $\nu$ to $P \in \arg\min |P|$ is the uniform measure, the conclusion follows from the previous lemma applied with $r = 1$.
\end{proof}
% TODO: add that there is a phase transition at $\beta = 1$. 
% \begin{lemma}
% Fix $\beta < 1$. Then...
% \end{lemma}
% \begin{proof}
% TODO: the number of paths of length $\ell$ is roughly $\log(n)^{\beta \ell}$ for large $\ell$ (there must be a large number of paths to at least one vertex at large distances, and then if we add the diameter this means there are large numbers of paths to all vertices), so large $\ell$ are much more likely than $\ell$ close to $\ell^*$. 
% \end{proof}
\subsection{Key structural result from Li--Schramm \cite{LS2024}}
Li and Schramm showed that the set of approximate optima for the shortest path problem satisfy a very strong structural condition \cite{LS2024}.
\begin{definition}
For $\epsilon > 0$, let
$\mathcal{P}_{\epsilon}(G)$ be the set of paths from vertex $1$ to $2$ whose length
is at most $(1 + \epsilon)$ times longer than the shortest path from $1$ to $2$.
\end{definition}
\begin{theorem}[Special case of Theorem 2.2 of \cite{LS2024}]\label{thm:ls-structural}
There exists a constant $C > 0$ such that for any $\epsilon > 0$ sufficiently small,
the following is true. With probability at least $1 - O\left(\frac{\log\log n}{\log n}\right)$,
all pairs $p,p' \in \mathcal{P}_{\epsilon}$ are nearly disjoint in the following sense:
\[ \frac{|p \cap p'|}{\sqrt{|p||p'|}} \le C \epsilon. \]
\end{theorem}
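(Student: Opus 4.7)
The plan is to combine the BFS analysis from Section~\ref{sec:single} applied symmetrically to both endpoints $1$ and $2$ with a structural decomposition of near-optimal paths. First I would reveal the graph via parallel BFS from $1$ and from $2$, and invoke Proposition~\ref{prop:conc} together with Theorem~\ref{thm:main1-tight} to get concentration for the sizes of the level sets $\Gamma_d(1)$ and $\Gamma_d(2)$ for $d \le d^*$. Proposition~\ref{prop:localtree} then gives that the shortest-path DAGs from $1$ and $2$ are tree-like out to radius $(d^*-1)/2$; in particular, there is a unique shortest path from either endpoint to any vertex in this range, and Corollary~\ref{cor:no-short-cycle} rules out short cycles through either endpoint.

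For any $p \in \mathcal{P}_\epsilon(G)$ of length $L \le (1+\epsilon)\ell^*$, I would track the BFS distance $d(1,\cdot)$ along $p$. Decomposing the edges of $p$ into forward (distance $+1$), backward ($-1$), and horizontal ($0$) types with counts $U$, $D$, $H$, the identities $U - D = \ell^*$ and $U + D + H = L$ imply $2D + H \le \epsilon\ell^*$. So $p$ deviates from being a shortest path in a very controlled way: all but at most $O(\epsilon\ell^*)$ of its edges are forward edges in the shortest-path DAG from $1$ (and, by symmetry, from $2$), and every vertex of $p$ lies in $V_\epsilon := \{v : d(1,v) + d(v,2) \le (1+\epsilon)\ell^*\}$.

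Given two such paths $p, p'$ with shared edge set $p \cap p'$, I would decompose the intersection into maximal contiguous subsegments $c_1,\ldots,c_m$ and split the analysis by location. Near the endpoints, within BFS-radius $(d^*-1)/2$ of $1$ or $2$, the tree-like property rigidly determines any shortest-path segment, so shared subsegments there amount to $p$ and $p'$ agreeing on which branch of the shortest-path DAG is taken; such overlap is controlled by the $O(\epsilon\ell^*)$ budget of non-forward edges that $p$ and $p'$ may use to deviate from that DAG. Away from the endpoints, at intermediate level pairs $(d_1, d_2)$ with $d_1 + d_2 \approx \ell^*$, the expected number of vertices is $(nq)^{d_1+d_2}/n = O(1)$ per pair, which severely constrains the ``core'' of the near-optimal path structure and hence the supply of candidate shared edges.

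The main obstacle will be in the middle-range analysis: the naive bound $|p \cap p'| \le |V_\epsilon|$ is far too weak, since $|V_\epsilon|$ can be as large as $n^\epsilon \ell^*$. The saving must come from enforcing that shared edges respect both path structures simultaneously --- an edge in $p \cap p'$ must be consistent with both paths' near-monotone BFS profile from $1$ \emph{and} from $2$. I expect to implement this by enumerating over ordered sequences of ``skeleton'' vertices (one at each BFS level or level-pair) and using a first/second moment argument to show that distinct skeletons support nearly disjoint completions into actual near-optimal paths, in the spirit of the approach of \cite{LS2024}. Converting this enumeration into the precise bound $|p\cap p'|/\sqrt{|p||p'|} \le C\epsilon$ will require a careful per-level accounting, and is the most delicate step of the argument.
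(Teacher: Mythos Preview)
This theorem is not proved in the paper at all: it is explicitly labeled as a ``Special case of Theorem~2.2 of \cite{LS2024}'' and is simply quoted from Li--Schramm as an input to the free-energy-barrier argument in Appendix~\ref{apdx:path-barrier}. There is no proof in the paper to compare your proposal against.

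As for your sketch on its own merits: the forward/backward/horizontal edge-counting to get $2D + H \le \epsilon\ell^*$ is correct and useful, and the endpoint tree-like regime is handled by Proposition~\ref{prop:localtree} as you say. But you correctly identify the real difficulty and do not resolve it. The middle-range analysis is where essentially all the work in \cite{LS2024} lives, and your plan of ``enumerating skeletons and using a first/second moment argument'' is not yet an argument --- in particular, the constraint that a shared edge must be consistent with both paths' BFS profiles from $1$ and from $2$ does not by itself bound the number of shared edges, since two near-shortest paths can in principle weave together and apart many times while each remaining near-monotone. What is actually needed (and what \cite{LS2024} establishes) is a much finer structural statement about how the set $V_\epsilon$ decomposes and how paths in $\mathcal{P}_\epsilon$ must route through it; this is not something one can read off from the level-set concentration results in Section~\ref{sec:single} of the present paper. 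So your proposal is a reasonable outline of the setup, but the substantive part of the proof is deferred to exactly the step you flag as ``the most delicate,'' and that step requires ideas not present in this paper.
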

They also obtained an estimate of the number of $\epsilon$-shortest paths:
\begin{lemma}[Lemma 2.1 of \cite{LS2024}]
For all $\epsilon > 0$ sufficiently small, it holds that with probability $1 - O\left(\frac{\log n}{\log \log n}\right)$ that
\[ |\mathcal P_{\epsilon}| = (1 \pm o(1))n^{\epsilon}. \]
\end{lemma}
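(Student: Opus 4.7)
The plan is to bound $|\mathcal{P}_\epsilon|$ above and below using moment methods, leveraging the BFS analysis of Section~\ref{sec:single} together with Theorem~\ref{thm:ls-structural} already stated above.

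First, by Theorem~\ref{thm:main1} and the definition of $d_n^*$, the shortest-path length $\ell^* := \mathsf{d}_G(1,2)$ lies in $\{d_n^*, d_n^*+1\}$ asymptotically almost surely, and $(nq)^{\ell^*} = (1+o(1))n$ in the relevant regimes. Condition on this good event. For the upper bound, let $X_\ell$ denote the number of simple paths of length $\ell$ from $1$ to $2$. Since $\ell \leq (1+\epsilon)\ell^* = O(\log n/\log\log n) = o(\sqrt{n})$, the product $(n-2)(n-3)\cdots(n-\ell)$ differs from $n^{\ell-1}$ by a $(1+o(1))$ factor, so
\[ \mathbb{E}[X_\ell] = (n-2)(n-3)\cdots(n-\ell)\, q^\ell = (1+o(1))(nq)^\ell/n. \]
Summing a geometric series dominated by its top term yields $\mathbb{E}[|\mathcal{P}_\epsilon|] = (1+o(1))\sum_{k=0}^{\lfloor\epsilon\ell^*\rfloor}(nq)^k = (1+o(1))n^\epsilon$, and Markov's inequality with a slowly-growing slack factor gives the matching upper bound with the claimed probability.

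For the lower bound, apply the second moment method to $Y := |\mathcal{P}_\epsilon|$. The second moment decomposes as $\mathbb{E}[Y^2] = \sum_{P,P'}q^{|P|+|P'|-|P\cap P'|}$, which we split into (a)~the diagonal $P=P'$ contributing $\mathbb{E}[Y]$; (b)~disjoint pairs $|P\cap P'|=0$ contributing at most $\mathbb{E}[Y]^2$; and (c)~partially overlapping pairs. To control (c), invoke Theorem~\ref{thm:ls-structural}: on a high-probability event, any pair $P,P' \in \mathcal{P}_\epsilon$ satisfies $|P\cap P'| \leq C\epsilon \ell^*$, restricting the admissible overlap patterns. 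Enumerating these patterns by the number and lengths of shared sub-paths, and checking that the combinatorial factor never overwhelms the $q^{-|P\cap P'|}$ blow-up (using the bound $|P \cap P'| \le C\epsilon \ell^*$ and $q^{-\ell^*} = n/(nq)^{\ell^*-1}$), one shows that the contribution from (c) is $o(\mathbb{E}[Y]^2)$. Chebyshev's inequality then delivers $Y \geq (1-o(1))n^\epsilon$ with the required probability.

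The main obstacle is precisely this overlap enumeration in the second moment. Without the structural restriction from Theorem~\ref{thm:ls-structural}, pairs with nearly-full overlap could in principle contribute $\mathbb{E}[Y]\cdot q^{-\Omega(\ell^*)}$ and completely spoil concentration; it is exactly the bound $|P\cap P'| \le C\epsilon\ell^*$ that keeps the $q^{-|P\cap P'|}$ blow-up comparable to $n^{C\epsilon}$, which can be absorbed into the $(1+o(1))$ error provided $\epsilon$ is small. The remaining delicate point is the case analysis for how the shared edges can be distributed along the two paths (as one merged segment, two segments, and so on), where each case combines a combinatorial enumeration of the "skeleton" with the probability of the extra non-overlapping edges existing; this bookkeeping, while routine in spirit, is the technical heart of the argument.
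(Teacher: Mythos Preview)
The paper does not prove this statement: it is quoted as Lemma~2.1 of \cite{LS2024} and used as a black box, so there is no in-paper proof to compare against. What follows is an evaluation of your proposal on its own merits.

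Your first-moment upper bound is essentially fine once you replace the random cutoff $(1+\epsilon)\ell^*$ by a deterministic one such as $L=\lfloor(1+\epsilon)(d_n^*+1)\rfloor$ and transfer back using $\ell^*\in\{d_n^*,d_n^*+1\}$ whp; otherwise the expectation you compute is not $\E[|\mathcal P_\epsilon|]$, since the range of summation is itself random.

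The lower bound has a genuine gap. Your plan is to control the partially-overlapping contribution to $\E[Y^2]$ by invoking Theorem~\ref{thm:ls-structural} to ``restrict the admissible overlap patterns''. This does not work as stated for two reasons. First, it is likely circular: Lemma~2.1 precedes Theorem~2.2 in \cite{LS2024} and is a natural input to it, so you should not assume the latter to prove the former. Second, and independently of circularity, Theorem~\ref{thm:ls-structural} is a \emph{high-probability} statement about which pairs can simultaneously lie in $\mathcal P_\epsilon$; the second moment $\E[Y^2]=\sum_{P,P'}\Pr(P,P'\text{ present})$ is an \emph{unconditional} expectation over all potential pairs. Restricting the sum to small-overlap pairs only bounds $\E[Y^2\mathbf 1_E]$ for the good event $E$, and you have no useful control on $\E[Y^2\mathbf 1_{E^c}]$ (there is no polynomial deterministic bound on $Y$, and $\Pr(E^c)$ is only $O(\log\log n/\log n)$). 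Passing to $Y\mathbf 1_E$ does not help either, since bounding $\E[Y\mathbf 1_{E^c}]$ by Cauchy--Schwarz again requires the unknown $\E[Y^2]$.

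The fix is to do the second moment \emph{directly}: enumerate pairs $(P,P')$ by the number $k$ of shared edges and the number $j$ of maximal shared segments, and show unconditionally that the $q^{-k}$ boost is beaten by the entropy loss from fixing $k$ edges and $j$ junction vertices (each junction costs an extra factor $\sim 1/n$, and each shared edge costs $\sim 1/(nq)$ relative to the disjoint case). This is exactly the ``routine bookkeeping'' you defer, but it must be done without Theorem~\ref{thm:ls-structural} and is the actual content of the proof.
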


\subsection{Free energy barriers and Franz--Parisi Potential}
Recall that the (asymptotic) Franz--Parisi potential \cite{franz1995recipes} (see, e.g., \cite{el2025shattering} for the formal mathematical definition below, up to the conventional factor of $-1/\beta$) is given, assuming the limit exists, by
\[ \mathcal{F}_{\beta}(q) = \lim_{\epsilon \to 0} \operatorname{p-lim}_{n \to \infty} \frac{-1}{\beta \log n} \log \sum_{P' : |P\cap P'|/\sqrt{|P||P'} \in (q - \epsilon, q + \epsilon)} e^{-\beta \log\log(n) |P'|}  \]
where $P$ is a sample from $\nu_{G,\beta}$, the Gibbs measure over paths at inverse temperature $\beta$. 
With this sign convention, if the FPP is not \emph{quasiconvex}, i.e. if it has multiple local minima then we say there is a free energy barrier. In many cases, including this example, quasiconvexity fails because the potential is not monotonically increasing from the global minimum. Glauber dynamics and similar processes can only ``roll downward'' from $q = 1$, so they need the monotone increasing property to be able to escape from their initialization in a small amount of time. See Appendix~\ref{apdx:bottleneck} for more.

%(FIXME: use below nonasymptotic version in the theorem so we don't need to bother proving existence of the limit.)
For our purposes, we will work with the corresponding nonasymptotic version
\[ \mathcal{F}_{\beta}(q,n,\epsilon, P) =  \frac{-1}{\beta \log n} \log \sum_{P' : |P\cap P'|/\sqrt{|P||P'} \in (q - \epsilon, q + \epsilon)} e^{-\beta \log\log(n) |P'|}   \]
which has the same consequences for bottlenecks, dynamics, etc. This is for convenience since the existence of the limit is irrelevant here.
%(FIXME: should we be defining this along a subsequence)

\begin{proposition}
There exists $c > 0$ such that the following is true for all sufficiently small $\epsilon > 0$ and sufficiently large $n$.
For any $\beta > 1$, for all $q \in (0.01,0.99)$, 
\[ \mathcal{F}_{\beta,n,\epsilon,P}(q) \ge \mathcal{F}_{\beta,n,\epsilon,P}(1) + c(\beta - 1) \]
asymptotically almost surely for $P \sim \nu_{G,\beta}$.
\end{proposition}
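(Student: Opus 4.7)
The plan is to establish the free-energy barrier by combining three ingredients already developed in the paper: the fact that $\nu_{G,\beta}$ concentrates on true shortest paths in the low-temperature regime (Lemma \ref{lem:low-temp-path-is-opt}), the Li--Schramm structural result (Theorem \ref{thm:ls-structural}) which prohibits pairs of near-shortest paths from having intermediate normalized overlap, and the geometric decay of weighted path counts (Lemma \ref{lem:few-subopt-paths}). The mechanism is simple: the sampled $P$ itself lies in the $q=1$ window and gives a very clean upper bound on $\mathcal{F}(1)$, whereas any $P'$ contributing to the $Z_q$ with $q \in (0.01,0.99)$ must be genuinely suboptimal, so its weight is exponentially smaller.

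First I would upper bound $\mathcal{F}_{\beta,n,\epsilon,P}(1)$. By Lemma~\ref{lem:low-temp-path-is-opt}, with probability $1-o(1)$ the sample $P \sim \nu_{G,\beta}$ is a shortest path, so $|P|=\ell^*$. Since $|P\cap P|/|P|=1$ is inside the window, $Z_1 \ge \exp(-\beta \log\log(n)\,\ell^*)$, yielding
\[
\mathcal{F}_{\beta,n,\epsilon,P}(1) \;\le\; \frac{\ell^*\,\log\log n}{\log n} \;=\; 1 + o(1),
\]
using the standard diameter asymptotics (Theorem \ref{thm:diameter}) that give $\ell^* = (1+o(1))\log n/\log\log n$.

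Next I would lower bound $\mathcal{F}_{\beta,n,\epsilon,P}(q)$. Let $C$ be the constant of Theorem~\ref{thm:ls-structural}, and set $\epsilon' = 0.005/C$ (small enough for the structural theorem to apply). Restrict to $\epsilon < 0.005$. Any $P'$ contributing to $Z_q$ satisfies $|P\cap P'|/\sqrt{|P||P'|} \ge q - \epsilon \ge 0.005 > C\epsilon'$. Since $P \in \mathcal P_{\epsilon'}(G)$ (a.a.s.\ it is a shortest path), if $P'$ were also in $\mathcal P_{\epsilon'}(G)$ then Theorem~\ref{thm:ls-structural} would force the overlap to be at most $C\epsilon'$, a contradiction. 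Hence $P' \notin \mathcal P_{\epsilon'}(G)$, i.e.\ $|P'| > (1+\epsilon')\ell^*$, so $|P'| \ge \ell^* + r$ with $r := \lceil \epsilon'\ell^*\rceil$. Applying the second part of Lemma~\ref{lem:few-subopt-paths},
\[
Z_q \;\le\; \sum_{\ell \ge \ell^*+r} |\{P:|P|=\ell\}|\,\log(n)^{-\beta\ell} \;\le\; 2\,\log(n)^{-\beta\ell^*}\log(n)^{-(\beta-1)r},
\]
so
\[
\mathcal{F}_{\beta,n,\epsilon,P}(q) \;\ge\; \frac{\ell^*\log\log n}{\log n} + \frac{(\beta-1)\,r\,\log\log n}{\beta\log n} - \frac{\log 2}{\beta \log n}.
\]
Using $r \ge \epsilon'\ell^*$ and $\ell^*\log\log n/\log n = 1+o(1)$, subtracting the bound on $\mathcal{F}(1)$ gives
\[
\mathcal{F}_{\beta,n,\epsilon,P}(q) - \mathcal{F}_{\beta,n,\epsilon,P}(1) \;\ge\; \frac{\epsilon'(\beta-1)}{\beta} - o(1),
\]
which is $\ge c(\beta-1)$ for a suitable constant $c > 0$ (taking $c = \epsilon'/2$ handles the regime $\beta \in (1,2]$ directly, and for $\beta \ge 2$ the right-hand side is already at least $\epsilon'/2$ and the claim follows by choosing $c$ small enough, essentially making the barrier saturate rather than grow linearly in $\beta$).

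The main obstacle is bookkeeping: we need to coordinate three small parameters — the Franz--Parisi window $\epsilon$, the Li--Schramm scale $\epsilon'$, and the threshold $0.01$ on $q$ — so that the structural theorem rules out every $P'$ appearing in $Z_q$. The concentration Lemma~\ref{lem:low-temp-path-is-opt} ensures $P$ is itself a shortest path only up to an $o(1)$ Huber contamination; on that small set of ``bad'' samples we can use a trivial lower bound $\mathcal{F}(q) \ge 0$, and the claim reduces to an event of probability $1-o(1)$, which is consistent with the ``asymptotically almost surely'' statement. A second minor technical point is uniformity in $q$: the intersection of the good event over $q \in (0.01,0.99)$ is automatic because our bounds on $Z_q$ depend on $q$ only through the single scalar inequality $q-\epsilon > C\epsilon'$, which holds uniformly on the whole interval once $\epsilon,\epsilon'$ are fixed.
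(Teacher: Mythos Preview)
Your proposal is correct and follows essentially the same route as the paper's proof: use Lemma~\ref{lem:low-temp-path-is-opt} to reduce to $P$ being a true shortest path, invoke the Li--Schramm structural theorem to force any $P'$ with intermediate overlap to be $(1+\epsilon')$-suboptimal, and then apply Lemma~\ref{lem:few-subopt-paths} to bound the total Boltzmann weight of such paths. Your explicit separation of the FP-window parameter $\epsilon$ from the Li--Schramm scale $\epsilon'$ is cleaner than the paper's version, and your flag about the large-$\beta$ saturation (the bound behaves like $\epsilon'(\beta-1)/\beta$ rather than $c(\beta-1)$ uniformly) is a genuine observation about the statement as written.
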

\begin{proof}
By Lemma~\ref{lem:low-temp-path-is-opt}, $P$ sampled from the Gibbs measure will be a true shortest path from $1$ to $2$ asymptotically almost surely. By Theorem~\ref{thm:ls-structural}, any path $P' \ne P$ from $1$ to $2$ with length at most $|P|(1 + \epsilon)$ will have
\[ \frac{|P \cap P'|}{\sqrt{|P| |P'|}} \le C\epsilon. \]
Therefore by Lemma~\ref{lem:few-subopt-paths},
\[ \frac{1}{\log(n)^{\beta \ell^*}} \sum_{P' \ne P : |P \cap P'|/\sqrt{|P| |P'|} > C\epsilon} \le 2/\log(n)^{(\beta - 1)\epsilon|P|}. \]
Since $|P|$ is asymptotically almost surely of order $\log(n)/\log\log(n)$ (e.g. by Theorem~\ref{thm:main1}), 
and $\mathcal{F}_{\beta}(1) \ge \log(n)^{-\beta\ell^*}$, the conclusion follows.
%This means that
%\[ \log \nu_{G,\beta}(\{ P' \ne P : |P \cap P'|/\sqrt{|P| |P'|} > C\epsilon \}) \lesssim -(\beta - 1)\epsilon \log(n). \]
\end{proof}

\begin{remark}[Alternative free energy barrier]
We can also show a free energy barrier from the structural result in a slightly different way, by arguing there are no paths (sufficiently) close to an optimal $P$. Because for them to have large overlap, they need to have similar length $|P \cap P'|/\sqrt{|P| |P'|} \le \sqrt{|P|/|P'|}$ so if $|P'| > (1 + \epsilon) |P|$ then the overlap is at most $\sqrt{1/(1 + \epsilon)} \le 1 - c\epsilon$. In terms of the Franz--Parisi potential, this alternative argument gives a taller but less ``wide'' gap than the proposition, so it is slightly weaker in terms of the locality of chains it applies to.
\end{remark}
\section{An analogous example: local search on multilinear vs Lov\'asz extension}\label{sec:example-lovasz}
In this section, we consider the consequences of the fact that minimizing the \emph{multilinear} and \emph{Lov\'asz} extensions of a submodular function $f$ are obviously polynomial time equivalent --- the global minimum of either is simply the global minimum of $f$ on sets --- but their optimization landscapes are very different. This illustrates the point that (1) the landscape of two polynomial time equivalent optimization problems can be very different, and (2) this difference is meaningful/useful from an algorithm design perspective. To rephrase point (2), it is helpful to know that the optimization landscape of the Lov\'asz extension is benign whereas that of the multilinear extension is not, so we can know to perform local search on the former rather than the latter.  

We also illustrate how the landscape of the multilinear extension is reflected in the Franz--Parisi potential for a related high-dimensional Ising model, which implies rigorous metastability results for the Glauber dynamics at low temperature. 

 \paragraph{Optimization of quadratic polynomials, supermodularity, and Lov\'asz extension.} Any quadratic polynomial of the form $f(x) = \langle x, M x \rangle/2 + \langle x, h \rangle$ where $M_{ij}$ is an $n \times n$ matrix with zero diagonal and nonnegative entries can in fact be maximized over $[-1,1]^n$ in polynomial time. This is because the maximum is achieved at the vertices, and the maximization over the vertices is a supermodular maximization problem\footnote{Here we consider $\{\pm 1\}^n$ in bijective correspondence with $2^{[n]}$ by viewing $-1$ in coordinate $i$ as indicating that element $i$ is not present in the set, and $+1$ as indicating that the element is included.}, equivalently a submodular minimization of $-f$, so it can be solved in polynomial time. Explicitly, we can define the Lov\'asz extension \cite{lovasz1983submodular}
\[ \hat f(x) = \frac{1}{2} \int_{-1}^1 f((-1)^{1[x_1 < t]}, \ldots, (-1)^{1[x_n < t]}) dt. \]
which agrees with $f$ on $\{ \pm 1\}^n$. The fundamental property of the Lov\'asz extension is that because $-f$ is supermodular, $-\hat f$ is convex, i.e. $\hat f$ is a concave function on $[-1,1]^n$. 

\subsection{Local search in a simple example}
\paragraph{An example.} Define
% \[
% f(x,y,z,w)=x+y-z-w+xy+zw+\tfrac{3}{2}\,xz+\tfrac{3}{2}\,yw,
% \qquad (x,y,z,w)\in[-1,1]^4.
% \]
\[
f(x,y,z,w)=x+y-z-w+xy+zw+\tfrac{3}{4}\,(x + y)(z + w),
\qquad (x,y,z,w)\in[-1,1]^4.
\]
We consider minimizing $-f$, equivalently maximizing $f$. A direct check over vertices shows $f(1,1,1,1)=f(-1,-1,-1,-1)=5$ are the global maxima, while $f(1,1,-1,-1)=3$. Gradient flow on $-f$ goes to the suboptimal point $(1,1,-1,-1)$ rather than the global optima. This happens because $f$ is the \emph{multilinear} extension of its restriction to $\{\pm 1\}^n$, which is not suitable for direct optimization.

On the other hand, as we will discuss afterwards, optimizing $f$ is equivalent to maximizing the Lov\'asz extension $\hat f$ of $f$ restricted to $\{\pm 1\}^4 \to \mathbb R$, which is a concave optimization problem and thus can succcessfully be solved via local search \emph{on the correct choice of objective/landscape.}

\paragraph{Failure of gradient flow on multilinear extension.} 
Then
\[
\nabla f(x,y,z,w)=\bigl(1+y+\tfrac{3}{4}(z + w),\; 1+x+\tfrac{3}{4}(z + w),\; -1+w+\tfrac{3}{4}(x + y),\; -1+z+\tfrac{3}{4}(x + y)\bigr).
\]
The unconstrained gradient (ascent) flow is
\begin{equation}\label{eq:flow}
\frac{d}{dt}(x,y,z,w) = \nabla f(x,y,z,w)
\end{equation}
On the box $[-1,1]^4$ we consider the \emph{projected} gradient flow, i.e.\ we follow \eqref{eq:flow} while in the interior and, upon hitting the boundary, we project the velocity onto the tangent cone of $[-1,1]^4$ at the current point.

\begin{proposition}
For the projected ascent flow with initial condition $(x(0),y(0),z(0),w(0))=(0,0,0,0)$, the trajectory hits the point $(1,1,-1,-1)$ at time $t^*=\ln 4$ and remains there for all $t\ge t^*$. In particular, the flow started at the origin converges to $(1,1,-1,-1)$ even though the global maximizers of $f$ on $[-1,1]^4$ are $(1,1,1,1)$ and $(-1,-1,-1,-1)$.
\end{proposition}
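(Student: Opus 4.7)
The plan is to exploit the symmetries of $f$ to reduce the four-dimensional projected flow to a single scalar ODE, then solve that ODE explicitly and verify stationarity at the candidate corner.

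First I would observe two invariances of the ODE along the trajectory starting at the origin. The initial condition and the gradient field are both symmetric under the swap $x \leftrightarrow y$ and separately under $z \leftrightarrow w$, so by uniqueness of solutions $x(t) = y(t) =: u(t)$ and $z(t) = w(t) =: v(t)$ for all $t$ as long as the solution remains in the interior. Plugging this in collapses the gradient equations to
\begin{equation*}
\dot u \;=\; 1 + u + \tfrac{3}{2} v, \qquad \dot v \;=\; -1 + v + \tfrac{3}{2} u.
\end{equation*}
A direct computation gives $\dot u + \dot v = \tfrac{5}{2}(u+v)$, and since $u(0)+v(0)=0$ we obtain $u(t) = -v(t)$ for all $t$ (this is the additional invariance coming from $f(-z,-w,-x,-y)=f(x,y,z,w)$). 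Setting $s := u - v = 2u$, the system reduces to the single scalar ODE
\begin{equation*}
\dot s \;=\; 2 - \tfrac{1}{2} s, \qquad s(0) = 0,
\end{equation*}
whose solution is $s(t) = 4(1 - e^{-t/2})$.

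Next I would verify the arrival time and boundary analysis. Since $s$ is strictly increasing and bounded above by $4$, the trajectory $(u(t), -u(t)) = (s(t)/2, -s(t)/2)$ stays strictly inside $[-1,1]^4$ until the first time that $u(t) = 1$, i.e.\ until $s(t) = 2$, which occurs exactly at $t^* = \ln 4$. At that moment the trajectory reaches $(1,1,-1,-1)$. At this vertex, a straightforward computation gives
\begin{equation*}
\nabla f(1,1,-1,-1) \;=\; \bigl(\tfrac{1}{2}, \tfrac{1}{2}, -\tfrac{1}{2}, -\tfrac{1}{2}\bigr).
\end{equation*}
The tangent cone to $[-1,1]^4$ at this corner is $(-\infty,0]^2 \times [0,\infty)^2$, and the Euclidean projection of $\nabla f(1,1,-1,-1)$ onto this cone is $(0,0,0,0)$ componentwise. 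Therefore the projected velocity vanishes and the flow remains at $(1,1,-1,-1)$ for all $t \ge t^*$.

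There is no substantive obstacle: the only mild care needed is to justify that the symmetry reductions $x=y$, $z=w$, and $u=-v$ are preserved by the \emph{projected} flow (not just the unconstrained one), which is automatic here because the trajectory stays in the open box on $[0, t^*)$ so the projection is inactive, and for $t \ge t^*$ the flow is stationary. Finally, comparing with $f(1,1,1,1) = 5 > 3 = f(1,1,-1,-1)$ shows that the limit is a strict suboptimal critical point, completing the illustration that gradient ascent on the multilinear extension can fail even though the corresponding submodular minimization problem (equivalently, concave maximization of the Lov\'asz extension) is polynomial-time tractable.
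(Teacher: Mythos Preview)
Your proof is correct and follows essentially the same approach as the paper: symmetry reduction to a scalar ODE, explicit solution, and verification that the projected gradient vanishes at the corner. Your gradient computation $\nabla f(1,1,-1,-1) = (\tfrac12,\tfrac12,-\tfrac12,-\tfrac12)$ is in fact correct---the paper has an arithmetic slip here writing $(1.5,1.5,-0.5,-0.5)$, though this does not affect the conclusion since the signs agree.
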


\begin{proof}
The ODE \eqref{eq:flow} is invariant under the symmetry $x\leftrightarrow y$, $z\leftrightarrow w$, and $(x,y,z,w)\mapsto (y,x,w,z)$. With the symmetric initial condition, the solution satisfies
\[
x(t)=y(t)=:s(t),\qquad z(t)=w(t)=-s(t)
\]
for as long as it remains in the interior. Substituting into \eqref{eq:flow} gives a single scalar ODE
\[
\dot s=1+s+\tfrac{3}{2}(-s)=1-\tfrac{1}{2}s,\qquad s(0)=0,
\]
whose solution is
\[
s(t)=2\bigl(1-e^{-t/2}\bigr).
\]
The first contact with the box occurs when $s(t)=1$, i.e.\ $2(1-e^{-t/2})=1$, which yields $t^*=\ln 4$. At $t=t^*$ we reach $(x,y,z,w)=(1,1,-1,-1)$.

At this point,
\[
\nabla f(1,1,-1,-1)=(1.5,\,1.5,\,-0.5,\,-0.5).
\]
The active constraints are $x=1$, $y=1$, $z=-1$, $w=-1$. The tangent cone therefore consists of velocities with $\dot x\le 0$, $\dot y\le 0$, $\dot z\ge 0$, $\dot w\ge 0$. Projecting the ascent direction $(1.5,1.5,-0.5,-0.5)$ onto this cone gives the zero vector, so the projected flow has $\dot x=\dot y=\dot z=\dot w=0$ there. Thus the trajectory stays at $(1,1,-1,-1)$ for all $t\ge t^*$.
\end{proof}

\paragraph{Success of local search on Lov\'asz extension in the same example.} As a consequence of the general theory, we know that $\hat f$ is concave and that projected gradient ascent (i.e., subgradient descent on $- \hat f$), as well as other convex optimization procedures like the ellipsoid method, will converge to a global optimum. Explicitly, in the example of zero initialization, we have $\hat f(0,0,0,0) = (1/2) f(1,1,1,1) + (1/2) f(-1,-1,-1,-1) = 5$ so the all-zeros initialization is already a global optimum. From an algorithmic perspective, the fact that $\hat f$ is concave and locally maximized at $(0,0,0,0)$ tells us that it is a global optima of $\hat f$, and by the definition of the Lov\'asz extension this exactly tells us that both $f(1,1,1,1)$ and $f(-1,-1,-1,-1)$ are global maxima of $f$.

\subsection{High-dimensional version}
We can define a high-dimensional version of the previous example by, for any $m \ge 1$, defining
\[ f_m(x,y,z,w) = f(\overline x, \overline y, \overline z, \overline w) = f\left(\frac{1}{m} \sum_{i = 1}^m x_i, \frac{1}{m} \sum_{i = 1}^m y_i, \frac{1}{m} \sum_{i = 1}^m z_i, \frac{1}{m} \sum_{i = 1}^m w_i\right). \]
Here $\overline{x} = \frac{1}{m} \sum_{i = 1}^m x_i$ is a shorthand for the mean of the $x$ vector. 
\subsubsection{Ising Model and Franz--Parisi Potential}
We can define a corresponding mean-field Ising model on $\{\pm 1\}^{4m}$ at inverse temperature $\beta >J 0$ by its Gibbs measure
\[ p_{\beta,m}(x,y,z,w) \frac{1}{Z_{\beta,m}} \exp(\beta m f_m(x,y,z,w)). \]
Let $Z = Z_{\beta,m}$ be the normalizing constant. From the definition, we can compute\footnote{The computations here are similar to those in the Curie-Weiss/mean-field Ising model, so we omit unnecessary details. See, e.g., \cite{ellis2007entropy} for the details of the calculation in the case of the Curie-Weiss model.} that
\[ Z = \sum_{\overline x, \overline y, \overline z, \overline w} e^{\beta m f(\overline x, \overline y, \overline z, \overline w) + S_m(x,y,z,w)} \]
where
\begin{align*} 
S_m(\overline x, \overline y, \overline z, \overline w) 
&= \#\{x : \frac{1}{m} \sum_i x_i = \overline x, \ldots, \frac{1}{m} \sum_i w_i = \overline w\} \\
&= m\left(h(\frac{1 + \overline x}{2}) + h(\frac{1 + \overline y}{2}) + h(\frac{1 + \overline z}{2}) + h(\frac{1 + \overline w}{2})\right) + O(\log m) 
\end{align*}
where $h(p) = H(Ber(p)) = p \log(1/p) + (1 - p)\log(1/(1 - p))$ is the binary entropy function and the second equality follows by applying Stirling's approximation to the binomial coefficients. Defining the Franz--Parisi potential as the limit
\[ \mathcal{F}_{\beta}(r) = -\frac{1}{\beta} \lim_{\epsilon \to 0} \plim_{m \to \infty} \frac{1}{m} \log \sum_{x',y',z',w'} e^{\beta mf_m(x,y,z,w)} 1(\langle (x,y,z,w), (x',y',z',w') \rangle/4m \in [r - \epsilon, r + \epsilon])  \]
where $(x,y,z,w)$ are sampled from the Gibbs measure at inverse temperature $\beta$, we can therefore compute that in the zero-temperature limit $\beta \to \infty$,
\[ \mathcal{F}_{\infty}(r) := \lim_{\beta \to \infty} \mathcal F_{\beta}(r) = (-1)\sup_{\overline x, \overline y, \overline z, \overline w \in K(r)} f(\overline x, \overline y, \overline z, \overline w)  \]
where
\[ K(r) = \{ (\overline x, \overline y, \overline z, \overline w) \in [-1,1]^4 : (\overline x + \overline y + \overline z + \overline w)/4 = r \} \]
because the global maxima of $f$ are at $(-1,-1,-1,-1)$ and $(1,1,1,1)$. 

\subsubsection{Analytical computation of Franz--Parisi Potential}
We analytically compute $\mathcal{F}_{\infty}(r)$, which corresponds to solving a constrained maximization of $f(x,y,z,w)$ with a linear constraint. Since $\mathcal{F}_{\beta}(r) = \mathcal{F}_{\infty}(r) + o(1)$ as $\beta \to \infty$, we know that the finite but low-temperature Franz--Parisi potential will be a perturbation of the zero-temperature one.

\paragraph{Symmetrization.}
By symmetry within the pairs \((x,y)\) and \((z,w)\), we may assume the maximizer satisfies
\[
x=y=:a,\qquad z=w=:b.
\]
Indeed, for fixed \(x+y\) the product \(xy\) is maximized at \(x=y\), and similarly \(zw\) at \(z=w\). Hence it suffices to optimize over
\[
a,b\in[-1,1],\qquad a+b=2r.
\]

\paragraph{Reduction to one variable.}
With \(x=y=a\), \(z=w=b\),
\[
f(a,b)=2a-2b+a^2+b^2+3ab.
\]
Let \(s:=a+b=2r\) and \(d:=a-b\). Then \(ab=\tfrac{s^2-d^2}{4}\), and the box constraints imply
\[
a,b\in[-1,1]\quad\Longleftrightarrow\quad |d|\le 2-|s|=2-2|r|.
\]
A short calculation gives
\[
f=2d-\tfrac14 d^2+\tfrac54 s^2
= 2d-\tfrac14 d^2+5r^2.
\]
The function \(d\mapsto 2d-\tfrac14 d^2\) is concave with unconstrained maximizer at \(d=4\), so under the constraint
\(|d|\le 2-2|r|\) the maximum is attained at
\[
d^\star=2-2|r|.
\]

\paragraph{Value and maximizers.}
Plugging \(s=2r\) and \(d^\star\) back,
\[
\sup_{K(r)} f
= 2(2-2|r|)-\tfrac14(2-2|r|)^2+5r^2
= 4r^2-2|r|+3.
\]
Thus
\[
 \mathcal F_\infty(r)=-(4r^2-2|r|+3),\qquad r\in[-1,1].
\]
One maximizing configuration is
\[
(r\ge 0):\ (x,y,z,w)=(1,\,1,\,2r-1,\,2r-1),\qquad
(r\le 0):\ (x,y,z,w)=(1+2r,\,1+2r,\,-1,\,-1),
\]
which satisfies \((x+y+z+w)/4=r\) and \([-1,1]\)-bounds.
\begin{figure}
    \centering
    \includegraphics[width=0.75\linewidth]{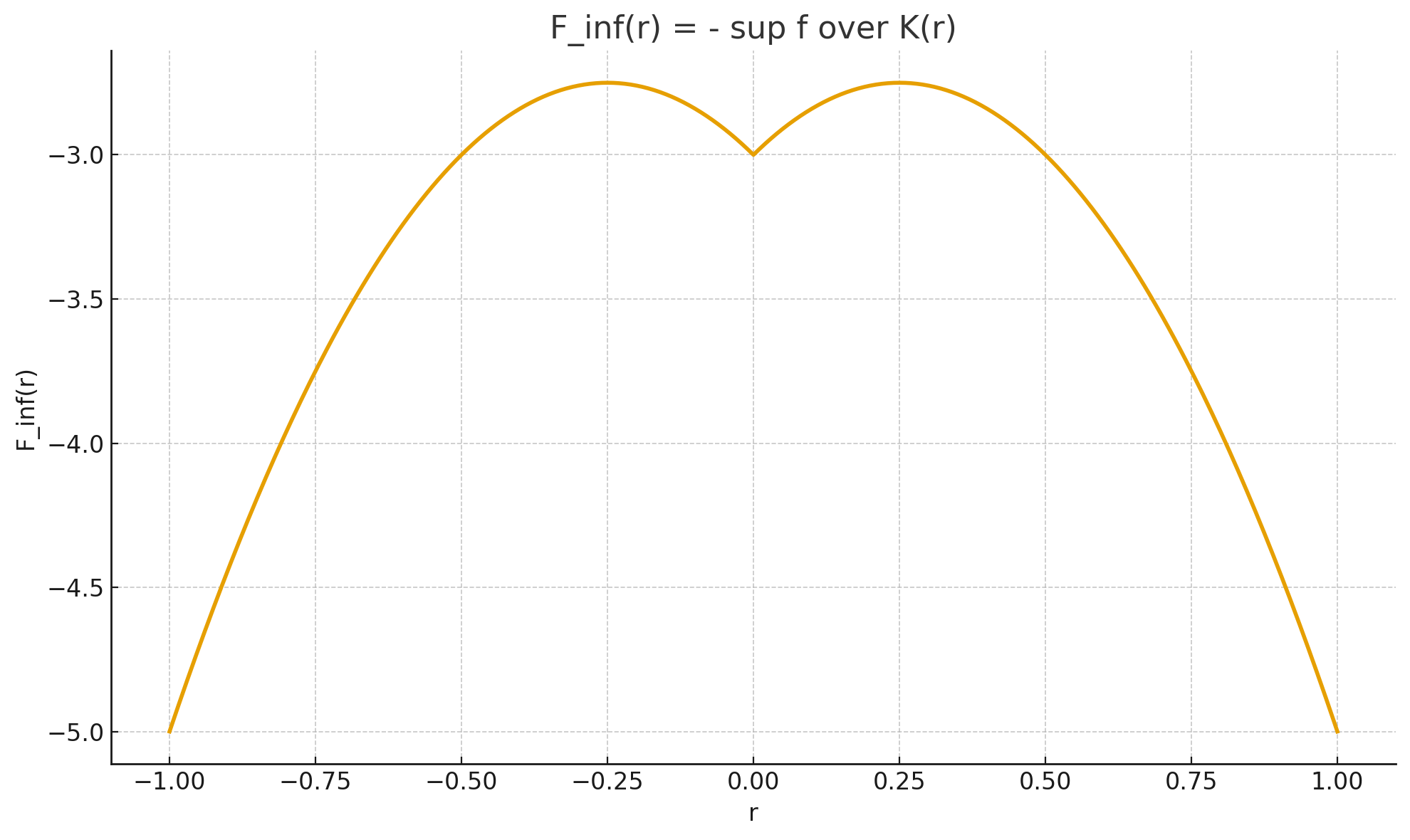}
    \caption{Plot of Franz--Parisi potential $\mathcal{F}_{\infty}(r)$ in the Ising model example. There are three local minima, so there is an (exponentially large) free energy barrier.}
    \label{fig:placeholder}
\end{figure}
\subsubsection{Metastability and contrast to log-concave setting}
From the analytical solution at zero temperature, we see that the Franz--Parisi potential is not quasi-convex but rather has 2 global minima and 1 sub-optimal local minima at all sufficiently large $\beta > 0$. So we conclude that this model does have exponentially large free energy barriers. 

This rigorously rules out the possibility that low-temperature Glauber dyanmics can escape from its initialization in $e^{o(n)}$ time, and we also see visually the fact that the optimization initialized at some states can converge to the suboptimal local minima, as in the explicit analysis of gradient flow in the low-dimensional example.  

On the other hand, while local search fails we again remark that this is not a hard optimization problem, since its global minimum of $f_m$ can be found by minimizing the Lov\'asz extension of the restriction of $-f$ to $\{\pm 1\}^{4m}$. Because the Lov\'asz extension of $-f_m$ is convex, there will not be any free energy barriers in the corresponding landscape, and in fact we can sample from the corresponding log-concave Gibbs measures via MCMC. For example, the continuous-time Langevin diffusion in $n$ dimensions is known to have relaxation time $polylog(n)$ for log-concave measures in isotropic position due to recent progress on the KLS conjecture \cite{klartag2022bourgain}. This also implies Lipschitz concentration which shows that the overlap distribution is highly concentrated.
See \cite{bandeira2023free} for more references and discussion about the Franz--Parisi potential in the log-concave case. 

% TODO: notations?

\end{document}